%% file: main.tex
\documentclass[acmsmall,screen,nonacm]{acmart}

\usepackage{tikz}
\usetikzlibrary {calc,positioning,arrows.meta}
\usepackage{quantikz}    
\usepackage{url}
\usepackage{hyperref}
\usepackage{longtable}
\usepackage{booktabs}   
\usepackage{float}      
\usepackage{makecell}   
\usepackage{enumitem}
\usepackage{wrapfig}
\usepackage{slashed}    
\usepackage{cancel}     
\usepackage{listings}   

\lstdefinestyle{granthi}{
  basicstyle=\ttfamily\footnotesize,
  keywordstyle=\bfseries,
  columns=fullflexible,
  keepspaces=true,
  showstringspaces=false,
  breaklines=true,
  morekeywords={datatype,let,in,case,pair,split,fun,lolli,self}
}
\usepackage{varwidth} 

\newsavebox{\qswitchcode}
\newsavebox{\zfivecodeA}
\newsavebox{\zfivecodeB}

\usepackage{mathpartir}
\usepackage{mathtools}  
\usepackage{stmaryrd}   
\ifdefined\usepackage{unicode-math}\fi
\theoremstyle{plain}
\newtheorem{theorem}{Theorem}[section]
\newtheorem{lemma}[theorem]{Lemma}
\newtheorem{corollary}[theorem]{Corollary}
\newtheorem{proposition}[theorem]{Proposition}

\theoremstyle{definition}
\newtheorem{definition}[theorem]{Definition}

\theoremstyle{remark}
\newtheorem{remark}[theorem]{Remark}

\newcommand{\sjudge}{\vdash}
\newcommand{\intjudge}{\vdash_{\mathsf{i}}}
\newcommand{\ijudge}{\vdash_{\mathsf{inv}}}

\newcommand{\tensor}{\otimes}
\newcommand{\plus}{\oplus}
\newcommand{\expi}[2]{\expinv{#1}{#2}}

\newcommand{\ElabTy}[1]{\langle\!\langle #1 \rangle\!\rangle}   
\newcommand{\LayoutTy}[1]{\lfloor #1 \rfloor}
\newcommand{\sem}[1]{\llbracket #1 \rrbracket}

\newcommand{\SEM}[1]{\llbracket #1 \rrbracket}  
\newcommand{\lay}{\mathcal{L}}                         

\newcommand{\nwires}{\mathsf{nwires}}

\renewcommand{\ket}[1]{\lvert #1 \rangle}

\newcommand{\QBool}{\mathsf{QBool}}

\newcommand{\Framed}[1]{\mathsf{Framed}(#1)}

\newcommand{\qSwitch}{\mathsf{QSwitch}}

\newcommand{\id}{\mathsf{id}}

\newcommand{\lam}[2]{\lambda #1.\,#2}

\newcommand{\bigplus}{\mathop{\bigoplus}}

\newcommand{\NAME}{\mathsf{D}}
\newcommand{\caseofn}{\mathsf{case}_n}
\newcommand{\Wit}{\mathit{Wit}}
\newcommand{\Aux}{\mathit{Aux}}
\newcommand{\Case}{\mathsf{Case}}

\newcommand{\oplusmap}[4]{[\,#1\cdot #2 \mid #3\cdot #4\,]}

\newcommand{\caseof}[5]{\mathbf{case}\; #1\;\mathbf{of}\; #2 \Rightarrow #3 \mid #4 \Rightarrow #5}

\newcommand{\unitjudge}{\vdash_{\mathsf{unit}}}

\DeclareRobustCommand{\base}{\slashed{b}}      
\DeclareRobustCommand{\Qn}[1]{\base^{\oplus #1}} 
\newcommand{\expinv}[2]{\mathsf{exp}(i#1\cdot #2)} 

\newcommand{\size}[1]{\mathsf{n}(#1)}

\newcommand{\letpair}[4]{%
  \mathbf{let}\,(#1 \tensor #2)=#3\,\mathbf{in}\,#4%
}

\newcommand{\distL}{\mathsf{dist}_L}

\newcommand{\distR}{\mathsf{dist}_R}
\newcommand{\distRi}{\mathsf{dist}_R^{-1}}

\newcommand{\lmark}{\multimap}

\newif\ifappendices
\appendicestrue          

\setcopyright{cc}
\setcctype{by}
\acmDOI{10.1145/3839496}
\acmYear{2026}
\acmJournal{PACMPL}
\acmVolume{10}
\acmNumber{OOPSLA2}
\acmArticle{364}
\acmMonth{10}
\acmSubmissionID{oopslab26main-p876-p}
\received{2026-03-17}
\received[accepted]{2026-08-06}

\begin{document}

\title{Granthi: Higher-Order Quantum Programming via Unitary Wiring}

\author{Samson Abramsky}
\orcid{0000-0003-3921-6637}
\affiliation{%
  \institution{University College London}
  \city{London}
  \country{United Kingdom}
}
\email{s.abramsky@ucl.ac.uk}

\author{Radha Jagadeesan}
\orcid{0000-0002-4525-1589}
\affiliation{%
  \institution{DePaul University}
  \city{Chicago}
  \country{USA}
}
\email{rjagadee@depaul.edu}

\begin{abstract}
Many mainstream quantum programming languages confine higher-order structure to a
classical host while restricting the quantum layer to first-order
operations on qubits.  This paper presents Granthi, a purely unitary
higher-order quantum programming language built on three design
commitments: quantum programs are first-class values that may be passed,
returned, and coherently composed; additive structure is tag-preserving
routing rather than observational branching, so control may remain in
superposition; and programmer-facing finite label types with staged
reversible-operation bindings provide domain-level control spaces without
exposing tag management.  These bindings are eliminated by elaboration
before Source typing.

Granthi deterministically normalizes each Source program to a canonical
wiring form.  Every well-typed Source program---including a term of
function type---has a unitary boundary interpretation.  Under backend
correctness~\textup{(BC)}, the reference compiler produces a unitary
circuit realizing that interpretation.

Granthi's currently supported executable fragment is implemented
end-to-end: an OCaml
DSL elaborates surface programs through a higher-order Core IR to
executable quantum circuits via pytket.  The language directly supports the pure-unitary quantum switch for
explicitly supplied operations; closed instances compile to static
circuits.  It also supports interference on control-flow history and
structured finite control, all within the purely unitary fragment.
\end{abstract}

\begin{CCSXML}
<ccs2012>
<concept>
<concept_id>10003752.10003753.10003758</concept_id>
<concept_desc>Theory of computation~Quantum computation theory</concept_desc>
<concept_significance>500</concept_significance>
</concept>
<concept>
<concept_id>10011007.10011006.10011008.10011009.10011012</concept_id>
<concept_desc>Software and its engineering~Functional languages</concept_desc>
<concept_significance>500</concept_significance>
</concept>
</ccs2012>
\end{CCSXML}

\ccsdesc[500]{Theory of computation~Quantum computation theory}
\ccsdesc[500]{Software and its engineering~Functional languages}

\keywords{quantum programming languages, unitarity, compilation}

\maketitle

\noindent\textit{Note.}
This is the full version, including the appendices, of the article
published in \emph{Proceedings of the ACM on Programming Languages}
10, OOPSLA2, Article 364 (October 2026),
\url{https://doi.org/10.1145/3839496}.  The appendices contain the
proofs and are the supplementary material of the published version.
\par\smallskip

\section{Introduction}
\input{intro}

\input{WHYGOI}

\input{design}

\input{intro_object_language}
\input{core-lang}

\input{metatheory}
\input{SEM-NEW}
\input{compiler_new}
\input{datatypes-new}
\input{toolchain}

\input{expressiveness}

\section{Conclusion}

Granthi shows that higher-order quantum abstraction need not remain
confined to a classical circuit-building layer: under a disciplined
unitary semantics, it can compile directly to static circuits while
preserving coherent control and interference.  This provides a
concrete foundation for embedding higher-order unitary kernels within
larger adaptive quantum languages, where measurement, mixed states,
and recursion remain the next challenge.

\section*{Data-Availability Statement}

A prototype implementation accompanies this paper.  It provides the
OCaml-embedded source language, the higher-order Core IR, and the
compiler to executable \texttt{pytket} circuits described in
\S\ref{sec:toolchain}.  The artifact includes regression and matrix
tests together with the runnable demonstrations listed in
Table~\ref{tab:validation}.  Its current backend scope and size
limits are documented under Limitations in \S\ref{subsec:impl-limits};
those bounds do not constrain the formal language or the reference
compiler.

The evaluated artifact is archived on
Zenodo~\citep{GranthiArtifact2026}.  The current implementation is
maintained under the MIT license at
\begin{center}
\url{https://github.com/radhajagadeesan/Granthi}
\end{center}
which includes enforcement of the Source typing restrictions and the
regression tests for the counterexample term of
Remark~\ref{rem:why-first-order-witness}.  Appendices~\ref{app:focused-rules}--\ref{app:elaboration-proofs},
containing the proofs, are included in this full version; they are the
supplementary material of the published article.

\bibliographystyle{ACM-Reference-Format}
\bibliography{main_published_preferred}

\ifappendices
\clearpage
\appendix
\input{app-focused-rules}

\input{appendix-summand-completeness}
\input{appendix-normal}
\input{appendix-determinacy}

\input{app-unitarityNEW}
\input{appendix-sum-encoding}
\input{appendix-compilation-soundness-planned}
\input{narymonoidal}
\input{elaboration-proofs-appendix}
\fi

\end{document}

%% file: intro.tex

Quantum algorithms were developed well before suitable hardware existed, from
early breakthroughs such as Shor's factoring algorithm to more recent
variational and sampling-based methods~\citep{Peruzzo2014VQE,Aaronson2011BosonSampling}.
Their study also raises a programming-language question independent
of any particular device technology: how should quantum programs be
structured?  As in classical
computing~\citep{HennessyPatterson2018}, abstraction, architecture,
and programming models are central to that question.
Many mainstream quantum programming languages adopt a split
architecture: higher-order structure resides entirely in a classical
host language, while the quantum layer exposes only first-order
operations on qubits and registers~\citep{Gay2006Survey}.
Classical languages provide functions, control abstractions, and program
generators used to construct and manipulate quantum circuits as
classical data.  Callable operations and circuit values may be passed,
returned, and composed by the host, but coherent quantum programs
themselves do not inhabit higher types of the quantum object language.
The resulting separation yields a coherent and widely adopted
programming model, but one in which abstraction over coherent quantum
programs themselves is limited by design.

This common confinement of higher-order structure to the classical host is not
mandated by quantum mechanics itself. Quantum theory admits coherent control
over program composition, as demonstrated by the quantum
switch, in which the order of composition of two operations is placed in
superposition. Such examples show that higher-order structure is
physically meaningful within the purely unitary model, and that
first-order treatments of the quantum layer reflect design choices
rather than fundamental limitations.

A second limitation concerns abstraction over data. In classical
programming, user-defined data abstractions let programmers work
with problem-domain objects rather than low-level
representations. By contrast, quantum programming languages largely
expose physical resources---qubits, registers, and low-level operations---as
their primary data model, with little support for domain-level organization
compatible with quantum coherence.

Many familiar classical elimination mechanisms permit destructive
use of values through inspection, case analysis, or iteration,
collapsing control flow and discarding unused alternatives.  Their
unrestricted transfer to a purely unitary quantum setting is
incompatible with coherence: quantum data cannot be freely
inspected, copied, or discarded without irreversibly altering
program behavior. Even in the absence of measurement, these operations
conflict with unitarity by introducing implicit erasure or duplication.
Familiar datatype mechanisms such as pattern matching or
equality testing do not carry over directly to unitary quantum programs.

\medskip\noindent
This paper presents Granthi\footnote{From the Sanskrit for `knot,' evoking the wiring at the core of the system.}, a purely unitary higher-order quantum programming language built on three commitments:
quantum programs are first-class values that may be passed, returned, and coherently composed;
additive structure is tag-preserving routing rather than observational branching, so control may remain in superposition;
and programmer-facing finite label types with staged reversible-operation bindings provide domain-level control spaces without exposing tag management: elaboration substitutes those bindings and maps the datatypes to canonical $n$-ary sums, with no recursive types.

Granthi deterministically normalizes Source programs to canonical wiring
forms, which the boundary semantics interprets.  Every well-typed Source
program---including a term of function type---acts unitarily at its
interface.  Under backend correctness~\textup{(BC)}, the reference
compiler realizes this interface action as a unitary circuit.
An OCaml DSL compiles the currently supported fragment to
executable quantum circuits via pytket.

\paragraph{Contributions.}
This paper makes four contributions:
\begin{itemize}[nosep,leftmargin=*]
  \item \textbf{Language design.} A purely unitary linear language in
    which open quantum components are first-class values and coherent sums
    provide tag-preserving control.
  \item \textbf{Canonicalization and semantics.} A deterministic
    normalization procedure and a boundary interpretation for open and
    higher-order terms.
  \item \textbf{Circuit realization.} A type-directed compiler for
    canonical wiring forms and physical layouts for tags and payloads,
    together with a realization theorem under the stated backend assumption.
  \item \textbf{Implementation and validation.} An OCaml-to-\texttt{pytket}
    implementation supporting controlled lifting, $\qSwitch$,
    phase-sensitive routing, and finite control datatypes, tested by
    regression checks and matrix comparisons.
\end{itemize}

%% file: WHYGOI.tex
\section{Higher-Order Reversibility and Quantum Wiring}
\label{sec:ho-wiring}

In the purely unitary fragment, first-order reversibility is well
understood: the interface can always be enlarged so that no information
is lost~\citep{Bennett1973,Toffoli1980}.
Granthi extends this idea to higher order by treating programs
themselves as open unitary components with explicit interfaces.
The central claim is that higher-order reversibility is not a new
kind of unitarity, but ordinary unitarity relocated from closed
data to exposed interfaces.

\paragraph{The higher-order challenge.}

A term $f:A\lmark B$ is not generally interpreted as an invertible map
from the data space of $A$ to that of $B$.  Granthi instead places the
reversible invariant on the complete exposed boundary: in
$\mathbf{FdHilb}$, its boundary denotation is a unitary between the
negative and positive sectors selected by its canonical normal
derivation.  Thus no information is globally erased~\citep{HEUNEN2015217}.
The main obstacle to reversible higher-order computation is not
higher-order functions themselves but the usual operational
interpretation of application.
In a standard operational account, evaluating $f\,a$ consumes both
the function and its argument to produce a result; the interfaces
through which $f$ and $a$ interact are left implicit.
From the perspective of reversible systems, this hides the information
flow that must be preserved.

\paragraph{Application as connection.}

In the canonical-form semantics, an application that survives
normalization is interpreted as \emph{connection}, not consumption.
A function is
not a black box that eats an argument; it is a component with ports.
Figure~\ref{fig:boundary-wiring} shows the proof net for
$\mathsf{eval}$ and makes the connection topology
explicit: the arcs identify $\alpha_2$ with $\alpha_1$ and $\beta_1$ with
$\beta_2$, wiring the argument's output port to the function's
input port and exposing the function's output as the overall
result.
An irreducible application is a connectivity structure;
$\beta$-redexes are discharged by the fixed normalizer before semantic
interpretation.  Currying is the corresponding rewiring in the other direction: abstracting
over an argument
re-exports its port on the output boundary, so that the
caller supplies it from outside rather than consuming it internally (a perspective with precedent in functional languages~\citep{Wadsworth1971,Turner1979}).

\begin{figure}[t]
\centering
\begin{tikzpicture}[scale=0.82, transform shape,
  wire/.style={line width=0.9pt},
  box/.style={draw, rounded corners=2pt, line width=0.9pt,
              minimum width=1.0cm, minimum height=1.0cm}]
  \begin{scope}[xshift=-4.4cm]
    \node[box] (t) at (0,0) {$t$};
    \draw[wire] (t.east) -- ++(0.4,0) node[right] {$B$};
    \draw[wire] (0.9,0.35) node[right] {$A$}
      -- (0.75,0.35)
      arc[start angle=0, end angle=180, x radius=0.85cm, y radius=0.45cm]
      -- ($(t.west)+(0,0.35)$);
    \node[font=\scriptsize, anchor=north] at (0,-0.85) {Currying};
  \end{scope}
  \begin{scope}
    \node[box] (f) at (0,0) {$f$};
    \draw[wire] (f.west) -- ++(-0.8,0) node[left] {$A$};
    \draw[wire] (f.east) -- ++(0.4,0) node[right] {$B$};
    \node[font=\scriptsize, anchor=north] at (0,-0.85) {Function interface};
  \end{scope}
  \begin{scope}[xshift=4.4cm]
    \node[box] (ev) at (0,0) {\small$\mathsf{eval}$};
    \draw[wire] (-0.9, 0.25) node[left] {$\alpha_2$}
      -- (0.9, 0.25) node[right] {$\alpha_1$};
    \draw[wire] (-0.9,-0.25) node[left] {$\beta_1$}
      -- (0.9,-0.25) node[right] {$\beta_2$};
    \node[font=\scriptsize, anchor=north] at (0,-0.85) {Eval};
  \end{scope}
\end{tikzpicture}
\caption{Boundary wiring for currying, a function interface, and evaluation.}
\label{fig:boundary-wiring}
\end{figure}

A term of type $A \lmark B$ is an open circuit fragment with a
structured interface: it consumes an $A$-bundle and produces a $B$-bundle.
Application connects interfaces rather than ``running'' a function.
Closed programs ($A \lmark A$) have matching interfaces; at first
order they execute as standalone input--output circuits.

\paragraph{Boundary unitarity.}

For closed first-order programs, the familiar correctness invariant is
ordinary unitarity: a closed $p : A \lmark A$ denotes a unitary on the
data interface of~$A$.  At higher types this is ill-posed: a term of type
$A \lmark B$ is an open component, and once functions appear at interfaces
we cannot ask for a unitary $A \to B$ ``on data''.

The solution is to interpret a component's exposed ports as
finite-dimensional Hilbert spaces and require the induced map to be unitary.
A canonical normal derivation selects the compatible input and output spaces
within that typed boundary.  For $\mathsf{eval}$, this map is simply the two
identity connections in Figure~\ref{fig:boundary-wiring}.  At
qubit-register endpoints, execution recovers ordinary circuit
unitarity;
\S\ref{sec:boundary-semantics} formalizes the higher-order boundary map, and
\S\ref{sec:compilation} proves that compilation realizes it on its typed
physical layout under backend correctness~\textup{(BC)}.

This viewpoint accommodates coherent control over composition
order, including the quantum switch~\citep{Chiribella2013Switch}
and interference on control flow; Granthi is designed
to make the wiring view explicit in syntax and typing.

%% file: design.tex
\section{Scope: A Canonical-Form Unitary Core}
\label{sec:design}

Granthi is a purely unitary core; measurement, mixed states,
classical feedback, and recursion are all absent.  A full quantum
programming language needs each of these, but each brings
equational laws under which unitary equality is no longer the
governing invariant.  Restricting to the unitary layer lets
higher-order abstraction be studied with a circuit-level unitary
interpretation.

The intended deployment is inside a larger adaptive quantum
language, one that prepares inputs, invokes a unitary kernel,
measures, branches on the outcome, prepares fresh state, and calls
the next kernel.  Measurement, classical control, and adaptive
scheduling belong to that surrounding language; what Granthi
governs is the kernel.  Inside a kernel, a programmer may inline,
abstract, apply, and normalize under Granthi's fixed discipline; the
metatheory proves that these transformations preserve the component's
boundary behavior.

The move is the standard PL response to effects, transposed to the
boundary between an adaptive host and a unitary core.  With
classical effects, unrestricted $\beta$-reduction becomes sound
only once an evaluation discipline, say call-by-value, fixes the
operational order; Granthi fixes the analogous discipline for
higher-order unitary kernels by normalizing the linear functional
structure first, holding quantum operations as opaque constants,
and exposing only afterwards the boundary on which those constants
must act unitarily.  On this reading, the source equations determine
the canonical wiring presented to the quantum operations, allowing a
kernel to sit cleanly under the measurement and classical control of
its host.

A function value in Granthi is thus a component with a typed
boundary interface, and the interface may carry phases, coherent
sums, coherent control, and unitary action across its ports.
Source evaluation rearranges the interface under the fixed
normalization discipline.  Normalization can therefore change the
higher-order presentation of a component while preserving its
boundary behavior, in the sense made precise by the boundary
semantics (\S\ref{sec:boundary-semantics}).  We expect the
interface-preserving discipline, more than any particular surface
syntax, to be the transferable design: higher-order functional
code as a source language for the unitary components of a larger
quantum system.

Concretely, conjugation and coherent reordering, including the quantum
switch, are higher-order Source combinators.  Controlled lifting and
finite-family dispatch over fixed closed operations are staging
constructions that produce sealed Source programs.  These
patterns recur throughout the algorithms literature---in phase
estimation and qubitization, in the basis changes and reflections of
amplitude amplification, and in SELECT/multiplexed oracles for LCU
and Hamiltonian simulation.  Both the Source combinators and the
staging constructions are reusable; the Source type system enforces
the interface discipline, and each certified exponential denotes its
prescribed first-order unitary.

%% file: intro_object_language.tex
\section{An Informal Tour of the Object Language}
\label{sec:lang-tour}

This section introduces the language by example, to make terms
readable \emph{as programs} before the formal development in
\S\ref{sec:core-language}.  The language treats higher-order programs as
first-class unitary components: types describe interfaces, and terms
describe reversible transformations between interfaces.

The source language is a linear $\lambda$-calculus with tensor ($\tensor$),
coherent sum ($\plus$), and linear function space ($\lmark$). The base
type $\base$ is a one-dimensional seed; finite-dimensional quantum
systems are built as coherent sums of $\base$ (e.g.\
$\QBool := \base \plus \base$). The Source language is linear: every
variable must be used exactly once, so copying, discarding, and
projection are unavailable.
We write $\Gamma\sjudge t:A$ when $t:A$ uses precisely the linear
variables in $\Gamma$; a type is first-order when it contains no
$\lmark$.
Programs are built from linear $\lambda$-terms, structural combinators,
and certified quantum exponentials, each interpreted by its prescribed
unitary.  We begin with programs that only rearrange data.

The boxed OCaml listings below are literal programs accepted by the
PPX (OCaml preprocessor-extension) Source frontend, with module
imports omitted.

\subsection{Basic Examples}

\paragraph{Pairing without projection.}
Given values of types $A$ and $B$, we can form a pair. The destructor
binds both components simultaneously:
$
  \mathsf{swap} : A \tensor B \lmark B \tensor A:= \lam{p}{\letpair{x}{y}{p}{y \tensor x}}
$.
There is no first or second projection: both components must be used
exactly once.

\paragraph{Coherent Choice}
\label{subsec:coherent-choice}

The simplest control space is the quantum boolean
$\QBool := \base \plus \base$, where $\base$ is the base type carrying no payload.  Although written as a sum, $\plus$ does
\emph{not} support classical, observational branching: a $\QBool$ may be in
superposition, so programs cannot test it and then discard the evidence of
which branch was taken.

Instead, the surface language permits (\S\ref{cohRouting}) a coherent, \emph{tag-preserving} case expression
\[
  \caseof{e}{x}{u}{y}{v}
  \;:\; (A \plus B) \tensor C,
\]
in which the two branches $u$ and $v$ share one linear context, with
$A$, $B$, and $C$ first-order.
This $\mathbf{case}$ construct
should be read as \emph{routing}, not inspection: it runs one of two
branches \emph{without measuring}, and it preserves the choice (result type records $A \plus B$) so the
computation remains reversible.  In the surface examples
below, branch labels such as \texttt{zero}/\texttt{one} for $\QBool$
are pattern binders carried through the case, not freely constructible
constants---the language has no injection constants for sum types.

This is close in spirit to Qunity's use of sum types for coherent
control~\citep{VoichickLiRandHicks2023}.  Granthi pushes the discipline
into the primitive sum interface itself: the orthogonality needed for
coherent branching is carried by the type, because branch alternatives
live in disjoint summands and tag preservation keeps the additive
structure visible on output.  Granthi's
tag preservation echoes the symmetric pattern clauses
of~\citet{SabryValironVizzotto2018}: reversibility comes from
respecting the additive structure on both sides.

\subsection{Higher-Order Quantum Control}
This section develops two forms of coherent control: composition order
for first-class programs, and interference on control-flow
history.

\paragraph{The Quantum Switch: coherent control over composition order.}
\label{subsec:quantum-switch}

The quantum switch is the flagship example of coherent control over
composition order.  Granthi expresses its pure-unitary form as a
higher-order term whose coherent branches share a linear function
context: each branch uses each supplied operation exactly once.
Let $f,g : A \lmark A$ be two programs, with $A$ first-order.  The
quantum switch
$\qSwitch : (A \lmark A) \tensor (A \lmark A) \tensor
(\QBool \tensor A) \lmark (\QBool \tensor A)$
applies them in a control-dependent order.
In the listing below, the host-language witness \texttt{(a : 'a P.t)} records that $A$ is
first-order.  Canonical uncurrying gives the displayed tensor type.
On the left branch, $\qSwitch$ applies $g$ then $f$; on the right,
$f$ then $g$.  The type
$(\QBool\tensor A)\lmark(\QBool\tensor A)$ returns both the control
qubit and the payload.  Externally preparing the control $b$ in
$\ket{+}=H\ket{0}$, with $H:\QBool\lmark\QBool$, yields coherent
control of the two composition orders~\citep{Chiribella2013Switch}.
$\qSwitch$ is the paper's running example.  In the mathematical
notation and the PPX Source notation used by the artifact:

\begin{lrbox}{\qswitchcode}%
\scriptsize
\begin{varwidth}{0.53\textwidth}
\begin{lstlisting}
let%source qswitch (a : 'a P.t)
    (f : ('a, 'a) lolli)
    (g : ('a, 'a) lolli)
    (p : (qbool, 'a) tensor) =
  let (b, x) = split p in
  case b
    ~zero:(f (g x))
    ~one_:(g (f x))
\end{lstlisting}
\end{varwidth}%
\end{lrbox}

\begin{center}
\small
\fbox{%
\begin{tabular}{@{\hspace{6pt}}m{0.31\textwidth}@{\hspace{6pt}}|@{\hspace{6pt}}m{0.52\textwidth}@{\hspace{6pt}}}
$\begin{array}{@{}l@{}}
\qSwitch(f, g, p) = \\
\quad \mathbf{let}\ (b, x) = p\ \mathbf{in} \\
\quad \mathbf{case}\ b\ \mathbf{of} \\
\quad\quad b_0 \Rightarrow f(g(x)) \\
\quad\quad b_1 \Rightarrow g(f(x))
\end{array}$
&
\usebox{\qswitchcode}
\end{tabular}}
\end{center}

\paragraph{Interference on Control Flow: a distinctly quantum effect.}
\label{subsec:interference-preview}

Beyond controlling the \emph{order} of operations, the language supports
\emph{interference on control history}---a phenomenon with no classical
analogue.  The pattern is to route a witness, phase one route, undo the
routing, and expose the relative phase by interference.  For example,
classical short-circuit evaluation on $b_1,b_2:\QBool$ erases whether
the second input was skipped.  A Source program instead records that
information by routing a witness $w:\Wit$, where $\Wit=\Aux\plus\QBool$
for an auxiliary first-order type~$\Aux$, according to $b_1$, while
threading $b_2$ through unchanged (linearity forbids discarding it).
A phase-marked variant applies $-\mathrm{id}_{\Wit}$ to $w$ in the
$b_1=0$ branch, marking which route fired without observation; an
unphased inverse route then restores the witness while leaving the
relative phase on $b_1$, and a Hadamard converts that phase into a
computational-basis distinction.  Control-flow history thus participates
in interference without collapsing the computation
(\S\ref{sec:quantum-core}).

\subsection{Finite Datatypes and Structured Control}
\label{subsec:finite-datatypes}

This subsection previews a second language-level construct:
programmer-facing finite control datatypes with named labels and
reversible operations, elaborated into the core's sum-and-wiring
machinery in \S\ref{sec:datatypes}.
We illustrate with the cyclic group $\mathbb Z_5$, represented
mathematically by a five-way first-order sum:
\[
  \mathbb Z_5 := \bigplus_{i=0}^{4}\base,
  \qquad
  \mathsf{neg}(a)=-a \pmod 5,
  \qquad
  \mathsf{add}(a,b)=(a,\,a+b \pmod 5).
\]
The PPX declaration
introduces its ordered labels, while the generated \texttt{permute}
and \texttt{select} combinators construct closed Source operations:

\begin{lrbox}{\zfivecodeA}%
\scriptsize
\begin{varwidth}{0.45\linewidth}
\begin{lstlisting}
type z5 = Z0 | Z1 | Z2 | Z3 | Z4
[@@source.datatype]

let shift1 =
  Z5.permute [ Z1; Z2; Z3; Z4; Z0 ]
let shift2 =
  Z5.permute [ Z2; Z3; Z4; Z0; Z1 ]
let shift3 =
  Z5.permute [ Z3; Z4; Z0; Z1; Z2 ]
let shift4 =
  Z5.permute [ Z4; Z0; Z1; Z2; Z3 ]
\end{lstlisting}
\end{varwidth}%
\end{lrbox}
\begin{lrbox}{\zfivecodeB}%
\scriptsize
\begin{varwidth}{0.5\linewidth}
\begin{lstlisting}
let neg_op =
  Z5.permute [ Z0; Z4; Z3; Z2; Z1 ]
let add_op =
  Z5.select ~target:Z5.p
    [ Op.id Z5.s; shift1; shift2;
      shift3; shift4 ]

let%source neg (x : Z5.t) =
  neg_op x
let%source add (p : (Z5.t, Z5.t) tensor) =
  add_op p
\end{lstlisting}
\end{varwidth}%
\end{lrbox}

\begin{center}
\small
\fbox{%
\begin{tabular}{@{\hspace{6pt}}m{0.36\textwidth}@{\hspace{6pt}}|@{\hspace{6pt}}m{0.46\textwidth}@{\hspace{6pt}}}
\usebox{\zfivecodeA}
&
\usebox{\zfivecodeB}
\end{tabular}}
\end{center}
Thus
$\mathsf{neg}:\mathbb Z_5\lmark\mathbb Z_5$ and
$\mathsf{add}:\mathbb Z_5\tensor\mathbb Z_5
 \lmark\mathbb Z_5\tensor\mathbb Z_5$.
The permutation lists are bijections.  The call to
\texttt{Z5.select} fixes five closed Source operations before
\texttt{add} is formed; it preserves the nominal tag and applies the
selected operation to the one shared payload.  More generally, closed
operations
$\emptyset\sjudge f_i:A\lmark A$ give the staged program
\[
 \mathsf{select}_{5,A}[f_0,\ldots,f_4]:
 \mathbb Z_5\tensor A\lmark\mathbb Z_5\tensor A ,
\]
for first-order $A$.  The brackets denote staging arguments, not
linear inputs of the resulting Source program.

\noindent

The controlled phase kick is an instance of \emph{group-coherent
pattern matching}:
\[
  \mathsf{kick}_5 : \mathbb{Z}_5 \tensor \QBool \lmark
                    \mathbb{Z}_5 \tensor \QBool .
\]
It applies the rotation $\mathrm{Rz}(2\pi k/5)=e^{-i\pi kZ/5}$ to the target qubit
under control of the group element~$k$ and returns $k$ unchanged:
\[
  |k\rangle \tensor |\psi\rangle \;\mapsto\;
  |k\rangle \tensor \mathrm{Rz}(2\pi k/5)\,|\psi\rangle.
\]
The phase depends on $k$, but $k$ itself may remain in superposition.
The programmer writes group labels and operations;
\S\ref{sec:datatypes} makes their elaboration precise.

%% file: core-lang.tex
\section{Core Language}
\label{sec:core-language}

This section develops the typed core of Granthi.  The core is a linear
$\lambda$-calculus that is already expressive for higher-order
resource-sensitive programming; the quantum extension enriches it
modularly with certified involutions and unitary primitives.
Conjugation by structural type isomorphisms moves unitaries along
type isomorphisms by ordinary composition, with no separate
primitive.
The type system supports both the boundary semantics
(\S\ref{sec:boundary-semantics}) and circuit compilation
(\S\ref{sec:compilation}).

\subsection{Linear Core}
\label{subsec:linear-core}

The linear $\lambda$-calculus forms the foundation of our language. The type
system enforces that every variable is used exactly once; the term language
provides higher-order functions, pairs, and coherent sums, using a disciplined form
of branching suited to reversible linear control.

\subsubsection{Types}
\label{subsubsec:types}

The grammar of Raw types is:
\[
T, U \;::=\; \base \;\mid\; T \tensor U \;\mid\; T \plus U \;\mid\; T \lmark U
\]
(base, tensor, sum, and linear implication respectively).

The language includes a primitive base type $\base$, which is the
seed object of the type theory carrying no payload.
No type has term-level value constructors such as
$\mathsf{true}$ or $\mathsf{false}$: this language constructs unitaries,
not data.  Accordingly, basis-state preparation is not a term former.

We use the notation
$\Qn{n} := \bigoplus_{i=0}^{n-1}\base$ for the fixed
left-associated $n$-fold sum; in particular
$\Qn{1} = \base$ and $\Qn{2} = \QBool$.
$\QBool := \base \plus \base$ is the qubit type and serves as the standard
base for the user-facing type theory.

Tensor~($\tensor$) reflects pairing without projection:
both components must be consumed. Sum~($\plus$) is a \emph{coherent sum}:
it carries no injections $A \to A \plus B$ and no projections, because
injections and projections are irreversible and incompatible with unitarity.
Instead, $\plus$ routes values coherently
through one of two branches while preserving, at the type level, which
branch was taken.
Linear implication~($\lmark$) is the type of functions that consume their
argument exactly once.

\subsubsection{Raw Language}
\label{subsubsec:raw-language}
\label{subsubsec:terms}

The Raw typing judgment $\Gamma\vdash_{\mathsf r}t:A$ classifies
linear terms.  A context is a finite multiset of distinctly named
typed variables; comma denotes disjoint multiset union, and exchange
is implicit.  Linearity is enforced by context splitting
($\Gamma=\Gamma_1\uplus\Gamma_2$) in multiplicative rules.
Table~\ref{tab:typing-rules-linear} gives the general
natural-deduction calculus.  Appendix~\ref{app:focused-rules}
defines the internal administrative judgment used by normalization
and the semantic proofs.  The Source language below selects the
programmer-facing fragment.

\begin{definition}[Raw formation values]
\label{def:raw-formation-values}
The Raw formation values are generated by
\[
\begin{split}
 W ::= {}&x \mid W\tensor W \mid \lam{x}{t}
 \mid [\,W\mid W\,]
 \mid s \mid x\,W \mid W\plus W .
\end{split}
\]
Here $s$ ranges over
\[
\alpha^\plus \mid (\alpha^\plus)^{-1}
\mid \sigma^\plus \mid (\sigma^\plus)^{-1}
\mid \mathsf{dist}_L \mid \mathsf{dist}_L^{-1}
\mid \mathsf{dist}_R \mid \mathsf{dist}_R^{-1},
\]
at their declared type indices.
Brackets form Raw coherent sums; infix $W\plus W$ is the Raw
branchwise map former.
\end{definition}

Type indices on structural atoms are implicit; $\eta$ is reserved
for $\eta$-expansion.  Formation values need not be irreducible: in
particular, a lambda body $t$ is arbitrary.  We reserve $V$ for the
normal-form value grammar of Definition~\ref{def:nf-grammar}.  The
value restriction keeps Raw sum formation stable under the
let-floating normalizer; the internal calculus of
Appendix~\ref{app:focused-rules} lifts it.

\begin{table*}[!ht]
\caption{Raw typing rules for the linear core.  All binary rules have an
implicit context split $\Gamma = \Gamma_1 \uplus \Gamma_2$.
\textsc{$\plus$-Map} shows the phases-suppressed ($\alpha{=}\beta{=}1$) case.}
\label{tab:typing-rules-linear}
\centering
\footnotesize
\resizebox{\linewidth}{!}{%
\begin{tabular}{@{}l@{\;}c@{\;}l@{\;}c@{}}
\toprule
\multicolumn{4}{@{}l}{\emph{Raw Linear Core} ($\Gamma\vdash_{\mathsf r}t:A$)} \\
\midrule

\textsc{Var}
& $\inferrule{ }{x:A \vdash_{\mathsf r} x : A}$
& &
\\[2.5ex]

\textsc{$\lmark$-I}
& $\inferrule{
    \Gamma, x:A \vdash_{\mathsf r} t : B
  }{
    \Gamma \vdash_{\mathsf r} \lam{x}{t} : A \lmark B
  }$
& \textsc{$\lmark$-E}
& $\inferrule{
    \Gamma_1 \vdash_{\mathsf r} f : A \lmark B \\
    \Gamma_2 \vdash_{\mathsf r} u : A
  }{
    \Gamma \vdash_{\mathsf r} f\,u : B
  }$
\\[3ex]

\textsc{$\tensor$-I}
& $\inferrule{
    \Gamma_1 \vdash_{\mathsf r} t : A \\
    \Gamma_2 \vdash_{\mathsf r} u : B
  }{
    \Gamma \vdash_{\mathsf r} t \tensor u : A \tensor B
  }$
& \textsc{$\tensor$-E}
& $\inferrule{
    \Gamma_1 \vdash_{\mathsf r} t : A \tensor B \\
    \Gamma_2, x:A, y:B \vdash_{\mathsf r} u : C
  }{
    \Gamma \vdash_{\mathsf r} \letpair{x}{y}{t}{u} : C
  }$
\\[3ex]

\textsc{$\plus$-I}
& $\inferrule{
    \Gamma_1 \vdash_{\mathsf r} W_1 : A \\
    \Gamma_2 \vdash_{\mathsf r} W_2 : B
  }{
    \Gamma \vdash_{\mathsf r} [\,W_1 \mid W_2\,] : A \plus B
  }$
& \textsc{$\plus$-Map}
& $\inferrule{
    \Gamma_1 \vdash_{\mathsf r} f : A \lmark C \\
    \Gamma_2 \vdash_{\mathsf r} g : B \lmark D
  }{
    \Gamma \vdash_{\mathsf r} f \plus g : A \plus B \lmark C \plus D
  }$
\\[2ex]

\bottomrule
\end{tabular}%
}

\vspace{-1em}
\end{table*}

\noindent

The multiplicative rules ($\lmark$, $\tensor$) are standard for a
linear $\lambda$-calculus.  Raw \textsc{$\plus$-I} forms a tagged
branch pair $[\,W_1\mid W_2\,]:A\plus B$ from two Raw formation
values; both branches are present in the syntax.  Raw
\textsc{$\plus$-Map} routes the two summands through their respective
maps and preserves the tag.

\paragraph{Structural type isomorphisms.}
\label{par:structural-isos}
We write $s : T \cong S$ for a \emph{primitive} structural type
isomorphism.  The primitives are the coherence maps of $\plus$ and
the distributivities:
\[
\begin{aligned}
  \alpha^\plus &: (A \plus B) \plus C \cong A \plus (B \plus C)
    &\qquad \sigma^\plus &: A \plus B \cong B \plus A, \\
  \mathsf{dist}_L &: A \tensor (B \plus C)
    \cong (A \tensor B) \plus (A \tensor C), \\
  \mathsf{dist}_R &: (A \plus B) \tensor C
    \cong (A \tensor C) \plus (B \tensor C).
\end{aligned}
\]
The displayed $\plus$-coherence maps and distributivities (and
their inverses) are primitive atomic constants at all of their
declared types.  Identity, inverse, composition, and $\tensor$-side
composites are built by application and the abbreviations below.
Branchwise composites are formed by Raw \textsc{$\plus$-Map}.
The $\tensor$-side coherence maps are
\emph{not} primitives: they are definable abbreviations,
\[
  \sigma^\tensor \;:=\; \lam{p}{\letpair{x}{y}{p}{y \tensor x}},
  \qquad
  \alpha^\tensor \;:=\; \lam{p}{\letpair{q}{z}{p}
    {\letpair{x}{y}{q}{x \tensor (y \tensor z)}}},
\]
(the $\mathsf{swap}$ of \S\ref{sec:lang-tour}); their applications
$\beta$-reduce, so they never block normalization.

\subsubsection{Source Language}
\label{subsubsec:source-language}

\paragraph{Source types and first-order interfaces.}
\label{par:first-order-types}
First-order data types and Source types are generated by
\[
  P ::= \base \mid P\tensor P \mid P\plus P,
  \qquad
  S ::= P \mid S\tensor S \mid S\lmark S .
\]
Thus every occurrence of $\plus$ in a Source type belongs to a
first-order data type.  First-order types are exactly those whose
boundary interfaces carry no reversed-polarity ports
(\S\ref{sec:boundary-semantics}).

The Source judgment $\Gamma\sjudge t:S$ uses Source types throughout.
It inherits the variable, implication, and tensor rules of
Table~\ref{tab:typing-rules-linear}, together with declared structural
atoms whose source and target are Source types.  Consequently, additive
coherence atoms and distributors may occur explicitly in Source only
at first-order instances.  Tensor coherence remains available at
arbitrary Source types through the defining linear terms of
\S\ref{par:structural-isos}.  The coherent branching construct is the
tag-preserving case rule below.  The quantum extension adds
\textsc{Exp}.

Throughout, a display $f:=u$ is only a metalanguage abbreviation.
Every occurrence of $f$ denotes a fresh capture-avoiding copy of $u$,
expanded before Source typing and before $(-)^\circ$ is formed.  Thus
$f$ is not a Source term constructor; the Source calculus has no
operation-name typing rule or closed-operation environment.

\paragraph{Tag-preserving case.}
\label{cohRouting}
(Appendix~\ref{app:case-derivation})
The Source case rule is
\[
\inferrule{
  \Gamma_1\sjudge e:A\plus B \\
  \Gamma\sjudge u:C \\
  \Gamma\sjudge v:C \\
  A,B,C\text{ first-order}
}{
  \Gamma_1,\Gamma\sjudge
  \caseof{e}{x}{u}{y}{v}:(A\plus B)\tensor C
}.
\]
The two branch premises contain the same complete typed context
$\Gamma$.  This same-context discipline also appears in coherent
branching calculi of \citet{BarssePechouxPerdrix2026} and
\citet{HirataTsukada2026LICS}; those languages do not use Granthi's
sum types.

The binders $x$ and $y$ record the routed summand and are paired with
the branch result on the output side.  They are not in scope as linear resources in the branch bodies $u$ and $v$.  Fix an ordering of nonempty
$\Gamma$, and write $G_\Gamma$ for the corresponding left-associated
tensor of its types and $\langle\Gamma\rangle$ for the matching tensor
of its variables.  If $z:G_\Gamma$, then $u[z/\Gamma]$ unpacks $z$ and
substitutes its components for the variables of $\Gamma$.

Write $t^\circ$ for the recursive Raw expansion of a Source term.  For
nonempty $\Gamma$, its case clause uses the closed composite
\[
\begin{aligned}
  G_\Gamma \tensor (A \plus B)
  &\xrightarrow{\;\distL\;}
  (G_\Gamma \tensor A) \plus (G_\Gamma \tensor B)\\
  &\xrightarrow{\;\hat f\,\plus\,\hat g\;}
  (A \tensor C) \plus (B \tensor C)\\
  &\xrightarrow{\;\distRi\;}
  (A \plus B) \tensor C,
\end{aligned}
\qquad\text{where}\quad
\begin{aligned}
  \hat f&=\lam{p}{\letpair{z}{x}{p}{x\tensor u^\circ[z/\Gamma]}},\\
  \hat g&=\lam{q}{\letpair{z}{y}{q}{y\tensor v^\circ[z/\Gamma]}}.
\end{aligned}
\]
The case occurrence expands to this composite applied to
$\langle\Gamma\rangle\tensor e^\circ$.  When $\Gamma=\varnothing$,
take $\hat f=\lam{x}{x\tensor u^\circ}$ and
$\hat g=\lam{y}{y\tensor v^\circ}$ and omit the first distributor.  When $G_\Gamma$ contains a function
type, the first distributor and its intermediate sum are Raw-only.
This higher-order distributor instance occurs solely inside the fixed
case expansion.

\begin{proposition}[Source expansion]
\label{prop:source-expansion}
If $\Gamma\sjudge t:A$, then
$\Gamma\vdash_{\mathsf r}t^\circ:A$.
\end{proposition}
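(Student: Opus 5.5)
We proceed by induction on the Source derivation of $\Gamma\sjudge t:A$, with $(-)^\circ$ defined homomorphically on every former except $\mathbf{case}$ and $\mathsf{exp}$ (after the metalanguage abbreviations $f:=u$ have been expanded). For the rules that Source inherits from Table~\ref{tab:typing-rules-linear} (\textsc{Var}, \textsc{$\lmark$-I/E}, \textsc{$\tensor$-I/E}), each case follows immediately from the induction hypotheses. The same context split is used in the Raw rule, and the types are unchanged because every Source type is a Raw type. Declared structural atoms at first-order instances are Raw atoms, since Definition~\ref{def:raw-formation-values} allows them at all declared type indices. The Raw rule for the quantum constant \textsc{Exp} is taken as given by the quantum extension, so that case is likewise direct.

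The substantive case is the tag-preserving $\mathbf{case}$. Assume $\Gamma$ is nonempty, and fix its ordering so that $G_\Gamma$ and $\langle\Gamma\rangle$ are defined. The induction hypotheses give $\Gamma_1\vdash_{\mathsf r}e^\circ:A\plus B$, together with $\Gamma\vdash_{\mathsf r}u^\circ:C$ and $\Gamma\vdash_{\mathsf r}v^\circ:C$.

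First we need an auxiliary lemma: if $\Gamma\vdash_{\mathsf r}w:C$, then $z:G_\Gamma\vdash_{\mathsf r}w[z/\Gamma]:C$. We prove it by induction on the length of $\Gamma$. The unpacking is a nest of left-associated \textsc{$\tensor$-E} on $z$, and each step splits off the last component. From the lemma, \textsc{$\tensor$-E} and \textsc{$\tensor$-I} give $p:G_\Gamma\tensor A\vdash_{\mathsf r}\letpair{z}{x}{p}{x\tensor u^\circ[z/\Gamma]}:A\tensor C$. Note that $x$ is not free in $u^\circ$, which is exactly the side condition that $x,y$ are not resources in the branches. Applying \textsc{$\lmark$-I} then gives $\vdash_{\mathsf r}\hat f:G_\Gamma\tensor A\lmark A\tensor C$ in the empty context, and symmetrically $\vdash_{\mathsf r}\hat g:G_\Gamma\tensor B\lmark B\tensor C$. \textsc{$\plus$-Map} in the empty context now types $\hat f\plus\hat g$.

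The distributors $\distL$ at index $(G_\Gamma,A,B)$ and $\distRi$ at $(A,B,C)$ are Raw atoms at these indices. This holds even when $G_\Gamma$ is higher-order, because the Raw grammar places no first-order restriction, so this instance is legitimate in Raw even though it would be illegal in Source. We compose the three maps by the usual abbreviation $\lam{w}{\distRi((\hat f\plus\hat g)(\distL\,w))}$, which is a closed Raw term of type $G_\Gamma\tensor(A\plus B)\lmark(A\plus B)\tensor C$. We then apply it to $\langle\Gamma\rangle\tensor e^\circ$, which has type $G_\Gamma\tensor(A\plus B)$ in context $\Gamma,\Gamma_1$ by repeated \textsc{Var} and \textsc{$\tensor$-I}, using the fact that $\Gamma$ and $\Gamma_1$ are disjoint. \textsc{$\lmark$-E} then concludes. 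The case $\Gamma=\varnothing$ is the same argument without the first distributor and without substitution.

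The main obstacle is bookkeeping rather than conceptual. We must check two things. First, the context $\Gamma$, which occurs twice in the Source premises, is used exactly once in the Raw term: both branches are closed off inside $\hat f,\hat g$ and consume only the single $z$ supplied through $\distL$. Second, the substitution lemma must respect left-association and the linearity of the unpacking. If the paper's Raw rules require fresh names, we will also check that the fresh copies of abbreviations avoid capture with $z,p,q$.
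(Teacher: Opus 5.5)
Your proof is correct and takes the paper's route: induction on the Source derivation. The multiplicative, structural-atom, and \textsc{Exp} cases are immediate, and the case clause is typed as the closed distributor--map--inverse-distributor composite. This is the construction of Lemma~\ref{lem:routed-binary-case}, including the Raw-only higher-order $\distL$ instance and the variant that omits $\distL$ when the shared context is empty. Your explicit unpacking lemma for $w[z/\Gamma]$ and your check that $\Gamma$ is consumed exactly once through $\distL$ supply the bookkeeping that the paper's appendix leaves implicit.
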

Appendix~\ref{app:case-derivation} gives the typed derivation of the
case clause.

\begin{remark}[Why both Source restrictions are needed]
\label{rem:why-first-order}
The first-order condition excludes a branch-selected function that is
later applied to its own branch tag; the resulting Raw term can have
rank one.  Equality of the branch contexts excludes a different
failure: the Raw term
$\letpair{a}{b}{z}{[\,a\mid b\,]}$ separates the two components of
$z:A\tensor B$ between coherent alternatives.  Both terms belong to
the Raw calculus but not to the Source language.
Appendix~\ref{app:unitarityNEW},
Remark~\ref{rem:why-first-order-witness}, gives the first calculation.
\end{remark}

Higher-order computations can still be routed by applying them before
case formation.  Let $A$, $B$, and $P$ be first-order and $D$ a Source
type.  Given $\Delta\sjudge e:A\plus B$ and closed Source operations
$u,v:D\lmark P$, function $\eta$-expansion gives
\[
  \Delta\sjudge
  \lam{z}{\caseof{e}{x}{u\,z}{y}{v\,z}}
  :D\lmark((A\plus B)\tensor P)
\]
Iteration covers arrow chains: the tag follows every argument, while
the returned coherent payload remains first-order.

\subsubsection{Programming with Linear Control}
\label{subsubsec:linear-control}

The continuation and yield constructions below already exhibit
higher-order linear control using only the multiplicative rules.  The
final short-circuit routing example additionally combines
tag-preserving case with the Source \textsc{Exp} constructor.

\begin{table}[thb]
\caption{Continuation-based type constructions.}
\label{tab:continuation-types}
\centering
\small
\begin{tabular}{@{}l@{\;\;}l@{\;\;}l@{}}
\toprule
\textbf{Name} & \textbf{Type} & \textbf{Description} \\
\midrule
$\mathsf{Cont}_R(A)$ & $(A \lmark R) \lmark R$ & CPS computation \\[0.5ex]
$\mathsf{return}$ & $A \lmark \mathsf{Cont}_R(A)$ & wrap value \\[0.5ex]
$\mathsf{bind}$ & $\mathsf{Cont}_R(A) \tensor (A \lmark \mathsf{Cont}_R(B)) \lmark \mathsf{Cont}_R(B)$
  & sequence \\[0.8ex]
$\mathsf{Yield}(A,B)$ & $(A \lmark B) \tensor (B \lmark A)$ & coroutine channel \\[0.5ex]
$\mathsf{sendA}$ & $\mathsf{Yield}(A,B) \tensor A \lmark B \tensor (B \lmark A)$ & send from $A$, return peer \\[0.5ex]
$\mathsf{sendB}$ & $\mathsf{Yield}(A,B) \tensor B \lmark A \tensor (A \lmark B)$ & send from $B$, return peer \\[0.8ex]
\midrule
$\mathsf{and}_{\mathrm{sc}}$ & $(\QBool \tensor \QBool) \tensor \Wit \lmark (\QBool \tensor \QBool) \tensor \Wit$ & reversible short-circuit routing \\[0.5ex]
\bottomrule
\end{tabular}

\vspace{-1em}
\end{table}

Table~\ref{tab:continuation-types} summarizes key type constructions:
linear continuations and CPS ($\mathsf{Cont}_R$, $\mathsf{return}$,
$\mathsf{bind}$), and bidirectional control transfer via yield
channels ($\mathsf{Yield}$, $\mathsf{sendA}$, $\mathsf{sendB}$).
All entries are well typed in Source; the continuation and yield
entries use only the linear core.  Linearity enforces that each
continuation resource is used exactly once: it is either invoked or
returned explicitly, so control transfer is by function application,
not by special primitives.
Thus Granthi couples coherent quantum data to an already-expressive
higher-order control substrate.

With $\Wit:=\Aux\plus\QBool$ for an arbitrary auxiliary first-order
type $\Aux$, define the certified involution
\[
  J_{\Wit}
  :=\oplusmap{1}{\mathrm{id}_{\Aux}}
                   {1}{\sigma^\plus_{\base,\base}},
  \qquad
  \ijudge J_{\Wit}:\Wit\lmark\Wit .
\]
The brackets here denote the function-typed, branchwise $\plus$-map, not the sum-typed coherent-sum former; the result type distinguishes the two constructors in the typed abstract syntax.  This follows from \textsc{Inv-Id}, \textsc{Inv-$\sigma^\plus$}, and
\textsc{Inv-$\plus$}; $J_{\Wit}$ is metanotation for that certificate,
not a Source operation name (\S\ref{subsubsec:involutions}).  Using the Source \textsc{Exp} constructor
of \S\ref{subsubsec:exponentiation}, let
\[
  \mathsf{toggle}_{\Wit}
  :=\lam{w}{
    \expi{(-\pi/2)}{\mathrm{id}_{\Wit}}
      \bigl(\expi{\pi/2}{J_{\Wit}}\,w\bigr)}
  :\Wit\lmark\Wit .
\]
This is a Source abbreviation using only \textsc{Exp}, application,
and abstraction.  Since $J_{\Wit}^{2}=I$, its first-order action is
\[
 e^{-i\pi I/2}e^{i\pi J_{\Wit}/2}
 =(-iI)(iJ_{\Wit})
 =J_{\Wit},
\]
so it has the intended block action and is involutive.
Define the Source routing operation directly by case:
\[
\begin{aligned}
  \mathsf{route}_{\Wit}
  &:=\lam{p}{
      \letpair{b}{w}{p}{
        \caseof{b}{x}{\mathsf{toggle}_{\Wit}\,w}{y}{w}}} \\
  &:\QBool\tensor\Wit\lmark\QBool\tensor\Wit .
\end{aligned}
\]
Both branches have the exact context $w:\Wit$.

The short-circuit witness-routing kernel is
\[
  \mathsf{and}_{\mathrm{sc}}
  :=\lam{p}{
    \letpair{c}{w}{p}{
      \letpair{b_1}{b_2}{c}{
        \letpair{b_1'}{w'}{\mathsf{route}_{\Wit}(b_1\tensor w)}{
          (b_1'\tensor b_2)\tensor w'
        }
      }
    }
  }.
\]
The first boolean controls whether the $\QBool$ summand of the witness
is toggled; the second boolean and the witness are both returned.
Thus the output records the route without discarding a linear
resource.  The choice of $\Aux$ is immaterial; only its position as
the other summand of $\Wit$ matters.  The quantum extension below
adds a branch phase and then unroutes the witness.

\subsection{Quantum Extension}
\label{subsec:quantum-extension}
\label{sec:quantum-core}

We now extend the Raw language with one new term constructor
(\textsc{Exp}), a phase-enriched Raw \textsc{$\plus$-Map}, and two
new judgment forms (certified involutions and unitary primitives).
The Source language adds \textsc{Exp}; coherent phase control remains
expressed through case.  The sets
$\mathbb R_{\mathrm{static}}$ and $\mathbb C_{\mathrm{static}}$
contain scalars fixed at elaboration time.  The judgment
$\ijudge J:B\lmark B$, defined in
\S\ref{subsubsec:involutions}, certifies that $J$ is a closed
self-inverse map.
Accordingly, the Raw formation-value grammar of
Definition~\ref{def:raw-formation-values} gains the productions
\[
  W ::= \cdots \mid \expi{\theta}{J}
  \mid \oplusmap{\alpha}{W}{\beta}{W},
  \qquad
  \alpha,\beta\in\mathbb C_{\mathrm{static}},
  \quad |\alpha|=|\beta|=1.
\]
\paragraph{Type restriction.}
Throughout this section, $B$ (and subscripted variants) ranges over
the first-order types $P$ of \S\ref{par:first-order-types}.  We use
$B$ to emphasize that primitive quantum operations act on
first-order data; the surrounding Source language remains
higher-order.

\subsubsection{From Involutions to Unitaries}
\label{subsubsec:exponentiation}

Involutions---self-inverse operators satisfying $J^2 = \mathrm{id}$---serve
as generators for continuous families of unitaries via matrix
exponentiation. The \textsc{Exp} term rule constructs a unitary from a
certified involution:

\begin{center}
\textsc{Exp}\quad
$\inferrule{
    \ijudge J : B \lmark B \\
    \theta \in \mathbb{R}_{\mathrm{static}}
  }{
    \sjudge \expi{\theta}{J} : B \lmark B
  }$
\end{center}

The identical Raw term is typed by the same rule with
$\vdash_{\mathsf r}$ in the conclusion.

\noindent
The first-order action of $\expi{\theta}{J}$ is
\[
 e^{i\theta J}=\cos\theta\,\mathrm{id}+i\sin\theta\,J,
\]
using $J^2=\mathrm{id}$.  Its boundary denotation is derived from the
identity graph by applying this unitary to the output leg
(\S\ref{sec:boundary-semantics}).  The generators produced by the
certified-involution judgment are self-inverse and Hermitian by
induction on that judgment, so $\expi{\theta}{J}$ is unitary. The restriction to statically known
angles ensures that exponentiation does not introduce data-dependent control or
measurement into the term language.

The Raw quantum language also admits phased
\textsc{$\plus$-Map}:\phantomsection\label{subsubsec:quantum-plus-map}
\[
\inferrule{
  \Gamma_1\vdash_{\mathsf r}f:A_1\lmark B_3 \\
  \Gamma_2\vdash_{\mathsf r}g:A_2\lmark B_4 \\
  \Gamma=\Gamma_1\uplus\Gamma_2 \\
  \alpha,\beta\in\mathbb C_{\mathrm{static}},
  \quad |\alpha|=|\beta|=1
}{
  \Gamma\vdash_{\mathsf r}
  \oplusmap{\alpha}{f}{\beta}{g}
  :A_1\plus A_2\lmark B_3\plus B_4
}.
\]
The phases act on their respective branches.  This Raw rule permits
arbitrary summand types.  Source phase control is expressed through
case; its expansion produces closed branch maps with first-order
targets, as required by the internal judgment.

\paragraph{Example: phase-marked short-circuit routing.}
Let
\[
  \mathsf{phase}_{\Wit}
  :=\expi{\pi}{\mathrm{id}_{\Wit}}
  =-\mathrm{id}_{\Wit}.
\]
The phase-marked route is another Source case:
\[
\begin{aligned}
  \mathsf{route}_{\Wit}^{\mathrm q}
  &:=\lam{p}{
      \letpair{b}{w}{p}{
        \caseof{b}{x}{
          \mathsf{phase}_{\Wit}(\mathsf{toggle}_{\Wit}\,w)}
        {y}{w}}} \\
  &:\QBool\tensor\Wit\lmark\QBool\tensor\Wit .
\end{aligned}
\]
The left branch toggles the witness and contributes a $-1$ phase;
the right branch passes the witness unchanged.  Since
$\mathsf{toggle}_{\Wit}$ is involutive, follow this route by the
unphased $\mathsf{route}_{\Wit}$.  Their composite restores the
witness and retains the relative branch phase.  Substituting it for
$\mathsf{route}_{\Wit}$ in $\mathsf{and}_{\mathrm{sc}}$ gives the
quantum short-circuit operator $\mathsf{and}_{\mathrm{sc}}^{\mathrm q}$.
For every witness state $|w\rangle$, the composite sends
$|{+}\rangle\tensor|w\rangle$ to
$-|{-}\rangle\tensor|w\rangle$.  A subsequent Hadamard on the first
boolean therefore yields $|1\rangle$ up to global phase; the unphased
route--unroute yields $|0\rangle$.

\subsubsection{Certified Involutions}
\label{subsubsec:involutions}

The judgment $\ijudge J : B \lmark B$ certifies that $J$ is a closed
self-inverse map: $J \circ J = \mathrm{id}$. Restricting involutions to a
syntactic judgment allows precise control over which generators may be
exponentiated, keeping the bridge to continuous unitaries explicit.
Table~\ref{tab:typing-rules-involutions} presents the rules.

\begin{table}[!ht]
\caption{Typing rules for certified involutions.}
\label{tab:typing-rules-involutions}
\centering
\small
\begin{tabular}{@{}l@{}}
\toprule
\emph{Certified Involutions} ($\ijudge J : B \lmark B$) \\
\midrule

\textsc{Inv-Id}\;
$\inferrule{ }{\ijudge \mathrm{id}_B : B \lmark B}$
\qquad
\textsc{Inv-Scalar}\;
$\inferrule{
    \ijudge J : B \lmark B \\
    \alpha \in \{1, -1\}
  }{
    \ijudge {[\alpha \cdot J]} : B \lmark B
  }$
\\[2ex]

\textsc{Inv-$\sigma^\tensor$}\;
$\inferrule{ }{\ijudge \sigma^\tensor_{B,B} : B \tensor B \lmark B \tensor B}$
\qquad
\textsc{Inv-$\tensor$}\;
$\inferrule{
    \ijudge J : B_1 \lmark B_1 \quad
    \ijudge K : B_2 \lmark B_2
  }{
    \ijudge J \tensor K : B_1 \tensor B_2 \lmark B_1 \tensor B_2
  }$
\\[2ex]

\textsc{Inv-$\sigma^\plus$}\;
$\inferrule{ }{\ijudge \sigma^\plus_{B,B} : B \plus B \lmark B \plus B}$
\qquad
\textsc{Inv-$\plus$}\;
$\inferrule{
    \ijudge J : B_1 \lmark B_1 \quad
    \ijudge K : B_2 \lmark B_2 \\
    \alpha, \beta \in \{1, -1\}
  }{
    \ijudge \oplusmap{\alpha}{J}{\beta}{K} : B_1 \plus B_2 \lmark B_1 \plus B_2
  }$
\\[2ex]

\bottomrule
\end{tabular}

\end{table}

In \textsc{Inv-$\tensor$} and \textsc{Inv-$\plus$}, the symbols $\tensor$
and $\oplusmap{\alpha}{-}{\beta}{-}$ are map formers on involutions,
not type constructors: if $J : B_1 \lmark B_1$ and $K : B_2 \lmark B_2$, then
$J \tensor K : B_1 \tensor B_2 \lmark B_1 \tensor B_2$ and
$\oplusmap{\alpha}{J}{\beta}{K} : B_1 \plus B_2 \lmark B_1 \plus B_2$.

The general tensor symmetry
$\sigma^\tensor : B_1 \tensor B_2 \lmark B_2 \tensor B_1$
and the general sum symmetry
$\sigma^\plus : B_1 \plus B_2 \lmark B_2 \plus B_1$
are not endomorphisms when $B_1 \neq B_2$, so the certified
involution judgment does not apply. Their endomorphic specializations
$\sigma^\tensor_{B,B} : B \tensor B \lmark B \tensor B$ and
$\sigma^\plus_{B,B} : B \plus B \lmark B \plus B$ are involutive by
definition ($\sigma_{B,B} \circ \sigma_{B,B} = \mathrm{id}$), making
them admissible as generators for exponentiation. Swaps inside larger
types are obtained by the closure rules \textsc{Inv-$\tensor$} and
\textsc{Inv-$\plus$}, up to structural associativity and symmetry. The scalar restriction to $\{1, -1\}$ in
\textsc{Inv-Scalar} and \textsc{Inv-$\plus$} is necessary:
$(\alpha J)^2 = \alpha^2 J^2 = \alpha^2 \cdot \mathrm{id}$ equals
$\mathrm{id}$ only when $\alpha^2 = 1$.
Every certified involution is Hermitian: the symmetries
$\sigma^\tensor_{B,B}$ and $\sigma^\plus_{B,B}$ are real permutation
matrices (hence Hermitian), and the rules preserve Hermiticity since
$\{1,-1\}$ scalars are real and $J \tensor K$,
$\oplusmap{\alpha}{J}{\beta}{K}$ are Hermitian when $J$, $K$ are.
Combined with $J^2 = \mathrm{id}$, this ensures $\expi{\theta}{J}$ is
unitary.

\subsubsection{Unitary Primitives}
\label{subsubsec:unitary-primitives}

The auxiliary judgment
$\unitjudge U:B\lmark B$ records static certificates for unitary
operators at first-order type $B$; it does not extend the
source-term grammar.  Table~\ref{tab:unitary-generation} bridges
from the certified involutions of
Table~\ref{tab:typing-rules-involutions}, closes certificates under
composition and adjoint, and forms the one-parameter families
generated by involutions.  This presentation isolates the available
quantum actions from any particular backend gate set.

\begin{table*}[!ht]
\caption{Static unitary certificates at first-order type $B$.}
\label{tab:unitary-generation}
\centering
\small
\begin{tabular}{@{}l@{\;}c@{\quad}l@{\;}c@{}}
\toprule
\multicolumn{4}{@{}l}{\emph{Unitary Primitives} ($\unitjudge U : B \lmark B$)} \\
\midrule

\textsc{Unit-Inv}
& $\inferrule{
    \ijudge J : B \lmark B
  }{
    \unitjudge J : B \lmark B
  }$
& \textsc{Unit-Comp}
& $\inferrule{
    \unitjudge U : B \lmark B \\
    \unitjudge V : B \lmark B
  }{
    \unitjudge (V \circ U) : B \lmark B
  }$
\\[2ex]

\textsc{Unit-Dag}
& $\inferrule{
    \unitjudge U : B \lmark B
  }{
    \unitjudge U^\dagger : B \lmark B
  }$
& \textsc{Unit-Exp}
& $\inferrule{
    \ijudge J : B \lmark B \\
    \theta \in \mathbb{R}_{\mathrm{static}}
  }{
    \unitjudge \expi{\theta}{J} : B \lmark B
  }$
\\[2ex]

\bottomrule
\end{tabular}

\vspace{-1em}
\end{table*}
Interpreting the four rules in $\mathbf{FdHilb}$ as inclusion of a
certified involution, operator composition, Hilbert-space adjoint,
and matrix exponentiation, respectively, each rule preserves
unitarity.  Thus the denotations of the certificates form a subgroup
of $U(\sem B)$.  Here $(-)^\dagger$ belongs to the certificate
language; the later notation $(-)^*$ denotes type duality and is
unrelated.
On an $n$-fold coherent sum of $B$, exponentiated sign and swap
involutions, transported by structural conjugation, generate every
summand-index action $U\tensor I_{\sem B}$; for $B=\base$, this is
every unitary on $\Qn{n}$
(Appendix~\ref{app:summand-index-completeness}).

\paragraph{Backend correctness (BC)}
The semantic results do not assume a gate implementation.  Circuit
realization uses the external hypothesis \textup{(BC)}.  It requires
the declared artifact for every
$a=\expi{\theta}{J}:P\lmark P$ to realize
$V_a=e^{i\theta J}$ through its declared source and target
coordinates.  The declaration supplies computational-basis isometries onto the
valid source and target codewords and type-canonical logical-frame
basis bijections.  The primitive artifact acts on exactly the
canonical physical carrier of $P$ and contains no backend-owned
workspace or residual register.  Prescribed tag and zero-padding
coordinates remain part of this carrier; the source and target
placements may differ.  The target constructors used by the reference emitter
realize their stated operator operations: exact phases and controls,
lifting, retargeting, and sequential and parallel composition.  The
closed boundary graph and the applied action are
derived from the same first-order map.

%% file: metatheory.tex
\section{Metatheory}
\label{sec:metatheory}

Normalization is Granthi's canonicalization pass.  It converts the Raw
expansions of Source programs into the restricted intermediate form
consumed by both the boundary
semantics and the circuit emitter.  This section fixes that deterministic
pass and proves that any irreducible reduct is unique modulo alpha-conversion,
independent-let exchange, and equality of static phases.

\subsection{Normal Forms}
\label{subsubsec:normal-forms}

\begin{definition}[Atomic operations]
\label{def:atoms}
The atomic operations are generated by
\[
  \mathsf{Atom} ::= \expi{\theta}{J} \mid s,
\]
where $s$ ranges over the primitive structural isomorphisms of
\S\ref{par:structural-isos}, viewed as closed constants at their
canonical types.
\end{definition}

We call $s$-atoms \emph{structural} and $\expi{\theta}{J}$-atoms
\emph{quantum}.  Normalization treats both classes opaquely.

The Raw formation values $W$ of
Definition~\ref{def:raw-formation-values} are a typing-side
syntactic category and may contain reducible lambda bodies.  They are
distinct from the irreducible values $V$ classified below.

\begin{definition}[Values, neutrals, normal forms]
\label{def:nf-grammar}
The classes $V$, $E$, $R$, and $N$ are the least mutually inductive
classes generated by Table~\ref{tab:nf-grammar}.
\end{definition}

The production $x\,V$ belongs to both the value and neutral
grammars.  Variable-headed neutrals support the induction on open
terms.  Such a neutral cannot be the outer form of a closed normal
term, although it may occur under a binder.

\begin{table}[!ht]
\caption{Grammar of values ($V$), neutrals ($E$), results ($R$), and
normal forms ($N$).}
\label{tab:nf-grammar}
\centering
\small
\renewcommand{\arraystretch}{1.3}
\begin{tabular}{@{}r@{\;\;}c@{\;\;}l@{\qquad}l@{}}
\toprule
\textbf{Class} & & \textbf{Productions} & \textbf{Description} \\
\midrule
$V$ & $::=$
    & $x \mid V \tensor V \mid \lam{x}{N}$
    & variables, pairs, abstractions \\
    & $\mid$
    & ${[\,\alpha \cdot V_1 \mid \beta \cdot V_2\,]} \mid \mathsf{Atom} \mid x\,V$
    & phased sums (sum-typed), atoms, stuck applications \\
    & $\mid$
    & $\oplusmap{\alpha}{V_1}{\beta}{V_2}$
    & $\plus$-map values (function-typed) \\[1ex]
$E$ & $::=$
    & $x \mid \mathsf{Atom} \mid E\;R$
    & variables, atoms, applications \\
    & $\mid$
    & $\oplusmap{\alpha}{R_1}{\beta}{R_2}\;E$
    & $\plus$-map on neutral \\[1ex]
$R$ & $::=$
    & $V \mid E \mid R_1 \tensor R_2$
    & values, neutrals, pairs \\
    & $\mid$
    & ${[\,\alpha \cdot R_1 \mid \beta \cdot R_2\,]}
       \mid \oplusmap{\alpha}{R_1}{\beta}{R_2}$
    & blocked sums and maps \\[1ex]
$N$ & $::=$
    & $R \mid \letpair{x}{y}{R}{N}$
    & results, sequenced lets \\
\bottomrule
\end{tabular}

\end{table}

Sum formers carry static unit-modulus phases $\alpha, \beta$
recorded by normalization (Appendix~\ref{app:normalization});
$[\,V_1 \mid V_2\,]$ abbreviates $[\,1 \cdot V_1 \mid 1 \cdot V_2\,]$;
Raw $\plus$-introduction forms only unphased formation values.

A \emph{blocked
sum} $[\,\alpha \cdot R_1 \mid \beta \cdot R_2\,]$, in which at least
one branch is not a value, is a result but not a value.
Symmetrically, a \emph{blocked map}
$\oplusmap{\alpha}{R_1}{\beta}{R_2}$ has at least one non-value
function operand.  It has function type and is justified by the
$\plus$-Map target condition: its targets are first-order, while
its sources remain unrestricted.  The result productions deliberately overlap the value productions
in the all-value case; the word \emph{blocked} names the proper
subclass having at least one non-value operand.

The full floater family (Appendix~\ref{app:normalization})
extracts let-binders by the fixed left-to-right prefix-float
strategy, producing the selected normal form; alternative orders
differ only by exchange of independent prefixes.
Sum-former summand types are first-order.

The four conditions below state what the pass guarantees
(see Appendix~\ref{app:normalization}):
\begin{description}
\item[(NF1)] \textbf{Eliminate $\beta$-redexes.} No subterm matches the
  left-hand side of $(\beta_{\lmark})$ or $(\beta_{\tensor})$.
\item[(NF2)] \textbf{Fuse nested branch maps.} No subterm matches the
  left-hand side of $(\plus\text{-comp})$.
\item[(NF3)] \textbf{Push maps through explicit sums.} No subterm is a
  $\plus$-map applied to a (possibly phased) sum former
  $[\,\alpha' \cdot t_1 \mid \beta' \cdot t_2\,]$ with arbitrary
  branch terms; in particular no
  subterm matches $\oplusmap{\alpha}{f}{\beta}{g}$ $[V_1 \mid V_2]$.
\item[(NF4*)] \textbf{Hoist independent lets deterministically.} In every
  $\lambda z.L_1\cdots L_k[R]$, where each $L_i[-]$ is a
  tensor-let prefix, no prefix let $L_j$ has a scrutinee independent
  of $z$ while also being independent of every preceding $L_i$.
  The exact prefix and independence conditions are
  Definition~\ref{def:nf4star}.
\end{description}
Source case expands to the distributor--closed-map--inverse-distributor
composite of \S\ref{cohRouting}; hence no separate case constructor
appears in the normal-form grammar.

The computational equality used by normalization is generated by the
oriented $\beta$-rules and the administrative and branch-map
conversions of Appendix~\ref{app:normalization}.  Eta-expansion is an
admissible type-directed identity witness; it is not a step of the
normalizer $\mathsf{NF}_{\mathrm{LO}}$ of Theorem~\ref{thm:normalization}.  Appendix~\ref{app:unitarityNEW} proves
its identity action and the fresh-application compatibility required
for neutral function operands
(Lemmas~\ref{lem:semantic-eta-identity}
and~\ref{lem:classified-cut-eta}).

\subsection{Normalization and Determinacy}
\label{subsubsec:main-results}

\paragraph{The three judgments.}
The Source judgment is the programmer-facing calculus and enforces the
discipline of \S\ref{subsubsec:source-language}.  The recursive
translation $(-)^\circ$ expands Source into Raw syntax; Raw is the
syntax on which normalization is defined.  The proof-only internal
administrative judgment
$\Gamma\intjudge t:A$, defined in Appendix~\ref{app:focused-rules},
types the phase-bearing intermediates produced during normalization.
It admits phased coherent sums with arbitrary typed branches while
retaining the first-order summand restriction.  If
$\Gamma\sjudge t:A$, then its Raw expansion satisfies
$\Gamma\intjudge t^\circ:A$
(Lemma~\ref{lem:source-internal-inclusion}).  Thus a Source program
normalizes through internally typed intermediates to the normal form
consumed by the semantics and the compiler.

\begin{theorem}[Deterministic normalization]
\label{thm:normalization}
If $\Gamma\intjudge t:A$, then the fixed leftmost-outermost
normalizer $\mathsf{NF}_{\mathrm{LO}}(t)$ is defined, is internally
well typed and irreducible, and
$t\to^*\mathsf{NF}_{\mathrm{LO}}(t)$.  Hence every well-typed
Source term has a normal form after Raw expansion.
\end{theorem}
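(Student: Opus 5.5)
The proof has three parts: subject reduction for the internal judgment, termination of the reduction relation on internally typed terms, and progress, meaning that every internally typed term that is not irreducible has a leftmost-outermost redex. Given these, $\mathsf{NF}_{\mathrm{LO}}(t)$ is the endpoint of the unique leftmost-outermost reduction sequence from $t$, which is finite by termination. It is well typed by subject reduction, and by progress the endpoint satisfies (NF1)--(NF4*), so it lies in the grammar of Table~\ref{tab:nf-grammar}. The final sentence of the theorem then follows from Lemma~\ref{lem:source-internal-inclusion}, which gives $\Gamma\intjudge t^\circ:A$ whenever $\Gamma\sjudge t:A$.

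\textbf{Subject reduction.} I would check each oriented rule separately: $(\beta_{\lmark})$, $(\beta_{\tensor})$, $(\plus\text{-comp})$, pushing a map through a phased sum former, and the let-floaters. For the two $\beta$-rules the standard linear substitution lemma applies, using the context splitting of the internal judgment. For $(\plus\text{-comp})$ and map-through-sum, the phases multiply, $\alpha\alpha'$ and $\beta\beta'$, and they stay unit modulus. I also need the first-order summand restriction to be preserved, which holds because the redex already records first-order targets. For the floaters, I would prove a small lemma that lifting $\letpair{x}{y}{R}{-}$ past a context only reorganizes the context split. The independence side conditions of (NF4*) (Definition~\ref{def:nf4star}) are what make this lift well scoped.

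\textbf{Termination.} This is the main obstacle, because several orders interact. $\beta$-steps shrink type-indexed redex degree. Branch-map fusion and map-through-sum reduce the number of map constructors, but map-through-sum duplicates nothing only because linearity places the context in one branch. Let-floating is size-preserving but changes shape. My first attempt would be a lexicographic measure: (i) a multiset of the types of $\beta$-redexes, in the style of Gandy or Girard reducibility for the linear $\lambda$-calculus, which is straightforward since linear $\beta$ never duplicates and so term size strictly decreases; (ii) the number of $\plus$-map occurrences; (iii) a let-depth or inversion count measuring how many let-prefixes sit below non-binding constructors or out of (NF4*) order. The difficulty is that floating can create new $\beta$-redexes, for instance by exposing a pair to a let, and fusion can expose new $\beta$-redexes in branches. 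A purely lexicographic order in that sequence may therefore fail. If it does, I would switch to plain term size, which linear $\beta$ and fusion strictly decrease, paired with the floater measure as a secondary component. I would verify that each floater step is size-neutral and strictly reduces the inversion count, and that the size-decreasing rules never increase size. Atoms $\expi{\theta}{J}$ and the structural isomorphisms are opaque constants, so they contribute no rules.

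\textbf{Progress and irreducibility.} I would argue by induction on the internal derivation, classifying each well-typed term with no redex. The form is read off from the outermost constructor. A term headed by a variable or atom is a neutral $E$. An application whose head is an abstraction would be a $\beta$-redex. An application whose head is a $\plus$-map applied to a sum former would be an (NF3) redex. Otherwise the head is a $\plus$-map applied to a neutral, giving the $E$ production, or a stuck $x\,V$. Let-prefixes are arranged in the (NF4*) order, since otherwise a floater applies. I would use a canonical-forms lemma: an irreducible term of $\lmark$ type is a $\lambda$, an atom, a map, or a neutral. Together with the first-order summand restriction, this rules out blocked sums and maps outside their permitted positions.
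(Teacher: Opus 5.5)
There is a genuine gap, in two places.

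\textbf{The hereditary rule (G).} You treat map-through-sum as a local rewrite that only multiplies phases. In this paper it is the hereditary rule $\oplusmap{\alpha}{f}{\beta}{g}\,[\,\alpha'\cdot t_1\mid\beta'\cdot t_2\,]\to\mathsf{SumNF}^{C,D}_{\alpha\alpha',\beta\beta'}(f\,t_1,g\,t_2)$. Its contractum is computed by running $\mathsf{NF}_{\mathrm{LO}}$ to completion on both branch applications and then extracting their let-prefixes. So $\to$ and $\mathsf{NF}_{\mathrm{LO}}$ are defined simultaneously and are a priori partial.

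Your plan is to prove subject reduction and termination of $\to$ first, and then define $\mathsf{NF}_{\mathrm{LO}}(t)$ as the endpoint of the LO sequence. That is circular. A (G)-step is not even defined until $\mathsf{SumNF}$ is known to be total at that instance, let alone typed or shown to decrease a measure.

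The missing idea is the stratified simultaneous induction of Theorem~\ref{thm:lo-normalizer-total}. By strong induction on a bound $k$ for $\Phi$, prove two things together:
\begin{itemize}
\item \textup{(N$_k$)}: definedness, typing, irreducibility, $t\to^*\mathsf{NF}_{\mathrm{LO}}(t)$, and $\Phi(\mathsf{NF}_{\mathrm{LO}}(t))\le\Phi(t)$;
\item \textup{(S$_k$)}: the companion assertion for $\mathsf{SumNF}$, including $\Phi(\mathsf{SumNF}(e_L,e_R))\le\Phi(e_L)+\Phi(e_R)$.
\end{itemize}
The identity $\Phi(f\,R_1)+\Phi(g\,R_2)=\Phi(r)-1$ puts the hereditary call in stratum $k-1$. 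The companion $\Phi$-bound is exactly what makes a (G)-step lower $\Phi$, and the typing of the (G)-output comes from \textup{(S$_{k-1}$)}, not from multiplying phases.

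If you instead intend a non-hereditary variant producing $[\,\alpha\alpha'\cdot f\,t_1\mid\beta\beta'\cdot g\,t_2\,]$, you are normalizing a different relation. You would still owe an argument that it yields the paper's fixed $\mathsf{NF}_{\mathrm{LO}}$, whose branch order and prefix-extraction order are fixed by $\mathsf{SumNF}$.

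\textbf{The termination measure.} Neither of your candidate measures works. You correctly see that (redex-type multiset, map count, let-depth) fails: (D) can expose a pair to a let, and (E) can create $\beta$-redexes. Your fallback, (term size, floater measure), fails at (E). Since $f\circ h$ is $\lam{z}{f\,(h\,z)}$, fusion replaces two map constructors and two applications by one map, one application, two abstractions, four applications and two variable occurrences, so size goes up.

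The fix is to put the $\plus$-map constructor count first, giving $\mu=(\Phi,\mathsf{Bind},\mathsf{LetDepth})$:
\begin{itemize}
\item (A) and (B): linear substitution preserves $\Phi$ and strictly lowers $\mathsf{Bind}$;
\item the floaters preserve $\Phi$ and $\mathsf{Bind}$ and lower $\mathsf{LetDepth}$;
\item (E) lowers $\Phi$ by one, which dominates the binders it adds;
\item (G) lowers $\Phi$ via the companion bound above.
\end{itemize}

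Your progress step is sound but proves more than the theorem needs. Irreducibility is immediate, because the LO strategy stops only when no redex exists. Membership in the normal-form grammar is the separate Lemma~\ref{lem:nf-iff-irreducible}.
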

The appendix fixes left-before-right branch normalization,
left-to-right let-prefix extraction, and a deterministic
capture-avoiding naming convention.  Its strong-induction theorem
also bounds the $\plus$-map-constructor count $\Phi$
(Definition~\ref{def:phi}) and preserves the first-order side
conditions, unit-modulus phases, and disjoint branch contexts needed
by hereditary contraction
(Theorem~\ref{thm:lo-normalizer-total}).  The grammar theorem is
separate: irreducible terms are exactly the terms of
Table~\ref{tab:nf-grammar} satisfying (NF1)--(NF4*)
(Lemma~\ref{lem:nf-iff-irreducible}).

Write $\approx$ for the least congruence generated by
alpha-equivalence, exchange of adjacent independent tensor lets,
and extensional equality of static unit-modulus phase annotations.

\begin{theorem}[Determinacy]
\label{thm:determinacy}
If $\Gamma\intjudge t:A$, $t\to^*n$, and $t\to^*n'$, with $n$ and
$n'$ irreducible, then $n\approx n'$.
\end{theorem}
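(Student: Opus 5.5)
The plan is a confluence argument modulo $\approx$, combined with the termination measure underlying Theorem~\ref{thm:normalization}. I would not try to prove strong normalization of the full reduction relation from scratch. Instead I would show that every reduct of $t$ has the same $\approx$-class of normal form as the fixed normalizer. Concretely, the claim I aim for is: if $\Gamma\intjudge t:A$ and $t\to t_1$, then $\mathsf{NF}_{\mathrm{LO}}(t_1)\approx\mathsf{NF}_{\mathrm{LO}}(t)$. Subject reduction for $\intjudge$ gives that $t_1$ is internally typed, so Theorem~\ref{thm:normalization} applies to it.

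Iterating this claim along $t\to^*n$ gives $\mathsf{NF}_{\mathrm{LO}}(n)\approx\mathsf{NF}_{\mathrm{LO}}(t)$. Since $n$ is irreducible, $\mathsf{NF}_{\mathrm{LO}}(n)=n$, because the normalizer performs no step on an irreducible term. The same holds for $n'$, so $n\approx\mathsf{NF}_{\mathrm{LO}}(t)\approx n'$.

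To prove the one-step claim, I would establish a local diamond property modulo $\approx$ and then argue by well-founded induction. The induction is on the lexicographic measure used in Theorem~\ref{thm:lo-normalizer-total}: type complexity of cut formulas for hereditary contraction, then the $\plus$-map count $\Phi$, then term size. The critical pairs fall into three families.

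\textbf{(i) $\beta$ against $\beta$, and $\beta$ against branch-map rules.} Here linearity is essential. Each variable occurs exactly once, so a $\beta_{\lmark}$ or $\beta_{\tensor}$ step neither duplicates nor erases a redex. Residuals of the other step therefore number exactly one, and the diamond closes in one step on each side, up to alpha-conversion.

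\textbf{(ii) $\plus$-comp against pushing a map through a sum former.} Here phases multiply in different orders, e.g.\ $\alpha\alpha'$ versus $\alpha'\alpha$, or a phase is pushed in versus recorded on the map. These joins hold only up to extensional equality of static phases, which is exactly why $\approx$ includes it. The disjoint-branch-context and first-order summand invariants guarantee that both sides are well formed.

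\textbf{(iii) Let-floating critical pairs.} Two different prefix lets may be extracted first. I would join these by exchanging adjacent independent tensor lets, using the independence conditions of Definition~\ref{def:nf4star}. The key lemma is that floating commutes with $\beta$-contraction inside the let body, up to such exchanges.

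With local confluence modulo $\approx$ in hand, I would also need $\approx$ to be compatible with reduction: if $u\approx u'$ and $u\to u_1$, then $u'\to u_1'$ with $u_1\approx u_1'$. This is routine for alpha-conversion and phase equality. For let exchange, a redex inside an exchanged prefix remains a redex after exchange. Given this compatibility, a Huet-style argument (local confluence plus compatibility, under the well-founded measure) yields the one-step claim.

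One subtlety needs attention. The step $t\to t_1$ may increase term size, since contraction substitutes a lambda body. This is why the induction is on the hereditary measure rather than on size, matching how totality of $\mathsf{NF}_{\mathrm{LO}}$ is proved.

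I expect family (iii) to be the main obstacle. The requirement (NF4*) is not merely ``no redex'': it selects a deterministic prefix order. Two irreducible terms can therefore differ by nested let permutations that arise from hoisting across binders at different depths. My first attempt would be to define a canonical let-order function on $\approx$-classes and prove it idempotent and invariant under each reduction rule. I would then show that every irreducible term is $\approx$ to its canonical ordering, which reduces determinacy to the grammar characterization of Lemma~\ref{lem:nf-iff-irreducible}.
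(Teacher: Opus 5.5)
Your reduction of the theorem to the one-step claim $\mathsf{NF}_{\mathrm{LO}}(t_1)\approx\mathsf{NF}_{\mathrm{LO}}(t)$ is sound. The plan for proving that claim, however, has a genuine gap at the hereditary rule \textup{(G)}. Its contractum $\mathsf{SumNF}^{C,D}_{\alpha\alpha',\beta\beta'}(f\,t_1,g\,t_2)$ is itself an output of the deterministic normalizer, so no peak involving \textup{(G)} closes by a residual argument. Consider $(\lam{x}{\oplusmap{\alpha}{f}{\beta}{g}\,[\,1\cdot x\mid 1\cdot v\,]})\,e$. Contracting \textup{(G)} and then $\beta$ gives $\mathsf{SumNF}_{\alpha,\beta}(f\,x,g\,v)[e/x]$, which is generally not even normal: substituting a pair or a let-headed $e$ for a destructured $x$ creates new redexes. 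Contracting $\beta$ and then \textup{(G)} gives $\mathsf{SumNF}_{\alpha,\beta}(f\,e,g\,v)$. So your family~(i) claim that $\beta$-versus-branch-map diamonds close in one step is false. The same problem arises for a step inside $f,g,t_1,t_2$. In family~(ii), the \textup{(G)}$\times$\textup{(E)} overlap is not phase bookkeeping either: you must compare $\mathsf{SumNF}$ of $(f\circ h)\,R_1,(g\circ k)\,R_2$ with $L[\mathsf{SumNF}(f\,S_L,g\,S_R)]$, obtained after the inner contraction. Joining these peaks requires two things. First, the coherent sum handed to $\mathsf{SumNF}$ must have unique normal forms modulo $\approx$; this sum lies strictly below the peak in rank, because the root map constructor has been consumed. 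Second, you need naturality of $\mathsf{SumNF}$ under linear substitution and under post-application of branch maps (Lemma~\ref{lem:sumnf-lower-rank}). Hence local confluence cannot be established first and then fed to a Huet-style lemma, since it depends on uniqueness. The missing idea is the staged induction of Theorem~\ref{thm:ranked-determinacy}. At each rank $r$, prove local coherence with $\approx_1$ and one-step peak closure using uniqueness (and its $\approx$-variant $\mathsf{EqUN}$) only at ranks below $r$. Only then derive uniqueness at rank $r$.

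Three further points need repair. First, the strong compatibility you state ($u'\to u_1'$ in one step with $u_1\approx u_1'$) is false. Take $t=(\letpair{x}{y}{e_1}{\letpair{p}{q}{e_2}{f}})\,g$ and its \textup{(X)}-variant $t'$. The root \textup{(C)} steps produce terms with different outermost lets whose bodies are applications, and these are not $\approx$-related. They join only after a further \textup{(C)} step on each side. You need the multi-step form $u\to^*v\approx v'\leftarrow^*t'$, and Huet's theorem in that form requires termination of $\to$ modulo $\approx$. So you cannot sidestep strong normalization; Theorem~\ref{thm:termination}, together with $\approx$-invariance of $\mu$, supplies it. Second, the measure you name does not decrease: every let-floater preserves type complexity, $\Phi$, and term size. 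The paper's measure is $(\Phi,\mathsf{Bind},\mathsf{LetDepth})$, and $\mathsf{LetDepth}$ strictly drops on each floater. Third, the \textup{(NF4*)} worry is misplaced. \textup{(NF4*)} says exactly that no \textup{(H$^*$)}-redex exists (Lemma~\ref{lem:nf-iff-irreducible}), and differing let orders among irreducible reducts are absorbed because \textup{(X)} is a generator of $\approx$. No canonical let-ordering function is needed, and the grammar characterization alone would not yield uniqueness anyway.
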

The proof is a staged induction on the normalization measure:
local compatibility with $\approx$, one-step peak joining, and
uniqueness at a rank use uniqueness only at strictly smaller ranks
(Appendix~\ref{app:determinacy-proofs},
Theorem~\ref{thm:ranked-determinacy}).  In particular, every
irreducible reduct is $\approx$-equal to
$\mathsf{NF}_{\mathrm{LO}}(t)$.

A conclusion merely up to the symmetric closure of reduction would
be immediate.  The compiler normalizes before emission.

\begin{corollary}[Source normalization]
\label{cor:source-normalization}
If $\Gamma\sjudge t:A$, then $\mathsf{NF}_{\mathrm{LO}}(t^\circ)$
is defined, internally well typed, irreducible, contains no Raw
coherent-sum former $[\,-\mid-\,]$, every $\plus$-map in it has
closed operands, and it contains no inverse distributor whose source
or target is non-first-order.
\end{corollary}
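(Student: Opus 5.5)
The plan is to first use Lemma~\ref{lem:source-internal-inclusion} and Theorem~\ref{thm:normalization} to get definedness, internal well-typedness and irreducibility. The three syntactic side conditions are then proved by one invariant, established on $t^\circ$ and preserved by every reduction step of $\mathsf{NF}_{\mathrm{LO}}$. From $\Gamma\sjudge t:A$ we obtain $\Gamma\intjudge t^\circ:A$, so $\mathsf{NF}_{\mathrm{LO}}(t^\circ)$ is defined, internally typed, irreducible, and $t^\circ\to^*\mathsf{NF}_{\mathrm{LO}}(t^\circ)$.

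Call a term \emph{Source-shaped} when three conditions hold. (i) It has no unphased Raw coherent-sum former $[\,-\mid-\,]$; phased sum formers may appear only as administrative intermediates produced by the normalizer. (ii) Every $\plus$-map $\oplusmap{\alpha}{f}{\beta}{g}$ in it has closed operands $f,g$. (iii) Every occurrence of $\distRi$ is at a first-order instance.

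For $t^\circ$ the invariant holds by induction on the Source derivation. Source has no \textsc{$\plus$-I} rule, so any bracket must come from the expansion, and the case clause of \S\ref{cohRouting} introduces none. The only $\plus$-maps are the $\hat f\plus\hat g$ of case clauses, and these are closed because all of $\Gamma$ is funneled through $z$ before the branch is entered. The $\distRi$ in a case clause has source $(A\tensor C)\plus(B\tensor C)$ and target $(A\plus B)\tensor C$, both first-order by the side condition of the case rule. Any other structural atom in Source is at a first-order instance when it is additive. $\expi{\theta}{J}$ contains no $\plus$-maps at term level, since certificate map formers are not Raw term formers. The higher-order $\distL$ is allowed, since the statement constrains only the inverse distributor.

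Next I would check preservation step by step over the rules of Appendix~\ref{app:normalization}. $\beta_{\lmark}$ and $\beta_\tensor$ substitute into closed maps only trivially, so closed operands stay closed. They can also create $\distRi$ only by copying existing atoms, whose type indices are fixed. $\plus$-comp fuses two closed maps into a closed map. Let-floating moves binders without opening map operands. The $\plus$-map-through-sum rule (NF3) is where a phased sum $[\,\alpha\cdot\,\mid\beta\cdot\,]$ may be created. The actual source of sums here is $\distL$ applied to a pair: it produces a phased sum which is immediately consumed by the closed map and then by $\distRi$.

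The main obstacle is showing that no sum former survives in the final normal form. My approach is a typing argument on the irreducible result. A surviving sum former must be a value or a blocked sum in the grammar of Table~\ref{tab:nf-grammar}. Trace its origin to a $\distL$ contraction in some case clause. Its only consumer is the closed $\plus$-map followed by $\distRi$. Once the map is applied to the sum, NF3 forces that application to be pushed through the sum. After that, the $\distRi$ applied to a sum of pairs is contracted by its administrative rule, yielding a pair, and no sum remains. If instead the scrutinee $e^\circ$ is neutral, $\distL$ is stuck and no sum is formed at all. What remains to show is that a sum cannot escape to the top level unconsumed. This follows because Source result types and variables never make such a freshly created sum reachable other than through that composite: no sum former ever occurs at positions of linear variables. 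I would make this precise through an auxiliary ``sum created by $\distL$ only in applied position'' clause of the invariant, checked against each floater rule.
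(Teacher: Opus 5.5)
Your first step (internal typing via Lemma~\ref{lem:source-internal-inclusion}, then Theorem~\ref{thm:normalization}) and your handling of inverse distributors (no rule creates an atom or changes its type indices) match the paper. The genuine gap is in the closed-operands claim. Property (ii) is \emph{not} preserved by single reduction steps, and your assertion that let-floating never opens a map operand is false.

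Rule (F$_L$) rewrites $\oplusmap{\alpha}{(\letpair{x}{y}{e}{h})}{\beta}{g}$ to $\letpair{x}{y}{e}{\oplusmap{\alpha}{h}{\beta}{g}}$. If the operand was closed, then $e$ is closed, but $h$ may mention $x,y$, so the reduct contains an open map; (F$_R$) is symmetric. Such let-headed operands can arise from Source expansions, for instance when (H$^*$) hoists a let with closed scrutinee out of a branch lambda $\hat f$. The paper's Lemma~\ref{lem:lo-restores-balance} singles out exactly (F$_L$)/(F$_R$) as the rules that break balance.

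Closedness therefore holds only at the endpoint, and proving that needs an idea absent from your plan. The paper replaces (ii) by the invariant that every open map is \emph{protected}: it sits under a chain of enclosing lets whose first scrutinee is closed and whose later scrutinees use only earlier binders. It then checks that (B), (D), (E), (H$^*$) and (F$_{L/R}$) preserve protection. At the irreducible endpoint, the outermost let of a protecting chain would have a closed, irreducible, sum-free, tensor-typed scrutinee. By Lemma~\ref{lem:closed-sum-free-tensor} that scrutinee is pair-headed, so the let is a (B)-redex, a contradiction. This step consumes sum-freeness of the normal form, so the two properties must be established in that order.

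Your argument for the absence of sum formers rests on reductions that do not exist. Structural atoms are inert constants of the rewrite relation. $\distL(\gamma\tensor e^\circ)$ never produces a phased sum, and there is no contraction of $\distRi$ on a sum of pairs; both simply remain neutral applications $E\,R$ in the normal form.

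The correct argument is a plain step invariant, with no ``administrative intermediates'' clause. $t^\circ$ contains no coherent-sum former, phased or not. The only rules whose reducts contain one are (G), (S$_L$) and (S$_R$), and each already requires one on its left-hand side. Substitution and the remaining rules introduce none (Lemma~\ref{lem:reduction-preserves-source-discipline}). Your tracing argument for the ``main obstacle'' should be discarded; the real obstacle is the closed-operands issue above.
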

The proof is in Appendix~\ref{app:normalization}.

%% file: SEM-NEW.tex
\section{Boundary Semantics}
\label{sec:boundary-semantics}

Granthi assigns a boundary operator to each Source judgment
$\Gamma\sjudge t:A$.  Its Raw expansion $t^\circ$ is deterministically
normalized and canonically retyped, and the resulting internal normal
derivation is interpreted below.  We write the resulting operator as
$\SEM{\Gamma\sjudge t:A}$: the context is part of the denotation.
All Hilbert spaces and linear maps in this section live in
$\mathbf{FdHilb}$; thus $\tensor$ is its monoidal tensor and $\oplus$
its additive biproduct.

\paragraph{Signed boundaries and the flat envelope.}
A signed boundary formula is generated by
\[
  \varphi,\psi ::= \mathbf{1}\mid b^-\mid b^+
       \mid \varphi\tensor\psi\mid \varphi\oplus\psi .
\]
Here $b^\pm$ are signed base ports and $\mathbf{1}$ is the empty
tensor.  Raw types---and hence Source types and the Raw-only intermediate
types introduced by Source expansion---determine signed formulas by
polarity reversal:
\[
\begin{array}{r@{\;}c@{\;}l@{\qquad}r@{\;}c@{\;}l}
\mathsf{sgn}(\base)        &=& b^+, &
\mathsf{sgn}(A\tensor B)   &=& \mathsf{sgn}(A)\tensor\mathsf{sgn}(B),\\
\mathsf{sgn}(A\plus B)     &=& \mathsf{sgn}(A)\oplus\mathsf{sgn}(B),&
\mathsf{sgn}(A\lmark B)    &=& \mathsf{sgn}(A)^*\tensor\mathsf{sgn}(B).
\end{array}
\]
The involution $(-)^*$ flips signs and distributes over both
operations.  Put $\varphi$ in the fixed left-to-right rig normal
form
\[
  \varphi\cong\bigoplus_{i\in I_\varphi}M_i,
  \qquad
  M_i=b^{\epsilon_{i,1}}\tensor\cdots\tensor b^{\epsilon_{i,k_i}},
\]
and sort each monomial by sign:
\[
  M_i^-:=\bigotimes_{\epsilon_{i,j}=-}b,
  \qquad
  M_i^+:=\bigotimes_{\epsilon_{i,j}=+}b,
  \qquad
  \partial^\pm(\varphi):=\bigoplus_i M_i^\pm .
\]
Empty products are $\mathbf{1}$.  In the projected monomials $M_i^\pm$,
$b$ denotes the unsigned carrier underlying either signed port.  Hilbert
evaluation forgets polarity:
\[
  \sem{\mathbf{1}}=\sem b=\sem{b^-}=\sem{b^+}=\mathbb C,\qquad
  \sem{\varphi\tensor\psi}=\sem\varphi\otimes\sem\psi,\qquad
  \sem{\varphi\oplus\psi}=\sem\varphi\oplus\sem\psi .
\]
For any Raw type $T$ we use the shorthand
\[
  \sem{T}:=\sem{\mathsf{sgn}(T)}.
\]

For a judgment shape
$\mathfrak J=(x_1{:}A_1,\ldots,x_n{:}A_n\intjudge A)$, set
\[
  \varphi_{\mathfrak J}:=\mathsf{sgn}(A_1)^*\tensor\cdots\tensor
       \mathsf{sgn}(A_n)^*\tensor\mathsf{sgn}(A),
  \qquad
  \mathcal E_{\mathfrak J}^\pm:=\sem{\partial^\pm(\varphi_{\mathfrak J})} .
\]
The pair $(\mathcal E_{\mathfrak J}^-,\mathcal E_{\mathfrak J}^+)$ is the
\emph{ambient flat envelope} of the judgment shape.  It expands the
sum indices of $\varphi_{\mathfrak J}$ and supplies fixed coordinates
for the derivation-selected boundary spaces.

A canonical normal derivation $\mathcal N$ (the canonical retyping
derivation of Lemma~\ref{lem:canonical-normal-retyping}) selects its
boundary spaces
$B_{\mathcal N}^\pm\subseteq\mathcal E_{\mathfrak J}^\pm$.
Different derivations of the same judgment shape may select different
subspaces.  Such a subspace may have support on every ambient
coordinate while remaining strictly lower-dimensional than the
envelope.

\paragraph{Semantic clauses.}
Table~\ref{tab:sem-compositional} gives the interpretation of the
canonical internal normal judgments used by Source semantics.  Every
bracket displays the complete judgment, and all displayed context
unions are disjoint.  The equations are read through the fixed
canonical reassociations, symmetries, polarity repartitions, and
distributivities of the boundary spaces.  $R$ may include the blocked
maps of Table~\ref{tab:nf-grammar}.  The eliminator clauses take
canonical premise derivations
$\mathcal D_E,\mathcal D_R,\mathcal D_N,\mathcal D_F$ as arguments:
$\mathsf{AppCut}_A$ connects a function component to its argument,
$\mathsf{TenCut}_{A,B}$ connects a tensor producer to the body that
destructures it, and $\mathsf{MapApply}_{A\plus B}$ routes source
summands through their branch maps; Appendix~\ref{app:unitarityNEW}
defines these derivation-directed cuts and proves that they are well
founded and unitary.  In the applied-map row, $\mathcal F$ is the
canonical derivation of the displayed $\plus$-map former.

For a canonical normal derivation
$\mathcal N:(\Gamma\intjudge N:A)$ of shape $\mathfrak J$, the
construction simultaneously selects
$B_{\mathcal N}^{\pm}\subseteq\mathcal E_{\mathfrak J}^{\pm}$ and
assigns a map $U_{\mathcal N}:B_{\mathcal N}^{-}\to B_{\mathcal N}^{+}$;
write $\SEM{\Gamma\intjudge N:A}_{\mathrm{NF}}:=U_{\mathcal N}$.
Premise brackets denote previously assigned premise data, while
eliminators assign the conclusion spaces and map together.
The variable leaf uses the equal-address paired
port $\mathcal Y_A^\pm$ with
$\mathsf{yank}_A:\mathcal Y_A^-\to\mathcal Y_A^+$, the direct sum
over the rig normal form of the canonical tensor symmetries
(Definition~\ref{def:variable-graph-sector},
Appendix~\ref{app:unitarityNEW}).  Atomic leaves use their selected
graph boundaries.  Eliminators close the displayed whole port
while retaining all spectator factors.  Write $Y_\Gamma$ for the
typed identity through-map on an inactive context $\Gamma$
(\textup{(TY)} in Appendix~\ref{app:unitarityNEW}).

The boundary of a first-order unitary $V:\sem P\to\sem P$ is derived
from the identity graph: its source is $\mathcal Y_P^-$, its target is
$(I\tensor V)\mathcal Y_P^+$, and its map is
$(I\tensor V)\mathsf{yank}_P$.  For $V_{\theta,J}:=e^{i\theta J}$
from \S\ref{subsubsec:exponentiation}, \textsc{Exp} applies
$V_{\theta,J}$ to the positive leg of the identity graph.  In the
structural row below,
$\mathfrak S=(\cdot\intjudge s:T\lmark S)$ denotes the displayed
judgment shape.

\begin{table*}[t]
\caption{Derivation-indexed semantic clauses for the canonical
internal normal judgments.}
\label{tab:sem-compositional}
\centering
\scriptsize
\renewcommand{\arraystretch}{1.35}
\setlength{\tabcolsep}{3pt}
\begin{tabular}{@{}p{0.10\linewidth}p{0.42\linewidth}p{0.42\linewidth}@{}}
\toprule
\textbf{Rule} & \textbf{Canonical normal judgment} &
\textbf{Judgment denotation}\\
\midrule
\textsc{Var}
&
$x{:}A\intjudge x:A$
&
$\SEM{x{:}A\intjudge x:A}=\mathsf{yank}_A$
\\[1.2ex]

\textsc{$\lmark$-I}
&
$\displaystyle
 \frac{\Gamma,x{:}A\intjudge N:B}
      {\Gamma\intjudge\lambda x.N:A\lmark B}$
&
$\displaystyle
 \SEM{\Gamma\intjudge\lambda x.N:A\lmark B}
 =
 \SEM{\Gamma,x{:}A\intjudge N:B}$
\\[2.4ex]

\textsc{$\lmark$-E}
&
$\displaystyle
 \frac{\Gamma_1\intjudge E:A\lmark B\qquad
       \Gamma_2\intjudge R:A}
      {\Gamma_1,\Gamma_2\intjudge E\,R:B}$
&
$\displaystyle
 \SEM{\Gamma_1,\Gamma_2\intjudge E\,R:B}
 =
 \mathsf{AppCut}_{A}(\mathcal D_E,\mathcal D_R)$
\\[2.8ex]

\textsc{$\tensor$-I}
&
$\displaystyle
 \frac{\Gamma_1\intjudge R_1:A\qquad
       \Gamma_2\intjudge R_2:B}
      {\Gamma_1,\Gamma_2\intjudge R_1\tensor R_2:A\tensor B}$
&
$\displaystyle
 \SEM{\Gamma_1,\Gamma_2\intjudge R_1\tensor R_2:A\tensor B}
 =
 \SEM{\Gamma_1\intjudge R_1:A}\tensor
 \SEM{\Gamma_2\intjudge R_2:B}$
\\[2.8ex]

\textsc{$\tensor$-E}
&
$\displaystyle
 \frac{\Gamma_1\intjudge R:A\tensor B\qquad
       \Gamma_2,x{:}A,y{:}B\intjudge N:C}
      {\Gamma_1,\Gamma_2\intjudge\letpair{x}{y}{R}{N}:C}$
&
$\displaystyle
 \SEM{\Gamma_1,\Gamma_2\intjudge\letpair{x}{y}{R}{N}:C}
 =
 \mathsf{TenCut}_{A,B}(\mathcal D_N,\mathcal D_R)$
\\[3.0ex]

\textsc{$\plus$-Map}
&
$\displaystyle
 \frac{\Gamma_1\intjudge R_1:A\lmark C\qquad
       \Gamma_2\intjudge R_2:B\lmark D}
      {\Gamma_1,\Gamma_2\intjudge
       \oplusmap{\alpha}{R_1}{\beta}{R_2}:
       (A\plus B)\lmark(C\plus D)}$
\quad
$C,D$ first-order;\;
$\alpha,\beta\in\mathbb C_{\mathrm{static}}$;\;
$|\alpha|=|\beta|=1$
&
$\displaystyle
 \SEM{\Gamma_1,\Gamma_2\intjudge
 \oplusmap{\alpha}{R_1}{\beta}{R_2}:
 (A\plus B)\lmark(C\plus D)}
 =
 \begin{aligned}
 &\alpha\bigl(
   \SEM{\Gamma_1\intjudge R_1:A\lmark C}\tensor Y_{\Gamma_2}
  \bigr)\\
 &\quad\oplus\ \beta\bigl(
   Y_{\Gamma_1}\tensor
   \SEM{\Gamma_2\intjudge R_2:B\lmark D}
  \bigr)
 \end{aligned}$
\\[3.4ex]

\textsc{Applied $\plus$-Map}
&
$\displaystyle
 \frac{\Gamma_1\intjudge R_1:A\lmark C\;
       \Gamma_2\intjudge R_2:B\lmark D\;
       \Delta\intjudge E:A\plus B}
      {\Gamma_1,\Gamma_2,\Delta\intjudge
       \oplusmap{\alpha}{R_1}{\beta}{R_2}\,E:C\plus D}$
\quad
$C,D$ first-order;\;
$\alpha,\beta\in\mathbb C_{\mathrm{static}}$;\;
$|\alpha|=|\beta|=1$
&
$\displaystyle
 \begin{aligned}
 &\SEM{\begin{gathered}
   \Gamma_1,\Gamma_2,\Delta\intjudge
   \oplusmap{\alpha}{R_1}{\beta}{R_2}\,E\\[-0.3ex]
   {}:C\plus D
 \end{gathered}}
 \\[-0.2ex]
 &\quad={}\mathsf{MapApply}_{A\plus B}
   (\mathcal D_F,\mathcal D_E)
 \end{aligned}$
\\[1.2ex]

Structural
&
$\cdot\intjudge s:T\lmark S$
&
$\SEM{\cdot\intjudge s:T\lmark S}
  =\mathsf{Rig}_{\mathfrak S}(s)$
\\[1.2ex]

\textsc{Exp}
&
$\cdot\intjudge\expi{\theta}{J}:P\lmark P$
\quad
$P$ first-order;\; $\ijudge J:P\lmark P$;\; $\theta\in\mathbb R_{\mathrm{static}}$
&
$\SEM{\cdot\intjudge\expi{\theta}{J}:P\lmark P}
 =(I\tensor V_{\theta,J})\mathsf{yank}_P$
\\
\bottomrule
\end{tabular}

\end{table*}

The $Y_{\Gamma_i}$ factors in the \textsc{$\plus$-Map} row give the
derivation-indexed completion through the inactive branch context
(Appendix~\ref{app:unitarityNEW},
Definition~\ref{def:inactive-completion}).
A primitive structural atom $s:T\lmark S$ carries the
canonical rig-coherence isomorphism
$\mathsf{Rig}(s):\sem T\to\sem S$, a basis permutation.  Its judgment
denotation is the induced graph-sector map
$\mathsf{Rig}_{\mathfrak S}(s)$ of \textup{(SG)--(SG-map)} in
Appendix~\ref{app:unitarityNEW}; the compilation appendix uses
$\mathsf{Rig}(s)$ as its coordinate action.  Tensor-side coherence
maps are interpreted through their defining linear programs.

\paragraph{Why the cuts remain unitary.}
At a first-order port, the classified cut (in the sense of
Appendix~\ref{app:unitarityNEW}) contracts the producer and
consumer graph presentations over their common port coordinate
(componentwise for a completed family), yielding the graph
presentation of their composite coordinate unitary.  At a
higher-order port the cut follows the corresponding normal-form
clause, and orthogonal branch cases close blockwise.
Appendix~\ref{app:unitarityNEW} proves these claims
(Lemmas~\ref{lem:typed-graph-composition}
and~\ref{lem:whole-port-collapse},
Theorem~\ref{thm:nf-boundary-unitarity}).

\begin{definition}[Canonical-form semantics]
\label{def:canonical-form-semantics}
For $\Gamma\sjudge t:A$, let
$N_t=\mathsf{NF}_{\mathrm{LO}}(t^\circ)$ and set
\[
  \SEM{\Gamma\sjudge t:A}
  :=
  \SEM{\Gamma\intjudge N_t:A}_{\mathrm{NF}}.
\]
\end{definition}
Ordinary clauses recurse on proper normal sub\-derivations, and the
applied-map source-\allowbreak consumer and coherent-\allowbreak sharing clauses normalize
their formal branch applications at strictly smaller $\Phi$
(Appendix~\ref{app:unitarityNEW}).
Corollary~\ref{cor:source-normalization} and
Lemma~\ref{lem:canonical-normal-retyping} supply the displayed
canonical internal derivation.  Theorem~\ref{thm:determinacy} and
Corollary~\ref{cor:canonical-invariance} show that reduction of
$t^\circ$ and choice of reduction strategy preserve the denotation up
to canonical unitary boundary transport.

\begin{theorem}[Boundary unitarity]
\label{thm:boundary-unitarity}
If $\Gamma\sjudge t:A$, then $\SEM{\Gamma\sjudge t:A}$ is unitary
between the negative and positive boundary spaces selected by the
canonical derivation of $\mathsf{NF}_{\mathrm{LO}}(t^\circ)$.
\end{theorem}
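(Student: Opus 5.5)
The plan is to reduce the statement to a unitarity theorem for canonical internal normal derivations and prove that by structural induction along the well-founded order used to define the semantics. By Definition~\ref{def:canonical-form-semantics}, $\SEM{\Gamma\sjudge t:A}=\SEM{\Gamma\intjudge N_t:A}_{\mathrm{NF}}$ with $N_t=\mathsf{NF}_{\mathrm{LO}}(t^\circ)$. Proposition~\ref{prop:source-expansion} and Lemma~\ref{lem:source-internal-inclusion} give $\Gamma\intjudge t^\circ:A$. Then Theorem~\ref{thm:normalization} and Corollary~\ref{cor:source-normalization} give that $N_t$ is defined, irreducible and internally typed. Moreover, $N_t$ has no Raw sum former, only closed $\plus$-map operands, and no higher-order inverse distributor. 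Lemma~\ref{lem:canonical-normal-retyping} then supplies the canonical normal derivation $\mathcal N$. It therefore suffices to prove the following for every canonical normal derivation $\mathcal N$ in this restricted class: the selected spaces $B^\pm_{\mathcal N}$ are well defined and $U_{\mathcal N}:B^-_{\mathcal N}\to B^+_{\mathcal N}$ is unitary. This is the content I would state as Theorem~\ref{thm:nf-boundary-unitarity}.

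The induction measure is lexicographic: first the $\plus$-map constructor count $\Phi$, then the size of the derivation. The applied-map clauses recurse through formal branch applications normalized at strictly smaller $\Phi$, and all other clauses recurse on proper subderivations. I would then go through the clauses of Table~\ref{tab:sem-compositional}.

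\textbf{Leaves.} $\mathsf{yank}_A$ is a direct sum of tensor symmetries over the rig normal form, so it is a permutation and hence unitary. Structural atoms give $\mathsf{Rig}_{\mathfrak S}(s)$, the graph-sector map induced by a basis permutation, so it is unitary. \textsc{Exp} gives $(I\tensor V_{\theta,J})\mathsf{yank}_P$. Here $V_{\theta,J}=\cos\theta\,I+i\sin\theta\,J$ is unitary because $J$ is Hermitian and involutive, which follows by induction on Table~\ref{tab:typing-rules-involutions}.

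\textbf{Introduction rules.} $\lmark$-I is the identity on denotations, only repartitioning polarity, so unitarity is preserved. $\tensor$-I is a tensor product of unitaries. \textsc{$\plus$-Map} is a direct sum of $\alpha(U_1\tensor Y_{\Gamma_2})$ and $\beta(Y_{\Gamma_1}\tensor U_2)$. Each summand is unitary because $|\alpha|=|\beta|=1$ and the inactive completions $Y_{\Gamma_i}$ are typed identities, which I would check from (TY).

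\textbf{Cuts.} The real work is in $\mathsf{AppCut}$, $\mathsf{TenCut}$ and $\mathsf{MapApply}$. The steps are as follows.
\begin{enumerate}
\item At a first-order port, I would use Lemma~\ref{lem:typed-graph-composition}. Contracting two graph presentations over a common port yields the graph presentation of the composite coordinate unitary, and a composite of unitaries is unitary.
\item At a higher-order port, I would case on the normal form of the producer. If it is a $\lambda$, irreducibility (NF1) excludes a redex in the term. The cut is therefore between a neutral or variable and a value, and I would follow the corresponding normal-form clause.
\item If the producer is a variable, the cut contracts against $\mathsf{yank}$. By the semantic $\eta$ identity (Lemmas~\ref{lem:semantic-eta-identity} and~\ref{lem:classified-cut-eta}), this is a whole-port identity. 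Lemma~\ref{lem:whole-port-collapse} then shows the result is the premise unitary with ports relabelled.
\item For $\mathsf{MapApply}$, I would decompose the source along the summands of $A\plus B$. Because branch summands occupy orthogonal coordinates of the flat envelope, the cut closes blockwise. Each block is a formal branch application at smaller $\Phi$, so the induction hypothesis applies, and a direct sum of unitaries on orthogonal blocks is unitary.
\end{enumerate}

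I expect the main obstacle to be the higher-order cuts. Contracting ports of a general partial isometry does not in general yield a unitary. This is the failure Remark~\ref{rem:why-first-order} exhibits for Raw terms: a rank-one result when a branch-selected function is applied to its own tag. The proof must therefore actually use the Source restrictions, which survive normalization as the conditions of Corollary~\ref{cor:source-normalization}:
\begin{itemize}
\item every sum-former summand is first-order;
\item every $\plus$-map operand is closed;
\item case branches share an identical context.
\end{itemize}
These ensure that every additive cut happens at a first-order port, or blockwise over orthogonal summands with the same spectator context. My first attempt would be to strengthen the induction hypothesis. The strengthened statement would say that every selected boundary decomposes as a direct sum over sectors, and that on each sector the map is a graph presentation. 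I would show this invariant is preserved by every clause. Unitarity of the whole then follows sectorwise, and the well-definedness of $B^\pm_{\mathcal N}$ falls out of the same invariant.
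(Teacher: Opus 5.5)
\textbf{Gap in the applied-map case.} Your overall reduction matches the paper's. Definition~\ref{def:canonical-form-semantics} and Corollary~\ref{cor:source-normalization} reduce the claim to unitarity for the canonical normal derivation, which is Theorem~\ref{thm:nf-boundary-unitarity}. Your leaf, introduction, and first-order-cut cases follow the paper's route. The gap is in your step~4.

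In $\mathsf{MapApply}$ only the source consumer is blockwise: it is the $\alpha,\beta$-weighted sum of the completed branch actions on $\mathsf{NF}_{\mathrm{LO}}(R_i\,z_i)$. The scrutinee is an arbitrary neutral of sum type. The denotation is one cut of that block consumer against the whole scrutinee, not a direct sum of branch cuts.

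Moreover, every term-level $\plus$-map comes from a Source case. With nonempty shared context, its normalized scrutinee is a distributor applied to a pair $S\tensor T$. Here $S$ packs the shared branch context and $T$ is the tag. $S$ can be higher-order: in $\qSwitch$ it contains $f$ and $g$. So the port $(A'\tensor B_1)\plus(A'\tensor B_2)$ at which your step~4 would cut is higher-order, with the single producer $S$ feeding both blocks. This is exactly the contraction you yourself flag as unsafe, and none of your three restrictions says how it is avoided.

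The missing idea is the coherent-sharing clause (Definition~\ref{def:coherent-sharing-clause}):
\begin{enumerate}
\item graft $S$ into each branch as $q_i=R_i(S\tensor b_i)$;
\item normalize each $q_i$ at strictly smaller $\Phi$, by \textup{(CS-$\Phi$)};
\item cut the single derivation of $T$ once, at the first-order tag port $B_1\plus B_2$.
\end{enumerate}

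\textbf{Why blockwise unitarity is not enough at the tag port.} Even at that first-order port, ``a direct sum of unitaries on orthogonal blocks is unitary'' does not suffice, because $T$ can mix the summands. For example, $T=\expi{\theta}{\sigma^\plus_{\base,\base}}\,b$ is a Hadamard-like control. The block gives a unitary on $\bigoplus_i(\mathsf D_{B_i}\tensor Z_i)$. Graph composition with $T$ needs it presented as $\mathsf D_{B_1\plus B_2}\tensor Z_K$, with one common ordered residual $Z_K$ in both branches. That is nontrivial, since the branches use the shared context differently ($f(g\,x)$ versus $g(f\,x)$).

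The paper obtains the common residual from data your strengthened hypothesis (``direct sum of sectors, graph presentation on each'') does not carry:
\begin{itemize}
\item the rank invariant \textup{(Chart-rank)}, $\dim\mathsf D^\pm_{\mathcal N}=\sqrt{d(\Gamma)d(A)}$;
\item the scale identity \textup{(Map-scale)} of Lemma~\ref{lem:source-map-scale};
\item the source-index order bookkeeping of Lemma~\ref{lem:shared-source-chart}.
\end{itemize}

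\textbf{Two smaller points.} First, you must show that the lower-$\Phi$ generated branches stay in your restricted class, i.e.\ sum-former-free and balanced. The paper does this with Lemmas~\ref{lem:reduction-preserves-source-discipline} and~\ref{lem:lo-restores-balance}. Second, the cut recursion runs on $(|\mathcal D|,|\mathcal C|)$, not on derivation size, and variable-headed producers consumed whole use \textup{(Eta-order)}.
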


This follows from Corollary~\ref{cor:source-normalization} and
Theorem~\ref{thm:nf-boundary-unitarity} in
Appendix~\ref{app:unitarityNEW}.

\begin{corollary}[Register unitarity]
\label{cor:register-unitarity}
For every $n\geq1$, a closed derivation
$\cdot\sjudge t:\QBool^{\tensor n}\lmark\QBool^{\tensor n}$
induces a unique unitary
$U_t:\sem{\QBool^{\tensor n}}\to\sem{\QBool^{\tensor n}}$
on the full register space.
If $x:\QBool^{\tensor n}$ and
$N=\mathsf{NF}_{\mathrm{LO}}\bigl((t\,x)^\circ\bigr)$, then
$\SEM{x:\QBool^{\tensor n}\intjudge N:\QBool^{\tensor n}}_{\mathrm{NF}}$
presents $U_t$.
\end{corollary}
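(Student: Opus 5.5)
We derive the corollary from Theorem~\ref{thm:boundary-unitarity}, applied to the open applied judgment rather than to the closed term of function type. Given $\cdot\sjudge t:\QBool^{\tensor n}\lmark\QBool^{\tensor n}$, weakening-free linear application gives $x:\QBool^{\tensor n}\sjudge t\,x:\QBool^{\tensor n}$. This uses \textsc{$\lmark$-E} with the variable rule. Hence Corollary~\ref{cor:source-normalization} makes $N=\mathsf{NF}_{\mathrm{LO}}((t\,x)^\circ)$ defined, canonically retyped and irreducible. Theorem~\ref{thm:boundary-unitarity} then yields a unitary $U_{\mathcal N}:B^-_{\mathcal N}\to B^+_{\mathcal N}$.

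The first step identifies the selected boundary spaces. The type $\QBool^{\tensor n}$ is first-order, so $\mathsf{sgn}(\QBool^{\tensor n})=(b^+\oplus b^+)^{\tensor n}$. Its dual contributes only negative ports. In the rig normal form of $\varphi_{\mathfrak J}=\mathsf{sgn}(\QBool^{\tensor n})^*\tensor\mathsf{sgn}(\QBool^{\tensor n})$, every monomial is therefore a single $b^-$-block times a single $b^+$-block. Consequently $\mathcal E^-_{\mathfrak J}\cong\sem{\QBool^{\tensor n}}^{\oplus 2^n}$ over the output index, and dually for $\mathcal E^+$.

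The substantive claim is that for a first-order judgment of this shape the canonical derivation selects the graph sector. Concretely, $B^-_{\mathcal N}$ is the input register copy and $B^+_{\mathcal N}$ the output copy. The idea is that $B^\pm$ carries exactly one full register's worth of coordinates, presented by some $U:\sem{\QBool^{\tensor n}}\to\sem{\QBool^{\tensor n}}$ in the form $(I\tensor U)\mathsf{yank}$. This is the pattern the paper states for \textsc{Exp} and for first-order unitaries: the source is $\mathcal Y_P^-$ and the target is $(I\tensor V)\mathcal Y_P^+$.

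We would prove this by appealing to the first-order classified-cut lemma (Lemma~\ref{lem:typed-graph-composition}). At first-order ports, cuts contract graph presentations to the graph presentation of the composite coordinate unitary. Because $t$ is closed, the whole derivation $\mathcal N$ arises from the variable leaf $\mathsf{yank}$ and from atomic graph leaves. These are connected only through cuts whose residual boundary is first-order, namely the input variable and the output. An induction over $\mathcal N$, using Lemma~\ref{lem:whole-port-collapse}, then shows that $U_{\mathcal N}$ has the form $(I\tensor U_t)\mathsf{yank}_{\QBool^{\tensor n}}$ for a unique operator $U_t$. The operator is unique because $\mathsf{yank}$ is an isomorphism onto its sector, so $U_t$ can be read off. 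It is unitary because $U_{\mathcal N}$ is.

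The main obstacle is ruling out that higher-order intermediates inside $t$ leave a strictly smaller selected sector, a possibility the paper explicitly allows in general. First, we would check that after normalization any $\lambda$ in $t$ has been $\beta$-reduced against $x$. Second, we would check that only neutral applications of atoms remain, so every cut in $\mathcal N$ is classified at a first-order port.

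Finally, uniqueness of $U_t$ independent of presentation follows from Theorem~\ref{thm:determinacy} together with Corollary~\ref{cor:canonical-invariance}. These show that different reduction orders change $U_{\mathcal N}$ only by canonical boundary transport. Such transport is the identity on the graph-sector coordinates of a register type.
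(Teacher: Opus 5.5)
Your opening steps are correct: $t\,x$ is Source-typed, Corollary~\ref{cor:source-normalization} gives $N$, and Theorem~\ref{thm:boundary-unitarity} gives a unitary $U_{\mathcal N}$. The step that carries the weight, however, fails. You plan to show that every $\lambda$ of $t$ is $\beta$-reduced in $N$ and that every cut in $\mathcal N$ sits at a first-order port, so that induction with first-order graph links (Lemma~\ref{lem:typed-graph-composition}) suffices.

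The paper's running example already breaks this. Supply two closed quantum atoms $h_1,h_2$ for $f,g$ in $\qSwitch$ at $A=\QBool$. This gives a closed register program of type $\QBool^{\tensor 2}\lmark\QBool^{\tensor 2}$. Its case expansion packs the shared context $f,g,x$ into $G_\Gamma$, which contains $A\lmark A$. After the outer $\beta$-steps, $\distL$ is applied to a pair containing $h_1,h_2$. Atoms are opaque, so this application is neutral and $(\hat f\plus\hat g)(\distL(\cdots))$ is irreducible. The closed $\lambda$-operands $\hat f,\hat g$ therefore survive in $N$. Their bodies destructure $p':G_\Gamma\tensor\QBool$, which is a tensor-let cut at a higher-order port, and then apply function-typed let-variables. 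This is the Raw-only higher-order distributor instance of \S\ref{cohRouting}.

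Its denotation is the coherent-sharing clause (Definition~\ref{def:coherent-sharing-clause}). There the consumer is assembled from fresh normal forms $\mathsf{NF}_{\mathrm{LO}}(R_i(S\tensor b_i))$ at strictly smaller $\Phi$. These are not subderivations of $\mathcal N$, so a structural induction over $\mathcal N$ does not reach them. You need the outer $\Phi$-induction over generated branches, with charts carried through every clause, which is exactly the simultaneous induction of Theorem~\ref{thm:nf-boundary-unitarity}.

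Your target invariant $U_{\mathcal N}=(I\tensor U_t)\mathsf{yank}_{\QBool^{\tensor n}}$ is also false in general. Only \textsc{Exp} acts on the positive leg alone; a structural endpoint transports both polarities. Take $t=\sigma^\plus_{\base,\base}$, a legal first-order Source atom with $n=1$. Then $N=\sigma^\plus x$, and by \textup{(SG)} and \textup{(C-str-app)} the selected negative space is spanned by $\ket{0,1}_-$ and $\ket{1,0}_-$. This is orthogonal to the equal-address sector $\mathcal Y^-_{\QBool}$, so your uniqueness step (reading $U_t$ off through $\mathsf{yank}$) has nothing to act on.

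What is true, and what the paper uses, is the singleton first-order root clause of Theorem~\ref{thm:nf-boundary-unitarity}. For $\mathcal E:(x:P\intjudge N:P)$, the restrictions $\varepsilon^-=\mathsf{src}_P\!\upharpoonright_{B^-_{\mathcal E}}$ and $\varepsilon^+=\mathsf{tgt}_P\!\upharpoonright_{B^+_{\mathcal E}}$ are unitary onto $\sem P$. The chart rank is $2^n$ by \textup{(Chart-rank)}, which settles your worry about a smaller sector. Then $U_t:=\varepsilon^+U_{\mathcal E}(\varepsilon^-)^{-1}$ is unitary, and uniqueness is just invertibility of $\varepsilon^-$ (Lemma~\ref{lem:first-order-endpoint-readback}).

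Your closing appeal to determinacy and canonical invariance is unnecessary, since $N$ is the fixed LO normal form. The accompanying claim that transports act as the identity on graph coordinates is not established.
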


The register readback is proved in
Appendix~\ref{app:unitarityNEW}.

%% file: compiler_new.tex
\section{Circuit Realization}
\label{sec:compilation}

Compilation has three stages: normalize and canonically retype the program;
assign typed physical layouts; then emit symbolic wiring and gates, adding
controls where required.  Given
$\Gamma\sjudge t:A$, the compiler computes
\[
  N_t=\mathsf{NF}_{\mathrm{LO}}(t^\circ),
  \qquad
  \mathcal N_t=\mathsf{CanDer}_{\Gamma,A}(N_t),
\]
where $\mathsf{CanDer}_{\Gamma,A}(N_t)$ is the canonical retyping
derivation of Lemma~\ref{lem:canonical-normal-retyping}, and emits a
circuit from $\mathcal N_t$; thus the semantics and the compiler use the
same canonical normal derivation.
At applications and tensor lets, the derivation-wide physical-address
assignment \textup{CarrierPlan} fixes the matched producer and consumer
incidences before recursive emission, so both sides are emitted directly
into the same physical presentation.  Emission recurses on canonical
normal derivations, whose premises are themselves normal, normalizing
and canonically retyping any auxiliary term generated by a clause;
Appendix~\ref{app:compilation-soundness} proves exhaustive coverage.

Physically, the selected semantic boundary $B_{\mathcal N}^{\pm}$ of a
derivation is carried, through the layout maps
$\lambda_{\mathcal N}^{\pm}$, by a code sector $C_{\mathcal N}^{\pm}$
inside a padded register $\mathcal H_{\mathcal N}$, on which the
emitted target program $G_{\mathcal N}$ acts by its gate operator
$\mathcal U(G_{\mathcal N})\in U(\mathcal H_{\mathcal N})$.
Under \textup{(BC)}, the correctness theorem proves that this gate
operator, read through the typed input and output frame coordinates,
maps $C_{\mathcal N}^{-}$ unitarily onto $C_{\mathcal N}^{+}$ and
realizes the boundary denotation
$B_{\mathcal N}^{-}\to B_{\mathcal N}^{+}$.  Two artifacts share the
three stages: the formal theorem verifies the recursive-binary
\emph{reference} emitter of Appendix~\ref{app:compilation-soundness},
whereas the executable prototype uses the flat normalized-frame
lowering of Remark~\ref{rem:ref-vs-exec} and is exercised by the
regression and matrix tests of \S\ref{sec:toolchain}.

\subsection{Wire Layouts}
\label{sec:impl-layout}

The compiler represents boundary spaces using a wire encoding.  Each
type has a canonical \emph{wire layout}---an ordered list of qubits
allocated by the compiler, summarized in
Table~\ref{tab:wire-layouts}.  Write
$\size{T}:=\dim\sem T$, and write $\nwires(T)$ for the number of
physical qubits in the logical interface encoding of~$T$.

\begin{table}[t]
\caption{Wire layouts for the binary core.  A sum has one root-tag
qubit and a payload wide enough for either recursively encoded
summand.}
\label{tab:wire-layouts}
\centering
\footnotesize
\renewcommand{\arraystretch}{1.3}
\begin{tabular}{@{}l@{\quad}l@{\quad}c@{}}
\toprule
\textbf{Type} & $\mathsf{nwires}(T)$ & \textbf{Layout} \\
\midrule
$A \tensor B$
& $\nwires(A) {+} \nwires(B)$
& \raisebox{-4pt}{%
  \begin{tikzpicture}[x=4mm, y=3mm, font=\footnotesize]
    \draw[fill=blue!15] (0,0) rectangle (2.5,1);
    \draw[fill=green!15] (2.5,0) rectangle (5,1);
    \node at (1.25,0.5) {$A$};
    \node at (3.75,0.5) {$B$};
  \end{tikzpicture}}
\\[1.5ex]
$A \plus B$
& $1 {+} \max\{\nwires(A),\nwires(B)\}$
& \raisebox{-4pt}{%
  \begin{tikzpicture}[x=4mm, y=3mm, font=\footnotesize]
    \draw[fill=orange!25] (0,0) rectangle (1.5,1);
    \draw[fill=gray!15] (1.5,0) rectangle (5,1);
    \node at (0.75,0.5) {tag};
    \node at (3.25,0.5) {payload};
  \end{tikzpicture}}
\\[1.5ex]
$\QBool$
& $1$
& \raisebox{-4pt}{%
  \begin{tikzpicture}[x=4mm, y=3mm, font=\footnotesize]
    \draw[fill=orange!25] (0,0) rectangle (1.5,1);
    \node at (0.75,0.5) {$b$};
  \end{tikzpicture}}
\\[1.5ex]
$\base$
& $0$
& (no wires: $\sem{\base} = \mathbb{C}$)
\\[1.5ex]
$A \lmark B$
& $\nwires(A) {+} \nwires(B)$
& \raisebox{-4pt}{%
  \begin{tikzpicture}[x=4mm, y=3mm, font=\footnotesize]
    \draw[fill=red!15] (0,0) rectangle (2.5,1);
    \draw[fill=blue!15] (2.5,0) rectangle (5,1);
    \node at (1.25,0.5) {arg};
    \node at (3.75,0.5) {res};
  \end{tikzpicture}}
\\
\bottomrule
\end{tabular}

\end{table}

For $\tensor$ and $\lmark$ the layout is literal juxtaposition of the
component bundles, so neither constructor compresses data.  Polarity
is not visible in this physical layout; it enters only in the
correctness theorem through the two layout maps
$\lay_{\mathfrak J}^-$ and $\lay_{\mathfrak J}^+$.

For sum types, however, the tag-plus-payload encoding introduces a
valid-subspace issue.  The semantic dimension is additive,
$\size{A \plus B} = \size{A} + \size{B} \leq 2^{\nwires(A \plus B)}$,
while the compiler uses one tag register and a shared payload bundle.
The payload wires are reused across summands: the same physical
payload region is interpreted as an $A$-payload or a $B$-payload
according to the tag.  This does not make the sum classical: valid
states may be coherent superpositions across tag values, such as
$\alpha\,\ket{0}\ket{a} + \beta\,\ket{1}\ket{b}$.

\begin{remark}[Reference vs.\ executable lowering]
\label{rem:ref-vs-exec}
The reference compiler and the correctness theorem operate on the
binary core type: a derived $n$-ary sum uses the fixed left-associated
binary expansion of Appendix~\ref{app:nary-plus}, and its register
layout is obtained recursively from the binary row of
Table~\ref{tab:wire-layouts}.  The executable prototype instead uses a
flat $\lceil\log_2 n\rceil$-bit tag layout, so
\[
 \nwires(A_1\plus\cdots\plus A_n)
 =\lceil\log_2 n\rceil+\max_i\nwires(A_i);
\]
either way, each logical type interface width is a function of the
type alone.  The reference register width is exactly
\[
 Q_{\mathcal N}=\max_s\overline w_s^{\mathcal N},
\]
where $s$ ranges over the derivation's retained contextual semantic
faces (Appendix~\ref{app:compilation-soundness}); it is determined by
the canonical derivation and can exceed the root-interface width when a
prescribed recursive-binary face is wider.
No block or splice adds emitter-owned workspace, a derivation tag, or a
carrier bank.  When a structural atom's recursive-binary source and
target codeword images differ by a genuine basis-word permutation,
planned endpoint transport lowers that permutation exactly; literal
wire-address permutations remain symbolic, so genuine word
re-encodings cost gates but no additional wires.  The prototype lowers
syntactically exposed nested branches directly to \texttt{NPlusMap},
its $n$-ary sum-map node, and may emit $V_a=e^{i\theta J}$ directly at
a first-order quantum endpoint; these optimizations lie outside the
recursive-binary reference theorem.
\end{remark}

The prototype lays out sums in an $\plus$-outermost normalized
logical frame, defined per sum node (tensor nodes are laid out
componentwise, by juxtaposition): a live tag precedes its shared
payload, tensor
components of the selected summand lie inside that payload, and
padding comes last.  In this implementation frame, the multiplicative structural
isomorphisms, additive associativity, distributivity, and their inverses
emit no gates.  Composition through unequal-width distributors is
supported exactly.  At an incident endpoint, a frame mismatch that is only a wire
permutation is absorbed symbolically; a genuine computational-basis word
permutation emits an exact basis-word adapter.  Additive symmetry changes tag values, so binary
$\sigma^\plus$ emits $X$ on the tag and a permutation of more summands
emits the corresponding tag-register permutation.
Appendix~\ref{app:sum-encoding} states the exact invariant.

The surplus states are the physical bit patterns not corresponding to
a well-formed type encoding: nonzero padding in a smaller branch
payload and invalid states inherited from branch layouts.  The type-valid subspace $V_T \subseteq
\mathbb{C}^{2^{\nwires(T)}}$ is the span of the well-formed
encodings, and the layout isomorphisms $\lay_T^-, \lay_T^+$
identify it with the two polarities of the flat type envelope.
A term's canonical normal derivation can select a codeword sector
that is a proper subspace of $V_T$; in particular, branch pairing
correlates indices that the type layout leaves independent.

\subsection{Compiler Correctness}
\label{sec:compiler-correctness}

Fix a Source judgment $\Gamma\sjudge t:A$ of shape
$\mathfrak J$.  For the canonical derivation $\mathcal N_t$ of
$N_t=\mathsf{NF}_{\mathrm{LO}}(t^\circ)$, let
$C_{\mathcal N_t}^\pm$ and $\lambda_{\mathcal N_t}^\pm$ be the
common-register code sectors and restricted layouts of
Definition~\ref{def:layout-iso}
(Appendix~\ref{app:compilation-soundness}); $\lambda_{\mathcal N_t}^\pm$
restricts the flat layout $\lay_{\mathfrak J}^\pm$, itself assembled
from the type layouts $\lay_T^\pm$, to the code sector.  The emitter
returns a target program
$G_{\mathcal N_t}\in\mathsf{TCirc}_{Q_{\mathcal N_t}}$ (the target IR
of Appendix~\ref{app:compilation-soundness})
and total unitary frame extensions $L_{\mathcal N_t}^\pm$ chosen by
the fixed finite-dimensional unitary-extension convention
\textup{(INV-4)} of Appendix~\ref{app:sum-encoding};
its framed operator is
$\Framed{t}:=\Framed{\mathcal N_t}
 =(L_{\mathcal N_t}^{+})^{\dagger}
  \mathcal U_{Q_{\mathcal N_t}}(G_{\mathcal N_t})
  L_{\mathcal N_t}^{-}$,
a unitary on the common padded register.

Here \textup{(BC)} is the external obligation of
\S\ref{subsubsec:unitary-primitives}: declared quantum artifacts and
the target constructors of the reference emitter realize their
stated operators.

\noindent
\begin{minipage}[t]{0.60\linewidth}
\begin{theorem}[Circuit Realization]
\label{thm:compilation-soundness-main}
Under assumption~\textup{(BC)} of
\S\ref{subsubsec:unitary-primitives},
$\mathcal U(G_{\mathcal N_t})$ maps
$L_{\mathcal N_t}^{-}(C_{\mathcal N_t}^{-})$
unitarily onto
$L_{\mathcal N_t}^{+}(C_{\mathcal N_t}^{+})$, and
\[
  \Framed{t}\!\upharpoonright_{C_{\mathcal N_t}^-}
  \;=\;
  (\lambda_{\mathcal N_t}^+)^{-1}
  \SEM{\Gamma\sjudge t:A}\lambda_{\mathcal N_t}^- .
\]
\end{theorem}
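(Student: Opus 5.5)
The plan is to prove a stronger, derivation-indexed statement by well-founded induction over canonical internal normal derivations $\mathcal N$. The statement has two parts. First, $\mathcal U(G_{\mathcal N})$ maps $L_{\mathcal N}^-(C_{\mathcal N}^-)$ unitarily onto $L_{\mathcal N}^+(C_{\mathcal N}^+)$. Second, $\Framed{\mathcal N}\!\upharpoonright_{C_{\mathcal N}^-}=(\lambda_{\mathcal N}^+)^{-1}U_{\mathcal N}\lambda_{\mathcal N}^-$. The theorem is the instance $\mathcal N=\mathcal N_t$. Corollary~\ref{cor:source-normalization} and Lemma~\ref{lem:canonical-normal-retyping} supply $\mathcal N_t$, and Definition~\ref{def:canonical-form-semantics} identifies $U_{\mathcal N_t}$ with $\SEM{\Gamma\sjudge t:A}$. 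The induction measure is the one already used to make the semantics well defined: proper subderivations, together with strictly smaller $\Phi$ for the applied-map and coherent-sharing clauses. The emitter recurses along the same structure, so every emission step falls under an induction hypothesis. Once the equation is known on the code sector, unitarity onto the image follows from Theorem~\ref{thm:boundary-unitarity}: $U_{\mathcal N}$ is unitary $B^-_{\mathcal N}\to B^+_{\mathcal N}$, the $\lambda^\pm$ are isometric bijections onto $C^\pm$, and the $L^\pm$ are unitary.

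The leaf cases come first. For \textsc{Var}, $\mathsf{yank}_A$ is a direct sum of canonical tensor symmetries. The emitter realizes these as symbolic wire-address permutations, which emit no gates, and the identity is to be checked on codewords through the type layouts $\lay_T^\pm$. For structural atoms, $\mathsf{Rig}(s)$ is a basis permutation. The emitter either absorbs it as a wire relabelling or lowers it by planned endpoint transport as an exact basis-word adapter. In either case one compares computational-basis codewords on both sides using the invariant of Appendix~\ref{app:sum-encoding}. For \textsc{Exp}, (BC) gives directly that the declared artifact realizes $V_{\theta,J}$ on the canonical carrier of $P$, with no workspace. Since the graph boundary is $(I\tensor V)\mathsf{yank}_P$, the case reduces to the \textsc{Var} case followed by $V$ on the positive leg.

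For the constructors: \textsc{$\lmark$-I} has identical denotation and identical emission up to the layout juxtaposition of arg and res, so it is immediate. \textsc{$\tensor$-I} uses the (BC) clause for parallel composition, together with the fact that tensor layouts are literal juxtaposition. \textsc{$\plus$-Map} uses the (BC) clauses for exact phases, controls, and lifting. One branch circuit is controlled on tag value $0$ with phase $\alpha$, and the other on tag value $1$ with phase $\beta$. The inactive-context completions $Y_{\Gamma_i}$ are realized by idle wires. Because the tag is preserved and the branch targets are first-order, the two controlled blocks act on orthogonal codeword subspaces, so their product restricts to the direct sum.

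The main obstacle is the eliminator cases $\mathsf{AppCut}$, $\mathsf{TenCut}$ and $\mathsf{MapApply}$. In these cases the semantics contracts producer and consumer graphs over a shared port, while the circuit is a sequential composition of the two emitted subcircuits on a single register. I would use CarrierPlan to show that the producer's positive code sector on the matched port is laid out on exactly the same physical addresses and frame as the consumer's negative sector for that port. Under that identification, sequential composition of the framed operators realizes composition of coordinate unitaries. Lemma~\ref{lem:typed-graph-composition} and Lemma~\ref{lem:whole-port-collapse} then identify that composite with the classified cut, including at higher-order ports, where the port's negative and positive halves are routed in opposite directions. Spectator factors are left untouched by the subcircuits, which handles the remaining context. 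For $\mathsf{MapApply}$, the blockwise closure over orthogonal branches follows the same pattern as the \textsc{$\plus$-Map} case. Its formal branch applications are covered by the induction hypothesis at smaller $\Phi$. The register width $Q_{\mathcal N}$, taken as the maximum over faces, ensures that every sub-emission fits into the common padded register without any extra workspace.
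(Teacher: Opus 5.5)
The leaf cases and your final step from the realization equation to unitarity onto the image are fine. The eliminator step, however, has a genuine gap at higher-order ports.

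\textbf{Higher-order cuts are not sequential compositions.} You realize every cut as ``producer program; consumer program'' through a shared midpoint. At a first-order port this is right: it is the single midpoint $\mu_\rho$ of \textup{(Plan-midpoint)} and \textup{(Plan-Splice-prog)}. At a port $P$ containing $\lmark$ it fails. The negative factors of $\mathsf{sgn}(P)$ are inputs of the producer and outputs of the consumer. A producer-then-consumer program would therefore need the consumer's output before the producer runs. Lemma~\ref{lem:typed-graph-composition} is a first-order, non-block statement, so it cannot be cited there either. The missing idea is Lemma~\ref{lem:classifier-totality}. In Source normal forms and their lower-$\Phi$ branches, the only higher-order cuts are \textup{(C-var-app)} and \textup{(C-var-whole)}, with \textup{(C-str)} confined to the coherent-sharing traversal. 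Hence one side of every higher-order cut is a variable spine, i.e.\ pure wiring. The paper accordingly lowers the rows of $\mathsf{Net}(\rho)$ as follows:
\begin{itemize}
\item \textup{(C-var-app)} and \textup{(C-var-whole)}: \textup{(Plan-Par)} on the disjoint live operand and consumer factors, followed by a gate-free, unchanged-image selected-coordinate view (\textup{(Net-Repart)}, \textup{(Plan-Eta-program)});
\item \textup{(C-str)}: $\mathsf{StrPre}$ followed by a strictly prior planned cut.
\end{itemize}
No higher-order port is ever contracted by sequencing.

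\textbf{Your induction hypothesis is the wrong one to carry.} It concerns each subderivation's own artifact: its own $G_{\mathcal N}$, its INV-4 frames $L_{\mathcal N}^\pm$, its placements and its code sectors. These are chosen independently for producer and consumer, and for the two branches of a block. Nothing in that hypothesis makes the producer's $L^+$ agree with the consumer's $L^-$ on the matched port. Nor does it place a shared context at the same payload address in both branches. Invoking CarrierPlan at the cut cannot repair this once the children are sealed in their own presentations; that would need an alignment or relocation step your hypothesis does not supply.

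The paper instead fixes the plan top-down \emph{before} emission:
\begin{itemize}
\item one midpoint is supplied to both children of a cut;
\item a block is controlled by its existing semantic tag wire;
\item branch payload faces put shared occurrences at one parent address via $\beta_i^\epsilon$.
\end{itemize}
It then proves the contextual invariant \textup{(PlanEmit-IH)} of Theorem~\ref{thm:chart-directed-plan-emit}. That invariant is stated for occurrences in parent-assigned incidences and in chart coordinates. The through-word-preserving basis-word transport of Lemma~\ref{lem:planned-endpoint-transport} connects each natural incidence to its planned one without extra wires. The paper stresses that these presentations differ from those obtained by compiling the subderivation as a new root. Only at the root is \textup{(PlanEmit-chart)} converted, via the semantic chart square, into the sector equation of Lemma~\ref{lem:compilation-nf}.
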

\end{minipage}\hfill
\begin{minipage}[t]{0.36\linewidth}
\vspace{0em}
\centering
\footnotesize
$\begin{array}{ccc}
B_{\mathcal N_t}^-
  & \xrightarrow{\;\SEM{\Gamma\sjudge t:A}\;}
  & B_{\mathcal N_t}^+ \\[3pt]
{\scriptstyle\lambda_{\mathcal N_t}^-}\uparrow
  & & \uparrow{\scriptstyle\lambda_{\mathcal N_t}^+} \\[3pt]
C_{\mathcal N_t}^-
  & \xrightarrow{\;\Framed{t}\;}
  & C_{\mathcal N_t}^+
\end{array}$
\end{minipage}

\medskip\noindent
Compiler correctness is therefore a change-of-coordinates statement
between the selected semantic boundary spaces and their physical code
sectors.  The proof is in Appendix~\ref{app:compilation-soundness}.

\paragraph{Qubit-register specialization.}
Write \(\mathsf{QReg}_n=\QBool^{\tensor n}\).  A closed register
endomorphism is compiled by applying it to a symbolic register input,
normalizing, and compiling the resulting endpoint derivation.

\begin{corollary}[Qubit-register execution]
\label{cor:first-order-readback-main}
Assume \textup{(BC)}.  For every closed derivation
\[
  \cdot\sjudge
  t:\mathsf{QReg}_n\lmark\mathsf{QReg}_n,
\]
let \(U_t\) be the unitary of
Corollary~\ref{cor:register-unitarity}.  The reference compiler emits
a target program whose induced action on the canonical data register
is \(U_t\).
\end{corollary}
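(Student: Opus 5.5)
The plan is to reduce the corollary to Theorem~\ref{thm:compilation-soundness-main}, specialized to the open endpoint judgment $x:\mathsf{QReg}_n\sjudge t\,x:\mathsf{QReg}_n$, and then use Corollary~\ref{cor:register-unitarity} to identify the boundary denotation with $U_t$. First, from the closed derivation of $t$, obtain the Source derivation of $t\,x$ by \textsc{$\lmark$-E} with the variable rule. This is exactly the ``symbolic register input'' the compiler uses. Let $N=\mathsf{NF}_{\mathrm{LO}}((t\,x)^\circ)$ and let $\mathcal N$ be its canonical retyping. Corollary~\ref{cor:source-normalization} guarantees that $N$ exists and is irreducible. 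By Definition~\ref{def:canonical-form-semantics}, $\SEM{x:\mathsf{QReg}_n\sjudge t\,x:\mathsf{QReg}_n}=\SEM{x:\mathsf{QReg}_n\intjudge N:\mathsf{QReg}_n}_{\mathrm{NF}}$, and Corollary~\ref{cor:register-unitarity} says this presents $U_t$.

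The second step is to make ``presents'' concrete at this first-order shape. The judgment shape is $\mathsf{sgn}(\mathsf{QReg}_n)^*\tensor\mathsf{sgn}(\mathsf{QReg}_n)$. Every port of $\varphi_{\mathfrak J}$ coming from the context is negative and every port from the result is positive. Hence $\partial^-$ is the rig-normal-form envelope of the input register and $\partial^+$ is that of the output register, up to the fixed canonical reassociation. I will argue that the selected spaces $B^\pm_{\mathcal N}$ are the full envelopes, each identified with $\sem{\QBool}^{\tensor n}\cong\mathbb C^{2^n}$. Under this identification the unitary of Theorem~\ref{thm:boundary-unitarity} is $U_t$, with the coordinate transport fixed by the readback of Corollary~\ref{cor:register-unitarity}.

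The third step is to apply Theorem~\ref{thm:compilation-soundness-main} under (BC). It gives $\Framed{\mathcal N}\!\upharpoonright_{C^-}=(\lambda^+)^{-1}\SEM{\cdot}\lambda^-$. It remains to check that the layout maps are the canonical data-register layout for $\QBool^{\tensor n}$. By Table~\ref{tab:wire-layouts}, each $\QBool$ is one tag qubit with an empty payload, and tensor is juxtaposition. So $\lay^\pm_{\mathsf{QReg}_n}$ is the identity on computational basis words of $n$ qubits, and $V_T$ has no padding. Any extra width $Q_{\mathcal N}>n$ coming from wider internal faces must then be shown to start and end in the prescribed zero state. The frame extensions $L^\pm$ must also be shown to restrict to the identity on the data coordinates, so that $\mathcal U(G)$ itself acts as $U_t$ there rather than only the framed operator doing so.

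I expect this last point to be the main obstacle: passing from the framed operator $\Framed{\mathcal N}$ to the raw gate operator acting on the canonical data register. My first attempt will use the fact that at a pure tensor of $\QBool$s the type-canonical logical frame is trivial, so (INV-4) picks $L^\pm$ equal to the identity on the code sector. I will then check that the code sectors $C^\pm$ are exactly the data register tensored with zero padding on any surplus wires of $Q_{\mathcal N}$. Given this, the first conclusion of Theorem~\ref{thm:compilation-soundness-main}, that $\mathcal U(G)$ maps $C^-$ onto $C^+$, yields the claimed induced action $U_t$.
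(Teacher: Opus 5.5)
Your skeleton is the right one: Source application to a fresh $x$, normalization, the readback of Corollary~\ref{cor:register-unitarity}, then Theorem~\ref{thm:compilation-soundness-main}. The step that actually carries the corollary, however, rests on a misreading of the boundary spaces.

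\textbf{The envelope and the selected sectors.} For $\mathfrak J=(x{:}P\intjudge P)$ with $P=\mathsf{QReg}_n$, the operation $\partial^\pm$ sorts each monomial by sign. It does not discard the sum indices: it keeps the whole index set of the rig normal form of $\mathsf{sgn}(P)^*\tensor\mathsf{sgn}(P)$, which consists of pairs of input and output words. So $\mathcal E^\pm_{\mathfrak J}\cong\mathbb C^{4^n}$, and $\lay^\pm_{\mathfrak J}$ lives on the $2n$-qubit judgment layout; it is not $\lay_{\mathsf{QReg}_n}$. The selected $B^\pm_{\mathcal N}$ are not the full envelopes. By \textup{(Chart-rank)} they are $2^n$-dimensional graph sectors inside it. For instance, when $N=a\,x$ they are $\mathcal Y_P^-$ and $(I\tensor V_a)\mathcal Y_P^+$, the latter being the graph of $V_a$. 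They are identified with $\sem P$ only through $\varepsilon^\pm=\mathsf{src}_P,\mathsf{tgt}_P$.

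\textbf{Why the frame argument fails.} Consequently $\lambda^\pm_{\mathcal N}$ is built from $\lay^\pm_{\mathfrak J}$, and the code $\widetilde C^+_{\mathcal N}$ is itself a graph of $U_t$. Nothing in the realization theorem makes $C^\pm_{\mathcal N}$ the padded data register. Nor can \textup{(INV-4)} make $L^\pm$ the identity on the code sector. INV-4 only extends, over the complement, a map already fixed on the code by $L^\epsilon p^\epsilon=J^\epsilon_{\mathcal R,L}$. So the idea that frames are trivial for tensors of $\QBool$ does not get you from $\Framed{\mathcal N}$ to the raw gate operator.

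\textbf{The missing ingredient: the root clause of the carrier plan.} Put
$\rho^\epsilon:=L^\epsilon_{\mathcal E}(\lambda^\epsilon_{\mathcal E})^{-1}(\varepsilon^\epsilon_{\mathcal E})^{-1}\lay_P$.
\begin{enumerate}
\item Lemma~\ref{lem:compilation-nf} together with \textup{(Register-readback)} immediately gives $(\rho^+)^\dagger\,\mathcal U(G_{\mathcal E})\,\rho^-=\lay_P^{-1}U_t\lay_P$. The real work is identifying $\rho^\epsilon$.
\item Use $(\lambda^\epsilon)^{-1}=p^\epsilon(\ell^\epsilon)^{-1}$, the singleton-root chart $(\varepsilon^\epsilon)^{-1}=g^\epsilon$, and \textup{(Plan-chart)} $\widehat g^\epsilon=(\ell^\epsilon)^{-1}g^\epsilon$. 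These give $\rho^\epsilon=L^\epsilon p^\epsilon\widehat g^\epsilon\lay_P$.
\item \textup{(Plan-logical-incidence)} rewrites this as $\eta^\epsilon\lay_P$.
\item \textup{(Plan-root-word)} fixes $\eta^-=\eta^+=\mathsf{Pad}_{\kappa_P}\zeta_P$.
\end{enumerate}
Thus it is $L^\pm(C^\pm)$, not $C^\pm$, that equals the zero-padded canonical data register. That fact is a prescribed choice of the plan at singleton first-order roots, not a triviality of frames. It also settles your concern about surplus width $Q_{\mathcal N}>n$: the zero padding is built into $\mathsf{Pad}_{\kappa_P}$ rather than needing a separate verification.
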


The precise change of coordinates is given in
Appendix~\ref{app:qubit-execution}.

\subsection{Compilation Rules}
\label{sec:impl-terms}

After LO normalization and canonical retyping, the canonical normal
derivation \(\Gamma\intjudge N:A\) reached from the Source term emits
a physical gate list together with symbolic input and output wire
layouts, and \(\Framed{\mathcal N}\) interprets the gate list between
those layouts.  Layout-only structural maps update the metadata;
additive symmetry instead contributes its tag permutation to the gate
list.  Later gates are placed through the current layout.
Appendix~\ref{app:compilation-soundness} gives the formal artifact
definition.  Table~\ref{tab:compilation-rules} summarizes the clauses.

\begin{table}[t]
\caption{Compilation of the core constructs.  Colored boxes are the
wire bundles of types; left is the context (input wires) and right
the result (output wires).  The five multiplicative constructs are
pure wiring; the $\plus$-map guards $f$ under root tag~$0$
(anti-control) and $g$ under tag~$1$ (control).}
\label{tab:compilation-rules}
\centering
\small
\setlength{\tabcolsep}{6pt}
\begin{tabular}{@{}ccc@{}}
\toprule
\begin{tabular}{@{}c@{}}$x : A \vdash x : A$\\[3pt]
\scalebox{1.4}{\begin{tikzpicture}[x=5mm, y=3mm, font=\tiny]
    \draw[fill=blue!20] (0,0) rectangle (1.5,1);
    \node at (0.75,0.5) {$A$};
    \draw[thick] (1.5,0.5) -- (2.5,0.5);
    \draw[fill=blue!20] (2.5,0) rectangle (4,1);
    \node at (3.25,0.5) {$A$};
  \end{tikzpicture}}\end{tabular}
&
\begin{tabular}{@{}c@{}}$t \tensor u$\\[3pt]
\scalebox{1.4}{\begin{tikzpicture}[x=4mm, y=3mm, font=\tiny]
    \draw[fill=gray!20] (0,1.5) rectangle (1.5,2.5);
    \node at (0.75,2) {$\Gamma_1$};
    \draw[fill=gray!20] (0,0) rectangle (1.5,1);
    \node at (0.75,0.5) {$\Gamma_2$};
    \draw[fill=white] (2,1.5) rectangle (3.5,2.5);
    \node at (2.75,2) {$\mathcal{C}_t$};
    \draw[fill=white] (2,0) rectangle (3.5,1);
    \node at (2.75,0.5) {$\mathcal{C}_u$};
    \draw[thick] (1.5,2) -- (2,2);
    \draw[thick] (1.5,0.5) -- (2,0.5);
    \draw[fill=blue!20] (4,1.5) rectangle (5.2,2.5);
    \node at (4.6,2) {$A$};
    \draw[fill=green!20] (4,0) rectangle (5.2,1);
    \node at (4.6,0.5) {$B$};
    \draw[thick] (3.5,2) -- (4,2);
    \draw[thick] (3.5,0.5) -- (4,0.5);
  \end{tikzpicture}}\end{tabular}
&
\begin{tabular}{@{}c@{}}$\letpair{x}{y}{t}{u}$\\[3pt]
\scalebox{1.4}{\begin{tikzpicture}[x=4mm, y=3mm, font=\tiny]
    \draw[fill=gray!20] (0,1) rectangle (1.2,2);
    \node at (0.6,1.5) {$\Gamma_1$};
    \draw[fill=white] (1.5,1) rectangle (2.7,2);
    \node at (2.1,1.5) {$\mathcal{C}_t$};
    \draw[thick] (1.2,1.5) -- (1.5,1.5);
    \draw[fill=blue!20] (3,1.5) rectangle (3.8,2);
    \node at (3.4,1.75) {$A$};
    \draw[fill=green!20] (3,1) rectangle (3.8,1.5);
    \node at (3.4,1.25) {$B$};
    \draw[thick] (2.7,1.7) -- (3,1.75);
    \draw[thick] (2.7,1.3) -- (3,1.25);
    \draw[fill=gray!20] (3,0) rectangle (3.8,0.7);
    \node at (3.4,0.35) {$\Gamma_2$};
    \draw[fill=white] (4.3,0) rectangle (5.5,2);
    \node at (4.9,1) {$\mathcal{C}_u$};
    \draw[thick] (3.8,1.75) -- (4.3,1.7);
    \draw[thick] (3.8,1.25) -- (4.3,1.3);
    \draw[thick] (3.8,0.35) -- (4.3,0.35);
    \draw[fill=purple!20] (5.8,0.5) rectangle (6.6,1.5);
    \node at (6.2,1) {$C$};
    \draw[thick] (5.5,1) -- (5.8,1);
  \end{tikzpicture}}\end{tabular}
\\[6pt]
\begin{tabular}{@{}c@{}}$\lambda x.\, t : A \lmark B$\\[3pt]
\scalebox{1.4}{\begin{tikzpicture}[x=4mm, y=3mm, font=\tiny]
    \draw[fill=gray!20] (0,0) rectangle (1.2,1);
    \node at (0.6,0.5) {$\Gamma$};
    \node[left] at (0,1.9) {\tiny$(x)$};
    \draw[thick] (0,1.9) -- (1.8,1.2);
    \draw[fill=white] (1.8,0) rectangle (3.2,1.5);
    \node at (2.5,0.75) {$\mathcal{C}_t$};
    \draw[thick] (1.2,0.5) -- (1.8,0.5);
    \draw[fill=blue!20] (3.8,0) rectangle (4.8,1);
    \node at (4.3,0.5) {$B$};
    \draw[thick] (3.2,0.5) -- (3.8,0.5);
    \draw[fill=red!20] (3.8,1.4) rectangle (4.8,2.2);
    \node at (4.3,1.8) {$A$};
    \draw[thick, rounded corners=3pt] (0,1.9) -- (-0.3,1.9) -- (-0.3,2.8) -- (4.3,2.8) -- (4.3,2.2);
  \end{tikzpicture}}\end{tabular}
&
\begin{tabular}{@{}c@{}}$f\, u$\\[3pt]
\scalebox{1.4}{\begin{tikzpicture}[x=4mm, y=3mm, font=\tiny]
    \draw[fill=gray!20] (0,1.8) rectangle (1.2,2.6);
    \node at (0.6,2.2) {$\Gamma_1$};
    \draw[fill=gray!20] (0,0) rectangle (1.2,0.8);
    \node at (0.6,0.4) {$\Gamma_2$};
    \draw[fill=white] (1.6,1.8) rectangle (2.8,2.6);
    \node at (2.2,2.2) {$\mathcal{C}_f$};
    \draw[thick] (1.2,2.2) -- (1.6,2.2);
    \draw[fill=white] (1.6,0) rectangle (2.8,0.8);
    \node at (2.2,0.4) {$\mathcal{C}_u$};
    \draw[thick] (1.2,0.4) -- (1.6,0.4);
    \draw[fill=red!20] (3.2,2.3) rectangle (4,2.8);
    \node at (3.6,2.55) {$A$};
    \draw[fill=blue!20] (3.2,1.6) rectangle (4,2.1);
    \node at (3.6,1.85) {$B$};
    \draw[thick] (2.8,2.4) -- (3.2,2.55);
    \draw[thick] (2.8,2.0) -- (3.2,1.85);
    \draw[fill=red!20] (3.2,0.2) rectangle (4,0.7);
    \node at (3.6,0.45) {$A$};
    \draw[thick] (2.8,0.4) -- (3.2,0.45);
    \draw[thick, rounded corners=3pt] (4,0.45) -- (4.8,0.45) -- (4.8,2.55) -- (4,2.55);
    \draw[thick] (4,1.85) -- (5.4,1.85);
  \end{tikzpicture}}\end{tabular}
&
\begin{tabular}{@{}c@{}}$f \plus g : A \plus B \lmark C \plus D$\\[3pt]
\scalebox{1.4}{\begin{tikzpicture}[x=5mm, y=4mm, font=\tiny]
    \draw[fill=orange!25] (0,1.6) rectangle (0.8,2.3);
    \node at (0.4,1.95) {tag};
    \draw[fill=blue!12] (0,0) rectangle (0.8,1);
    \node at (0.4,0.5) {$A/B$};
    \draw[thick] (0.8,1.95) -- (6,1.95);
    \draw[thick] (0.8,0.5) -- (6,0.5);
    \draw[fill=white] (2,0.25) rectangle (3,0.75);
    \node at (2.5,0.5) {$\mathcal{C}_f$};
    \fill (2.5,1.95) circle (1pt);
    \draw[thick] (2.5,1.95) -- (2.5,0.75);
    \node[above] at (2.5,2.05) {\tiny$0$};
    \draw[fill=white] (4,0.25) rectangle (5,0.75);
    \node at (4.5,0.5) {$\mathcal{C}_g$};
    \fill (4.5,1.95) circle (1pt);
    \draw[thick] (4.5,1.95) -- (4.5,0.75);
    \node[above] at (4.5,2.05) {\tiny$1$};
    \draw[fill=orange!25] (6,1.6) rectangle (6.8,2.3);
    \node at (6.4,1.95) {tag};
    \draw[fill=blue!12] (6,0) rectangle (6.8,1);
    \node at (6.4,0.5) {$C/D$};
  \end{tikzpicture}}\end{tabular}
\\[2pt]
\bottomrule
\end{tabular}

\end{table}

\paragraph{Multiplicatives.}
Multiplicative clauses add neither gates nor work\-space: variables
re\-target their planned ranges; tensor introduction emits its premises on
disjoint planned ranges; abstraction re-exports the argument range on
the output boundary; and tensor elimination and application emit
producer and consumer premises into the common midpoint fixed by
\textup{CarrierPlan}.  Thus only the premise circuits contribute
gates, and multiplicative equations are circuit-invisible: the fixed
leftmost--outermost normalizer contracts $\beta$-redexes before
emission, so presentations with the same canonical normal form emit
the same topology.

\paragraph{Sum constructs as coherent control.}
\label{sec:impl-case}
Sum maps and Source cases compile branchwise programs into coherent
controlled gates.  A Source case expands to the
composite of a distributor, a $\plus$-map, and an inverse
distributor (\S\ref{cohRouting}); the $\plus$-map row of
Table~\ref{tab:compilation-rules} supplies its controlled branch
computation.  The root tag coherently controls the two branch
circuits: the left branch fires under tag $\ket{0}$, implemented by an
$X$-sandwich ($X$; controlled-$G_f$; $X$), and the right branch under
standard control, so a tag in superposition executes both branches
coherently.  The payload bundle is shared and tag-dependent: the
labels $A/B$ and $C/D$ in the diagram are two interpretations of one
physical bundle, not simultaneous disjoint wires, and unequal branch
widths use recorded padding within it
(Appendix~\ref{app:sum-encoding} states the root-tag and padding
invariant).  Gates and static phases act branch-locally, inside the
selected tag block, while context ports of the other branch bypass
them; the $X$-sandwich toggles the tag only transiently, so the whole
circuit is the identity on the tag register.

\paragraph{Coherent sharing.}
When both branches use a shared component, CarrierPlan first gives the
component and its two branch restrictions one parent address in the
common tagged payload.  Both branches are then emitted directly into
those inherited presentations, and the component is connected once
through the single planned midpoint at the whole sum port.
Appendix~\ref{app:compilation-soundness} gives the direct recursion and
its well-foundedness proof.

\paragraph{$n$-ary sums.}
For $n\geq2$ the reference emitter recursively emits the binary case
rule along the fixed left-associated elaboration, testing one root tag
at each binary node (for $n=1$ the derived case is direct application,
with neither tag nor control), whereas the executable prototype uses
one flat $\lceil\log_2 n\rceil$-bit equality control
(Remark~\ref{rem:ref-vs-exec}); Appendix~\ref{app:sum-encoding}
verifies the valid-subspace invariant for the reference construction
and for each structural combinator.

\subsection{Running Example: Compiling QSwitch}
\label{sec:impl-qswitch}

\begin{figure}[t]
\begin{minipage}[c]{0.40\columnwidth}
\centering
\resizebox{\linewidth}{!}{%
\begin{quantikz}[column sep=6pt, row sep={0.5cm,between origins}]
\lstick{\tiny$q_0$} & \qw      & \qw       & \swap{2}  & \qw      & \qw       & \qw       & \qw \\
\lstick{\tiny$q_1$} & \qw      & \swap{4}  & \qw       & \qw      & \swap{4}  & \qw       & \qw \\
\lstick{\tiny$q_2$} & \qw      & \qw       & \targX{}  & \qw      & \qw       & \qw       & \qw \\
\lstick{\tiny$q_3$} & \qw      & \qw       & \qw       & \qw      & \qw       & \swap{2}  & \qw \\
\lstick{\tiny$q_4$} & \gate{X} & \ctrl{-3} & \ctrl{-2} & \gate{X} & \ctrl{-3} & \ctrl{-1} & \qw \\
\lstick{\tiny$q_5$} & \qw      & \targX{}  & \qw       & \qw      & \targX{}  & \targX{}  & \qw \\
\lstick{\tiny$q_6$} & \qw      & \qw       & \qw       & \qw      & \qw       & \qw       & \qw \\
\lstick{\tiny$q_7$} & \qw      & \qw       & \qw       & \qw      & \qw       & \qw       & \qw
\end{quantikz}%
}
\end{minipage}\hfill
\begin{minipage}[c]{0.57\columnwidth}
\caption{Eight-wire abstract $\qSwitch$ artifact.  Immediately before
the coherent case, the recorded placement assigns $q_0,q_1$
(argument, result) to the boundary of $f:A\lmark A$, $q_2,q_3$
(argument, result) to the boundary of $g:A\lmark A$, $q_4$ to the Boolean tag $b$, and $q_5$ to the payload
$x$.  Wires $q_6,q_7$ are the result side of the surrounding
curried-function boundary; they are gate-idle because the final
structural movement is retained as the pending output permutation
$\pi_{\mathrm{out}}^{\mathrm{new}\to\mathrm{old}}=(6,7,0,1,2,3,4,5)$,
and they are not auxiliary wires.  No additional function-layout
wires are allocated; all eight wires belong to the semantic boundary.
The $X$-sandwich on $q_4$ selects the $b=0$ sector for the first two
Fredkin gates, which splice the payload through $g$ and then $f$; the
last two Fredkin gates select the opposite order.}
\label{fig:qswitch-emitted}
\end{minipage}
\end{figure}

With $A=\QBool$, the PPX-elaborated abstract quantum switch has type
$(A\lmark A)\lmark(A\lmark A)\lmark
 \bigl((\QBool\otimes A)\lmark(\QBool\otimes A)\bigr)$
and compiles to the eight-wire circuit of
Figure~\ref{fig:qswitch-emitted}, whose caption records the wire
assignment; the operations supplied for $f$ and $g$ occupy the
application boundaries routed by its Fredkin gates.

%% file: datatypes-new.tex
\section{Finite Control Datatypes}
\label{sec:datatypes}

Granthi's finite label types and staged reversible-operation bindings
elaborate to canonical $n$-ary sums in the coherent sum-and-wiring core
(\S\S\ref{sec:core-language}--\ref{sec:compilation}).
Their metatheoretic and semantic properties follow from the
elaboration theorems of Appendix~\ref{app:elaboration-proofs} under
the corresponding core hypotheses~\citep{pfpl}.

\paragraph{Surface syntax and typing.}
\label{subsec:surface-datatypes}
Assume a datatype environment $\mathcal D$ mapping each datatype name
to an ordered nonempty list
$(l_0,\ldots,l_{n-1})$ of pairwise-distinct labels; thus every
declaration below has $n\geq1$.  Resolved surface identifiers for
closed reversible terms are recorded only in an elaboration environment
$\Sigma$.  They are staging nodes, not Source terms: elaboration
replaces each occurrence directly by its assigned closed Source term.
\[
\begin{array}{l}
\mathbf{datatype}\ \NAME\ \mathbf{where} \\
\qquad \mathbf{labels}\ l_0 \mid \cdots \mid l_{n-1}.
\end{array}
\]
Labels are \emph{not constructors}: there are no injections
$l_i : \base \to \NAME$, since they are irreversible.
Values of~$\NAME$ arise only from inputs or from closed reversible
terms supplied at staging.  Surface identifiers are fully qualified
(equivalently, resolution is recorded in the typechecked surface
syntax).  The surface language is the Source language of
\S\ref{subsubsec:source-language}, extended with these staging nodes
and datatype cases.  Every staging node is eliminated before a Source
term is produced.  Derived forms expand before elaboration.

The complete surface typing rules and canonical elaboration appear in
Appendix~\ref{app:elaboration-proofs}.  In outline, a datatype with
$n$ labels elaborates to $\Qn n$; a resolved staging identifier $f$ is
replaced by the closed Source term $\Sigma(f)$; and an $n$-ary datatype
case elaborates to the fixed Source form $\mathsf{TagCase}_n$.  We
write $\ElabTy S$ and $\mathsf{Elab}_\Sigma(e)$ for the resulting
translations.

\paragraph{Metatheory.}
The meaning of a surface program $e$ is the canonical-form meaning of
the Raw expansion $(\mathsf{Elab}_\Sigma(e))^\circ$: elaborate to
Source, expand to Raw, and then apply $\mathsf{NF}_{\mathrm{LO}}$.
Under a well-formed environment $\Sigma$, elaboration is a typed
function
(Appendix~\ref{app:elaboration-proofs},
Theorem~\ref{thm:elab-sound-full}, Corollary~\ref{thm:elab-unique-full},
and Corollary~\ref{cor:surface-transfer}).
Normalization (Theorem~\ref{thm:normalization}), determinacy
(Theorem~\ref{thm:determinacy}), and boundary unitarity
(Theorem~\ref{thm:boundary-unitarity}) are therefore inherited by
well-typed surface programs; compilation soundness
(Theorem~\ref{thm:compilation-soundness-main}) is inherited under
backend correctness~\textup{(BC)}.

\paragraph{Staging perspective.}
\label{subsec:staging}
Staging is unrestricted compile-time program construction, admitted
exactly when its elaboration is
Source-well-typed~\citep{TahaSheard2000,DaviesPfenning2001}.

%% file: toolchain.tex
\section{Implementation and Validation}
\label{sec:toolchain}

Granthi's currently supported executable fragment is implemented
end-to-end.  The OCaml PPX elaborates Source notation into a sealed
(abstract-interface), intrinsically typed linear GADT, exports a higher-order JSON Core IR,
and invokes a Python backend that maintains symbolic layouts and emits
\texttt{pytket} circuits.  The GADT enforces linearity, context
splitting, and the first-order restrictions; the frontend exposes
Source rather than Raw sum formation.  Wire permutations remain
metadata, while genuine codeword mismatches emit exact basis-word
adapters as described in \S\ref{sec:compilation}.  The prototype
implements the flat lowering of Remark~\ref{rem:ref-vs-exec} rather
than the reference emitter verified in
Appendix~\ref{app:compilation-soundness}.
The prototype was developed with LLM coding assistance.

\paragraph{Validation.}
\label{subsec:impl-validation}
Unitary equalities use the artifact's numerical comparison of compiled
unitaries up to global phase; golden tests are byte-for-byte
regressions against committed outputs; and compile pins check widths,
operation counts, and boundary permutations without claiming semantic
equality (materialized mode emits pending permutations as swap gates;
meta-level references are hand-constructed circuit-level terms).
Operation counts are pre-decomposition \texttt{pytket}
operations and exclude pending \texttt{WirePerm} metadata; all listed
checks pass at v1.0.0 (1570 backend tests, the 33-row, 119-check
counterpart ledger, and 31 goldens).
Table~\ref{tab:validation} lists the evidence by example family.

\begin{table}[t]
\caption{Validation evidence in the artifact (v1.0.0).}
\label{tab:validation}
\footnotesize
\setlength{\tabcolsep}{4pt}
\renewcommand{\arraystretch}{1.1}
\begin{tabular}{@{}p{0.27\linewidth}p{0.46\linewidth}p{0.23\linewidth}@{}}
\toprule
\textbf{Example family} & \textbf{Validation evidence} &
\textbf{Representative result}\\
\midrule
\textbf{Quantum switch.}  Abstract
(Figure~\ref{fig:qswitch-emitted}); partial and closed
specializations; $H/S$, $X/Z$, $H/Y$, $H/H$, $R_z/R_z$; two- and
three-fold sequential compositions
& Unitary equality with hand-written meta-level and Raw references;
widths pinned at $8/6/4$ (abstract/partial/closed) in symbolic and
materialized modes; goldens
& Closed $H,S$: four-wire frame, two active wires, six operations
before permutation materialization\\
\addlinespace
\textbf{Phase-marked short-circuit routing}
(\S\ref{subsubsec:quantum-plus-map})
& Source/Raw unitary equality for $\mathsf{toggle}_{\Wit}$,
$\mathsf{and}_{\mathrm{sc}}$, and $\mathsf{phase}_{\Wit}$;
$\mathsf{toggle}_{\Wit}^2=\mathsf{and}_{\mathrm{sc}}^2
 =\mathsf{phase}_{\Wit}^2=I$; goldens
& $\mathsf{route}^{\mathrm q}_{\Wit}$ followed by the unphased
$\mathsf{route}_{\Wit}$, and $\mathsf{and}^{\mathrm q}_{\mathrm{sc}}$:
13 operations each\\
\addlinespace
\textbf{Finite control.}  $\mathsf{select}_n$, $2\le n\le5$;
$\mathbb Z_n$ shift, negation, and addition, $2\le n\le11$; phase
kick, $n\in\{2,4,5,8\}$
& Raw unitary equality for all selectors and shifts and for
representative negations and additions; larger additions
compile-pinned; $\mathsf{shift}^5=I$ ($\mathbb Z_5$),
$\mathsf{neg};\mathsf{shift};\mathsf{neg}=\mathsf{shift}^{-1}$
($\mathbb Z_8$), $\mathsf{kick};\mathsf{kick}^{-1}=I$
& $\mathbb Z_5$ selection: 25 operations; $\mathbb Z_8$ phased
dispatch: 32\\
\addlinespace
\textbf{Nested control and composition.}  $C^k(H)$, $1\le k\le3$;
$\mathsf{compose}_k$, $2\le k\le4$
& $C^k(H)$ against NumPy-built reference unitaries; Source
counterparts against those circuits; $\mathsf{compose}_k$ against
sequential Raw composition
& ---\\
\addlinespace
\textbf{Certified exponentials.}  $\expi{\theta}{\mathsf{twist}}$,
$\expi{\theta}{\mathsf{swap}_{ij}}$
& Source/Raw unitary equality;
$\expi{\pi/4}{J}^2=\expi{\pi/2}{J}$ for
$J\in\{\mathsf{twist},\mathsf{swap}_{12},\mathsf{swap}_{23}\}$;
$\expi{\theta}{\mathsf{swap}_{12}}$ and
$\expi{\theta}{\mathsf{swap}_{23}}$ verified non-commuting
& ---\\
\addlinespace
\textbf{Algorithm kernels}
& Deutsch--Jozsa (constant, balanced) and the two-qubit HSP core:
Source/Raw unitary equality; Simon, Bell, GHZ: goldens only
& ---\\
\addlinespace
\textbf{Fixed-order $n$-switch simulators}, $n=2,3$
& $n=2$: controlled-SWAP, one round, and the two-round simulator
Raw-equal; $n=3$: Raw demo equal to a direct semantic reference,
Source counterpart component-equal and compile-pinned
& $n=3$ simulator: 14 boundary wires; 201 operations (retained Raw
demo)\\
\addlinespace
\textbf{Source typing restrictions}
& Exact located diagnostics: higher-order case summands and results,
unequal branch contexts, higher-order distributors, dropped tensor
components, and related violations
& 23 PPX reject fixtures and 25 sealed-Source conformance reject
fixtures\\
\bottomrule
\end{tabular}

\end{table}

\paragraph{Limitations.}
\label{subsec:impl-limits}
The prototype prioritizes exact lowering over circuit optimization.
It applies no gate-level simplification, leaves ordinary peephole and
reordering passes to downstream tools, and does not yet cancel every
adjacent pair of basis-word adapters.  It automatically flattens
syntactically decomposable nested $\plus$-maps; other cases take the
completed-block (Appendix~\ref{app:compilation-soundness}) or generic
controlled path.
The remaining coverage restrictions concern dense synthesis.
Certified exponentials and the asymmetric block-diagonal fallback are
limited to a total width of at most three qubits, and tag
permutations that reach the dense fallback support at most eight
summands (three tag qubits), because \texttt{pytket} provides only
\texttt{Unitary1qBox}, \texttt{Unitary2qBox}, and
\texttt{Unitary3qBox}.  A closed block requiring dense synthesis
between noncoincident source and target frames is rejected before
emission.  Explicit permutation (including the $\mathbb Z_n$
operations of Table~\ref{tab:validation}), completed open-block, and
controlled-dispatch paths do not use that dense fallback.  These are
limitations of the executable prototype, not of the reference compiler
or its correctness theorem.

%% file: expressiveness.tex
\section{Related Work}
\label{sec:expressiveness}

\subsection{Classical Control with First-Order Quantum Data}
\label{subsec:rw-classical-control}

Many of the systems below share one architectural commitment: higher-order
structure remains classical while quantum data is first-order.

Quipper~\citep{Green2013Quipper} embeds circuit construction in
Haskell; Q\#~\citep{Svore2018QSharp} supports classical higher-order
programming but restricts qubits to first-order resources;
QML~\citep{AltenkirchGrattage2005} is first-order: its finite quantum
types include sums and products but no function types.
Proto-Quipper-M~\citep{RiosSelinger2017} has general linear function
types, while boxed circuits have tensor-generated simple $M$-type
interfaces.  The quantum
$\lambda$-calculi of Selinger and
Valiron~\citep{SelingerValiron2009} provide a
foundation for reasoning about quantum data in a higher-order setting.

More recent work enriches this model without changing its architecture.
Qunity~\citep{VoichickLiRandHicks2023} provides a unified syntax where
classical constructs have both quantum and classical effects.
Twist~\citep{YuanMcNallyCarbin2022Twist} contributes a type system for
purity and entanglement.
\citet{HeunenLemonnierMcNallyRice2026} introduce structured constructs
for generating unitaries from phase operations and pattern matching.
\citet{PaykinWinnick2025} develop $\lambda\mathrm{P}_c$, a typed
calculus for projective Cliffords with additive biproducts (direct
sums) and linear higher-order structure; while restricted to Clifford
operations, its
use of additive structure and higher-order types is closely related to
ours.

A second difference concerns where circuit construction happens.  In
the host-language tradition, higher-order classical code builds
first-order circuits as data: given concrete operations
$f,g:A\to A$, these languages can build and package the corresponding
switched circuit.  Granthi instead places higher-order structure in
the object language, so the switch itself is a source-level quantum
component before $f$ and $g$ are fixed, with a boundary denotation
independent of any circuit encoding.

Qunity is the closest comparison.  Its sum types denote direct
sums of Hilbert spaces, and its control constructs support coherent
tag-dependent routing: for eligible non-overlapping classical
patterns, Qunity's orthogonality judgment certifies orthogonal basis
states (closed patterns) or orthogonal subspaces (patterns with
variables).  Granthi instead exposes the branch decomposition as
$\plus$ itself (\S\ref{subsec:coherent-choice},
\S\ref{cohRouting}), so branch orthogonality is automatic, and places
the resulting routing inside a higher-order linear language.

\subsection{Higher-Order Quantum Control}
\label{subsec:rw-ho-quantum}

Programming-language accounts of higher-order quantum control remain
sparse, especially accounts combining syntax, typing, unitary semantics,
and circuit compilation.
Quantum supermaps and networks provide a general framework for
higher-order quantum
transformations~\citep{Chiribella2008Supermaps,Chiribella2009}.  The
quantum switch realizes coherent control over composition
order~\citep{Chiribella2013Switch}, and its $n$-ary form yields query
advantages for suitable tasks~\citep{AraujoCostaBrukner2014}.
\citet{DiazCaroMalherbe2022} develop $\lambda_{S_1}$, a typed
calculus for quantum control in the unitary sphere; closed abstractions
at the qubit endomorphism type are exactly isometries, while no
analogous characterization is stated for arbitrary higher-order
interfaces.

\citet{YuanVillanyiCarbin2024} show that direct lifts of classical
control-flow abstractions such as conditional jump are invalid in
general, due to disruptive entanglement between the program counter
and data.
Granthi avoids this by placing quantum structure \emph{at higher types}
rather than lifting classical control flow: the controlled-order
term of \S\ref{subsec:quantum-switch} arises without an operational
program counter; after its component operations are supplied, a
closed instance compiles to a static circuit.

\subsection{Categorical Semantics and Reversible Languages}
\label{subsec:rw-semantics}

Categorical quantum
mechanics~\citep{AbramskyCoecke2004,CoeckeKissinger2017,Selinger2007CPM}
(see, e.g., \citet{HeunenVicary2019} for a textbook reference)
provides the semantic vocabulary of dagger compact closed categories
and the CPM construction.  The $\Pi$ family of reversible languages~\citep{Carette2024Pi} provides a
complete equational theory for first-order unitary computation over
finite types; subsequent
work~\citep{CaretteHeunenKaarsgaardSabry2024Bake} extends $\Pi$ to
measurement and mixed states while remaining first-order.

Our companion categorical work develops a boundary-centric semantics
for higher-order quantum computation from Kelly--Laplaza
linkings~\citep{KellyLaplaza1980} and Abramsky's execution
account~\citep{AbramskyJagadeesan2026EU}.  It defines a categorical
model $\mathsf{QC}$ whose morphisms are essentially unitary,
realizing the coherent quantum switch and the unitary stages of
equal-ratio one-slot supermap dilations with explicit memory.
Granthi gives the complementary programming-language account: a
typed source calculus, deterministic normalization,
derivation-indexed Hilbert-space boundaries, concrete circuit
layouts, and compilation to executable circuits.

\subsection{Causality and First-Order Restrictions}

Granthi's first-order restrictions on Source sums and case
(\S\ref{subsubsec:source-language},
Remark~\ref{rem:why-first-order}) prevent a branch-selected function
payload from being applied to its own branch tag.  Kengo Hirata and
Takeshi Tsukada pointed out this failure mode to us (personal
communication), using the central counterexample of their LICS
paper~\citep{HirataTsukada2026LICS}.  Their calculus with
higher-order functions and quantum conditional branching enforces
causality through intuitionistic BV logic; Granthi uses the local
syntactic discipline above.

Concurrent work by Barsse, P\'echoux, and
Perdrix~\citep{BarssePechouxPerdrix2026}, posted to arXiv in July
2026, develops a linear higher-order language for coherent control
and indefinite causal order.  Granthi explores a related purely
unitary design point: their controlled outputs are qubit registers,
whereas Granthi's first-order sums provide finite-dimensional coherent
data and control spaces.  Their language additionally
supports measurement and arbitrary quantum channels, with synchronized
operational semantics and denotational semantics in
$\mathbf{Caus}[\mathbf{CPM}]$.  Granthi complements this semantic
breadth with normalization and an implemented compiler to static
unitary circuits.

%% file: app-focused-rules.tex
\section{Eta-long Focused Linear Core}
\label{app:focused-rules}

The Raw judgment $\Gamma\vdash_{\mathsf r}t:A$ is the
natural-deduction system of Table~\ref{tab:typing-rules-linear},
including its quantum extension in \S\ref{sec:quantum-core}.
Reduction is typed in a separate internal administrative judgment
$\Gamma \intjudge t:A$.  It has the same multiplicative rules and
atoms, together with the following phase-bearing additive rules:
\begingroup\small
\[
\inferrule{
  \Gamma \intjudge t:A
  \\
  \Delta \intjudge u:B
  \\
  A,B \text{ first-order}
  \\
  \alpha,\beta\in\mathbb{C}_{\mathrm{static}},
  \quad |\alpha|=|\beta|=1
}{
  \Gamma,\Delta \intjudge
  [\,\alpha\cdot t\mid\beta\cdot u\,]:A\plus B
}
\quad\textsc{$\plus$-I${}_{\mathsf{i}}$}
\]
\[
\inferrule{
  \Gamma \intjudge f:A\lmark C
  \\
  \Delta \intjudge g:B\lmark D
  \\
  C,D \text{ first-order}
  \\
  \alpha,\beta\in\mathbb{C}_{\mathrm{static}},
  \quad |\alpha|=|\beta|=1
}{
  \Gamma,\Delta \intjudge
  \oplusmap{\alpha}{f}{\beta}{g}:
  (A\plus B)\lmark(C\plus D)
}
\quad\textsc{$\plus$-Map${}_{\mathsf{i}}$}.
\]
\endgroup
Contexts displayed side by side are disjoint.  The first rule has
arbitrary term premises: the value restriction belongs only to Raw
$\plus$-introduction.  The map sources $A,B$ are
deliberately unrestricted; only its targets $C,D$ are first-order.
The unit-modulus phase premises are part of both internal rules.

\begin{lemma}[Source expansion is internal]
\label{lem:source-internal-inclusion}
If $\Gamma \sjudge t:A$, then $\Gamma \intjudge t^\circ:A$.
\end{lemma}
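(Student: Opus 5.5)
We prove the lemma by induction on the Source derivation of $\Gamma\sjudge t:A$, following the recursive definition of $(-)^\circ$. Every Source type is a Raw type, so the statement makes sense. The internal judgment shares the multiplicative rules, the atoms and \textsc{Exp} with the Raw and Source calculi, so most cases are immediate.

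\textbf{Non-case clauses.} $(-)^\circ$ is homomorphic on \textsc{Var}, \textsc{$\lmark$-I}, \textsc{$\lmark$-E}, \textsc{$\tensor$-I} and \textsc{$\tensor$-E}. In each of these cases we apply the induction hypothesis to the premises and reapply the same internal rule, with the same context split. Declared structural atoms at Source types are internal atoms at the same types. The tensor coherence abbreviations are already expanded linear terms, so they are covered by the multiplicative cases. For \textsc{Exp}, the certificate $\ijudge J:B\lmark B$ is unchanged, so the conclusion is also an internal atom. Metalanguage abbreviations $f:=u$ are expanded before typing, so they add no cases.

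\textbf{Case clause.} This is the only substantive case, and it is where I expect the main work. Suppose the conclusion is $\Gamma_1,\Gamma\sjudge\caseof{e}{x}{u}{y}{v}:(A\plus B)\tensor C$ with $\Gamma$ nonempty.
\begin{enumerate}
\item By the induction hypothesis, $\Gamma_1\intjudge e^\circ:A\plus B$, $\Gamma\intjudge u^\circ:C$ and $\Gamma\intjudge v^\circ:C$.
\item Form $\langle\Gamma\rangle\tensor e^\circ:G_\Gamma\tensor(A\plus B)$ by \textsc{Var} and \textsc{$\tensor$-I}. The contexts $\Gamma$ and $\Gamma_1$ are disjoint by the Source rule.
\item Type $u^\circ[z/\Gamma]$ in the context $z:G_\Gamma$. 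This needs a substitution lemma: we substitute the left-associated tensor-let unpacking of $z$ for the variables of $\Gamma$. It follows by a routine induction using \textsc{$\tensor$-E} and \textsc{Var}, and the same argument handles $v^\circ[z/\Gamma]$.
\item Hence $\hat f:G_\Gamma\tensor A\lmark A\tensor C$ and $\hat g:G_\Gamma\tensor B\lmark B\tensor C$ are closed internal terms. Here $x:A$ and $y:B$ are used exactly once, in the output pair.
\item Apply \textsc{$\plus$-Map${}_{\mathsf{i}}$} with $\alpha=\beta=1$, which are unit-modulus static phases. The targets $A\tensor C$ and $B\tensor C$ are first-order because $A$, $B$ and $C$ are first-order by the Source side condition. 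This is exactly the internal side condition; the sources $G_\Gamma\tensor A$ and $G_\Gamma\tensor B$ may be higher-order, and the internal rule leaves them unrestricted.
\item Type $\distL$ at index $(G_\Gamma,A,B)$ and $\distRi$ at index $(A,B,C)$ as atoms at their declared types. The internal judgment has the same atoms as Raw, at all type indices, so the higher-order instance of $\distL$ is admissible.
\item Compose by \textsc{$\lmark$-E} three times, against the argument $\langle\Gamma\rangle\tensor e^\circ$.
\end{enumerate}
When $\Gamma=\varnothing$, we drop the first distributor and type $\hat f=\lam{x}{x\tensor u^\circ}$ directly.

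The delicate point is that the case expansion produces a Raw-only intermediate type, the sum $(G_\Gamma\tensor A)\plus(G_\Gamma\tensor B)$, which is possibly non-first-order. We must check that no internal rule forbids it. The only intermediate sum appears as the source of a $\plus$-map and as the target of an atom. It is never introduced by \textsc{$\plus$-I${}_{\mathsf{i}}$}, which is the rule that requires first-order summands. I would verify this against the rules exactly as stated, citing the derivation of Appendix~\ref{app:case-derivation} used for Proposition~\ref{prop:source-expansion}. The internal proof follows that Raw derivation step for step, with each Raw rule replaced by its internal counterpart. The unphased Raw \textsc{$\plus$-Map} is replaced by the phase-$1$ instance of \textsc{$\plus$-Map${}_{\mathsf{i}}$}, whose extra first-order target premise we have just discharged.
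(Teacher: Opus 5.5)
Your proposal is correct and follows the paper's proof. The paper also argues by induction on the Source derivation: the multiplicative and atomic cases are immediate, and the case clause is internally typed because the two branch maps are closed, their (possibly higher-order) sources are unrestricted in \textsc{$\plus$-Map${}_{\mathsf{i}}$}, and their targets $A\tensor C$ and $B\tensor C$ are first-order. Your version spells out what the paper compresses into one sentence, including the phase-$1$ reading of the Raw map former and the observation that the intermediate sum $(G_\Gamma\tensor A)\plus(G_\Gamma\tensor B)$ is never formed by \textsc{$\plus$-I${}_{\mathsf{i}}$}.
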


\begin{proof}
By induction on the Source derivation.  The multiplicative and atomic
cases use the corresponding internal rules.  In the case clause of
the Raw expansion, the two branch maps are closed, their source types
are unrestricted, and their targets $A\tensor C$ and $B\tensor C$ are
first-order.  Hence the distributor--map--inverse-distributor composite
of \S\ref{cohRouting} is internally typed.
\end{proof}

Table~\ref{tab:focused-rules} is an eta-long
\emph{witness calculus} inside the internal judgment.  It exposes
type-directed identities and the local function operand used by the
unitarity proof, while normalization is carried out in $\intjudge$.
Each connective has one right rule and one left rule; the primitive
axioms are the atomic axiom and the axiom at $\eta$-atomic sums.
A context-cut rule supports composition.  An $\eta$-atomic sum has a
non-first-order summand, as defined in the eta-expansion subsection.
Later arguments use only these explicit witnesses; focused
completeness is not an assumption of the development.

\begin{table*}[!ht]
\caption{Eta-long focused presentation of the linear core.  Contexts
in binary rules are linearly split.  The $\plus$-left rule is
monoidal: it routes the two summands to possibly different result
types and returns $C \plus D$.  The $\plus R$ rule is the internal
first-order refinement of Raw $\plus$-introduction and retains its
formation-value branches $W_1,W_2$.}
\label{tab:focused-rules}
\centering
\small
\renewcommand{\arraystretch}{1.45}
\begin{tabular}{@{}l@{\qquad}c@{}}
\toprule
\multicolumn{2}{@{}l}{\emph{Eta-long focused linear core}}\\
\midrule
\textsc{Ax$_{\base}$}
&
$\inferrule{ }{x:\base \intjudge x:\base}$
\\[2.5ex]
\textsc{Ax$_{\plus}$}
&
$\inferrule{A \plus B \text{ has a non-first-order summand}}
  {x:A \plus B \intjudge x:A \plus B}$
\\[2.5ex]
\textsc{Cut}
&
$\inferrule{
  \Gamma \intjudge u:A
  \\
  \Delta,x:A \intjudge t:B
}{
  \Gamma,\Delta \intjudge t[u/x]:B
}$
\\[3ex]
\textsc{$\lmark R$}
&
$\inferrule{
  \Gamma,x:A \intjudge t:B
}{
  \Gamma \intjudge \lam{x}{t}:A\lmark B
}$
\\[3ex]
\textsc{$\lmark L$}
&
$\inferrule{
  \Gamma \intjudge u:A
  \\
  \Delta,y:B \intjudge n:C
}{
  \Gamma,\Delta,f:A\lmark B
  \intjudge n[f\,u/y]:C
}$
\\[3ex]
\textsc{$\tensor R$}
&
$\inferrule{
  \Gamma \intjudge t:A
  \\
  \Delta \intjudge u:B
}{
  \Gamma,\Delta \intjudge t\tensor u:A\tensor B
}$
\\[3ex]
\textsc{$\tensor L$}
&
$\inferrule{
  \Gamma,x:A,y:B \intjudge n:C
}{
  \Gamma,p:A\tensor B
  \intjudge \letpair{x}{y}{p}{n}:C
}$
\\[3ex]
\textsc{$\plus R$}
&
$\inferrule{
  \Gamma \intjudge W_1:A
  \\
  \Delta \intjudge W_2:B
  \\
  A, B \text{ first-order}
}{
  \Gamma,\Delta \intjudge [\,W_1\mid W_2\,]:A\plus B
}$
\\[3ex]
\textsc{$\plus L_{\mathrm{mon}}$}
&
$\inferrule{
  \Gamma,x:A \intjudge n:C
  \\
  \Delta,y:B \intjudge m:D
  \\
  C, D \text{ first-order}
}{
  \Gamma,\Delta,s:A\plus B
  \intjudge
  \bigl(\oplusmap{1}{\lam{x}{n}}{1}{\lam{y}{m}}\bigr)\,s
  : C\plus D
}$
\\[2ex]
\bottomrule
\end{tabular}

\end{table*}

\begin{lemma}[Internal substitution]
\label{lem:internal-substitution}
Let the variable domains of $\Gamma$ and $\Delta$ be disjoint,
with $x$ fresh for $\Delta$.  If
$\Gamma,x:A \intjudge t:B$ and
$\Delta \intjudge u:A$, then
$\Gamma,\Delta \intjudge t[u/x]:B$.
\end{lemma}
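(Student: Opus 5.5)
We argue by induction on the derivation of $\Gamma,x:A\intjudge t:B$, generalizing over $\Gamma$, $B$ and $t$ while keeping $\Delta\intjudge u:A$ fixed. Before starting we $\alpha$-rename all binders of $t$ so that they avoid $x$ and the free variables of $u$ (the domain of $\Delta$). This makes substitution capture-free and keeps the side condition ``$x$ fresh for $\Delta$'' in force at every binder we pass under. Linearity means $x$ occurs in exactly one premise context of each multiplicative rule.

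\textbf{Case analysis by rule.}
\begin{itemize}
\item \textsc{Var}: here $t=x$, $\Gamma$ is empty and $B=A$, so $t[u/x]=u$ and the hypothesis on $u$ gives the claim.
\item Atoms (structural constants and $\expi{\theta}{J}$): these are closed, so $x$ cannot occur in the context and the case is vacuous.
\item $\lmark$-I: the premise is $\Gamma,x:A,y:C\intjudge t':D$ with $y\neq x$ and $y$ not in $\Delta$. The induction hypothesis yields $\Gamma,y:C,\Delta\intjudge t'[u/x]:D$, and we re-abstract over $y$.
\item $\lmark$-E, $\tensor$-I, $\tensor$-E: the context splits as $\Gamma_1\uplus\Gamma_2$, and $x$ lies in exactly one part. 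We apply the induction hypothesis to the premise containing $x$; substitution is the identity on the other premise because $x$ is not free there. Then we reapply the rule, with context $\Delta$ joined to the part that contained $x$. Disjointness of the resulting contexts follows from the assumed disjointness of the domains of $\Gamma$ and $\Delta$. For $\tensor$-E, the bound $y,z$ are fresh by the renaming above.
\item \textsc{$\plus$-I${}_{\mathsf i}$} and \textsc{$\plus$-Map${}_{\mathsf i}$}: handled the same way. The side conditions (first-order summand or target types, unit-modulus static phases $\alpha,\beta$) mention only types and scalars, which substitution leaves unchanged, so they carry over verbatim.
\end{itemize}

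\textbf{Expected obstacle.} The only delicate point is whether the internal judgment contains any value-restricted or shape-restricted premise that substitution of an arbitrary term could break. The Raw \textsc{$\plus$-I} requires formation values $W$, and $W[u/x]$ need not be a formation value (for example, $x\,W$ becomes $u\,W$). The statement is about $\intjudge$, however, and its rule \textsc{$\plus$-I${}_{\mathsf i}$} explicitly admits arbitrary term premises. The Raw value restriction does not appear there.

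So the one thing to check is that every internal rule has premises closed under substitution. These are the multiplicative rules, the atoms, \textsc{Exp} (closed, so vacuous), and the two phase-bearing additive rules. I would list these rules and confirm that none imposes a syntactic form on a subterm that could contain $x$. If such a restriction turned up (for instance in a rule for $\oplusmap{\alpha}{R_1}{\beta}{R_2}\,E$), the fallback is to observe that the internal judgment is defined by the typing rules alone, not by the normal-form grammar. Grammar classes such as $V$, $E$, $R$ therefore play no role in typability.
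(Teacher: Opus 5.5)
Your proposal is correct and follows the paper's own proof. Both argue by induction on the internal derivation: $\alpha$-rename binders away from $x$ and $\mathrm{dom}(\Delta)$, use linearity to push the substitution into the unique premise containing $x$, and rebuild the rule. Both also identify the one essential point, that \textsc{$\plus$-I${}_{\mathsf i}$} takes arbitrary term premises, so the Raw value restriction never arises, and the first-order and unit-modulus side conditions mention only types and scalars that substitution does not change.
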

\begin{proof}
By induction on the derivation of
$\Gamma,x:A \intjudge t:B$.  In the variable case, linearity
forces $t=x$, and the conclusion is the second premise.  In every
rule with split premises, the unique occurrence of $x$ lies in
exactly one premise; apply the induction hypothesis there and
rebuild the rule with all other premises unchanged.  This covers
application, tensor formation and elimination, abstraction, and
$\plus$-Map.  For
\textsc{$\plus$-I${}_{\mathsf{i}}$}, the arbitrary-term premises
are essential: after applying the induction hypothesis to the
unique branch containing $x$, the same internal sum rule rebuilds
the conclusion.  Substitution changes neither the branch types nor
the static phases, so the first-order and unit-modulus premises are
preserved.  The phased-map case is identical: its target types and
phases are unchanged, and its source types remain unrestricted.
Atomic operations have no free linear term variables.  Capture is
avoided by alpha-renaming binders before the induction step.
\end{proof}

\subsection*{Eta-expansion}

For every type $A$, define $\eta_A(x)$ by induction:
\[
\begin{array}{rcl}
\eta_{\base}(x) &=& x,\\[0.5ex]
\eta_{A\tensor B}(p) &=& \letpair{x}{y}{p}{\eta_A(x)\tensor\eta_B(y)},\\[0.5ex]
\eta_{A\plus B}(s) &=& \bigl(
  \oplusmap{1}{\lam{x}{\eta_A(x)}}{1}{\lam{y}{\eta_B(y)}}
\bigr)\,s
  \quad (A, B \text{ first-order}),\\[0.2ex]
\eta_{A\plus B}(s) &=& s
  \quad (\text{otherwise: such sums are $\eta$-atomic}),\\[0.5ex]
\eta_{A\lmark B}(f) &=& \lam{x}{\eta_B(f\,\eta_A(x))}.
\end{array}
\]
Each compound case is read through the focused left rule
introducing the principal connective.

\begin{lemma}[Eta identity admissibility]
\label{lem:eta-identity-admissible}
For every type $A$, the judgment $x:A \intjudge \eta_A(x):A$ is
derivable in the eta-long focused presentation.  Hence the
arbitrary identity rule $x:A \intjudge x:A$ is admissible in
eta-expanded form.
\end{lemma}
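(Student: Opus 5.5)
We argue by structural induction on the type $A$, following the clauses that define $\eta_A$, and show in each case that $x:A\intjudge\eta_A(x):A$ has a derivation in the rules of Table~\ref{tab:focused-rules}. The claim has the shape of a standard eta-expansion lemma: every compound clause is obtained by one focused left rule that decomposes the principal connective of the hypothesis, together with the right rule for the same connective.

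\emph{Base and $\eta$-atomic sums.} For $A=\base$ we use \textsc{Ax$_{\base}$}. Suppose $A=A_1\plus A_2$ with a non-first-order summand. Then $\eta_A(s)=s$, and \textsc{Ax$_{\plus}$} applies, because its side condition is exactly this case.

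\emph{Tensor.} By the induction hypothesis we have $x:A_1\intjudge\eta_{A_1}(x):A_1$ and $y:A_2\intjudge\eta_{A_2}(y):A_2$. Rule \textsc{$\tensor R$} gives $x:A_1,y:A_2\intjudge\eta_{A_1}(x)\tensor\eta_{A_2}(y):A_1\tensor A_2$. Rule \textsc{$\tensor L$} then yields $p:A_1\tensor A_2\intjudge\letpair{x}{y}{p}{\cdots}$, which is exactly $\eta_{A_1\tensor A_2}(p)$.

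\emph{First-order sum.} Here $A_1,A_2$ are both first-order. The induction hypotheses give the two branch premises. Since the targets $A_1,A_2$ are first-order, \textsc{$\plus L_{\mathrm{mon}}$} applies with $C=A_1$ and $D=A_2$, and produces $(\oplusmap{1}{\lam{x}{\eta_{A_1}(x)}}{1}{\lam{y}{\eta_{A_2}(y)}})\,s : A_1\plus A_2$, which is $\eta_{A}(s)$. This is the only place where the first-order condition in the definition of $\eta$ is used, and it matches the side condition of the rule exactly.

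\emph{Implication.} The induction hypothesis for $A_1$ gives $x:A_1\intjudge\eta_{A_1}(x):A_1$, and the hypothesis for $A_2$ gives $y:A_2\intjudge\eta_{A_2}(y):A_2$. Rule \textsc{$\lmark L$} with $u=\eta_{A_1}(x)$ and $n=\eta_{A_2}(y)$ gives $x:A_1,f:A_1\lmark A_2\intjudge\eta_{A_2}(y)[f\,\eta_{A_1}(x)/y]:A_2$. Rule \textsc{$\lmark R$} on $x$ then gives $\lam{x}{\eta_{A_2}(f\,\eta_{A_1}(x))}$.

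The main point to check is that substituting into the term $\eta_{A_2}(y)$ literally produces $\eta_{A_2}(f\,\eta_{A_1}(x))$. We read $\eta_B(t)$ as $\eta_B(y)[t/y]$, and the case holds because $y$ occurs exactly once in $\eta_{A_2}(y)$, in scrutinee or head position. Fresh binders are chosen to avoid capture.

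For the final sentence of the lemma, take any instance of the identity rule $x:A\intjudge x:A$ and replace it by the derivation just built. Its leaves use only \textsc{Ax$_{\base}$} and \textsc{Ax$_{\plus}$}, so identity is admissible in eta-expanded form.
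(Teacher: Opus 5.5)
Your proof is correct and follows the paper's argument essentially step for step. It uses structural induction on $A$, with \textsc{Ax$_{\base}$} and \textsc{Ax$_{\plus}$} at the leaves, \textsc{$\tensor R$} followed by \textsc{$\tensor L$} for tensors, \textsc{$\plus L_{\mathrm{mon}}$} for first-order sums, and \textsc{$\lmark L$} followed by \textsc{$\lmark R$} for implication; your explicit reading of $\eta_B(t)$ as $\eta_B(y)[t/y]$, together with the single occurrence of $y$, usefully spells out why the \textsc{$\lmark L$} substitution yields exactly $\eta_{A_2}(f\,\eta_{A_1}(x))$, which the paper leaves implicit at this point.
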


\begin{proof}
By induction on $A$.

\emph{Base.}  $\eta_{\base}(x) = x$ via $\mathrm{Ax}_{\base}$.

\emph{Tensor.}  Apply $\tensor R$ to the IH judgments
$x:A \intjudge \eta_A(x):A$ and $y:B \intjudge \eta_B(y):B$,
obtaining
$x:A, y:B \intjudge \eta_A(x) \tensor \eta_B(y) : A \tensor B$.
Then apply $\tensor L$ to derive
$p:A \tensor B \intjudge \letpair{x}{y}{p}{\eta_A(x) \tensor \eta_B(y)} : A \tensor B$.

\emph{Sum, first-order summands.}  The two premises of
$\plus L_{\mathrm{mon}}$ are exactly the IH judgments
$x:A \intjudge \eta_A(x):A$ and $y:B \intjudge \eta_B(y):B$; the rule's
side condition holds since $A, B$ are first-order.
The conclusion is
$s:A \plus B \intjudge \eta_{A \plus B}(s) : A \plus B$.

\emph{Sum, otherwise.}  When a summand is not first-order,
$\eta_{A \plus B}(s) = s$ and the judgment is an axiom leaf: such
sums are $\eta$-atomic.  Decompose-and-rebuild is unavailable at
these types --- the $\plus R$ side condition forbids re-forming the
sum --- and is not needed: assumptions of such types are consumed
whole by their eliminating $\plus$-map, whose sources are
unrestricted.

\emph{Implication.}  Derive
$f:A \lmark B,\, x:A \intjudge \eta_B(f\,\eta_A(x)) : B$ by
$\lmark L$ with premises $x:A \intjudge \eta_A(x):A$ (IH for $A$)
and $y:B \intjudge \eta_B(y):B$ (IH for $B$), substituting
$f\,\eta_A(x)$ for $y$.  Then apply $\lmark R$ to abstract $x$.
\end{proof}

\subsection*{Eta-exposed neutral function operands}
\label{subsec:focused-representatives}

The eta-long spine cap belongs to the witness calculus, not to the
raw internal normal-form grammar.  We record the corresponding local
syntactic rep\-re\-sen\-ta\-tive for neutral function-\allowbreak typed operands.
Eta-expansion supplies type-directed identity witnesses.  At Source
domains, its function-type form is the fresh-application
compatibility of
Lemma~\ref{lem:classified-cut-eta}.

For any internally typed \(r:T\), extend type-directed eta-expansion
from variables to terms by
\[
  \eta_T(r):=\eta_T(x)[r/x]
  \qquad(x\notin\mathrm{fv}(r)).
\]
This expression need not itself be normal: a tensor expansion can
expose a commuting conversion.  For an irreducible neutral
\(r:A\lmark B\), define its \emph{eta-exposed representative} by
\begin{equation}
  r^\eta
  :=\mathsf{NF}_{\mathrm{LO}}\bigl(\eta_{A\lmark B}(r)\bigr).
\label{eq:local-function-focus}
\end{equation}
Thus \(r^\eta\), never the raw eta-expansion, is used below.

\begin{proposition}[Local function-operand focusing]
\label{prop:focused-representative}
If \(\Gamma\intjudge r:A\lmark B\) and \(r\) is an irreducible
neutral, then \(r^\eta\) is defined, internally well typed, and
irreducible.  Its outer constructor is an abstraction.  If \(A\)
is a tensor, the bound argument is immediately destructured before
being passed to the neutral spine.  Replacing \(r\) by \(r^\eta\)
in an otherwise irreducible bare \(\plus\)-map former preserves
irreducibility.

For an occurrence in a Source-generated normal form or in a
lower-$\Phi$ branch generated during its interpretation, if \(A\) is
a Source type and \(B\) is first-order, the normalized fresh
applications of \(r\) and \(r^\eta\) agree up to the canonical
boundary transports of Lemma~\ref{lem:classified-cut-eta}.
\end{proposition}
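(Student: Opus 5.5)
The proof splits into three parts: definedness and irreducibility, the shape of $r^\eta$, and the semantic agreement.

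\emph{Definedness, typing, irreducibility.} First I will show that $\eta_{A\lmark B}(r)$ is internally well typed. By Lemma~\ref{lem:eta-identity-admissible} we have $x:A\lmark B\intjudge\eta_{A\lmark B}(x):A\lmark B$, and this derivation lives in the internal judgment. Applying Lemma~\ref{lem:internal-substitution} with the premise $\Gamma\intjudge r:A\lmark B$ then gives $\Gamma\intjudge\eta_{A\lmark B}(r):A\lmark B$. Theorem~\ref{thm:normalization} now applies: $r^\eta=\mathsf{NF}_{\mathrm{LO}}(\eta_{A\lmark B}(r))$ is defined, internally well typed and irreducible, and $\eta_{A\lmark B}(r)\to^* r^\eta$.

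\emph{Shape of $r^\eta$.} Next I will read off the shape of $r^\eta$ from the reduction sequence. The term $\eta_{A\lmark B}(r)=\lam{x}{\eta_B(r\,\eta_A(x))}$ has an abstraction at its head. Since $r$ is neutral, the outer abstraction is never consumed, so leftmost-outermost reduction only rewrites its body. By the grammar characterization of Lemma~\ref{lem:nf-iff-irreducible}, the normal form is $\lam{x}{N}$.

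When $A=A_1\tensor A_2$, we have $\eta_A(x)=\letpair{x_1}{x_2}{x}{\cdots}$. Its let-prefix is floated outward by the left-to-right prefix-float strategy, and its scrutinee $x$ depends on the bound variable. Condition (NF4*) therefore leaves it as the first let of the body. This gives the claim that the argument is destructured before it reaches the neutral spine $r$, which is itself irreducible and so untouched except for being applied.

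For the $\plus$-map clause, I will check (NF1)--(NF4*) in the context $\oplusmap{\alpha}{\cdots}{\beta}{\cdots}$. There are three points to verify:
\begin{itemize}
\item Replacing a neutral by an abstraction creates no $\beta$-redex, because a bare former is not applied.
\item (NF2) and (NF3) concern a map applied to a map or to a sum. The outer map's arguments are unchanged, so these conditions are unaffected.
\item (NF4*) is local to binders inside $r^\eta$, and it has just been established there.
\end{itemize}
The typing side conditions also survive, since the targets are still first-order.

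\emph{Semantic agreement.} Finally, I expect to obtain the semantic agreement of fresh applications directly from Lemma~\ref{lem:classified-cut-eta}. For a fresh $z:A$, $\mathsf{NF}_{\mathrm{LO}}(r\,z)$ and $\mathsf{NF}_{\mathrm{LO}}(r^\eta z)$ should coincide up to canonical transport. The latter reduces by $\beta$ to $\mathsf{NF}_{\mathrm{LO}}(N[z/x])$, and $N[z/x]$ is reachable from $\eta_B(r\,\eta_A(z))$. Theorem~\ref{thm:determinacy} identifies the two normal forms modulo $\approx$, and Corollary~\ref{cor:canonical-invariance} converts $\approx$ into canonical boundary transport. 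What remains is to compare $r\,z$ with $\eta_B(r\,\eta_A(z))$, and this is exactly the eta-identity content of Lemmas~\ref{lem:semantic-eta-identity} and~\ref{lem:classified-cut-eta}.

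\emph{Main obstacle.} I expect the hardest point to be the restriction to occurrences inside a Source-generated normal form or a lower-$\Phi$ branch. This is where the hypotheses that $A$ is a Source type and $B$ is first-order are needed. They ensure that $\eta_B$ expands only first-order sums, and that no $\eta$-atomic higher-order sum appears in the domain whose boundary would fail to match. I would handle this by induction on $A$, checking that each $\eta_A$ component is a Source-shaped identity.
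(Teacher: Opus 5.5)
Your route matches the paper's in every component but one. You get typing from Lemma~\ref{lem:eta-identity-admissible} and Lemma~\ref{lem:internal-substitution}, and definedness and irreducibility from the fixed normalizer. You use the fact that a bare map former has no root rewrite, and you cite Lemma~\ref{lem:classified-cut-eta} for the semantic clause.

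The gap is your argument that the outer constructor of $r^\eta$ is an abstraction. You say that because $r$ is neutral the outer $\lambda$ is never consumed, so leftmost--outermost reduction only rewrites its body. That excludes only $\beta$. The deep float (H$^*$) is a redex rooted at the enclosing $\lambda$-node (Definition~\ref{def:lo-normalizer}). Any prefix let in the body whose scrutinee avoids $x$ and is independent of the preceding prefix lets would be moved outside the $\lambda$, and the normal form would then be let-headed. Appealing to Lemma~\ref{lem:nf-iff-irreducible} does not help, since let-headed terms are also in the normal-form grammar.

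The missing idea is the dependency invariant the paper uses. Every $\eta$-generated tensor let exposed in the body has a scrutinee that either contains the bound argument or uses a variable introduced by an earlier $\eta$-generated prefix. Lets internal to $r$ stay under binders of the irreducible $r$. Your tensor-case remark establishes this only for the first let.

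It does not cover, for instance, $A=A_1\tensor A_2$ with $B=B_1\tensor B_2$. After (C$'$) and (D) the body becomes $\letpair{x_1}{x_2}{x}{\letpair{y_1}{y_2}{r\,(\eta_{A_1}(x_1)\tensor\eta_{A_2}(x_2))}{\cdots}}$. The second scrutinee does not contain $x$, so (H$^*$) is blocked there only because it depends on the crossed prefix binding $x_1,x_2$.

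With the invariant in hand, (H$^*$) never fires at the outer $\lambda$ at any index $j$, and all remaining rewrites stay inside the body. Your tensor-case claim then goes through. So does your map-former check, which needs $r^\eta$ to be $\lambda$-headed to rule out (F$_L$)/(F$_R$).

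For the semantic clause, your detour through Theorem~\ref{thm:determinacy} and Corollary~\ref{cor:canonical-invariance} is unnecessary. The corollary is also stated only for reducts of a Source expansion. Lemma~\ref{lem:classified-cut-eta} already asserts exactly the required transport, and that is all the paper cites.
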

\begin{proof}
Internal typing of \(\eta_{A\lmark B}(r)\) follows from
Lemma~\ref{lem:eta-identity-admissible} and
Lemma~\ref{lem:internal-substitution}.  Totality,
preservation, and irreducibility of its fixed normal form are
Theorem~\ref{thm:lo-normalizer-total}.  The outer implication eta-clause is a lambda.  Every eta-generated
tensor let exposed in its body has a scrutinee that either contains
the bound argument or uses a variable introduced by an earlier
eta-generated prefix.  Hence no such let satisfies \textup{(H$^*$)}
across the full prefix and the enclosing lambda; all remaining
rewrites stay inside its body.  The outer constructor therefore
remains an abstraction.  When $A$ is a tensor, its argument
elimination is the first prefix immediately inside that abstraction.
Normalization preserves typing and the displayed shape
properties.
Finally, a bare map former has no root rewrite.  Replacing one of
its operands by an irreducible abstraction therefore leaves the
former irreducible; only its value/result classification may
change.  The final semantic statement, under the stated occurrence,
Source-type, and first-order hypotheses, is
Lemma~\ref{lem:classified-cut-eta}.
\end{proof}

This replacement may change the normal-form classification from a
blocked map to a map value.

The construction is local: tensor eta-expansion in an arbitrary
normal context may expose a let-floater.  Raw terms continue to
normalize in $\intjudge$; the eta-exposed representative is the local
internal witness used for neutral function operands of blocked maps.

%% file: appendix-summand-completeness.tex
\section{Summand-Index Completeness}
\label{app:summand-index-completeness}

This appendix gives the construction summarized in
\S\ref{subsubsec:unitary-primitives}.

The corresponding source programs needed for summand-index
completeness use only \textsc{Exp}, definable structural
isomorphisms, and ordinary source composition.  Sign involutions
(\textsc{Inv-Scalar}, \textsc{Inv-Id}, \textsc{Inv-$\plus$}) give
diagonal phases, while swap involutions
(\textsc{Inv-$\sigma^\plus$}, \textsc{Inv-Id},
\textsc{Inv-$\plus$}) give the required plane rotations.
Non-adjacent summands and isomorphic first-order types are handled
by structural conjugation,
\[
  s^{-1}\circ u\circ s .
\]
This is an ordinary source program rather than a
$\unitjudge$-derivation, since $s$ may change the intermediate type.

Under the canonical identification
\[
  \sem{B^{\plus n}}\cong\mathbb C^n\tensor\sem B,
\]
let $X_{ij}$ exchange summands $i$ and $j$ and fix the others.  Then
$e^{i\theta X_{ij}}$ performs an $X$-rotation on the $(i,j)$ plane
and contributes the scalar $e^{i\theta}$ on the remaining summands;
the available diagonal phases cancel that scalar.  Consequently,
diagonal--$X$--diagonal Euler decomposition supplies every embedded
$U(2)$, and the standard two-level decomposition supplies
\[
  U\tensor I_{\sem B}
  \qquad (U\in U(n))
\]
on $B^{\plus n}$~\citep{NielsenChuang10}.  Each factor is an actual
source \textsc{Exp} term, transported and composed as above.  No
source-language dagger is required: adjoint exponential factors use
the angle $-\theta$.  For $B=\base$, this gives every unitary on $\Qn{n}$.

%% file: appendix-normal.tex
\section{Normalization}
\label{app:normalization}
\noindent

\subsection{Normalization as rewriting}

\paragraph{Rewrite relation.}
We orient the equations to define a context-closed rewrite relation
$t\to t'$, which permits every generating redex.  The deterministic
leftmost--outermost strategy used by the hereditary normalizer is
fixed below in Definition~\ref{def:lo-normalizer}.

\paragraph{Eta-expanded witnesses.}
The rewrite relation consists of $\beta$-contractions, commuting
conversions, and the monoidal $\plus$ equations.  Eta expansion is
handled by the witness construction rather than by a generating
reduction.  Eta-exposed function representatives are constructed and
normalized separately
in Proposition~\ref{prop:focused-representative}.

\paragraph{Hereditary sum normalization.}
The branch contraction for a phased $\plus$-map uses a syntactic
auxiliary normalizer
$\mathsf{SumNF}^{C,D}_{\alpha,\beta}(e_L,e_R)$, indexed by
first-order output types $C,D$ and unit scalars $\alpha,\beta$: the
normal form of the coherent sum whose branches are $e_L,e_R$ with
accumulated phases $\alpha,\beta$.  It is a purely syntactic
operation, defined simultaneously with the deterministic normalizer
in Definition~\ref{def:lo-normalizer}; its single generating call
site is the hereditary contraction \textup{(G)} below.
The operation $\mathsf{SumNF}$ has no independent semantic clause.
When it is invoked while normalizing a Source expansion, or a
lower-$\Phi$ branch generated during its interpretation, the resulting
canonical normal form is interpreted in
Appendix~\ref{app:unitarityNEW}.  Lemma~\ref{lem:administrative-naturality}
supplies its canonical boundary transports, and
Corollary~\ref{cor:canonical-invariance} specializes the result to
Source semantics.

\begin{definition}[{One-step rewrite $\to$}]
\label{def:rewrite}
Let $\to$ be the smallest relation closed under term contexts such
that the following instances rewrite left-to-right.  In addition to
the local rewrite rules below --- all rules except the hereditary
contraction, including the indexed commuting variants ---
normalization uses one hereditary branch contraction (G)
for branchwise sums.

\medskip
\noindent\textbf{$\beta$-reductions:}
\begin{description}
\item[\textbf{(A)}] \textbf{($\beta_{\lmark}$)} \quad
  $(\lam{x}{t})\,e \;\to\; t[e/x]$
\item[\textbf{(B)}] \textbf{($\beta_{\tensor}$)} \quad
  $\letpair{x}{y}{(e \tensor t)}{u} \;\to\; u[e/x, t/y]$
\end{description}

\medskip
\noindent\textbf{Commuting conversions for $\tensor$-let:}
For the indexed $\lambda$-conversion below, write
\[
  L_i[-] \;:=\; \letpair{x_i}{y_i}{e_i}{[-]},
  \qquad
  L_1\cdots L_k[t] \;:=\; L_1[L_2[\cdots L_k[t]\cdots]].
\]
Two prefix lets $L_i,L_j$ are \emph{independent} when
\[
  \{x_i,y_i\}\cap\mathrm{fv}(e_j)=\varnothing
  \quad\text{and}\quad
  \{x_j,y_j\}\cap\mathrm{fv}(e_i)=\varnothing .
\]
Bound variables are renamed apart before applying a rule.
\begin{description}
\item[\textbf{(C)}] \textbf{($c_{\tensor/\mathrm{app}}$)} \quad
  $(\letpair{x}{y}{e}{f})\,g \;\to\; \letpair{x}{y}{e}{(f\,g)}$
\item[\textbf{(C$'$)}] \textbf{($c_{\tensor/\mathrm{arg}}$)} \quad
  $f\,(\letpair{x}{y}{e}{t}) \;\to\; \letpair{x}{y}{e}{(f\,t)}$
\item[\textbf{(D)}] \textbf{($c_{\tensor/\tensor}$)} \quad
  $\letpair{p}{q}{(\letpair{x}{y}{e}{f})}{g} \;\to\;
   \letpair{x}{y}{e}{(\letpair{p}{q}{f}{g})}$
\item[\textbf{(C$_\tensor^L$)}] \textbf{($c_{\tensor/\tensor L}$)} \quad
  $(\letpair{x}{y}{e}{f}) \tensor g \;\to\;
   \letpair{x}{y}{e}{(f \tensor g)}$
\item[\textbf{(C$_\tensor^R$)}] \textbf{($c_{\tensor/\tensor R}$)} \quad
  $g \tensor (\letpair{x}{y}{e}{f}) \;\to\;
   \letpair{x}{y}{e}{(g \tensor f)}$
\item[\textbf{(H$^*$)}] \textbf{($c^*_{\tensor/\lambda}$, deep $\lambda$-float)}
  For $j\geq 1$,
  \[
    \lam{z}{L_1\cdots L_{j-1}
      [\letpair{x_j}{y_j}{e_j}{t}]}
    \;\to\;
    \letpair{x_j}{y_j}{e_j}
      {\lam{z}{L_1\cdots L_{j-1}[t]}},
  \]
  provided $z\notin\mathrm{fv}(e_j)$ and $L_j$ is independent of every
  crossed prefix let $L_i$ with $i<j$.  The case $j=1$ is the
  ordinary $c_{\tensor/\lambda}$ conversion.
\end{description}

\medskip
\noindent\textbf{Commuting conversions for sum formers:}
\begin{description}
\item[\textbf{(S$_L$)}] \textbf{($c_{\tensor/[\,]L}$)} \quad
  $[\,\alpha \cdot (\letpair{x}{y}{e}{t}) \mid \beta \cdot s\,]
   \;\to\;
   \letpair{x}{y}{e}{\,[\,\alpha \cdot t \mid \beta \cdot s\,]}$
\item[\textbf{(S$_R$)}] \textbf{($c_{\tensor/[\,]R}$)} \quad
  $[\,\alpha \cdot s \mid \beta \cdot (\letpair{x}{y}{e}{t})\,]
   \;\to\;
   \letpair{x}{y}{e}{\,[\,\alpha \cdot s \mid \beta \cdot t\,]}$
\end{description}
With (H$^*$) and (S$_{L/R}$) the let-floaters comprise the complete
family of multiplicative commuting conversions: the
tensor-elimination let commutes with every surrounding construct,
including $\lambda$ ($\lmark$-introduction) and the sum former.  In
particular, a $z$-independent let cannot be hidden behind an
independent prefix let merely by exchanging the prefix order.
The float-out clause of $\mathsf{SumNF}$ is exactly iterated
(S$_{L/R}$).

\medskip
\noindent\textbf{Monoidal $\plus$ equations:}
\begin{description}
\item[\textbf{(E)}] \textbf{($\plus$-comp)} \quad
  $\oplusmap{\alpha}{f}{\beta}{g}(\oplusmap{\alpha'}{h}{\beta'}{k}\;t)
   \;\to\;
   \oplusmap{\alpha\alpha'}{(f \circ h)}{\beta\beta'}{(g \circ k)}\;t$
\end{description}

\medskip
\noindent\textbf{Branchwise sum contraction (hereditary):}
\begin{description}
\item[\textbf{(G)}] \textbf{($\beta_{\plus}$, hereditary branch contraction)} \quad
  If $f : A \lmark C$ and $g : B \lmark D$, then
  \[
    \oplusmap{\alpha}{f}{\beta}{g}\,[\,\alpha' \cdot t_1 \mid \beta' \cdot t_2\,]
    \;\to\;
    \mathsf{SumNF}^{C,D}_{\alpha\alpha',\beta\beta'}(f\,t_1,\; g\,t_2),
  \]
  for any sum former in scrutinee position, with \emph{arbitrary}
  branch terms $t_1,t_2$ --- possibly phased, possibly blocked,
  possibly let-headed or otherwise reducible; the unphased value case
  $\oplusmap{\alpha}{f}{\beta}{g}\,[V_1 \mid V_2]
  \to \mathsf{SumNF}^{C,D}_{\alpha,\beta}(f\,V_1,\, g\,V_2)$
  is the instance $\alpha'=\beta'=1$.  (Since the redex is rooted at
  the application node, leftmost--outermost selects (G) there even
  when a branch contains a floatable let; this matches the
  $(G){\times}(S_{L/R})$ overlap analysis of
  Appendix~\ref{app:determinacy-proofs}.)  The type superscripts $C,D$ are typing bookkeeping only: the
  $\mathsf{SumNF}$ computation does not inspect them.  Rule~(G) is
  invoked only at internally typed instances;
  Theorem~\ref{thm:lo-normalizer-total} proves that its hereditary
  output is defined there.
\end{description}

\medskip
\noindent\textbf{Commuting conversions for $\plus$-map:}
\begin{description}
\item[\textbf{(F)}] \textbf{($c_{\tensor/\plus}$)} \quad
  $\oplusmap{\alpha}{f}{\beta}{g}(\letpair{x}{y}{e}{t})
   \;\to\;
   \letpair{x}{y}{e}{(\oplusmap{\alpha}{f}{\beta}{g}\;t)}$
\item[\textbf{(F$_L$)}] \textbf{($c_{\tensor/\plus L}$)} \quad
  $\oplusmap{\alpha}{(\letpair{x}{y}{e}{h})}{\beta}{g}
   \;\to\;
   \letpair{x}{y}{e}{(\oplusmap{\alpha}{h}{\beta}{g})}$
\item[\textbf{(F$_R$)}] \textbf{($c_{\tensor/\plus R}$)} \quad
  $\oplusmap{\alpha}{f}{\beta}{(\letpair{x}{y}{e}{h})}
   \;\to\;
   \letpair{x}{y}{e}{(\oplusmap{\alpha}{f}{\beta}{h})}$
\end{description}
Rule (F) is the map-former instance of (C$'$); the overlap is
deliberate and the two reducts coincide, so the
leftmost--outermost priority, which lists (C$'$) first, is
unaffected.
\end{definition}

\paragraph{Role.}
Definition~\ref{def:rewrite} fixes an oriented normalization relation by
choosing a direction for each generating equation.

\begin{definition}[Deterministic leftmost--outermost normalizer]
\label{def:lo-normalizer}
Fix an infinite ordered supply of variables.  Capture-avoiding
renaming always chooses the least name in this supply that is fresh
for the ambient term.  Order child positions syntactically from left
to right: function before argument, left tensor/sum/map branch before
right, let scrutinee before body, and map branches before the
scrutinee to which the map is applied.

A redex occurrence is selected \emph{leftmost--outermost}: a redex at
a node precedes every redex properly below that node, and otherwise
the preceding child order is used.  If several generating rules match
the selected node, choose the first applicable rule in the fixed order
\[
 (A),(B),(E),(G),(C),(C'),(D),(C_\tensor^L),(C_\tensor^R),
 (H^*),(S_L),(S_R),(F_L),(F_R),(F).
\]
Thus contractions precede commuting conversions, while ties between
commuting conversions at distinct child slots follow the child order
above.  An (H$^*$)-instance counts as a redex at its enclosing
$\lambda$-node; when several indices $j$ are eligible at that node,
choose the least one.  Write
$t\mathrel{\longrightarrow_{\mathrm{LO}}} t'$ for the resulting selected instance of a generating
rewrite.  Thus $\mathrel{\longrightarrow_{\mathrm{LO}}}\subseteq{\to}$ and $\mathrel{\longrightarrow_{\mathrm{LO}}}$ is a partial
function once the hereditary (G)-output is defined.

The partial functions $\mathsf{NF}_{\mathrm{LO}}(-)$ and $\mathsf{SumNF}$ are defined
simultaneously.  If $t$ has no selected redex, set $\mathsf{NF}_{\mathrm{LO}}(t)=t$.
If $t\mathrel{\longrightarrow_{\mathrm{LO}}} t'$, set $\mathsf{NF}_{\mathrm{LO}}(t)=\mathsf{NF}_{\mathrm{LO}}(t')$; in the (G)-case the
right-hand side $t'$ is first obtained from the indicated
$\mathsf{SumNF}$ call.  To compute
$\mathsf{SumNF}^{C,D}_{\alpha,\beta}(e_L,e_R)$, compute
$\mathsf{NF}_{\mathrm{LO}}(e_L)$ completely before $\mathsf{NF}_{\mathrm{LO}}(e_R)$, and write the results
uniquely as
\[
  \mathsf{NF}_{\mathrm{LO}}(e_L)=P_L[R_L],
  \qquad
  \mathsf{NF}_{\mathrm{LO}}(e_R)=P_R[R_R],
\]
where $P_L,P_R$ are their maximal outer tensor-let prefixes and
$R_L,R_R$ the non-let-headed remainders (a purely syntactic split;
their classification as grammar results is
Lemma~\ref{lem:nf-iff-irreducible}, used only later).  Rename the two prefixes apart by
the fixed convention and set
\[
  \mathsf{SumNF}^{C,D}_{\alpha,\beta}(e_L,e_R)
  :=
  P_L\!\left[P_R\!\left[
    [\,\alpha\cdot R_L\mid\beta\cdot R_R\,]
  \right]\right].
\]
This fixes left-before-right branch normalization and left-to-right
prefix extraction.  The definition is initially partial; its
totality, including the hereditary calls made while computing the
two branch normal forms, is Theorem~\ref{thm:lo-normalizer-total}.
\end{definition}

\subsection{Typing scope of normalization}
\label{subsec:structured-reduction}

All preservation and normalization results below are stated for the
internal judgment $\intjudge$.  Rule~\textup{(E)} may expose a
compound-typed bound variable in $f\circ h=\lambda z.f(hz)$ while
preserving internal typing.  Canonical normal derivations arising
from Source expansions, together with the lower-$\Phi$ branches
generated during their interpretation, supply the sectors used in
Appendix~\ref{app:unitarityNEW}.  Lemma~\ref{lem:administrative-naturality}
supplies their $\approx$-transports; fixed normalization, determinacy,
and Corollary~\ref{cor:canonical-invariance} give Source reduction
invariance.
The eta-long witness calculus supplies the type-directed identities
and the local neutral-function representative of
Proposition~\ref{prop:focused-representative}.  The computational
inductions use the internal typing and normalization lemmas; boundary
sectors are attached afterward to canonical normal derivations.

\subsection{Termination}

We prove totality of the fixed normalizer and termination of the
ambient rewrite relation separately.  The proof uses a
well-founded lexicographic measure with three components.

\begin{definition}[$\plus$-map constructor count $\Phi$]
\label{def:phi}
Let $\Phi(t)$ be the total number of occurrences in $t$ of the
$\plus$-map constructor $\oplusmap{\alpha}{f}{\beta}{g}$, whether or
not it is applied to an argument.  Application nodes, sum formers,
phases, $\lambda$-abstractions, and let-binders contribute nothing:
only the map constructor itself is counted.
\end{definition}

\begin{definition}[Binder count $\mathsf{Bind}$]
\label{def:bind-count}
Let $\mathsf{Bind}(t)$ be the total number of $\lambda$-abstractions
$(\lam{x}{-})$ and $\tensor$-constructors $(-\tensor-)$ occurring in $t$.
\end{definition}

\begin{definition}[Let-depth $\mathsf{LetDepth}$]
\label{def:letdepth}
For a let-binder $\ell = \letpair{x}{y}{-}{-}$ occurring in $t$, define
$\mathrm{depth}(\ell,t)$ to be the number of nodes of the following kinds
on the path from the root of $t$ to $\ell$:
\begin{itemize}
\item application nodes $(-)\,(-)$;
\item tensor-constructor nodes $(-)\tensor(-)$;
\item $\plus$-map application nodes $\oplusmap{-}{-}{-}{-}(-)$;
\item branch-function slots of $\plus$-map constructors
  $\oplusmap{-}{-}{-}{-}$;
\item scrutinee slots of let-binders, i.e.\ positions where the path
  enters the scrutinee $e$ of some $\letpair{p}{q}{e}{u}$;
\item $\lambda$-abstraction nodes $\lam{z}{-}$;
\item branch slots of sum formers
  $[\,\alpha \cdot {-} \mid \beta \cdot {-}\,]$.
\end{itemize}
Set $\mathsf{LetDepth}(t) := \sum_{\ell \in \mathrm{Lets}(t)}
\mathrm{depth}(\ell,t)$.
\end{definition}

\begin{definition}[Termination measure]
\label{def:measure}
Define
\[
  \mu(t) := \bigl(\Phi(t),\; \mathsf{Bind}(t),\; \mathsf{LetDepth}(t)\bigr)
  \in \mathbb{N}^3
\]
ordered lexicographically.
\end{definition}

\begin{lemma}[Administrative preservation for ordinary steps]
\label{lem:ordinary-internal-preservation}
Suppose $\Gamma \intjudge t:A$, and $t\to t'$ is a
context-lifted instance of any generating rule other than \textup{(G)}.
Then $\Gamma\intjudge t':A$.  The resulting derivation preserves the
first-order premises on sum summands and map targets, the
unit-modulus premises on phases, and every disjoint split of branch
contexts.
\end{lemma}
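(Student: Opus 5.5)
The proof has two parts. First, reduce to root instances using closure under term contexts. Then check each generating rule other than \textup{(G)} directly.

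\emph{Reduction to root instances.} For the context reduction, I would show that if $\Gamma\intjudge C[r]:A$ and $r\to r'$ is a root instance preserving the typing of $r$ (same context, same type), then $C[r']$ is typed with the same derivation skeleton. I would do this by induction on $C$. At each frame the typing rule of $\intjudge$ is rebuilt with the same split and the same side premises. The first-order conditions on summands and map targets, and the phase premises, mention only types and scalars, not the subterm, so they are untouched. The only care is binder frames ($\lambda$, let-body): the replaced subterm is typed in an extended context, and that extension is identical on both sides.

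\emph{Root instances.} I would proceed rule by rule, inverting the internal derivation of the redex.
\begin{itemize}
\item \textbf{(A) and (B).} These follow from Lemma~\ref{lem:internal-substitution}, applied once for (A) and twice for (B). That lemma already records that substitution preserves branch types, phases and splits.
\item \textbf{(C), (C$'$), (D), (C$_\tensor^{L/R}$), (S$_{L/R}$), (F), (F$_{L/R}$).} Inversion gives a split $\Gamma=\Gamma_e\uplus\Gamma'$, with the scrutinee $e$ typed in $\Gamma_e$ and the let body in $\Gamma',x{:}A_1,y{:}B_1$. The surrounding constructor's other premise lives in a further disjoint part. I re-associate the splits so the let becomes outermost, and retype the inner constructor in the extended context. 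For (S$_{L/R}$) this is possible because \textsc{$\plus$-I${}_{\mathsf i}$} has arbitrary term premises. The summand types, map targets and phases are copied verbatim, and the branch contexts stay disjoint: we only moved $\Gamma_e$ out of one branch's share, and $x,y$ are fresh by renaming apart.
\item \textbf{(H$^*$).} I would argue by induction on $j$, peeling the crossed prefixes. The conditions $z\notin\mathrm{fv}(e_j)$ and independence from $L_1,\dots,L_{j-1}$ ensure the free variables of $e_j$ lie in the outer context, not among $z$ or the bound $x_i,y_i$. So $e_j$ can be typed outside, and the remaining prefixes and lambda retyped beneath it.
\item \textbf{(E).} Invert both map formers and build $f\circ h=\lambda z.f(hz)$ by \textsc{$\lmark$-I}/\textsc{$\lmark$-E} on the disjoint union of the contexts of $f$ and $h$. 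The targets $C,D$ of the outer map remain first-order. The phases $\alpha\alpha'$ and $\beta\beta'$ are unit-modulus products. The two branch contexts remain disjoint because the four original contexts were pairwise disjoint.
\end{itemize}

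\emph{Main obstacle.} The expected obstacle is bookkeeping rather than mathematics. The hardest cases are (H$^*$), with its multi-prefix independence argument, and (E), where the new composite's source type $A$ may be non-first-order. That is allowed, since map sources are unrestricted, but it must be checked not to introduce a forbidden sum former. I would handle (H$^*$) first, via a small auxiliary lemma: independent adjacent lets may be exchanged while preserving typing.
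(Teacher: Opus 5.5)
Your proposal is correct and follows essentially the same route as the paper. Both arguments induct on the enclosing typed context and discharge (A)/(B) by internal substitution. The floaters are handled by reassociating the same linearly split premises. For (H$^*$), the selected let is exchanged past the independent prefix lets before $\lmark$-introduction is applied. For (E), the composites are typed directly, with first-order targets and unit-modulus phase products.
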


\begin{proof}
Induct on the enclosing typed context and inspect the root rule.
Cases (A) and (B) are precisely the one- and two-variable instances
of internal substitution (Lemma~\ref{lem:internal-substitution});
there is no value premise to re-establish.  The commuting conversions
reassociate the same linearly split premises.  For (H$^*$), exchange
the selected premise past the independent prefix premises and then
apply $\lmark$-introduction; its side condition is exactly what keeps
$z$ out of the floated scrutinee.  In (E), the two composites have
the displayed source and target types, the target summands remain
first-order, and $|\alpha\alpha'|=|\beta\beta'|=1$.  The remaining
phase-bearing rules merely move an existing constructor.  Rebuilding
the enclosing derivation preserves all context splits.
\end{proof}

\begin{lemma}[Rank of an ordinary step]
\label{lem:ordinary-step-rank}
Let $t\to t'$ be a context-lifted generating step other than
\textup{(G)}.
\begin{enumerate}
\item In case \textup{(E)}, $\Phi(t')<\Phi(t)$.
\item In cases \textup{(A)} and \textup{(B)}, $\Phi$ is preserved and
  $\mathsf{Bind}(t')<\mathsf{Bind}(t)$.
\item For every commuting conversion, $\Phi$ and $\mathsf{Bind}$ are
  preserved and
  $\mathsf{LetDepth}(t')<\mathsf{LetDepth}(t)$.
\end{enumerate}
\end{lemma}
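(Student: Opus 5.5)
The plan is to prove the three clauses by a root-rule case analysis, then extend to arbitrary term contexts. Every generating rule other than (G) either rewrites a redex occurrence in place or, for (H$^*$), moves a let across a prefix. In both situations the part of $t$ outside the rewritten subterm is unchanged, so occurrence counts add up. Because $\Phi$ and $\mathsf{Bind}$ are additive over subterm occurrences, it suffices to compare counts at the redex. For $\mathsf{LetDepth}$, lets outside the redex keep their root paths, so again only lets inside the redex matter; the one addition is that the path from the root to the redex contributes the same additive constant to each let before and after the step.

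\emph{Case (E).} The left side has two map constructors and the right side has one. The composites $f\circ h=\lam{z}{f(h\,z)}$ and $g\circ k$ introduce no map constructors. Since the subterms $f,g,h,k,t$ each occur exactly once on both sides, $\Phi$ drops by one.

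\emph{Cases (A) and (B).} Linearity guarantees that the bound variable occurs exactly once, so substitution copies each substituted term exactly once. Hence $\Phi$ is preserved. For $\mathsf{Bind}$, (A) removes one $\lambda$ and (B) removes one $\tensor$-constructor, while the substituted subterms' own counts are preserved. Thus $\mathsf{Bind}$ strictly decreases. The needed linearity fact is supplied by the internal typing via Lemma~\ref{lem:internal-substitution} and Lemma~\ref{lem:ordinary-internal-preservation}; I will add a typing hypothesis if the statement is read untyped.

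\emph{Commuting conversions.} Each one only relocates a let node, so the multiset of constructors is unchanged and $\Phi$ and $\mathsf{Bind}$ are preserved. For $\mathsf{LetDepth}$ I will check every rule and show three things:
\begin{itemize}
\item The moved let loses at least one counted node from its path. For (C) and (C$'$) this is the application node. For (D) it is a scrutinee slot. For (C$_\tensor^{L/R}$) it is the tensor node. For (H$^*$) it is the $\lambda$ node. For (S$_{L/R}$) it is a sum branch slot. For (F) it is the map-application node, and for (F$_{L/R}$) it is the branch-function slot.
\item Lets inside the moved scrutinee $e$ lose the same nodes.
\item Every other let keeps or loses depth, and none gains depth.
\end{itemize}

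The main obstacle is the third point: the lets that were below the moved let and now sit under a new parent. In rules such as (D) and (C), the body $f$ moves from a let body to the scrutinee or function slot of a constructor, and that constructor was already above it before the step. Path-by-path inspection confirms that no counted node is added. Let bodies are deliberately uncounted, and the let-body position itself is never counted, so lets in $f$, $g$, and $t$ keep their depth.

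For (H$^*$), lets $L_1,\dots,L_{j-1}$ stay under $\lambda z$, and their scrutinees keep their paths, since the enclosing let body is uncounted. The floated let and its scrutinee $e_j$ lose the $\lambda$ node, so the sum drops by at least $1$ plus the number of lets inside $e_j$.

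Summing over all lets then gives a strict decrease in $\mathsf{LetDepth}$, which establishes clause 3.
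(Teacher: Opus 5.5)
Your proposal is correct and follows essentially the same route as the paper. Additive constructor counts settle \textup{(E)}, \textup{(A)}, and \textup{(B)}, with linearity, and hence internal typing, implicitly assumed exactly as the paper does. For the commuting conversions, the floated let and the lets in its scrutinee lose a counted ancestor, every other let keeps its depth (including the crossed prefix lets of \textup{(H$^*$)}, which sit on uncounted let-body paths), and the enclosing context contributes equally on both sides.
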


\begin{proof}
We verify the actual context-lifted step; no monotonicity claim for
arbitrary unrelated pairs of terms is used.  Rule (E) removes one
map constructor.  Expanding $f\circ h$ as
$\lam{z}{f\,(h\,z)}$ introduces no map constructor, so the first
component decreases exactly by one.

For (A), linearity gives
\[
 \Phi(t[e/x])=\Phi(t)+\Phi(e),\qquad
 \mathsf{Bind}(t[e/x])=\mathsf{Bind}(t)+\mathsf{Bind}(e),
\]
and the contracted $\lambda$ contributes the missing one on the
left.  For (B), the two-variable form of the same calculation uses
Lemma~\ref{lem:internal-substitution}; the contracted tensor
constructor contributes the missing one.  Hence the second
component strictly decreases in both cases.

Each remaining rule preserves the map, $\lambda$, and tensor
constructors and moves a let past at least one counted node.  The
floated let therefore loses that node; lets in its scrutinee weakly
lose counted ancestors, and all other lets keep their counted
ancestors.  The nested-scrutinee case (D) loses the outer let's
scrutinee slot.  In the indexed case (H$^*$), the selected let
$L_j$ and every let in its scrutinee lose the counted $\lambda$
ancestor, while the crossed prefix lets occur on let-body paths,
which are not counted, so they and the lets in their scrutinees
retain their counted ancestors.  Induction on the actual
enclosing context shows that its common path contribution is equal
on the two sides.  Thus the third component strictly decreases.
\end{proof}

\begin{theorem}[Total deterministic hereditary normalization]
\label{thm:lo-normalizer-total}
For every $k\in\mathbb N$, the following assertions hold.
\begin{enumerate}
\item[\textup{(N$_k$)}]
  If $\Gamma\intjudge t:A$ and $\Phi(t)\leq k$, then
  $\mathsf{NF}_{\mathrm{LO}}(t)$ is defined.  Writing
  $n=\mathsf{NF}_{\mathrm{LO}}(t)$,
  \[
    t\to^*n,\qquad n\text{ is irreducible},\qquad
    \Phi(n)\leq\Phi(t),\qquad \Gamma\intjudge n:A.
  \]
\item[\textup{(S$_k$)}]
  Suppose the variable domains of $\Gamma_L$ and $\Gamma_R$ are
  disjoint,
  \[
    \Gamma_L\intjudge e_L:C,\qquad
    \Gamma_R\intjudge e_R:D,\qquad
    \Phi(e_L)+\Phi(e_R)\leq k,
  \]
  where $C,D$ are first-order and
  $\alpha,\beta\in\mathbb C_{\mathrm{static}}$ with
  $|\alpha|=|\beta|=1$.
  Then
  $s=\mathsf{SumNF}^{C,D}_{\alpha,\beta}(e_L,e_R)$ is defined and
  \[
    [\,\alpha\cdot e_L\mid\beta\cdot e_R\,]\to^*s,\qquad
    s\text{ is irreducible},\qquad
    \Phi(s)\leq\Phi(e_L)+\Phi(e_R),\qquad
    \Gamma_L,\Gamma_R\intjudge s:C\plus D.
  \]
\end{enumerate}
In both clauses every constructed phased sum has first-order
summands, every constructed phased map has first-order targets, all
accumulated phases have unit modulus, and the branch contexts remain
disjoint.
\end{theorem}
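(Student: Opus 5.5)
We prove (N$_k$) and (S$_k$) simultaneously by induction on $k$. Inside each level, (N$_k$) is handled by a nested well-founded induction on the lexicographic measure $\mu(t)$ of Definition~\ref{def:measure}. The order of dependencies is as follows. (S$_k$) uses (N$_k$) on its two branch arguments. (N$_k$) uses (S$_{k'}$) only at a (G)-step, with $k'<k$ or at a strictly smaller $\mu$-value within level $k$. Making this order precise is the core of the argument. We fix $k$ and assume (N$_{k'}$) and (S$_{k'}$) for all $k'<k$.

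\emph{Proof of (N$_k$).} Let $\Gamma\intjudge t:A$ with $\Phi(t)\le k$, and induct on $\mu(t)$. If $t$ has no selected LO redex, then $\mathsf{NF}_{\mathrm{LO}}(t)=t$ and every claim is immediate. Here ``irreducible'' means that no generating rule matches anywhere. This coincides with the absence of an LO-selected redex, because LO selection is total over matching redexes.

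Otherwise $t\mathrel{\longrightarrow_{\mathrm{LO}}}t'$, and we split on the rule. Suppose the rule is not (G). Lemma~\ref{lem:ordinary-internal-preservation} gives $\Gamma\intjudge t':A$ together with the side conditions. Lemma~\ref{lem:ordinary-step-rank} gives $\mu(t')<\mu(t)$ and $\Phi(t')\le\Phi(t)\le k$. The inner induction hypothesis then applies to $t'$, and we chain $t\to t'\to^*n$.

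Now suppose the rule is (G), rooted at the subterm $\oplusmap{\alpha}{f}{\beta}{g}\,[\,\alpha'\cdot t_1\mid\beta'\cdot t_2\,]$. By inversion of \textsc{$\plus$-Map${}_{\mathsf i}$} and \textsc{$\plus$-I${}_{\mathsf i}$}, the applications $f\,t_1:C$ and $g\,t_2:D$ are internally typed in disjoint contexts, with $C,D$ first-order and all phases unit-modulus. The removed map constructor gives
\[
\Phi(f\,t_1)+\Phi(g\,t_2)=\Phi(\text{redex})-1\le k-1,
\]
so (S$_{k-1}$) applies. It yields an irreducible $s$ with $\Phi(s)\le k-1$ and the same typing. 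We then splice $s$ into the surrounding context to get $t'$, and $\Phi(t')<\Phi(t)$. The inner induction, or simply (N$_{k'}$) for $k'=\Phi(t')<k$, normalizes $t'$. Reduction chaining uses that $[\,\alpha\alpha'\cdot f t_1\mid\beta\beta'\cdot g t_2\,]\to^* s$ is supplied by (S). For the plain relation $\to$ we also need the redex to rewrite to that sum; this holds because (G) is a generating step whose output is defined to be $s$.

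\emph{Proof of (S$_k$).} Apply (N$_k$) to $e_L$ and $e_R$ separately; each has $\Phi\le k$. This gives irreducible $P_L[R_L]$ and $P_R[R_R]$ with the required typing. The reduction $[\,\alpha\cdot e_L\mid\beta\cdot e_R\,]\to^* P_L[P_R[[\,\alpha\cdot R_L\mid\beta\cdot R_R\,]]]$ is obtained in two stages. First normalize inside the branches by context closure. Then float the prefixes outward by iterated (S$_L$) and (S$_R$), renaming apart. Typing follows by rebuilding with \textsc{$\plus$-I${}_{\mathsf i}$} beneath the let-prefixes. The summands $C,D$ stay first-order, and the contexts stay disjoint because the prefixes come from disjoint contexts. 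The bound $\Phi(s)=\Phi(\mathsf{NF}(e_L))+\Phi(\mathsf{NF}(e_R))$ is immediate.

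\emph{Main obstacle: irreducibility of $s$.} We must show that the assembled term $P_L[P_R[[\,\alpha\cdot R_L\mid\beta\cdot R_R\,]]]$ contains no redex, which means checking the new redex shapes created at the seams. Three checks are needed.
\begin{itemize}
\item $R_L$ and $R_R$ are not let-headed, so (S$_{L/R}$) cannot fire at the sum.
\item The prefixes sit in let-body positions, so only (D) or (H$^*$) could interact with them. Neither fires: no let is in scrutinee position and no $\lambda$ encloses the prefix.
\item Each prefix scrutinee was already irreducible in its own normal form, and renaming preserves this.
\end{itemize}
The first check relies on the syntactic split into maximal prefixes. Since a sum former is not itself a redex head except under (G), and (G) requires an enclosing map that is absent here, $s$ is irreducible.

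Reducibility that appears \emph{after} splicing into the enclosing term is not an issue for (S). Any such redex is handled by the continued normalization of $t'$ in (N), which is exactly why (N) recurses on $t'$ rather than concluding from $s$ directly.
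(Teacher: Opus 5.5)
Your proposal is correct and follows the paper's proof essentially step for step: strong induction on $k$ with (N$_k$) established before (S$_k$), an inner well-founded recursion (you use the full measure $\mu$, while the paper recurses on $(\mathsf{Bind},\mathsf{LetDepth})$ at $\Phi(t)=k$ and sends smaller $\Phi$ to the outer hypothesis, which is equivalent), the (G)-case discharged by (S$_{k-1}$) via $\Phi(f\,t_1)+\Phi(g\,t_2)=\Phi(\text{redex})-1$, and (S$_k$) obtained by normalizing both branches and floating their prefixes with (S$_L$)/(S$_R$). The only slip is cosmetic: (G) rewrites the redex to $s$ in a single generating step, so the chaining does not need the redex to reduce to the intermediate coherent sum.
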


\begin{proof}
Use strong induction on $k$.  At stratum $k$, first prove
\textup{(N$_k$)} and then \textup{(S$_k$)}.  Terms of smaller
$\Phi$ use the outer induction hypothesis.  For terms with
$\Phi(t)=k$, define $\mathsf{NF}_{\mathrm{LO}}(t)$ by well-founded
recursion on
\[
  \bigl(\mathsf{Bind}(t),\mathsf{LetDepth}(t)\bigr)
\]
in lexicographic order.

If $t$ has no selected redex, the defining clause returns $t$.
This is the zero-step case and supplies irreducibility directly.
For a selected ordinary step $t\to_{\mathrm{LO}}t'$, administrative
preservation is Lemma~\ref{lem:ordinary-internal-preservation}.
Rule (E) enters a smaller $\Phi$-stratum.  Rules (A), (B), and all
floaters remain at the same stratum and enter a smaller inner rank
by Lemma~\ref{lem:ordinary-step-rank}.  The appropriate induction
hypothesis therefore defines the recursive call.  Prefixing its
reduction sequence by this one context-lifted generating rewrite
establishes $t\to^*\mathsf{NF}_{\mathrm{LO}}(t)$; the remaining
conjuncts are inherited from that recursive call.

It remains to justify a selected (G)-step.  Such a redex has
$\Phi\geq1$, so this case occurs only when $k\geq1$.  Write its occurrence as
$K[r]$, where
\[
 r=
 \oplusmap{\alpha}{f}{\beta}{g}
 [\,\alpha'\cdot R_1\mid\beta'\cdot R_2\,].
\]
Typing inversion gives branch applications of the first-order map
targets $C,D$ in disjoint contexts.  Since application nodes do not
contribute to $\Phi$,
\[
 \Phi(f\,R_1)+\Phi(g\,R_2)
 =\Phi(f)+\Phi(g)+\Phi(R_1)+\Phi(R_2)
 =\Phi(r)-1.                                      \tag{*}
\]
Thus the companion assertion \textup{(S$_{k-1}$)} from the outer
induction hypothesis defines
\[
 s=\mathsf{SumNF}^{C,D}_{\alpha\alpha',\beta\beta'}
      (f\,R_1,g\,R_2)
\]
and gives
\[
 \Phi(s)\leq\Phi(f\,R_1)+\Phi(g\,R_2)=\Phi(r)-1.
\]
If $\Phi(K[-])$ denotes the constructors contributed outside the
hole, then
\[
 \Phi(K[s])\leq\Phi(K[-])+\Phi(r)-1=\Phi(K[r])-1.
\]
Consequently the recursive call on $K[s]$ lies in a strictly smaller
outer stratum.  The companion assertion at stratum $k-1$ supplies the
bound before the recursive call at stratum $k$, so the hereditary
clause is well founded.  The (G)-instance itself is one generating
rewrite; context closure and the lower-stratum sequence give the
required reduction.
Typing is supplied by \textup{(S$_{k-1}$)} and rebuilding $K[-]$.
The map-target premise makes $C,D$ first-order, and products of the
input phases remain unit-modulus.

Having established \textup{(N$_k$)}, consider the inputs of
\textup{(S$_k$)}.  Apply \textup{(N$_k$)} first to $e_L$ and then to
$e_R$, obtaining
\[
 n_L=P_L[R_L],\qquad n_R=P_R[R_R],
\]
with maximal outer tensor-let prefixes and irreducible results that
are not headed by a let.  Context closure gives
\[
 [\,\alpha\cdot e_L\mid\beta\cdot e_R\,]
 \to^*[\,\alpha\cdot n_L\mid\beta\cdot n_R\,].
\]
Repeated (S$_L$)-steps, followed from inside $P_L$ by repeated
(S$_R$)-steps, give exactly the fixed output
\[
 P_L\!\left[P_R\!\left[
   [\,\alpha\cdot R_L\mid\beta\cdot R_R\,]
 \right]\right].
\]
The branch normal forms have no internal redex.  Their maximal
prefix scrutinees are irreducible, and the displayed construction
has no let in a sum branch; combining the two prefixes creates no
new root redex.  Hence the output is irreducible.  Sum formers and
prefix extraction add no map constructor, so
\[
 \Phi(s)=\Phi(n_L)+\Phi(n_R)
 \leq\Phi(e_L)+\Phi(e_R).
\]
The internal phased-sum rule types the sum because $C,D$ are
first-order; Lemma~\ref{lem:ordinary-internal-preservation} types
the subsequent prefix floaters.  It also preserves the disjoint
branch split.  The outer phases are unit-modulus, and every phase
multiplication made while obtaining $n_L,n_R$ was covered by the
normalizer induction.  This proves \textup{(S$_k$)} and completes
the strong induction.
\end{proof}

The construction of $\mathsf{NF}_{\mathrm{LO}}$ used the grammar
nowhere: the grammar theorem below is consumed only after
irreducibility has been established.

\begin{lemma}[Internal subject reduction]
\label{lem:internal-preservation}
If $\Gamma\intjudge t:A$ and $t\to t'$, then
$\Gamma\intjudge t':A$.  The first-order side conditions,
unit-modulus phases, and disjoint branch contexts are preserved.
\end{lemma}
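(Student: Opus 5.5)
The plan is to split on whether the step is an ordinary generating rewrite or an instance of the hereditary contraction (G). If $t\to t'$ is a context-lifted instance of any rule other than (G), the claim is exactly Lemma~\ref{lem:ordinary-internal-preservation}. That lemma already records preservation of the first-order premises on sum summands and map targets, of the unit-modulus phases, and of the disjoint branch splits. So the only remaining case is a (G)-step $K[r]\to K[s]$ with
\[
 r=\oplusmap{\alpha}{f}{\beta}{g}\,[\,\alpha'\cdot t_1\mid\beta'\cdot t_2\,],
 \qquad
 s=\mathsf{SumNF}^{C,D}_{\alpha\alpha',\beta\beta'}(f\,t_1,g\,t_2).
\]

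For the (G) case I would first argue by induction on the enclosing context $K$: every typing rule is compositional in its premises, so it suffices to show that $r$ and $s$ have the same type under the same context. Here $\Gamma_r$ denotes the context of the occurrence $r$, and the term $t$ of the statement is not involved. Typing inversion on $r$ proceeds in two steps. First, the applied map splits as \textsc{$\plus$-Map${}_{\mathsf i}$} applied via $\lmark$-E. This gives $\Gamma_f\intjudge f:A\lmark C$ and $\Gamma_g\intjudge g:B\lmark D$ with $C,D$ first-order, together with $\Delta\intjudge[\,\alpha'\cdot t_1\mid\beta'\cdot t_2\,]:A\plus B$. Second, inverting \textsc{$\plus$-I${}_{\mathsf i}$} splits $\Delta=\Delta_1,\Delta_2$ with $\Delta_1\intjudge t_1:A$ and $\Delta_2\intjudge t_2:B$. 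It also shows that the unphased sum type $A\plus B$ has first-order summands and that $|\alpha'|=|\beta'|=1$. Rebuilding by $\lmark$-E gives $\Gamma_f,\Delta_1\intjudge f\,t_1:C$ and $\Gamma_g,\Delta_2\intjudge g\,t_2:D$ in disjoint contexts. The accumulated phases $\alpha\alpha'$ and $\beta\beta'$ have unit modulus.

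These are exactly the hypotheses of clause (S$_k$) of Theorem~\ref{thm:lo-normalizer-total}, with $k=\Phi(f\,t_1)+\Phi(g\,t_2)$. That clause shows $s$ is defined and gives $\Gamma_f,\Delta_1,\Gamma_g,\Delta_2\intjudge s:C\plus D$, which is the type of $r$ under the same context up to exchange. It also guarantees that the constructed sums have first-order summands, that the maps have first-order targets, that the phases have unit modulus, and that branch contexts remain disjoint. Rebuilding $K[-]$ around $s$ then yields $\Gamma\intjudge t':A$ with all side conditions intact.

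The main obstacle is only bookkeeping. The inversion must be canonical enough that the context split read off from $r$ coincides with the split (S$_k$) produces; otherwise the disjointness claim for the enclosing derivation could fail. I would handle this by noting that linearity determines each premise's context as the free variables of the corresponding subterm. Splits are therefore unique, and exchange (implicit in contexts) reconciles any ordering.
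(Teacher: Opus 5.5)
Your proposal is correct and follows the paper's proof. Ordinary steps are dispatched by Lemma~\ref{lem:ordinary-internal-preservation}; a (G)-step is handled by typing inversion, then the companion clause (S$_k$) of Theorem~\ref{thm:lo-normalizer-total} at $k=\Phi(f\,t_1)+\Phi(g\,t_2)$, then rebuilding the unchanged enclosing context, and your remark that linearity fixes the context split is bookkeeping the paper leaves implicit.
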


\begin{proof}
For an ordinary step, use
Lemma~\ref{lem:ordinary-internal-preservation}.  For a (G)-step,
typing inversion supplies the two typed branch applications, the
first-order map targets, unit phases, and disjoint contexts.  Apply
the companion clause of
Theorem~\ref{thm:lo-normalizer-total} at
$k=\Phi(f\,R_1)+\Phi(g\,R_2)$, then rebuild the unchanged enclosing
typed context.
\end{proof}

\begin{lemma}[Deterministic typed LO lifting]
\label{lem:typed-lo-lifting}
Fix the redex order, branch order, context order, and fresh-name
convention of Definition~\ref{def:lo-normalizer}.  There is a
function on internal derivations
\[
 \mathsf{DerNF}_{\mathrm{LO}}
 \bigl(\mathcal D:(\Gamma\intjudge t:A)\bigr)
 =
 \mathcal D^\downarrow_{\mathrm{LO}}:
 \Gamma\intjudge\mathsf{NF}_{\mathrm{LO}}(t):A .
\]
Its subject is the fixed LO normal form, and it preserves the
first-order premises, unit-modulus phase premises, and disjoint
branch contexts recorded by $\mathcal D$.
\end{lemma}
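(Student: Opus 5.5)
We construct $\mathsf{DerNF}_{\mathrm{LO}}$ by the same well-founded recursion that defines $\mathsf{NF}_{\mathrm{LO}}$ in Theorem~\ref{thm:lo-normalizer-total}. The outer induction is strong induction on $\Phi(t)$. For fixed $\Phi$, the inner recursion runs on $(\mathsf{Bind}(t),\mathsf{LetDepth}(t))$. Alongside $\mathsf{DerNF}_{\mathrm{LO}}$ we define a derivation-level companion $\mathsf{DerSumNF}$ for $\mathsf{SumNF}$. Both are defined on derivations rather than terms, and the recursion tracks the term-level computation step for step. Consequently the subject of the output is syntactically $\mathsf{NF}_{\mathrm{LO}}(t)$, because the selected redex and the naming choices depend only on $t$.

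The base case is a term with no selected redex: return $\mathcal D$ unchanged. For a selected ordinary step $t\to_{\mathrm{LO}}t'$, we make Lemma~\ref{lem:ordinary-internal-preservation} constructive. Its proof already acts as an explicit transformation on derivations.
\begin{itemize}
\item Decompose $\mathcal D$ along the selected context $K[-]$ into an outer derivation with a hole and a derivation $\mathcal D_r$ of the redex.
\item Rewrite $\mathcal D_r$ locally. Cases (A) and (B) use the substitution operation on derivations given by the proof of Lemma~\ref{lem:internal-substitution}, read as a recursive function. The commuting conversions reassociate the same premises. Case (E) builds the composite derivation of $\lambda z.f(hz)$ with multiplied phases.
\item Plug the rewritten derivation back into the hole and recurse.
\end{itemize}
Determinism needs two facts. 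The decomposition of a derivation along a term context is unique, because each term former is typed by exactly one internal rule. The context splits are also forced by the free variables, by linearity. Hence no choice is made beyond the fixed redex order and fresh-name convention. Preservation of first-order premises, unit-modulus phases and disjoint splits holds casewise, exactly as recorded in Lemma~\ref{lem:ordinary-internal-preservation}.

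The (G)-case is the main obstacle, because it is hereditary. Inverting the derivation of the redex $\oplusmap{\alpha}{f}{\beta}{g}[\alpha'\cdot R_1\mid\beta'\cdot R_2]$ yields derivations of $f$, $g$, $R_1$ and $R_2$ in disjoint contexts, with first-order targets $C,D$. Combining these by $\lmark$-elimination gives derivations of $f\,R_1$ and $g\,R_2$. By $(*)$, the derivation-level $\mathsf{DerSumNF}$ at stratum $k-1$ is available for these.

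$\mathsf{DerSumNF}$ is built in three steps:
\begin{enumerate}
\item Compute $\mathsf{DerNF}_{\mathrm{LO}}$ on the left branch completely, then on the right branch.
\item Split each resulting derivation into its maximal let-prefix and remainder. This split is unique, since the prefix is a chain of $\tensor$-E nodes.
\item Apply \textsc{$\plus$-I${}_{\mathsf i}$} to the remainders with phases $\alpha\alpha',\beta\beta'$, and then re-apply the $\tensor$-E prefixes in the fixed order $P_L$ outside $P_R$.
\end{enumerate}
The real care is in checking two things. First, re-applying the prefixes outside the sum former respects linear splitting, because the branch contexts are disjoint and the names were renamed apart. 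Second, the rebuilt derivation $K[s]$ lies in a strictly smaller $\Phi$-stratum, so the recursion is well founded. Both points mirror the proof of Theorem~\ref{thm:lo-normalizer-total}, so the derivation-level argument follows it clause by clause.
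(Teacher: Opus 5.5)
Your proposal is correct and follows essentially the paper's route: a deterministic typed selected-step transformer read off from the reconstruction in Lemma~\ref{lem:ordinary-internal-preservation}, with the (G) case handled by the typed $\mathsf{SumNF}$ derivation of the companion clause (S$_k$) of Theorem~\ref{thm:lo-normalizer-total}, iterated along the recursion that defines $\mathsf{NF}_{\mathrm{LO}}$. One small correction: uniqueness of the decomposition does not come from each term former being typed by exactly one internal rule (the canonical-retyping lemma notes an overlap between the ordinary and phased presentations). It comes from the given derivation itself fixing the rule used at each node, and on that basis your determinism argument goes through unchanged.
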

\begin{proof}
Define one typed selected-step transformer by the same induction on
the enclosing typed context as
Lemma~\ref{lem:ordinary-internal-preservation}, choosing the
displayed reconstruction in each root-rule case.  The selected redex
and its position are unique.  Context order and fresh bound names are
fixed by Definition~\ref{def:lo-normalizer}.  For (G), choose exactly
the typed $\mathsf{SumNF}$ derivation constructed by the companion
clause \textup{(S$_k$)} of
Theorem~\ref{thm:lo-normalizer-total}; its left-before-right branch
order and prefix extraction are already fixed.  Thus the transformer
is a function, rather than an appeal to proof irrelevance.

Iterate it along the recursive calls defining
$\mathsf{NF}_{\mathrm{LO}}$.  Totality follows from
Theorem~\ref{thm:lo-normalizer-total}; preservation of the displayed
invariants is part of
Lemmas~\ref{lem:ordinary-internal-preservation}
and~\ref{lem:internal-preservation}.
\end{proof}

\begin{lemma}[Canonical retyping of normal terms]
\label{lem:canonical-normal-retyping}
Fix the context-variable order already used by the LO normalizer and
the following rule priority: a phased sum or phased map syntax is
always typed by its internal phased rule (also when both phases are
$1$); all other term constructors use their syntax-directed internal
rule; and an atom uses its declared signature entry.  There is a
function
\[
 \mathsf{CanDer}_{\Gamma,A}(n):
 \Gamma\intjudge n:A
\]
defined whenever $n$ is irreducible and that judgment is derivable.
Its premise derivations are recursively canonical.  Moreover, every
chosen proof $p:n\approx n'$ at the same judgment shape induces a
syntax-directed typed lift
\[
 \mathsf{CanDer}_{\Gamma,A}(p):
 \mathsf{CanDer}_{\Gamma,A}(n)
 \longleftrightarrow
 \mathsf{CanDer}_{\Gamma,A}(n')                                \tag{CR}
\]
built only from bound-name relabelling, exchange of the same
independent tensor-let premises, and equality of evaluated phase
scalars.
\end{lemma}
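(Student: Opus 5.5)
The plan is to define $\mathsf{CanDer}_{\Gamma,A}(n)$ by structural recursion on the irreducible term $n$, using the fixed rule priority to make each step a function. Irreducible terms are classified by the grammar of Table~\ref{tab:nf-grammar} (Lemma~\ref{lem:nf-iff-irreducible}), and each syntactic head determines a unique internal rule. A variable uses \textsc{Var}; an atom uses its declared signature entry; $\lambda x.N$ uses $\lmark$-introduction; a pair uses $\tensor$-introduction; a let uses $\tensor$-elimination; an application $E\,R$ uses $\lmark$-elimination; and every phased sum or phased map, including those with phases $1$, uses \textsc{$\plus$-I${}_{\mathsf{i}}$} or \textsc{$\plus$-Map${}_{\mathsf{i}}$}. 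An applied map uses application with the map as function. The only data not fixed by the head are the context split and the intermediate types, namely the domain $A$ in an application and the scrutinee type $A\tensor B$ in a let.

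Linearity fixes the context split. Each variable of $\Gamma$ occurs free in exactly one premise subterm, so the split is $\Gamma_i=\Gamma\restriction\mathrm{fv}(\text{subterm}_i)$, listed in the context-variable order already fixed by the LO normalizer. For the intermediate types, I would prove simultaneously a synthesis lemma: for a neutral $E$ (variable, atom, application, or applied map), the type of $E$ is determined by $\Gamma\restriction\mathrm{fv}(E)$ and the atom signatures. This holds because the head variable's type comes from the context, and the spine then peels arrows. Applied maps $\oplusmap{\alpha}{R_1}{\beta}{R_2}E$ synthesize from $E$ together with the targets of the operands, and these targets are recovered from the ascribed result type $C\plus D$. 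A let scrutinee $R$ in normal form is a result, and its type is either synthesized (if it is neutral) or, by inversion on derivability, fixed as the pair of the types assigned to $x,y$ in the body. Given derivability, inversion therefore supplies a unique choice at each node, and the recursion produces a derivation whose premises are $\mathsf{CanDer}$ of the subterms. This gives recursive canonicity by construction.

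For (CR), I would induct on the generation of the chosen proof $p:n\approx n'$ as a congruence chain. Alpha-steps relabel bound names throughout the canonical derivation. Because the construction is syntax-directed and name-insensitive, this gives a derivation isomorphic to $\mathsf{CanDer}(n')$. An exchange of adjacent independent lets $L_1L_2[N]\approx L_2L_1[N]$ swaps the two $\tensor$-elimination nodes. Independence guarantees that the scrutinee premises, and hence their canonical subderivations, are literally the same, and that the residual body context is the same, so only the order of the two premise nodes changes. Phase equality replaces a scalar annotation by an equal evaluated scalar and leaves the rule and premises unchanged. Congruence closure lifts each case through the enclosing syntax-directed node, and transitivity and symmetry compose and invert the lifts.

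The main obstacle is the determinacy of intermediate types, especially for let scrutinees that are values rather than neutrals, and for map sources that may be non-first-order $\eta$-atomic sums. There the type is not locally synthesized. I would handle this by threading the expected type downward from the body's use of $x,y$. If that is not enough, I would fix a canonical choice using the fixed variable order, and then check that (CR) only ever relates derivations that make the same choice.
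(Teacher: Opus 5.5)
Your proposal is correct and follows essentially the same route as the paper's proof. The shared ingredients are syntax-directed recursion, with linearity fixing each context split; the stated priority resolving the overlap between the phased and ordinary rules; and (CR) proved by induction on the chosen $\approx$ derivation through alpha, (X), (P), and congruence/equivalence closure.

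Your neutral-synthesis lemma makes explicit what the paper compresses into ``the result type fixes the remaining rule indices.'' The obstacle you flag is vacuous, so the fallback canonical choice is not needed. In an irreducible term, a tensor-typed let scrutinee can be neither a pair (NF1) nor a let (rule (D)), and every other non-neutral result has the wrong type. An applied map's scrutinee is also neutral. Hence every intermediate type is either synthesized from a variable or atom head or read off the expected type.
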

\begin{proof}
Recurse on the normal syntax, not on its overlapping $V,E,R$
classification.  Linearity makes each premise context the ordered
restriction of $\Gamma$ to that premise's free variables, so tensor,
application, sum/map, and let splits are fixed.  The result type fixes
the remaining rule indices.  Derivability supplies every side
condition: first-order sum summands and map targets, unit phases, and
disjoint contexts.  The stated priority removes the only overlap
between the ordinary and phased internal presentations.  Thus the
selected derivation is a function.

For (CR), induct on the chosen $\approx$ derivation.  Alpha renames the
selected binder.  An (X) generator reassembles the same recursively
canonical scrutinee and body premises in the opposite legal order.
A (P) generator changes only an annotation with the same evaluated
unit scalar.  Congruence and equivalence closure compose these typed
lifts.  Reflexivity returns the identical canonical derivation, so no
typing-proof irrelevance is used.
\end{proof}
\begin{lemma}[Each rewrite step decreases the measure]
\label{lem:measure-decrease}
If $\Gamma\intjudge t:A$ and $t \to t'$, then $\mu(t) > \mu(t')$.
\end{lemma}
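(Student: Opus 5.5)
By case analysis on the generating rule of the step $t\to t'$. Every step other than (G) is already handled by Lemma~\ref{lem:ordinary-step-rank}, which gives a strict lexicographic decrease of $\mu=(\Phi,\mathsf{Bind},\mathsf{LetDepth})$. In detail, (E) lowers the first component. (A) and (B) keep $\Phi$ and lower $\mathsf{Bind}$. Every commuting conversion, including the indexed (H$^*$) and the sum- and map-floaters, keeps $\Phi$ and $\mathsf{Bind}$ and lowers $\mathsf{LetDepth}$.

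That lemma is stated for the actual context-lifted step. The enclosing context $K[-]$ therefore contributes equally to $\Phi$ and $\mathsf{Bind}$ on both sides, and the lexicographic comparison is unaffected by the surrounding term. The typing hypothesis $\Gamma\intjudge t:A$ is needed here only to invoke linearity. This is what yields the exact counts $\Phi(t[e/x])=\Phi(t)+\Phi(e)$ and $\mathsf{Bind}(t[e/x])=\mathsf{Bind}(t)+\mathsf{Bind}(e)$ in the $\beta$-cases.

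The remaining case is a (G)-step $K[r]\to K[s]$, where
\[
 r=\oplusmap{\alpha}{f}{\beta}{g}\,[\,\alpha'\cdot t_1\mid\beta'\cdot t_2\,]
\]
and $s=\mathsf{SumNF}^{C,D}_{\alpha\alpha',\beta\beta'}(f\,t_1,g\,t_2)$. I would argue the first component directly. Internal typing inversion at the redex supplies typed branch applications $f\,t_1:C$ and $g\,t_2:D$ in disjoint contexts, with $C,D$ first-order and unit phases. Clause (S$_k$) of Theorem~\ref{thm:lo-normalizer-total}, applied at $k=\Phi(f\,t_1)+\Phi(g\,t_2)$, then shows that $s$ is defined and that $\Phi(s)\le\Phi(f\,t_1)+\Phi(g\,t_2)$.

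By the computation $(*)$ in that proof, this sum equals $\Phi(r)-1$. The count is additive over the hole, since application nodes contribute nothing and the removed map constructor contributes exactly one. Hence
\[
 \Phi(K[s])\le\Phi(K[-])+\Phi(r)-1<\Phi(K[r]),
\]
so $\mu$ strictly decreases in its first component, whatever happens to $\mathsf{Bind}$ and $\mathsf{LetDepth}$. These may grow, because the hereditary normalization inside $\mathsf{SumNF}$ can perform arbitrary $\beta$-steps and floats.

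The main point to check is that the (G) case really is a strict drop, and not merely a non-increase, in $\Phi$. This requires that $\mathsf{SumNF}$ never creates map constructors. Its internal normalization uses only (A), (B), (E), floaters and nested (G)-steps, all of which are non-increasing in $\Phi$ by the same lemma and by induction on the stratum. This is exactly the bound $\Phi(s)\le\Phi(e_L)+\Phi(e_R)$ already recorded in (S$_k$), so no new argument is needed. One should only note that the bound is applied to the branch terms $t_1,t_2$ as they stand, arbitrary and possibly reducible, which (S$_k$) permits.
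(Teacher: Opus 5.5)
Your proposal is correct and follows the paper's proof. The non-(G) steps are delegated to Lemma~\ref{lem:ordinary-step-rank}, and the (G) step is handled by the $\Phi$-bound of clause (S$_k$) of Theorem~\ref{thm:lo-normalizer-total} together with equation $(*)$, with the enclosing context contributing the same number of map constructors on both sides. Your extra remarks are accurate elaborations of points the paper leaves implicit: in the (G) case $\mathsf{Bind}$ and $\mathsf{LetDepth}$ may grow, and the bound applies to arbitrary, possibly reducible, branch terms.
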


\begin{proof}
For every rule other than (G), this is
Lemma~\ref{lem:ordinary-step-rank}.  For a (G)-redex $r$ in an
actual enclosing context $K[-]$, equation~\textup{(*)} in the proof
of Theorem~\ref{thm:lo-normalizer-total} and its companion
$\Phi$-bound give
\[
 \Phi\bigl(\mathsf{SumNF}(fR_1,gR_2)\bigr)
 \leq \Phi(fR_1)+\Phi(gR_2)=\Phi(r)-1.
\]
The context contributes the same number of map constructors on both
sides, so the first component of $\mu$ strictly decreases.
\end{proof}

\begin{theorem}[Termination of normalization]
\label{thm:termination}
There is no infinite reduction sequence
$t_0 \to t_1 \to t_2 \to \cdots$ whenever
$\Gamma\intjudge t_0:A$.
\end{theorem}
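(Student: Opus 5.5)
Termination follows from Lemma~\ref{lem:measure-decrease} and the well-foundedness of the lexicographic order on $\mathbb N^3$. Suppose, for contradiction, that $t_0\to t_1\to t_2\to\cdots$ is infinite with $\Gamma\intjudge t_0:A$.

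The first step is to show that every term in the sequence is internally typed at the same judgment shape. By induction on $i$, Lemma~\ref{lem:internal-preservation} (internal subject reduction) gives $\Gamma\intjudge t_i:A$ for every $i$. This covers ordinary steps via Lemma~\ref{lem:ordinary-internal-preservation}, and (G)-steps via the companion clause (S$_k$) of Theorem~\ref{thm:lo-normalizer-total}. In particular, the hereditary output of every (G)-step in the sequence is defined and typed, so each step is a genuine typed instance of $\to$.

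The second step applies Lemma~\ref{lem:measure-decrease} to each typed step, giving $\mu(t_0)>\mu(t_1)>\mu(t_2)>\cdots$ in $\mathbb N^3$ under the lexicographic order. This order is well founded: a strictly descending chain can decrease its first component only finitely often. After the first component stabilizes, the second can decrease only finitely often, and after that the third. Hence no infinite descending chain exists, which is the required contradiction.

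The only delicate point is the (G)-case of the measure decrease. The $\mathsf{SumNF}$ output can raise $\mathsf{Bind}$ and $\mathsf{LetDepth}$ arbitrarily, so the argument depends on $\Phi$ strictly dropping there. That drop is exactly the bound $(*)$ established in Theorem~\ref{thm:lo-normalizer-total}, and it is already packaged in Lemma~\ref{lem:measure-decrease}. So the main obstacle is making sure typing is available at every $t_i$, so that the lemma's hypotheses and the definedness of $\mathsf{SumNF}$ hold along the whole sequence; the first step supplies this.
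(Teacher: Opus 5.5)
Your proposal is correct and matches the paper's proof: internal subject reduction (Lemma~\ref{lem:internal-preservation}) keeps every $t_i$ typed as $\Gamma\intjudge t_i:A$, so Lemma~\ref{lem:measure-decrease} makes $\mu$ strictly decrease at each step. Well-foundedness of the lexicographic order on $\mathbb N^3$ then excludes an infinite sequence.
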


\begin{proof}
Lemma~\ref{lem:internal-preservation} keeps every intermediate term
internally well typed, so Lemma~\ref{lem:measure-decrease} applies at
each step and makes $\mu$ strictly decrease.  Lexicographic order on
$\mathbb N^3$ is well-founded.
\end{proof}

\paragraph{What this theorem does.}
Every reduction strategy terminates.  Existence of the particular
normal form returned by the deterministic strategy was established
constructively in Theorem~\ref{thm:lo-normalizer-total}; uniqueness
across strategies is proved separately in
Appendix~\ref{app:determinacy-proofs}.

\subsection{Analysis of normal forms}

Definition~\ref{def:nf-grammar} makes the irreducible shapes
explicit.  Ordinary neutral spines have variable or atom heads and
result operands.  A bare $\plus$-map has result operands as well: it
is a value when both are values and a blocked-map result otherwise.
An applied map may have such result operands but, in normal form, its
scrutinee is neutral.  These are grammar classifications, not new
term constructors.

\begin{definition}[Deep lambda-prefix condition \textup{(NF4*)}]
\label{def:nf4star}
Let $L_i[-]=\letpair{x_i}{y_i}{e_i}{[-]}$ use the prefix notation and
independence relation of Definition~\ref{def:rewrite}.  Condition
\textup{(NF4*)} says that, for every subterm whose $\lambda$-body has
outer prefix
\[
  \lam{z}{L_1\cdots L_k[R]},
\]
there is no index $j$ such that
\[
  z\notin\mathrm{fv}(e_j)
  \quad\text{and}\quad
  L_j\text{ is independent of every }L_i\text{ with }i<j.
\]
Equivalently, no (H$^*$)-redex is rooted at that $\lambda$-node.
For $j=1$ the independence requirement is vacuous.
\end{definition}

\begin{lemma}[Normal forms are exactly the $\to$-irreducible terms]
\label{lem:nf-iff-irreducible}
If $\Gamma\intjudge t:A$, then $t$ is $\to$-irreducible if and only if $t$ is
generated by the grammar of Definition~\ref{def:nf-grammar} and
satisfies \textup{(NF1)}--\textup{(NF3)} and \textup{(NF4*)}.
\end{lemma}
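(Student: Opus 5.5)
I would prove the two directions separately, each by induction on the structure of the internally typed term $t$, using typing inversion to control which constructor may occupy each position. The statement is a characterization of irreducibility. The grammar of Table~\ref{tab:nf-grammar} together with (NF1)--(NF3) and (NF4*) should be read as the syntactic complement of the left-hand sides of the generating rules of Definition~\ref{def:rewrite}.

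\emph{Grammar $\Rightarrow$ irreducible.}  Suppose $t$ is generated by the grammar and satisfies the four conditions. I would check that no left-hand side of a generating rule matches at any node.
\begin{itemize}
\item (NF1) excludes (A) and (B), (NF2) excludes (E), and (NF3) excludes (G).
\item (NF4*) is by Definition~\ref{def:nf4star} exactly the absence of (H$^*$)-redexes.
\item The remaining floaters (C), (C$'$), (D), (C$_\tensor^{L/R}$), (S$_{L/R}$), (F), (F$_{L/R}$) each require a let-headed term in a function, argument, scrutinee, tensor component, sum-branch or map-branch position. The grammar never places an $N$ that is not an $R$ in these positions: $V$, $E$ and $R$ contain no let, and lets appear only as $\letpair{x}{y}{R}{N}$, whose scrutinee is an $R$ and whose body is an $N$.
\end{itemize}
This direction is a routine case check, done by simultaneous induction on the mutual classes $V,E,R,N$.

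\emph{Irreducible $\Rightarrow$ grammar.}  I would prove a strengthened statement by induction on $t$: an irreducible internally typed $t$ lies in $N$, and it lies in $R$ whenever it is not let-headed. The cases follow the outer constructor.
\begin{itemize}
\item A variable or an atom is immediate.
\item A $\lambda$-abstraction has an irreducible body, which lies in $N$ by the induction hypothesis, and (NF4*) holds because no (H$^*$) fires.
\item A pair has components that are not let-headed, because (C$_\tensor^{L/R}$) does not fire; the components are therefore $R$'s.
\item A let has a scrutinee that is not let-headed, by (D).
\item A sum former or a map former has non-let branches by (S$_{L/R}$) and (F$_{L/R}$). The result then lies in $V$ or is a blocked $R$.
\item An application $e\,r$ is the substantive case. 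Its argument is not let-headed by (C$'$)/(F), and its function is not let-headed by (C). If the function is a $\lambda$, (A) would fire. The function therefore has type $A\lmark B$ and, by inversion on the internal typing, is a variable, an atom, an application, or a map former.
\end{itemize}
A pair, a sum former, or a let in function position is ruled out by typing. When the head is a map former, its scrutinee cannot be a sum former by (G) and cannot be a map application by (E). The scrutinee is also not let-headed, so the induction hypothesis yields a neutral $E$, which gives the production $\oplusmap{\alpha}{R_1}{\beta}{R_2}\,E$. When the head is itself an application, I would show by inner induction that it is an $E$, which gives $E\,R$. The result is a value $x\,V$ exactly when $R$ is a value.

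\emph{Expected obstacle.}  I expect the main difficulty to be the application case, specifically showing that a function-typed irreducible term in head position is neutral. The difficulty is in the overlapping $V/E/R$ classes. A variable-headed value $x\,V$ must be recognised as neutral, and a bare map former (a $V$ or blocked $R$) may occur in head position only when it is applied. My first attempt would be to state the inner induction as a separate lemma, \emph{every irreducible non-let term of function type that is not an abstraction or a bare map former is in $E$}, proved by induction on the spine length. Typing supplies the needed shape information, because sum-typed terms cannot be heads and tensor-typed terms cannot be applied. I would also check that (NF3) refers to sum formers of arbitrary branches, so that phased and blocked scrutinees are excluded as well.
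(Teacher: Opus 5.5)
Your proposal is correct and matches the paper's proof. For ($\Rightarrow$) it uses the same case analysis on the outer constructor, with context closure, the floaters forcing operands into $R$, and (A)/(E)/(G) plus typing inversion forcing heads and map scrutinees to be neutral; for ($\Leftarrow$) it checks that the grammar together with \textup{(NF1)}--\textup{(NF4*)} blocks every left-hand side.

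Two small repairs are needed. First, $V$, $E$ and $R$ may contain lets (inside $\lam{x}{N}$), so the fact you actually need is that no $V/E/R$ production is let-headed. Second, at an applied map the induction hypothesis only gives an $R$; you need the paper's typing observation that a sum-typed $R$ which is neither a sum former nor a map application must be a neutral.
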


\begin{proof}
We prove both directions.

\smallskip
\noindent\textbf{($\Rightarrow$) Irreducible implies in the grammar.}
Because $\to$ is closed under contexts, every subterm of an
irreducible term is irreducible.  We classify the possible outer
shapes, using this observation recursively.

First consider lets.  A let headed subterm in a function or argument
position, a tensor component, a let scrutinee, a map scrutinee, a map
operand, or a sum branch would trigger respectively (C), (C$'$),
(C$_\tensor^{L/R}$), (D), (F), (F$_{L/R}$), or (S$_{L/R}$).  Hence
normal constructor operands are results $R$, not outer let-prefixes,
and every remaining let occurs in an $N$-prefix.  Under a
$\lambda z$, an eligible prefix let at any depth $j$ would trigger
(H$^*$); the absence of such a redex is exactly (NF4*).  Notice that
a $z$-independent deeper let may remain only when some crossed prefix
let is dependent on it, so this conclusion is derivation- rather
than type-shaped.

The absence of the two beta redexes gives (NF1).  The absence of a
map applied to a map application gives (NF2).  The absence of a map
applied to any phased sum former, whether a value or blocked, gives
(NF3).

It remains to check that every let-free outer shape is one of the
recently widened productions.
\begin{itemize}
\item Variables and atoms are values and neutrals.  An ordinary
  irreducible application cannot have a $\lambda$ head by (NF1).
  Typing excludes tensor and sum formers from function position, so,
  apart from the map-headed case below, its function is a
  variable- or atom-headed neutral.  Its argument is an irreducible
  result.  Thus it has the production $E\;R$ (and $x\,V$ also has the
  paper's overlapping value classification).
\item A tensor former has irreducible result components.  It is the
  value $V_1\tensor V_2$ when both are values and otherwise the
  widened result $R_1\tensor R_2$.
\item A phased sum former has result branches because an outer branch
  let would trigger (S$_{L/R}$).  It is a sum value when both branches
  are values and otherwise a blocked-sum result
  $[\,\alpha\cdot R_1\mid\beta\cdot R_2\,]$.  Its summand types are first-order
  by inversion of the internal sum rule.
\item A bare map former likewise has result operands because
  (F$_{L/R}$) excludes outer operand lets.  If both operands are
  values it has the value production
  $\oplusmap{\alpha}{V_1}{\beta}{V_2}$; otherwise it has the genuine
  blocked-map result production
  $\oplusmap{\alpha}{R_1}{\beta}{R_2}$.  Its function type and
  first-order target summands come from the $\plus$-Map rule, not
  from the sum-former rule.
\item In an applied map
  $\oplusmap{\alpha}{R_1}{\beta}{R_2}\,u$, the operands have just been
  classified as results.  An outer let-prefix in $u$ would trigger
  (F); a sum former would trigger (G); and a map application is
  forbidden by (NF2).  Every remaining normal term of the required
  sum source type is neutral.  Hence the term has the widened
  production $\oplusmap{\alpha}{R_1}{\beta}{R_2}\,E$.
\item A $\lambda$ has an $N$ body and satisfies (NF4*), as established
  above.  Finally, a tensor-let has an irreducible result scrutinee
  and an $N$ body.  A let-headed scrutinee is excluded by (D), and a
  direct tensor-former scrutinee is excluded by (NF1), exactly as
  irreducibility requires.
\end{itemize}
These cases exhaust the typed term constructors and establish the
grammar.

\smallskip
\noindent\textbf{($\Leftarrow$) In the grammar implies irreducible.}
Proceed by simultaneous induction over $V,E,R,N$.  The recursive
premises contain no redex by the induction hypotheses.  Conditions
(NF1), (NF2), and (NF3) exclude the beta, map-composition, and
hereditary-sum rules at every node.

It remains to exclude commuting conversions.  A result production
has no outer let-prefix.  Consequently $E\;R$, $R_1\tensor R_2$,
$[\,\alpha\cdot R_1\mid\beta\cdot R_2\,]$, the bare map
$\oplusmap{\alpha}{R_1}{\beta}{R_2}$, and the applied map
$\oplusmap{\alpha}{R_1}{\beta}{R_2}\,E$ contain no outer let in any
slot targeted by (C), (C$'$), (C$_\tensor^{L/R}$), (F),
(F$_{L/R}$), or (S$_{L/R}$).  In an $N$-prefix, every scrutinee is a
result rather than a further prefix, excluding (D).  A bare map has
no root rewrite: when both operands are values it is a value, and
when at least one is not it is precisely the blocked-map result.
The analogous statement holds for phased and blocked sum formers.
For every $\lambda$-body prefix, (NF4*) excludes (H$^*$) at every
eligible index, not merely at the leading let.  Thus no generating
rule applies anywhere, and the term is $\to$-irreducible.
\end{proof}

\begin{corollary}[Existence of normal forms]
\label{cor:nf-exists}
If $\Gamma\intjudge t:A$, then
$n=\mathsf{NF}_{\mathrm{LO}}(t)$ is defined,
$\Gamma\intjudge n:A$, $t\to^*n$, and $n\in N$.
\end{corollary}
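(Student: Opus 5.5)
This corollary follows by assembling three results already proved in this appendix. No new induction is needed.

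First, apply Theorem~\ref{thm:lo-normalizer-total}, clause \textup{(N$_k$)}, with $k:=\Phi(t)$. It gives that $n=\mathsf{NF}_{\mathrm{LO}}(t)$ is defined, that $t\to^*n$, that $n$ is irreducible, and that $\Gamma\intjudge n:A$. The theorem is stated for every $k$ and proved by strong induction on $k$, so instantiating it at $\Phi(t)$ is immediate. Its side invariants (first-order summands and map targets, unit-modulus phases, disjoint branch contexts) come along with the typing, and we record them for later use.

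Second, to show $n\in N$, invoke the forward direction of Lemma~\ref{lem:nf-iff-irreducible} on the typed irreducible term $n$. Its hypothesis is exactly $\Gamma\intjudge n:A$, which the first step supplies. The lemma then places $n$ in the grammar of Table~\ref{tab:nf-grammar}, with \textup{(NF1)}--\textup{(NF3)} and \textup{(NF4*)} holding. Membership in the class $N$ needs one small observation: the top-level classification produced in that proof is either a result $R$ or a let-prefix $\letpair{x}{y}{R}{N}$. Since $R\subseteq N$ by the first production for $N$, this gives $n\in N$ in both cases.

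The only point needing care is ordering: the grammar lemma is applied only after typing and irreducibility have been established. This matches the remark after Theorem~\ref{thm:lo-normalizer-total} that the construction of $\mathsf{NF}_{\mathrm{LO}}$ never uses the grammar, so the argument is not circular. I expect no real obstacle beyond this bookkeeping. Determinacy is not required here; uniqueness is the content of Theorem~\ref{thm:determinacy}.
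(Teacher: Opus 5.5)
Your proposal is correct and matches the paper's proof. Instantiate Theorem~\ref{thm:lo-normalizer-total} at $k=\Phi(t)$ to get definedness, reachability, typing and irreducibility. Then apply the forward direction of Lemma~\ref{lem:nf-iff-irreducible} to place the typed irreducible result in the grammar, stressing, as the paper does, that the grammar lemma is used only after the normalizer has been shown total.
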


This establishes Theorem~\ref{thm:normalization}.

\begin{proof}
Set $k=\Phi(t)$ in Theorem~\ref{thm:lo-normalizer-total}.
The result is reachable and irreducible.
Lemma~\ref{lem:nf-iff-irreducible} therefore places it in the
normal-form grammar.  The grammar lemma is applied at this final
stage, after the normalizer has been defined and proved total.
\end{proof}

\begin{definition}[Additive balance]
\label{def:additive-balance}
An occurrence of
$[\,\alpha\mathbin\cdot u\mid\beta\mathbin\cdot v\,]$ or
$\oplusmap{\alpha}{f}{\beta}{g}$ is \emph{balanced} when its two
components have the same typed free-variable context.  A term is
balanced, written $\mathsf{Bal}(t)$, when all its additive occurrences
are balanced.  Since the internal additive rules split their premise
contexts, every balanced additive occurrence is closed; in particular,
a balanced $\plus$-map has closed operands.
\end{definition}

\begin{lemma}[Closed coherent-sum-free tensor forms]
\label{lem:closed-sum-free-tensor}
If $\cdot\intjudge n:A\tensor B$, $n$ is irreducible, and $n$
contains no coherent-sum former, then
$n=R_1\tensor R_2$ for closed normal results $R_1,R_2$.
\end{lemma}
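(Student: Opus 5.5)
The plan is to invert the normal-form grammar using Lemma~\ref{lem:nf-iff-irreducible}. Since $n$ is irreducible and internally typed, it lies in $N$. So $n$ is either a result $R$ or a let $\letpair{x}{y}{R'}{N'}$. I will do the let case first and show it cannot occur for a closed term. I will prove a small auxiliary claim by simultaneous induction on the grammar: a closed irreducible term of tensor type that contains no coherent-sum former is never let-headed. With the let case excluded, I then classify the remaining result shapes.

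\emph{Step 1 (no closed neutrals of tensor type).} I will show by induction on $E$ that a closed neutral has an atom head, or is an applied map $\oplusmap{\alpha}{R_1}{\beta}{R_2}\,E'$ with closed neutral $E'$. A variable head is impossible because the term is closed. For an atom-headed spine $s\,R_1\cdots R_m$, the atom types are $T\lmark S$ with $T,S$ given by the structural and \textsc{Exp} signatures, so the spine has at most one argument. Its result type $S$ is the target of the atom. When $S$ is a tensor, the spine is $s\,R$ with $s$ among $\distRi$, $\mathsf{dist}_L^{-1}$-style maps whose source is a sum, or $\expi{\theta}{J}$ at a tensor type $P$. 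An applied map has sum type, so it is not a candidate.

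\emph{Step 2 (removing atom applications at tensor type).} For $s\,R$ with $R$ of sum source type, I need $R$ closed, irreducible and sum-free. A closed sum-typed result in the grammar is a sum former, excluded by hypothesis, or a neutral, or a bare map; a bare map has function type, so that case is out. So $R$ must be a closed neutral of sum type, which reduces to a smaller instance of Step~1 at sum type. I therefore strengthen the induction: every closed sum-free irreducible term is headed, at the neutral level, by an atom applied to closed arguments, recursing until a term with no argument is reached. The main obstacle is to show that this chain cannot bottom out. Every atom needs an argument to produce a non-function type. Since there is no closed base-typed value, the innermost argument would have to be a closed neutral or value of first-order type. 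At first-order type the only closed candidates are sum formers (excluded), pairs of closed such terms, or atom applications, so by well-founded induction on term size no closed first-order term exists except pairs built from smaller closed terms. I will check that the $\expi{\theta}{J}\,R$ case at tensor type $P$ also reduces to a closed argument of type $P$ and is discharged by the same size induction.

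\emph{Step 3 (conclusion).} Once closed neutrals of non-function type are excluded, a let scrutinee $R'$ of tensor type must be a value or result pair $R_1\tensor R_2$. That is a $\beta_\tensor$-redex, contradicting (NF1), so $n$ is not let-headed. Among results of type $A\tensor B$, the only remaining production is $R_1\tensor R_2$. Each component is closed by linearity, because the empty context splits into empty contexts. Each component is also irreducible as a subterm of an irreducible term, and is a normal result. I expect the careful bookkeeping in Step~2, showing that atom spines cannot be fed closed arguments without a sum former, to be the only delicate part.
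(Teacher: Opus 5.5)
Your skeleton matches the paper's proof, with one gap. You invert the grammar via Lemma~\ref{lem:nf-iff-irreducible}, exclude closed non-function neutrals by strong induction on size, rule out a let-head because its closed tensor scrutinee would be a pair and hence a \textup{(B)}-redex, and get closed components from context splitting.

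\textbf{The gap is in Step~2.} There you let the descent through atom arguments bottom out at first-order type, and you control it only by the claim that no closed first-order term exists. That does not cover the internal calculus:
\begin{itemize}
\item Structural atoms are available at all their declared types. The Source case expansion itself uses $\distL:G_\Gamma\tensor(A\plus B)\lmark(G_\Gamma\tensor A)\plus(G_\Gamma\tensor B)$ with $G_\Gamma$ higher-order.
\item Applied $\plus$-maps likewise have unrestricted, possibly non-first-order, source sums.
\end{itemize}
So your induction must rule out candidates such as $\distL\,(V\tensor E)$, or a $\plus$-map applied to such a term, where $V$ is a closed abstraction and $E$ is sum-typed. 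The operand $V\tensor E$ is neither a neutral nor of first-order type. Hence your sentence that ``the innermost argument would have to be a closed neutral or value of first-order type'' is false, and your first-order induction says nothing about this operand. You never descend into pairs at non-first-order tensor type, which is exactly where the sum-typed component $E$ hides.

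\textbf{The repair} is to state the induction over all sum and tensor types, not only first-order ones. Prove simultaneously, by strong induction on normal-term size, that:
\begin{itemize}
\item[(i)] no closed coherent-sum-free normal term has sum type, first-order or not;
\item[(ii)] none has first-order type;
\item[(iii)] every such term of tensor type is pair-headed.
\end{itemize}
Each head is then disposed of as follows:
\begin{itemize}
\item A forward distributor's tensor operand is a pair by (iii), and its smaller sum-typed component is excluded by (i).
\item $\alpha^\plus$, $\sigma^\plus$, inverse distributors, and applied maps each need a smaller sum-typed operand, excluded by (i).
\item An \textsc{Exp} head needs a smaller closed first-order operand, excluded by (ii).
\end{itemize}
Claim (ii) itself holds because $\base$ has no closed producer, first-order sums fall under (i), and first-order pairs have smaller closed first-order components. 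A let-head is excluded inside the induction by (iii) together with \textup{(B)}.

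This is precisely the paper's proof, and your Step~3 then goes through unchanged.
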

\begin{proof}
By Lemma~\ref{lem:nf-iff-irreducible}, $n$ belongs to the normal-form
grammar.  We prove simultaneously by strong induction on normal-term
size that (i) no closed coherent-sum-free normal term has sum type, (ii) no such
term has first-order type, and (iii) every such term of tensor type is
pair-headed.  A closed variable-headed neutral is impossible.  A
quantum head requires a smaller closed normal operand of first-order
type.  A $\plus$-coherence head requires a smaller sum-typed operand; a
distributor with sum result requires a tensor operand with a smaller
sum-typed component; and an inverse distributor with tensor result
requires a smaller sum-typed operand.  An applied $\plus$-map also
requires a smaller sum-typed scrutinee.  For (ii), base has no closed
producer; a first-order sum is excluded by (i), and a first-order tensor
pair would have smaller closed first-order components, excluded by
induction.  A closed let-prefix is impossible because its tensor
scrutinee is pair-headed by induction and hence forms a
\textup{(B)}-redex.  The remaining tensor form is
$R_1\tensor R_2$, and context splitting makes both components closed.
\end{proof}

\begin{lemma}[Source context and sum formation]
\label{lem:source-expansion-discipline}
If $\Gamma\sjudge t:A$, then $t^\circ$ contains no coherent-sum
former and $\mathsf{Bal}(t^\circ)$.  At every translated case, the
two branch maps abstract the same typed variable context.  Moreover,
$t^\circ$ contains no inverse distributor whose source or target is
non-first-order.
\end{lemma}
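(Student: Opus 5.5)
We prove the lemma by structural induction on the Source derivation of $\Gamma\sjudge t:A$, following the clauses of the recursive expansion $(-)^\circ$. We prove the four claims simultaneously: no coherent-sum former, $\mathsf{Bal}(t^\circ)$, equal branch contexts at translated cases, and no non-first-order inverse distributor. All four are closure properties of syntax. Each is preserved by any clause that builds $t^\circ$ from the expansions of the immediate subterms plus fresh constructors, provided those fresh constructors do not violate it. So the work is to inspect the new syntax introduced by each clause.

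\emph{Multiplicative and atomic clauses.} These are variables, abstraction, application, tensor introduction and elimination, structural atoms, and \textsc{Exp}. They introduce only variables, $\lambda$, application, $\tensor$, let, and atoms. None of these is a sum former $[\,-\mid-\,]$ or a $\plus$-map, so balance is inherited from the premises. The Source rules admit structural atoms only when their source and target are Source types. Since every $\plus$ in a Source type lies inside a first-order data type, any $\distR^{-1}$ atom written in Source is at a first-order instance. The atom $J$ inside $\expi{\theta}{J}$ is a certificate at a first-order type, and by the grammar it contains no $\distR^{-1}$. Metalanguage abbreviations such as $\sigma^\tensor$ and $\alpha^\tensor$ are expanded before typing, so they are covered by the multiplicative cases.

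\emph{Case clause.} This is the only clause that generates additive structure, and it is the main step. The expansion is the composite $\distRi\circ(\hat f\plus\hat g)\circ\distL$ applied to $\langle\Gamma\rangle\tensor e^\circ$. Its components are checked as follows.
\begin{itemize}
\item \emph{No sum former.} The composite introduces only the map former $\hat f\plus\hat g$, never $[\,-\mid-\,]$.
\item \emph{Closed operands.} The operands $\hat f$ and $\hat g$ are closed. Every variable of $\Gamma$ in $u^\circ$ and $v^\circ$ is replaced via $[z/\Gamma]$ by a projection of the bound $z$. The binders $x,y$ are bound by the let in the $\lambda$. Hence both operands have empty free-variable context, the occurrence is balanced, and Definition~\ref{def:additive-balance} makes it trivially so.
\item \emph{Equal branch contexts.} Both operands abstract the same typed context: the single variable $z:G_\Gamma$ together with a summand variable. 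By the same-context premise of the Source case rule, the substitution $[z/\Gamma]$ consumes exactly all of $\Gamma$ in each branch. The claim that the two branch maps abstract the same typed variable context is exactly this.
\item \emph{Inverse distributor.} The $\distRi$ has source $(A\tensor C)\plus(B\tensor C)$ and target $(A\plus B)\tensor C$. Both are first-order, by the side condition that $A$, $B$, $C$ are first-order.
\end{itemize}
The only possibly higher-order instance in the composite is the forward $\distL$ at $G_\Gamma$, which the lemma does not restrict. When $\Gamma=\varnothing$, the simplified $\hat f,\hat g$ are closed for the same reasons.

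\emph{Inductive subterms.} The subterms $e^\circ$, $u^\circ$, $v^\circ$ satisfy all four properties by the induction hypothesis. The remaining point is that substitution does not destroy them. Substituting the projections for variables, $u^\circ[z/\Gamma]$, introduces only lets and variables. Every additive occurrence inside $u^\circ$ is already closed by balance, so substitution does not touch it and it stays balanced.

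The main obstacle is stating the balance invariant so that it survives this substitution and embedding. I would strengthen the inductive hypothesis to ``every additive occurrence in $t^\circ$ is closed''. The case clause then goes through immediately, because closed subterms are unaffected by any substitution.
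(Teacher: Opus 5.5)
Your proof is correct and follows the paper's argument. Both induct on the Source derivation and treat the case clause as the only source of additive structure: packing the shared context makes both branch maps closed, the expansion introduces no coherent-sum former, and only the forward distributor can be higher-order. Both then use the Source type grammar to confine explicitly written inverse distributors to first-order instances. Your explicit check that $[z/\Gamma]$ cannot disturb the already-closed additive occurrences of $u^\circ$ and $v^\circ$ is a detail the paper leaves implicit, and your endpoint argument for $\distRi$ applies verbatim to $\mathsf{dist}_L^{-1}$.
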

\begin{proof}
Induct on the Source derivation.  In the case rule, both branch
premises have the same typed variable context.  The expansion packs
that context once and abstracts it in both branch maps, so both maps
are closed.  It uses a forward distributor, a $\plus$-map, and inverse
distributivity, but no coherent-sum former.  The forward distributor
may have a higher-order context factor; the inverse distributor has
the first-order source
$(A\tensor C)\plus(B\tensor C)$ and first-order target
$(A\plus B)\tensor C$.

Every explicitly available Source structural atom has Source-typed
endpoints.  By the Source type grammar, an inverse distributor at a
non-first-order instance cannot have two such endpoints.
The other cases, including \textsc{Exp}, are immediate.
\end{proof}

\begin{lemma}[Reduction preserves Source discipline]
\label{lem:reduction-preserves-source-discipline}
Suppose $\Gamma\intjudge t:A$, the term $t$ contains no coherent-sum
former, and $t\to t'$.  Then $\Gamma\intjudge t':A$,
$\mathrm{fv}(t)=\mathrm{fv}(t')=\mathrm{dom}(\Gamma)$ with the types
recorded by $\Gamma$, and $t'$ contains no coherent-sum former.
If $t$ contains no inverse distributor whose source or target is
non-first-order, neither does $t'$.
\end{lemma}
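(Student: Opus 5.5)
Typing of $t'$ is Lemma~\ref{lem:internal-preservation}, so the real work is the two syntactic invariants and the free-variable claim. I would prove them by case analysis on the generating rule, closed under the enclosing term context. Because every invariant is a property of all subterms ("contains no former of a given kind"), it suffices to check the contracted redex: the surrounding context $K[-]$ is unchanged and already satisfies the invariants.

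For the free-variable claim, linearity of the internal judgment gives $\mathrm{fv}(t)=\mathrm{dom}(\Gamma)$ for any internally typed $t$. Applying this to $t'$, which is typed in the same $\Gamma$, yields $\mathrm{fv}(t')=\mathrm{dom}(\Gamma)$ with the same recorded types. One can also see it directly: (A) and (B) substitute linearly, and the floaters only move binders whose bound variables stay in scope, by the side conditions of (H$^*$) and independence.

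For the coherent-sum invariant I would go through the rules in turn.
\begin{itemize}
\item \textbf{Substitution rules.} In (A) and (B) the result $t[e/x]$ or $u[e/x,t/y]$ is assembled only from subterms of the redex, so it contains no sum former.
\item \textbf{Commuting conversions.} (C)--(H$^*$) and (F), (F$_{L/R}$) only rearrange existing constructors.
\item \textbf{Sum floaters.} (S$_{L/R}$) have a sum former on their left-hand side, so they cannot fire on $t$.
\item \textbf{Map fusion.} (E) introduces only $\lambda z.f(hz)$ and $\lambda z.g(kz)$, which contain no sum former.
\item \textbf{Hereditary contraction.} (G) requires a sum former in scrutinee position, so it is vacuous for $t$.
\end{itemize}
The point of this step is precisely that (G), the only rule able to create a former $[\,\alpha\cdot-\mid\beta\cdot-\,]$ via $\mathsf{SumNF}$, is never enabled.

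The inverse-distributor clause is handled the same way. No rule creates atoms: substitution duplicates nothing, because linearity means each atom occurrence is moved rather than copied, and (E) and the floaters move atoms without altering their declared type indices. Every inverse distributor in $t'$ is therefore an occurrence already present in $t$, with the same source and target, and so it is first-order.

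The step I expect to need the most care is substitution in (A). One must confirm that atom type indices are fixed syntactic annotations, unaffected by instantiating the bound variable, and that alpha-renaming introduces no new constructors. Both facts follow from the fixed-index convention for structural atoms in \S\ref{par:structural-isos}.
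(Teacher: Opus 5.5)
Your proposal is correct and follows essentially the same route as the paper: typing comes from internal subject reduction (Lemma~\ref{lem:internal-preservation}) and the free-variable equality from linear relevance; (S$_L$), (S$_R$), and (G) cannot fire because their left-hand sides contain a coherent-sum former, and every other rule introduces none. For the last clause, no rule creates a structural atom or changes its type indices, and your explicit context-closure step is the paper's remark that the property is ``preserved when the step is lifted into a context.''
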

\begin{proof}
Internal subject reduction gives $\Gamma\intjudge t':A$; linear
relevance identifies the typed free-variable context of both terms
with $\Gamma$.  Rules \textup{(S$_L$)}, \textup{(S$_R$)}, and
\textup{(G)} cannot
apply because their left-hand sides contain a coherent-sum former.
Every remaining rule introduces no coherent-sum former, by
capture-avoiding substitution or direct inspection.  The property is
preserved when the step is lifted into a context.  No reduction rule
creates a structural atom or changes the indexed types of an existing
one, so the final property is preserved as well.
\end{proof}

\begin{lemma}[Eventual restoration of additive balance]
\label{lem:lo-restores-balance}
Suppose $\Gamma\intjudge t:A$, the term $t$ contains no coherent-sum
former, and $\mathsf{Bal}(t)$.  If $t\to_{\mathrm{LO}}u$, then some
term $v$ on the ensuing deterministic path satisfies
\[
  u\to_{\mathrm{LO}}^*v
  \qquad\text{and}\qquad
  \mathsf{Bal}(v).
  \tag{Bal-restore}
\]
Consequently,
$\mathsf{Bal}\bigl(\mathsf{NF}_{\mathrm{LO}}(t)\bigr)$.
\end{lemma}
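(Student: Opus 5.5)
The plan is to prove the consequence $\mathsf{Bal}(\mathsf{NF}_{\mathrm{LO}}(t))$ directly and then take $v:=\mathsf{NF}_{\mathrm{LO}}(u)$. This $v$ lies on the deterministic path from $u$ and exists by Theorem~\ref{thm:lo-normalizer-total}. So (Bal-restore) reduces to showing that the LO normal form of a balanced, coherent-sum-free, internally typed term is balanced.

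By Lemma~\ref{lem:reduction-preserves-source-discipline}, every term on the path is internally typed at $\Gamma\intjudge -:A$ and contains no coherent-sum former. Rules (G), (S$_L$) and (S$_R$) therefore never fire, and the only additive occurrences are $\plus$-maps. Balance of a map means its operands are closed.

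\textbf{Which steps can break balance.} I would check which LO steps can create a map with a non-closed operand.
\begin{itemize}
\item Rule (E) composes two balanced maps into $\oplusmap{\alpha\alpha'}{f\circ h}{\beta\beta'}{g\circ k}$, whose operands are closed when $f,g,h,k$ are.
\item Rules (A) and (B) substitute into bodies. A balanced map has no free variables, so substitution never reaches inside its operands. It only copies substituted maps, which keep their status.
\item (H$^*$), (C), (C$'$), (D), (C$^{L/R}_\tensor$) and (F) move lets without leaving a map operand, so they keep map operands closed.
\item The only genuinely dangerous steps are (F$_L$) and (F$_R$). The step $\oplusmap{\alpha}{\letpair{x}{y}{e}{h}}{\beta}{g}\to \letpair{x}{y}{e}{\oplusmap{\alpha}{h}{\beta}{g}}$ leaves $h$ with free variables $x,y$. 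Here $e$ is closed, because the original operand was closed.
\end{itemize}

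\textbf{Invariant.} I would therefore carry the following invariant along the LO path: every unbalanced map occurrence lies in the body of a \emph{pending} let $\letpair{x}{y}{e}{-}$ with $e$ closed, and the extra free variables of its operands are among the variables bound by such pending lets. Checking preservation is a case analysis on the rules just listed, plus the observation that floating a pending let past other constructors keeps its scrutinee closed.

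\textbf{Main obstacle.} The difficult step is showing that no pending let survives to the normal form. Suppose the final irreducible term still contained a pending let. Its scrutinee $e$ is closed, irreducible (every subterm of an irreducible term is), coherent-sum-free and of tensor type. Lemma~\ref{lem:closed-sum-free-tensor} then forces $e=R_1\tensor R_2$, which makes the let a (B)-redex and contradicts irreducibility. Hence the normal form has no pending lets, and by the invariant it has no unbalanced map.

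Concretely, the LO order makes this happen: it normalizes the closed scrutinee before the body and then fires (B) at the let node. That substitution plugs closed results into $h$, which restores closed operands.

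The delicate point is making the invariant's bookkeeping precise when a pending let is itself floated by (D), (H$^*$), or (C$'$) before its (B) fires. I would handle this by indexing pending lets by their bound variables, which are renamed apart and stay attached to closed scrutinees under every commuting conversion.
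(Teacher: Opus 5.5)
You have the same architecture as the paper. Only (F$_L$)/(F$_R$) can unbalance a map, a protection invariant is carried along the LO path, and at the irreducible endpoint Lemma~\ref{lem:closed-sum-free-tensor} turns a surviving protecting let into a (B)-redex. The gap is in the invariant you carry. You require every extra free variable of an open map to be bound by a let with a \emph{closed} scrutinee. You also assert that these lets ``stay attached to closed scrutinees under every commuting conversion.'' That fails for (D). Take $c,d,g$ closed and $\mathrm{fv}(h)=\{x,y\}$, and start from the balanced term
\[
 \oplusmap{1}{\bigl(\letpair{x}{y}{(\letpair{a}{b}{c\tensor d}{b\tensor a})}{h}\bigr)}{1}{g}.
\]
LO fires (F$_L$) at the root. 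It then fires (D) at the new outer let, whose scrutinee is let-headed so that (D) is the first applicable rule there. This produces
\[
 \letpair{a}{b}{c\tensor d}{\letpair{x}{y}{b\tensor a}{\oplusmap{1}{h}{1}{g}}} .
\]
The open map's variables $x,y$ are now bound by a let whose scrutinee $b\tensor a$ is open. Your invariant is therefore false at this point, even though balance is restored after two further (B)-steps. Since a closed scrutinee can be let-headed to arbitrary depth, such dependent nests have unbounded length. The same problem arises if (F$_L$)/(F$_R$) fires on an already open map whose operand let mentions protected variables. Your case analysis only treats (F$_L$) on a balanced map (``$e$ is closed, because the original operand was closed''). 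You would have to exclude that situation separately, for instance by proving that LO never selects a redex inside the body of such a let, and you do not.

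The repair is the one the paper uses. Replace single pending lets by closed \emph{chains} $L_1,\ldots,L_k$ of enclosing lets, not necessarily contiguous, with $\mathrm{fv}(e_1)=\varnothing$ and $\mathrm{fv}(e_i)$ contained in the binders of $L_1,\ldots,L_{i-1}$. Then require every free variable of an open map to be bound by such a chain. Preservation is then routine:
\begin{itemize}
\item (D) inserts the extracted let ahead of the old link.
\item The independence premise of (H$^*$) preserves chain order.
\item (E) takes the ordered union of the two chains.
\item A (B)-step at a link substitutes terms whose free variables lie among earlier binders, so the link can be dropped.
\item (F$_L$)/(F$_R$) creates or extends a chain.
\end{itemize}
Your endpoint argument then goes through unchanged, applied to the \emph{outermost} link of a protecting chain, which is the one with a closed scrutinee.
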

\begin{proof}
All rules except \textup{(F$_L$)} and \textup{(F$_R$)} preserve
balance directly.  A balanced additive occurrence is closed, so
substitution in \textup{(A)} or \textup{(B)} cannot enter it; additive
occurrences carried by the substituted terms retain their balance.
Rule \textup{(E)} composes closed operands, and the remaining ordinary
rules only relocate existing subterms.
Rules \textup{(S$_L$)}, \textup{(S$_R$)}, and \textup{(G)} cannot
occur, since their left-hand sides contain a coherent-sum former and
the remaining rules introduce none.  Thus, unless the selected rule is
\textup{(F$_L$)} or \textup{(F$_R$)}, take $v=u$.

Consider \textup{(F$_L$)}; the right case is symmetric:
\[
 \oplusmap{\alpha}{(\letpair{x}{y}{e}{h})}{\beta}{g}
 \longrightarrow
 \letpair{x}{y}{e}{\oplusmap{\alpha}{h}{\beta}{g}} .
\]
Balance and the disjoint premise contexts on the left imply
$\mathrm{fv}(e)=\mathrm{fv}(g)=\varnothing$ and
$\mathrm{fv}(h)\subseteq\{x,y\}$.  The reduct may therefore contain
an open map, but every variable free in that map is bound by the new
enclosing tensor-let, whose scrutinee is closed.

For this proof, a chain for a map occurrence is any, not necessarily
contiguous, subsequence of the enclosing tensor-lets on its
root-to-occurrence path, ordered from outermost to innermost; other
constructors and unchosen tensor-lets may occur between its members.
Such a chain $L_1,\allowbreak\ldots,\allowbreak L_k$, with
scrutinees $e_1,\ldots,e_k$, is \emph{closed} when
$\mathrm{fv}(e_1)=\varnothing$ and every
$\mathrm{fv}(e_i)$ is contained in the binders of
$L_1,\ldots,L_{i-1}$.  An open $\plus$-map is \emph{protected} when
all its free variables are bound by a closed enclosing chain.  The
reduct above has this property with a one-let chain.

Follow the ensuing deterministic path with the stronger invariant that
every open map is protected.  Induction on the decreasing measure
$\mu$ verifies it.  The ordinary floaters preserve the order and
dependencies of the chain.  Rule \textup{(D)} updates that order, and
the independence premise of \textup{(H$^*$)} preserves it across the
crossed binders.  Rule \textup{(E)} uses the ordered union of the
nested enclosing chains of its two map occurrences.  A
\textup{(B)}-step at a chain link substitutes pair components whose
free variables lie among the
preceding binders, so deleting that link preserves protection.  A later
\textup{(F$_L$)} or \textup{(F$_R$)} step creates or extends a closed
chain of the same form.  The remaining rules preserve protection by
capture-avoiding substitution and context closure.

Termination gives an irreducible endpoint $v$.  If $v$ contained an
open map, choose a closed chain protecting it.  The outermost let of
that chain has a closed, irreducible, tensor-typed scrutinee.  By
Lemma~\ref{lem:closed-sum-free-tensor} the scrutinee is pair-headed, so
this let is a \textup{(B)}-redex, a contradiction.  Hence $v$ is
balanced and proves \textup{(Bal-restore)}.  Iterating along the finite
LO path, with the zero-step case immediate, proves the final claim.
\end{proof}

\begin{proof}[Proof of Corollary~\ref{cor:source-normalization}]
Lemma~\ref{lem:source-internal-inclusion} types $t^\circ$ internally.
Theorem~\ref{thm:normalization} gives a defined, internally typed,
irreducible normal form reachable from $t^\circ$.  The discipline lemmas show that it contains no coherent-sum former
and no inverse distributor whose source or target is non-first-order;
Lemma~\ref{lem:lo-restores-balance} shows that it is balanced.
By Definition~\ref{def:additive-balance}, every $\plus$-map in it
therefore has closed operands.
\end{proof}

%% file: appendix-determinacy.tex
\section{Proofs of Determinacy}
\label{app:determinacy-proofs}

This appendix proves determinacy directly for the open,
normalization-internal calculus.  Quantum atoms and structural atoms are
inert constants for the rewrite relation, so the confluence argument does
not inspect or replace them.  All judgments below are internal judgments
$\Gamma \intjudge t:A$, and all references to $\to$ are to
Definition~\ref{def:rewrite}.  The proof is by well-founded induction on
the normalization measure $\mu$ of Definition~\ref{def:measure}.  At each
rank it establishes, in that order, compatibility with a small
administrative equivalence, closure of one-step peaks, and uniqueness of
irreducible reducts.  The hereditary rule (G) may invoke uniqueness only
at strictly smaller rank.

\subsection{Rig coherence is transport data}

The following result supplies the transport data used in
Appendices~\ref{app:normalization} and~\ref{app:unitarityNEW}.

\begin{theorem}[Rig coherence {\citep[Theorem 3.1]{Laplaza1972}}]
\label{thm:rig-coherence}
In a rig category (a symmetric monoidal category with a compatible
monoidal sum and distributivity), every diagram built from coherence
isomorphisms and distributivity maps between values of \emph{regular}
rig expressions (no repeated letters) commutes.
\end{theorem}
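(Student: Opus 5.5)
The plan is to reduce the statement to Mac Lane's coherence theorems for the two symmetric monoidal structures separately, with distributivity handled by a normal-form argument. First, fix a formal setting. Regular rig expressions are words in variables, $\tensor$, $\plus$ and units, with each variable occurring at most once. Canonical maps are the formal composites of associators, unitors, symmetries for both $\tensor$ and $\plus$, the distributivities $\distL,\distR$ and their inverses. Every such composite has a \emph{shape}: the bijection it induces between the multisets of monomials obtained by fully expanding source and target into sum-of-products form. Because the expression is regular, each monomial is a word with no repeated letters, so this shape is a well-defined bijection of labelled monomials together with a permutation of letters inside each monomial.

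\emph{Step 1 (normal forms).} For each regular expression $E$, I would fix a canonical reduction map $\nu_E : E \to \mathsf{nf}(E)$. It is built by distributing left to right and then left-associating and ordering the $\plus$ and $\tensor$ factors by a fixed order on letters. The first claim is that every canonical map $f : E \to E'$ whose source and target have the same expansion satisfies
\[
  \nu_{E'}\circ f = \pi_f\circ\nu_E ,
\]
where $\pi_f$ is a canonical map between normal forms built only from $\plus$- and $\tensor$-symmetric monoidal structure. I would prove this by induction on the length of $f$, one generator at a time, so each case is a single square involving one generator and one distributivity step. These squares are exactly Laplaza's axioms: naturality of $\distL,\distR$, their compatibility with associators, symmetries and units, and the pentagon-like distributivity-over-distributivity axiom.

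\emph{Step 2 (coherence on normal forms).} Between normal forms, $\pi_f$ lives in the free symmetric monoidal structure for $\plus$ over objects that are themselves $\tensor$-monomials. By Mac Lane's symmetric coherence, applied twice (inner $\tensor$, outer $\plus$), $\pi_f$ is determined by its underlying permutation data. Regularity makes that data determined by the endpoints: no two monomials, and no two letters within a monomial, can be confused. Hence $\pi_f=\pi_g$ for any parallel $f,g$. Since $\nu$ is invertible, it follows that $f=g$.

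The main obstacle is Step 1. The distributivity cases require verifying that every generator commutes with the chosen normalization, up to a purely monoidal correction. This is where the full list of Laplaza's compatibility axioms is consumed, and where the case analysis is long, especially for symmetries interacting with both distributors and for the units $0$ and $1$. I would first try to organize it as a rewriting argument with local confluence of the distribution rewrites modulo associativity and commutativity, checking critical pairs against the axioms, rather than by direct diagram chasing. Regularity enters only in Step 2, and it is genuinely needed: for $x\tensor x$ the symmetry and the identity are parallel but distinct.
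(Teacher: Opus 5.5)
The paper gives no proof of this theorem; it cites Laplaza, so your argument has to stand on its own, and as written it does not. Your overall route is the standard one: expand to sums of monomials, show that each canonical map becomes purely monoidal after normalization, then invoke symmetric monoidal coherence. But your hypothesis is too weak. You forbid repeated variables while letting the units occur freely, so $1\oplus 1$ and $A\otimes(1\oplus 1)$ count as regular. Yet $\id_{1\oplus 1}\neq\sigma^{\oplus}_{1,1}$ and $\id\neq\id_A\otimes\sigma^{\oplus}_{1,1}$ in $\mathbf{FdHilb}$, or in finite sets under $\sqcup,\times$. Your Step~2 claim that ``no two monomials can be confused'' is exactly what fails here: both summands expand to the empty monomial. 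This is the paper's own warning about $\id_{S\oplus S}$ versus $\sigma^{\oplus}_{S,S}$, with $1$ playing the repeated letter.

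Regularity must therefore be a condition on the expansion. After deleting $1$'s and discarding monomials containing $0$, each monomial must be repetition-free, and distinct monomials must have distinct letter sets. Counting $0$ and $1$ among the letters allowed at most once is one way to guarantee this. Once $0$ is admitted, the absorption isomorphisms $0\otimes A\cong 0\cong A\otimes 0$ are part of the rig structure. They must then be among your generators, together with their axioms, and your ``shape'' becomes a bijection between surviving monomials only.

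Step~1 is also missing its key lemma. A step of $f$ is a generator whiskered into an arbitrary context $K[-]$ built from $\otimes$ and $\oplus$. Since $\nu_{K[X]}$ distributes $K$ across the expansion of $X$, the square for that step is not a single instance of one of Laplaza's axioms. You need a context lemma with two parts:
\begin{itemize}
\item $\nu_{K[X]}$ agrees, up to a purely monoidal correction, with $K[\nu_X]$ followed by the normalization of $K$ applied to a normal form;
\item that second normalization is natural with respect to monoidal maps between normal forms.
\end{itemize}
Its core case is a product $\bigl(\bigoplus_i M_i\bigr)\otimes\bigl(\bigoplus_j N_j\bigr)$ of two normal forms. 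Expanding by $\distL$ first or by $\distR$ first agree only up to the $\oplus$-permutation that transposes the array of summands $M_i\otimes N_j$; for binary sums this is the middle-four interchange. Proving this, together with compatibility of the expansion with $\otimes$-associativity and $\otimes$-symmetry (where the within-monomial permutations enter), is where the distributivity axioms are actually consumed. Without it, your induction on the length of $f$ does not close.

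Finally, construct $\pi_f$ explicitly as a summand permutation together with $\otimes$-permutations inside summands. Mac Lane's theorem ``applied twice'' gives coherence for the free symmetric monoidal category on the category of $\otimes$-monomials. It does not by itself cover arbitrary composites of $\otimes$- and $\oplus$-structure maps, which can pass through non-normal objects such as $A\otimes(1\oplus 0)$.
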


Regularity matters: after repeated letters are instantiated, distinct
canonical maps can have equal endpoints, for example
$\id_{S\plus S}$ and $\sigma^\plus_{S,S}$.

\subsection{The administrative quotient}

Static phase annotations are compared extensionally.  Write
$\operatorname{eval}(\alpha)\in U(1)$ for the complex number obtained
by evaluating the static expression~$\alpha$.

\begin{definition}[Administrative equivalence]
\label{def:administrative-equivalence}
Let $\approx_1$ be one application, in an arbitrary term context, of
one of the following generators.
\begin{enumerate}
\item[\textup{($\alpha$)}] Alpha-conversion of a bound variable.
\item[\textup{(X)}] Exchange of adjacent independent tensor lets:
\[
  \letpair{x}{y}{e_1}{\letpair{p}{q}{e_2}{t}}
  \;\approx_1\;
  \letpair{p}{q}{e_2}{\letpair{x}{y}{e_1}{t}},
\]
provided
$\{x,y\}\cap\mathrm{fv}(e_2)=\emptyset$ and
$\{p,q\}\cap\mathrm{fv}(e_1)=\emptyset$, with bound names chosen
fresh.
\item[\textup{(P)}] Extensional equality of static phase
annotations.  Thus corresponding annotations in either
\[
  \oplusmap{\alpha}{f}{\beta}{g}
  \qquad\text{or}\qquad
  [\,\alpha\cdot t\mid\beta\cdot u\,]
\]
may be replaced by $\alpha',\beta'$ when
$\operatorname{eval}(\alpha)=\operatorname{eval}(\alpha')$ and
$\operatorname{eval}(\beta)=\operatorname{eval}(\beta')$.
\end{enumerate}
Let $\approx$ be the reflexive, symmetric, transitive, and congruence
closure of $\approx_1$.
\end{definition}

No source coherence map is a generator of $\approx$.  In particular,
sum symmetry is computational content, not bureaucracy.

\begin{lemma}[Administrative invariants]
\label{lem:administrative-invariants}
If $t\approx t'$, then:
\begin{enumerate}
\item if $\Gamma\intjudge t:A$, then
  $\Gamma\intjudge t':A$;
\item $\mu(t)=\mu(t')$;
\item $t$ and $t'$ have the same multiset of free variables, and a
  variable occurs linearly in one exactly when it occurs linearly in
  the other;
\item $t$ is $\to$-irreducible if and only if $t'$ is
  $\to$-irreducible.
\end{enumerate}
\end{lemma}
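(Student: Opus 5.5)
Since $\approx$ is the reflexive, symmetric, transitive and congruence closure of $\approx_1$, and each of the four properties is an equivalence-compatible relation between $t$ and $t'$, I reduce to a single generator applied at the root of a subterm and then lift through the enclosing term context. Each property is stable under symmetry and transitivity, since the third and fourth are equivalences and the first two are preserved along chains. So the plan is: first check each generator at the root, then prove a context-lifting step by induction on the context.

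For typing (1), the steps are these. Alpha-conversion is immediate. For (P), the internal phased rules only require unit modulus, and $\operatorname{eval}(\alpha)=\operatorname{eval}(\alpha')$ preserves it. For (X), invert the two nested \textsc{$\tensor$-E} derivations. The independence side conditions imply that $e_2$'s context does not mention $x,y$ and that $e_1$'s does not mention $p,q$. Hence the linear split of $\Gamma$ into the parts for $e_1$, $e_2$ and the body $t$ can be reassembled in the opposite order. This is the same reassembly used in (CR) of Lemma~\ref{lem:canonical-normal-retyping}. Context lifting is by rebuilding the enclosing derivation, as in Lemma~\ref{lem:internal-substitution}.

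For the measure (2), $\Phi$ and $\mathsf{Bind}$ count constructors, and none of the generators changes those counts. For $\mathsf{LetDepth}$, the two exchanged lets move along let-body paths, which are uncounted by Definition~\ref{def:letdepth}. Each scrutinee stays in a scrutinee slot of exactly one let, and every counted node above the pair is unchanged. So every let keeps its counted ancestor set. Alpha and (P) leave the tree shape untouched. The free-variable claim (3) is direct: (X) moves binders only across subterms in which they do not occur, thanks to the independence conditions.

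The main obstacle is (4), irreducibility. My plan is to use Lemma~\ref{lem:nf-iff-irreducible} and show that the grammar membership and (NF1)--(NF4*) are invariant. For (NF1)--(NF3) and the grammar, (X) preserves the $N$-prefix shape $\letpair{}{}{R}{N}$, and (P) and alpha change no node kind. The delicate case is (NF4*): an exchange inside a $\lambda$-body prefix could in principle enable an (H$^*$)-redex. I would argue that the condition ``$z\notin\mathrm{fv}(e_j)$ and $L_j$ independent of every $L_i$ with $i<j$'' is invariant under swapping two adjacent, mutually independent prefix lets. Suppose $L_j$ is one of the swapped pair. Its set of crossed predecessors changes only by the other swapped let, and that let is independent of it by the hypothesis of (X). Suppose instead $L_j$ lies strictly later. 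Its set of predecessors is unchanged as a set. Hence an eligible index exists before the swap iff one exists after it. The remaining concern is an exchange occurring under a commuting-conversion slot such as (S$_{L/R}$) or (F$_{L/R}$). Such slots contain a let-headed subterm on both sides, so redex existence is unchanged, and the same holds for (D) and (C)-type slots. Irreducibility is thus preserved in both directions by symmetry.
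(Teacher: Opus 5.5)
Your proposal is correct and follows the paper's own proof essentially step for step. It checks the three generators ($\alpha$, (X), (P)) and closes under congruence and equivalence. For typing it inverts and reassembles the two tensor-let derivations. For the measure it uses that let-body positions are uncounted in $\mathsf{LetDepth}$. For irreducibility it reduces to Lemma~\ref{lem:nf-iff-irreducible} and observes that swapping two mutually independent adjacent prefix lets leaves the (H$^*$)-eligibility of every let unchanged. Like the paper, your appeal to that lemma tacitly uses its internal-typing hypothesis, which holds wherever the present lemma is applied.
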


\begin{proof}
It suffices to check the three generators.

Alpha-conversion preserves all four properties by the usual
capture-avoiding convention.  Phase replacement changes no term
constructor, binder, free variable, or redex pattern.  Both annotations
still evaluate to unit-modulus scalars, so the same internal typing rule
applies.  It also preserves the measure, whose components do not inspect
phase expressions, and is an instance of equality of static scalar data.

For (X), invert the two tensor-let typings.  The independence
hypotheses say precisely that neither pair of bound variables is used
in the other scrutinee.  The same four premise derivations can therefore
be reassembled in the opposite order, with the same context and result
type.  The exchange only permutes occurrences, so it preserves the
free-variable multiset and linearity.  It preserves $\Phi$ and
$\mathsf{Bind}$ by inspection.  It also preserves
$\mathsf{LetDepth}$: a let body is not a counted path position, and
the two exchanged binders and all binders below them consequently have
the same depths in either order.

For irreducibility, use
Lemma~\ref{lem:nf-iff-irreducible} and the deep-prefix condition of
Definition~\ref{def:nf4star}.  Alpha-conversion and phase
replacement preserve the normal-form grammar and all its side
conditions.  Exchange merely permutes adjacent members of a let
prefix.  It preserves (NF1)--(NF3).  For the strengthened
\textup{(NF4$^*$)} condition, consider
$\lam{z}{L_1\cdots L_k[R]}$.  Eligibility of $L_j$ for the deep
lambda-float depends on whether $z$ is absent from its scrutinee and
whether $L_j$ is independent of the set of preceding prefix lets.
Exchanging two mutually independent adjacent lets changes only their
order.  It neither changes that preceding set for any later let nor
changes eligibility of either exchanged let.  Hence the existence of
an eligible deep float is invariant, and so is irreducibility.

Congruence and equivalence closure preserve the four displayed
conclusions.
\end{proof}

\begin{lemma}[Reduction preserves linear support]
\label{lem:reduction-linear-support}
If $\Gamma\intjudge t:A$ and $t\to t'$, then $t$ and $t'$ have the
same multiset of free variables.
\end{lemma}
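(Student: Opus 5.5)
Subject reduction (Lemma~\ref{lem:internal-preservation}) gives $\Gamma\intjudge t':A$, and every derivation in the internal calculus is linear: each context variable occurs free exactly once in the subject, and no bound or unbound variable outside the context is free. Since the two judgments share the context $\Gamma$, the free-variable multisets of $t$ and $t'$ must both equal the multiset $\mathrm{dom}(\Gamma)$ with multiplicity one. The plan is therefore to establish this linear-relevance invariant as an auxiliary claim. We then read off the lemma from Lemma~\ref{lem:internal-preservation}, rather than inspecting each rewrite rule syntactically.

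The auxiliary claim is the following. If $\Gamma\intjudge t:A$, then every variable of $\Gamma$ occurs free in $t$ exactly once, and $\mathrm{fv}(t)=\mathrm{dom}(\Gamma)$. I would prove it by induction on the internal derivation.
\begin{itemize}
\item The variable rule gives a single occurrence.
\item Atoms and \textsc{Exp} have empty context and are closed.
\item The multiplicative rules (application, tensor introduction) split the context disjointly, so the occurrences add up.
\item Abstraction and tensor-let bind exactly the extra variables the premise context contains. By the induction hypothesis each of these occurs once in the body, so removing them leaves exactly $\Gamma$.
\item The phased rules \textsc{$\plus$-I${}_{\mathsf{i}}$} and \textsc{$\plus$-Map${}_{\mathsf{i}}$} also split their contexts disjointly. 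Their phases and type side conditions carry no variables.
\end{itemize}
Exchange is implicit, and contexts are multisets of distinctly named variables, so no variable is counted twice.

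With the claim in hand, the lemma is immediate. Apply the claim to $\Gamma\intjudge t:A$ and to the judgment $\Gamma\intjudge t':A$ supplied by Lemma~\ref{lem:internal-preservation}; both multisets equal $\mathrm{dom}(\Gamma)$.

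The only delicate point is the (G)-step. There the reduct is produced by the hereditary $\mathsf{SumNF}$ call, not by a local syntactic move, so a rule-by-rule syntactic check would need induction on $\Phi$. Routing through typing avoids this. Lemma~\ref{lem:internal-preservation} already covers (G) via clause (S$_k$) of Theorem~\ref{thm:lo-normalizer-total}, which types the $\mathsf{SumNF}$ output in the combined disjoint context $\Gamma_L,\Gamma_R$. That is exactly the context of the contracted redex, and rebuilding the enclosing context $K[-]$ preserves it. The capture-avoiding renaming in the floaters and in (H$^*$) changes only bound names, so it is harmless. I expect no step to be harder than checking that the claim's induction covers the phased rules, which follows from their disjoint premise contexts.
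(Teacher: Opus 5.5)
Your proof is correct, but it takes a different route from the paper's proof of this lemma.

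\textbf{The paper's route.} The paper checks the rewrite rules syntactically:
\begin{enumerate}
\item the two beta rules insert each argument at the unique occurrence of its bound variable, by Lemma~\ref{lem:internal-substitution};
\item every floater only reparents one tensor let;
\item \textup{(E)} places each of $f,g,h,k$ exactly once;
\item for \textup{(G)}, a separate induction over the defining computation of $\mathsf{SumNF}$ shows that its local steps and prefix extraction neither duplicate nor discard a subterm.
\end{enumerate}
Closure under linear term contexts then finishes the argument.

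\textbf{Your route.} You prove a linear-relevance invariant for $\intjudge$: $\mathrm{fv}(t)=\mathrm{dom}(\Gamma)$, each variable with multiplicity one. You then transfer it along subject reduction, Lemma~\ref{lem:internal-preservation}. This is sound and not circular. Subject reduction rests only on Lemma~\ref{lem:ordinary-internal-preservation} and the companion clause \textup{(S$_k$)} of Theorem~\ref{thm:lo-normalizer-total}, and neither uses the present lemma. The paper itself argues exactly this way inside Lemma~\ref{lem:reduction-preserves-source-discipline}, where linear relevance identifies the free-variable context of $t$ and $t'$ with $\Gamma$.

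\textbf{Comparison.} Your version is shorter and more modular. It hands the only delicate case, \textup{(G)}, to the already-established typing of $\mathsf{SumNF}$ outputs instead of re-inducting on that computation. What it gives up is the finer by-product of the paper's check: that no rule duplicates or discards a subterm, which is more than equality of free-variable multisets. However, every citation of this lemma needs only the multiset statement you prove. These are the preservation of freshness and independence side conditions in Lemmas~\ref{lem:linear-let-extrusion}, \ref{lem:deep-float-peaks}, and \ref{lem:sumnf-lower-rank}.
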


\begin{proof}
For the two beta rules, unrestricted internal substitution
(Lemma~\ref{lem:internal-substitution}) inserts each argument at the
unique occurrence of its bound variable.  Every floater
only reparents one tensor let.  Rule (E) places each of
$f,g,h,k$ once in the two compositions.  Rule (G) places
$f,R_1$ once in the left branch and $g,R_2$ once in the right
branch; induction over the defining computation of
$\mathsf{SumNF}$ shows that its local steps and prefix extraction
neither duplicate nor discard a subterm.  Closure under a linear term
context preserves the multiset equation.
\end{proof}

\subsection{Reusable commuting lemmas}

Write a tensor-let prefix as
$L[-]=L_1\cdots L_k[-]$, where
$L_i[-]=\letpair{x_i}{y_i}{e_i}{[-]}$.

\begin{lemma}[Linear let extrusion]
\label{lem:linear-let-extrusion}
Let $\mathcal C[-]$ be a well-typed linear one-hole term context and
let $\ell[-]=\letpair{x}{y}{e}{[-]}$.  Suppose the bound names are
fresh for $\mathcal C$; every crossed tensor let is independent of
$e$; and, for every crossed abstraction $\lambda z.-$,
$z\notin\mathrm{fv}(e)$.  Then there is a term $w$
such that
\[
  \mathcal C[\ell[u]]\to^*w
  \qquad\text{and}\qquad
  w\approx\ell[\mathcal C[u]].
\]
If no two independently produced let prefixes have to be interleaved,
the second comparison is equality.
\end{lemma}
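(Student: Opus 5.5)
We argue by induction on the structure of the one-hole context $\mathcal C[-]$, read as a path of constructor frames from the root down to the hole. The base case $\mathcal C=[-]$ is immediate: take $w=\ell[u]$ and the comparison is equality. For the inductive step we write $\mathcal C=\mathcal C'[F[-]]$, where $F$ is the innermost frame. We first float $\ell$ out of $F$ in one step. Then we apply the induction hypothesis to $\mathcal C'$ with the term $F[u]$ in its hole. This gives $\mathcal C'[\ell[F[u]]]\to^* w'\approx\ell[\mathcal C'[F[u]]]$. Since $\to$ is closed under contexts, the two reduction sequences compose.

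The one-step float is a case analysis on $F$. Each frame kind is matched by exactly one floater of Definition~\ref{def:rewrite}:
\begin{itemize}
\item function position gives (C), and argument position gives (C$'$) or (F);
\item a tensor component gives (C$_\tensor^{L/R}$);
\item a let scrutinee gives (D);
\item a map operand gives (F$_{L/R}$), and a sum branch gives (S$_{L/R}$);
\item an abstraction $\lambda z.-$ gives (H$^*$) with $j=1$. Its side condition $z\notin\mathrm{fv}(e)$ is exactly the hypothesis on crossed abstractions.
\end{itemize}
Freshness of $x,y$ for $\mathcal C$ ensures that no crossed subterm captures the new binders, so every floater instance is well formed. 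Internal typing is preserved by Lemma~\ref{lem:ordinary-internal-preservation}, and linearity of $\mathcal C$ ensures that each frame contains the hole exactly once.

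The remaining frame kind is a let-body position $\letpair{p}{q}{e'}{[-]}$, where no floater applies. Here we do not reduce; we use the exchange generator (X) of Definition~\ref{def:administrative-equivalence}. The crossed let is assumed independent of $e$, and the fresh choice of $x,y$ gives the other half of the independence condition. Hence $\letpair{p}{q}{e'}{\ell[u]}\approx\ell[\letpair{p}{q}{e'}{u}]$. This is exactly where the comparison becomes $\approx$ rather than equality. If no crossed frame is a let body, no (X) step is used and $w$ equals $\ell[\mathcal C[u]]$ syntactically.

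The main obstacle is combining the $\approx$-steps produced at let-body frames with the reductions produced higher up. After an exchange, the outer frames act on an $\approx$-variant of $\ell[\ldots]$, not on $\ell[\ldots]$ itself. I would handle this in two ways. First, strengthen the induction so that the hypothesis applies to any context $\mathcal C'$ and any body. Second, note that an (X)-exchanged prefix is still headed by $\ell$ once $\ell$ has been moved outermost within its block. Concretely, I would first push $\ell$ past each maximal block of let-body frames using (X) alone, so that $\ell$ becomes the head of that block. I would then float the whole block head with the ordinary floaters. The floaters never inspect the internal order of a let prefix, and (X) commutes with context closure. By Lemma~\ref{lem:administrative-invariants}, typing and $\mu$ are invariant under $\approx$, so each reduction step stays valid after the exchange. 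With these facts the $\approx$-comparisons accumulate into a single $\approx$ at the end.
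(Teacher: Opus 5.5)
\textbf{There is a genuine gap: your exchanges happen before the reductions, but the lemma needs them after.}

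Your frame-by-frame floaters agree with the paper's for every frame except a let body, and that is where the argument breaks. The lemma requires all reductions first and the $\approx$ last. Your let-body step applies an exchange \emph{before} the enclosing frames are floated. So does your repair, which first pushes $\ell$ to the head of each block of let-body frames by (X). What you actually obtain is therefore a chain in which $\approx$-steps precede later reductions.

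Your closing claim that these steps ``accumulate into a single $\approx$ at the end'' is a postponement of (X) past $\to$. That is stronger than the commutation $\mathsf{LC}$, which the paper derives later \emph{from} this lemma, and it is false in general. Take $e,e'$ to be variables. Then
$(\letpair{p}{q}{e'}{\ell[u]})\,g\approx(\ell[\letpair{p}{q}{e'}{u}])\,g\to\ell[(\letpair{p}{q}{e'}{u})\,g]$.
Yet no reduct of the left-hand term is $\approx$ to the right-hand side. The reason is that $\ell$ can leave the function position only after (C) has floated $\letpair{p}{q}{e'}{-}$ out of it, and that let never returns.

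Your claim that floaters ignore prefix order is also not true. With only the $j=1$ instance of (H$^*$), a let can leave a $\lambda$ only if it is first in the prefix.

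\textbf{The missing idea is the paper's: never exchange before reducing.} Defer each let-body crossing and keep floating. At an abstraction, use (H$^*$) with $j>1$. This moves $\ell$ across $\lambda z$ and across all the independent prefix lets above the occurrence in a single reduction step. Its side conditions are exactly your hypotheses: $z\notin\mathrm{fv}(e)$ and independence of the crossed lets.

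For example, take $\mathcal C=\lam{z}{\letpair{p}{q}{z}{[-]}}$. The $j=1$ float is blocked, and your route gives only $\approx$ followed by $\to$. By contrast, (H$^*$) with $j=2$ reaches $\ell[\mathcal C[u]]$ exactly in one step. Only the exchanges still needed after all reductions go into the final $\approx$.

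A related caution applies when a crossed let sits under a non-binder frame, as in the application example above. There no route reaches the displayed form, because $\ell$ can escape only if that let floats too. What floating gives in that case is a common reduct of both sides modulo $\approx$. That joinability form is what is actually used when the lemma closes peaks.
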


\begin{proof}
By induction on the path from the hole to the root.  Each application, tensor-constructor, let-scrutinee, $\plus$-map
argument, $\plus$-map branch, and sum-former branch is crossed by its
corresponding floater in Definition~\ref{def:rewrite}.  At a lambda
the generalized rule (H$^*$) crosses the lambda and all independent
prefix lets above the occurrence in one step.  A path through the
body of another tensor let has no oriented floater; defer that
crossing until the end and exchange the two then-adjacent independent
lets by (X).  The same final exchanges reconcile prefixes contributed
by sibling paths when the fixed left-to-right extraction order places
them oppositely.  The side conditions are preserved along the
induction by Lemma~\ref{lem:reduction-linear-support}.
\end{proof}

\begin{lemma}[Independent-let Fubini]
\label{lem:floater-fubini}
Let two non-hereditary let-floaters be simultaneously applicable in
a well-typed term.
\begin{enumerate}
\item If the lets occur in independent sibling holes, or in nested
  holes with neither on the other's redex path, the two orders join
  modulo one or more applications of (X).
\item If one local floater lies on the other's redex path, including
  the self-overlap of (D), the two orders join exactly after
  completing the remaining local floaters on one side.
\end{enumerate}
\end{lemma}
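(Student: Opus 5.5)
Both parts reduce to a case analysis on the relative positions of the two let-redexes.  In each case I build the common reduct by hand from the floater rules of Definition~\ref{def:rewrite}, and I allow the administrative exchange (X) of Definition~\ref{def:administrative-equivalence} only in part~(1).  Write the two simultaneously applicable floaters as $\rho_1$, which moves $\ell_1=\letpair{x_1}{y_1}{e_1}{[-]}$ across a constructor node $c_1$, and $\rho_2$, which moves $\ell_2$ across $c_2$.  Each $\rho_i$ is one of (C), (C$'$), (D), (C$_\tensor^{L/R}$), (H$^*$), (S$_{L/R}$), (F), (F$_{L/R}$).  Lemma~\ref{lem:ordinary-internal-preservation} keeps every intermediate term internally typed.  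Lemma~\ref{lem:reduction-linear-support} ensures that the free-variable and independence side conditions checked before a step still hold after the other step, so a rule applicable before remains applicable afterwards.

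For part~(1), I first treat disjoint redexes.  Here neither redex contains the other, or they are nested with neither let or node lying on the other's redex path.  Each step then leaves a residual of the other redex, and performing the residuals gives the same term, except that the two floated lets may end up as adjacent prefixes in opposite orders.  This happens when both float to a common ancestor by later floaters, as described in Lemma~\ref{lem:linear-let-extrusion}.  Because the two lets come from different subterms, linearity (disjoint context splits) makes them independent in the sense of Definition~\ref{def:rewrite}.  Closing the square therefore needs one (X) per swapped adjacent pair.  When an (H$^*$)-float crosses a prefix that contains the other let, I invoke Lemma~\ref{lem:linear-let-extrusion} to carry the remaining crossings and collect the (X)-exchanges it produces.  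Lemma~\ref{lem:administrative-invariants} ensures that these exchanges preserve typing and the measure, so the join is legitimate at the current rank.

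For part~(2), one let lies on the other's path.  The typical case is $\ell_2$ in the scrutinee of $\ell_1$, with $\ell_1$ sitting in a slot of $c_1$.  Firing (D) on $\ell_1$ first and then floating $\ell_2$ out through $c_1$, I expect to reach the same term as floating $\ell_1$ and then applying (D).  The check is by explicit calculation for each pair of rule kinds; both sides end as $\ell_2[\ell_1[c_1[\cdots]]]$ with the prefix order fixed by the nesting, so no (X) is needed.  The self-overlap of (D), with three nested lets, is the usual associativity square and closes in two steps on each side.  The overlap between (F) and (C$'$) is trivial because their reducts coincide.

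I expect the main obstacle to be (H$^*$).  Its side condition is non-local: the floated let must be independent of every crossed prefix let.  A competing float can reorder or remove members of that crossed prefix, change the eligible index $j$, or merge prefixes.  For this case I would strengthen the invariant to the set of prefix lets preceding $L_j$, which, as in the (NF4*) argument of Lemma~\ref{lem:administrative-invariants}, is preserved except for swaps of mutually independent lets.  I would then show that the other float only shrinks or reorders that set.  That keeps $L_j$ eligible, and the (X)-exchanges from part~(1) absorb any order mismatch.
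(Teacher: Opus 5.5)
Your core argument is the paper's. In the sibling and off-path case you float both lets above the shared constructor and reconcile the two prefix orders by \textup{(X)}, using linearity for independence. In the on-path case the only genuine overlap is the let-headed scrutinee $\mathcal C_1[\letpair{x}{y}{(\letpair{a}{b}{u}{v})}{t}]$: one side floats and then applies \textup{(D)}, the other applies \textup{(D)} and then floats through $\mathcal C_1$ twice. The \textup{(D)}$\times$\textup{(D)} overlap is the instance $\mathcal C_1=\letpair{p}{q}{[-]}{g}$.

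Where you go beyond the paper is the \textup{(H$^*$)} paragraph, and two points there need correcting.

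First, the paper deliberately handles deep floats in Lemma~\ref{lem:deep-float-peaks}, not in this lemma. Two eligible deep floats at the same $\lambda$ have one let on the other's redex path, yet they close only modulo \textup{(X)}. So part~(2)'s ``exactly'' is a statement about single-node floaters. Your plan tacitly agrees, since it lets \textup{(X)} absorb those mismatches, but that conflicts with your stated rule that \textup{(X)} is used only in part~(1).

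Second, the invariant you propose is false as stated. A competing \textup{(D)} at a crossed prefix let $L_i$ extracts the let from $e_i$ and inserts it before $L_j$, so the crossed set can grow rather than only shrink or reorder. Eligibility of $L_j$ still survives, but for a different reason: the inserted let's binders and scrutinee originate inside $e_i$, so by scoping they are independent of $L_j$.

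With that repair your proposal is sound.
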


\begin{proof}
For independent holes, write the common constructor as a two-hole
context $\mathcal C[-,-]$.  Both paths float both lets above
$\mathcal C$; the first-floated let is outermost.  Linearity makes
the two scrutinees independent, so the results differ by (X).
Nested off-path holes reduce to the same calculation after the inner
let reaches the shared constructor.

For an on-path overlap, every local floater has shape
\[
 \mathcal C_1[\letpair{x}{y}{e}{t}]
 \to \letpair{x}{y}{e}{\mathcal C_1[t]}.
\]
A step inside $e$ or $t$ is transported verbatim.  The remaining
case is
\[
 \mathcal C_1[
   \letpair{x}{y}{(\letpair{a}{b}{u}{v})}{t}].
\]
Floating the outer occurrence and then using (D) yields
\[
 \letpair{a}{b}{u}{
   \letpair{x}{y}{v}{\mathcal C_1[t]}}.
\]
Using (D) first and then floating through $\mathcal C_1$ twice
reaches the same term.  Taking
$\mathcal C_1=\letpair{p}{q}{[-]}{g}$ gives the
(D)$\times$(D) self-overlap.
\end{proof}

\begin{lemma}[Deep lambda-float peaks]
\label{lem:deep-float-peaks}
Consider the generalized rule
\[
\tag{H$^*$}
 \lam{z}{
   L_1\cdots L_{j-1}[
     \letpair{x}{y}{e}{t}]}
 \;\to\;
 \letpair{x}{y}{e}{
   \lam{z}{L_1\cdots L_{j-1}[t]}},
\]
where $z\notin\mathrm{fv}(e)$ and the moved let is independent of
each crossed prefix let.
Every one-step peak containing an (H$^*$)-step and an ordinary
non-hereditary step joins modulo $\approx$.  More precisely:
\begin{enumerate}
\item two eligible (H$^*$)-steps leave the two extracted independent
  lets in opposite orders and hence close by (X);
\item against beta at
  $(\lam{z}{L[\letpair{x}{y}{e}{t}]})\,a$, the two paths close
  exactly when $L$ is empty and modulo exchanges moving the selected
  let across $L[a/z]$ otherwise;
\item a step inside a crossed prefix, its scrutinee, or the final
  body commutes with (H$^*$);
\item an (H$^*$)-step in an operand of (E), (F$_L$), or (F$_R$)
  can be replayed after that rule; the lambda introduced by the
  $\circ$ abbreviation is crossed by the $j=1$ instance, and any
  additional independent prefix order is reconciled by (X).
\end{enumerate}
\end{lemma}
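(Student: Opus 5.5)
Each of the four items is a local peak analysis. I fix the (H$^*$)-redex
$\lam{z}{L_1\cdots L_{j-1}[\letpair{x}{y}{e}{t}]}$ and sort the competing step by where its redex sits relative to the selected let $L_j$, the crossed prefix $L_1,\ldots,L_{j-1}$, and the enclosing $\lambda z$. The invariant I rely on throughout is that the (H$^*$) side conditions ($z\notin\mathrm{fv}(e)$ and independence of $L_j$ from every crossed $L_i$) only involve free variables. By Lemma~\ref{lem:reduction-linear-support}, reduction inside any subterm preserves free-variable multisets, so it also preserves eligibility.

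\emph{Item (3)} comes first, since the others reuse it. If the competing step lies inside some scrutinee $e_i$ with $i<j$, inside $e$, or inside $t$, then the (H$^*$) rewrite moves that subterm verbatim. The step can therefore be replayed on the other side, and eligibility persists by the observation above. The result is a square that closes exactly. A step inside a crossed prefix body that is not on the path to $L_j$ is handled the same way.

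\emph{Item (1).} Suppose two indices $j<j'$ are both eligible at the same $\lambda$. Floating $L_j$ first leaves $L_{j'}$ still eligible: its crossed set loses $L_j$, and the condition on $z$ is unchanged. Symmetrically, floating $L_{j'}$ first leaves $L_j$ eligible. After both floats the two lets sit outside the $\lambda$ in opposite orders. They are mutually independent, because $L_{j'}$ was independent of every earlier prefix, $L_j$ included. The two results therefore differ by a single (X) generator.

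\emph{Item (4).} Suppose the (H$^*$)-step sits inside an operand of (E), (F$_L$) or (F$_R$). Applying the outer rule first merely relocates the operand: under (F$_{L/R}$) it lands under a new let body, and under (E) it lands inside $f\circ h=\lam{w}{f(h\,w)}$. In the (F) cases the (H$^*$)-redex is untouched and can be replayed. In the (E) case, suppose the let has already been floated out of an operand. The rewritten operand is then let-headed inside the composite's new $\lambda w$. A $j=1$ instance of (H$^*$) crosses that $\lambda$, since $w$ is fresh. If other independent lets now precede it, Lemma~\ref{lem:linear-let-extrusion} floats it further, and any residual mismatch of prefix order is closed by (X).

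\emph{Item (2)} is the main obstacle. Take the redex $(\lam{z}{L[\letpair{x}{y}{e}{t}]})\,a$. One path $\beta$-reduces to $L[a/z][\letpair{x}{y}{e}{t[a/z]}]$, using $z\notin\mathrm{fv}(e)$. The other floats first to obtain $(\letpair{x}{y}{e}{\lam{z}{L[t]}})\,a$, then applies (C) and $\beta$ to reach $\letpair{x}{y}{e}{L[a/z][t[a/z]]}$.

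When $L$ is empty the two terms coincide, so the peak closes exactly. Otherwise I must move the selected let outward past every let of $L[a/z]$. Substituting $a$ can put new free variables of $a$ into the scrutinees $e_i[a/z]$. I still need to check that none of the bound variables $x,y$ appears in these scrutinees. This holds because $x,y$ are fresh for $a$, and $L_j$ was independent of each $L_i$ before substitution.

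Conversely, the binders of the $L_i$ do not occur in $e$, again by independence. The let is therefore independent of each let of $L[a/z]$. These lets are adjacent along a pure let-body path, which has no oriented floater, so a sequence of (X) exchanges closes the peak. Lemma~\ref{lem:linear-let-extrusion} supplies exactly this pattern of deferred body crossings.
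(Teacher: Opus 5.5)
Your proposal is correct and follows the paper's proof essentially item by item. For (1) you use the same eligibility argument and close with a single (X) exchange. For (2) you reach the same two reducts $L[a/z][\letpair{x}{y}{e}{t[a/z]}]$ and $\letpair{x}{y}{e}{L[a/z][t[a/z]]}$ and join them by (X); your explicit check that $x,y\notin\mathrm{fv}(a)$, which keeps independence after substitution, is a detail the paper leaves implicit. For (3) you use verbatim transport plus Lemma~\ref{lem:reduction-linear-support}. For (4) you cross the $\lambda w$ of $f\circ h$ with the $j=1$ instance.

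The one soft spot, which the paper's one-line argument for (3) shares, is that a (B)- or (D)-contraction rooted at a crossed prefix let, or at the selected let itself, is not a step inside a verbatim-moved subterm. These peaks do close exactly by direct calculation; for (D) at the selected let this takes two (H$^*$)-steps on the (D)-first side. You should state these cases explicitly rather than folding them into the verbatim-transport case.
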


\begin{proof}
For (1), apply the two deep floats in opposite orders.  After the
first extraction the other selected let remains eligible: neither
its scrutinee nor the set of prefix dependencies changed.  Both
paths then extract the second let, with opposite outer order, and
(X) applies.

For (2), beta first gives
$L[a/z][\letpair{x}{y}{e}{t[a/z]}]$, since
$z\notin\mathrm{fv}(e)$.  The other path extracts the let, floats it
past the outer application by (C), and contracts beta, giving
$\letpair{x}{y}{e}{L[a/z][t[a/z]]}$.  The crossed lets are
independent of $e$, so repeated (X) relates the first result to the
second; for the local case there is no exchange.

For (3), Lemma~\ref{lem:reduction-linear-support} preserves the FV
and independence side conditions.  The deep rule transports the
changed subterm once, after which the same internal step is replayed.
For (4), expand $f\circ h$ as $\lam{w}{f(h\,w)}$.
Linearity places the affected operand once in this body.  Ordinary
floaters carry its selected let to the new lambda, the local instance
of (H$^*$) crosses it, and the corresponding map-branch floater
reaches the result of floating first.  This is the calculation
\[
 (\letpair{x}{y}{e}{f})\circ h
 =
 \lam{w}{(\letpair{x}{y}{e}{f})(h\,w)}
 \to^*
 \letpair{x}{y}{e}{\lam{w}{f(h\,w)}}.
\]
The right branch and an affected inner operand are symmetric.
\end{proof}

\subsection{Ranked coherence and peaks}

Put $\operatorname{rk}(t)=\mu(t)$, ordered lexicographically.  By
Lemma~\ref{lem:measure-decrease},
every reduction step, including (H$^*$), strictly lowers rank; by Lemma~\ref{lem:administrative-invariants},
$\approx$ preserves rank.

\begin{definition}[The three ranked assertions]
\label{def:ranked-determinacy}
For a rank $r$, let:
\begin{description}
\item[$\mathsf{LC}_r$] If
  $\operatorname{rk}(t)=r$, $t\approx_1t'$, and $t\to u$, then
  there are $u\to^*v$ and $t'\to^*v'$ with $v\approx v'$.
\item[$\mathsf{Peak}_r$] If
  $\operatorname{rk}(t)=r$ and $u\leftarrow t\to v$, then there
  are $u\to^*p$ and $v\to^*q$ with $p\approx q$.
\item[$\mathsf{UN}_r$] If
  $\operatorname{rk}(t)=r$, $t\to^*n$, $t\to^*n'$, and
  $n,n'$ are irreducible, then $n\approx n'$.
\end{description}
All terms in these assertions are required to be well typed in the
internal calculus.  We also use the derived assertion
$\mathsf{EqUN}_r$: if $t\approx t'$ has rank $r$ and $t,t'$ reduce
to irreducible $n,n'$, respectively, then $n\approx n'$.
\end{definition}

In the staged induction, $\mathsf{EqUN}_r$ is derived after
$\mathsf{UN}_r$.  The peak proof uses the following syntactic
stability property of hereditary normalization.

\begin{lemma}[Lower-rank stability and naturality of
$\mathsf{SumNF}$]
\label{lem:sumnf-lower-rank}
Fix $r$ and suppose
$\mathsf{LC}_s$, $\mathsf{UN}_s$, and the derived
$\mathsf{EqUN}_s$ hold for every $s<r$.
Assume first that the coherent sum passed to
$\mathsf{SumNF}$ has rank below $r$.
\begin{enumerate}
\item Replacing either branch by a reduct, or by an
  $\approx$-equivalent term, changes the result of
  $\mathsf{SumNF}$ only up to $\approx$.
\item Suppose additionally that
  \[
    \operatorname{rk}
    ([\,\alpha\gamma\cdot(f\,e_L)
       \mid\beta\delta\cdot(g\,e_R)\,])<r.
  \]
  If
\[
 \mathsf{SumNF}^{C,D}_{\gamma,\delta}(e_L,e_R)
 =
 L\bigl[[\,\gamma'\cdot R_L\mid\delta'\cdot R_R\,]\bigr],
\]
then, whenever the displayed terms are typed,
\[
\begin{split}
 &\mathsf{SumNF}^{E,F}_{\alpha\gamma,\,\beta\delta}
       (f\,e_L,\;g\,e_R)
\\[-0.2ex]
 &\hspace{2em}\approx
 L\Bigl[
   \mathsf{SumNF}^{E,F}_{\alpha\gamma',\,\beta\delta'}
       (f\,R_L,\;g\,R_R)
 \Bigr].
\end{split}
\]
The prefix $L$ is reassembled in the fixed left-to-right order;
any alternative interleaving on the two sides differs only by (X).
\item Let $\sigma$ be a capture-avoiding typed linear substitution
with pairwise disjoint supplying contexts.  This includes a
one-variable substitution $[u/x]$ and a simultaneous tensor
substitution $[u/x,v/y]$.  Suppose both the original coherent sum
and its $\sigma$-instance have rank below $r$.  Then
\[
 \mathsf{NF}_{\mathrm{LO}}\!\left(
   \mathsf{SumNF}^{C,D}_{\alpha,\beta}(e_L,e_R)\sigma
 \right)
 \;\approx\;
 \mathsf{SumNF}^{C,D}_{\alpha,\beta}
   (e_L\sigma,e_R\sigma).
 \tag{SN-sub}
\]
In particular, this covers substitution in the left branch, in the
right branch, and simultaneous linear substitution across the two
branches.
\end{enumerate}
\end{lemma}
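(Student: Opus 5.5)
The plan is to unfold $\mathsf{SumNF}$ into its definition and reduce each of the three claims to the lower-rank hypotheses $\mathsf{LC}_s$, $\mathsf{UN}_s$, $\mathsf{EqUN}_s$ for $s<r$. By Definition~\ref{def:lo-normalizer}, $\mathsf{SumNF}^{C,D}_{\alpha,\beta}(e_L,e_R)=P_L[P_R[[\,\alpha\cdot R_L\mid\beta\cdot R_R\,]]]$, where $P_X[R_X]=\mathsf{NF}_{\mathrm{LO}}(e_X)$. The rank of each branch is at most the rank of the coherent sum $[\,\alpha\cdot e_L\mid\beta\cdot e_R\,]$, since the branch is a subterm and the measure is additive over subterms. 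Hence every statement about the branches lives strictly below $r$.

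\emph{Claim (1).} If a branch $e_L$ is replaced by a reduct $e_L'$, then $e_L\to^*\mathsf{NF}_{\mathrm{LO}}(e_L')$ is an irreducible reduct of $e_L$. So $\mathsf{UN}_s$ at $s=\operatorname{rk}(e_L)<r$ gives $\mathsf{NF}_{\mathrm{LO}}(e_L)\approx\mathsf{NF}_{\mathrm{LO}}(e_L')$. For an $\approx$-equivalent replacement, $\mathsf{EqUN}_s$ gives the same conclusion. It then remains to show that the syntactic split $n=P[R]$ into maximal let-prefix and remainder respects $\approx$. The argument is by induction on the $\approx$-derivation:
\begin{itemize}
\item ($\alpha$) and (P) steps act either inside the prefix or inside the remainder;
\item (X) permutes prefix lets only;
\item a congruence step inside a scrutinee or inside $R$ is transported directly.
\end{itemize}
Reassembling $P_L[P_R[\cdots]]$ is a congruence, so the results agree up to $\approx$. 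Any change in the interleaving of the two prefixes is handled by (X), since linearity makes prefix lets from disjoint branches independent.

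\emph{Claim (2).} Write $\mathsf{NF}_{\mathrm{LO}}(e_L)=P_L[R_L]$. Then there is a reduction $f\,e_L\to^*f\,(P_L[R_L])\to^*P_L[f\,R_L]$, using context closure followed by repeated (C$'$). The extrusion is justified by Lemma~\ref{lem:linear-let-extrusion}, because $f$ is disjoint from the prefix binders. Normalizing further, the remaining work happens under $P_L$: the lets in the prefix have irreducible result scrutinees that are untouched, and only $f\,R_L$ reduces. The outcome is an irreducible reduct of the form $P_L[P'_L[R'_L]]$, where $P'_L[R'_L]=\mathsf{NF}_{\mathrm{LO}}(f\,R_L)$. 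One point needs checking: no (H$^*$) or (B) redex can be created across the boundary between the prefix and the new body. I would argue this from irreducibility of $\mathsf{NF}_{\mathrm{LO}}(e_L)$, since the scrutinees are unchanged and are not pair-headed. By $\mathsf{UN}_s$ below $r$, this reduct is $\approx$ to $\mathsf{NF}_{\mathrm{LO}}(f\,e_L)$. Doing the same for the right branch and comparing prefix splits yields the displayed equation. The phases match because $\gamma'=\gamma$ and $\delta'=\delta$: $\mathsf{SumNF}$ only relocates lets and does not multiply phases. Any leftover interleaving mismatch is again closed by (X).

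\emph{Claim (3).} This is the main obstacle. Applying $\sigma$ to $P_L[P_R[[\,\alpha\cdot R_L\mid\beta\cdot R_R\,]]]$ may create redexes, for example (B) at a prefix let whose scrutinee becomes pair-headed, or (A) inside $R_L$. I would first note the reduction
\[[\,\alpha\cdot e_L\sigma\mid\beta\cdot e_R\sigma\,]\to^*\mathsf{SumNF}(e_L,e_R)\sigma,\]
which follows by substituting into the reduction sequence of Theorem~\ref{thm:lo-normalizer-total}(S$_k$), since reduction is closed under substitution. Consequently $\mathsf{NF}_{\mathrm{LO}}$ of the right-hand side is an irreducible reduct of the $\sigma$-instanced sum. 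The term $\mathsf{SumNF}(e_L\sigma,e_R\sigma)$ is another such irreducible reduct, again by Theorem~\ref{thm:lo-normalizer-total}. The rank hypothesis puts the $\sigma$-instanced sum below $r$, so $\mathsf{UN}_s$ identifies the two reducts up to $\approx$, which is (SN-sub). The delicate step is that the instanced sum term has the required rank and is typed: both are supplied by the hypotheses on $\sigma$, disjointness of the supplying contexts, and Lemma~\ref{lem:internal-substitution}.
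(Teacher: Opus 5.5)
Your arguments for parts (1) and (2) go through, though by a different route from the paper. You apply $\mathsf{UN}_s$/$\mathsf{EqUN}_s$ at the rank of each branch (or of $f\,e_L$ and $g\,e_R$). You then transport $\approx$ through the syntactic prefix/remainder split and reorder the independent blocks by (X). The paper instead observes that both $\mathsf{SumNF}$ outputs are irreducible reducts of one lower-rank coherent sum, and applies uniqueness once to that sum. That makes your auxiliary ``the split respects $\approx$'' induction unnecessary.

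Part (3), however, has a genuine gap. You push $\sigma$ through the (S$_k$) sequence $[\,\alpha\cdot e_L\mid\beta\cdot e_R\,]\to^*\mathsf{SumNF}(e_L,e_R)$ on the grounds that reduction is closed under substitution. That holds for (A), (B), (E) and the floaters, but not for the hereditary rule (G), whose right-hand side is computed by $\mathsf{SumNF}$. The substituted redex $\oplusmap{\alpha}{f\sigma}{\beta}{g\sigma}\,[\,\alpha'\cdot t_1\sigma\mid\beta'\cdot t_2\sigma\,]$ contracts to the irreducible term $\mathsf{SumNF}(f\sigma\,t_1\sigma,\,g\sigma\,t_2\sigma)$. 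It does not contract to $\mathsf{SumNF}(f\,t_1,g\,t_2)\sigma$, which need not even be irreducible.

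For instance, take $e_L=\oplusmap{1}{a}{1}{g}\,[\,1\cdot(x\,w)\mid 1\cdot t_2\,]$, with $a$ an \textsc{Exp} atom, $x$ a free function variable, and $\sigma=[\lambda y.y/x]$. Then $\mathsf{SumNF}(e_L,e_R)\sigma$ has the $\beta$-redex $a\,((\lambda y.y)\,w)$ in a branch. The substituted sum can remove that map former only by (G), whose output branches are irreducible. So your claimed reduction $[\,\alpha\cdot e_L\sigma\mid\beta\cdot e_R\sigma\,]\to^*\mathsf{SumNF}(e_L,e_R)\sigma$ is false in general. Relating the two (G)-contracta is precisely (SN-sub) for the nested sum, so the step presupposes the statement.

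What is missing is the paper's well-founded induction on the rank (in particular $\Phi$) of the substituted sum. Replay the defining computation of $\mathsf{SumNF}$ under $\sigma$. Commute each ordinary step with the linear substitution; capture avoidance and Lemma~\ref{lem:reduction-linear-support} keep the side conditions. At each nested (G) step, invoke the induction hypothesis. It is available because (G) consumes its map former, so the nested branch sum has strictly smaller $\Phi$.

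This yields only that the two sides of (SN-sub) are irreducible reducts of the same substituted sum \emph{modulo} $\approx$. So the closing comparison needs uniqueness up to $\approx$ ($\mathsf{EqUN}_s$ together with $\mathsf{UN}_s$ below $r$), not a literal common reduct.
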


\begin{proof}
The deterministic normalizer of
Definition~\ref{def:lo-normalizer} returns an irreducible term
reachable from its input and never increases $\Phi$
(Theorem~\ref{thm:lo-normalizer-total}).  For a branch reduction, the two displayed
$\mathsf{SumNF}$ outputs are irreducible reducts of the same
lower-rank coherent sum: one path first takes the branch step under
the sum former, while the other lets the deterministic branch
normalizer take it later.  Apply $\mathsf{UN}_s$ at that lower rank.
For an $\approx$-replacement, the lower-rank assertion
$\mathsf{EqUN}_s$ compares the two irreducible branch-normalizer
outputs.  This gives part~(1).

For part~(2), follow the defining computation producing $L$.
Normalize $e_L,e_R$, carry their accumulated phases
$\gamma',\delta'$, and float their tensor-let prefixes out in the
fixed order.  Applying $f,g$ before this computation or after its
branch results therefore gives two reduction sequences, modulo
independent prefix exchange, from
\[
 [\,\alpha\gamma\cdot(f\,e_L)
    \mid\beta\delta\cdot(g\,e_R)\,].
\]
Both displayed results are irreducible.  This coherent sum is
strictly below the enclosing (G)-redex: the latter has the additional
root $\plus$-map constructor.  Thus $\mathsf{UN}_s$ applies.
Static phase associativity is handled by generator (P).

For part~(3), use well-founded induction on the rank of the
substituted coherent sum.  Replay the defining computation of
$\mathsf{SumNF}$ under $\sigma$, choosing all binders fresh for the
supplying terms.  Ordinary reductions commute with typed linear
substitution, with side conditions preserved by
Lemma~\ref{lem:reduction-linear-support}.  If the computation
encounters a nested \textup{(G)} step, its branch sum lies at
strictly smaller $\Phi$, so the induction hypothesis compares the
substituted recursive call with the recursive call on the
substituted branches.

Consequently the left-hand side and the right-hand side of
\textup{(SN-sub)} are irreducible reducts, modulo
independent-prefix exchange, of the same substituted coherent sum.
Its rank is below $r$, so the lower-rank uniqueness assertion gives
the displayed equivalence.
\end{proof}

\begin{theorem}[Staged coherence, peak closure, and uniqueness]
\label{thm:ranked-determinacy}
For every rank $r$, the assertions
$\mathsf{LC}_r$, $\mathsf{Peak}_r$, and $\mathsf{UN}_r$ hold.
\end{theorem}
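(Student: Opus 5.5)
The plan is well-founded induction on the rank $r=\mu(t)$ in the lexicographic order on $\mathbb N^3$. At each rank, assume $\mathsf{LC}_s$, $\mathsf{Peak}_s$, $\mathsf{UN}_s$ and the derived $\mathsf{EqUN}_s$ for every $s<r$. We then prove, in this order, $\mathsf{LC}_r$, $\mathsf{Peak}_r$, $\mathsf{UN}_r$, and finally $\mathsf{EqUN}_r$. Reduction strictly lowers rank (Lemma~\ref{lem:measure-decrease}), and $\approx$ preserves it (Lemma~\ref{lem:administrative-invariants}). Hence every reduct of a rank-$r$ term lies strictly below $r$, so the induction hypothesis applies to all of them.

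\textbf{Local compatibility.} For $\mathsf{LC}_r$, split on the generator of $t\approx_1 t'$. Alpha-conversion and phase replacement (P) commute with every rule on the nose: redex patterns are unchanged, and (E) and (G) multiply evaluated phases, so the results differ only by (P) and $\alpha$. For (X), compare the step $t\to u$ with the exchange. If the redex lies inside a scrutinee or inside the body, the step is replayed on the other side. If the redex involves one of the exchanged lets, i.e.\ a floater moving it outward or (H$^*$) extracting it, then Lemma~\ref{lem:linear-let-extrusion} and Lemma~\ref{lem:deep-float-peaks}(1) let us float the partner let as well, and the two results differ by further (X). For a (G)-step inside an exchanged scrutinee, the $\mathsf{SumNF}$ output is literally the same term on both sides.

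\textbf{Peaks.} For $\mathsf{Peak}_r$, classify critical pairs. Disjoint redexes close trivially, and non-overlapping nested redexes close by residual replay together with Lemma~\ref{lem:reduction-linear-support}. Floater/floater overlaps are Lemma~\ref{lem:floater-fubini}, and any overlap with (H$^*$) is Lemma~\ref{lem:deep-float-peaks}. Beta against a floater in its body or argument closes by extrusion. (E)/(E) overlaps close by associativity of $\circ$ up to $\beta$-steps and (P). (E) against (F$_{L/R}$) is clause (4) of Lemma~\ref{lem:deep-float-peaks}.

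The hereditary overlaps are the main obstacle.
\begin{itemize}
\item A step inside a branch $R_i$ or an operand $f,g$ of a (G)-redex is handled by Lemma~\ref{lem:sumnf-lower-rank}(1): the sum passed to $\mathsf{SumNF}$ has rank below $r$ by equation $(*)$.
\item (E) followed by (G) against (G) performed inside first is handled by Lemma~\ref{lem:sumnf-lower-rank}(2).
\item (G) against (S$_{L/R}$) floating a branch let first is handled by the prefix-extraction clause of $\mathsf{SumNF}$ and part (1).
\item Beta substituting into a (G)-redex, or a (G)-redex inside a substituted argument, is handled by (SN-sub) of part (3).
\end{itemize}
In each hereditary case, both sides reduce (using the finite reductions of Theorem~\ref{thm:lo-normalizer-total}) to irreducible reducts of a common lower-rank term. $\mathsf{UN}_s$ or $\mathsf{EqUN}_s$ then identifies them up to $\approx$. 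Rebuilding the enclosing context preserves $\approx$ because $\approx$ is a congruence. I expect the bulk of the work to be checking that every term fed to the lower-rank hypotheses genuinely has rank $<r$, especially in the substitution case, where $\mathsf{Bind}$ may rise while $\Phi$ drops.

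\textbf{Uniqueness.} For $\mathsf{UN}_r$: if $t$ is irreducible, both reductions are empty. Otherwise write $t\to u\to^* n$ and $t\to v\to^* n'$. Apply $\mathsf{Peak}_r$ to get $u\to^*p$ and $v\to^*q$ with $p\approx q$. Normalize $p$ to some $m$ and $q$ to some $m'$; these exist by Theorem~\ref{thm:termination} and Theorem~\ref{thm:lo-normalizer-total}. Since $u$ and $v$ have rank $<r$, $\mathsf{UN}$ gives $n\approx m$ and $n'\approx m'$, and $\mathsf{EqUN}$ at the common rank of $p,q$ (which is $<r$) gives $m\approx m'$. Transitivity then yields $n\approx n'$.

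\textbf{Equivalent starting points.} Finally, derive $\mathsf{EqUN}_r$ by induction on the length of the $\approx_1$-chain. Each link is handled by $\mathsf{LC}_r$ followed by $\mathsf{UN}$ at ranks $\le r$, using irreducibility invariance (Lemma~\ref{lem:administrative-invariants}(4)) for chains between already-irreducible terms.
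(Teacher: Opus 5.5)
Your overall architecture matches the paper's proof. You use well-founded induction on $\mu$, prove $\mathsf{LC}_r$, $\mathsf{Peak}_r$ and $\mathsf{UN}_r$ in that order, derive $\mathsf{EqUN}_r$ last, and discharge every hereditary case using only the package strictly below $r$. Your handling of the hereditary peaks is sound: you reduce both sides to irreducible reducts, up to (P), of the coherent sum left after deleting the root $\plus$-map, and use (SN-sub) for the beta overlaps. It also covers the (G)$\times$(F$_{L/R}$) overlaps, which you did not list. Your rank worry in the substitution case dissolves. The (G)-step consumes the root map constructor, so $\Phi$ drops strictly and the lexicographic rank drops whatever happens to $\mathsf{Bind}$.

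The hole is in $\mathsf{LC}_r$. For an (X)-generator you treat three configurations: a redex inside the exchanged lets, a redex acting on one of them, and a (G)-redex inside an exchanged scrutinee. You do not treat the reverse nesting, where the exchange lies strictly inside one of the components $f,g,R_1,R_2$ of a (G)-redex.

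In that configuration nothing can be replayed, because (G) hands the whole component to the hereditary normalizer. The two terms step to $K[\mathsf{SumNF}(f\,R_1,g\,R_2)]$ and $K[\mathsf{SumNF}(f\,R_1',g\,R_2)]$ with $R_1\approx_1R_1'$, and both are already irreducible at the hole. There is no lockstep simulation of $\mathsf{NF}_{\mathrm{LO}}$ across an exchange: it changes the order in which scrutinees are traversed, and when both exchanged lets are eligible it changes which (H$^*$) index is least.

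What closes this case is the following. The sums $[\,\alpha\gamma\cdot f\,R_1\mid\beta\delta\cdot g\,R_2\,]$ and $[\,\alpha\gamma\cdot f\,R_1'\mid\beta\delta\cdot g\,R_2\,]$ are $\approx_1$-related and have $\Phi$ one below the redex, hence rank $<r$. Lower-rank $\mathsf{EqUN}$ (Lemma~\ref{lem:sumnf-lower-rank}(1)) then makes the two outputs $\approx$-equal. This is exactly why $\mathsf{LC}_r$ has to sit inside the ranked induction rather than stand as a separate lemma.

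Two smaller points:
\begin{itemize}
\item The same lemma is the safe way to handle $\alpha$ and (P) changes inside a (G)-component. Your claim that these commute \emph{on the nose} is plausible, but it is an unproved equivariance property of the normalizer.
\item When you derive $\mathsf{EqUN}_r$ one link at a time, $\mathsf{LC}_r$ returns $v\approx v'$, not $v\approx_1 v'$. So before closing with $\mathsf{UN}_r$ at $t'$, you need $\mathsf{EqUN}$ at $\operatorname{rk}(v)<r$, not only $\mathsf{UN}$.
\end{itemize}
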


\begin{proof}
By well-founded induction on $r$.  Fix $r$ and assume all three
assertions, together with the derived $\mathsf{EqUN}_s$, at every
$s<r$.  We establish the three assertions at $r$ in the stated order
and derive $\mathsf{EqUN}_r$ only afterward.

\paragraph{Stage 1: $\mathsf{LC}_r$.}
Consider one generator $t\approx_1t'$ and a step $t\to u$.

For alpha-conversion, choose the same fresh names when replaying the
rewrite; the reducts are alpha-equivalent.  If the changed occurrence
lies inside one of the four components of a (G)-redex, the induced
change of its branch application is compared by
Lemma~\ref{lem:sumnf-lower-rank}(1), at the strict lower rank obtained
by removing the root map constructor.  For (P), every rule other than
(E) and (G) carries annotations unchanged.  Rule (E) multiplies
corresponding annotations.  Rule (G) passes their products to
$\mathsf{SumNF}$; equality in $U(1)$ is a congruence for
multiplication, and a phase replacement inside a component again
uses Lemma~\ref{lem:sumnf-lower-rank}(1).  Thus the reducts remain
related by (P).

It remains to consider (X).  If the rewrite is disjoint from the
exchange site, or lies in a subterm transported unchanged by the
exchange, the generator and the step commute.  If a beta rule or a
local floater overlaps the two-let prefix, linear let extrusion
(Lemma~\ref{lem:linear-let-extrusion}) and independent-let Fubini
(Lemma~\ref{lem:floater-fubini}) give the required comparison.
Against (H$^*$), exchanging two adjacent independent prefix lets
does not change which deep lets are eligible, as proved in
Lemma~\ref{lem:administrative-invariants}; replay the same selected
deep float.  If the selected let is one of the exchanged pair, the
two results differ by moving the other independent let once.

If the step is (G), the exchange site is either in its surrounding
context or inside one of $f,g,R_1,R_2$.  The former case commutes
by context closure.  In the latter case, (G) hands
$\approx$-related branch applications of rank below $r$ to
$\mathsf{SumNF}$; Lemma~\ref{lem:sumnf-lower-rank}(1) compares the
outputs.  This invokes only the lower-rank induction package.
Thus $\mathsf{LC}_r$ holds.  Notice that no assertion at rank
$r$ was borrowed in this stage.

\paragraph{Stage 2: $\mathsf{Peak}_r$.}
Disjoint redexes commute by context closure.  If one redex is
strictly inside a subterm that the other rule transports once,
replay it in the transported occurrence; linearity prevents
duplication.  The genuine overlaps are exhausted by
Table~\ref{tab:critical-pairs}.

\begin{table}[h]
\caption{Exhaustive critical-pair families.  Rule (F) and (G) have
incompatible scrutinee roots, while (S$_L$)/(S$_R$) do overlap (G)
when an arbitrary sum branch is let-headed; those two cases are
calculated below.  A blocked $\plus$-map introduces no new root
rewrite.}
\label{tab:critical-pairs}
\centering
\footnotesize
\renewcommand{\arraystretch}{1.18}
\begin{tabular}{@{}p{0.23\linewidth}p{0.30\linewidth}p{0.38\linewidth}@{}}
\toprule
\textbf{Families} & \textbf{Root/overlap shapes} & \textbf{Closure} \\
\midrule
$\beta\times\beta$
 & (A)$\times$(A), (A)$\times$(B), (B)$\times$(B)
 & incompatible roots; nested cases by linear substitution \\

$\beta\times$ nested (G)
 & (A) with (G) in the abstraction body;
   (B) with (G) in the let continuation
 & typed-linear substitution naturality
   \textup{(SN-sub)} \\

$\beta\times$ floater
 & (A)$\times$(C$'$), (A)$\times$(H$^*$);
   (B) against every let-floater in a constructor or scrutinee slot
 & Lemma~\ref{lem:linear-let-extrusion};
   Lemma~\ref{lem:deep-float-peaks} \\

floater$\times$floater
 & sibling slots, nested slots, (D)$\times$(D),
   and every pair containing (H$^*$)
 & Lemmas~\ref{lem:floater-fubini}
   and~\ref{lem:deep-float-peaks} \\

(E)$\times$floater
 & (E)$\times$(F), (F$_L$), (F$_R$);
   a floater inside any map operand
 & fuse--float calculation below; deep case by
   Lemma~\ref{lem:deep-float-peaks} \\

(E)$\times$(E)
 & three nested maps
 & beta-normalize both composition bracketings \\

(G)$\times$ordinary
 & (G)$\times$(E), (F$_L$), (F$_R$),
   (S$_L$), (S$_R$);
   a step inside $f,g,R_1,R_2$
 & lower-rank
   Lemma~\ref{lem:sumnf-lower-rank} \\

(G)$\times$(G)
 & identical root, or a (G)-redex in one component
 & identical root reduct; nested case by lower-rank stability \\
\bottomrule
\end{tabular}

\end{table}

Here are the nontrivial ordinary calculations.

For (A)$\times$(C$'$),
\[
 (\lam{x}{e})\,\letpair{a}{b}{u}{v}
\]
reduces by beta to
$e[\letpair{a}{b}{u}{v}/x]$, while floating first and then using
beta gives $\letpair{a}{b}{u}{e[v/x]}$.
The variable $x$ occurs once; Lemma~\ref{lem:linear-let-extrusion}
joins these terms.  The (A)$\times$(H$^*$) calculation is
Lemma~\ref{lem:deep-float-peaks}(2).

A representative (B) peak is
\[
 (\letpair{x}{y}{e\tensor t}{f})\tensor g.
\]
Beta first gives $f[e/x,t/y]\tensor g$.  Floating first and then
contracting gives $(f\tensor g)[e/x,t/y]$, literally the same term
because $x,y$ do not occur in the independent component~$g$.
If the floater is inside the beta scrutinee, beta embeds a let at
the unique substituted occurrence and
Lemma~\ref{lem:linear-let-extrusion} extracts it.  Changing the
enclosing constructor covers (C), (C$'$), (D),
(C$_\tensor^L$), (C$_\tensor^R$), (F), (F$_L$), and (F$_R$).

For (E)$\times$(F), fusing
\[
 \oplusmap{\alpha}{f}{\beta}{g}
 \bigl(\oplusmap{\alpha'}{h}{\beta'}{k}
       (\letpair{x}{y}{e}{t})\bigr)
\]
and then floating reaches
\[
 \letpair{x}{y}{e}{
   \oplusmap{\alpha\alpha'}{f\circ h}
     {\beta\beta'}{g\circ k}\,t}.
\]
Floating through the inner map and then the outer map, followed by
(E), reaches the same term.  For (E)$\times$(F$_L$), expand
\[
 (\letpair{x}{y}{e}{h})\circ h'
 =
 \lam{z}{(\letpair{x}{y}{e}{h})(h'z)}.
\]
One application floater and the local instance of (H$^*$) give
$\letpair{x}{y}{e}{(h\circ h')}$; (F$_L$) then agrees with
floating first and fusing inside the let body.  The right case is
symmetric.  Deeper prefix occurrences are covered by
Lemma~\ref{lem:deep-float-peaks}.

For (E)$\times$(E), the two paths from three nested maps have branch
functions $f\circ(h\circ m)$ and $(f\circ h)\circ m$.
Expanding $\circ$ and contracting one internal beta redex in either
term reaches $\lam{z}{f(h(mz))}$; phases agree by (P).

It remains to check the hereditary overlaps.  We do so with all
phases and without value assumptions on the map operands.  Write
$M_{\alpha,\beta}(f,g)$ for
$\oplusmap{\alpha}{f}{\beta}{g}$.  The (G)$\times$(E) source is
\[
\tag{$*$}
 M_{\alpha,\beta}(f,g)
 \Bigl(
   M_{\alpha',\beta'}(h,k)
   [\,\gamma\cdot R_1\mid\delta\cdot R_2\,]
 \Bigr),
\]
where $f,g,h,k,R_1,R_2$ are arbitrary terms satisfying the internal
typing premises.  Fusing first and then using (G) produces
\[
 \mathsf{SumNF}_{
   (\alpha\alpha')\gamma,\,
   (\beta\beta')\delta}
 \bigl((f\circ h)R_1,\;(g\circ k)R_2\bigr).
\]
Lemma~\ref{lem:sumnf-lower-rank}(1), applied to the $\beta$-reducts
of the two branch applications, exposes the common branch terms
$f(hR_1)$ and $g(kR_2)$.

For the other path, write
\[
 \mathsf{SumNF}_{\alpha'\gamma,\,\beta'\delta}
       (hR_1,kR_2)
 =
 L[[\,\lambda\cdot S_L\mid\kappa\cdot S_R\,]].
\]
After the inner (G), repeated (F) moves $L$ past the outer map, and
the outer (G) produces
\[
 L\Bigl[
   \mathsf{SumNF}_{\alpha\lambda,\,\beta\kappa}
       (fS_L,gS_R)
 \Bigr].
\]
Lemma~\ref{lem:sumnf-lower-rank}(2) joins the branch
normalizations.  Generator (P) supplies precisely
\[
  (\alpha\alpha')\gamma=\alpha(\alpha'\gamma),
  \qquad
  (\beta\beta')\delta=\beta(\beta'\delta).
\]
Every use of uniqueness here is below the rank of~($*$), because
the common branch terms omit its outer map constructor.

For (G)$\times$(F$_L$), start from
\[
 M_{\alpha,\beta}(\letpair{x}{y}{e}{f},g)
 [\,\gamma R_1\mid\delta R_2\,].
\]
The (G)-first path begins its left branch with
\[
 (\letpair{x}{y}{e}{f})R_1
 \to \letpair{x}{y}{e}{fR_1}.
\]
Left-to-right prefix extraction therefore puts this let before all
prefixes subsequently obtained from either branch.  Floating
(F$_L$) first also puts exactly this let outside, and (G) in its body
gives the same term.  Thus the left case is exact.

For (G)$\times$(F$_R$), the (G)-first path extracts all prefixes
from the left branch before the prefix generated by normalizing the
right map operand, including the displayed right-branch let.
Floating (F$_R$) first places that right prefix outside the whole map
and therefore before the left prefixes.  Every let in the right
prefix is independent of every left prefix let by the typing split,
so repeated (X) relates the two fixed orders.  This case is not
claimed exact.

The sum-branch floaters give the analogous comparison, but neither
overlap is generally literal equality.  From
\[
 M_{\alpha,\beta}(f,g)
 [\,\gamma\cdot(\letpair{x}{y}{e}{R_1})
       \mid\delta\cdot R_2\,],
\]
the (G)-first path normalizes
$f(\letpair{x}{y}{e}{R_1})$; rule (C$'$) exposes the displayed let.
Prefixes obtained while normalizing $f$ may therefore be extracted
before the displayed branch let.  The (S$_L$)-first path instead
places that let outside the whole map before $f$ is normalized.  The
typing split makes the two prefix blocks independent, so repeated
applications of (X) join the resulting fixed orders.

For (S$_R$), the (G)-first path places prefixes from the left branch
before the let exposed from the right branch, whereas the
(S$_R$)-first path places the right-branch let first.  The same
independence argument joins these orders by (X).  Thus both
(G)$\times$(S$_{L/R}$) overlaps close modulo (X), hence modulo
$\approx$; literal equality arises only when the intervening
prefixes are empty.

For a beta step crossing a nested \textup{(G)} occurrence, let
$\sigma$ be the substitution performed by the outer beta rule.
Reducing \textup{(G)} first and then beta produces a term containing
\[
 \mathsf{SumNF}(e_L,e_R)\sigma ,
\]
whereas contracting beta first and then \textup{(G)} produces
\[
 \mathsf{SumNF}(e_L\sigma,e_R\sigma).
\]
Clause~\textup{(SN-sub)} of Lemma~\ref{lem:sumnf-lower-rank} joins
these terms modulo $\approx$.  For \textup{(A)},
$\sigma=[u/x]$; for \textup{(B)}, $\sigma=[u/x,v/y]$.  Both
instances of \textup{(SN-sub)} lie strictly below the peak rank:
the \textup{(G)} step consumes the outer $\plus$-map former, so the
compared sums omit its contribution to $\Phi$.  A nested
\textup{(G)} occurrence lying outside the substituted body is
transported once and replays literally.

Finally, a step inside $f,g,R_1$, or $R_2$ changes one lower-rank
branch application.  Lemma~\ref{lem:sumnf-lower-rank}(1), not a
claim that hereditary normalization transports the step verbatim,
joins the two outputs.  This includes an (H$^*$)-step in a map
operand and a nested (G)-step.  A root (G)$\times$(G) peak has the
same uniquely specified macro reduct on both sides.  The
(G)$\times$(S$_{L/R}$) root/component overlaps are the two preceding
calculations.  Blocked maps have no root rule, so reductions in their
operands are already among the component cases.

The table and calculations close every one-step peak at rank $r$
using only the induction package below $r$ in hereditary cases.
Thus $\mathsf{Peak}_r$ holds.

\paragraph{Stage 3: $\mathsf{UN}_r$.}
Let $t\to^*n$ and $t\to^*n'$ with $n,n'$ irreducible.  If one
sequence is empty, $t$ is irreducible and the other is empty as well.
Otherwise write its first steps as
\[
  u\leftarrow t\to v.
\]
By $\mathsf{Peak}_r$, choose
$u\to^*p$, $v\to^*q$ with $p\approx q$.  Termination
(Theorem~\ref{thm:termination}) supplies irreducible continuations
$p\to^*\widehat p$ and $q\to^*\widehat q$.
All of $u,v,p,q$ have rank below $r$.  Lower-rank
$\mathsf{UN}$ gives
\[
 n\approx\widehat p,
 \qquad
 n'\approx\widehat q.
\]
The lower-rank assertion $\mathsf{EqUN}_{\operatorname{rk}(p)}$
applied to $p\approx q$ gives
$\widehat p\approx\widehat q$.  Transitivity yields
$n\approx n'$, proving $\mathsf{UN}_r$.

It remains to derive $\mathsf{EqUN}_r$ for use at higher ranks.
First suppose $t\approx_1t'$, $t\to^*n$, and $t'\to^*n'$, where
$n,n'$ are irreducible.  If the first reduction is empty, then $t$
is irreducible, so administrative irreducibility invariance makes
$t'$ irreducible as well; hence the second reduction is empty and
$n=t\approx t'=n'$.

Otherwise write $t\to u\to^*n$.  Apply $\mathsf{LC}_r$ once to
$t\approx_1t'$ and $t\to u$, obtaining
\[
 u\to^*v,
 \qquad
 t'\to^*v',
 \qquad
 v\approx v'.
\]
Termination gives irreducible continuations
$v\to^*q$ and $v'\to^*q'$.  All of $u,v,v'$ have rank below $r$: $u$ is the first
reduct of $t$, $v$ is a reduct of $u$, and
$v\approx v'$ gives $\mu(v')=\mu(v)<r$ by
Lemma~\ref{lem:administrative-invariants}.  Thus the joining
sequence from $t'$ is necessarily nonempty as well.  Lower-rank $\mathsf{UN}$ at $u$ compares
$n$ with $q$, while lower-rank $\mathsf{EqUN}$ applied to
$v\approx v'$ compares $q$ with $q'$.  Finally the just-proved
$\mathsf{UN}_r$ at $t'$ compares $q'$ with $n'$.  Thus $n\approx n'$.
Choosing irreducible reducts by termination at the intermediate
terms and iterating this one-generator argument along a finite
$\approx$ derivation proves $\mathsf{EqUN}_r$.  The rank-$r$
package is therefore complete; Stages~1 and~2 use only lower-rank
uniqueness.
\end{proof}

\subsection{Determinacy and strategy independence}

\begin{proof}[Proof of Theorem~\ref{thm:determinacy}]
Apply $\mathsf{UN}_{\mu(t)}$ from
Theorem~\ref{thm:ranked-determinacy} to the two reduction
sequences.
\end{proof}

\begin{remark}[Why the quotient is stated explicitly]
Theorem~\ref{thm:determinacy} is stated directly for the
administrative quotient $\approx$.  A conclusion merely up to the
symmetric closure of reduction would be immediate; the ranked proof
establishes the sharper quotient.
\end{remark}

\begin{remark}[Separation from semantic coherence]
No eta law, rig-coherence equation, or semantic equality is used in
the computational proof.  Rig coherence
(Theorem~\ref{thm:rig-coherence}) remains available for canonical
boundary transports.  Separately,
Lemma~\ref{lem:administrative-naturality} shows on
Source-generated normal forms and the lower-$\Phi$ branches generated
during their interpretation that the three generators of $\approx$ act
by canonical derivation-sector transports:
alpha-conversion renames wires,
independent-let exchange is Fubini for disjoint cuts, and phase
equality is scalar equality.  Corollary~\ref{cor:canonical-invariance}
specializes this transport to Source semantics.
\end{remark}

%% file: app-unitarityNEW.tex
\section{Boundary Unitarity}
\label{app:unitarityNEW}

This appendix proves unitarity for the canonical normal forms obtained
from Source programs.  A Source term is first expanded to Raw syntax
and deterministically normalized.  The internal clauses below are used
on that canonical normal derivation and on the strictly smaller branch
derivations generated by the applied-map and coherent-sharing clauses.
The signed occurrence bookkeeping distinguishes the branch-paired
boundary spaces selected by those derivations from their flat type
envelopes.

\begin{remark}[Necessity of the first-order restriction]
\label{rem:why-first-order-witness}
Without the first-order condition on coherent branch results, the term
\[
 \lam{x}{
   \letpair{x'}{f}
     {\caseof{x}{b_0}{\sigma^\plus_{\base,\base}}
                    {b_1}{\id_{\QBool}}}
     {f\,x'}}
 :\QBool\lmark\QBool
\]
would be typable.  Both branches are structural, but the case would
construct a branch-correlated function and the body would later apply
that function to its correlated tag.  On basis states both inputs are
sent to $\ket{1}$; hence
\[
  \alpha\ket{0}+\beta\ket{1}
  \longmapsto (\alpha+\beta)\ket{1}.
\]
The resulting operator has rank one and is not unitary.  The
first-order premises exclude precisely this higher-order branch
result.
\end{remark}

\subsection{Semantic Setup}
\label{subsec:sem-model}

\paragraph{Denotation.}
For a canonical normal derivation of $\Gamma\intjudge N:A$ used in the
Source-normal interpretation below,
Table~\ref{tab:sem-compositional} selects its branch-paired negative and
positive boundary spaces and a map
$\SEM{\Gamma\intjudge N:A}_{\mathrm{NF}}$ between them.  These
derivation-selected subspaces of the flat type envelope are the
derivation's \emph{sectors}; throughout the paper, ``sector'' always
means such a derivation-selected subspace of the corresponding
envelope (e.g.\ the code sectors of the compilation appendices).
The flat type envelope of
\S\ref{sec:boundary-semantics} supplies common coordinates only; it is
not substituted for these derivation-selected spaces.  All maps below
live in $\mathbf{FdHilb}$, where $\tensor$ is the monoidal tensor and
$\oplus$ is the additive biproduct.

The only port-closing identities used below are the typed snake
equations.  With
$\eta_H:\mathbb C\to H\tensor H^*$ and
$\epsilon_H:H^*\tensor H\to\mathbb C$,
\[
 (\id_H\tensor\epsilon_H)\circ
 (\eta_H\tensor\id_H)=\id_H,
 \qquad
 (\epsilon_H\tensor\id_{H^*})\circ
 (\id_{H^*}\tensor\eta_H)=\id_{H^*}.
 \tag{Y}
\]
Each use of (Y) is on the single typed port displayed by the canonical
normal judgment.

\subsection{Through-Maps, Completed Blocks, and the Classified Cut}
\label{subsec:through-completion}

This subsection supplies the referents for ``inactive-branch
completion'' and for the cut notation used by the proof below.  They
are stated once and used verbatim by the constructor cases.

The construction is simultaneous.  An outer strong induction on
$\Phi$ handles branches generated by applied maps and coherent
sharing; at fixed $\Phi$, ordinary clauses follow canonical premises
and classified cuts use the lexicographic order
$(|\mathcal D|,|\mathcal C|)$, producer first.

\paragraph{Context through-maps.}
For a context $\Gamma=x_1{:}A_1,\ldots,x_n{:}A_n$, write
\[
 \varphi_\Gamma=\mathsf{sgn}(A_1)^*\tensor\cdots\tensor\mathsf{sgn}(A_n)^*,
 \qquad
 \mathcal T_\Gamma^\pm=\sem{\partial^\pm(\varphi_\Gamma)} .
\]
After the fixed rig ordering, every monomial of $\varphi_\Gamma$ contributes
one negative and one positive coordinate; matching equal monomial tags
defines a canonical unitary
\[
 Y_\Gamma:\mathcal T_\Gamma^-\longrightarrow\mathcal T_\Gamma^+ .
 \tag{TY}
\]
In a fixed normal-form branch, $Y_\Gamma$ is typed identity wiring
on the existing coordinates of the inactive context.
Let $d_\Gamma^\pm:\mathcal T_\Gamma^\pm\xrightarrow{\;\cong\;}
\mathsf D_\Gamma$ be the two fixed monomial-coordinate readouts.
Then
\[
 d_\Gamma^+Y_\Gamma=d_\Gamma^- .
 \tag{TY-read}
\]

\begin{definition}[Inactive-branch completion]
\label{def:inactive-completion}
The $\plus$-map, applied-map, and coherent-sharing clauses are read
with \emph{completed blocks}.  For branch derivations
$\Gamma_1\intjudge R_1$ and $\Gamma_2\intjudge R_2$ with maps
$U_1,U_2$ on their selected spaces $\mathcal S_i^\pm$, the two
orthogonal completed blocks are
\[
 \widehat{\mathcal S}_{1\mid 2}^\pm
   =\mathcal S_1^\pm\tensor\mathcal T_{\Gamma_2}^\pm,
 \qquad
 \widehat{\mathcal S}_{2\mid 1}^\pm
   =\mathcal T_{\Gamma_1}^\pm\tensor\mathcal S_2^\pm,
\]
embedded in the left and right head-tag blocks of the conclusion
boundary, and the completed denotation is
\[
 \alpha\,(U_1\tensor Y_{\Gamma_2})
 \;\oplus\;
 \beta\,(Y_{\Gamma_1}\tensor U_2) .
 \tag{SU}
\]
The construction applies to $\plus$-map formers with first-order
targets and unrestricted sources, and to the completed branch families
used by applied maps and coherent sharing.
\end{definition}

These completed blocks implement the derivation-indexed completion of
\S\ref{sec:boundary-semantics}: both polarities carry the coordinates
assigned to the opposite branch context, paired by $Y$.

\paragraph{Blockwise boundaries.}
Whenever a boundary is displayed blockwise, its canonical inclusions
$\iota_i^\pm:\mathcal B_i^\pm\to\mathcal B^\pm$ exhibit the whole
derivation-indexed boundary as an orthogonal direct sum:
\[
 (\iota_i^\pm)^\dagger\iota_j^\pm=\delta_{ij}I,
 \qquad
 \sum_i\iota_i^\pm(\iota_i^\pm)^\dagger=I_{\mathcal B^\pm}.
 \tag{BW}
\]
This is immediate from Definition~\ref{def:inactive-completion}: the
completed blocks are, by construction, the orthogonal summands of
the displayed boundary.  A cut is \emph{blockwise} when its conclusion
is assembled from such a family, and \emph{non-block} otherwise.

\begin{definition}[Variable graph sector]
\label{def:variable-graph-sector}
For the fixed rig normal form
\[
 \mathsf{sgn}(T)\cong\bigoplus_{i\in I_T}M_i,
\]
write $M_i^\pm$ for its negative and positive tensor factors and put
\[
 \mathcal Y_T^-=
 \bigoplus_{i\in I_T}\sem{M_i^+\tensor M_i^-},
 \qquad
 \mathcal Y_T^+=
 \bigoplus_{i\in I_T}\sem{M_i^-\tensor M_i^+}.
 \tag{VG}
\]
The variable map
\[
 \mathsf{yank}_T:\mathcal Y_T^-\longrightarrow\mathcal Y_T^+
\]
is the direct sum, over $i\in I_T$, of the canonical tensor
symmetries exchanging the two displayed factors.
\end{definition}

Put $\mathsf D_T:=\mathcal Y_T^-$ and let
\[
 c_T^-:=\id_{\mathcal Y_T^-},
 \qquad
 c_T^+:=\mathsf{yank}_T^\dagger:
 \mathcal Y_T^+\longrightarrow\mathsf D_T .
\]
These are unitary and $c_T^+\mathsf{yank}_T=c_T^-$.  For a
first-order $P$, the fixed basis identifies $\mathsf D_P$ with
$\sem P$; write $\mathcal B_P$ for its basis indices.

For first-order $P$ and $V\in U(\sem P)$, its boundary graph is
derived from the identity graph:
\[
 \mathsf{Graph}_P(V)
 :=
 (I\tensor V)\mathsf{yank}_P:
 \mathcal Y_P^-\longrightarrow(I\tensor V)\mathcal Y_P^+.
 \tag{QGraph}
\]
It is unitary between the displayed spaces, and
$\mathsf{Graph}_P(I)=\mathsf{yank}_P$.  In the canonical pair
coordinates, put
\[
 \mathsf{src}_P\ket{x,y}_-:=\ket x,
 \qquad
 \mathsf{tgt}_P\ket{x,y}_+:=\ket y.
\]
Their restrictions to the displayed graph spaces are unitary and
\[
 \mathsf{tgt}_P\,\mathsf{Graph}_P(V)
 =
 V\,\mathsf{src}_P .
 \tag{QGraph-read}
\]

\paragraph{Structural graph sectors.}
Let $s:T\cong S$ be a primitive structural isomorphism and let
$\pi_s$ be its bijection of the fixed rig-normal-form monomial bases.
Write $\mathfrak S=(\cdot\intjudge s:T\lmark S)$ for this judgment
shape.
For a basis vector $\ket a\in\sem T$, define isometries into the two
polarities of the flat boundary of $s$ by
\[
 g_s^-\ket a=\ket{a,\pi_s(a)}_-,
 \qquad
 g_s^+\ket{\pi_s(a)}=\ket{a,\pi_s(a)}_+,
 \tag{SG}
\]
where the subscripts mean the fixed signed-boundary order.  The selected
atom sectors are $B_s^\pm:=\operatorname{im}(g_s^\pm)$, and the atom map is
\[
 \mathsf{Rig}_{\mathfrak S}(s)
 :=g_s^+\,\mathsf{Rig}(s)\,(g_s^-)^\dagger
 :B_s^-\longrightarrow B_s^+ .
 \tag{SG-map}
\]
The canonical graph coordinates
\[
 \epsilon_s^-:=(g_s^-)^\dagger,
 \qquad
 \epsilon_s^+:=(g_s^+)^\dagger
\]
are unitary onto $\sem T$ and $\sem S$, respectively, and satisfy
\[
 \epsilon_s^+\mathsf{Rig}_{\mathfrak S}(s)
 =\mathsf{Rig}(s)\epsilon_s^- .
 \tag{SG-read}
\]

\paragraph{Source-parametrized graph supports.}
For finite-dimensional spaces $X,Y$ in their fixed bases and an
isometry $F:X\to Y$, write
\[
 \Gamma(F):X\lhook\joinrel\longrightarrow X_{\mathrm{ref}}\tensor Y,
 \qquad
 \Gamma(F)\ket x:=\ket x_{\mathrm{ref}}\tensor F\ket x,
 \tag{Port-support}
\]
extended linearly.  The reference address is retained while the live output may be mixed.

\begin{definition}[Typed whole-port chart]
\label{def:whole-port-chart}
Fix a classified instance $\rho$ at port $P$, and write
$\mathfrak J_\rho$ for its conclusion judgment shape.  The applicable
clause supplies active spaces $H_\rho^\pm$, an inactive context $\Xi_\rho$,
and an active unitary
$\widetilde U_\rho:H_\rho^-\to H_\rho^+$.  Put
\[
 K_\rho^\pm:=H_\rho^\pm\tensor\mathcal T_{\Xi_\rho}^\pm,
 \qquad
 W_\rho:=\widetilde U_\rho\tensor Y_{\Xi_\rho}:
 K_\rho^-\xrightarrow{\;\cong\;}K_\rho^+ .
 \tag{PC-active}
\]

For a non-block first-order instance, restrict the two premise
supports to the displayed occurrences.  After the fixed
premise-boundary reorderings, these restrictions are placed source
graphs
\[
 \begin{aligned}
 h_{\mathcal D,\rho}^\epsilon&=\Gamma(F_{\mathcal D}^\epsilon),
 &F_{\mathcal D}^\epsilon&:X_\rho
      \longrightarrow Y_\rho^\epsilon\tensor\mathsf D_P,\\
 h_{\mathcal C,\rho}^\epsilon&=\Gamma(F_{\mathcal C}^\epsilon),
 &F_{\mathcal C}^\epsilon&:\mathsf D_P\tensor Z_\rho
      \longrightarrow R_\rho^\epsilon .
 \end{aligned}
\]
Their typed link is the placed source graph determined on basis
vectors by
\[
 \begin{aligned}
 &(h_{\mathcal C,\rho}^\epsilon\star_P
   h_{\mathcal D,\rho}^\epsilon)\ket{x,z}\\
 &\quad:=\ket{x,z}_{\mathrm{ref}}\tensor
   \sum_{y,r}\sum_{p\in\mathcal B_P}
   [F_{\mathcal C}^\epsilon]_{r,(p,z)}
   [F_{\mathcal D}^\epsilon]_{(y,p),x}\ket{y,r}.
 \end{aligned}
 \tag{Graph-link}
\]
Here $\sigma_\rho$ is the canonical reassociation that brings the two
displayed $P$ coordinates together; its type is displayed in the
\textup{(FO-whole)} clause below.  Equivalently, up to the fixed
conclusion placement,
\[
 h_{\mathcal C,\rho}^\epsilon\star_P
 h_{\mathcal D,\rho}^\epsilon
 =\Gamma\!\left(
   (I\tensor F_{\mathcal C}^\epsilon)\,
   \sigma_\rho\,
   (F_{\mathcal D}^\epsilon\tensor I)\right).
 \tag{Graph-link-functor}
\]
Tensoring this active link with the placed source graph of
$Y_{\Xi_\rho}$ gives the maps
\[
 h_\rho^\pm:K_\rho^-
 \longrightarrow \mathcal E_{\mathfrak J_\rho}^\pm .
\]
Define the negative- and positive-coordinate insertions by
\[
 j_\rho^-:=h_\rho^-:
 K_\rho^-\lhook\joinrel\longrightarrow
 \mathcal E_{\mathfrak J_\rho}^-,
 \qquad
 j_\rho^+:=h_\rho^+W_\rho^\dagger:
 K_\rho^+\lhook\joinrel\longrightarrow
 \mathcal E_{\mathfrak J_\rho}^+ .
 \tag{PC-yank}
\]
Lemma~\ref{lem:typed-graph-composition} proves these claims.  For a
higher-order instance closed by \textup{(PortCut)}, the variable-spine,
typed eta-unit, or structural clause supplies the same two insertions
by its displayed typed-snake calculation.  Tensor elimination first
applies \textup{(Ten-pack)} and then one of these clauses.

Put
\[
 B_\rho^\pm:=\operatorname{im}(j_\rho^\pm),
 \qquad
 \chi_\rho^\pm
 :=(j_\rho^\pm)^\dagger\!\upharpoonright_{B_\rho^\pm}.
 \tag{PC-def}
\]
Then
\[
 \chi_\rho^\pm:B_\rho^\pm\xrightarrow{\;\cong\;}K_\rho^\pm,
 \qquad
 j_\rho^\pm\chi_\rho^\pm=I_{B_\rho^\pm},
 \qquad
 \chi_\rho^\pm j_\rho^\pm=I .
 \tag{PC}
\]

For a block-family instance the construction is performed in each
component.  Writing
$B_{\rho,i}^\pm:=\operatorname{im}(j_{\rho,i}^\pm)$ and
$\jmath_i^\pm:B_{\rho,i}^\pm\to B_\rho^\pm$ for the resulting tagged
inclusions,
\[
 \begin{gathered}
 B_\rho^\pm=\bigoplus_i\jmath_i^\pm B_{\rho,i}^\pm,
 \qquad
 \chi_\rho^\pm=\bigoplus_i\chi_{\rho,i}^\pm,\\
 (\jmath_i^\pm)^\dagger\jmath_j^\pm=\delta_{ij}I,
 \qquad
 \sum_i\jmath_i^\pm(\jmath_i^\pm)^\dagger=I_{B_\rho^\pm}.
 \end{gathered}
 \tag{PC-block}
\]
For a syntactic completed block these are the tagged injections of
\textup{(BW)}; for a transported family they are the linked component
supports.

For an $\mathsf{AppCut}_A$ with residual result $B$, the complete
equal-address sector $\mathcal Y_B^\pm$
(Definition~\ref{def:variable-graph-sector}) belongs to the active
factor $H_\rho^\pm$; closing the displayed $A$-port retains that
sector.
\end{definition}

\begin{lemma}[Residual chart for a variable spine]
\label{lem:variable-spine-chart}
Let
\[
 T_{j-1}=A_j\lmark T_j\quad(1\leq j\leq k),
 \qquad T_k=C,
\]
and let $\mathcal N_j$ be the canonical derivation
\[
 x:T_0,\Gamma_1,\ldots,\Gamma_j
 \intjudge x\,R_1\cdots R_j:T_j ,
\]
where the displayed contexts are pairwise disjoint.  Let
$\mathcal R_i$ be the canonical derivation of
$\Gamma_i\intjudge R_i:A_i$, and set
\[
 \mathcal S_i^\pm:=B_{\mathcal R_i}^\pm,
 \qquad
 U_i:=U_{\mathcal R_i}
 =\SEM{\Gamma_i\intjudge R_i:A_i}_{\mathrm{NF}} .
\]
Let $r_j^\pm$ be the fixed ambient rig-boundary basis permutation
which places, in order, the complete boundary coordinates
contributed by the operand derivations, followed by the
equal-address residual $T_j$-coordinates.  Define
\[
 B_{\mathcal N_j}^\pm
 :=
 (r_j^\pm)^{-1}\!\left[
   \left(\bigotimes_{i=1}^{j}\mathcal S_i^\pm\right)
   \tensor\mathcal Y_{T_j}^\pm
 \right],
 \qquad
 \omega_j^\pm
 :=
 r_j^\pm\big|_{B_{\mathcal N_j}^\pm}.
 \tag{VS-chart}
\]
Then
\[
 \omega_j^\pm:B_{\mathcal N_j}^\pm
 \xrightarrow{\;\cong\;}
 \left(\bigotimes_{i=1}^{j}\mathcal S_i^\pm\right)
 \tensor\mathcal Y_{T_j}^\pm
\]
are unitary.  For $j=0$, the empty tensor is $\mathbb C$ and
$B_{\mathcal N_0}^\pm=\mathcal Y_{T_0}^\pm$.  Writing $v_T^\pm:=(c_T^\pm)^\dagger$, the corresponding
selected charts and source-parametrized graph supports are
\[
 \begin{aligned}
 \epsilon_{\mathcal N_j}^\pm
  &=\left(\left(\bigotimes_{i=1}^j
       \epsilon_{\mathcal R_i}^\pm\right)\tensor c_{T_j}^\pm\right)
       \omega_j^\pm,\\
 h_{\mathcal N_j}^\pm
  &=(r_j^\pm)^{-1}\left[
       \left(\bigotimes_{i=1}^j h_{\mathcal R_i}^\pm\right)
       \tensor v_{T_j}^\pm\right].
 \end{aligned}
 \tag{VS-sector}
\]
Inside a completed orthogonal block, the displayed factors form the
active part of \textup{(PC)}; the factor
$\mathcal T_{\Xi_\rho}^\pm$ and its through-map $Y_{\Xi_\rho}$ are
appended exactly as in Definition~\ref{def:whole-port-chart}.
This chart statement is purely combinatorial: it asserts an
equality of coordinate subspaces in the canonical spine chart, not
a tensor factorization of the flat envelope.
\end{lemma}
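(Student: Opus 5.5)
The plan is an induction on the spine length $j$. The statement has two parts. The unitarity of $\omega_j^\pm$ is essentially definitional. The identities (VS-sector) are the substantive part: they must be read off from the $\mathsf{AppCut}$ clause applied to a variable-headed function.

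\emph{Unitarity of $\omega_j^\pm$.} First we check that $r_j^\pm$ is a genuine basis permutation of the ambient envelope $\mathcal E^\pm_{\mathfrak J_j}$. Here $\mathfrak J_j$ is the judgment shape of $\mathcal N_j$. Its signed formula is $\mathsf{sgn}(T_0)^*\tensor\bigotimes_i\varphi_{\Gamma_i}\tensor\mathsf{sgn}(T_j)$. Unfolding $\mathsf{sgn}(T_0)=\mathsf{sgn}(A_1)^*\tensor\cdots\tensor\mathsf{sgn}(A_j)^*\tensor\mathsf{sgn}(T_j)$ shows that, after the fixed rig normal form, every monomial splits into an operand part $\varphi_{\Gamma_i}\tensor\mathsf{sgn}(A_i)$ for each $i$ and a residual part $\mathsf{sgn}(T_j)^*\tensor\mathsf{sgn}(T_j)$. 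The residual part contributes exactly the coordinate pairs indexing $\mathcal Y_{T_j}^\pm$ in (VG), once equal monomial addresses are matched. Hence $r_j^\pm$ is a bijection of monomial bases. It therefore restricts to a unitary from the preimage of any coordinate subspace onto that subspace. Since $\mathcal S_i^\pm\subseteq\mathcal E^\pm_{\Gamma_i\intjudge A_i}$ by induction, the tensor product lies in the permuted envelope. By definition, $B_{\mathcal N_j}^\pm$ is exactly its preimage, so $\omega_j^\pm$ is unitary. The case $j=0$ is the variable leaf, with $r_0^\pm=\id$ and sector $\mathcal Y_{T_0}^\pm$.

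\emph{Identification with the semantic clause.} Next we show that the selected sector of the $\mathsf{AppCut}$ conclusion equals this $B_{\mathcal N_j}^\pm$. We pass from $\mathcal N_{j-1}$ to $\mathcal N_j$ by the \textsc{$\lmark$-E} clause $\mathsf{AppCut}_{A_j}(\mathcal N_{j-1},\mathcal R_j)$. By the induction hypothesis, the producer sector is $(\bigotimes_{i<j}\mathcal S_i^\pm)\tensor\mathcal Y_{A_j\lmark T_j}^\pm$ up to $r_{j-1}^\pm$. Since $\mathsf{sgn}(A_j\lmark T_j)=\mathsf{sgn}(A_j)^*\tensor\mathsf{sgn}(T_j)$, the yank sector factors canonically as $\mathcal Y_{A_j}^{\mp}\tensor\mathcal Y_{T_j}^\pm$, with the $A_j$ legs in reversed polarity. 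The cut closes the displayed $A_j$ port by the snake identities (Y) against the complete boundary of $\mathcal R_j$. By the closing remark of Definition~\ref{def:whole-port-chart}, it retains the residual $\mathcal Y_{T_j}^\pm$. On the $A_j$ port, the yank is an identity wire, so the snake turns the closed $A_j$ loop into the bare boundary $\mathcal S_j^\pm$ of the operand. This yields the sector $(\bigotimes_{i\le j}\mathcal S_i^\pm)\tensor\mathcal Y_{T_j}^\pm$ placed by $r_j^\pm$.

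The chart and support formulas then follow. The operand charts $\epsilon_{\mathcal R_i}^\pm$ and supports $h_{\mathcal R_i}^\pm$ pass through the cut unchanged. The residual leaf contributes $c_{T_j}^\pm$ and $v_{T_j}^\pm=(c_{T_j}^\pm)^\dagger$, as in Definition~\ref{def:variable-graph-sector}. The identity $c^+\mathsf{yank}=c^-$ supplies compatibility. Composing these with $\omega_j^\pm$, respectively $(r_j^\pm)^{-1}$, gives (VS-sector). Inside a completed block, we simply tensor with $\mathcal T_{\Xi_\rho}^\pm$ and $Y_{\Xi_\rho}$, which changes nothing on the active factor.

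The main obstacle is the bookkeeping step in the middle: checking that the fixed rig normal form of $\mathsf{sgn}(A_j\lmark T_j)$ really factors as the tensor of the $A_j$ and $T_j$ monomial indices, and that sum indices do not obstruct this. When $A_j$ or $T_j$ contain $\plus$, distributivity interleaves the indices. I would handle this by working only with monomial address sets, using that $r_j^\pm$ is a bijection of those sets. This keeps the claim combinatorial, as the statement's final sentence stresses. Rig coherence (Theorem~\ref{thm:rig-coherence}) guarantees that the chosen reassociation is independent of presentation, except where letters repeat. There I would fix the canonical permutation $r_j^\pm$ by definition rather than appeal to coherence.
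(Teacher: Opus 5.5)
Your proposal is correct and follows the paper's proof: induction on $j$ from the equal-address variable sector, splitting the residual $\mathcal Y_{T_j}^\pm$ coordinates into the complete $A_{j+1}$-port coordinates and the equal-address $T_{j+1}$-coordinates, inserting the operand block $\mathcal S_{j+1}^\pm$, and concluding that $\omega_j^\pm$ is unitary because $r_j^\pm$ is a fixed basis permutation of the ambient rig boundary. Your explicit regrouping of $\varphi_{\mathfrak J_j}$ into the operand judgment formulas and $\mathsf{sgn}(T_j)^*\tensor\mathsf{sgn}(T_j)$, and your sketch of (VS-sector) via $c^+\mathsf{yank}=c^-$, spell out bookkeeping that the paper leaves to its short proof and to the proof of Corollary~\ref{cor:variable-spine-map}.
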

\begin{proof}
Induction on $j$.  For $j=0$, the displayed subspace is the
equal-address variable sector of
Definition~\ref{def:variable-graph-sector}.  For the step, write
$T_j=A_{j+1}\lmark T_{j+1}$.  Under the fixed polarity repartition,
the residual coordinates $\mathcal Y_{T_j}^\pm$ split into the
complete $A_{j+1}$-port coordinates and the equal-address
$T_{j+1}$-coordinates, while the operand derivation
$\mathcal R_{j+1}$ contributes the block
$\mathcal S_{j+1}^\pm$.  The permutation $r_{j+1}^\pm$ places these
blocks in the displayed order; since it is a fixed basis
permutation of the ambient rig boundary, its restriction
$\omega_{j+1}^\pm$ is unitary onto the displayed subspace.  In a
completed block, the same bookkeeping is tensored with
$\mathcal T_{\Xi_\rho}^\pm$.
\end{proof}

\begin{definition}[Classified eliminator instances]
\label{def:classified-eliminator}
A classified eliminator instance is an occurrence of the
derivation-directed port-cut construction of
Definition~\ref{def:classified-cut} while interpreting a Source normal
form, or one of the lower-$\Phi$ branch derivations generated by
$\mathsf{MapApply}$ or coherent sharing.  For a canonical consumer
\(\mathcal C:(\Gamma,p:P\intjudge N:Q)\) and canonical producer
\(\mathcal D:(\Delta\intjudge R:P)\), the selected \(P\)-port
determines a unique occurrence in the canonical derivation of
\(\mathcal C\).  The last rule exposing that occurrence determines
the cut clause.  When that clause delegates to a premise derivation,
classification continues on that premise.

We write \(\rho\) for the resulting cut instance and
\(\Phi(\rho)\) for the rank of its canonical conclusion derivation.
Its clause determines the spaces \(H_\rho^\pm\), \(\Xi_\rho\),
\(B_\rho^\pm\), the charts \(\chi_\rho^\pm\), and any
recursive cut instances used below.
\end{definition}

\begin{definition}[Classified derivation-indexed port cut]
\label{def:classified-cut}
Let \(\rho\) be the classified eliminator instance determined by
Definition~\ref{def:classified-eliminator}, with matched port
\(P\), canonical consumer
\(\mathcal C:(\Gamma,p:P\intjudge N:Q)\), and canonical producer
\(\mathcal D:(\Delta\intjudge R:P)\).  The construction assigns the
selected conclusion spaces and the map together:
\[
 \mathsf{PortCut}^{\rho}_{P}(\mathcal C,\mathcal D):
 B_{\rho}^{-}\longrightarrow B_{\rho}^{+}.
 \tag{Cut-ty}
\]
The superscript \(\rho\) is suppressed when the instance is clear.
An endpoint consumer uses \textup{(EndAct)}.  Otherwise, at a
first-order port, a non-block producer uses \textup{(FO-whole)} and a
completed or endpoint-transported producer family uses
\textup{(FO-family-cut)}.  The variable and endpoint equations below
record the corresponding specialized coordinates.  At a higher-order
port, if the distinguished consumer occurrence is the argument of a
variable-headed spine, \textup{(C-var-app)} takes priority; otherwise a
variable-headed producer consumed whole uses \textup{(C-var-whole)}.
A structural-headed producer uses \textup{(C-str)}.  Tensor elimination
first forms the packaged consumer \textup{(Ten-pack)} and then selects
one of these clauses.  A distributor on the result pair of an applied
map selects the coherent-sharing clause before \textup{(C-str)}.
Classification is repeated after every recursive call on the rebuilt
canonical premise.  The clauses below complete the construction.
\end{definition}

\paragraph{Consumer networks and grafts.}
Whenever the selected port of
Definition~\ref{def:classified-eliminator} is mentioned, it is a
distinguished axiom leaf.  A \(k\)-port consumer network is a finite
canonical typing-derivation tree
\[
 \mathcal C=(\Pi;a_1,\ldots,a_k):
 \Gamma,p_1{:}P_1,\ldots,p_k{:}P_k\intjudge N:Q,
\]
where the \(a_i\) are pairwise distinct \textsc{Var} leaves concluding
\(p_i{:}P_i\intjudge p_i:P_i\); \(k=1\) gives a one-port consumer.
All edge and root typings are inherited from \(\Pi\).  The formal
consumers below abbreviate such trees with their prescribed packages
as decorations.

For \(I\subseteq\{1,\ldots,k\}\), the partial graft
\[
 \mathcal C[\vec a_I:=\vec{\mathcal D}_I]
\]
replaces \(a_i\) by a canonical producer
\(\mathcal D_i:\Delta_i\intjudge R_i:P_i\), with disjoint fresh
contexts, and rebuilds the rule instances above it.  Its root typing is
\[
 \Gamma,(\vec p{:}\vec P)_{\bar I},
 \biguplus_{i\in I}\Delta_i
 \intjudge N[\vec R_I/\vec p_I]:Q.
\]
Its \emph{consumer skeleton} collapses each grafted producer package to
one leaf; \( |\mathcal C| \) is the number of rule nodes in this
skeleton.  If \(\mathcal A\) has root type \(S\) and \(\mathcal C\)
has marked leaf \(a:S\), define
\(\mathcal C\circ_S\mathcal A:=\mathcal C[a:=\mathcal A]\), retaining
the distinguished leaves of \(\mathcal A\).

Evaluation is bottom-up: an open axiom has its variable package, a
grafted leaf has \((B_{\mathcal D}^{\pm},U_{\mathcal D})\), and each
rebuilt rule applies its existing semantic clause.

For distinct leaves \(a,b\),
\[
 \mathcal C[a:=\mathcal D][b:=\mathcal E]
 =\mathcal C[b:=\mathcal E][a:=\mathcal D],\qquad
 \mathcal C\circ_S(\mathcal A[b:=\mathcal D])
 =(\mathcal C\circ_S\mathcal A)[b:=\mathcal D].       \tag{Graft}
\]
These are equal canonical trees after fixed context ordering and
least-fresh renaming.  Last-rule induction proves them: use the
induction hypothesis when both marks lie in one premise; otherwise
the grafts act in disjoint premise trees.  Bottom-up evaluation is
therefore independent of graft order.  At fixed $\Phi$, the cut
recursion is lexicographic in \((|\mathcal D|,|\mathcal C|)\).

\paragraph{Variable application.}
For a variable-headed neutral prefix
$E=x\,R_1\cdots R_j:A\lmark B$, let
$\mathcal A_E:(\Gamma_E,p:A\intjudge E\,p:B)$
be its canonical formal consumer.  Put
\[
 \mathsf{AppCut}^{\rho}_{A}(\mathcal E,\mathcal D)
 :=
 \mathsf{PortCut}^{\rho}_{A}(\mathcal A_E,\mathcal D),
\]
with active map
\[
 \widetilde U_\rho=
 \bigl(U_1\tensor\cdots\tensor U_j\tensor U_{\mathcal D}\bigr)
 \tensor\mathsf{yank}_B
 \tag{C-var-app}
\]
in the variable-spine chart.  The residual $B$-boundary, the
preceding operand factors, and all unmatched context factors are
retained.  For $j=0$ this is the typed snake equation.

\paragraph{First-order whole-port closure.}
Suppose that the matched port $P$ is first-order.  For the current
classified occurrence $\rho$, first separate the single inactive
factor $Y_{\Xi_\rho}$ identified by the classifier.  Write
$B_{\mathcal D,\rho}^{\pm,\mathrm{act}}$ and
$B_{\mathcal C,\rho}^{\pm,\mathrm{act}}$ for the remaining active
premise sectors.  Their occurrence charts expose the matched port:
\[
 \begin{aligned}
 e_{\mathcal D,\rho}^-:
 B_{\mathcal D,\rho}^{-,\mathrm{act}}
 &\xrightarrow{\;\cong\;}X_\rho,
 &e_{\mathcal D,\rho}^+:
 B_{\mathcal D,\rho}^{+,\mathrm{act}}
 &\xrightarrow{\;\cong\;}Y_\rho\tensor\mathsf D_P,\\
 e_{\mathcal C,\rho}^-:
 B_{\mathcal C,\rho}^{-,\mathrm{act}}
 &\xrightarrow{\;\cong\;}\mathsf D_P\tensor Z_\rho,
 &e_{\mathcal C,\rho}^+:
 B_{\mathcal C,\rho}^{+,\mathrm{act}}
 &\xrightarrow{\;\cong\;}W_\rho^{\mathrm{out}} .
 \end{aligned}
 \tag{FO-expose}
\]
Thus the active premise actions are the conjugates
\[
 \begin{aligned}
 V_{\mathcal D}^{\rho}
 &:=
 e_{\mathcal D,\rho}^+
 U_{\mathcal D}^{\mathrm{act}}
 (e_{\mathcal D,\rho}^-)^\dagger:
 X_\rho\xrightarrow{\;\cong\;}
 Y_\rho\tensor\mathsf D_P,\\
 V_{\mathcal C}^{\rho}
 &:=
 e_{\mathcal C,\rho}^+
 U_{\mathcal C}^{\mathrm{act}}
 (e_{\mathcal C,\rho}^-)^\dagger:
 \mathsf D_P\tensor Z_\rho
 \xrightarrow{\;\cong\;}W_\rho^{\mathrm{out}} .
 \end{aligned}
 \tag{FO-action}
\]
These charts are reconstructed locally from the canonical derivation;
they are not additional fields of the semantic package.

For a non-block producer, define
\[
 \widetilde U_\rho
 :=(I_{Y_\rho}\tensor V_{\mathcal C}^{\rho})\,
   \sigma_\rho\,
   (V_{\mathcal D}^{\rho}\tensor I_{Z_\rho})
 :
 X_\rho\tensor Z_\rho
 \xrightarrow{\;\cong\;}
 Y_\rho\tensor W_\rho^{\mathrm{out}} ,
 \tag{FO-whole}
\]
where
\[
 \sigma_\rho:
 (Y_\rho\tensor\mathsf D_P)\tensor Z_\rho
 \xrightarrow{\;\cong\;}
 Y_\rho\tensor(\mathsf D_P\tensor Z_\rho)
\]
is the fixed associator.  Thus
$H_\rho^-:=X_\rho\tensor Z_\rho$ and
$H_\rho^+:=Y_\rho\tensor W_\rho^{\mathrm{out}}$.

The conclusion-sector insertions for
\textup{(FO-whole)} are supplied by
Lemma~\ref{lem:typed-graph-composition} below.

For a completed or transported producer family, the active charts give
\[
 V_{\mathcal D,i}^{\rho}:X_{\rho,i}
   \xrightarrow{\;\cong\;}Y_{\rho,i}\tensor D_i,
 \qquad
 s_i:D_i\lhook\joinrel\longrightarrow\mathsf D_P,
 \tag{FO-family}
\]
where $V_{\mathcal D,i}^{\rho}$ is the active completed-branch action
after $\Xi_{\rho_i}$ has been separated.  It includes the branch
phase, but not $Y_{\Xi_{\rho_i}}$, which is inserted exactly once
below.  The port inclusions satisfy
\[
 s_i^\dagger s_j=\delta_{ij}I,
 \qquad
 \sum_i s_i s_i^\dagger=I_{\mathsf D_P}.
 \tag{FO-part}
\]
The $s_i$ are the completed inclusions, or their images under an
endpoint acting above the completed producer.

If a pair producer has family premises, regard a non-family premise
as a singleton family and use the product family.  After the fixed
tensor reordering,
\[
 D_{(i,j)}:=D_i\tensor D_j,\qquad
 s_{(i,j)}:=s_i\tensor s_j,\qquad
 V_{\mathcal D,(i,j)}^\rho
 :=\pi_\tensor
   (V_{\mathcal D_1,i}^\rho\tensor
    V_{\mathcal D_2,j}^\rho).
 \tag{FO-pair-family}
\]
Its inclusions are again orthogonal and jointly exhaustive, since
\[
 \sum_{i,j}(s_i\tensor s_j)(s_i\tensor s_j)^\dagger
 =
 \Bigl(\sum_i s_i s_i^\dagger\Bigr)\tensor
 \Bigl(\sum_j s_j s_j^\dagger\Bigr)=I .
\]

Put
\[
 \begin{aligned}
 k_i&:=V_{\mathcal C}^{\rho}(s_i\tensor I_{Z_\rho}),
 &W_i&:=\operatorname{im}(k_i),\\
 \widehat k_i&:D_i\tensor Z_\rho\xrightarrow{\;\cong\;}W_i,
 &\widetilde U_{\rho,i}
   &:=(I_{Y_{\rho,i}}\tensor\widehat k_i)\,
      \sigma_{\rho,i}\,
      (V_{\mathcal D,i}^{\rho}\tensor I_{Z_\rho}).
 \end{aligned}
 \tag{FO-family-cut}
\]
Set
\[
 H_{\rho_i}^-:=X_{\rho,i}\tensor Z_\rho,
 \qquad
 H_{\rho_i}^+:=Y_{\rho,i}\tensor W_i .
\]
The component instance of \textup{(PC-yank)} supplies
$j_{\rho,i}^\pm$ and
$B_{\rho,i}^\pm=\operatorname{im}(j_{\rho,i}^\pm)$.  Define
\[
 U_{\rho,i}
 :=
 j_{\rho,i}^+
 \bigl(\widetilde U_{\rho,i}\tensor
       Y_{\Xi_{\rho_i}}\bigr)
 (j_{\rho,i}^-)^\dagger
 :
 B_{\rho,i}^-\xrightarrow{\;\cong\;}B_{\rho,i}^+ .
 \tag{FO-family-component}
\]
With the tagged inclusions $\jmath_i^\pm$ of \textup{(PC-block)},
define the parent action
\[
 \mathsf{PortCut}_{P}^{\rho}(\mathcal C,\mathcal D)
 :=
 \sum_i\jmath_i^+U_{\rho,i}(\jmath_i^-)^\dagger
 :
 B_\rho^-\xrightarrow{\;\cong\;}B_\rho^+ .
 \tag{FO-family-close}
\]
No additional scalar occurs in this sum: the phase of component $i$
is already part of $V_{\mathcal D,i}^{\rho}$.  In the non-block case,
\textup{(PortCut)} inserts $Y_{\Xi_\rho}$ once.

\paragraph{Typed eta unit and variable-headed producer.}
Let the producer be $x\,R_1\cdots R_j:S$.  After separating the
inactive completion $\Xi_\rho$, write
$(C_\rho^-,C_\rho^+,U_{\mathcal C,\rho})$ for the active selected
consumer package and put
\[
 \begin{aligned}
 H_\rho^\epsilon
   &:=
   \left(\bigotimes_{i=1}^{j}B_{\mathcal R_i}^\epsilon\right)
   \tensor C_\rho^\epsilon,\\
 \widetilde U_\rho
   &:=
   \left(\bigotimes_{i=1}^{j}U_{\mathcal R_i}\right)
   \tensor U_{\mathcal C,\rho}.
 \end{aligned}
 \tag{C-var-whole}
\]

For every Source type $S$ and fresh $z:S$, put
$i_S:=\mathsf{NF}_{\mathrm{LO}}(\eta_S(z))$.  First construct its
identity chart by induction on Source types, treating every
first-order $P$ as a leaf.  At the current $\Phi$-stratum, define its
graft at the distinguished leaf of the canonical consumer separately
by lexicographic recursion on
\[
  (|\mathcal C|,\;\text{Source height of }S).
  \tag{Eta-order}
\]
The next rule on the marked path selects one of the
following defining rows, read from top to bottom:
\[
\begin{array}{c@{\quad}l}
 \text{root \textsc{Var}}
   & \textup{(VS-sector) followed by the typed yank},\\[0.3ex]
 P\ \text{(non-root)}
   & \textup{(PC-yank) for (FO-whole) with }\mathsf{Graph}_P(I),\\[0.3ex]
 S_1\tensor S_2\ \text{exposed by $\tensor$-E}
   & \text{the two subtype grafts, then \textup{(Ten-pack)}},\\[0.3ex]
 S_1\lmark S_2\ \text{used as the function head}
   & \text{the constructed }S_1\text{ identity chart and}\\[-0.1ex]
   & \text{the recursive }S_2\text{ graft through \textup{(C-var-app)}},\\[0.3ex]
 \text{ordinary non-cut }\kappa
   & \text{the fixed }\kappa\text{-insertion after its marked premise},\\[0.3ex]
 \text{cut }\tau\text{ at a distinct port}
   & \textup{(Eta-foreign)},\\[0.3ex]
 \text{shared block}
   & \text{the tagged direct sum of its component grafts}.
\end{array}
\tag{Eta-graft}
\]
The named constructor clauses include their fixed ambient insertions,
so these rows determine the graft uniquely.  For a classified cut
$\tau$ at a port occurrence distinct from the $S$-mark, write
\[
 h_\tau^\epsilon
 =
 \mathsf{Supp}_\tau^\epsilon
   (h_1^\epsilon,\ldots,h_m^\epsilon)
\]
for the source-support expression in its selected defining clause.
This abbreviates that clause's graph link, variable-spine insertion,
structural transport, tensor repartition, or tagged block sum; it does
not invoke $\mathsf{PortCut}$ again.  If the mark occurs in premise
$r$, let $W_r$ be the coordinate action of its recursively rebuilt
chart and put
\[
 \widehat h_r^-:=\mathsf{ug}_{\mathcal A_r,S}^-,
 \qquad
 \widehat h_r^+:=\mathsf{ug}_{\mathcal A_r,S}^+W_r .
\]
Keeping the other source supports and the classifier data fixed, the
same selected clause gives a coordinate action $W_{\tau,S}$ and
source supports
\[
 h_{\tau,S}^\epsilon
 :=
 \mathsf{Supp}_\tau^\epsilon
 \bigl(h_1^\epsilon,\ldots,\widehat h_r^\epsilon,
       \ldots,h_m^\epsilon\bigr).
\]
Define
\[
 \mathsf{ug}_{\tau,S}^-:=h_{\tau,S}^-,
 \qquad
 \mathsf{ug}_{\tau,S}^+
 :=h_{\tau,S}^+W_{\tau,S}^\dagger .
 \tag{Eta-foreign}
\]
The marked-premise consumer skeleton is strictly smaller.  The
selected sectors after rebuilding are the images of these insertions.
For example, when \(\tau\) is a first-order graph link and the marked
premise is the consumer, its two source supports are
\[
 \mathsf{ug}_{\mathcal C,S}^-\star_P h_{\mathcal D}^- ,
 \qquad
 (\mathsf{ug}_{\mathcal C,S}^+W_r)\star_P h_{\mathcal D}^+ .
\]
Equation~\textup{(Graph-link-functor)} and the recursively constructed
chart for that premise give \textup{(Eta-foreign)}.

The variable-spine chart supplies the operand insertions and its
residual $\mathcal Y_S$ factor.  The preceding recursion contracts
that factor with the typed eta unit, applies the fixed conclusion
repartition, and adjoins the placed graph of $Y_{\Xi_\rho}$ exactly
once.  This defines
\[
 \mathsf{ug}_{\rho,S}^\epsilon:
 K_\rho^\epsilon
 \lhook\joinrel\longrightarrow
 \mathcal E_{\mathfrak J_\rho}^\epsilon,
 \qquad
 j_\rho^\epsilon:=\mathsf{ug}_{\rho,S}^\epsilon,
 \qquad \epsilon\in\{-,+\}.
 \tag{C-var-place}
\]
Lemma~\ref{lem:semantic-eta-identity} proves that these maps are
isometric.  They are the insertions used by \textup{(PC-def)} and
\textup{(PortCut)}.

\paragraph{Structural and quantum endpoints.}
Selected sectors are general subspaces of the flat judgment envelope.

For a structural atom $s:T\cong S$ and context $\Delta$, let
\[
 \overline s_{\Delta}^{\pm}:
 \mathcal E_{\Delta\intjudge T}^{\pm}
 \xrightarrow{\;\cong\;}
 \mathcal E_{\Delta\intjudge S}^{\pm}
\]
be the polarity-sorted boundary permutations induced by
$\id_{\mathsf{sgn}(\Delta)^*}\tensor\mathsf{Rig}(s)$ and the fixed
rig ordering.

For a quantum atom
$a=\expi{\theta}{J}:P\lmark P$, where $P$ is first-order, put
\[
 V_a:=e^{i\theta J}=\cos\theta\,I+i\sin\theta\,J,
 \qquad
 \overline a_{\Delta}^{-}:=I,
 \qquad
 \overline a_{\Delta}^{+}:=I\tensor V_a.
\]
The certified-involution judgment gives $J^\dagger=J$ and $J^2=I$,
so $V_a$ is unitary.  These maps are read through the same canonical
boundary ordering.  For an atom
$h:P\lmark Q$, let $\mathcal V_h:(p:P\intjudge h\,p:Q)$ be its
formal-port consumer.

For either atom $h:P\lmark Q$ and any producer
$\mathcal D:(\Delta\intjudge R:P)$, define
\[
 B_{h\mathcal D}^{\pm}
 :=
 \overline h_{\Delta}^{\pm}(B_{\mathcal D}^{\pm}),
 \qquad
 e_{h,\mathcal D}^{\pm}
 :=
 \overline h_{\Delta}^{\pm}
 \!\upharpoonright_{B_{\mathcal D}^{\pm}},
\]
and
\[
 \mathsf{EndAct}_{h}(\mathcal D)
 :=
 e_{h,\mathcal D}^{+}\,
 U_{\mathcal D}\,
 (e_{h,\mathcal D}^{-})^{\dagger}
 :
 B_{h\mathcal D}^{-}
 \longrightarrow
 B_{h\mathcal D}^{+}.
 \tag{EndAct}
\]
This is unitary because $e_{h,\mathcal D}^{\pm}$ and
$U_{\mathcal D}$ are unitary between their displayed spaces.

For a variable producer, typed yanking gives
\[
 \mathsf{EndAct}_{s}(\mathcal D_x)
 =
 \mathsf{Rig}_{\mathfrak S}(s)
 \tag{C-str-app}
\]
and, for a variable producer at a first-order quantum endpoint,
\[
 B_{a\mathcal D_x}^-=\mathcal Y_P^-,
 \qquad
 B_{a\mathcal D_x}^+=(I\tensor V_a)\mathcal Y_P^+,
 \qquad
 \mathsf{EndAct}_{a}(\mathcal D_x)
 =\mathsf{Graph}_P(V_a).
 \tag{C-q-app}
\]
An atom-headed application $h\,R$ has
$\mathcal A_h=\mathcal V_h$, so its application cut is
$\mathsf{EndAct}_h(\mathcal R)$.

An endpoint consumer is terminal and is evaluated by
\textup{(EndAct)}.  If its source and target are first-order, write
\[
 V_h:=
 \begin{cases}
  \mathsf{Rig}(s),&h=s,\\
  e^{i\theta J},&h=\expi{\theta}{J}.
 \end{cases}
\]
For a non-block producer, the root coordinate in
\textup{(FO-expose)} is
\[
 V_{h\mathcal D}^{\rho}
 =(I_Y\tensor V_h)V_{\mathcal D}^{\rho}.
 \tag{C-q}
\]
For a producer family, retain its component maps and transport only
the port inclusions:
\[
 s_i^h:=V_hs_i .
 \tag{Family-transport}
\]
Their ranges remain orthogonal and jointly exhaustive.  The resulting
non-block or family coordinates are consumed by
\textup{(FO-whole)} or \textup{(FO-family-cut)}.

It remains to define structural transport at a higher-order port.
Let $s:T\cong S$, let
$\mathcal C:(\Gamma,p:S\intjudge N:Q)$, and let
$\mathcal C\triangleleft s$ be its formal consumer with distinguished
input $z:T$.  Write
\[
 \overline t_{s,\mathcal C}^{\epsilon}:
 \mathcal E_{\mathcal C\triangleleft s}^{\epsilon}
 \xrightarrow{\;\cong\;}
 \mathcal E_{\mathcal C}^{\epsilon}
\]
for the ambient boundary permutation induced by $s$ at that occurrence.
Define
\[
 \begin{aligned}
 B_{\mathcal C\triangleleft s}^{\epsilon}
   &:=(\overline t_{s,\mathcal C}^{\epsilon})^{-1}
      (B_{\mathcal C}^{\epsilon}),\\
 t_{s,\mathcal C}^{\epsilon}
   &:=\overline t_{s,\mathcal C}^{\epsilon}
      \!\upharpoonright_{B_{\mathcal C\triangleleft s}^{\epsilon}},\\
 U_{\mathcal C\triangleleft s}
   &:=(t_{s,\mathcal C}^{+})^{-1}
      U_{\mathcal C}t_{s,\mathcal C}^{-}.
 \end{aligned}
 \tag{Str-pre}
\]
If $g_{\mathcal C}^\epsilon$, $\epsilon_{\mathcal C}^\epsilon$, and
$V_{\mathcal C}$ are its selected chart, set
\[
 g_{\mathcal C\triangleleft s}^\epsilon
 :=(\overline t_{s,\mathcal C}^{\epsilon})^{-1}
   g_{\mathcal C}^\epsilon,
 \qquad
 \epsilon_{\mathcal C\triangleleft s}^\epsilon
 :=\epsilon_{\mathcal C}^\epsilon t_{s,\mathcal C}^\epsilon .
 \tag{Str-chart}
\]
The coordinate action is unchanged, and
$g_{\mathcal C\triangleleft s}^+
 V_{\mathcal C}
 =U_{\mathcal C\triangleleft s}
  g_{\mathcal C\triangleleft s}^-$.
The formal consumer $\mathcal C\triangleleft s$ has the same consumer
tree, distinguished path, and last rule exposing that path as
$\mathcal C$, with the distinguished boundary retyped from $S$ to $T$.
Only its boundary package is transported by \textup{(Str-pre)}.
Thus $t_{s,\mathcal C}^{\epsilon}$ is a unitary from the transported
sector onto $B_{\mathcal C}^{\epsilon}$, and
\[
 \mathsf{PortCut}_{S}^{\rho}
   (\mathcal C,\mathcal D_{s\,R})
 :=
 \mathsf{PortCut}_{T}^{\rho'}
   (\mathcal C\triangleleft s,\mathcal R).
 \tag{C-str}
\]
The recursive producer $\mathcal R$ is proper.  Equation
\textup{(C-q)} is the corresponding first-order endpoint coordinate;
its cut is terminal through first-order whole-port closure.

\paragraph{Tensor elimination.}
For $\mathcal N:(\Gamma,x:A,y:B\intjudge N:C)$, write
\[
 U_{\mathcal N}:B_{\mathcal N}^-\longrightarrow B_{\mathcal N}^+
\]
for its already constructed proper-premise map, and form the
canonical packaged consumer
\[
 \mathcal B_N:
 \Gamma,p:A\tensor B
 \intjudge
 \letpair{x}{y}{p}{N}:C.
\]
Let
\[
 \overline\vartheta_N^\epsilon:
 \mathcal E_{\Gamma,p:A\tensor B\intjudge C}^{\epsilon}
 \xrightarrow{\;\cong\;}
 \mathcal E_{\Gamma,x:A,y:B\intjudge C}^{\epsilon},
 \qquad \epsilon\in\{-,+\},
\]
be the canonical boundary repartition which replaces the complete
$p:A\tensor B$ context port by the ordered ports $x:A,y:B$, in the
fixed rig ordering.  Put
\[
\begin{aligned}
 B_{\mathcal B_N}^{\epsilon}
 &:=
 (\overline\vartheta_N^\epsilon)^{-1}
 (B_{\mathcal N}^{\epsilon}),\\
 \vartheta_N^\epsilon
 &:=
 \overline\vartheta_N^\epsilon
 \!\upharpoonright_{B_{\mathcal B_N}^{\epsilon}}
 :
 B_{\mathcal B_N}^{\epsilon}
 \xrightarrow{\;\cong\;}
 B_{\mathcal N}^{\epsilon},
\end{aligned}
\]
and define
\[
 U_{\mathcal B_N}
 :=
 (\vartheta_N^+)^{-1}
 U_{\mathcal N}\vartheta_N^-
 :
 B_{\mathcal B_N}^-
 \longrightarrow
 B_{\mathcal B_N}^+ .
 \tag{Ten-pack}
\]
Finally,
\[
 \mathsf{TenCut}_{A,B}^{\rho}(\mathcal N,\mathcal R)
 :=
 \mathsf{PortCut}_{A\tensor B}^{\rho}
   (\mathcal B_N,\mathcal R).
 \tag{TenCut}
\]
The construction follows the two bound leaves jointly.  At the
leaf consumer $N=x\tensor y$, whole-$A\tensor B$ yanking returns
$U_{\mathcal R}$ up to the canonical tensor repartition.  In a Source
normal form a visible pair at a destructuring port would be a tensor
beta-redex.  At every other occurrence its package, formed by the
$\tensor$-I clause, is consumed whole---by \textup{(EndAct)}, by a
first-order whole-port cut, or as an operand factor in a variable
clause.

\paragraph{Closing.}
For every instance whose clause supplies a single active map
$\widetilde U_\rho:H_\rho^-\to H_\rho^+$ and insertions
$j_\rho^\pm$---including a \textup{(C-var-whole)} instance whose
insertions are assembled by the shared-block row of
\textup{(Eta-graft)}---the conclusion map is
\[
 \mathsf{PortCut}^{\rho}_{P}(\mathcal C,\mathcal D)
 :=
 (\chi_\rho^+)^{-1}
 \bigl(\widetilde U_\rho\tensor Y_{\Xi_\rho}\bigr)
 \chi_\rho^- .
 \tag{PortCut}
\]
For $\Xi_\rho=\varnothing$, $Y_{\Xi_\rho}=\id_{\mathbb C}$, so
\textup{(PortCut)} is the typed snake equation \textup{(Y)}.
Endpoint-terminal instances return \textup{(EndAct)} directly,
with their constructed sectors.  A non-block first-order instance uses
\textup{(FO-whole)} followed by \textup{(PortCut)}; a producer family
uses \textup{(FO-family-cut)}--\textup{(FO-family-close)}.  Tensor and
shared-block descent for a variable-headed producer are internal rows
of the unit graft \textup{(Eta-graft)} used by
\textup{(C-var-whole)}; they do not create further classified cuts.

For a variable-headed spine, \textup{(C-var-app)} is applied from
the ultimate head outward;
Lemma~\ref{lem:variable-spine-chart} gives the accumulated operand
maps and the unmatched residual yank at every prefix, as displayed
in \textup{(VS-map)} (Corollary~\ref{cor:variable-spine-map},
proved below).  This is an equality in the canonical spine
coordinates.

The declared atom types (\S\ref{subsubsec:terms}) make the clauses
exhaustive.  A primitive structural atom returns a type with
outermost $\plus$ or $\tensor$, and a quantum atom returns a
first-order type; hence an atom-headed result neutral has the form
$h\,R$.  A nested application such as $a\,(b\,x)$ is constructed
inside-out: first form the package for $b\,x$ by \textup{(EndAct)}, rebuild
the outer application, and record its producer coordinate by
\textup{(C-q)} before applying \textup{(FO-whole)} or
\textup{(FO-family-cut)} at a surrounding first-order cut.  The
distributor-on-result-pair
occurrence of an applied map is defined separately by
Definition~\ref{def:coherent-sharing-clause}.

\begin{corollary}[Variable-spine map]
\label{cor:variable-spine-map}
In the notation of Lemma~\ref{lem:variable-spine-chart},
\[
 \omega_j^+\,
 U_{\mathcal N_j}\,
 (\omega_j^-)^{-1}
 =
 \left(\bigotimes_{i=1}^{j}U_i\right)
 \tensor\mathsf{yank}_{T_j}.
 \tag{VS-map}
\]
Inside a completed block, the right-hand side is further tensored
with $Y_{\Xi_\rho}$ as in \textup{(PortCut)}.
\end{corollary}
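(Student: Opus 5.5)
We prove (VS-map) by induction on $j$, mirroring the induction used for the chart statement of Lemma~\ref{lem:variable-spine-chart}.

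\emph{Base case.} For $j=0$ the derivation $\mathcal N_0$ is the \textsc{Var} leaf $x:T_0\intjudge x:T_0$. Its denotation is $\mathsf{yank}_{T_0}$ by Table~\ref{tab:sem-compositional}, and $\omega_0^\pm$ is the identity of $\mathcal Y_{T_0}^\pm$. The empty tensor of operand maps is $\id_{\mathbb C}$, so (VS-map) holds literally.

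\emph{Inductive step.} The derivation $\mathcal N_{j+1}$ ends in \textsc{$\lmark$-E} with head premise $\mathcal N_j$ (a variable-headed neutral prefix $E=x\,R_1\cdots R_j:A_{j+1}\lmark T_{j+1}$) and argument premise $\mathcal R_{j+1}$. By the classifier of Definition~\ref{def:classified-cut}, the distinguished occurrence is the argument of a variable-headed spine, so \textup{(C-var-app)} has priority. Hence
\[
U_{\mathcal N_{j+1}}=\mathsf{PortCut}^{\rho}_{A_{j+1}}(\mathcal A_E,\mathcal R_{j+1}),
\]
with active map $\widetilde U_\rho=(U_1\tensor\cdots\tensor U_{j+1})\tensor\mathsf{yank}_{T_{j+1}}$ and conclusion map given by \textup{(PortCut)}, namely $(\chi_\rho^+)^{-1}(\widetilde U_\rho\tensor Y_{\Xi_\rho})\chi_\rho^-$. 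Outside a completed block we have $\Xi_\rho=\varnothing$ and $Y_{\Xi_\rho}=\id_{\mathbb C}$. The whole claim therefore reduces to one identification:
\[
\chi_\rho^\pm=\omega_{j+1}^\pm\quad\text{on }B_{\mathcal N_{j+1}}^\pm .
\]
Equivalently, the insertions $j_\rho^\pm$ supplied for this instance are exactly $(r_{j+1}^\pm)^{-1}$ restricted to $\bigl(\bigotimes_{i\le j+1}\mathcal S_i^\pm\bigr)\tensor\mathcal Y_{T_{j+1}}^\pm$.

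To establish this identification:
\begin{enumerate}
\item Apply the inductive hypothesis to $\mathcal N_j$, written in the chart $\omega_j^\pm$.
\item Use the splitting from the proof of Lemma~\ref{lem:variable-spine-chart}. Under the fixed polarity repartition, $\mathcal Y_{T_j}^\pm$ decomposes into complete $A_{j+1}$-port coordinates tensored with $\mathcal Y_{T_{j+1}}^\pm$. Correspondingly, $\mathsf{yank}_{T_j}$ factors as $\mathsf{yank}_{A_{j+1}}^{*}\tensor\mathsf{yank}_{T_{j+1}}$, since both are direct sums of canonical symmetries over the same rig-normal-form monomials.
\item Close the $A_{j+1}$-port of $\mathcal A_E$ against $\mathcal R_{j+1}$ using the typed snake equation (Y) on that single port. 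This replaces the $A_{j+1}$ identity wiring by $U_{j+1}$ on $\mathcal S_{j+1}^\pm$.
\item Check that the resulting coordinate placement agrees with $r_{j+1}^\pm$. Both are fixed basis permutations of the ambient rig boundary putting the same blocks in the same order. By rig coherence (Theorem~\ref{thm:rig-coherence}), applied to the regular signed formula before instantiating repeated letters, they coincide.
\end{enumerate}

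\emph{Completed blocks.} Inside a completed block, $\Xi_\rho$ is the inactive context. Its factor $\mathcal T_{\Xi_\rho}^\pm$ is appended by both $\chi_\rho^\pm$ and the chart of Lemma~\ref{lem:variable-spine-chart}, so \textup{(PortCut)} tensors the right-hand side of (VS-map) with $Y_{\Xi_\rho}$ exactly once. This gives the stated addendum.

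\emph{Main obstacle.} The delicate step is item 4: showing that the insertion constructed by \textup{(C-var-app)} and \textup{(PC-yank)} is the same permutation $r_{j+1}^\pm$ rather than merely an isomorphic one. I would first try to prove it by a direct index calculation on monomial tags. Here the equal-address property of $\mathcal Y$ and the fact that the snake only relabels the matched $A_{j+1}$ coordinates should make both placements literally equal. If repeated summand letters create ambiguity, I would fall back on the regularization remark after Theorem~\ref{thm:rig-coherence}: compute with distinct letters first, then instantiate.
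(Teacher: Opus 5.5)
Your induction, and your closure of the $A_{j+1}$-port by the typed snake leaving the residual $\mathsf{yank}_{T_{j+1}}$, follow the paper's proof. Your case analysis, however, has a hole.

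You assert that the classifier of Definition~\ref{def:classified-cut} always gives \textup{(C-var-app)} priority at a spine argument. That priority is stated only at a \emph{higher-order} port. When $A_{j+1}$ is first-order (a common case, e.g.\ a head $f:\QBool\lmark\QBool$ applied to a payload as in $\qSwitch$), the classifier selects a different clause:
\begin{itemize}
\item a non-block operand $R_{j+1}$ is closed by \textup{(FO-whole)}, with sector insertions supplied by Lemma~\ref{lem:typed-graph-composition} via \textup{(Graph-link)};
\item an operand that is a completed or transported family (e.g.\ an applied $\plus$-map) is closed by \textup{(FO-family-cut)}--\textup{(FO-family-close)}.
\end{itemize}
In those instances $U_{\mathcal N_{j+1}}$ is not, by definition, $(\chi_\rho^+)^{-1}\bigl((\bigotimes_{i\le j+1}U_i)\tensor\mathsf{yank}_{T_{j+1}}\bigr)\chi_\rho^-$. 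So your reduction of the whole claim to the single identification $\chi_\rho^\pm=\omega_{j+1}^\pm$ does not reach them. The paper's proof handles exactly this in its last sentence: the first-order instance is \textup{(FO-whole)} in the same coordinates.

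You need the corresponding check. In the spine chart, the formal consumer $\mathcal A_E$ acts on its exposed $\mathsf D_{A_{j+1}}$ coordinate by the identity wiring of its variable operand $p$. The contraction \textup{(Graph-product)} therefore just inserts the chart action of $U_{j+1}$ at that port. This reproduces the right-hand side of \textup{(VS-map)}, and the linked supports are those of \textup{(VS-sector)}. For a family, \textup{(FO-part)} and \textup{(FO-family-close)} reassemble the components into $U_{j+1}$ on the block decomposition of $\mathcal S_{j+1}^\pm$.

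Your step~4 is also where you depart from the paper, and it is the weaker part of your argument. The paper does not invoke Laplaza coherence: $r_{j+1}^\pm$ is the fixed conclusion placement of the spine clause, so \textup{(PC-yank)} together with the typed snake \textup{(Y)} yields \textup{(VS-sector)} directly. Appealing to Theorem~\ref{thm:rig-coherence} is delicate here for two reasons:
\begin{itemize}
\item every port of the signed boundary is a copy of $b^\pm$, and distributivity duplicates factors; this is precisely the irregular situation in which, as the paper remarks, distinct canonical maps can share endpoints;
\item the snake composite is not itself a coherence map.
\end{itemize}
If you keep that route, you must first discharge \textup{(Y)} and then show that both placements are instances of one canonical map between the same regularized expressions. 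The direct index calculation on monomial tags, which you list as your first attempt, is the cleaner option.
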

\begin{proof}
Induction on $j$.  The case $j=0$ is the variable axiom.
For the step, apply \textup{(C-var-app)} to the preceding prefix
and $U_{j+1}$.  The typed snake equation closes the displayed
$A_{j+1}$-port and leaves the residual
$\mathsf{yank}_{T_{j+1}}$ factor.  This gives the displayed
equation.  At each step, the $P$-identified product with the variable
graph is the typed snake \textup{(Y)}, so \textup{(PC-yank)} gives
\textup{(VS-sector)} as well as the displayed map.  When
$A_{j+1}$ is first-order, this is the instance of \textup{(FO-whole)}
in the same coordinates.
\end{proof}

\begin{definition}[Source consumer for an applied $\plus$-map]
\label{def:map-source-consumer}
Let
\[
 \mathcal F:
 \Gamma_1,\Gamma_2\intjudge
 F=\oplusmap{\alpha}{R_1}{\beta}{R_2}:
 (A_1\plus A_2)\lmark(C_1\plus C_2),
\]
with $C_1,C_2$ first-order and $A_1,A_2$ unrestricted.  In an
instance generated from a Source expansion, each $A_i$ is a Source
type; ``unrestricted'' means that it need not be first-order.
Choose the
fixed least-fresh $z_i:A_i$ and put
\[
 q_i:=R_i\,z_i,
 \qquad
 M_i:=\mathsf{NF}_{\mathrm{LO}}(q_i),
 \qquad
 \mathcal M_i:
 (\Gamma_i,z_i:A_i\intjudge M_i:C_i),
 \tag{Use-branch}
\]
and write $U_i:=U_{\mathcal M_i}$.  Complete each alternative
through the opposite context:
\[
\begin{aligned}
 \widehat B_1^\pm
   &:=B_{\mathcal M_1}^\pm
      \tensor\mathcal T_{\Gamma_2}^\pm,
&
 \widehat U_1
   &:=U_1\tensor Y_{\Gamma_2},
\\
 \widehat B_2^\pm
   &:=\mathcal T_{\Gamma_1}^\pm
      \tensor B_{\mathcal M_2}^\pm,
&
 \widehat U_2
   &:=Y_{\Gamma_1}\tensor U_2 .
\end{aligned}
\tag{Use-block}
\]
The canonical source consumer is the formal consumer
\[
 \mathcal U_F:
 \Gamma_1,\Gamma_2,p:A_1\plus A_2
 \intjudge F\,p:C_1\plus C_2 ,
\]
with selected spaces
$B_{\mathcal U_F}^\pm=\widehat B_1^\pm\oplus\widehat B_2^\pm$,
canonical branch inclusions
$j_i^\pm:\widehat B_i^\pm\to B_{\mathcal U_F}^\pm$ satisfying
\[
 (j_i^\pm)^\dagger j_k^\pm=\delta_{ik}I,
 \qquad
 \sum_i j_i^\pm(j_i^\pm)^\dagger
 =I_{B_{\mathcal U_F}^\pm},
 \tag{Use-part}
\]
The maps $j_i^\pm$ are the standard direct-sum injections of the
two displayed \textup{(Use-block)} spaces.  The prescribed active
map is
\[
 U_{\mathcal U_F}
 :=
 j_1^+\,\alpha\widehat U_1(j_1^-)^\dagger
 +
 j_2^+\,\beta\widehat U_2(j_2^-)^\dagger .
 \tag{Use-map}
\]
For $\mathcal E:(\Delta\intjudge E:A_1\plus A_2)$, define
\[
 \mathsf{MapApply}_{A_1\plus A_2}^{\rho}
   (\mathcal F,\mathcal E)
 :=
 \mathsf{PortCut}_{A_1\plus A_2}^{\rho}
   (\mathcal U_F,\mathcal E).
 \tag{MapApply}
\]
A bare map retains its standalone denotation.
\end{definition}

The generated branch derivations satisfy
\[
 \Phi(M_i)
 \leq\Phi(q_i)
 =\Phi(R_i)
 <\Phi(F\,E),
 \tag{Use-$\Phi$}
\]
by Theorem~\ref{thm:lo-normalizer-total}: application to the fresh
variable adds no $\plus$-map former, and the applied former of
$F\,E$ is consumed.  Thus every $\mathcal M_i$ lies strictly below
the applied map in $\Phi$, and its denotation is supplied by the
outer induction of Theorem~\ref{thm:nf-boundary-unitarity}, never
by the clause being defined.  For a neutral function-typed branch operand, its eta-exposed
representative (Proposition~\ref{prop:focused-representative}) may
be used in the local verification of the fresh application
$R_i\,z_i$.
Lemma~\ref{lem:classified-cut-eta} intertwines the two canonical
application denotations.  The recursive semantic object remains
$\mathcal M_i$, obtained from
$\mathsf{NF}_{\mathrm{LO}}(R_i\,z_i)$.

\begin{lemma}[Source cut reachability and decrease]
\label{lem:classifier-totality}
Every canonical eliminator used while interpreting a Source normal
form, or a lower-$\Phi$ branch generated by $\mathsf{MapApply}$ or
coherent sharing, selects a unique clause of
Definition~\ref{def:classified-cut}.  At a higher-order port, the
Source-reachable direct $\mathsf{PortCut}$ clauses are
\textup{(C-var-app)} and \textup{(C-var-whole)};
\textup{(C-str)} is retained for the Raw structural traversal
internal to coherent sharing.  Every recursive call is smaller in the order of
Lemma~\ref{lem:whole-port-collapse}.
\end{lemma}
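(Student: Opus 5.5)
The argument splits into three parts: clause selection, Source reachability at higher-order ports, and decrease. For uniqueness of the clause, I would follow Definition~\ref{def:classified-eliminator}. The selected port is a distinguished axiom leaf of the canonical consumer derivation. Canonical derivations are functions of the normal term (Lemma~\ref{lem:canonical-normal-retyping}), so the occurrence and the last rule exposing it are determined. Definition~\ref{def:classified-cut} then fixes a priority order: (EndAct) for endpoint consumers; at first-order ports, (FO-whole) versus (FO-family-cut) according to whether the producer is non-block or a family; at higher-order ports, (C-var-app) before (C-var-whole) before (C-str); and (Ten-pack) before each of these. Exhaustiveness comes from Lemma~\ref{lem:nf-iff-irreducible}. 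By that lemma, a producer of the matched type is a variable-headed neutral, an atom-headed application $h\,R$ (by the declared atom types), a pair, an abstraction, a map value, or an applied map. Pairs at a destructuring port would be $(\beta_\tensor)$-redexes, and abstractions in function position would be $(\beta_\lmark)$-redexes, so (NF1) excludes both. Sum formers cannot occur by Corollary~\ref{cor:source-normalization}. Map producers have first-order targets, so they fall under the first-order clauses or under (Use-map).

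For the higher-order claim, the point is that at a non-first-order port a Source normal form has only variable-headed producers and consumers, since structural atoms at higher-order instances are excluded. Lemma~\ref{lem:source-expansion-discipline} and Lemma~\ref{lem:reduction-preserves-source-discipline} give the relevant facts. No inverse distributor has a non-first-order endpoint. Explicit Source structural atoms have Source-typed endpoints, and their additive instances are first-order. The only higher-order atom instance is the forward distributor produced by case expansion, and it sits on a pair that coherent sharing consumes. Therefore (C-str) arises only inside the coherent-sharing traversal, and the direct higher-order clauses are (C-var-app) and (C-var-whole). For lower-$\Phi$ branches $\mathsf{NF}_{\mathrm{LO}}(R_i z_i)$, I would reuse the same discipline. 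Operands are closed, by balance (Lemma~\ref{lem:lo-restores-balance}), so the branch normal forms stay coherent-sum-free and free of higher-order inverse distributors.

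For decrease, I would check each clause's recursive calls against the order stated in Lemma~\ref{lem:whole-port-collapse}. That order is the outer $\Phi$, then $(|\mathcal D|,|\mathcal C|)$ lexicographically with the producer first, and (Eta-order) for unit grafts.
\begin{itemize}
\item (C-str) recurses on the proper producer $\mathcal R$, so $|\mathcal D|$ drops.
\item (C-var-app) and (FO-whole) recurse only on proper premises.
\item (Ten-pack) recurses on the body premise with the same producer, and the consumer skeleton shrinks.
\item (MapApply) and coherent sharing enter branches of strictly smaller $\Phi$ by (Use-$\Phi$).
\item (Eta-graft) descends along the marked path: either $|\mathcal C|$ decreases, or the Source height of $S$ decreases at tensor and arrow rows.
\end{itemize}
Repeating classification on the rebuilt canonical premise keeps the argument within these measures, because of the (Graft) equalities.

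The main obstacle is the reachability claim in the second paragraph. It has to hold for every intermediate branch generated during interpretation, not only for the root normal form. I would prove an invariant by the outer induction on $\Phi$: every generated derivation is the canonical retyping of an LO normal form of a term built from a Source expansion by fresh-variable application to closed operands. The Source discipline lemmas then apply verbatim to each such term.
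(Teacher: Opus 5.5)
Your proposal is correct and follows essentially the paper's argument. You inspect the cut sites (application, tensor elimination, $\mathsf{MapApply}$, coherent sharing) against the normal-form grammar and use the Source discipline to confine higher-order ports to \textup{(C-var-app)} and \textup{(C-var-whole)}, with \textup{(C-str)} arising only in the coherent-sharing traversal. You then get decrease because at fixed $\Phi$ only \textup{(C-str)} recursively invokes $\mathsf{PortCut}$ (on a proper producer), unit grafts follow \textup{(Eta-order)}, and generated branches drop $\Phi$ by \textup{(Use-$\Phi$)} and \textup{(CS-NF)}.

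Two small corrections. First, \textup{(Ten-pack)} is a repartition of the already-constructed body package, not a recursive call. Second, your lower-$\Phi$ invariant must also cover the coherent-sharing branches $q_i=R_i(S\tensor b_i)$, where $S$ is open. These branches inherit the same restrictions because $R_i$ is closed and $S$ is a balanced, sum-former-free subterm of the current normal form.
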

\begin{proof}
Inspect the four call sites: application, tensor elimination,
$\mathsf{MapApply}$, and coherent sharing.  An endpoint is terminal.
An \textsc{Exp}-headed application returns a first-order result, while
a higher-order structural head is covered by the structural traversal
below.
At a first-order port, a non-block producer selects
\textup{(FO-whole)}, while a completed or endpoint-transported family
selects \textup{(FO-family-cut)}.

At a higher-order port, an occurrence in the argument position of a
variable-headed spine selects \textup{(C-var-app)}; a variable-headed
producer consumed whole selects
\textup{(C-var-whole)}.  Tensor elimination first forms
\textup{(Ten-pack)}.  A visible pair at its destructuring port would be
a tensor beta-redex, so its remaining higher-order producers are
variable-headed.  Tensor and shared-block descent inside
\textup{(C-var-whole)} are rows of \textup{(Eta-graft)}, not recursive
classified cuts.  A higher-order structural head belongs to the Raw
traversal internal to coherent sharing; its distinguished
case-expansion distributor is handled before \textup{(C-str)}.
Lower-$\Phi$ generated branches inherit these Source
restrictions.

At fixed $\Phi$, only \textup{(C-str)} recursively invokes
$\mathsf{PortCut}$, with a proper producer, and hence decreases the
first component of $(|\mathcal D|,|\mathcal C|)$.  The variable-spine
construction is finite, and the unit graft is governed by
\textup{(Eta-order)}.  Branches generated by $\mathsf{MapApply}$ or
coherent sharing have smaller $\Phi$ by \textup{(Use-$\Phi$)} and
\textup{(CS-NF)}.
\end{proof}

\begin{definition}[Selected boundary chart]
\label{def:complete-boundary-chart}
A selected package
$\mathcal A=(B_{\mathcal A}^-,B_{\mathcal A}^+,U_{\mathcal A})$
has a complete graph chart when its canonical derivation supplies
coordinate spaces, isometric insertions, and a unitary coordinate
action
\[
 \begin{gathered}
 g_{\mathcal A}^\pm:
 \mathsf D_{\mathcal A}^\pm
 \lhook\joinrel\longrightarrow
 \mathcal E_{\mathfrak J_{\mathcal A}}^\pm,
 \qquad
 B_{\mathcal A}^\pm=\operatorname{im}(g_{\mathcal A}^\pm),\\
 \epsilon_{\mathcal A}^\pm
 :=(g_{\mathcal A}^\pm)^\dagger
   \!\upharpoonright_{B_{\mathcal A}^\pm},
 \qquad
 U_{\mathcal A}g_{\mathcal A}^-
 =g_{\mathcal A}^+V_{\mathcal A}.
 \end{gathered}
 \tag{Chart}
\]
Equivalently,
$V_{\mathcal A}=\epsilon_{\mathcal A}^+U_{\mathcal A}
(\epsilon_{\mathcal A}^-)^\dagger$.
All source-coordinate domains used by $h_{\mathcal A}^\pm$, and the
factors exposed from them, retain the source-reference indices of the
canonical derivation, in fixed context order and fixed DNF order within
each type.  Every constructor and classified cut retains this order
through its fixed conclusion placement or reordering.  Thus the residual
factor left by exposing a family of leaves is the induced ordered
complement of their indices.  Branch computation is recorded in
$V_{\mathcal A}$, not in this coordinate order.

Both polarities are compared in source coordinates through
\[
 h_{\mathcal A}^-:=g_{\mathcal A}^-,
 \qquad
 h_{\mathcal A}^+:=g_{\mathcal A}^+V_{\mathcal A}:
 \mathsf D_{\mathcal A}^-
 \lhook\joinrel\longrightarrow
 \mathcal E_{\mathfrak J_{\mathcal A}}^\pm,
 \qquad
 U_{\mathcal A}h_{\mathcal A}^-=h_{\mathcal A}^+.
 \tag{Graph-support}
\]
At every displayed first-order occurrence used as an input or result
port, a complete chart requires $h_{\mathcal A}^\pm$ to have the
placed source-graph form \textup{(Port-support)} after the fixed
polarity repartition.  At a block root this requirement is
componentwise.

These are the supports generated by the canonical derivation.  At a
variable they are $(c_T^-)^\dagger,(c_T^+)^\dagger$.  At a
first-order quantum endpoint with action $V$, they are
\[
 h^-=(c_P^-)^\dagger,
 \qquad h^+=(I\tensor V)(c_P^+)^\dagger,
 \qquad
 g^+=h^+V^\dagger;
\]
in pair coordinates,
$h^+\ket p=\ket p_{\mathrm{ref}}\tensor V\ket p$.
At a structural endpoint they are $g_s^-$ and
$g_s^+\mathsf{Rig}(s)$.  Tensor nodes
tensor the supports, abstraction applies its fixed polarity
repartition, completed branch families take their tagged orthogonal sum, and a
classified first-order cut uses \textup{(Graph-link)}.  These clauses
define the supports recursively; restriction along a marked path gives
the occurrence supports used by the next cut.
\end{definition}

\begin{lemma}[Completion of boundary charts]
\label{lem:completion-chart}
Suppose each $U_i:B_i^-\to B_i^+$ has a complete graph chart
$(g_i^\pm,\epsilon_i^\pm,V_i)$, and let
\[
 U=\sum_i\iota_i^+\gamma_iU_i(\iota_i^-)^\dagger,
 \qquad
 V=\sum_i k_i^+\gamma_iV_i(k_i^-)^\dagger,
 \qquad |\gamma_i|=1,
\]
where both families of inclusions are orthogonal and jointly
exhaustive.  Writing $\bar\iota_i^\pm$ for the ambient tagged
inclusions, define
\[
 \begin{aligned}
 h^-&:=\sum_i\bar\iota_i^-h_i^-(k_i^-)^\dagger,\\
 h^+&:=\sum_i\bar\iota_i^+\gamma_i
             h_i^+(k_i^-)^\dagger .
 \end{aligned}
 \tag{Graph-support-block}
\]
Then $h^\pm$ are isometric source-parametrized supports and
$Uh^-=h^+$.  The target-parametrized insertions and readouts are
\[
 \begin{gathered}
 g^-:=h^-,
 \qquad
 g^+:=h^+V^\dagger
      =\sum_i\bar\iota_i^+g_i^+(k_i^+)^\dagger,\\
 B^\pm:=\operatorname{im}(g^\pm),
 \qquad
 \epsilon^\pm:=(g^\pm)^\dagger\!\upharpoonright_{B^\pm}.
 \end{gathered}
 \tag{Chart-block-ins}
\]
Equivalently,
\[
 \epsilon^\pm
 =\sum_i k_i^\pm\epsilon_i^\pm(\iota_i^\pm)^\dagger,
 \tag{Chart-block}
\]
and
\[
 \epsilon^+U(\epsilon^-)^\dagger
 =\sum_i k_i^+\gamma_i
   \bigl(\epsilon_i^+U_i(\epsilon_i^-)^\dagger\bigr)
   (k_i^-)^\dagger .
 \tag{Chart-sum}
\]
\end{lemma}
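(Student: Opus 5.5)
The plan is a direct block-matrix computation with orthogonal families. There are two families of inclusions in play. The inclusions $\iota_i^\pm$ and $\bar\iota_i^\pm$ give orthogonal, jointly exhaustive decompositions of the selected boundary and of the ambient envelope; the inclusions $k_i^\pm$ do the same for the coordinate space $\mathsf D^\pm=\bigoplus_i\mathsf D_i^\pm$. Each $\bar\iota_i^\pm$ restricts to $\iota_i^\pm$ on the selected block, by (BW) and (PC-block).

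\emph{Isometry of the supports.} We compute $(h^-)^\dagger h^-=\sum_{i,j}k_i^-(h_i^-)^\dagger(\bar\iota_i^-)^\dagger\bar\iota_j^-h_j^-(k_j^-)^\dagger$. Orthogonality of the ambient tagged inclusions kills the terms with $i\neq j$. Isometry of each $h_i^-$ reduces the diagonal terms to $\sum_ik_i^-(k_i^-)^\dagger=I$. The same computation handles $h^+$, since $|\gamma_i|^2=1$ and $h_i^+=g_i^+V_i$ is isometric.

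\emph{The intertwining $Uh^-=h^+$.} Expand $Uh^-$. Because $h_j^-$ lands in $B_j^-$, we have $(\iota_i^-)^\dagger\bar\iota_j^-h_j^-=\delta_{ij}h_i^-$. This gives $Uh^-=\sum_i\iota_i^+\gamma_iU_ih_i^-(k_i^-)^\dagger$. By (Graph-support) for each component, $U_ih_i^-=h_i^+$, which is exactly $h^+$ once $\iota_i^+$ is identified with $\bar\iota_i^+$ on $B_i^+$.

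\emph{The target form of $g^+$.} We have $V^\dagger=\sum_jk_j^-\bar\gamma_jV_j^\dagger(k_j^+)^\dagger$. Multiplying gives $h^+V^\dagger=\sum_i\bar\iota_i^+\gamma_i\bar\gamma_ig_i^+V_iV_i^\dagger(k_i^+)^\dagger$. Cross terms vanish because $(k_i^-)^\dagger k_j^-=\delta_{ij}I$, and the result is $\sum_i\bar\iota_i^+g_i^+(k_i^+)^\dagger$, as claimed.

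\emph{Readouts and conjugation.} Taking adjoints and restricting to $B^\pm=\bigoplus_i\iota_i^\pm B_i^\pm$ gives (Chart-block), because $(\bar\iota_i^\pm)^\dagger$ agrees with $(\iota_i^\pm)^\dagger$ there. Substituting the expression for $U$, orthogonality again removes the cross terms, which yields (Chart-sum). Finally, (Chart-sum) combined with (Chart-block) and the per-component relation $V_i=\epsilon_i^+U_i(\epsilon_i^-)^\dagger$ gives $V=\epsilon^+U(\epsilon^-)^\dagger$, so $(g^\pm,\epsilon^\pm,V)$ is a complete chart. The required componentwise placed source-graph form at block roots is inherited directly from the components.

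\emph{Main obstacle.} The only nonroutine point is bookkeeping: we need $\operatorname{im}(g^\pm)$ to be exactly the orthogonal sum of the $\iota_i^\pm B_i^\pm$, and the ambient and selected tagged inclusions to agree. Both follow from the tagged orthogonality in (BW) and (PC-block), together with surjectivity of each $g_i^\pm$ onto $B_i^\pm$.
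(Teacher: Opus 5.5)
Your proposal is correct and is the same blockwise computation as the paper's proof. As there, orthogonality of the tagged inclusions, joint exhaustion of the $k_i^\pm$, and $|\gamma_i|=1$ give isometry of $h^\pm$, and in $h^+V^\dagger$ the phase $\gamma_i\bar\gamma_i$ cancels, with $V_iV_i^\dagger=I$ coming from unitarity of each $V_i$.

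One small correction to your setup: the ambient tagged inclusions $\bar\iota_i^\pm$ are orthogonal but need not jointly exhaust the ambient envelope. A $\plus$-map boundary, for instance, also contains the cross blocks pairing $A^*$ with $D$ and $B^*$ with $C$. This does not affect your argument, which only uses their orthogonality and their agreement with $\iota_i^\pm$ on the selected blocks.
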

\begin{proof}
Orthogonality, joint exhaustion, and $|\gamma_i|=1$ give
$(h^\pm)^\dagger h^\pm=I$.  The displayed identities then follow
blockwise; in $h^+V^\dagger$ the phase occurs once in each factor and
cancels.
\end{proof}

\paragraph{Joint first-order exposure.}
Let $\vec a=(a_1,\ldots,a_m)$ be an ordered family of distinct
unmatched \textsc{Var} leaves of first-order types
$\vec P=(P_1,\ldots,P_m)$, and put
\[
 \mathsf D_{\vec P}
 :=\mathsf D_{P_1}\tensor\cdots\tensor\mathsf D_{P_m}.
\]
A selected chart \emph{jointly exposes} $\vec a$ when its negative
coordinate has the form
\[
 e_{\mathcal A,\vec a}^-:
 B_{\mathcal A}^-
 \xrightarrow{\;\cong\;}
 \mathsf D_{\vec P}\tensor X_{\mathcal A,\vec a}.
 \tag{FO-open}
\]
For a Source-normal chart, or a lower-$\Phi$ chart generated from one
by $\mathsf{MapApply}$ or coherent sharing,
$X_{\mathcal A,\vec a}$ is this induced residual based space.
The exposure is \emph{graph-compatible} when this factorization
is the restriction of the placed source-graph supports
$h_{\mathcal A}^\pm$ along the marked path of the canonical
derivation.  The positive occurrence
and first-order result charts are restricted in the same way.  Matching
two such occurrences is \textup{(Graph-link)}.
The positive chart has some coordinate space
$Y_{\mathcal A,\vec a}$, and hence determines
\[
 V_{\mathcal A}^{\vec a}
 :=e_{\mathcal A,\vec a}^+
 U_{\mathcal A}
 (e_{\mathcal A,\vec a}^-)^\dagger:
 \mathsf D_{\vec P}\tensor X_{\mathcal A,\vec a}
 \xrightarrow{\;\cong\;}Y_{\mathcal A,\vec a}.
\]
If the displayed result is a non-block first-order type $Q$, its
positive chart further exposes the result:
\[
 V_{\mathcal A}^{\vec a;Q}:
 \mathsf D_{\vec P}\tensor X_{\mathcal A,\vec a}
 \xrightarrow{\;\cong\;}
 Y_{\mathcal A,\vec a}\tensor\mathsf D_Q .
 \tag{FO-result}
\]
At a block-family root, \textup{(FO-result)} is read componentwise,
with inclusions into $\mathsf D_Q$ satisfying \textup{(FO-part)}.

\paragraph{Rank of a Source chart.}
Put
\[
 d(A):=\dim\sem A,
 \qquad d(\Gamma):=\prod_{x:A\in\Gamma}d(A),
 \qquad
 \kappa(\Gamma;A):=\sqrt{d(\Gamma)d(A)},
 \tag{Chart-rank-def}
\]
where the empty product is $1$.  For every Source normal derivation,
and every lower-$\Phi$ branch created from one by the construction
below, the simultaneous induction maintains
\[
 \dim\mathsf D_{\mathcal N}^-
 =\dim\mathsf D_{\mathcal N}^+
 =\kappa(\Gamma;A)
 \qquad
 (\mathcal N:\Gamma\intjudge N:A).
 \tag{Chart-rank}
\]
Thus the displayed square root is an integer whenever it is used.

\begin{lemma}[Scale of Source-generated sum maps]
\label{lem:source-map-scale}
Every $\plus$-map former
$(A_1\plus A_2)\lmark(C_1\plus C_2)$ occurring in a Source normal form
$\mathsf{NF}_{\mathrm{LO}}(t^\circ)$, or in a lower-$\Phi$ normal form
generated from it by $\mathsf{MapApply}$ or coherent sharing, has a
common scale:
\[
 d(C_1)d(A_2)=d(C_2)d(A_1).
 \tag{Map-scale}
\]
\end{lemma}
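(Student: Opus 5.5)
The plan is to read (Map-scale) as saying that the ratio $d(C_i)/d(A_i)$ is the same for both summands. All dimensions $d(-)$ are positive integers, and $d(A\lmark B)=d(A)d(B)$ because $\mathsf{sgn}(A\lmark B)=\mathsf{sgn}(A)^*\tensor\mathsf{sgn}(B)$ and Hilbert evaluation forgets polarity. I will prove that this property, call it \emph{balanced scale}, holds for every $\plus$-map former in $t^\circ$. I will then show that it is preserved by every rewrite of Definition~\ref{def:rewrite} and by the branch generation used in $\mathsf{MapApply}$ and coherent sharing.

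\emph{Step 1: the expansion.} By the Source grammar, the only Source construct whose Raw expansion contains a $\plus$-map former is the tag-preserving case. Every other Source rule is multiplicative or atomic, and certified involutions occur only inside opaque \textsc{Exp} atoms, where $\Phi$ does not count them. The case expansion of \S\ref{cohRouting} contributes one former $\hat f\plus\hat g:(G_\Gamma\tensor A)\plus(G_\Gamma\tensor B)\lmark(A\tensor C)\plus(B\tensor C)$. Multiplicativity of $d$ gives $d(A)d(C)\cdot d(G_\Gamma)d(B)=d(B)d(C)\cdot d(G_\Gamma)d(A)$. In the degenerate case $\Gamma=\varnothing$ the same identity holds with $d(G_\Gamma)=1$. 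Hence $t^\circ$ has balanced scale, by induction on the Source derivation.

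\emph{Step 2: reduction.} I will do a case inspection of the rules, following the proof of Lemma~\ref{lem:reduction-preserves-source-discipline}. The $\beta$-rules, all floaters and (H$^*$) only relocate or substitute existing subterms. By Lemma~\ref{lem:ordinary-internal-preservation}, each surviving map occurrence keeps the type indices of its $\plus$-Map premise, so its scale is unchanged. Rule (G) consumes a former, and $\mathsf{SumNF}$ creates no new one.

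The only rule that creates a former with new indices is (E). Take $h:(A_1\plus A_2)\lmark(B_1\plus B_2)$ and $f:(B_1\plus B_2)\lmark(C_1\plus C_2)$, both balanced. Then
\[
 \frac{d(C_i)}{d(A_i)}=\frac{d(C_i)}{d(B_i)}\cdot\frac{d(B_i)}{d(A_i)},
\]
and both factors are independent of $i$, so the fused map is balanced. The branch compositions $\lambda z.f(hz)$ themselves create no map former. Induction along $t^\circ\to^*\mathsf{NF}_{\mathrm{LO}}(t^\circ)$, which exists by Theorem~\ref{thm:lo-normalizer-total}, gives balanced scale for the Source normal form.

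\emph{Step 3: generated branches.} For $\mathsf{MapApply}$, the generated branch $M_i=\mathsf{NF}_{\mathrm{LO}}(R_i\,z_i)$ of (Use-branch) starts from a term whose map formers are all those of $R_i$. These are already balanced, so Step~2 applies again. I will do an outer induction on $\Phi$, matching the construction of Theorem~\ref{thm:nf-boundary-unitarity}, so that iterated generation is covered. Coherent sharing generates branches from existing subterms, and possibly structural atoms, which contain no map formers, so the same argument applies.

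I expect this last point to be the main obstacle. Coherent sharing is defined only later (Definition~\ref{def:coherent-sharing-clause}). I must check that its generated terms introduce no $\plus$-map former except through (E) or as copies of existing formers. If some clause does build a fresh former, for instance through a distributor, I will redo the Step~1 dimension count on its explicit indices.
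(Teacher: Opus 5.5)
Your proposal is correct and follows essentially the paper's own argument: the case expansion gives the common branch ratio $d(C)/d(G_\Gamma)$, rule (E) multiplies common scales, the remaining rewrites keep branch types or delete the former, and generated branches inherit the property. Your coherent-sharing worry resolves, since $q_i=R_i(S\tensor b_i)$ or $R_i(b_i\tensor S)$ contains only existing formers. The differences are cosmetic. The paper handles \textsc{Exp} atoms by noting they are endomorphisms of scale $1$, rather than by declaring their internal $\oplus$-maps uncounted. Your (G) case is vacuous, because Source-generated terms never contain a coherent-sum former (Lemma~\ref{lem:reduction-preserves-source-discipline}); this also sidesteps your slightly inaccurate claim that $\mathsf{SumNF}$ creates no former.
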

\begin{proof}
A Source case introduces branch maps
$G_\Gamma\tensor B_i\lmark B_i\tensor C$, for which both ratios are
$d(C)/d(\Gamma)$.  Branchwise map composition multiplies the two
common scales.  An \textsc{Exp} atom is an endomorphism and therefore
has scale $1$.  The remaining rewrites preserve the branch types or
remove the map former, and recursively generated branches inherit the
property.  No other Source construct introduces a $\plus$-map former.
\end{proof}

\begin{lemma}[Functoriality of typed graph composition]
\label{lem:typed-graph-composition}
Suppose the premise charts of a non-block first-order instance are
supplied by the simultaneous induction and have the placed source-graph
form \textup{(Port-support)} at the matched occurrence.  Then
\textup{(Graph-link)} constructs the source-parametrized supports of
$W_\rho$.  They are isometric and retain the same form and
source-index order at every unmatched first-order occurrence.  Consequently the maps
$j_\rho^\pm$ of \textup{(PC-yank)} are isometric, and
\textup{(PortCut)} is unitary with coordinate action $W_\rho$.  The
statement holds componentwise for a producer family.
\end{lemma}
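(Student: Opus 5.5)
The plan is to reduce the lemma to a matrix calculation in the fixed source coordinates, using the placed source-graph form \textup{(Port-support)} of the two premise supports at the matched occurrence. Fix the non-block instance $\rho$ at first-order port $P$. The hypotheses give $h_{\mathcal D,\rho}^\epsilon=\Gamma(F_{\mathcal D}^\epsilon)$ and $h_{\mathcal C,\rho}^\epsilon=\Gamma(F_{\mathcal C}^\epsilon)$. Here the $F$'s are isometries, and in fact unitaries onto their displayed coordinate spaces, by the complete charts \textup{(Chart)} supplied by the simultaneous induction. For $\epsilon=+$ they are the conjugated actions $V_{\mathcal D}^\rho$ and $V_{\mathcal C}^\rho$ of \textup{(FO-action)}; for $\epsilon=-$ they are the reference identities.

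\emph{Step 1: the link is a graph of the composite.} Expand \textup{(Graph-link)} on basis vectors. The sum over $p\in\mathcal B_P$ is exactly matrix multiplication through the associator $\sigma_\rho$, which yields \textup{(Graph-link-functor)}: $h_{\mathcal C,\rho}^\epsilon\star_P h_{\mathcal D,\rho}^\epsilon=\Gamma\bigl((I\tensor F_{\mathcal C}^\epsilon)\sigma_\rho(F_{\mathcal D}^\epsilon\tensor I)\bigr)$. For $\epsilon=+$ the inner map is precisely $\widetilde U_\rho$ of \textup{(FO-whole)}. For $\epsilon=-$ it is the reference identity on $X_\rho\tensor Z_\rho$.

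\emph{Step 2: isometry and intertwining.} A composite of isometries is an isometry, and $\Gamma(F)$ is isometric whenever $F$ is: we have $\langle\Gamma(F)x,\Gamma(F)x'\rangle=\langle x,x'\rangle\langle Fx,Fx'\rangle$ on basis vectors, and this equals $\delta_{xx'}$. Tensoring with the placed graph of $Y_{\Xi_\rho}$, which is unitary by \textup{(TY)}, keeps $h_\rho^\pm$ isometric. Then $j_\rho^-=h_\rho^-$ is isometric. Since $W_\rho=\widetilde U_\rho\tensor Y_{\Xi_\rho}$ is unitary, $j_\rho^+=h_\rho^+W_\rho^\dagger$ is isometric as well. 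Consequently \textup{(PC)} holds, and \textup{(PortCut)} reads $(\chi^+)^{-1}W_\rho\chi^-=j_\rho^+W_\rho(j_\rho^-)^\dagger$. This is a unitary $B_\rho^-\to B_\rho^+$ with coordinate action $W_\rho$, and the relation $U h^-=h^+$ follows immediately.

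\emph{Step 3: unmatched occurrences.} I would check that the factors $Y_\rho$ and $Z_\rho$ and the inactive factor pass through \textup{(Graph-link)} as identity tensor factors on reference indices. Any other first-order occurrence exposed in a premise therefore keeps its $\Gamma(-)$ form and its position in the fixed source-index order, which is the order retained by the conclusion placement. I expect this to be the main obstacle. Concretely, the task is to show that the fixed reorderings $\sigma_\rho$ and the conclusion placement commute with restriction along other marked paths. My first attempt would be induction on the marked path, using that all reorderings are basis permutations on reference indices; they therefore preserve the $\Gamma$-form, as in \textup{(Str-chart)}.

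\emph{Step 4: families.} For a producer family, repeat Steps 1--3 componentwise with $\widetilde U_{\rho,i}$ from \textup{(FO-family-cut)}, where $\widehat k_i$ is unitary onto $W_i$. Then assemble the components by Lemma~\ref{lem:completion-chart}, using the orthogonality and exhaustion conditions \textup{(FO-part)} and \textup{(PC-block)}.
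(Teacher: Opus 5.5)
Your proposal is correct and follows the paper's proof essentially step for step. Matrix multiplication at the matched port gives \textup{(Graph-link-functor)}, and on positive live coordinates the matrix of \textup{(FO-whole)}. \textup{(Graph-isom)} then gives isometry of $h_\rho^\pm$, hence of $j_\rho^-=h_\rho^-$ and $j_\rho^+=h_\rho^+W_\rho^\dagger$, with $j_\rho^+W_\rho(j_\rho^-)^\dagger=h_\rho^+(h_\rho^-)^\dagger$. Families reassemble through \textup{(Graph-support-block)}.

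Your Step~3 is lighter than you expect. Only the matched $p$-index is summed, so the reference tuple $\ket{x,z}_{\mathrm{ref}}$ already retains every unmatched source index in order. The paper finishes by noting that links at distinct ports commute. It carries the preservation simultaneously with classified-cut closure on the outer-$\Phi$, inner $(|\mathcal D|,|\mathcal C|)$ order, rather than by a separate marked-path induction.
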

\begin{proof}
Put
\[
 F_\rho^\epsilon
 :=(I\tensor F_{\mathcal C}^\epsilon)\,
   \sigma_\rho\,
   (F_{\mathcal D}^\epsilon\tensor I).
\]
Equation~\textup{(Graph-link-functor)} is ordinary matrix
multiplication at the matched port.  In particular, on the positive
live coordinates it reads
\[
 [\widetilde U_\rho]_{(y,w),(x,z)}
 =\sum_{p\in\mathcal B_P}
   [V_{\mathcal C}^{\rho}]_{w,(p,z)}
   [V_{\mathcal D}^{\rho}]_{(y,p),x},
 \tag{Graph-product}
\]
the matrix of \textup{(FO-whole)}.  Each $F_\rho^\epsilon$ is an
isometry because its two factors are, and therefore
\[
 \left\langle\Gamma(F_\rho^\epsilon)e_r,
                  \Gamma(F_\rho^\epsilon)e_s\right\rangle
 =\delta_{rs}
  \left\langle F_\rho^\epsilon e_r,
                  F_\rho^\epsilon e_s\right\rangle
 =\delta_{rs}.
 \tag{Graph-isom}
\]
Hence $h_\rho^\pm$ are isometric.  The negative link is the source
graph and the positive link is the source-parametrized positive graph;
thus $j_\rho^-=h_\rho^-$ and
$j_\rho^+=h_\rho^+W_\rho^\dagger$ are isometric, and
\[
 j_\rho^+W_\rho(j_\rho^-)^\dagger
 =h_\rho^+(h_\rho^-)^\dagger.
 \tag{Graph-functor}
\]
At a variable port this is the typed snake \textup{(Y)}, giving
\textup{(VS-sector)}.  In \textup{(Graph-link)} only the matched
$p$-index is summed; the reference tuple $\ket{x,z}_{\mathrm{ref}}$
retains every unmatched source index.  The residual $z$ may contain
an unclosed mark of any Source type; its type does not enter this
index calculation.  Tensor, abstraction, and
completion apply their fixed conclusion placements, and links at
distinct ports commute.  Hence the source-graph form and source-index
order are retained at every other displayed first-order occurrence.  This preservation is
proved simultaneously with classified-cut closure on the stated
outer-$\Phi$, inner $(|\mathcal D|,|\mathcal C|)$ order: the current
link uses only proper-premise supports or lower-$\Phi$ generated
branches.  For a producer family the calculation is componentwise,
and \textup{(Graph-support-block)} reassembles the parent support.
\end{proof}

\begin{lemma}[Common coordinate for Source blocks]
\label{lem:shared-source-chart}
Let
$\mathcal M_i:(\Theta,b_i:B_i\intjudge M_i:C_i)$, $i=1,2$, be the
component derivations of a Source-generated map-source or
coherent-sharing block, where $B_1,B_2$ are first-order and
$d(C_i)=\lambda d(B_i)$ for one $\lambda>0$.  Suppose their assigned
charts satisfy the source-coordinate convention following
\textup{(Chart)}.  That convention gives one residual based space
$Z_K$, and the charts
supplied by \textup{(FO-open)} at the formal leaves $b_i:B_i$ have
the form
\[
 e_i^-:B_{\mathcal M_i}^-
 \xrightarrow{\;\cong\;}
 \mathsf D_{B_i}\tensor Z_K
 \qquad(i=1,2).
 \tag{CS-common}
\]
Under \textup{(Chart-rank)},
\[
 \dim Z_K
 =\frac{\kappa(\Theta,b_i:B_i;C_i)}{d(B_i)}
 =\sqrt{d(\Theta)\lambda}.
 \tag{CS-rank}
\]
The completed source chart is
\[
 \bigoplus_i B_{\mathcal M_i}^-
 \xrightarrow{\;\bigoplus_i e_i^-\;}
 \bigoplus_i(\mathsf D_{B_i}\tensor Z_K)
 \xrightarrow{\;\cong\;}
 \mathsf D_{B_1\plus B_2}\tensor Z_K .
 \tag{CS-root-chart}
\]
\end{lemma}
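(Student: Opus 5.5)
The plan has three parts: fix the common residual space $Z_K$, check the rank formula \textup{(CS-rank)}, and assemble the root chart \textup{(CS-root-chart)}. The first and third are bookkeeping. The only real content is that the two branch residuals coincide, and that is where the source-coordinate convention does the work.

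\textbf{The common residual.} The convention after \textup{(Chart)} says the negative source coordinates retain the source-reference indices in fixed context order and fixed DNF order within each type. Exposing a family of leaves therefore leaves, as residual factor, the induced ordered complement of their indices. For $\mathcal M_i$ the context is $\Theta,b_i:B_i$, and the marked leaf is the formal $b_i$. So the residual of \textup{(FO-open)} at $b_i$ is the ordered coordinate space contributed by the unexposed source indices. These consist of the indices of $\Theta$ together with the source indices of the result~$C_i$.

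The first step is to show this complement is the same based space for $i=1,2$. It is not enough that the dimensions agree; the two complements must be literally one space, which I call $Z_K$. The $\Theta$-indices are identical by construction, since both branches abstract the same packed context (Lemma~\ref{lem:source-expansion-discipline}).

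\textbf{The main obstacle.} The delicate point is the result part of the complement, because $C_1\ne C_2$ in general. Here I would use that the negative chart of a first-order result carries no source indices beyond the reference tuple. In \textup{(Port-support)} the reference address is the source, and the live output sits on the positive side. Hence the negative coordinate space of $\mathcal M_i$ is indexed only by context sources, namely $\Theta$ and $b_i$. The dimension count then forces the factor beyond $\mathsf D_{B_i}$ to be the $\Theta$-coordinates rescaled by $\lambda$. Concretely, I would verify that the residual is exactly the ordered complement supplied by the fixed DNF enumeration, and that this enumeration depends only on $\Theta$ and the common scale. If this identification fails, the fallback is to define $Z_K$ as the based space of dimension $\sqrt{d(\Theta)\lambda}$ and use the fixed order-preserving basis bijections. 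That is legitimate because the convention fixes bases and order, so the canonical identification is still unique.

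\textbf{Rank formula.} By \textup{(Chart-rank)}, $\dim B_{\mathcal M_i}^-=\kappa(\Theta,b_i:B_i;C_i)=\sqrt{d(\Theta)d(B_i)d(C_i)}$. Since $e_i^-$ is unitary onto $\mathsf D_{B_i}\tensor Z_K$, dividing by $\dim\mathsf D_{B_i}=d(B_i)$ gives $\dim Z_K$. Substituting $d(C_i)=\lambda d(B_i)$ yields
\[
 \frac{\sqrt{d(\Theta)\,d(B_i)\,\lambda\, d(B_i)}}{d(B_i)}=\sqrt{d(\Theta)\lambda},
\]
which is independent of $i$. This confirms that a single $Z_K$ is consistent, and the hypothesis $d(C_i)=\lambda d(B_i)$ is exactly \textup{(Map-scale)} from Lemma~\ref{lem:source-map-scale}.

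\textbf{Root chart.} Take the direct sum of the unitaries $e_i^-$. Then apply the distributivity isomorphism
\[
 \bigoplus_i(\mathsf D_{B_i}\tensor Z_K)\cong\Bigl(\bigoplus_i\mathsf D_{B_i}\Bigr)\tensor Z_K,
\]
and identify $\bigoplus_i\mathsf D_{B_i}$ with $\mathsf D_{B_1\plus B_2}$ through the fixed first-order basis, which is the sum of the bases. The composite is unitary. It is compatible with the block inclusions \textup{(Use-part)}, and hence with the block chart of Lemma~\ref{lem:completion-chart}.
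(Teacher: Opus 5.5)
Your computation of \textup{(CS-rank)} and your assembly of \textup{(CS-root-chart)} by fixed DNF distributivity match the paper, where \textup{(CS-rank)} is only a dimension check. The gap is the identification of the two residuals, which is the whole content of the lemma.

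The charts in question belong to the canonical derivations of the normal forms $M_i=\mathsf{NF}_{\mathrm{LO}}(R_i\,z_i)$ or $M_i=\mathsf{NF}_{\mathrm{LO}}(q_i)$, not to the expanded branch maps. So ``both branches abstract the same packed context'' describes the terms only before normalization. Afterwards the two branches can use that context quite differently. In the $\qSwitch$ block one normal form contains $f(g\,x)$ and the other $g(f\,x)$, so different ports are linked. The variable-spine clauses then build the residual factors of $f$ and $g$ in opposite orders.

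The paper closes this by giving the source indices their fixed context and DNF addresses and carrying them through the recursion of $\mathsf{NF}_{\mathrm{LO}}$:
\begin{itemize}
\item an \textup{(A)} step replaces a bound address by the producer's ordered address word;
\item a \textup{(B)} step splits a tensor address into its two ordered component words;
\item an \textup{(E)} step deletes only the matched intermediate address, branchwise;
\item the floaters move unchanged subderivations, and canonical retyping restores the fixed order;
\item \textup{(G)} never fires, because no coherent-sum former is present (Lemma~\ref{lem:reduction-preserves-source-discipline}).
\end{itemize}
So every unmatched source index survives as the same ordered word in both branches. For coherent sharing you also need that $\Theta=\Delta$ comes from the single derivation of $S$, grafted into both already aligned branches. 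Lemma~\ref{lem:source-expansion-discipline} is about $t^\circ$ and says nothing about the generated terms $q_i$.

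Your ``main obstacle'' paragraph does not repair this. The result $C_i$ is not where the difficulty lies, since source-reference indices are context addresses. And no dimension count can identify two based spaces. Read literally, ``indexed only by $\Theta$ and $b_i$'' even conflicts with \textup{(Chart-rank)}. In the $\qSwitch$ sharing block $d(\Theta)=d(A)^5$ but $\dim Z_K=d(A)^3$, so the residual is not a copy of the $\Theta$-coordinates, rescaled or otherwise.

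Your fallback defines $Z_K$ abstractly with dimension $\sqrt{d(\Theta)\lambda}$ and matches both residuals to it by order-preserving basis bijections. That is exactly the dimension-only argument you had just rejected. It makes \textup{(CS-root-chart)} a unitary, but it proves only that the two residuals are isomorphic, not that they are one based space. Nothing forces a coordinate of $Z_K$ to carry the same source address in both branches. That address-level identity is what two later uses rely on: the completed-consumer instance of \textup{(FO-expose)} at the block's own sum port, and the single midpoint family \textup{(Plan-midpoint-domain)}.
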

\begin{proof}
A translated Source case packs the common branch context in the same
fixed order.  Removing the formal $b_i$ index therefore leaves the
same ordered source-reference indices in both components.  Give these
indices their fixed context and DNF addresses and carry the addresses
through the defining recursion of $\mathsf{NF}_{\mathrm{LO}}$.  A
\textup{(A)} step replaces the bound address by the producer's ordered
address word; a \textup{(B)} step replaces the tensor address by its two
ordered component words; and an \textup{(E)}-step deletes only the
matched intermediate address, branchwise.  The remaining applicable
conversions move unchanged subderivations, after which canonical
retyping restores the fixed context and DNF order.  Rule \textup{(G)}
is unavailable by
Lemma~\ref{lem:reduction-preserves-source-discipline}.  Hence
normalization preserves the ordered word of every unmatched source
index.  A coherent-sharing component grafts the same shared-context
derivation into each already aligned branch, so the preceding argument
applies to its lower-$\Phi$ normalization.
An \textsc{Exp} atom or structural atom is interpreted by one endpoint
or structural clause, respectively.

Thus the two residual coordinates are the same based space $Z_K$;
\textup{(CS-rank)} checks its dimension.  If other first-order leaves
are exposed simultaneously, their factors occupy the same fixed
positions, so the identification fixes $\mathsf D_{\vec P}$
pointwise.  The last arrow in \textup{(CS-root-chart)} is fixed DNF
distributivity.
\end{proof}

\begin{lemma}[Classified port-cut closure]
\label{lem:whole-port-collapse}
Let $\rho$ be a classified instance.  Assume that every proper
canonical premise and every lower-$\Phi$ branch generated by
$\mathsf{MapApply}$ or coherent sharing has already been assigned
its well-typed unitary package and the occurrence charts required
below, and those charts satisfy \textup{(Chart-rank)} and the
source-coordinate convention following \textup{(Chart)}.  Then
Definition~\ref{def:classified-cut} constructs a
well-typed unitary
\[
 \mathsf{PortCut}_{P}^{\rho}(\mathcal C,\mathcal D):
 B_\rho^-\xrightarrow{\;\cong\;}B_\rho^+ ,
\]
together with its complete graph chart.  If $P$ is first-order,
its coordinate action is \textup{(FO-whole)} or the componentwise
construction
\textup{(FO-family-cut)}--\textup{(FO-family-close)}, and its sector
insertion is constructed by Lemma~\ref{lem:typed-graph-composition}.
Simultaneously, closing $P$ preserves the graph-compatible joint
exposure \textup{(FO-open)} of every ordered finite family of
distinct unmatched first-order leaves, their fixed source-index order,
and the placed source-graph form of every displayed first-order result,
componentwise at a block root.

The proof uses outer strong induction on $\Phi$ and, at fixed
$\Phi$, lexicographic induction on
$(|\mathcal D|,|\mathcal C|)$, with producer size first.
The \textup{(C-var-whole)} clause uses the subsidiary order
\textup{(Eta-order)}.
\end{lemma}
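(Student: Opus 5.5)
The proof proceeds by the induction stated in the lemma: outer strong induction on $\Phi(\rho)$, and at fixed $\Phi$ lexicographic induction on $(|\mathcal D|,|\mathcal C|)$ with producer size first. The \textup{(C-var-whole)} clause is handled by the subsidiary order \textup{(Eta-order)}. By Lemma~\ref{lem:classifier-totality}, $\rho$ selects exactly one clause of Definition~\ref{def:classified-cut}, and every recursive call it makes lies strictly below in this order. We therefore argue clause by clause. In each case we check four things: the insertions $j_\rho^\pm$ are isometric; the coordinate action is unitary, so that \textup{(PortCut)}, a conjugate by the unitary charts $\chi_\rho^\pm$ of \textup{(PC)}, is unitary; the conclusion carries a complete graph chart in the sense of Definition~\ref{def:complete-boundary-chart}; and the exposure and rank invariants \textup{(FO-open)}, \textup{(Chart-rank)} persist.

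The terminal and transport clauses come first. \textup{(EndAct)} is unitary as a conjugate of $U_{\mathcal D}$ by restrictions of ambient permutations or of $I\tensor V_a$. Its chart is transported verbatim, \textup{(C-q)} and \textup{(Family-transport)} record the new coordinates, and orthogonality of the $s_i^h$ is preserved because $V_h$ is unitary. \textup{(Ten-pack)} and \textup{(C-str)} are conjugations by the restricted boundary repartitions $\vartheta_N^\pm$, $t_{s,\mathcal C}^\pm$. The latter recurses on a proper producer, so the induction hypothesis applies, and \textup{(Str-chart)} supplies the chart. For \textup{(C-var-app)}, we use Lemma~\ref{lem:variable-spine-chart} and Corollary~\ref{cor:variable-spine-map}: the active map is a tensor of unitaries $U_i$ (given by the hypothesis on proper premises) with $\mathsf{yank}_B$, and the snake equation \textup{(Y)} closes the port.

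Next come the first-order clauses. For the non-block case \textup{(FO-whole)}, $\widetilde U_\rho$ is a composite of unitaries and an associator. Lemma~\ref{lem:typed-graph-composition} then gives isometry of $j_\rho^\pm$, the graph-link chart, and retention of source-graph form and index order at every unmatched first-order occurrence; this settles the preservation of joint exposure. The family case requires one extra check: the $W_i=\operatorname{im}(k_i)$ must be orthogonal and jointly exhaust $W_\rho^{\mathrm{out}}$. This follows from \textup{(FO-part)} (and \textup{(FO-pair-family)} for pair producers) together with unitarity of $V_{\mathcal C}^\rho$. Then \textup{(FO-family-close)} is a block sum of unitaries $U_{\rho,i}$ between the orthogonal decompositions \textup{(PC-block)}, and Lemma~\ref{lem:completion-chart} reassembles the parent chart. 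The phases sit inside $V_{\mathcal D,i}^\rho$ and cancel in $g^+$. Rank is checked by dimension count: contracting a port of dimension $d(P)$ removes a factor $d(P)$ from both $d(\Gamma)d(A)$ and the coordinate dimension. For blocks, \textup{(CS-rank)} and Lemma~\ref{lem:source-map-scale} ensure that both components share the residual $Z_K$ of Lemma~\ref{lem:shared-source-chart}, so the block root chart \textup{(CS-root-chart)} is well defined.

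The main obstacle is \textup{(C-var-whole)}, and specifically isometry of the unit-graft insertions $\mathsf{ug}_{\rho,S}^\pm$. The plan is a nested induction along \textup{(Eta-order)} through the rows of \textup{(Eta-graft)}. The root \textsc{Var} row is \textup{(VS-sector)}. The first-order row reduces to \textup{(FO-whole)} with $\mathsf{Graph}_P(I)$. The tensor and arrow rows recurse at smaller Source height. Ordinary and foreign-cut rows recurse on a strictly smaller marked-premise skeleton, using \textup{(Graph-link-functor)} and the fact that links at distinct ports commute, by \textup{(Graft)}. Shared blocks use the tagged direct sum and \textup{(PC-block)}. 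What remains to check is that $\mathsf{ug}^+=h^+W^\dagger$ is isometric with image of the correct dimension. We would cite Lemma~\ref{lem:semantic-eta-identity} for this, verifying only that the rebuilt supports have the placed source-graph form, so that \textup{(Graph-isom)} applies row by row.
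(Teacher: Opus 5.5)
Your outline follows the paper's skeleton: the same induction orders, the same clause split, the same appeals to Lemma~\ref{lem:typed-graph-composition} and Lemma~\ref{lem:completion-chart}, and the same check that the $k_i$ are orthogonal and jointly exhaustive. Deferring isometry of $\mathsf{ug}_{\rho,S}^\pm$ to Lemma~\ref{lem:semantic-eta-identity} is also what \textup{(C-var-place)} does, so \textup{(C-var-whole)} is not where the paper's proof spends its effort.

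The gap is the lemma's ``simultaneously'' clause. Your sentence that Lemma~\ref{lem:typed-graph-composition} ``settles the preservation of joint exposure'' covers only a single non-block first-order link, one occurrence at a time. You list \textup{(FO-open)} among the things to check in every case, but never check it anywhere else. The lemma requires one joint chart for every ordered family of still-open first-order leaves, through every clause. The paper obtains it in five steps:
\begin{itemize}
\item It partitions the open marks into producer and consumer families $\vec a_{\mathcal D},\vec a_{\mathcal C}$ and writes the conclusion action in jointly exposed coordinates as \textup{(FO-carry)}: $(I_Y\tensor V_{\mathcal C}^{p,\vec a_{\mathcal C}})\,\alpha_\rho\,(V_{\mathcal D}^{\vec a_{\mathcal D};P}\tensor I)\,\pi_{\rho,\vec a}^{-1}$.
\item It repeats this componentwise for a producer family, with $D_i$ and $\widehat k_i$ in place of $\mathsf D_P$ and the consumer action.
\item At a nested cut on a marked path, it constructs the inner cut with all other marks retained and then continues outward.
\item In a completed consumer block, it uses additive balance so that each retained mark lies in every component, and reassembles by \textup{(BW)}.
\item At higher-order ports it appeals to \textup{(VS-chart)}, \textup{(Str-chart)} and the rows of \textup{(Eta-graft)}.
\end{itemize}

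The consumer side of \textup{(FO-expose)} is also incomplete. You treat a block only at its own sum port, via \textup{(CS-root-chart)}. You do not treat a completed consumer entered through the same first-order context port $P$ in every component, for example a later cut at a variable of the shared context $\Delta$ of a coherent-sharing block. There the exposure must be assembled as \textup{(FO-block-expose)}: take the tagged sum of the component charts $e_i^-:B_{\mathcal C,i}^{-,\mathrm{act}}\cong\mathsf D_P\tensor Z_{\rho,i}$, set $Z_\rho:=\bigoplus_i Z_{\rho,i}$, and apply DNF distributivity. This supplies the placed source graph at the matched port that Lemma~\ref{lem:typed-graph-composition} needs.

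Two smaller points:
\begin{itemize}
\item The common residual $Z_K$ is not secured by \textup{(CS-rank)}, which only checks a dimension; equal dimension does not make two residuals one based space. The identification is the address-tracking argument of Lemma~\ref{lem:shared-source-chart}. Lemma~\ref{lem:source-map-scale} only supplies that lemma's common-ratio hypothesis. Applying it needs closed map-source operands from balance ($\Theta=\varnothing$), or, in coherent sharing, absorbing $d(A')$ into $\lambda$ with $\Theta=\Delta$.
\item You assert a complete chart for a \textup{(PortCut)} conclusion but never produce one. The paper gives it as \textup{(Cut-chart)}: read both through spaces $\mathcal T_{\Xi_\rho}^\pm$ in the common domain $\mathsf D_{\Xi_\rho}$ via $d_{\Xi_\rho}^\pm$, and use \textup{(TY-read)} to obtain the action $\widetilde U_\rho\tensor I$.
\end{itemize}
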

\begin{proof}
Follow the selected occurrence from its leaves outward.  A
first-order consumer variable has the canonical $\mathsf D_P$
coordinate, and tensor and abstraction apply their fixed DNF
reorderings.  If a completed consumer is entered through the same
first-order context port $P$ in every component, the component
hypotheses give
\[
 e_i^-:B_{\mathcal C,i}^{-,\mathrm{act}}
 \xrightarrow{\;\cong\;}
 \mathsf D_P\tensor Z_{\rho,i}.
\]
Put $Z_\rho:=\bigoplus_i Z_{\rho,i}$.  The tagged sum of these charts
and fixed DNF distributivity give
\[
 \bigoplus_i B_{\mathcal C,i}^{-,\mathrm{act}}
 \xrightarrow{\;\bigoplus_i e_i^-\;}
 \bigoplus_i(\mathsf D_P\tensor Z_{\rho,i})
 \xrightarrow{\;\cong\;}
 \mathsf D_P\tensor Z_\rho .
 \tag{FO-block-expose}
\]
When a block's own sum input is first-order, balanced Source
normality makes the map-source operands closed.  Lemmas~\ref{lem:source-map-scale}
and~\ref{lem:shared-source-chart}, with $\Theta=\varnothing$, give
their common residual coordinate.  In coherent sharing the branch maps
are again closed.  Applying \textup{(Map-scale)} to the branch sources
$A'\tensor B_i$ or $B_i\tensor A'$ and absorbing the common factor
$d(A')$ into $\lambda$ gives the same conclusion with
$\Theta=\Delta$.  Thus \textup{(CS-root-chart)} supplies every
completed-consumer instance of \textup{(FO-expose)}.

The first-order producer grammar has four cases.  A variable-headed
spine uses Lemma~\ref{lem:variable-spine-chart}; a pair tensors its
premise coordinates, using the product family
\textup{(FO-pair-family)} when either premise is blocked; an
atom-headed application uses \textup{(EndAct)}; and an applied
$\plus$-map uses \textup{(Use-part)}, with the
distributor-on-result-pair subcase sent to the lower-$\Phi$
coherent-sharing clause.  These cases also cover
endpoint-transported families.  A lambda, bare atom, or bare
$\plus$-map has function type, and a tensor let is not a result \(R\).

The same marked-path induction supplies the actual occurrence
supports by restricting the premise graph charts.  At the current
first-order port, Lemma~\ref{lem:typed-graph-composition} links them
and constructs the conclusion chart.  For distinct open ports $P,Q$,
the two matrix-index contractions commute after the fixed rig
reordering; hence closing $P$ preserves the support exposed at $Q$.  Tensor and abstraction conjugate this
equation by their fixed boundary reorderings, while completion takes
its orthogonal direct sum by \textup{(Chart-block-ins)}.

For a non-block producer, \textup{(FO-whole)} is a composite of
unitaries, and Lemma~\ref{lem:typed-graph-composition} constructs
its sector insertions.  For a producer family,
\[
 k_i^\dagger k_j
 =((s_i^\dagger s_j)\tensor I_{Z_\rho})=\delta_{ij}I,
 \qquad
 \sum_i k_i k_i^\dagger
 =V_{\mathcal C}^{\rho}
  \bigl[(\sum_i s_i s_i^\dagger)\tensor I_{Z_\rho}\bigr]
  (V_{\mathcal C}^{\rho})^\dagger=I .
\]
Thus the $W_i$ are orthogonal and jointly exhaustive, every
$\widetilde U_{\rho,i}$ is unitary, and
\textup{(FO-family-close)} is a unitary.  The component definition
applies each $Y_{\Xi_{\rho_i}}$ exactly once.

It remains to verify preservation of other open ports.  Partition
the still-open marks between the producer and consumer as
$\vec a_{\mathcal D},\vec a_{\mathcal C}$.  In the non-block case
the simultaneous induction gives
\[
 \begin{aligned}
 V_{\mathcal D}^{\vec a_{\mathcal D};P}
 &: \mathsf D_{\vec P_{\mathcal D}}\tensor X
    \xrightarrow{\;\cong\;}Y\tensor\mathsf D_P,\\
 V_{\mathcal C}^{p,\vec a_{\mathcal C}}
 &: \mathsf D_P\tensor
    \mathsf D_{\vec P_{\mathcal C}}\tensor Z
    \xrightarrow{\;\cong\;}W .
 \end{aligned}
 \tag{FO-carry-prem}
\]
After the fixed permutation $\pi_{\rho,\vec a}$ restores context
order, the conclusion action is
\[
 V_\rho^{\vec a_{\mathcal D}\cdot\vec a_{\mathcal C}}
 =
 (I_Y\tensor V_{\mathcal C}^{p,\vec a_{\mathcal C}})
 \,\alpha_\rho\,
 \bigl(V_{\mathcal D}^{\vec a_{\mathcal D};P}
       \tensor
       I_{\mathsf D_{\vec P_{\mathcal C}}\tensor Z}\bigr)
 \,\pi_{\rho,\vec a}^{-1}.
 \tag{FO-carry}
\]
Hence contraction of $P$ leaves every other marked first-order
coordinate exposed.  For a producer family, use the same equation
in component $i$, replacing $\mathsf D_P$ by $D_i$ and using
$\widehat k_i$; the component actions reassemble by
\textup{(FO-family-close)}.

At a nested cut on a marked path, invoke this simultaneous
induction with all other open marks retained, construct that cut,
and continue outward by \textup{(FO-carry)}.  Tensor rules tensor
the jointly exposed factors.  In a completed consumer block admitted
by additive balance, each retained mark is present in every component;
\textup{(BW)} reassembles their charts.  Thus the rebuilt conclusion
retains one joint chart for every still-open marked leaf.

For a higher-order matched port, variable application uses
\textup{(C-var-app)}, a variable-headed producer consumed whole uses
the type-inductive unit graft \textup{(C-var-whole)}, and a structural
head uses \textup{(C-str)}.  In the unit graft, the tensor row of
\textup{(Eta-graft)} uses \textup{(Ten-pack)}, while its shared-block
row takes the tagged direct sum supplied by \textup{(BW)}.  Thus these
clauses preserve every unmatched first-order mark and its fixed
source-index order: \textup{(VS-chart)} handles spines,
\textup{(Str-chart)} handles structural transport, and the rows of
\textup{(Eta-graft)} handle the unit graft.

At fixed $\Phi$, only \textup{(C-str)} makes a recursive classified
call, with a proper producer.  The unit graft decreases by
\textup{(Eta-order)}, and every generated $\mathsf{MapApply}$ or
coherent-sharing branch has smaller $\Phi$ by \textup{(Use-$\Phi$)}
and \textup{(CS-NF)}.  Lemma~\ref{lem:classifier-totality} gives
exhaustiveness and uniqueness.

For every conclusion closed by \textup{(PortCut)}, put
\[
 \begin{aligned}
 g_\rho^\pm
  &:=j_\rho^\pm
     \bigl(I_{H_\rho^\pm}\tensor(d_{\Xi_\rho}^\pm)^\dagger\bigr),\\
 \epsilon_\rho^\pm
  &:=(I_{H_\rho^\pm}\tensor d_{\Xi_\rho}^\pm)\chi_\rho^\pm,
 &V_\rho&:=\widetilde U_\rho\tensor I_{\mathsf D_{\Xi_\rho}} .
 \end{aligned}
 \tag{Cut-chart}
\]
Equations~\textup{(PortCut)}, \textup{(PC-yank)}, and
\textup{(TY-read)} give
$\epsilon_\rho^+\mathsf{PortCut}_P^\rho
 =V_\rho\epsilon_\rho^-$.  For a producer-family conclusion closed by
\textup{(FO-family-close)}, Lemma~\ref{lem:completion-chart}
reassembles the component supports and charts through
\textup{(Graph-support-block)}, using the tagged inclusions of
\textup{(PC-block)}.
\end{proof}

\begin{lemma}[Naturality of the classified port cut]
\label{lem:classified-cut-naturality}
Let $\rho$ and $\rho'$ be corresponding classified instances:
instances with the same classification whose evaluated static phases
are equal.  Assume canonical transports for every proper premise and
generated branch, intertwining their denotations.  At a blockwise
conclusion, write $\kappa_i^\pm$ for its canonical tagged inclusions
from \textup{(BW)}---the $\jmath_i^\pm$ of
\textup{(FO-family-close)} for a producer family---and assume
\[
 \tau_\rho^\pm\kappa_i^\pm
 =\kappa_i'^\pm\tau_i^\pm .
 \tag{CN-block}
\]
Write $\bar\tau_{\mathcal A}^\pm$ for the ambient boundary
transport and
$\tau_{\mathcal A}^\pm:=\bar\tau_{\mathcal A}^\pm
 \!\upharpoonright_{B_{\mathcal A}^\pm}$ for its selected restriction.
For every package produced during the recursion, its source-coordinate
transport satisfies
\[
 \bar\tau_{\mathcal A}^\pm h_{\mathcal A}^\pm
 =h_{\mathcal A'}^\pm a_{\mathcal A}^- .
 \tag{Graph-support-Nat}
\]
The permitted transports preserve the placed source-graph form
\textup{(Port-support)}.
For an instance closed by \textup{(PortCut)}, the induced active
and residual transports satisfy
\[
 \tau_H^+\widetilde U_\rho
 =\widetilde U_{\rho'}\tau_H^-,
 \tag{CN-active}
\]
\[
 \xi^+Y_{\Xi_\rho}
 =Y_{\Xi_{\rho'}}\xi^-,
 \tag{CN-res}
\]
and the linked insertions are equivariant:
\[
 \tau_\rho^\pm j_\rho^\pm
 =j_{\rho'}^\pm(\tau_H^\pm\tensor\xi^\pm).
 \tag{CN-yank}
\]
Equivalently,
\[
 \chi_{\rho'}^\pm\tau_\rho^\pm
 =(\tau_H^\pm\tensor\xi^\pm)\chi_\rho^\pm .
 \tag{CN-chart}
\]
When the insertions are assembled by a tagged block, these equations
hold componentwise and \textup{(CN-block)} transports their orthogonal
assembly.  Consequently, for both \textup{(PortCut)} and
\textup{(FO-family-close)},
\[
 \tau_\rho^+\,
 \mathsf{PortCut}_{P}^{\rho}
 =
 \mathsf{PortCut}_{P}^{\rho'}\,
 \tau_\rho^- .
 \tag{Cut-Nat}
\]
\end{lemma}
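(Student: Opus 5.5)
The plan is to argue by the same simultaneous induction that constructs the cut: outer strong induction on $\Phi$, and at fixed $\Phi$ lexicographic induction on $(|\mathcal D|,|\mathcal C|)$, producer first, with \textup{(Eta-order)} as the subsidiary order inside \textup{(C-var-whole)}. The clause selected for $\rho$ and for $\rho'$ is the same, since the instances are corresponding and classification ignores bound names and evaluated phases (Lemma~\ref{lem:classifier-totality}). We therefore proceed clause by clause.

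In every clause, the hypotheses already give us the transports for the proper premises and for the lower-$\Phi$ generated branches. The first task is to build from them the active transport $\tau_H^\pm$ on $H_\rho^\pm$ and the residual transport $\xi^\pm$ on $\mathcal T_{\Xi_\rho}^\pm$, and then check \textup{(CN-active)} and \textup{(CN-res)}.

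\textup{(CN-res)} is immediate in each case. Alpha-renaming, independent-let exchange and phase equality act on an inactive context only through the fixed monomial relabelling that defines $Y_\Gamma$ in \textup{(TY)}, and \textup{(TY-read)} shows that $Y$ commutes with every such relabelling.

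For \textup{(CN-active)} we split by clause.
\begin{itemize}
\item \textup{(C-var-app)}: the active map is a tensor product of premise maps and $\mathsf{yank}_B$, so it follows from the premise intertwinings together with naturality of the symmetry.
\item \textup{(FO-whole)}: the exposure charts \textup{(FO-expose)} transport by restricting the premise ambient transports. The conjugated actions $V^\rho_{\mathcal D},V^\rho_{\mathcal C}$ are then intertwined, and the associator $\sigma_\rho$ is natural.
\item \textup{(FO-family-cut)}: same argument componentwise. The port inclusions $s_i$ and $\widehat k_i$ are carried along, and a phase change only alters an annotation with equal evaluated scalar inside $V_{\mathcal D,i}^\rho$.
\item \textup{(C-str)} and \textup{(Ten-pack)}: these are conjugations by fixed ambient permutations, which commute with the relabelling transports; the recursive call is then covered by the inner induction hypothesis.
\end{itemize}

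Next we prove \textup{(Graph-support-Nat)} and \textup{(CN-yank)} at the level of source supports. By Lemma~\ref{lem:typed-graph-composition}, the insertions are $j^-_\rho=h^-_\rho$ and $j^+_\rho=h^+_\rho W_\rho^\dagger$, where $h^\pm_\rho$ is the link \textup{(Graph-link-functor)} of the premise supports tensored with the placed graph of $Y_{\Xi_\rho}$. Each permitted transport is a basis relabelling of reference indices, possibly composed with an equal-scalar phase. Such a relabelling commutes with $\Gamma(-)$ in \textup{(Port-support)}, with the contraction over the matched index $p\in\mathcal B_P$, and with the fixed conclusion placements. The negative half of \textup{(CN-yank)} then follows from the premise instances of \textup{(Graph-support-Nat)}. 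The positive half follows from these together with \textup{(CN-active)}: we substitute $W_{\rho'}(\tau_H^-\tensor\xi^-)=(\tau_H^+\tensor\xi^+)W_\rho$. Applying $(j^\pm_{\rho'})^\dagger$ and using \textup{(PC)} converts \textup{(CN-yank)} into \textup{(CN-chart)}.

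\textup{(Cut-Nat)} for \textup{(PortCut)} is then a one-line conjugation:
\[\tau^+_\rho(\chi^+_\rho)^{-1}(\widetilde U_\rho\tensor Y)\chi^-_\rho=(\chi^+_{\rho'})^{-1}(\widetilde U_{\rho'}\tensor Y)\chi^-_{\rho'}\tau^-_\rho.\]
For \textup{(FO-family-close)}, we apply the component statements and reassemble the sum using \textup{(CN-block)}.

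The main obstacle is the \textup{(X)} generator when the exchanged lets straddle the cut, for instance when one exchanged let supplies the producer's scrutinee. In that situation the classifier may expose the matched port along a different marked path, so the identification "same classification" needs justification. We plan to handle it with the \textup{(Graft)} commutation equations: exchange reorders only independent grafts, so the rebuilt consumer skeletons agree up to the fixed context reordering. The induced transport is then precisely the Fubini permutation of two disjoint index contractions, and that permutation commutes with \textup{(Graph-link)} because contractions at distinct ports commute (as already observed in Lemma~\ref{lem:whole-port-collapse}). The unit-graft rows of \textup{(Eta-graft)} need the same verification row by row; there we use \textup{(Eta-foreign)} to pass the transport through each $\mathsf{Supp}_\tau$ expression.
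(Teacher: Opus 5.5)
Your proposal is correct and follows essentially the paper's route: the same outer-$\Phi$, inner $(|\mathcal D|,|\mathcal C|)$ induction with \textup{(Eta-order)} for the unit graft, entrywise naturality of \textup{(Graph-link)} combined with \textup{(CN-active)} to get equivariance of the linked insertions (the paper's \textup{(Match-Nat)}), \textup{(TY-read)} for \textup{(CN-res)}, \textup{(PC)} for \textup{(CN-chart)}, associator naturality for \textup{(FO-whole)} (which, as in the paper's \textup{(FO-open-Nat)}, relies on one common transport $a_P$ of the matched $\mathsf D_P$ coordinate on both the producer and consumer sides), \textup{(CN-block)} for the family closure, and conjugation for \textup{(C-str)} and \textup{(Ten-pack)}. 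Your closing ``obstacle'' paragraph is not needed, because equal classification of $\rho$ and $\rho'$ is a hypothesis of this lemma (let exchange across a cut is treated as monoidal interchange in Lemma~\ref{lem:administrative-naturality}); on the other hand, you leave implicit the endpoint clauses \textup{(C-q)} and \textup{(Family-transport)} and the map-use and coherent-sharing block assemblies, which should be listed and go through by the same pattern.
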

\begin{proof}
Use the induction of Lemma~\ref{lem:whole-port-collapse}.
The constructor supports are equivariant, giving
\textup{(Graph-support-Nat)}.  Entrywise naturality of
\textup{(Graph-link)} gives
\[
 \bar\tau_\rho^\pm h_\rho^\pm=h_{\rho'}^\pm a_\rho^- .
 \tag{Graph-link-Nat}
\]
Since $a_\rho^+W_\rho=W_{\rho'}a_\rho^-$, this implies
\[
 \bar\tau_\rho^-j_\rho^-=j_{\rho'}^-a_\rho^-,
 \qquad
 \bar\tau_\rho^+j_\rho^+=j_{\rho'}^+a_\rho^+ .
 \tag{Match-Nat}
\]
For corresponding jointly exposed lists,
\[
 \begin{aligned}
 e_{\mathcal A',\vec a'}^-\tau_{\mathcal A}^-
   &=(a_{\vec P}\tensor a_X)e_{\mathcal A,\vec a}^-,\\
 e_{\mathcal A',\vec a'}^+\tau_{\mathcal A}^+
   &=a_Ye_{\mathcal A,\vec a}^+ .
 \end{aligned}
 \tag{FO-open-Nat}
\]
At a displayed first-order result, $a_Y$ has the corresponding
factorization $a_{Y_0}\tensor a_Q$.  Thus the exposed coordinate
actions intertwine.  In particular, if
$a_X,a_Y,a_P,a_Z,a_W$ are the transports in
\textup{(FO-expose)}, then
\[
 (a_Y\tensor a_P)V_{\mathcal D}^{\rho}
 =V_{\mathcal D}^{\rho'}a_X,
 \qquad
 a_WV_{\mathcal C}^{\rho}
 =V_{\mathcal C}^{\rho'}(a_P\tensor a_Z).
\]
Naturality of the associator gives
\[
 (a_Y\tensor a_W)\widetilde U_\rho
 =\widetilde U_{\rho'}(a_X\tensor a_Z)
\]
for \textup{(FO-whole)}, and naturality of
\textup{(FO-carry)} preserves all other displayed ports.

For a producer family, $a_Ps_i=s_i'a_i$.  Hence the $k_i$, their
ranges $W_i$, and the component maps of
\textup{(FO-family-cut)} transport componentwise.  Equation
\textup{(Graph-support-block)} gives
\textup{(Graph-support-Nat)} for the parent, and
\textup{(FO-family-close)} with \textup{(CN-block)} gives
\[
 \tau_\rho^+\!
 \sum_i\jmath_i^+U_{\rho,i}(\jmath_i^-)^\dagger
 =
 \sum_i\jmath_i'{}^+U_{\rho',i}
       (\jmath_i'{}^-)^\dagger\tau_\rho^- .
\]
No phase is introduced here: it is already contained in the
component action.

Restricting \textup{(Match-Nat)} to $B_\rho^\pm$ gives
\textup{(CN-yank)};
\textup{(TY-read)} gives \textup{(CN-res)}, and
\textup{(CN-chart)} follows from \textup{(PC-def)}.  Combining these equations with
\textup{(CN-active)} proves \textup{(Cut-Nat)}.

The remaining clauses use the same equations.  For
\textup{(C-var-whole)}, naturality follows by the lexicographic
induction \textup{(Eta-order)}: the first-order case is
\textup{(Graph-link-Nat)}, and tensor and implication use naturality
of tensor, typed yanking, and the fixed boundary repartitions.  The
shared-block row follows componentwise from direct-sum naturality and
\textup{(CN-block)}.
First-order endpoints use \textup{(C-q)} and
\textup{(Family-transport)}.  Higher-order structural transport
conjugates the package and its chart by
\textup{(Str-pre)}--\textup{(Str-chart)}, after which
\textup{(C-str)} invokes the induction on the proper producer.
Tensor packaging uses
\[
 \tau_{\mathcal N}^{\epsilon}\vartheta_N^\epsilon
 =\vartheta_{N'}^\epsilon\tau_{\mathcal B}^{\epsilon}.
 \tag{Ten-pack-Nat}
\]
Map use and coherent sharing assemble the assumed lower-$\Phi$
branch intertwinings through their orthogonal completed blocks.
Every recursive appeal decreases in the order of
Lemma~\ref{lem:whole-port-collapse}.
\end{proof}

\begin{definition}[Coherent-sharing clause]
\label{def:coherent-sharing-clause}
Process any outer structural endpoints preceding the distributor,
in their syntactic order, by \textup{(C-str)}.  Their
$\mathsf{Rig}$ maps transport the selected summand inclusions and
the corresponding phase labels.  Choose each formal port \(b_i:B_i\)
by the fixed least-fresh convention.  After this transport, the
residual shape is one of the following two cases:
\[
\begin{aligned}
 K_L&=
 \oplusmap{\alpha}{R_1}{\beta}{R_2}
   \bigl(\delta_L(S\tensor T)\bigr),&
 \delta_L&:A'\tensor(B_1\plus B_2)
 \longrightarrow (A'\tensor B_1)\plus(A'\tensor B_2),\\
 \Gamma_i&\intjudge R_i:(A'\tensor B_i)\lmark C_i,&
 q_i&:=R_i(S\tensor b_i);
 \\[0.4ex]
 K_R&=
 \oplusmap{\alpha}{R_1}{\beta}{R_2}
   \bigl(\delta_R(T\tensor S)\bigr),&
 \delta_R&:(B_1\plus B_2)\tensor A'
 \longrightarrow (B_1\tensor A')\plus(B_2\tensor A'),\\
 \Gamma_i&\intjudge R_i:(B_i\tensor A')\lmark C_i,&
 q_i&:=R_i(b_i\tensor S).
\end{aligned}
\tag{CS-shapes}
\]
Here $\Delta\intjudge S:A'$ and
$\Omega\intjudge T:B_1\plus B_2$.  Fix either row, write \(K\) and
\(\delta\) for its \(K_L,\delta_L\) or \(K_R,\delta_R\), and retain
its displayed definition of \(q_i\).  The contexts are pairwise
disjoint; $C_1,C_2$ are first-order; the sources are unrestricted;
$\alpha,\beta$ are static unit phases; and $R_i,S,T$ are arbitrary
normal results.  In both cases,
\[
 \Gamma_i,\Delta,b_i:B_i\intjudge q_i:C_i.        \tag{CS-ty}
\]
Let $\mathcal M_i$ be the canonical derivation of
\[
 M_i:=\mathsf{NF}_{\mathrm{LO}}(q_i),
 \qquad
 \Gamma_i,\Delta,b_i:B_i\intjudge M_i:C_i,
\]
and write
\[
 U_i:=
 \SEM{\Gamma_i,\Delta,b_i{:}B_i\intjudge M_i:C_i}_{\mathrm{NF}}
 :B_{\mathcal M_i}^-\longrightarrow B_{\mathcal M_i}^+ .
\]
These lower-$\Phi$ branch derivations determine the
shared-context formal consumer
\[
 \mathcal U_K:
 \Gamma_1,\Gamma_2,\Delta,p:B_1\plus B_2
 \intjudge K_p:C_1\plus C_2,
\]
where
\[
 K_p:=
 \begin{cases}
 \oplusmap{\alpha}{R_1}{\beta}{R_2}
   \bigl(\delta_L(S\tensor p)\bigr),&K=K_L,\\
 \oplusmap{\alpha}{R_1}{\beta}{R_2}
   \bigl(\delta_R(p\tensor S)\bigr),&K=K_R.
 \end{cases}
 \tag{CS-use}
\]
Put
\[
\begin{aligned}
 Z_{K,1}^{\pm}
   &:=
   B_{\mathcal M_1}^{\pm}
   \tensor\mathcal T_{\Gamma_2}^{\pm},\\
 Z_{K,2}^{\pm}
   &:=
   \mathcal T_{\Gamma_1}^{\pm}
   \tensor B_{\mathcal M_2}^{\pm},
\end{aligned}
\qquad
 B_{\mathcal U_K}^{\pm}
 :=
 Z_{K,1}^{\pm}\oplus Z_{K,2}^{\pm},
\]
and let
$j_i^\pm:Z_{K,i}^\pm\to B_{\mathcal U_K}^\pm$
be the standard direct-sum inclusions, so that
\[
 (j_i^\pm)^\dagger j_k^\pm=\delta_{ik}I,
 \qquad
 \sum_i j_i^\pm(j_i^\pm)^\dagger
 =I_{B_{\mathcal U_K}^\pm}.
 \tag{CS-part}
\]
The prescribed source-consumer map is
\[
\begin{aligned}
 U_{\mathcal U_K}
 &:=
 j_1^+\,\alpha
   (U_1\tensor Y_{\Gamma_2})(j_1^-)^\dagger\\
 &\quad+
 j_2^+\,\beta
   (Y_{\Gamma_1}\tensor U_2)(j_2^-)^\dagger .
\end{aligned}
\tag{CS-source}
\]
The shared context $\Delta$, arising from the single derivation of
$S$, remains in each alternative restriction.

Let $\mathcal T:(\Omega\intjudge T:B_1\plus B_2)$ be the canonical
argument derivation and $\rho_K$ the classified port-cut instance
determined by $\mathcal U_K$ and $\mathcal T$.  Let
$\bar d_{\delta,K}^\pm$ be the ambient boundary-ordering
permutation induced by $\delta$, set
$B_K^\pm:=(\bar d_{\delta,K}^\pm)^{-1}(B_{\rho_K}^\pm)$, and let
$d_{\delta,K}^\pm:B_K^\pm\xrightarrow{\;\cong\;}B_{\rho_K}^\pm$ be
its restriction to these spaces.  On active monomial labels it is,
respectively,
\[
 (a',i,b_i)\mapsto(i,a',b_i)\quad(\delta=\delta_L),\qquad
 (i,b_i,a')\mapsto(i,b_i,a')\quad(\delta=\delta_R),
\]
tensored with identities on all context coordinates.  Define
\[
 U_K:=
 (d_{\delta,K}^+)^{-1}\,
 \mathsf{PortCut}_{B_1\plus B_2}^{\rho_K}
   (\mathcal U_K,\mathcal T)\,
 d_{\delta,K}^- .
 \tag{CS-cut}
\]
The least-fresh convention makes $b_i$, and hence $q_i$, $M_i$,
and the clause, a function of $K$ and its canonical derivation.
\end{definition}

\paragraph{Reading and well-foundedness.}
The construction places the shared $S$ coordinates in both
alternative blocks $Z_{K,i}^\pm$ and closes the single derivation of $T$
once at the complete $B_1\plus B_2$ port.  Its recursive calls are
well founded because only $\plus$-map constructors contribute to
$\Phi$ (Definition~\ref{def:phi}):
\[
 \Phi(q_i)=\Phi(R_i)+\Phi(S)
 <1+\Phi(R_1)+\Phi(R_2)+\Phi(S)+\Phi(T)
 =\Phi(K).                                        \tag{CS-$\Phi$}
\]
Theorem~\ref{thm:lo-normalizer-total} then gives
\[
 \Phi(M_i)\leq\Phi(q_i)<\Phi(K).                  \tag{CS-NF}
\]

\begin{lemma}[Typed eta identity and unit]
\label{lem:semantic-eta-identity}
For every Source type $S$ and fresh $z:S$, let
\[
 i_S:=\mathsf{NF}_{\mathrm{LO}}(\eta_S(z)).
\]
There are canonical unitary repartitions
\[
 e_S^\pm:
 B_{i_S}^\pm\xrightarrow{\;\cong\;}\mathcal Y_S^\pm
\]
such that
\[
 e_S^+\,U_{i_S}
 =
 \mathsf{yank}_S\,e_S^- .
 \tag{Eta-id}
\]
At a fixed $\Phi$-stratum, suppose the selected packages for every
proper premise and lower-$\Phi$ generated branch encountered along a
marked consumer have already been constructed.  Then, for every
\textup{(C-var-whole)} instance at $S$, the recursively defined maps
$\mathsf{ug}_{\rho,S}^\pm$ of \textup{(C-var-place)} are isometric and
give a complete selected chart with coordinate action $W_\rho$ of
\textup{(PC-active)}.
\end{lemma}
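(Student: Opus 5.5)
The first claim, (Eta-id), is proved by induction on the Source type $S$, treating every first-order $P$ as a leaf. The second claim, isometry of $\mathsf{ug}_{\rho,S}^\pm$, is proved by the lexicographic recursion (Eta-order) that defines the graft. The leaf case is where everything starts. For first-order $P$, $\eta_P(z)$ normalizes to an irreducible term built from tensor lets and identity $\plus$-maps $\oplusmap{1}{\lam{x}{\eta_A(x)}}{1}{\lam{y}{\eta_B(y)}}$ applied to $z$. Its canonical derivation is interpreted by (Use-map) and (MapApply) with identity branches. By induction on the first-order structure, each completed block is $\mathsf{yank}$ on its summand tensored with $Y$ of an empty context. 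The tagged direct sum in (BW) therefore reassembles into the direct sum over the rig normal form, which is exactly $\mathsf{yank}_P$ as given in (VG). The maps $e_P^\pm$ are the fixed DNF distributivity permutations, so they are unitary.

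For compound types the argument follows the two non-atomic clauses of $\eta$. For $S_1\tensor S_2$, $i_S$ is a let-prefix that destructures $z$, followed by the pair of the two component identities (after the LO normalizer floats the inner prefixes, which changes nothing up to (X)). The $\tensor$-I clause tensors the component packages, and (Ten-pack) repartitions the $p$-port. Taking $e_S^\pm$ to be the tensor product of $e_{S_i}^\pm$ composed with the tensor-rig reordering gives (Eta-id), because $\mathsf{yank}_{S_1\tensor S_2}$ is the reordered tensor of the component yanks. For $S_1\lmark S_2$ the normal form is $\lam{x}{\eta_{S_2}(z\,\eta_{S_1}(x))}$ normalized. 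Here the $\lmark$-I clause is a mere repartition, and the body is a variable-headed spine with one operand $i_{S_1}$ grafted into an $S_2$ identity. Corollary~\ref{cor:variable-spine-map} gives $U_{\mathcal R}\tensor\mathsf{yank}_{S_2}$ with $U_{\mathcal R}$ conjugate to $\mathsf{yank}_{S_1}$ by the inductive hypothesis. The outer $S_2$-eta applied to the spine residual is then a (C-var-whole) cut of the residual yank into $i_{S_2}$, and the typed snake (Y) collapses it. Composing these repartitions with the polarity repartition $\mathsf{sgn}(S_1)^*\tensor\mathsf{sgn}(S_2)$ produces $e_S^\pm$.

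For the unit-graft statement, induct on $(|\mathcal C|,\text{height of }S)$ and go through the rows of (Eta-graft) in order. The root \textsc{Var} row is (VS-sector) followed by a yank, and both pieces are unitary onto their images. The first-order non-root row is Lemma~\ref{lem:typed-graph-composition} with $\mathsf{Graph}_P(I)=\mathsf{yank}_P$. The tensor and implication rows reduce to smaller height, by the tensor identity above and by (C-var-app) respectively. The ordinary non-cut rows apply a fixed isometric insertion after the marked premise, which has strictly smaller $|\mathcal C|$. The shared-block row is a tagged orthogonal sum, and Lemma~\ref{lem:completion-chart} shows it is isometric. The (Eta-foreign) row is expected to be the main obstacle. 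There I must show that substituting $\widehat h_r^\pm$ into $\mathsf{Supp}_\tau$ yields isometries with the relation $\mathsf{ug}^+W_{\tau,S}=h^+_{\tau,S}$, where $W_{\tau,S}$ is the rebuilt coordinate action. The approach is to observe that $\widehat h_r^-$ and $\widehat h_r^+W_r^\dagger$ are isometric by the inductive hypothesis. Every $\mathsf{Supp}_\tau$ expression (graph link, spine insertion, structural transport, repartition, block sum) is built from isometry-preserving operations, using (Graph-isom) for graph links. Then (Graph-link-functor) gives $h^+_{\tau,S}=\mathsf{ug}^+_{\tau,S}W_{\tau,S}$. Finally, adjoining the placed graph of $Y_{\Xi_\rho}$ once by (TY) shows that the coordinate action is $W_\rho=\widetilde U_\rho\tensor Y_{\Xi_\rho}$, which is (PC-active).
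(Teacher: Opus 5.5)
Your proposal follows essentially the same route as the paper. You prove \textup{(Eta-id)} by induction on the Source type, assembling the first-order leaves from the variable yank, tensor interchange and orthogonal direct sum. You then prove the unit-graft assertion by the lexicographic order \textup{(Eta-order)}, row by row through \textup{(Eta-graft)}, handling \textup{(Eta-foreign)} by typed graph composition and \textup{(Graph-isom)}.

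One case is not covered by your blanket claim that every $\mathsf{Supp}_\tau$ is built from isometry-preserving operations: a foreign cut $\tau$ that itself selects \textup{(C-var-whole)} at another Source type $S'$. Its support is a unit graft, and the isometry of that graft is the present statement at $S'$, which may be taller than $S$. You therefore need to invoke the smaller-skeleton hypothesis explicitly, as the paper does, using the fact that the local consumer of $\tau$ is a proper subnetwork of $\mathcal C$.
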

\begin{proof}
The identity assertion is first proved by induction on
$S::=P\mid S_1\tensor S_2\mid S_1\lmark S_2$, with a subsidiary
induction on $P$ in the first-order case.  At fixed $\Phi$, prove the
unit-graft assertion separately by the lexicographic order
\textup{(Eta-order)}.

At the empty marked consumer skeleton, \textup{(VS-sector)} and the
typed snake give the canonical spine insertion and return the producer
package; the operand factors and $Y_{\Xi_\rho}$ are carried as in
\textup{(C-var-whole)}.

For first-order $P$, the subsidiary induction uses the variable yank
at $\base$, tensor interchange, and orthogonal direct sum at $\plus$;
the resulting graph is $\mathsf{Graph}_P(I)$.  The unit graft is
\textup{(FO-whole)} with identity coordinate action, and
Lemma~\ref{lem:typed-graph-composition} supplies its isometric
insertions.

For $S=S_1\tensor S_2$, the canonical eta derivation destructures and
re-pairs its input.  The fixed tensor-boundary repartition exposes the
two strict-subtype identity graphs.  The induction hypotheses, tensor
functoriality, and monoidal interchange reassemble them as
$\mathsf{yank}_{S_1\tensor S_2}$; the same component grafts followed
by \textup{(Ten-pack)} give the unit graft.

For $S=S_1\lmark S_2$, use
\[
 \eta_S(f)=
 \lambda a.\eta_{S_2}\bigl(f\,\eta_{S_1}(a)\bigr).
\]
The $S_1$ induction hypothesis supplies the operand identity,
\textup{(C-var-app)} retains the residual $S_2$ graph, and the $S_2$
induction hypothesis followed by the abstraction repartition gives
$\mathsf{yank}_{S_1\lmark S_2}$ and the unit graft.

At a nonprincipal non-cut constructor, follow the unique marked premise
and rebuild the same constructor.  This shortens the consumer skeleton,
and its fixed support insertion carries the chart equation outward.
Within the \textup{(C-var-whole)} unit graft, a completed shared
block uses the tagged direct sum of its component insertions.  At a
classified cut $\tau$ closing a distinct port,
\textup{(Eta-foreign)} grafts only the premise containing the $S$-mark
and rebuilds the selected support from that premise and the unchanged
premise packages.  If $\tau$ itself selects
\textup{(C-var-whole)} at another Source type, its local consumer is a
proper subnetwork and is therefore available by the smaller-skeleton
hypothesis.  For first-order closure this is
\textup{(Graph-functor)}; links at distinct ports commute by
Lemma~\ref{lem:typed-graph-composition}.  The remaining clauses use
tensor, direct-sum, and fixed-repartition functoriality.  Hence each
$\mathsf{ug}_{\rho,S}^\pm$ is obtained from isometries by typed graph
composition, tensor, direct sum, or fixed repartition, and
\[
 U_\rho\,\mathsf{ug}_{\rho,S}^-
 =
 \mathsf{ug}_{\rho,S}^+\,W_\rho .
\]
This is the required complete chart.
\end{proof}

\begin{lemma}[Eta compatibility at a fresh application]
\label{lem:classified-cut-eta}
Let
\[
 \Delta\intjudge r:A\lmark C
\]
be an occurrence in a Source-generated normal form or in a
lower-$\Phi$ branch generated during its interpretation, with $A$ a
Source type, $r$ an irreducible neutral, and $C$ first-order.  Let
$r^\eta$ be its eta-exposed representative and choose fresh
$z:A$.  Put
\[
 M:=\mathsf{NF}_{\mathrm{LO}}(r\,z),
 \qquad
 M^\eta:=\mathsf{NF}_{\mathrm{LO}}(r^\eta z),
\]
with canonical derivations $\mathcal M$ and
$\mathcal M^\eta$.  The type-directed eta construction determines
canonical unitary conclusion-chart transports
\[
 \epsilon_{r,z}^\pm:
 B_{\mathcal M}^\pm
 \xrightarrow{\;\cong\;}
 B_{\mathcal M^\eta}^\pm
\]
such that
\[
 \epsilon_{r,z}^+\,U_{\mathcal M}
 =
 U_{\mathcal M^\eta}\,\epsilon_{r,z}^- .
 \tag{Cut-$\eta$}
\]
\end{lemma}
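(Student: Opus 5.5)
The plan is to compare the two normal forms $M$ and $M^\eta$ structurally, by induction on the Source type $A$ together with the length of the neutral spine of $r$. The typed eta unit of Lemma~\ref{lem:semantic-eta-identity} supplies the transport. I would not show that $M$ and $M^\eta$ are syntactically $\approx$-equal; in general they are not, since eta-expansion of a tensor argument exposes extra lets. I would instead show that $M^\eta$ is obtained from $M$ by grafting the identity normal form $i_A=\mathsf{NF}_{\mathrm{LO}}(\eta_A(z))$ at the argument port of $r$, together with identity normal forms at the residual result ports.

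First, write $r$ as an irreducible neutral. The relevant cases are a variable-headed spine $x\,R_1\cdots R_j$ or an applied/atom head. Then $r\,z$ is already irreducible except possibly for the result classification, so $M=r\,z$ and $\mathcal M$ is the variable-spine derivation with one more operand, the variable $z$. By Corollary~\ref{cor:variable-spine-map}, $U_{\mathcal M}$ is $\bigl(\bigotimes_iU_i\bigr)\tensor U_z\tensor\mathsf{yank}_C$ in the chart $\omega^\pm$, with $U_z=\mathsf{yank}_A$. On the other side, Proposition~\ref{prop:focused-representative} gives $r^\eta=\lambda a.N$, with $N$ the normal form of $\eta_C(r\,\eta_A(a))$. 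Then $r^\eta z$ beta-reduces to $N[z/a]$. Normalizing this, using determinacy (Theorem~\ref{thm:determinacy}) to choose the convenient reduction order, yields $\eta_C$ applied to the spine $r\,i_A[z]$, up to let-floating of the prefixes of $i_A$. Since $C$ is first-order, the outer $\eta_C$ is a first-order identity unit, handled by the $P$ row of (Eta-graft).

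Second, compute $U_{\mathcal M^\eta}$ with (C-var-app). Its operand factor is now $U_{i_A}$ in place of $\mathsf{yank}_A$. The residual result is closed by the first-order whole-port cut with $\mathsf{Graph}_C(I)$, which by Lemma~\ref{lem:typed-graph-composition} contributes the identity coordinate action. (Eta-id) gives unitary repartitions $e_A^\pm$ with $e_A^+U_{i_A}=\mathsf{yank}_A e_A^-$, and the analogous statement for $C$. Define
\[
\epsilon_{r,z}^\pm:=(\omega^{\eta,\pm})^{-1}\bigl(I\tensor(e_A^\pm)^{-1}\tensor e_C^{\pm}{}^{-1}\bigr)\omega^\pm,
\]
where the first identity factor acts on the operand sectors. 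Intertwining then follows factorwise from (VS-map) and (Eta-id). Any let prefixes exchanged along the way are absorbed by the canonical $\approx$-transports (independent-let exchange acting as Fubini on disjoint cuts, via Lemma~\ref{lem:canonical-normal-retyping} and naturality, Lemma~\ref{lem:classified-cut-naturality}).

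Third, handle the remaining neutral shapes. An applied-$\plus$-map head has a sum result, so it lands in the first-order case $C$, and the argument $z$ sits inside a completed block. Here I would apply the same construction componentwise using (CN-block) and Lemma~\ref{lem:completion-chart}. The induction on the lower-$\Phi$ branches is justified by (Use-$\Phi$). An atom head forces $A$ first-order, and then $i_A$ has graph $\mathsf{Graph}_A(I)$, so the composite with (EndAct) is unchanged.

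The main obstacle is the first step: establishing that $\mathsf{NF}_{\mathrm{LO}}(r^\eta z)$ really has the grafted shape, because tensor eta-expansion in a normal context may trigger let-floaters and (H$^*$) across the spine. I would first try to prove a small lemma by induction on $A$: for a fresh variable $z$, $\mathsf{NF}_{\mathrm{LO}}(E[\eta_A(z)])\approx L[E[R]]$, where $L[R]=i_A$ and $E$ is the spine context. This uses linear let extrusion (Lemma~\ref{lem:linear-let-extrusion}), with every crossed binder independent because $z$ is fresh. If that fails, the fallback is to bypass syntax: evaluate the eta graft directly through the (Eta-graft) recursion, whose defining rows already encode exactly this placement, and read $\epsilon_{r,z}^\pm$ off (C-var-place).
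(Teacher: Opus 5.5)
Your proposal follows essentially the paper's own route. Normalizing $r^\eta z$ exposes $r$ applied to the type-directed identity at $A$, followed by the identity at $C$; \textup{(Eta-id)} replaces these by $\mathsf{yank}_A$ and the retained residual $\mathsf{yank}_C$; and $\epsilon_{r,z}^\pm$ is the composite of the two application charts into a common chart, i.e.\ the paper's $(c_\eta^\pm)^{-1}c^\pm$.

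One small correction: the ``applied-$\plus$-map head'' case in your third step is vacuous. An applied map has first-order sum type, so it cannot head a neutral of function type $A\lmark C$; the only function-typed neutrals are variable-headed spines and bare atoms.
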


\begin{proof}
Follow the fixed normalization of $r^\eta z$.  Its outer
beta-step exposes the application of $r$ to the type-directed
identity at $A$, followed by the type-directed identity at $C$.
By \textup{(Eta-id)}, the former is the variable graph
$\mathsf{yank}_A$, while the latter preserves the residual
$\mathsf{yank}_C$ factor.  The typed snake equation therefore
identifies the two active maps in a common fresh-application
chart.

Let
\[
 c^\pm:B_{\mathcal M}^\pm\longrightarrow H^\pm,
 \qquad
 c_\eta^\pm:B_{\mathcal M^\eta}^\pm\longrightarrow H^\pm
\]
be the resulting application charts.  The preceding calculation
gives
\[
 c^+U_{\mathcal M}(c^-)^{-1}
 =
 c_\eta^+U_{\mathcal M^\eta}(c_\eta^-)^{-1}.
\]
Therefore
\[
 \epsilon_{r,z}^\pm:=(c_\eta^\pm)^{-1}c^\pm
\]
has the required property.  The tensor and first-order-sum stages
of the type-directed identities construct these charts by tensor
and orthogonal direct sum.
\end{proof}

\subsection{Closure Lemmas}
\label{subsec:closure}

\begin{lemma}[Closure under boundary transport]
\label{lem:unitary-transport}
If $u:H\to K$ is unitary and
$\phi:H\to H'$, $\psi:K\to K'$ are unitary isomorphisms, then
$\psi \circ u \circ \phi^{-1}:H'\to K'$ is unitary.
\end{lemma}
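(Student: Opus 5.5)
The plan is to verify the two defining properties of a unitary directly: $\psi u\phi^{-1}$ is a linear bijection $H'\to K'$, and it preserves inner products (equivalently, its adjoint is its inverse). No earlier result of the paper is needed beyond the fact that, in $\mathbf{FdHilb}$, unitaries are closed under composition and inversion.

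First, since $\phi$ is a unitary isomorphism, $\phi^{-1}=\phi^\dagger$, and $\phi^{-1}$ is itself unitary $H'\to H$. Composition of linear maps is linear, so $w:=\psi\circ u\circ\phi^{-1}$ is a linear map $H'\to K'$. Its adjoint is computed factorwise:
\[
 w^\dagger=(\phi^{-1})^\dagger u^\dagger\psi^\dagger=\phi\,u^{-1}\psi^{-1}.
\]
Then $w^\dagger w=\phi u^{-1}\psi^{-1}\psi u\phi^{-1}=I_{H'}$, and symmetrically $ww^\dagger=I_{K'}$. This gives unitarity.

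Alternatively, one can check isometry and surjectivity separately. The isometry property is $\langle wx,wy\rangle=\langle x,y\rangle$, obtained by peeling off one factor at a time. Surjectivity holds because $w$ is a composite of bijections.

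There is no genuine obstacle. The only point needing care is bookkeeping: the relevant spaces here are derivation-selected subspaces of flat envelopes, not whole envelopes. One must therefore make sure $\phi$ and $\psi$ are unitary onto exactly $H'$ and $K'$, so that $\phi^{-1}$ is defined on all of $H'$. This is precisely the hypothesis "unitary isomorphisms" in the statement. In later applications it is supplied by the restricted transports, for example $\tau^\pm_{\mathcal A}$ or $\vartheta^\epsilon_N$, being restrictions of ambient basis permutations to the selected sectors and their images.
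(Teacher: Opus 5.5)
Your proof is correct and matches the paper's argument. The paper dismisses this lemma, together with closure under $\oplus$ and under composition, as immediate from $(AB)^\dagger=B^\dagger A^\dagger$; that is exactly your factorwise adjoint computation giving $w^\dagger w=I_{H'}$ and $ww^\dagger=I_{K'}$.
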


\begin{lemma}[Closure under $\oplus$]
\label{lem:oplus-closure}
If $u$ and $v$ are unitary, then $u \oplus v$ is unitary.
\end{lemma}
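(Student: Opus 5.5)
The statement is the elementary Lemma~\ref{lem:oplus-closure}: in $\mathbf{FdHilb}$, if $u:H\to K$ and $v:H'\to K'$ are unitary, then $u\oplus v:H\oplus H'\to K\oplus K'$ is unitary.

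I will work directly with the biproduct structure. Write $\iota_1,\iota_2$ for the canonical injections into $H\oplus H'$ and $\iota_1',\iota_2'$ for those into $K\oplus K'$. These satisfy the orthogonality and exhaustion identities already recorded as \textup{(BW)}:
\[
 \iota_i^\dagger\iota_j=\delta_{ij}I,
 \qquad
 \sum_i\iota_i\iota_i^\dagger=I .
\]
By definition,
\[
 u\oplus v=\iota_1'u\iota_1^\dagger+\iota_2'v\iota_2^\dagger .
\]

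The computation has three steps.

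\begin{enumerate}
\item Compute the adjoint. Because the dagger reverses composition and is additive,
\[
 (u\oplus v)^\dagger=\iota_1u^\dagger\iota_1'^\dagger+\iota_2v^\dagger\iota_2'^\dagger ,
\]
which is $u^\dagger\oplus v^\dagger$.

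\item Check $(u\oplus v)^\dagger(u\oplus v)=I_{H\oplus H'}$. Expand the product into four terms. The cross terms contain $\iota_i'^\dagger\iota_j'$ with $i\neq j$, so they vanish by orthogonality. The diagonal terms reduce to $\iota_1u^\dagger u\,\iota_1^\dagger+\iota_2v^\dagger v\,\iota_2^\dagger$. Since $u^\dagger u=I$ and $v^\dagger v=I$, this equals $\sum_i\iota_i\iota_i^\dagger=I$.

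\item Check $(u\oplus v)(u\oplus v)^\dagger=I_{K\oplus K'}$. The same expansion applies, now using $uu^\dagger=I$, $vv^\dagger=I$, and exhaustion for the injections $\iota_i'$.
\end{enumerate}

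Nothing here is hard. The only point needing care is bookkeeping: the direct sum must be the orthogonal biproduct, so that the injections are isometries with mutually orthogonal ranges. In $\mathbf{FdHilb}$ this holds by construction, and it is exactly the content of \textup{(BW)}. The same argument extends verbatim to finite direct sums and to the phased sums $\alpha u\oplus\beta v$ with $|\alpha|=|\beta|=1$ used in \textup{(SU)}: the phases enter as $\bar\alpha\alpha=1$ in the diagonal terms and are killed by orthogonality in the cross terms.
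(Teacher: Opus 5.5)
Your proof is correct and takes the paper's route. The paper says the lemma is immediate from $(AB)^\dagger=B^\dagger A^\dagger$ and $(u\oplus v)^\dagger=u^\dagger\oplus v^\dagger$; your three steps derive that adjoint identity and expand both products blockwise, using the orthogonality and exhaustion of the biproduct injections.
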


\begin{lemma}[Closure under composition]
\label{lem:comp-closure}
If $u:H\to K$ and $v:K\to L$ are unitary, then
$v \circ u:H\to L$ is unitary.
\end{lemma}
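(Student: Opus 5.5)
The statement to prove is the last closure lemma (closure under composition): if $u:H\to K$ and $v:K\to L$ are unitary, then $v\circ u:H\to L$ is unitary. I will read ``unitary'' as in the rest of the appendix: a linear map between finite-dimensional Hilbert spaces with $w^\dagger w=I$ on its domain and $ww^\dagger=I$ on its codomain. Equivalently, it is an isometric isomorphism.

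The plan is a two-line adjoint computation, using only that the Hilbert-space adjoint reverses composition, $(v\circ u)^\dagger=u^\dagger\circ v^\dagger$. First I will compute
\[
(v u)^\dagger(v u)=u^\dagger(v^\dagger v)u=u^\dagger I_K u=u^\dagger u=I_H .
\]
Then I will compute
\[
(v u)(v u)^\dagger=v(u u^\dagger)v^\dagger=v I_K v^\dagger=v v^\dagger=I_L .
\]
Together these give both defining identities for $v\circ u$.

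The only point needing care is that the intermediate space $K$ really is shared, so that the middle identities $v^\dagger v=I_K$ and $uu^\dagger=I_K$ are stated on the same space. In the applications (for example, composing the factors of \textup{(FO-whole)} or the transports in Lemma~\ref{lem:unitary-transport}), the spaces are the selected sector subspaces. The composition must therefore be taken with the codomain of $u$ equal to the domain of $v$ as subspaces, not merely as isomorphic ambient envelopes. I will state this explicitly.

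There is no real obstacle. An alternative I could use is that an isometric bijection composed with another is still an isometric bijection, since norms are preserved at each stage and bijectivity composes.
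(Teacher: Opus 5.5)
Your proof is correct and takes the same approach as the paper, which simply remarks that the lemma is immediate from $(AB)^\dagger=B^\dagger A^\dagger$. Your two computations of $(vu)^\dagger(vu)=I_H$ and $(vu)(vu)^\dagger=I_L$ just write that observation out, and your side remark that the intermediate selected sector $K$ must literally coincide is a harmless clarification.
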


\noindent
All three are immediate from $(AB)^\dagger = B^\dagger A^\dagger$
and $(u \oplus v)^\dagger = u^\dagger \oplus v^\dagger$.

\subsection{Inductive Proof of Unitarity}
\label{subsec:inductive}

\begin{theorem}[Source-normal-form boundary unitarity]
\label{thm:nf-boundary-unitarity}
Let $\Gamma\sjudge t:A$, put
\[
 N:=\mathsf{NF}_{\mathrm{LO}}(t^\circ),
\]
and let $\mathcal N:(\Gamma\intjudge N:A)$ be its canonical
derivation.  Then
$\SEM{\Gamma\intjudge N:A}_{\mathrm{NF}}$ is unitary between the
branch-paired boundary spaces selected by \(\mathcal N\), and its
construction supplies a complete graph chart in the sense of
Definition~\ref{def:complete-boundary-chart}, whose coordinate spaces
satisfy \textup{(Chart-rank)} and the source-coordinate convention
following \textup{(Chart)}.  Simultaneously, every ordered finite
family of distinct unmatched first-order
\textsc{Var} leaves has the graph-compatible joint exposure
\textup{(FO-open)}, and every first-order result has the
graph-compatible non-block or componentwise display
\textup{(FO-result)}.  The same assertions hold for every lower-$\Phi$
canonical branch derivation generated recursively by
$\mathsf{MapApply}$ or coherent sharing during this construction.
Whenever one of these derivations has singleton first-order shape
$\mathcal E:(x:P\intjudge M:P)$, its chart is the
$(\mathsf{src}_P,\mathsf{tgt}_P)$ root chart of
\textup{(QGraph-read)}.
\end{theorem}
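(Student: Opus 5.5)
The proof is a simultaneous induction with two layers. The outer layer is strong induction on $\Phi$. This is needed because the $\mathsf{MapApply}$ and coherent-sharing clauses recurse into branch derivations $\mathcal M_i$ that are not premises of $\mathcal N$; by \textup{(Use-$\Phi$)} and \textup{(CS-NF)} these lie at strictly smaller $\Phi$. Within a fixed $\Phi$-stratum, I induct on the height of the canonical derivation $\mathcal N=\mathsf{CanDer}_{\Gamma,A}(N)$ supplied by Lemma~\ref{lem:canonical-normal-retyping}. Corollary~\ref{cor:source-normalization} makes this available: it gives that $N$ is irreducible, contains no coherent-sum former, has every $\plus$-map with closed operands, and has no higher-order inverse distributor. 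Lemma~\ref{lem:nf-iff-irreducible} then puts $N$ in the grammar of Table~\ref{tab:nf-grammar}. The induction hypothesis is the full package in the statement, not unitarity alone: a unitary map, a complete graph chart (Definition~\ref{def:complete-boundary-chart}), \textup{(Chart-rank)}, the source-coordinate convention, joint exposure \textup{(FO-open)} of all unmatched first-order leaves, and the display \textup{(FO-result)}. Unitarity by itself is too weak, because the first-order cut clauses consume exactly these charts.

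The case split follows the rows of Table~\ref{tab:sem-compositional}.
\begin{itemize}
\item \textsc{Var}, \textsc{Exp} and structural atoms are base cases. Their maps are $\mathsf{yank}_A$, $\mathsf{Graph}_P(V_{\theta,J})$ and $\mathsf{Rig}_{\mathfrak S}(s)$, each unitary between its displayed spaces. Their charts come directly from \textup{(VG)}, \textup{(QGraph-read)} and \textup{(SG-read)}, which also give the singleton $(\mathsf{src}_P,\mathsf{tgt}_P)$ root-chart clause. The dimension count $\kappa$ is a direct check.
\item $\lmark$-I is a fixed polarity repartition of the premise, so Lemma~\ref{lem:unitary-transport} applies.
\item $\tensor$-I is the tensor product of two unitaries, with tensored supports.
\item The bare $\plus$-Map is the completed orthogonal sum \textup{(SU)}. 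Lemma~\ref{lem:oplus-closure} makes it unitary, and Lemma~\ref{lem:completion-chart} assembles its chart.
\item The eliminators ($\lmark$-E, $\tensor$-E, applied $\plus$-map) all reduce to a single classified $\mathsf{PortCut}$. The application case goes through \textup{(C-var-app)} or \textup{(EndAct)}. Tensor elimination goes through \textup{(Ten-pack)} and then \textup{(TenCut)}. The applied map goes through the source consumer $\mathcal U_F$ of Definition~\ref{def:map-source-consumer}; it is unitary by \textup{(Use-part)}, since its blocks are built from the lower-$\Phi$ branches $\mathcal M_i$. A distributor on the result pair routes instead to the coherent-sharing clause, Definition~\ref{def:coherent-sharing-clause}, which is conjugated by $d^\pm_{\delta,K}$.
\end{itemize}
In every eliminator case, Lemma~\ref{lem:classifier-totality} guarantees a unique applicable clause. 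Lemma~\ref{lem:whole-port-collapse} then yields a unitary with a complete chart and preserves the exposures of all other ports, since its hypotheses are precisely the induction package for the proper premises and the lower-$\Phi$ branches.

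The main obstacle is checking that the hypotheses of Lemma~\ref{lem:whole-port-collapse} are actually met in the block cases. There the completed consumer needs one common residual coordinate $Z_K$ for both branches, as in \textup{(CS-common)}. I would first establish \textup{(Chart-rank)} as part of the induction: each constructor multiplies, or sums in equal-scale blocks, the quantities $\kappa$. The $\oplus$ case needs the common scale \textup{(Map-scale)} of Lemma~\ref{lem:source-map-scale}, because it forces the two completed blocks to have matching dimensions. With \textup{(Chart-rank)} in hand, Lemma~\ref{lem:shared-source-chart} supplies \textup{(CS-root-chart)}. Balance from Corollary~\ref{cor:source-normalization} ensures that every retained mark appears in both components.

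For the \textup{(C-var-whole)} instances, which are the higher-order ports, I would invoke Lemma~\ref{lem:semantic-eta-identity} to obtain isometric insertions and coordinate action $W_\rho$. Any neutral function operand of a blocked map is replaced through Lemma~\ref{lem:classified-cut-eta}, which intertwines the two denotations and so transports unitarity and the chart. Finally, the generated branches inherit the Source restrictions, as noted in Lemma~\ref{lem:classifier-totality}, so every assertion propagates to them and the outer induction closes.
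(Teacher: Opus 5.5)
Your architecture matches the paper's: outer strong induction on $\Phi$, structural induction at fixed $\Phi$, the nested $(|\mathcal D|,|\mathcal C|)$ and \textup{(Eta-order)} recursions inside cuts, the full chart package as hypothesis, and \textup{(Chart-rank)} via Lemma~\ref{lem:source-map-scale}. However, the induction package you carry is missing an invariant. Corollary~\ref{cor:source-normalization} gives absence of coherent-sum formers and additive balance only for the root $N=\mathsf{NF}_{\mathrm{LO}}(t^\circ)$. The generated branches $\mathsf{NF}_{\mathrm{LO}}(R_i\,z_i)$ and $\mathsf{NF}_{\mathrm{LO}}(q_i)$ are not Source expansions, and the inheritance remark in Lemma~\ref{lem:classifier-totality} concerns cut classification, not balance. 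Yet Lemma~\ref{lem:whole-port-collapse} needs balance at exactly those branches, and again at branches of branches. It uses balance to make map-source operands closed, so that Lemma~\ref{lem:shared-source-chart} applies with $\Theta=\varnothing$. It also uses it to guarantee that every retained mark is present in every component of a completed consumer. So you must add ``contains no coherent-sum former and is balanced'' to the simultaneous induction, and discharge it before invoking Lemma~\ref{lem:whole-port-collapse}:
\begin{enumerate}
\item $R_i$ is closed because it is an operand of a balanced map.
\item In coherent sharing, $R_i$ is closed and $S$ is balanced by the induction hypothesis.
\item Hence $R_i\,z_i$ and $q_i$ are balanced and sum-former-free.
\item Lemmas~\ref{lem:reduction-preserves-source-discipline} and~\ref{lem:lo-restores-balance} transfer this to their normal forms, which lie at smaller $\Phi$ by \textup{(Use-$\Phi$)} and \textup{(CS-NF)}.
\end{enumerate}

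Second, the singleton root-chart clause is not a base-case fact. A derivation $(x:P\intjudge M:P)$ has arbitrary normal $M$: the atom cases do not have this judgment shape, and \textsc{Var} covers only $M=x$. You need to follow the leaf $x:P$ outward:
\begin{enumerate}
\item \textup{(QGraph-read)} with $V=I$ at the leaf;
\item tensoring at tensor steps;
\item tagged orthogonal sums at map, applied-map, and coherent-sharing blocks, with exhaustion from \textup{(BW)}, \textup{(FO-part)}, \textup{(Use-part)}, and \textup{(CS-part)};
\item composition in the common $\mathsf D_P$ coordinate, or its family version, at each cut.
\end{enumerate}
Then observe that the singleton judgment has no unmatched external factor. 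So the terminal coordinate space is exactly $\sem P$, and the end maps are the restrictions of $\mathsf{src}_P$ and $\mathsf{tgt}_P$.

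A smaller point: Lemma~\ref{lem:classified-cut-eta} is not used to replace neutral operands in the semantic recursion. The recursive object remains $\mathsf{NF}_{\mathrm{LO}}(R_i\,z_i)$, whose denotation comes directly from the outer $\Phi$-induction.
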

\begin{proof}
Corollary~\ref{cor:source-normalization} supplies the internally typed
normal form \(N\), excludes coherent-sum formers, and shows that every
\(\plus\)-map in \(N\) has closed operands.  Consequently, a tensor
elimination cannot place its two bound variables in different
branch-map premises.

We prove the simultaneous assertion by outer strong induction on
$\Phi(\mathcal N)$.  At a fixed $\Phi$-stratum, ordinary constructor
clauses use structural induction on the canonical normal derivation.
Each classified port cut is handled by the nested lexicographic
induction
\[
  \bigl(|\mathcal D|,|\mathcal C|\bigr)
\]
of Lemma~\ref{lem:whole-port-collapse}.  Within
\textup{(C-var-whole)}, typed unit grafts use
\textup{(Eta-order)}.  Proper canonical premises are covered by the
structural induction hypothesis.  We simultaneously
maintain \textup{(Chart-rank)}.  Variables and closed endpoints have
the displayed graph rank; tensor introduction multiplies ranks, and
$\lmark$-introduction only repartitions the boundary.  A cut at $P$
contracts one complete $\mathsf D_P$ coordinate, so
\[
 \frac{\kappa(\Gamma,p:P;Q)\,\kappa(\Delta;P)}{d(P)}
 =\kappa(\Gamma,\Delta;Q).
 \tag{Cut-rank}
\]
Tensor elimination is the same calculation with $P=A\tensor B$.

For a Source-generated map
$(A_1\plus A_2)\lmark(C_1\plus C_2)$,
Lemma~\ref{lem:source-map-scale} supplies \textup{(Map-scale)}.
Writing $a_i=d(A_i)$ and $c_i=d(C_i)$, it gives
\[
 \sqrt{a_1c_1}+\sqrt{a_2c_2}
 =\sqrt{(a_1+a_2)(c_1+c_2)},
\]
which proves \textup{(Chart-rank)} for the completed map block.
For a coherent-sharing block, put $d_i=d(B_i)$.  By
\textup{(Map-scale)}, write
$d(C_i)=\lambda d(A')d_i$ in either row of \textup{(CS-shapes)}.
The lower-$\Phi$ branch hypotheses then have ranks $d_i r$, where
$r=\sqrt{d(\Delta)\lambda d(A')}$; their completed sum has rank
$(d_1+d_2)r$.  Cutting the scrutinee at $B_1\plus B_2$ gives the rank
of the Source conclusion.  These are the only non-atomic completed
blocks introduced by Source expansion; recursively generated branches
inherit the calculation.

We also simultaneously
maintain additive balance for every generated normal form.  In a
$\mathsf{MapApply}$ branch, \(R_i\) is closed because it is an operand
of a balanced map.  In a coherent-sharing branch, \(R_i\) is closed
and every map in \(S\) is balanced by the current induction hypothesis.
Thus the formal terms \(R_i\,z_i\) and the \(q_i\) of
\textup{(CS-shapes)} contain no coherent-sum former and are balanced.
Lemmas~\ref{lem:reduction-preserves-source-discipline} and
\ref{lem:lo-restores-balance} give the same properties for their
normal forms, whose canonical derivations lie at strictly smaller
\(\Phi\) by \textup{(Use-$\Phi$)} and \textup{(CS-NF)}.  Thus
Lemma~\ref{lem:whole-port-collapse} is invoked only after all of its
hypotheses have been discharged.

\emph{Selected charts and joint first-order exposure.}
The source supports are generated by \textup{(Graph-support)}:
variables and endpoints use \textup{(VG)}, \textup{(QGraph)}, and
\textup{(SG)}; tensor rules tensor them; completion uses
\textup{(Graph-\allowbreak support-\allowbreak block)}; and a cut uses
Lemma~\ref{lem:typed-graph-composition}.  These clauses maintain the
source-index order: \textup{(Graph-link)} removes only its matched
index, while tensor, abstraction, completion, and \textup{(FO-carry)}
apply their fixed conclusion reorderings.  The higher-order cut clauses
retain the same order by the final higher-order case of
Lemma~\ref{lem:whole-port-collapse}.  At each Source block,
Lemma~\ref{lem:shared-source-chart} therefore gives the common residual
coordinate; \textup{(CS-rank)} checks its dimension.  The completed
branch charts reassemble by \textup{(BW)}.

At a classified first-order cut, partition the other marked leaves
between producer and consumer.  Lemma~\ref{lem:whole-port-collapse}
closes the selected port and preserves their joint exposure by
\textup{(FO-carry)}.  If the marked path crosses a nested cut, that cut
is constructed first with the remaining marks retained, after which the
enclosing rule is rebuilt.

For the root clause, follow the actual \textsc{Var} leaf $x:P$ to the
actual conclusion.  At the leaf the claim is \textup{(QGraph-read)} with
$V=I$.  Tensor steps tensor the coordinate maps, while completed map,
applied-map, and coherent-sharing blocks take their tagged orthogonal
sums.  The equations
\textup{(BW)}, \textup{(FO-part)}, \textup{(Use-part)}, and
\textup{(CS-part)} give joint exhaustion.  Structural and quantum
endpoints use \textup{(SG-read)} and \textup{(QGraph-read)}.  At an
eliminator, first construct the inner cut, rebuild the enclosing rule, and
continue outward.  The non-block case is composition in the common
$\mathsf D_P$ coordinate by \textup{(FO-whole)}; the family case is its
orthogonal, jointly exhaustive assembly by
\textup{(FO-family-cut)}--\textup{(FO-family-close)}.
Applied-map and coherent-sharing branches are available at smaller $\Phi$
by \textup{(Use-$\Phi$)} and \textup{(CS-NF)}.

At the singleton judgment there is no unmatched external factor.  The
first-order type recursion---one coordinate at $\base$, tensor product at
$\tensor$, and orthogonal sum at $\plus$---therefore makes the terminal
coordinate space exactly $\sem P$.  The initial and terminal maps are
$\mathsf{src}_P\!\upharpoonright_{B_{\mathcal E}^-}$ and
$\mathsf{tgt}_P\!\upharpoonright_{B_{\mathcal E}^+}$.

The grammar overlaps introduce no further cases:
$R::=V\mid E$ and $N::=R$ are inclusions; all-value pairs and maps are
covered by their arbitrary-result cases; the overlapping \(x\,V\)
production is covered once as \(E\,R\); and applied $\plus$-Map is
treated separately.

\paragraph{Structural data.}
Canonical reassociations, symmetries, polarity repartitions,
classified cut charts, and primitive structural isomorphisms are unitary
basis permutations on the displayed selected spaces.  The eliminator
clauses are the derivation-indexed cuts of
Definition~\ref{def:classified-cut}; their typing and unitarity are
Lemma~\ref{lem:whole-port-collapse}.  Canonical boundary permutations reindex completed data blocks,
so they preserve the selected charts and local marked-port coordinates.

\paragraph{(\textsc{Var}) $x:A\intjudge x:A$.}
Table~\ref{tab:sem-compositional} gives
$\SEM{x:A\intjudge x:A}_{\mathrm{NF}}=\mathsf{yank}_A$.
Equation~\textup{(Y)} makes this the identity on the selected paired
$A$-port, hence it is unitary.  The maps $(c_A^-,c_A^+)$ give its
complete boundary chart with coordinate action $I$.

\paragraph{(\textsc{Atom}).}
The current atomic grammar has exactly two forms.
For a structural atom
$\cdot\intjudge s:T\lmark S$,
$\SEM{\cdot\intjudge s:T\lmark S}_{\mathrm{NF}}
=\mathsf{Rig}_{\mathfrak S}(s)$, the graph-sector map (SG-map): a
typed basis permutation between $B_s^-$ and $B_s^+$, hence unitary.
Equation~\textup{(SG-read)} supplies its complete chart with
coordinate action $\mathsf{Rig}(s)$.

For $\cdot\intjudge\expi{\theta}{J}:P\lmark P$, where $P$ is
first-order and $\theta\in\mathbb R_{\mathrm{static}}$, fix the certified
derivation $\mathcal J:(\ijudge J:P\lmark P)$ and let $X_P$ be the
canonical tagged tensor basis of $\sem P$.  Induction on $\mathcal J$
gives a signed basis involution
\[
 J\ket{x}=s_x\ket{j_{\mathcal J}x},\qquad
 j_{\mathcal J}^{\,2}=\id,\qquad
 s_x\in\{1,-1\},\qquad s_xs_{j_{\mathcal J}x}=1.
\]
The induction follows the six rules of
Table~\ref{tab:typing-rules-involutions}.
\textsc{Inv-Id} is immediate.  \textsc{Inv-$\sigma^\tensor$} and
\textsc{Inv-$\sigma^\plus$} exchange equal-type tensor factors,
respectively tagged summands, with sign $1$.  For
\textsc{Inv-Scalar}, the new orbit sign is $\alpha s_x$ with
$\alpha\in\{1,-1\}$, so
$(\alpha s_x)(\alpha s_{j_{\mathcal J}x})=\alpha^2=1$.  For
\textsc{Inv-$\tensor$}, with premise data $(j_{\mathcal J},s)$ and
$(j_{\mathcal K},s')$, the basis involution and sign are
\[
 (x,y)\longmapsto
 (j_{\mathcal J}x,\,j_{\mathcal K}y),
 \qquad
 s_{(x,y)}=s_x s'_y,
\]
and the two premise orbit equations give
$s_{(x,y)}s_{(j_{\mathcal J}x,\,j_{\mathcal K}y)}=1$.  For
\textsc{Inv-$\plus$}, the tagged basis is preserved branchwise;
the branch sign is $\alpha s_x$ or $\beta s'_x$, and
$\alpha^2=\beta^2=1$.  Hence every certificate constructor
preserves $j^2=\id$ and $s_xs_{jx}=1$.
The displayed signed basis equations show that $J^\dagger=J$ and
$J^2=I$.  Consequently
\[
 V_a:=e^{i\theta J}
     =\cos\theta\,I+i\sin\theta\,J
\]
is unitary, with $V_a^\dagger=e^{-i\theta J}$.  The atom's selected
boundary spaces and map are
\[
 B_{\mathcal A}^-:=\mathcal Y_P^-,
 \qquad
 B_{\mathcal A}^+:=(I\tensor V_a)\mathcal Y_P^+,
 \qquad
 U_{\mathcal A}:=\mathsf{Graph}_P(V_a).
 \tag{Exp-graph}
\]
Thus the quantum atomic case follows from \textup{(QGraph)}.
Equation~\textup{(QGraph-read)} supplies its complete chart with
coordinate action $V_a$.  These are all atomic cases.

\paragraph{(\textsc{$\tensor$-I})
$\Gamma_1\intjudge R_1:A$, $\Gamma_2\intjudge R_2:B$.}
\[
 \SEM{\Gamma_1,\Gamma_2\intjudge
       R_1\tensor R_2:A\tensor B}_{\mathrm{NF}}
 =
 \SEM{\Gamma_1\intjudge R_1:A}_{\mathrm{NF}}
 \tensor
 \SEM{\Gamma_2\intjudge R_2:B}_{\mathrm{NF}} .
\]
The tensor product of unitaries in $\mathbf{FdHilb}$ is unitary,
since $(U\tensor V)^\dagger(U\tensor V)
=(U^\dagger U)\tensor(V^\dagger V)$.  Canonical boundary
repartitions are covered by Lemma~\ref{lem:unitary-transport}.
The premise charts tensor, giving the complete product chart.
Here $R_1,R_2$ are arbitrary results, so this one case includes
value pairs and pairs containing neutral or blocked-map
components.

\paragraph{(\textsc{$\tensor$-E})
$\Gamma_1\intjudge R:A\tensor B$,
$\Gamma_2,x:A,y:B\intjudge N:C$.}
The semantic clause is
\[
 \SEM{\Gamma_1,\Gamma_2\intjudge
       \letpair{x}{y}{R}{N}:C}_{\mathrm{NF}}
 =
 \mathsf{TenCut}_{A,B}(\mathcal N,\mathcal R),
\]
for the displayed canonical premise derivations, per
\textup{(TenCut)} with the packaged consumer $\mathcal B_N$.
Normality excludes a constructor pair as $R$ --- that would be a
tensor beta-redex --- and typing excludes the other constructor forms at
outer tensor type.  If $A\tensor B$ is first-order, the cut is governed by
\textup{(FO-whole)} or \textup{(FO-family-cut)}.  At a higher-order
tensor port, the irreducible scrutinee has a variable or structural
head and is handled by \textup{(C-var-whole)} or \textup{(C-str)}.
The tensor and any shared-block descent are internal rows of
\textup{(Eta-graft)} in the former case.
Lemma~\ref{lem:whole-port-collapse} gives the required unitary
and, by \textup{(Cut-chart)}, the selected chart after the canonical
$A\tensor B$ repartition and typed snake.

\paragraph{(\textsc{$\lmark$-I})
$\Gamma,x:A\intjudge N:B$.}
The semantic table gives
\[
 \SEM{\Gamma\intjudge\lam{x}{N}:A\lmark B}_{\mathrm{NF}}
 =
 \SEM{\Gamma,x:A\intjudge N:B}_{\mathrm{NF}}
\]
under the canonical polarity repartition: abstraction changes only
which side names the typed $A$ port.  The induction hypothesis is
therefore the required unitary, and any coordinate repartition is
unitary by Lemma~\ref{lem:unitary-transport}.  The same repartition
transports the complete chart.

\paragraph{(\textsc{$\lmark$-E})
$\Gamma_1\intjudge E:A\lmark B$,
$\Gamma_2\intjudge R:A$.}
The normal application clause is
\[
 \SEM{\Gamma_1,\Gamma_2\intjudge E\,R:B}_{\mathrm{NF}}
 =
 \mathsf{AppCut}_{A}(\mathcal E,\mathcal R)
 =
 \mathsf{PortCut}_{A}(\mathcal A_E,\mathcal R),
\]
for the displayed canonical premise derivations.  Here $R$ is an
arbitrary normal result: a value, a neutral, a pair of results, or a blocked phased map.

Follow the neutral spine $E$ to its ultimate head; the grammar
leaves a variable, a structural atom, or a quantum atom.  (An
applied $\plus$-Map has sum type, not function type, and a bare
$\plus$-Map is the distinct case below.)  For an atom head $h$,
$\mathcal A_h=\mathcal V_h$.  For a variable head,
\textup{(C-var-app)} identifies the conclusion map in its
canonical chart as
$(U_1\tensor\cdots\tensor U_j\tensor U)\tensor\mathsf{yank}_B$,
unitary by the induction hypotheses, and explicitly retaining the
residual $B$-boundary.  For either atom head,
\textup{(EndAct)} acts on the complete operand package;
\textup{(C-str-app)} and \textup{(C-q-app)} are only its
variable-producer instances.  Nested atom applications occur inside
the proper operand derivation and are covered by the same clause.
In all cases Lemma~\ref{lem:whole-port-collapse} gives both the
well-typed unitary and its selected chart \textup{(Cut-chart)}.

\paragraph{(\textsc{$\plus$-Map})
$\Gamma_1\intjudge R_1:A\lmark C$,
$\Gamma_2\intjudge R_2:B\lmark D$.}
Here $C,D$ are first-order, the sources $A,B$ are unrestricted, and
$\alpha,\beta\in\mathbb C_{\mathrm{static}}$ have unit modulus.
The table gives
\[
 \begin{aligned}
 &\SEM{\Gamma_1,\Gamma_2\intjudge
   \oplusmap{\alpha}{R_1}{\beta}{R_2}:
   (A\plus B)\lmark(C\plus D)}_{\mathrm{NF}}
 \\
 &\qquad =
 \alpha\bigl(
   \SEM{\Gamma_1\intjudge R_1:A\lmark C}_{\mathrm{NF}}
   \tensor Y_{\Gamma_2}
 \bigr)
 \oplus
 \beta\bigl(
   Y_{\Gamma_1}\tensor
   \SEM{\Gamma_2\intjudge R_2:B\lmark D}_{\mathrm{NF}}
 \bigr) .
 \end{aligned}
\]
The direct sum is taken in the branch-paired spaces selected by this
derivation, including the table's inactive-branch completion.
Unit-modulus scaling and $\oplus$ preserve unitarity.  This covers
arbitrary result operands, including blocked maps; the all-value map
production is its value subcase.  The completed premise charts
reassemble by Lemma~\ref{lem:completion-chart} using
\textup{(BW)}.

\paragraph{(Applied \textsc{$\plus$-Map}).}
For
\[
 \Gamma_1\intjudge R_1:A\lmark C,\qquad
 \Gamma_2\intjudge R_2:B\lmark D,\qquad
 \Delta\intjudge E:A\plus B,
\]
the table applies the map through its source consumer.
Here $R_1,R_2$ are arbitrary normal results, $A,B$ are
unrestricted, $C,D$ are first-order, and
$\alpha,\beta\in\mathbb C_{\mathrm{static}}$ have unit modulus.
The distinct normal-form production has denotation
\[
 \SEM{\Gamma_1,\Gamma_2,\Delta\intjudge
   \oplusmap{\alpha}{R_1}{\beta}{R_2}\,E:C\plus D}_{\mathrm{NF}}
 =
 \mathsf{MapApply}_{A\plus B}(\mathcal F,\mathcal E).
\]
Normality excludes both a map-on-sum redex and a composed-map
redex.  Definition~\ref{def:map-source-consumer} constructs the
source action from the normalized formal branch applications; by
\textup{(Use-$\Phi$)}, their denotations are supplied by the outer
induction hypothesis, and their completed ranges are orthogonal
and jointly exhaustive by \textup{(Use-part)}.  Their complete
charts therefore reassemble by Lemma~\ref{lem:completion-chart}.
If this
ordered structural-endpoint traversal exposes a
distributor applied to a result pair, the denotation is
Definition~\ref{def:coherent-sharing-clause}, proved unitary
below.  In every other case
Lemma~\ref{lem:whole-port-collapse} gives the required unitary.

\emph{Coherent sharing.}
At the distributor-pair shape, the applied-map denotation is
Definition~\ref{def:coherent-sharing-clause}.  Recall $q_i$,
$M_i$, $U_i$, and the bounds
\textup{(CS-$\Phi$)} and \textup{(CS-NF)}.

By \textup{(CS-NF)}, each $\mathcal M_i$ is covered by the outer
induction hypothesis, so each $U_i$, and hence each completed
block map $\alpha(U_1\tensor Y_{\Gamma_2})$ and
$\beta(Y_{\Gamma_1}\tensor U_2)$, is unitary.
Equation~\textup{(CS-part)} gives orthogonal, jointly exhaustive
ranges at both polarities, so \textup{(CS-source)} is unitary by
Lemma~\ref{lem:oplus-closure}.
Lemma~\ref{lem:whole-port-collapse} then makes
$\mathsf{PortCut}_{B_1\plus B_2}^{\rho_K}
 (\mathcal U_K,\mathcal T)$
unitary; a nested coherent-sharing occurrence inside $T$ is a
proper lower-$\Phi$ occurrence covered by the outer hypothesis.
Finally, $d_{\delta,K}^\pm$ are unitary basis permutations, so
$U_K$ is unitary.  Lemma~\ref{lem:completion-chart} reassembles the
complete chart through \textup{(CS-part)}, and
$d_{\delta,K}^\pm$ reindex that chart.  When a quantum endpoint
mixes summands, its source consumer is cut as one operator; its
orthogonal branch restrictions retain the shared $S,\Delta$
coordinates.

These cases cover every production occurring in a Source normal form
and in the lower-$\Phi$ branch derivations generated during its
interpretation.  Corollary~\ref{cor:source-normalization} excludes the
coherent-sum-former production and supplies balance at the root; the
generated-branch argument above supplies both properties recursively.
Hence the cases establish the unitary, selected-chart, and
joint-exposure assertions.
\end{proof}

\begin{lemma}[Administrative transports preserve ports]
\label{lem:administrative-port-locality}
The canonical transport of an alpha-renaming, an exchange of
independent tensor lets, or an equality of evaluated phases fixes
every matched-port coordinate and every branch tag.
Consequently, for a conversion strictly inside the producer of a
classified cut, the induced transport enters \textup{(CN-active)}
acting only on the producer factors of the active map, with all
consumer factors transported by the identity; dually for a
conversion strictly inside the consumer.  For block instances, the
transport commutes with the canonical block inclusions, which is
\textup{(CN-block)}.
\end{lemma}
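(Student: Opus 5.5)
The plan is to treat the three generators of $\approx$ separately, check each against the ways a canonical normal derivation selects boundary coordinates, and then extend to congruence closure. The basic fact to use is that every selected coordinate is produced by a fixed rig-ordered placement: variable sectors (VG), atom graphs (SG), (QGraph), tensor juxtaposition, abstraction repartition, completed blocks (SU), and the cut insertions (PC-yank). I will show that each generator changes only data that these placements do not read, or that they read only through an explicitly displayed permutation of spectator factors.

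\emph{Phase equality (P).} The canonical retyping $\mathsf{CanDer}$ of Lemma~\ref{lem:canonical-normal-retyping} selects the same phased rule, and the (CR) lift replaces only the annotation by one with the same evaluated scalar. Phases enter the semantics only as the scalars $\alpha,\beta$ in (SU), (Use-map) and (CS-source). By Lemma~\ref{lem:completion-chart} these are absorbed into the coordinate action and never into $h^-$ or into the block inclusions. So the transport is the identity on every sector, and it trivially fixes matched ports and tags.

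\emph{Alpha-renaming.} Renaming a bound variable changes neither the context order nor the DNF addresses, because binders enter the source-coordinate convention of (Chart) only through their position. The transport is therefore the identity on the ambient envelope, and hence fixes all ports and tags.

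\emph{Exchange (X).} This is the only substantive case. The (CR) lift reassembles the same scrutinee and body premises in the opposite order. The two tensor lets become two successive (TenCut)/(PortCut) instances at disjoint ports $A_1\tensor B_1$ and $A_2\tensor B_2$. By the commutation of links at distinct ports in Lemma~\ref{lem:typed-graph-composition} and (FO-carry), the two contraction orders differ only by the fixed rig reordering $\pi$ of the unmatched spectator factors: this is Fubini for disjoint index sums. That reordering is a basis permutation of the factors $X,Z,Y,W$ that survive. Since the two matched $\mathsf D_P$ indices are summed out on both sides, and the reordering carries the identity on every matched coordinate, it fixes each matched port. It also fixes every branch tag, because it permutes only tensor factors inside each block and never the direct-sum index; (BW) inclusions are built summandwise.

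\emph{Consequences.} The locality statement follows. If a conversion lies strictly inside the producer $\mathcal D$, its transport $a$ acts on $B_{\mathcal D}^\pm$ and is trivial on the exposed $\mathsf D_P$ factor. It therefore induces $\tau_H^\pm=(a_Y\tensor I)\tensor I$ in (FO-whole) and an identity consumer part, and (CN-active) reduces to the producer naturality from the inductive hypothesis. The consumer case is dual. For blocks, the transport is built componentwise and fixes tags, so it commutes with $\kappa_i^\pm$, which gives (CN-block).

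\emph{Main obstacle.} I expect the hard step to be showing that an exchange that crosses a completed block or a coherent-sharing clause still acts factorwise. The plan is to invoke the balance invariant: a $\plus$-map has closed operands (Corollary~\ref{cor:source-normalization}), so neither exchanged let can sit inside a branch operand. Both lets are then outside, or both inside one generated branch, and in the latter case the outer induction on $\Phi$ applies.

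Finally, congruence and transitivity are handled by composing transports; compositions of port- and tag-fixing permutations are again port- and tag-fixing.
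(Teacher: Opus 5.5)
Your proposal follows essentially the paper's argument: each generator acts only on coordinates complementary to the matched $\mathsf D_P$ factor and the branch tags (alpha-renaming as wire renaming, independent-let exchange as Fubini for disjoint cuts, phase equality as scalar equality). Hence the transport factors as a premise transport tensored with the identity on the port and commutes with the direct-sum inclusions.

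One correction to your last paragraph. Balance does not keep lets out of closed $\plus$-map operands: these are typically lambdas whose bodies destructure their argument by tensor lets, as in the case expansion's $\hat f$. Nothing is lost, however. An exchange acts on a single let-prefix and so lies in one subterm. The only occurrence feeding both blocks is the shared $S$ of coherent sharing, and the paper handles it in Lemma~\ref{lem:administrative-naturality}: both branch transports come from the single occurrence of $S$, so they agree on the shared coordinates.
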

\begin{proof}
The marked-occurrence chart is derivation-selected but constructed
equivariantly along its marked path.  Alpha-renaming, independent-let
exchange, and equality of evaluated phases leave its displayed
$\mathsf D_P$ factor and branch tags fixed and act only on complementary
coordinates.  The transport therefore factors as the asserted map on one
premise tensored with the identity on the matched port, dually for the
other premise.
The result follows from naturality of tensor symmetry and
direct-sum inclusions in $\mathbf{FdHilb}$.
\end{proof}

\begin{lemma}[Naturality of administrative conversions]
\label{lem:administrative-naturality}
Let $N\approx N'$ be internally typed normal forms reached from the
same Source expansion, or from the same lower-$\Phi$ branch generated
during its interpretation.  The canonical boundary transports
$\tau^\pm$ satisfy
\[
 \tau^+\SEM{N}_{\mathrm{NF}}
 =
 \SEM{N'}_{\mathrm{NF}}\tau^- .
\]
Moreover, selected charts, jointly exposed marked-port coordinates,
residual through-\allowbreak maps, and completed-\allowbreak block inclusions are
constructed uniformly from the canonical derivation and the fixed rig
ordering.
Canonical transports therefore carry the data of one classified instance
to the other; in particular, \textup{(CN-active)}, \textup{(CN-res)},
\textup{(CN-chart)}, and \textup{(CN-block)} hold at every
$\approx$-related pair.
\end{lemma}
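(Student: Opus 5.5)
The plan is to reduce the claim to the three generators of $\approx$ (Definition~\ref{def:administrative-equivalence}) and then induct on the canonical derivation. Since $\approx$ is the reflexive, symmetric, transitive and congruence closure of $\approx_1$, and composites and inverses of intertwining transports again intertwine, it suffices to treat a single generator $N\approx_1 N'$ applied in some term context. Lemma~\ref{lem:canonical-normal-retyping}(CR) supplies the syntax-directed typed lift $\mathsf{CanDer}(N)\leftrightarrow\mathsf{CanDer}(N')$. It also shows that the two canonical derivations share their rule skeleton outside the changed site: alpha only relabels a binder, (X) reassembles the same scrutinee and body premises in the opposite legal order, and (P) changes an annotation with the same evaluated scalar. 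Lemma~\ref{lem:administrative-invariants} keeps $\Phi$ unchanged, so the generated $\mathsf{MapApply}$ and coherent-sharing branches of $N$ and $N'$ again correspond by $\approx$ at strictly lower $\Phi$. I would therefore run the same induction as Theorem~\ref{thm:nf-boundary-unitarity}: outer strong induction on $\Phi$, inner structural induction on the canonical derivation, and $(|\mathcal D|,|\mathcal C|)$ at classified cuts. Throughout, I would carry the strengthened hypothesis that the transports satisfy \textup{(Graph-support-Nat)} and preserve charts, exposures, through-maps and block inclusions.

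For the three local cases I would do the following. Alpha-conversion induces a relabelling of context addresses, so $\tau^\pm$ is a fixed basis permutation of the flat envelope commuting with every clause of Table~\ref{tab:sem-compositional}. For (P), the transport is the identity, since the semantic clauses depend only on $\operatorname{eval}(\alpha)$: the \textsc{$\plus$-Map} row, \textup{(Use-map)} and \textup{(CS-source)} all use evaluated scalars. For (X), the two adjacent tensor-lets become two nested $\mathsf{TenCut}$ instances at independent ports. The transport is the reordering of the boundary coordinates of the two scrutinee producers, and the equation reduces to a Fubini statement. By Lemma~\ref{lem:typed-graph-composition}, at first-order ports the two \textup{(Graph-link)} contractions sum over distinct matched indices and commute after the fixed rig reordering. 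At higher-order ports, \textup{(C-var-whole)} and \textup{(C-str)} are tensor-and-repartition constructions, which commute by monoidal interchange. Lemma~\ref{lem:administrative-port-locality} guarantees that each of these transports fixes matched-port coordinates and branch tags.

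The congruence step is to propagate a transport from the premise that contains the conversion through each enclosing clause. Tensor and abstraction are handled by naturality of $\tensor$ and of the fixed repartitions. Completed blocks, applied maps and coherent sharing follow from \textup{(BW)}, \textup{(Use-part)} and \textup{(CS-part)} together with the lower-$\Phi$ hypothesis on the generated branches $M_i$. For the latter I would apply Lemma~\ref{lem:sumnf-lower-rank}(1), or more directly Theorem~\ref{thm:determinacy}: the branches $\mathsf{NF}_{\mathrm{LO}}(R_i z_i)$ and $\mathsf{NF}_{\mathrm{LO}}(R'_i z_i)$ are $\approx$-related. Eliminators are handled by Lemma~\ref{lem:classified-cut-naturality}, whose hypotheses \textup{(CN-active)}, \textup{(CN-res)}, \textup{(CN-chart)} and \textup{(CN-block)} come from the induction hypothesis on the premises and from Lemma~\ref{lem:administrative-port-locality}. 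Its conclusion \textup{(Cut-Nat)} is exactly the required intertwining at that node. The "Moreover" clause then follows, because every datum in question (charts via \textup{(Cut-chart)}, exposures via \textup{(FO-carry)}, $Y_\Xi$ via \textup{(TY-read)}, inclusions via \textup{(Graph-support-block)}) is assembled by clauses that are natural in these same transports.

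I expect the main obstacle to be the (X) case in which the exchange changes classification: a let that is consumed by one cut in $N$ may sit on the marked path of a different cut in $N'$. The two orders would then build their charts through different sequences of \textup{(Eta-foreign)} and \textup{(FO-carry)} steps. My first approach would be to show, using \textup{(Graft)} commutation and the fact that the cuts close distinct ports, that both orders compute the same two-port partial graft evaluated bottom-up. The transport would then be just the fixed repartition $\pi_{\rho,\vec a}$ comparison between the two orders.
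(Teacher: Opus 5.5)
Your overall plan matches the paper's proof: reduce to the three generators (alpha as coordinate renaming, (P) as scalar equality, (X) as monoidal interchange), then propagate through tensor, abstraction and cut contexts. At cuts you use Lemma~\ref{lem:classified-cut-naturality} with hypotheses from Lemma~\ref{lem:administrative-port-locality}, under strong induction on $\Phi$. Your (X) worry is settled by the commutation of links at distinct ports that you propose.

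The genuine gap is coherent sharing when the conversion lies inside the shared operand $S$ of \textup{(CS-shapes)}. Such a conversion changes both $q_1$ and $q_2$. Suppose you obtain $M_i\approx M_i'$ branch by branch and assemble the two transports through \textup{(CS-part)}. In the chart \textup{(CS-root-chart)}, the negative transport of $\mathcal U_K$ is then only known to be $\bigoplus_i\bigl(I_{\mathsf D_{B_i}}\tensor\tau_{S,i}\bigr)$ on $\mathsf D_{B_1\plus B_2}\tensor Z_K$. That is a map controlled by the tag at the very port $B_1\plus B_2$ that \textup{(CS-cut)} closes against $\mathcal T$.

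Naturality of that cut needs the consumer transport to factor as $a_P\tensor a_Z$ with the matched-port factor fixed, so that \textup{(FO-whole)} or \textup{(FO-family-cut)} intertwines. A tag-controlled transport cannot be pulled back through the unchanged producer, which superposes the two tags, to a map of the form $a_X\tensor a_Z$. So \textup{(Cut-Nat)} does not follow. Port locality does not rescue this: it gives only commutation with the block inclusions, which block-diagonal maps already satisfy.

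The missing idea is that both branch transports act by one and the same $\tau_S$, because both come from the single term-level $\approx$-derivation on the unique occurrence of $S$. With the tags fixed, the assembled transport is then a spectator on $Z_K$. Only then is Lemma~\ref{lem:classified-cut-naturality} applicable to \textup{(CS-cut)}, with $d_{\delta,K}^\pm$ conjugating the conclusion.

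Separately, you need that $\approx$-related generated inputs ($R_i\,z_i$ versus $R_i'\,z_i$, or $q_i$ versus $q_i'$) have $\approx$-related normal forms. That is the derived assertion $\mathsf{EqUN}$ of Theorem~\ref{thm:ranked-determinacy}, used at the lower rank given by \textup{(Use-$\Phi$)} and \textup{(CS-$\Phi$)}. Your citations do not supply it:
\begin{itemize}
\item Theorem~\ref{thm:determinacy} compares two irreducible reducts of a single term.
\item Lemma~\ref{lem:sumnf-lower-rank}(1) concerns $\mathsf{SumNF}$, not $\mathsf{NF}_{\mathrm{LO}}$ of a formal branch application.
\item Lemma~\ref{lem:administrative-invariants} gives only $\Phi(N)=\Phi(N')$, not the correspondence of generated branches.
\end{itemize}
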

\begin{proof}
Strong induction on $\Phi$, with subsidiary structural induction on
the context containing the generating conversion; the generators
preserve $\Phi$, so the induction is well posed.
Alpha-conversion is canonical coordinate renaming; equality of
evaluated phases is scalar equality; and exchange of independent
tensor lets is the monoidal interchange law in $\mathbf{FdHilb}$.
Tensor and sum contexts follow by functoriality; abstraction
conjugates by its polarity repartition
(Lemma~\ref{lem:unitary-transport}); and cut contexts follow from
Lemma~\ref{lem:classified-cut-naturality}, applied stepwise along
the spine, whose hypotheses hold by
Lemma~\ref{lem:administrative-port-locality} together with the
uniformity statement above.

For coherent sharing, a conversion inside $R_i$ or $S$ induces
$M_i\approx M_i'$ by the derived assertion $\mathsf{EqUN}$ of
Theorem~\ref{thm:ranked-determinacy} at the lower rank
\textup{(CS-$\Phi$)}; equation~\textup{(CS-NF)} places this use at
strictly smaller $\Phi$.  The resulting branch transports assemble
by $\oplus$; a transport from the shared $S$ acts by the
\emph{same} $\tau_S$ in both blocks --- both arise from one
term-level $\approx$-derivation on its single occurrence --- and
Lemma~\ref{lem:administrative-port-locality} fixes the branch tags,
so the assembled transport is a spectator map, not a tag-dependent
block map.
Lemma~\ref{lem:classified-cut-naturality} then applies to
\textup{(CS-cut)}.  A conversion inside $T$ is handled directly by
the subsidiary structural induction: it is a proper-subderivation
transport, inducing the port transport at the $B_1\plus B_2$ port
and its dual action on $U_{\mathcal U_K}$, and the boundary-ordering maps
$d_{\delta,K}^\pm$ conjugate the conclusion transports.

The generator calculations extend to the least congruence
$\approx$: identities, inverses, composition, and compatible term
contexts preserve the displayed intertwining equation.
\end{proof}

\begin{corollary}[Canonical-form invariance]
\label{cor:canonical-invariance}
Let $\Gamma\sjudge t:A$ and
$N=\mathsf{NF}_{\mathrm{LO}}(t^\circ)$, so that
Definition~\ref{def:canonical-form-semantics} gives
\[
 \SEM{\Gamma\sjudge t:A}
 =
 \SEM{\Gamma\intjudge N:A}_{\mathrm{NF}}.
\]
Then: (i) this map is unitary; (ii) canonical normal forms of
internally typed reducts of \(t^\circ\) that are related by
\(\approx\) have denotations agreeing up to canonical unitary boundary
transport; hence (iii) reduction of the Raw expansion and choice of
reduction strategy preserve the denotation up to canonical unitary
boundary transport.
\end{corollary}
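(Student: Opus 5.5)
The plan assembles the three clauses from results already established, in the order (i), (ii), (iii).

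For (i), Definition~\ref{def:canonical-form-semantics} identifies $\SEM{\Gamma\sjudge t:A}$ with $\SEM{\Gamma\intjudge N:A}_{\mathrm{NF}}$ for the canonical derivation $\mathsf{CanDer}_{\Gamma,A}(N)$ of Lemma~\ref{lem:canonical-normal-retyping}. That derivation exists because Corollary~\ref{cor:source-normalization} makes $N$ internally typed and irreducible. Theorem~\ref{thm:nf-boundary-unitarity} then applies verbatim and gives unitarity between the selected sectors. Its hypotheses are balance and the absence of coherent-sum formers, and these come from the same corollary.

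For (ii), let $u,u'$ be internally typed reducts of $t^\circ$. Subject reduction (Lemma~\ref{lem:internal-preservation}) supplies the typing, and Lemmas~\ref{lem:reduction-preserves-source-discipline} and~\ref{lem:lo-restores-balance} preserve the Source discipline for their normal forms. Take canonical normal forms $n,n'$ of $u,u'$ with $n\approx n'$. By Lemma~\ref{lem:administrative-invariants} both have the same rank, and by the (CR) lift of Lemma~\ref{lem:canonical-normal-retyping} the chosen proof of $n\approx n'$ lifts to a typed correspondence of canonical derivations. The generators involved are only alpha-renaming, exchange (X), and phase equality (P). Lemma~\ref{lem:administrative-naturality} then supplies the canonical unitary transports $\tau^\pm$ with $\tau^+\SEM{n}_{\mathrm{NF}}=\SEM{n'}_{\mathrm{NF}}\tau^-$. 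Its hypothesis is that both normal forms are reached from the same Source expansion, and that holds here because both are reducts of $t^\circ$.

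For (iii), fix any reduction sequence $t^\circ\to^*u$ and any strategy reaching an irreducible $n$. By Theorem~\ref{thm:determinacy} we have $n\approx N=\mathsf{NF}_{\mathrm{LO}}(t^\circ)$, and likewise $\mathsf{NF}_{\mathrm{LO}}(u)\approx N$. The argument of (ii) then transports the denotations. Composing transports along chains of $\approx$ stays canonical and unitary, by Lemma~\ref{lem:unitary-transport} together with the congruence clause of Lemma~\ref{lem:administrative-naturality}.

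I expect the main obstacle to be in (ii): confirming that every $\approx$-related normal form produced along an arbitrary strategy still lies in the scope of Lemma~\ref{lem:administrative-naturality}. That lemma is stated only for Source-generated normal forms and their lower-$\Phi$ generated branches, so the issue is whether an arbitrary strategy could leave this class. I would first check that the balance and sum-former-free invariants are preserved by \emph{every} reduction step and not only by LO steps, since Lemma~\ref{lem:lo-restores-balance} is stated for the LO path. If that check fails, I would instead route every comparison through $N$ itself. Theorem~\ref{thm:determinacy} makes every irreducible reduct $\approx$-equal to $N$, so only $\approx$-derivations whose endpoints are balanced irreducible terms would ever need to be transported.
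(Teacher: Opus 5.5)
Your proposal is correct and takes the same route as the paper, whose proof consists of three citations: (i) is Theorem~\ref{thm:nf-boundary-unitarity}, (ii) is Lemma~\ref{lem:administrative-naturality}, and (iii) combines Theorem~\ref{thm:determinacy}, which gives $\mathsf{NF}_{\mathrm{LO}}(t^\circ)\approx\mathsf{NF}_{\mathrm{LO}}(u)$ (and likewise for every irreducible co-reduct), with (ii). On the obstacle you flagged, your fallback is the correct reading: Lemma~\ref{lem:lo-restores-balance} is stated only for the LO path from a balanced start, so instead of re-verifying balance along arbitrary strategies, rely on the scope of Lemma~\ref{lem:administrative-naturality} (all normal forms reached from the same Source expansion) together with the fact that every irreducible reduct is $\approx$-equal to $N$.
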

\begin{proof}
Part (i) is Theorem~\ref{thm:nf-boundary-unitarity}.  Part (ii) is
Lemma~\ref{lem:administrative-naturality}.  If
$t^\circ\to^*u$, then
\[
 \mathsf{NF}_{\mathrm{LO}}(t^\circ)
 \approx
 \mathsf{NF}_{\mathrm{LO}}(u)
\]
by Theorem~\ref{thm:determinacy}; part (ii) applies.  The same argument
applies to every irreducible co-reduct.
\end{proof}


\begin{lemma}[Register readback]
\label{lem:first-order-endpoint-readback}
For $n\geq1$, let
\[
  P:=\QBool^{\tensor n},
  \qquad
  \mathcal D:
  \bigl(\cdot\sjudge t:P\lmark P\bigr),
\]
and choose a fresh \(x:P\).  Let
\[
  \mathcal E:
  x:P\intjudge
  N:=\mathsf{NF}_{\mathrm{LO}}\bigl((t\,x)^\circ\bigr):P
\]
be the resulting canonical endpoint derivation.  Source application
gives \(x:P\sjudge t\,x:P\), and
Lemma~\ref{lem:source-internal-inclusion} gives
\(x:P\intjudge (t\,x)^\circ:P\); Theorem~\ref{thm:normalization} with
Lemma~\ref{lem:canonical-normal-retyping} supplies \(\mathcal E\).
Write
\[
  U_{\mathcal E}
  :=
  \SEM{x:P\intjudge N:P}_{\mathrm{NF}}:
  B_{\mathcal E}^{-}\longrightarrow B_{\mathcal E}^{+}.
\]

The fixed rig normal form of \(P\) is
\[
  \mathsf{sgn}(P)
  \cong
  \bigoplus_{w\in\{0,1\}^{n}}m_w,
  \qquad
  m_w:=(b^+)^{\tensor n},
  \qquad
  \sem{m_w}=\mathbb C.
  \tag{Register-rig}
\]
The canonical root display reads the $x$-coordinate at negative
polarity and the result coordinate at positive polarity.  Its
coordinate restrictions are therefore
\[
  \varepsilon_{\mathcal E}^{-}
  :=
  \mathsf{src}_P\!\upharpoonright_{B_{\mathcal E}^{-}},
  \qquad
  \varepsilon_{\mathcal E}^{+}
  :=
  \mathsf{tgt}_P\!\upharpoonright_{B_{\mathcal E}^{+}} .
  \tag{Register-restriction}
\]
These maps are unitary isomorphisms onto \(\sem P\).
Consequently there is a
unique unitary
\[
  U_t:\sem P\longrightarrow\sem P
\]
such that
\[
  \varepsilon_{\mathcal E}^{+}U_{\mathcal E}
  =
  U_t\varepsilon_{\mathcal E}^{-}.
  \tag{Register-readback}
\]
\end{lemma}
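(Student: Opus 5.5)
The plan is to obtain the lemma as a direct specialization of Theorem~\ref{thm:nf-boundary-unitarity} to the singleton first-order shape. First I check that $\mathcal E$ is one of the derivations covered by that theorem. Source application gives $x:P\sjudge t\,x:P$, so $N=\mathsf{NF}_{\mathrm{LO}}((t\,x)^\circ)$ is exactly the normal form of a Source term. Corollary~\ref{cor:source-normalization} and Lemma~\ref{lem:canonical-normal-retyping} then supply the canonical derivation $\mathcal E$. Theorem~\ref{thm:nf-boundary-unitarity} therefore yields that $U_{\mathcal E}:B_{\mathcal E}^-\to B_{\mathcal E}^+$ is unitary with a complete graph chart. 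Its final clause adds that, because $\mathcal E$ has singleton first-order shape $x:P\intjudge N:P$, this chart is the $(\mathsf{src}_P,\mathsf{tgt}_P)$ root chart of \textup{(QGraph-read)}. That identifies $\varepsilon_{\mathcal E}^\pm$ in \textup{(Register-restriction)} with the chart maps $\epsilon_{\mathcal E}^\pm$, which are unitary onto the coordinate space.

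Second, I make the coordinate space explicit. By (Chart-rank), $\dim\mathsf D_{\mathcal E}^\pm=\kappa(x:P;P)=\sqrt{d(P)^2}=d(P)=2^n$. The rig normal form (Register-rig) has $2^n$ monomials $m_w$, all positive with one-dimensional carrier, so $\sem P$ is the span of the basis vectors indexed by $w\in\{0,1\}^n$. The terminal coordinate of the first-order type recursion (one coordinate at $\base$, tensor at $\tensor$, orthogonal sum at $\plus$) is exactly this $\sem P$. Restricted to $B_{\mathcal E}^\pm$, the maps $\mathsf{src}_P$ and $\mathsf{tgt}_P$ are therefore unitary isomorphisms onto $\sem P$, which is the claimed property of $\varepsilon_{\mathcal E}^\pm$.

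Third, I define $U_t:=\varepsilon_{\mathcal E}^+U_{\mathcal E}(\varepsilon_{\mathcal E}^-)^{-1}$. It is a composite of unitaries, so it is unitary by Lemma~\ref{lem:comp-closure}. It satisfies (Register-readback) by construction, and it is unique because $\varepsilon_{\mathcal E}^-$ is invertible. The chart equation \textup{(Chart)} says this $U_t$ is the coordinate action $V_{\mathcal E}$.

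The main obstacle is the justification that the canonical root display reads the $x$-coordinate negatively and the result positively, so that the chart truly is $(\mathsf{src}_P,\mathsf{tgt}_P)$ rather than some reindexed variant. I would handle this by following the actual \textsc{Var} leaf $x:P$ outward, as in the root-clause paragraph of the theorem's proof. That paragraph shows each constructor, cut, and completed block acts by fixed reorderings and orthogonal sums that preserve the source-index order. At the singleton judgment there is no residual factor, so the chart maps coincide with the displayed restrictions.
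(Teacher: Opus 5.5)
Your proposal is correct and follows essentially the same route as the paper: you invoke the singleton first-order root clause of Theorem~\ref{thm:nf-boundary-unitarity} to make $\varepsilon_{\mathcal E}^\pm$ unitary isomorphisms onto $\sem P$, define $U_t:=\varepsilon_{\mathcal E}^+U_{\mathcal E}(\varepsilon_{\mathcal E}^-)^{-1}$, and get uniqueness from invertibility of $\varepsilon_{\mathcal E}^-$. Your extra \textup{(Chart-rank)} dimension check and the leaf-tracing in your last paragraph are redundant, since that root clause already delivers the $(\mathsf{src}_P,\mathsf{tgt}_P)$ chart, but they are harmless.
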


\begin{proof}
The singleton first-order root clause of
Theorem~\ref{thm:nf-boundary-unitarity} makes the two maps in
\textup{(Register-restriction)} unitary isomorphisms onto $\sem P$.
Define
\[
 U_t:=\varepsilon_{\mathcal E}^{+}U_{\mathcal E}
      (\varepsilon_{\mathcal E}^{-})^{-1}.
\]
Theorem~\ref{thm:nf-boundary-unitarity} makes $U_{\mathcal E}$, and
hence $U_t$, unitary.  The definition gives
\textup{(Register-readback)}, and invertibility of
$\varepsilon_{\mathcal E}^{-}$ gives uniqueness.
\end{proof}

\begin{proof}[Proof of Corollary~\ref{cor:register-unitarity}]
Apply Lemma~\ref{lem:first-order-endpoint-readback}.
\end{proof}

%% file: appendix-sum-encoding.tex
\section{Sum-Tag Encoding Invariants}
\label{app:sum-encoding}

This appendix states the sum-layout obligations used by the formal
compiler-correctness proof.  They concern the binary core.  Derived
$n$-ary forms elaborate to the fixed binary tree of
Appendix~\ref{app:nary-plus} (presented later in the appendix order;
only the binary layout is needed here) before the reference compiler
assigns a layout.

\paragraph{Reference binary layout.}
Put
\[
 w_0=\nwires(A),\qquad w_1=\nwires(B),\qquad
 W=\max\{w_0,w_1\}.
\]
The reference layout of $A\plus B$ is
\[
 [\,\text{root tag}\mid
    \text{payload}_0\mid\cdots\mid\text{payload}_{W-1}\,],
 \qquad
 \nwires(A\plus B)=1+W,
\]
with type-valid subspace
\begin{equation}
\begin{split}
 V_{A\plus B}=\operatorname{span}\bigl(&
 \{\ket{0}\tensor\ket\psi\tensor\ket{0^{W-w_0}}:
      \ket\psi\in V_A\}\;\cup\\[-2pt]
 &\{\ket{1}\tensor\ket\phi\tensor\ket{0^{W-w_1}}:
      \ket\phi\in V_B\}\bigr).
\end{split}
\label{eq:valid-sum}
\end{equation}
The summand payloads use their own recursively computed layouts.
Consequently nested source and target types are computed
independently; they are never silently recognized as a flat list of
leaves.  The reference proof uses typed logical frames for structural source and
target coordinates; \texttt{WirePerm} is reserved
for a literal symbolic wire-address update.  The prototype additionally
uses an $\plus$-outermost normalized frame, with the live tag outermost,
the selected summand and its tensor spectators in the shared payload,
and padding last: every summand occupies the same payload wires, and
the tag says which summand is live.  In that implementation frame, the multiplicative structural
isomorphisms, additive associativity, distributivity, and their
inverses themselves have empty gate lists.  An unequal-width
distributor records branch-dependent payload addresses in the
symbolic layout; later composition reconciles those addresses at the
splice.  Gates are emitted only when stored values
change; in particular a permutation of
live tag values ($\sigma^\plus$) is materialized in the gate list ---
one $X$ on a binary tag --- while the payload block is untouched.

\paragraph{Reference invariants.}
The binary encoding maintains:
\begin{description}
\item[INV-1 (Live root tag)]
  Each binary node has exactly the live tags $0$ and $1$; a nested
  summand retains its own recursively encoded tags inside the payload.
\item[INV-2 (Payload zeroing)]
  In root branch $i$, payload positions $w_i,\ldots,W-1$ are
  $\ket{0}$.
\item[INV-3 (Canonical block realization)]
  If each selected branch circuit maps its declared complete active
  canonical code sector onto its output sector, the binary row maps
  the corresponding branch-completed input code block onto its output
  block.  No behavior on the remainder of the register is required.
\item[INV-4 (Complement discipline)]
  Both root-tag bit values are live in the binary reference layout.
  The theorem assigns no behavior to invalid payload or padding
  patterns.  Layout-only structural rows carry the valid block and
  unused complement according to their declared sectorwise symbolic
  permutation and fixed complement bijection.  This metadata appends
  no gate and need not be pointwise identity on the complement.  An
  enclosing exponential may contribute its chosen scalar extension.
  After common padding, let $X^-$ and $X^+$ be the selected source
  and target subspaces, and write $\Pi_{X^\epsilon}$ for the
  orthogonal projector onto $X^\epsilon$.  For each
  $\epsilon\in\{-,+\}$, apply Gram--Schmidt to
  \[
    (I-\Pi_{X^\epsilon})\ket{0\cdots0},\ldots,
    (I-\Pi_{X^\epsilon})\ket{1\cdots1}
  \]
  in lexicographic order, discarding zero and dependent vectors and
  fixing each resulting vector's phase by making its first nonzero
  computational-basis coordinate positive real.  The fixed
  complement bijection sends the $k$th resulting source-complement
  vector to the $k$th target-complement vector.  For coordinate
  sectors this reduces to sending the $k$th lexicographically
  ordered unused source basis word to the $k$th unused target basis
  word; we call this the \emph{unused-basis-word clause}.
\end{description}

\paragraph{Codewords and structural label bijections.}
Valid basis labels and codewords are defined recursively:
$\mathcal L(\base)=\{*\}$ with $c_{\base}(*)=1$;
$\mathcal L(A\tensor B)=\mathcal L(A)\times\mathcal L(B)$ with
$c_{A\tensor B}(x,y)=c_A(x)\tensor c_B(y)$; and
$\mathcal L(A\plus B)=\mathcal L(A)\uplus\mathcal L(B)$ with
\[
 c_{A\plus B}(\mathrm{inl}\,x)=
 \ket0\tensor c_A(x)\tensor\ket{0^{W-w_0}},
 \qquad
 c_{A\plus B}(\mathrm{inr}\,y)=
 \ket1\tensor c_B(y)\tensor\ket{0^{W-w_1}} .
\]
The label and codeword definitions are formally made on compiler
layout expressions (the layout grammar of
Appendix~\ref{app:compilation-soundness}, presented later in the
appendix order).  In addition to the clauses above, put
\[
  \mathcal L(R^*)=\mathcal L(R),
  \qquad
  c_{R^*}(u)=c_R(u),
  \qquad
  V_{R^*}=V_R,
  \qquad
  \nwires(R^*)=\nwires(R),
\]
in agreement with the duality convention of that appendix.
For a source type $A$, we abbreviate
\[
  \mathcal L_A:=\mathcal L(\LayoutTy A),
  \qquad
  c_A:=c_{\LayoutTy A}.
\]
Since
$\LayoutTy{A\lmark B}=(\LayoutTy A)^*\tensor\LayoutTy B$,
it follows that
\[
  \mathcal L_{A\lmark B}
    =\mathcal L_A\times\mathcal L_B,
  \qquad
  c_{A\lmark B}(u,v)=c_A(u)\tensor c_B(v).
\]
Physical duality changes no codeword; polarity remains represented
by the boundary charts.  These codewords describe the flat physical
type envelope, not the derivation-selected semantic sector.

Each declared primitive structural atom instance $s:T\cong S$ acts
on labels by the evident bijection
$\rho_s:\mathcal L(\LayoutTy T)\to\mathcal L(\LayoutTy S)$:
$\alpha^\plus$ reassociates $\uplus$; $\sigma^\plus$ swaps
$\mathrm{inl}$ and $\mathrm{inr}$; $\distL$ sends
$(x,\mathrm{inl}\,y)$ to $\mathrm{inl}(x,y)$ and
$(x,\mathrm{inr}\,y)$ to $\mathrm{inr}(x,y)$, and $\distR$
symmetrically on the left factor; inverses invert.

\begin{lemma}[Structural codeword embeddings]
\label{lem:structural-layout}
For every declared primitive structural atom instance $s:T\cong S$,
put $q_s=\max\{\nwires(T),\nwires(S)\}$.  Identifying the fixed
semantic bases with the label sets $\mathcal L_T$ and $\mathcal L_S$,
linear extension of
\[
 \overline c_T\ket x
 :=c_T(x)\tensor\ket{0^{q_s-\nwires(T)}},
 \qquad
 \overline c_S\ket y
 :=c_S(y)\tensor\ket{0^{q_s-\nwires(S)}}
\]
defines isometries
\[
 \overline c_T:\sem T\lhook\joinrel\longrightarrow
   (\mathbb C^2)^{\tensor q_s},
 \qquad
 \overline c_S:\sem S\lhook\joinrel\longrightarrow
   (\mathbb C^2)^{\tensor q_s}.
\]
Moreover,
\[
 \mathsf{Rig}(s)\ket x=\ket{\rho_s x}
 \qquad(x\in\mathcal L_T).
 \tag{Structural-label}
\]
\end{lemma}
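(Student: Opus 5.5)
The plan is a direct verification in two parts: isometry of the padded codeword maps, then the label identity (Structural-label), checked constructor by constructor over the declared primitive atoms of \S\ref{par:structural-isos} and their inverses.

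\emph{Isometry.} I will show by induction on the layout expression that the codeword map $c_T$ sends distinct labels to distinct computational-basis words of length $\nwires(T)$. At $\base$ there is a single empty word. At $A\tensor B$, words are concatenations, and distinct pairs give distinct concatenations. At $A\plus B$, the root tag separates $\mathrm{inl}$ from $\mathrm{inr}$. Within one branch, the payload prefix is injective by the induction hypothesis, and the zero padding is fixed. The dual clause $c_{R^*}=c_R$ changes nothing. Appending $\ket{0^{q_s-\nwires(T)}}$ preserves injectivity, so $\overline c_T$ maps an orthonormal basis to an orthonormal family. Its linear extension is therefore an isometry, and the same argument applies to $\overline c_S$. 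Both targets have width $q_s$ by the choice of $q_s$ as the maximum.

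\emph{Label identity.} I will identify the fixed semantic basis of $\sem T$ with $\mathcal L_T$ through the left-to-right rig normal form of $\mathsf{sgn}(T)$ fixed in \S\ref{sec:boundary-semantics}. The monomials of that normal form are indexed by the same recursion as $\mathcal L$: a product at $\tensor$ and a disjoint union at $\plus$. Since structural atoms have first-order or otherwise positive endpoints at the relevant occurrences, polarity plays no role here, and the dual case again adds nothing.

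With this identification, $\mathsf{Rig}(s)$ is by definition the canonical rig-coherence permutation of monomials. I will then check it atom by atom. For $\alpha^\plus$, the permutation reassociates the index set of a triple direct sum, which is the $\uplus$-reassociation. For $\sigma^\plus$, it exchanges the two summand blocks, which is the $\mathrm{inl}/\mathrm{inr}$ swap. For $\distL$, the distributivity isomorphism $H\tensor(K\oplus L)\cong(H\tensor K)\oplus(H\tensor L)$ sends $e_x\tensor\iota_{\mathrm{inl}}e_y$ to $\iota_{\mathrm{inl}}(e_x\tensor e_y)$, and similarly for $\mathrm{inr}$. The case $\distR$ is symmetric, and inverses follow by inverting the bijections.

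The main obstacle is the identification step. It requires that the fixed rig-normal-form ordering used for $\sem T$ is exactly the one induced by the recursive label sets, in particular for nested sums inside tensors, where the normal form distributes. I would handle this by defining the identification itself by the same structural recursion and invoking Theorem~\ref{thm:rig-coherence}, with letters regularized by distinct indices as its regularity hypothesis requires. The theorem guarantees that the chosen distribution agrees with the composite of canonical maps, so $\mathsf{Rig}(s)$ is forced to be the evident label bijection $\rho_s$. Repeated letters such as $\sigma^\plus_{S,S}$ cause no ambiguity, because $\mathsf{Rig}(s)$ is defined for the syntactic atom and not merely by its endpoints.
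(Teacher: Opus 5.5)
Your proposal is correct and follows the paper's argument. The isometry holds because the recursive codewords are distinct computational-basis words and zero padding preserves inner products; (Structural-label) is the defining basis action of each primitive coherence isomorphism under the label identification, which you check atom by atom. Two remarks. The appeal to Theorem~\ref{thm:rig-coherence} is not needed, since a single primitive atom is not a composite and the identification is fixed by the same recursion. Your claim that structural atoms have first-order or positive endpoints is inaccurate: the distributor in the case expansion can have a higher-order factor. Polarity is irrelevant anyway, because $\sem{-}$ forgets it and $c_{R^*}=c_R$.
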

\begin{proof}
The recursive codewords are orthonormal, padding preserves inner
products, and $\rho_s$ is the displayed bijection of valid labels.
The final equality is the defining basis action of the primitive
structural isomorphism.
\end{proof}

\paragraph{Executable flat $n$-ary layout.}
The executable prototype's layout function is defined per node: at
each sum node, a flat $k=\lceil\log_2 n\rceil$-bit tag with shared
payload of width $W=\max_i\nwires(A_i)$ and padding inside that
node's shared block; at a tensor node, componentwise juxtaposition.
Flattening never crosses a tensor: for
$\mathbb Z_3\tensor\mathbb Z_5$ the layout is the juxtaposition
$(2{+}0)+(3{+}0)=5$ wires, not $\lceil\log_2 15\rceil=4$.  Exposed $n$-ary maps emit direct
exact-tag dispatch; otherwise the backend uses generic controlled
compilation in the same flat layout.  This implementation path lies
outside the recursive-binary reference layout and its correctness
theorem.  For example, the fixed left-associated four-way sum of
copies of $\base$ uses three recursive root tags in the reference
layout, whereas the flat prototype layout uses two tag bits.

\paragraph{Executable unequal-width distributivity.}
An unequal-width distributor has an empty gate list and records its
tag-dependent payload addresses in the symbolic layout.  Composition
through it is supported exactly.  At the incident endpoint, a difference
between producer and consumer frames that is a wire permutation is
absorbed symbolically; a genuine computational-basis word permutation is
emitted as an exact basis-word adapter and therefore has a real gate cost.
Optimizing these endpoint adapters is ongoing work.  The resulting action realizes the
branch-dependent source and target charts used by the reference
semantics.

\paragraph{Tag order and branch phases.}
The executable backend numbers flat tag words in big-endian order,
with the first tag qubit as the most significant bit.  For a nested
phased sum whose left and right subtrees contain $n_L$ and $n_R$
flattened leaves, respectively, the left phase $z=e^{i\theta}$ is
applied to every valid left tag word $0\leq i<n_L$, rather than
only to the all-zero word.  More generally, phased control with
branch maps $F_i$ and unit phases $z_i$ realizes, on the valid
tagged sector,
\[
   \bigoplus_{i=0}^{n-1} z_i F_i .
\]
Each branch body is compiled before its exact-tag control and
associated phase are applied.  The same tag-order convention is
used by \texttt{NPlusMap}, \texttt{PhasedPlusMap}, and
\texttt{PhasedControl}; the phase selector acts trivially on unused
tag words.

The structural rows implement their declared logical-frame bijections
on the reference structural code sectors.  In the normalized
prototype, the multiplicative structural isomorphisms, additive
associativity, distributivity, and their inverses themselves have
empty physical gate lists.  Unequal-width distributors record
branch-dependent symbolic addresses; any necessary basis alignment
is handled at a later splice.  Additive symmetry instead contributes
its tag permutation.  A structural row may recode dead coordinates,
and an
exponential whose primitive extension is the identity there may act
by the scalar $e^{i\theta}$.  Such complement choices lie outside the
language sector and are not consequences of \textup{(BC)}.

%% file: appendix-compilation-soundness-planned.tex
%

\section{Compilation Soundness}
\label{app:compilation-soundness}

For a Source judgment $\Gamma\sjudge t:A$, put
\[
 N_t=\mathsf{NF}_{\mathrm{LO}}(t^\circ),\qquad
 \mathcal N_t=\mathsf{CanDer}_{\Gamma,A}(N_t):
 \Gamma\intjudge N_t:A .
\]
Corollary~\ref{cor:source-normalization} and
Lemma~\ref{lem:canonical-normal-retyping} supply this derivation.
The compiler and semantics therefore use the same canonical normal
derivation.  The compiler first constructs its chart-directed carrier
plan and then emits the retained semantic constructor expression into
the incidences fixed by that plan.  The same recursion covers the
lower-$\Phi$ canonical calls generated by $\mathsf{MapApply}$ and
coherent sharing.  All semantic maps below live in $\mathbf{FdHilb}$
between the selected branch-paired spaces; target programs act on
their planned flat qubit registers.

\subsection{Flat Registers, Target Programs, and Artifact Layouts}
\label{app:valid-subspace}

\paragraph{Type-valid registers.}
The compiler uses
\[
 \LayoutTy{\base}=\base,\quad
 \LayoutTy{A\tensor B}=\LayoutTy A\tensor\LayoutTy B,\quad
 \LayoutTy{A\plus B}=\LayoutTy A\plus\LayoutTy B,\quad
 \LayoutTy{A\lmark B}=\LayoutTy A^*\tensor\LayoutTy B .
\]
Put
\[
 V_{\base}=\mathbb C,\qquad
 V_{A\tensor B}=V_A\tensor V_B,\qquad
 V_{T^*}=V_T .
\]
For a binary sum, $V_{A\plus B}$ is the type-valid subspace of the
recursive binary layout displayed in \eqref{eq:valid-sum}; nested sums
use this clause recursively.  Thus
$V_{A\lmark B}=V_{\LayoutTy A^*}\tensor V_{\LayoutTy B}$.
No behavior on invalid payload or padding patterns is asserted.

\begin{remark}[Duality does not change a register]
\label{rem:dual-valid}
The equality $V_{T^*}=V_T$ concerns physical coordinates only; it
does not identify semantic polarities or derivation-selected sectors.
\end{remark}

\begin{lemma}[Dimension of the type-valid register]
\label{lem:valid-dim}
For every compiler layout $T$, $\dim V_T=\size{T}$.
\end{lemma}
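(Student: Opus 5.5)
The plan is structural induction on the compiler layout $T$, following the grammar $\base\mid T\tensor U\mid T\plus U\mid T^*$. The semantic size $\size{T}=\dim\sem{T}$ is computed by the flat-envelope evaluation of \S\ref{sec:boundary-semantics}. That evaluation forgets polarity and sends $\tensor$ to $\otimes$ and $\plus$ to $\oplus$, so it gives
\[
 \size{\base}=1,\qquad \size{T\tensor U}=\size{T}\,\size{U},\qquad \size{T\plus U}=\size{T}+\size{U},\qquad \size{T^*}=\size{T}.
\]
The claim then reduces to checking that $V_{(-)}$ obeys the same recursion.

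The base and tensor cases are immediate. For $\base$ we have $V_{\base}=\mathbb C$, which has dimension $1$. For a tensor, $V_{T\tensor U}=V_T\tensor V_U$, so its dimension is the product of the two dimensions, and the induction hypothesis finishes the case. The dual case is equally direct: $V_{T^*}=V_T$ by definition, and both sides of the claim are unchanged under $(-)^*$ (Remark~\ref{rem:dual-valid}).

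The sum case carries the actual content. By \eqref{eq:valid-sum}, $V_{T\plus U}$ is spanned by the union of two families:
\[
 \{\ket0\tensor\ket\psi\tensor\ket{0^{W-w_0}} : \ket\psi\in V_T\}
 \quad\text{and}\quad
 \{\ket1\tensor\ket\phi\tensor\ket{0^{W-w_1}} : \ket\phi\in V_U\}.
\]
The map $\ket\psi\mapsto\ket0\tensor\ket\psi\tensor\ket{0^{W-w_0}}$ is an isometry, so the first family spans a copy of $V_T$. In the same way, the second family spans a copy of $V_U$. The two images are orthogonal because their root-tag qubits are the orthogonal states $\ket0$ and $\ket1$. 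Their span is therefore an internal direct sum, and its dimension is $\dim V_T+\dim V_U=\size{T}+\size{U}$ by the induction hypothesis.

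An alternative is to use the codeword description in Appendix~\ref{app:sum-encoding}. There one shows that $\{c_T(x):x\in\mathcal L(T)\}$ is an orthonormal basis of $V_T$ and that $|\mathcal L(T)|=\size{T}$, which gives the same recursion through $\times$ and $\uplus$. Lemma~\ref{lem:structural-layout} already asserts the needed orthonormality of the recursive codewords.

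The main obstacle is only bookkeeping. We must confirm that $\size{-}$ on layout expressions matches the recursion above, in particular that it ignores the $(-)^*$ polarity. This holds because $\sem{b^-}=\sem{b^+}=\mathbb C$ and the rig normal form is taken before evaluation. We must also confirm that the padding $\ket{0^{W-w_i}}$ never identifies vectors from the two branches. Orthogonality of the tag settles that second point.
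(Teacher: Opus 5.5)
Your proof is correct and follows the paper's own argument: structural induction on the layout. Dimensions multiply at $\tensor$ and are preserved by duality; at $\plus$ they add, because the two live-tag families are isometric copies of the summand registers made orthogonal by the root tag. The paper also notes that implication is the translated dual--tensor case $\LayoutTy{A\lmark B}=\LayoutTy A^*\tensor\LayoutTy B$, which your layout grammar already covers.
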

\begin{proof}
Induction on $T$.  Tensor dimensions multiply, while the two live-tag
subspaces in \eqref{eq:valid-sum} are orthogonal and their dimensions
add.  Duality preserves dimension, and implication is the translated
dual--tensor case, so its dimensions multiply.
\end{proof}

\paragraph{Flat and selected layouts.}
For a judgment shape
$\mathfrak J=(x_1{:}A_1,\ldots,x_m{:}A_m\intjudge A)$, put
\[
 \varphi_{\mathfrak J}=\mathsf{sgn}(A_1)^*\tensor\cdots\tensor
 \mathsf{sgn}(A_m)^*\tensor\mathsf{sgn}(A),\qquad
 \mathcal E_{\mathfrak J}^\pm
 =\sem{\partial^\pm(\varphi_{\mathfrak J})}.
\]
The recursive binary encoding gives flat physical spaces
$V_{\mathfrak J}^\pm$ and unitary coordinate maps
\begin{equation}
 \lay_{\mathfrak J}^\pm:
 V_{\mathfrak J}^\pm\xrightarrow{\;\cong\;}
 \mathcal E_{\mathfrak J}^\pm .
\label{eq:flat-layout}
\end{equation}
They use identity on $\base$, tensor products on $\tensor$, and
\begin{equation}
 \lay_{A\plus B}^\pm
 \bigl(\ket i\tensor\ket{\psi}\tensor
       \ket{0^{W-\nwires(A_i)}}\bigr)
 =
 \iota_i\bigl(\lay_{A_i}^\pm\ket{\psi}\bigr),
 \qquad A_0=A,\ A_1=B .
\label{eq:sum-layout}
\end{equation}
On implication the argument polarity is reversed:
\[
 \lay_{A\lmark B}^-=\lay_A^+\tensor\lay_B^-,
 \qquad
 \lay_{A\lmark B}^+=\lay_A^-\tensor\lay_B^+ .
\]
The existence of these unitary maps uses the fixed convention that
each DNF monomial contributes one negative and one positive coordinate;
hence both polarity spaces for $A\lmark B$ have the required dimension
$\size{A}\size{B}$.  Canonical reassociations and literal
wire-address permutations are suppressed.

For a canonical normal derivation
$\mathcal N:(\Gamma\intjudge N:A)$ of shape $\mathfrak J$, write
$B_{\mathcal N}^\pm\subseteq\mathcal E_{\mathfrak J}^\pm$ for its
selected boundary spaces and put
\begin{equation}
 \widetilde C_{\mathcal N}^\pm
 :=
 (\lay_{\mathfrak J}^\pm)^{-1}(B_{\mathcal N}^\pm)
 \subseteq V_{\mathfrak J}^\pm .
 \tag{Der-code}\label{eq:derivation-code}
\end{equation}

\paragraph{Target programs.}
For each $q\geq0$, let $\mathsf{TCirc}_q$ be the target IR of
$q$-wire programs.  It contains the empty program, the declared
backend gates, $X$, and, for $|\gamma|=1$,
\[
 \mathsf{Phase}_q(\gamma)\in\mathsf{TCirc}_q,\qquad
 \mathcal U_q(\mathsf{Phase}_q(\gamma))=\gamma I .
 \tag{Target-phase}
\]
It is closed under sequential and parallel composition, lifting along
an ordered wire injection, and exact computational-basis control.
Its interpretation satisfies
\[
 \mathcal U_q(C;D)=\mathcal U_q(D)\mathcal U_q(C)
 \tag{Target-seq}
\]
and
\[
 \mathcal U_{q+r}(C\mathbin{\|}D)
 =
 \mathcal U_q(C)\tensor\mathcal U_r(D).
 \tag{Target-par}
\]
For $q\geq1$, $d\in[q]$, and $b\in\{0,1\}$, let
\[
 \jmath_{d,b}:
 (\mathbb C^2)^{\tensor(q-1)}
 \lhook\joinrel\longrightarrow
 (\mathbb C^2)^{\tensor q}
\]
insert $b$ at coordinate $d$, placing the remaining coordinates in
the increasing order of $[q]\setminus\{d\}$.  Exact control means that,
for $H\in\mathsf{TCirc}_{q-1}$,
\[
\begin{aligned}
 \mathcal U_q\!\left(\mathsf{Ctrl}_{d=b}(H)\right)
 ={}&
 \jmath_{d,b}\mathcal U_{q-1}(H)\jmath_{d,b}^{\dagger}\\
 &+
 \jmath_{d,1-b}\jmath_{d,1-b}^{\dagger}.
\end{aligned}
\tag{Target-control}
\]
A displayed operator product is always the interpretation of the
corresponding temporal target program.

An \emph{exact basis-permutation supplier} is a family
\[
 \mathcal B_q:
 \operatorname{Sym}(\{0,1\}^q)\longrightarrow\mathsf{TCirc}_q
\]
such that
\[
 \mathcal U_q(\mathcal B_q(\pi))=P_\pi,\qquad
 P_\pi\ket x=\ket{\pi(x)} .
 \tag{BP-int}
\]
\begin{lemma}[Exact lowering of basis permutations]
\label{lem:basis-permutation-lowering}
An exact basis-permutation supplier exists for every $q$.
\end{lemma}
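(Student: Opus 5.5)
The lemma asks for an exact lowering, into the target IR $\mathsf{TCirc}_q$, of an arbitrary permutation $\pi$ of computational-basis words. The plan is to write every such permutation as a product of basis transpositions, implement each transposition by a short program built only from $X$, exact control (Target-control) and sequential composition (Target-seq), and then compose. None of this uses the declared backend gates, so it does not depend on (BC).

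\emph{Step 1 (reduction to transpositions).} Since $\operatorname{Sym}(\{0,1\}^q)$ is generated by transpositions, fix the canonical decomposition $\pi=\tau_k\circ\cdots\circ\tau_1$. This can be read off from the cycle decomposition taken in lexicographic order, so that $\mathcal B_q$ is a function of $\pi$. Because $P_{\pi\circ\pi'}=P_\pi P_{\pi'}$ and (Target-seq) reverses temporal order, we set $\mathcal B_q(\pi):=\mathcal B_q(\tau_1);\cdots;\mathcal B_q(\tau_k)$. For $\pi=\id$ we take the empty program, and for $q=0$ the symmetric group is trivial.

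\emph{Step 2 (adjacent transpositions).} Take words $x,y$ differing only at coordinate $d$. Let $H\in\mathsf{TCirc}_{q-1}$ be the program that applies $X$ at $d$ under an exact multi-control on the remaining coordinates matching the common restriction of $x$ and $y$. We build it by nesting $\mathsf{Ctrl}_{d'=b}$ one coordinate at a time, so that a coordinate which is not among the controls has $X$ lifted along the ordered wire injection. We then check on basis vectors, by induction on the number of nested controls and using (Target-control), that the interpretation is exactly the transposition of $\ket x$ and $\ket y$ and fixes every other basis word.

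\emph{Step 3 (general transpositions).} For arbitrary $x\neq y$, choose a Gray path $x=z_0,z_1,\dots,z_m=y$ in which consecutive words differ in exactly one bit; concretely, flip the differing coordinates in increasing order. Then
\[
(x\;y)=(z_0\,z_1)\cdots(z_{m-2}\,z_{m-1})\,(z_{m-1}\,z_m)\,(z_{m-2}\,z_{m-1})\cdots(z_0\,z_1),
\]
the standard conjugation identity, which is verified by tracking the orbits of $x$, $y$ and the intermediate $z_i$. Every other word is fixed by each factor. Composing the Step 2 programs with (Target-seq) then gives (BP-int) for the transposition.

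The main obstacle is the careful verification in Step 2. We must check that the nested exact controls, together with the coordinate reordering implicit in $\jmath_{d,b}$, genuinely give the identity on all non-matching words, with no stray phase. We expect this to go through by a straightforward induction on basis vectors, using only that each $\jmath_{d,b}$ is an isometry onto the $b$-subspace at coordinate $d$ and that the two terms in (Target-control) have orthogonal supports.
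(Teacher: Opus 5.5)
Your proposal is correct and matches the paper's own proof: decompose $\pi$ into transpositions, realize each transposition by exact-controlled one-bit flips along a Gray path and back (your conjugation identity), and compose the programs with (Target-seq). Your Step~2 only makes explicit the nested (Target-control) check that the paper leaves implicit; note that the full multi-controlled flip lies in $\mathsf{TCirc}_q$ and its innermost body in $\mathsf{TCirc}_1$, not $\mathsf{TCirc}_{q-1}$.
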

\begin{proof}
Every finite permutation is a product of transpositions.  A
transposition of two words is obtained from a simple Gray path between
them: exact-controlled one-bit flips along the path and back exchange
the endpoints and fix the other words.  Composing these target
programs along a transposition decomposition gives the supplier.
\end{proof}
Fix this supplier and define
\[
 \mathsf{Retarget}_{\pi}(C)
 :=
 \mathcal B_q(\pi^{-1});C;\mathcal B_q(\pi).
 \tag{Retarget}
\]
Then
\[
 \mathcal U_q(\mathsf{Retarget}_{\pi}(C))
 =
 P_\pi\mathcal U_q(C)P_\pi^\dagger .
 \tag{Retarget-int}
\]

\begin{definition}[Six-field target artifact]
\label{def:symbolic-layout-artifact}
A target artifact over $\mathcal N$ is a tuple
\[
 \mathcal A_{\mathcal N}
 =
 (q_{\mathcal N},G_{\mathcal N},
  L_{\mathcal N}^-,L_{\mathcal N}^+,
  p_{\mathcal N}^-,p_{\mathcal N}^+)
 \tag{Artifact-ty}
\]
where, for
$\mathcal H_{\mathcal N}
=(\mathbb C^2)^{\tensor q_{\mathcal N}}$,
\[
\begin{aligned}
 G_{\mathcal N}&\in\mathsf{TCirc}_{q_{\mathcal N}},\\
 L_{\mathcal N}^\epsilon&:
 \mathcal H_{\mathcal N}\xrightarrow{\cong}\mathcal H_{\mathcal N},\\
 p_{\mathcal N}^\epsilon&:
 \widetilde C_{\mathcal N}^\epsilon
 \lhook\joinrel\longrightarrow\mathcal H_{\mathcal N}
 \qquad(\epsilon\in\{-,+\}).
\end{aligned}
\]
The $L^\epsilon$ are total unitary frame extensions and the
$p^\epsilon$ are isometric placements with zero padding outside the
displayed live factors.  For any such tuple define
\[
 \Framed{\mathcal A_{\mathcal N}}
 :=
 (L_{\mathcal N}^+)^\dagger
 \mathcal U_{q_{\mathcal N}}(G_{\mathcal N})
 L_{\mathcal N}^- .
 \tag{Artifact-int}
\]
The target program and frame operators are total on the padded
register; all realization equations are restricted to the selected
placement domains.
\end{definition}

\paragraph{Padding and lifting.}
For an ordered wire injection
$\varrho:[k]\hookrightarrow[Q]$, let $P_\varrho$ be its fixed
order-preserving completion to a wire permutation and define
\[
\begin{aligned}
 \mathsf{Pad}_\varrho(\ket\psi)
 &:=
 P_\varrho(\ket\psi\tensor\ket{0^{Q-k}}),\\
 \mathsf{Lift}_\varrho(U)
 &:=
 P_\varrho(U\tensor I_{Q-k})P_\varrho^\dagger .
\end{aligned}
\tag{Pad-Lift}
\]
For $C\in\mathsf{TCirc}_k$, the target constructor
$\mathsf{TLift}_\varrho(C)\in\mathsf{TCirc}_Q$ satisfies
\[
 \mathcal U_Q(\mathsf{TLift}_\varrho(C))
 =
 \mathsf{Lift}_\varrho(\mathcal U_k(C)).
 \tag{Target-lift}
\]

\begin{definition}[Flat, sector, and common-register layouts]
\label{def:layout-iso}
Let $\mathcal A_{\mathcal N}$ be a six-field artifact over a canonical
normal derivation $\mathcal N$ of shape $\mathfrak J$.  Its
common-register code sectors and restricted layouts are
\[
 C_{\mathcal N}^\pm
 :=
 p_{\mathcal N}^\pm(\widetilde C_{\mathcal N}^\pm)
 \subseteq\mathcal H_{\mathcal N},
 \qquad
 \lambda_{\mathcal N}^\pm
 :=
 \lay_{\mathfrak J}^\pm
 (p_{\mathcal N}^\pm)^\dagger
 \!\upharpoonright_{C_{\mathcal N}^\pm}.
\]
Thus
$\lambda_{\mathcal N}^\pm:C_{\mathcal N}^\pm
 \xrightarrow{\cong}B_{\mathcal N}^\pm$.
\end{definition}

\begin{lemma}[Layout maps are unitary]
\label{lem:lambda-unitary}
The flat layouts are unitary, and for every six-field artifact the
placements are isometries and the restricted layouts are unitary
between their displayed spaces.
\end{lemma}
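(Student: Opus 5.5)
The proof splits into the three claims of Lemma~\ref{lem:lambda-unitary}: the flat layouts $\lay_{\mathfrak J}^\pm$ are unitary, the placements $p_{\mathcal N}^\pm$ are isometric, and the restricted layouts $\lambda_{\mathcal N}^\pm$ are unitary.

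For the flat layouts, I will induct on the layout expression, following the clauses that define $\lay^\pm$.
\begin{itemize}
\item At $\base$ the map is the identity $\mathbb C\to\mathbb C$.
\item At $\tensor$ it is a tensor product of maps that are unitary by the induction hypothesis, composed with suppressed canonical reassociations and wire permutations. Those are basis permutations, so Lemma~\ref{lem:unitary-transport} applies.
\item At implication it is $\lay_A^\mp\tensor\lay_B^\pm$, again a tensor of unitaries. Polarity reversal only chooses which factor map is used; it does not change $V_{T^*}=V_T$ (Remark~\ref{rem:dual-valid}).
\item At a binary sum, \eqref{eq:sum-layout} sends the two live-tag blocks of \eqref{eq:valid-sum} into the two biproduct injections $\iota_0,\iota_1$. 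The blocks are orthogonal because they carry different root-tag values, and the injections are orthogonal in $\oplus$. On each block the map is $\lay_{A_i}^\pm$ composed with the removal of zero padding, which is unitary onto the summand. Hence the whole map is an orthogonal direct sum of unitaries, unitary by Lemma~\ref{lem:oplus-closure}.
\end{itemize}
The judgment-level map $\lay_{\mathfrak J}^\pm$ tensors these over the context and result factors. After the fixed rig normal-form repartition, which is a basis permutation, it lands in $\mathcal E_{\mathfrak J}^\pm$.

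I expect the one genuine check to be surjectivity onto the polarity-sorted envelope $\sem{\partial^\pm(\varphi_{\mathfrak J})}$. An isometry between equal finite dimensions is unitary, so I will compare dimensions. Lemma~\ref{lem:valid-dim} gives $\dim V_T=\size T$ on the physical side. On the semantic side, each DNF monomial contributes exactly one negative and one positive coordinate under the paper's convention, so $\dim\mathcal E^\pm=\prod\size{A_i}\cdot\size A$ as well.

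For the placements, the six-field artifact of Definition~\ref{def:symbolic-layout-artifact} requires each $p^\epsilon$ to be an isometric placement with zero padding. I will exhibit it in the form $\mathsf{Pad}_\varrho$ of \textup{(Pad-Lift)}, possibly composed with fixed wire permutations. That map is $P_\varrho(-\tensor\ket{0^{Q-k}})$, an isometry because tensoring with a unit vector preserves inner products and $P_\varrho$ is unitary. Its restriction to $\widetilde C^\epsilon_{\mathcal N}$ remains isometric.

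For the restricted layouts, $p^\pm$ is an isometry, so $(p^\pm)^\dagger$ restricted to $C^\pm=p^\pm(\widetilde C^\pm)$ is the unitary inverse of $p^\pm$ onto $\widetilde C^\pm$. Next, \eqref{eq:derivation-code} defines $\widetilde C^\pm=(\lay^\pm)^{-1}(B^\pm)$, so the unitary $\lay^\pm$ restricts to a unitary $\widetilde C^\pm\to B^\pm$. Composing the two unitaries by Lemma~\ref{lem:comp-closure} gives $\lambda_{\mathcal N}^\pm:C^\pm_{\mathcal N}\to B^\pm_{\mathcal N}$ unitary, as claimed.
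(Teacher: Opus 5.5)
Your argument is the paper's own: structural induction through \eqref{eq:sum-layout} for the flat layouts, isometry of the placements read off the artifact typing in Definition~\ref{def:symbolic-layout-artifact}, and $\lambda^\pm_{\mathcal N}$ obtained by composing the flat layout with the partial inverse of $p^\pm_{\mathcal N}$ on its image. Drop the proposed exhibition of $p^\epsilon$ as a restricted $\mathsf{Pad}_\varrho$ up to wire permutations, because it is both unnecessary and false in general. For example, $\mathsf{Wire}_A$ has placement $p_A^\epsilon=\zeta_A(\widehat g^\epsilon_{\mathcal X_A})^\dagger$, which sends a code in the $2\nwires(A)$-wire flat space of $x{:}A\intjudge x:A$ into an $\nwires(A)$-wire register, something no padding map can do. Placements are also routinely precomposed with views $\widehat\omega^\epsilon$ that are not permutations.
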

\begin{proof}
The flat claim is structural induction using \eqref{eq:sum-layout}.
The remaining claims follow from the artifact typing: restrict the
partial inverse of each isometric placement to its image and compose
with the flat layout.
\end{proof}

\paragraph{First-order identification.}
\label{par:first-order-identification}
For first-order $A$, the negative and positive data copies have the
same physical encoding; write
$\lay_A:V_A\xrightarrow{\cong}\sem A$.

\subsection{Plan-Neutral Coordinate Laws}

These are definitions and local commuting squares used by the active
proof.  They allocate no carrier and contain no recursive compiler
argument.

For every selected package $\mathcal X$ below, of judgment shape
$\mathfrak J_{\mathcal X}$, write
\[
 \ell_{\mathcal X}^{\epsilon}
 :=\lay_{\mathfrak J_{\mathcal X}}^{\epsilon}
   \!\upharpoonright_{\widetilde C_{\mathcal X}^{\epsilon}}
 :\widetilde C_{\mathcal X}^{\epsilon}
  \xrightarrow{\;\cong\;}B_{\mathcal X}^{\epsilon}.
 \tag{Restricted-layout}
\]

\paragraph{Completion and simple reparameterization.}
For an inactive context $\Xi$, let
\[
 q_\Xi:=\sum_{x{:}T\in\Xi}\nwires(T),
 \qquad
 \zeta_{\Xi,0}:\mathsf D_\Xi
 \lhook\joinrel\longrightarrow
 (\mathbb C^2)^{\tensor q_\Xi}
\]
be the tensor product, in context order, of the canonical type-word
incidences of its factors.  With the fixed readouts $d_\Xi^\epsilon$
of \textup{(TY-read)}, put
\[
 \zeta_\Xi^\epsilon
 :=\zeta_{\Xi,0}d_\Xi^\epsilon:
 \mathcal T_\Xi^\epsilon\lhook\joinrel\longrightarrow
 (\mathbb C^2)^{\tensor q_\Xi}.
\]
Then \textup{(TY-read)} gives the spectator square
\[
 \zeta_\Xi^+Y_\Xi
 =\zeta_{\Xi,0}d_\Xi^+Y_\Xi
 =\zeta_{\Xi,0}d_\Xi^-
 =\zeta_\Xi^- .
 \tag{Through-code}
\]
This is spectator presentation data carried by the identity program,
not a standalone six-field artifact.  For the empty context,
\[
 \mathsf D_\varnothing=\mathcal T_\varnothing^\epsilon=\mathbb C,
 \qquad
 \zeta_{\varnothing,0}=\zeta_\varnothing^\epsilon=\id_{\mathbb C}.
 \tag{Through-empty}
\]
Let $\mathcal A_\parallel$ be the planned parallel artifact for an
action together with this inactive spectator, with placements
$p_\parallel^\epsilon$, and let $\ell_\parallel^\epsilon$ be its
restricted flat layout.  Let
$\ell_{\mathcal A\mid\Xi}^\epsilon$ be the same selected boundary in
canonical active-first order and define
\[
 \widehat d_{\mathcal A\mid\Xi}^\epsilon
 :=
 (\ell_\parallel^\epsilon)^{-1}
 \ell_{\mathcal A\mid\Xi}^\epsilon,
 \qquad
 \ell_\parallel^\epsilon
 \widehat d_{\mathcal A\mid\Xi}^\epsilon
 =
 \ell_{\mathcal A\mid\Xi}^\epsilon .
 \tag{Complete-pullback}
\]
Define the completed artifact to have the same register, target
program, and frames as $\mathcal A_\parallel$, but placements
\[
 p_{\mathcal A\mid\Xi}^\epsilon
 :=
 p_\parallel^\epsilon
 \widehat d_{\mathcal A\mid\Xi}^\epsilon .
 \tag{Complete-place}
\]
For $\Xi\mid\mathcal A$ use the analogous pullback into the opposite
canonical factor order.  These are unchanged-image views:
$\operatorname{im}(p_{\mathcal A\mid\Xi}^\epsilon)
=\operatorname{im}(p_\parallel^\epsilon)$.
Let $W_{\mathcal A\mid\Xi}$ denote the restriction of the unchanged
framed program to this common physical sector, and define
$\lambda_{\mathcal A\mid\Xi}^\epsilon$ from
Definition~\ref{def:layout-iso} using
\textup{(Complete-place)}.  Tensoring the active realization square with
\textup{(Through-code)} and applying \textup{(Target-lift)} in the planned
ambient register gives the parallel realization equation.  Since
\[
 \ell_{\mathcal A\mid\Xi}^\epsilon
 (\widehat d_{\mathcal A\mid\Xi}^\epsilon)^\dagger
 =
 \ell_\parallel^\epsilon,
\]
substitution into the parallel realization equation gives
\[
\begin{aligned}
 \lambda_{\mathcal A\mid\Xi}^{+}
 W_{\mathcal A\mid\Xi}
 (\lambda_{\mathcal A\mid\Xi}^{-})^{-1}
 &=U_{\mathcal A}\tensor Y_{\Xi},\\
 \lambda_{\Xi\mid\mathcal A}^{+}
 W_{\Xi\mid\mathcal A}
 (\lambda_{\Xi\mid\mathcal A}^{-})^{-1}
 &=Y_{\Xi}\tensor U_{\mathcal A}.
\end{aligned}
\tag{Complete-selected}
\]

For
$\mathcal B:(\Gamma,x{:}A\intjudge N:B)$ and its abstraction
$\mathcal L:(\Gamma\intjudge\lam{x}{N}:A\lmark B)$, let
\[
 \overline r_A^\epsilon:
 \mathcal E_{\mathfrak J_{\mathcal L}}^\epsilon
 \xrightarrow{\;\cong\;}
 \mathcal E_{\mathfrak J_{\mathcal B}}^\epsilon
\]
be the canonical polarity repartition, and write
$r_{A,B}^\epsilon$ for its restriction
$B_{\mathcal L}^\epsilon\xrightarrow{\cong}B_{\mathcal B}^\epsilon$.
The abstraction clause gives
$r_{A,B}^+U_{\mathcal L}=U_{\mathcal B}r_{A,B}^-$.  Define its flat
pullback by
\[
 r_A^\epsilon
 :=(\ell_{\mathcal B}^\epsilon)^{-1}
    r_{A,B}^\epsilon\ell_{\mathcal L}^\epsilon:
 \widetilde C_{\mathcal L}^\epsilon
 \xrightarrow{\;\cong\;}
 \widetilde C_{\mathcal B}^\epsilon,
 \qquad
 \ell_{\mathcal B}^\epsilon r_A^\epsilon
 =r_{A,B}^\epsilon\ell_{\mathcal L}^\epsilon.
 \tag{Repart-pullback}
\]
Thus implication repartition retains $(q,G,L^-,L^+)$ and changes the
selected placement by
\[
 p^\epsilon\longmapsto p^\epsilon r_A^\epsilon .
 \tag{Repart}
\]
For tensor packaging, write $\ell_{\mathcal N}^\epsilon$ and
$\ell_{\mathcal B_N}^\epsilon$ for the two restricted flat layouts and
define
\[
 \widehat\vartheta_N^\epsilon
 :=
 (\ell_{\mathcal N}^\epsilon)^{-1}
 \vartheta_N^\epsilon\ell_{\mathcal B_N}^\epsilon,
 \qquad
 \ell_{\mathcal N}^\epsilon\widehat\vartheta_N^\epsilon
 =
 \vartheta_N^\epsilon\ell_{\mathcal B_N}^\epsilon .
 \tag{TenPack-pullback}
\]
The packaging view is
\[
 p^\epsilon\longmapsto p^\epsilon\widehat\vartheta_N^\epsilon .
 \tag{TenPack}
\]

\paragraph{Retained higher-order cut network.}
For a classified cut $\rho$ at a port containing $\lmark$, retain the
finite expression $\mathsf{Net}(\rho)$ constructed by
Definition~\ref{def:classified-cut}, before its semantic operators are
multiplied.  For any unitary selected-coordinate isomorphism
$\omega^\epsilon:D_{\rm out}^\epsilon\to D_{\rm in}^\epsilon$ in that
expression, $D_{\rm in}^\epsilon$ is the selected sector of the
artifact receiving the placement.  At a product node its restricted
layout $\ell_{\rm in}^\epsilon$ is the tensor product of the component
restricted layouts; it need not itself be a judgment-shape instance
of \textup{(Restricted-layout)}.  Define its flat pullback by
\[
 \ell_{\rm in}^\epsilon\widehat\omega^\epsilon
 =
 \omega^\epsilon\ell_{\rm out}^\epsilon .
 \tag{Net-repart-pullback}
\]
It is unitary and changes a placement only by
\[
 p^\epsilon\longmapsto p^\epsilon\widehat\omega^\epsilon,
 \qquad
 \operatorname{im}(p^\epsilon\widehat\omega^\epsilon)
 =
 \operatorname{im}(p^\epsilon).
 \tag{Net-Repart}
\]
For \textup{(C-var-whole)} the view is
\[
 \omega_{\rho,S}^\epsilon
 :=
 \chi_\rho^\epsilon
 =
 (\mathsf{ug}_{\rho,S}^\epsilon)^\dagger
 \!\upharpoonright_{B_\rho^\epsilon}:
 B_\rho^\epsilon\xrightarrow{\cong}K_\rho^\epsilon .
 \tag{Eta-repart}
\]
The retained node inventory is
\[
\begin{array}{@{}ll@{}}
\textup{semantic clause}&\textup{planned lowering node}\\
\hline
\textup{(Y)}
 &\text{the unchanged-image view }\widehat\omega\\
\textup{(C-var-app)}
 &\textup{(Plan-Par) on operands, consumer, and yank, then }\widehat\omega\\
\textup{(C-var-whole)}
 &\textup{(Plan-Par) on operands and consumer, then }\widehat\omega_{\rho,S}\\
\textup{(EndAct)},\ h=s
 &\mathsf{StrAct}_{s}\\
\textup{(C-str)}
 &\mathsf{StrPre}_{s}\text{ followed by the prior planned cut }\rho'.
\end{array}
\tag{HSplice-cases}
\]
The tensor and shared-block rows traversed by
\textup{(C-var-whole)} are already included in
$\omega_{\rho,S}$ through \textup{(Eta-graft)}.

For \textup{(EndAct)} and \textup{(Str-pre)}, respectively, define the
code pullbacks by
\[
 \ell_{s\mathcal D}^\epsilon\widehat e_{s,\mathcal D}^\epsilon
 =
 e_{s,\mathcal D}^\epsilon\ell_{\mathcal D}^\epsilon
 \tag{StrAct-pullback}
\]
and
\[
 \ell_{\mathcal C}^\epsilon\widehat t_{s,\mathcal C}^\epsilon
 =
 t_{s,\mathcal C}^\epsilon
 \ell_{\mathcal C\triangleleft s}^\epsilon .
 \tag{StrPre-pullback}
\]
Thus $\mathsf{StrAct}_{s}$ uses
$p^\epsilon(\widehat e_{s,\mathcal D}^\epsilon)^\dagger$ and
$\mathsf{StrPre}_{s}$ uses
$p^\epsilon\widehat t_{s,\mathcal C}^\epsilon$; both retain the
physical program and carrier.

\paragraph{Block coordinate data.}
For any binary block, let
\[
 e_i^\epsilon:
 \widehat B_i^\epsilon\lhook\joinrel\longrightarrow
 B_{\mathcal B}^\epsilon
 \qquad(i=1,2)
\]
be the component inclusions supplied by its semantic clause: the
standalone-map inclusions, or the $j_i^\epsilon$ of
\textup{(Use-part)} or \textup{(CS-part)}.  With completed branch and
block shapes $\widehat{\mathfrak J}_i$ and
$\mathfrak J_{\mathcal B}$, put
\[
 \widehat C_i^\epsilon
 :=
 (\lay_{\widehat{\mathfrak J}_i}^\epsilon)^{-1}
 (\widehat B_i^\epsilon),
 \qquad
 C_{\mathcal B}^{\epsilon,\mathrm{flat}}
 :=
 (\lay_{\mathfrak J_{\mathcal B}}^\epsilon)^{-1}
 (B_{\mathcal B}^\epsilon).
\]
Write $\widehat\ell_i^\epsilon$ and
$\ell_{\mathcal B}^\epsilon$ for the restricted layouts and let
$k_i^\epsilon$ be the canonical inclusions into
$\widehat C_1^\epsilon\oplus\widehat C_2^\epsilon$.  Define
\[
\begin{aligned}
 P_{\mathcal B}^\epsilon&:
 \widehat C_1^\epsilon\oplus\widehat C_2^\epsilon
 \xrightarrow{\cong}C_{\mathcal B}^{\epsilon,\mathrm{flat}},\\
 P_{\mathcal B}^\epsilon k_i^\epsilon
 &:=
 (\ell_{\mathcal B}^\epsilon)^{-1}
 e_i^\epsilon\widehat\ell_i^\epsilon .
\end{aligned}
\tag{Block-pack}
\]
Orthogonality and joint exhaustion of the semantic inclusions make
$P_{\mathcal B}^\epsilon$ unitary.  If
$\widehat U_i:\widehat B_i^-\to\widehat B_i^+$ is the completed branch
action, its code-coordinate action is
\[
 \widehat V_i
 :=
 (\widehat\ell_i^+)^{-1}
 \widehat U_i\widehat\ell_i^-:
 \widehat C_i^-\xrightarrow{\cong}\widehat C_i^+ .
 \tag{Block-code}
\]

\paragraph{Distributor frame.}
For the coherent-sharing distributor, define
\[
 \widehat d_{\delta,K}^\epsilon
 :=
 (\ell_{\rho_K}^\epsilon)^{-1}
 d_{\delta,K}^\epsilon\ell_K^\epsilon:
 \widetilde C_K^\epsilon
 \xrightarrow{\cong}\widetilde C_{\rho_K}^\epsilon .
 \tag{Delta-code}
\]
Choose the unchanged-image presentation
$p_K^\epsilon=p_{\rho_K}^\epsilon
 \widehat d_{\delta,K}^\epsilon$.  Its physical image
$C_K^\epsilon=p_K^\epsilon(\widetilde C_K^\epsilon)$ is the same as
that of $p_{\rho_K}^\epsilon$.  Define the unchanged artifact
\[
 \mathsf{Frame}_{\delta,K}(\mathcal A_{\rho_K})
 :=
 (q_{\rho_K},G_{\rho_K},
  L_{\rho_K}^-,L_{\rho_K}^+,p_K^-,p_K^+).
 \tag{Delta-frame}
\]
Moreover,
\[
 \lambda_K^\epsilon
 =\ell_K^\epsilon(\widehat d_{\delta,K}^\epsilon)^\dagger
   (p_{\rho_K}^\epsilon)^\dagger
 =(d_{\delta,K}^\epsilon)^{-1}\lambda_{\rho_K}^\epsilon .
 \tag{Delta-layout}
\]
Consequently the unchanged artifact gives
\[
 \lambda_K^+
 \Framed{\mathsf{Frame}_{\delta,K}(\mathcal A_{\rho_K})}
 (\lambda_K^-)^{-1}
 =
 (d_{\delta,K}^+)^{-1}
 \lambda_{\rho_K}^+
 \Framed{\mathcal A_{\rho_K}}
 (\lambda_{\rho_K}^-)^{-1}
 d_{\delta,K}^- .
 \tag{Delta-realize}
\]

\paragraph{Native artifacts.}
For the variable derivation
$\mathcal X_A:(x{:}A\intjudge x:A)$, put
\[
 \widehat g_{\mathcal X_A}^\epsilon
 :=(\ell_{\mathcal X_A}^\epsilon)^{-1}g_{\mathcal X_A}^\epsilon:
 \mathsf D_A\xrightarrow{\;\cong\;}
 \widetilde C_{\mathcal X_A}^\epsilon .
\]
Let
$\zeta_A:\mathsf D_A\lhook\joinrel\longrightarrow
(\mathbb C^2)^{\tensor\nwires(A)}$
be the canonical equal-address type-codeword incidence in the fixed
basis, including the prescribed padding.  The fixed basis of
$\mathsf D_A$ is indexed by the type labels $\mathcal L(A)$, and
Lemma~\ref{lem:valid-dim} supplies the required equality of
dimensions.  Define
\[
 p_A^\epsilon:=\zeta_A
  (\widehat g_{\mathcal X_A}^\epsilon)^\dagger,
 \qquad
 \mathsf{Wire}_A
 :=(\nwires(A),I,I,I,p_A^-,p_A^+).
 \tag{Wire}
\]
Then
\[
 p_A^-\widehat g_{\mathcal X_A}^-
 =\zeta_A
 =p_A^+\widehat g_{\mathcal X_A}^+.
 \tag{Wire-square}
\]
The variable chart has coordinate action $I$, so
\textup{(Wire-square)} realizes $\mathsf{yank}_A$.
For a primitive structural atom $s:T\cong S$, write
$\mathfrak J_s=(\cdot\intjudge s:T\lmark S)$ and put
\[
 \widehat g_s^\epsilon
 :=
 (\lay_{\mathfrak J_s}^{\epsilon})^{-1}g_s^\epsilon,
 \qquad
 \widetilde C_s^\epsilon=\operatorname{im}(\widehat g_s^\epsilon),
 \qquad
 q_s=\max\{\nwires(T),\nwires(S)\}.
\]
Let $\overline c_T,\overline c_S$ be the padded codeword embeddings of
Lemma~\ref{lem:structural-layout} and set
\[
 p_s^-=\overline c_T(\widehat g_s^-)^\dagger,\qquad
 p_s^+=\overline c_S(\widehat g_s^+)^\dagger
 \tag{Struct-place}
\]
on the selected graph codes.  Let $G_s=I$ for the associativity and
distributivity atoms and their inverses, and let $G_s$ be $X$ on the
root additive tag for additive symmetry and its inverse:
\[
 G_s=
 \begin{cases}
 I,&s\in\{\alpha^\plus,\distL,\distR
          \text{ and their inverses}\},\\
 \mathsf X_{\operatorname{tag}(s)},
   &s=\sigma^\plus\text{ or its inverse}.
 \end{cases}
 \tag{Struct-G}
\]
Take $L_s^-=I$ and define $L_s^+$ on the selected range by
\[
 G_sL_s^-p_s^-\widehat g_s^-
 =
 L_s^+p_s^+\widehat g_s^+\mathsf{Rig}(s),
 \tag{Struct-frame}
\]
then extend by INV-4.  This gives the six-field structural artifact
$\mathsf{Struct}_s$.  Thus $G_s=I$ denotes the natural framed core for
associativity and distributivity; it does not claim that their
recursive-binary codeword maps are wire-address permutations.  If an
adjacent planned incidence has a different computational-word image,
Lemma~\ref{lem:planned-endpoint-transport} emits the required exact
basis-word adapter.  In particular, additive associativity and an
unequal-width right distributor pay for their re-encoding there, not
again in \textup{(Struct-G)}.
The quantum base artifact $\mathsf{Prim}_a$ is supplied by
\textup{(BC)} below.

\subsection{Backend Correctness}
\label{app:soundness-assumptions}

The proof uses the backend-correctness hypothesis \textup{(BC)} of
\S\ref{subsubsec:unitary-primitives}.  It covers the declared quantum
artifacts and all target constructors used above.  For
$a=\expi{\theta}{J}:P\lmark P$, with $P$ first-order, let
$\mathcal A_a:(\cdot\intjudge a:P\lmark P)$ be its canonical
derivation and put
\[
 V_a=e^{i\theta J},\qquad n_P=\nwires(P),\qquad
 \mathcal H_P=(\mathbb C^2)^{\tensor n_P}.
\]
The backend supplies the six-field artifact over $\mathcal A_a$
\[
 \mathsf{Prim}_a
 =
 (n_P,G_a,L_a^-,L_a^+,p_a^-,p_a^+),
 \qquad
 G_a\in\mathsf{TCirc}_{n_P}.
 \tag{BC-carrier}
\]
Put
\[
 \varepsilon_a^-=
 \mathsf{src}_P\!\upharpoonright_{B_{\mathcal A_a}^-},
 \qquad
 \varepsilon_a^+=
 \mathsf{tgt}_P\!\upharpoonright_{B_{\mathcal A_a}^+}.
\]
Writing
$\ell_a^\epsilon:=\ell_{\mathcal A_a}^\epsilon$,
the type-canonical placement clause is the explicit equation
\[
 p_a^\epsilon
 =
 c_P\varepsilon_a^\epsilon\ell_a^\epsilon .
 \tag{BC-code}
\]
Thus both placements have the declared valid $P$-codeword image,
although their selected-coordinate parameterizations may differ.
Tag and zero-\allowbreak padding coordinates prescribed by the type
already belong to this carrier.  The backend obligation is
\[
 \varepsilon_a^+\lambda_{\mathcal A_a}^+
 \Framed{\mathsf{Prim}_a}\!\upharpoonright_{C_{\mathcal A_a}^-}
 (\lambda_{\mathcal A_a}^-)^{-1}
 (\varepsilon_a^-)^{-1}
 =
 V_a .
 \tag{BC-Exp}
\]
Equivalently, \textup{(BC-code)} reduces this obligation to the
directly checkable physical-code equation
\[
 \Framed{\mathsf{Prim}_a}\,c_P=c_PV_a,
 \qquad
 c_P^\dagger\Framed{\mathsf{Prim}_a}\,c_P=V_a .
 \tag{BC-circuit}
\]
The first equality includes preservation of the valid code sector;
conversely the second implies it because the displayed compression is
unitary.  No action on the complement of that sector is assumed;
composite widths are supplied by CarrierPlan.

\begin{lemma}[Closed exponential realization]
\label{lem:exp-correctness}
Assume \textup{(BC)}.  For the canonical derivation $\mathcal A_a$
above,
\begin{equation}
 \Framed{\mathsf{Prim}_a}\!\upharpoonright_{C_{\mathcal A_a}^-}
 =
 (\lambda_{\mathcal A_a}^+)^{-1}
 \SEM{\cdot\intjudge a:P\lmark P}_{\mathrm{NF}}
 \lambda_{\mathcal A_a}^- .
\label{eq:exp-realization}
\end{equation}
\end{lemma}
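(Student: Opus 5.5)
The plan is to reduce \eqref{eq:exp-realization} to the coordinate form \textup{(BC-Exp)} by conjugating with the unitary charts $\varepsilon_a^\pm$. By the atomic \textsc{Exp} case of Theorem~\ref{thm:nf-boundary-unitarity}, or equivalently by \textup{(Exp-graph)}, the selected spaces are $B_{\mathcal A_a}^-=\mathcal Y_P^-$ and $B_{\mathcal A_a}^+=(I\tensor V_a)\mathcal Y_P^+$. The denotation is $U_{\mathcal A_a}=\mathsf{Graph}_P(V_a)$, and \textup{(QGraph-read)} gives
\[
 \varepsilon_a^+\,U_{\mathcal A_a}=V_a\,\varepsilon_a^- ,
\]
where $\varepsilon_a^\pm$ are the restrictions of $\mathsf{src}_P$ and $\mathsf{tgt}_P$. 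These restrictions are unitary isomorphisms onto $\sem P$; this is exactly the singleton-shape clause of Theorem~\ref{thm:nf-boundary-unitarity}, applied here to the closed atom chart.

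Next I would use Lemma~\ref{lem:lambda-unitary} to know that $\lambda_{\mathcal A_a}^\pm:C_{\mathcal A_a}^\pm\to B_{\mathcal A_a}^\pm$ are unitary. Then \textup{(BC-Exp)} can be rewritten as
\[
 \lambda_{\mathcal A_a}^+\Framed{\mathsf{Prim}_a}\!\upharpoonright_{C_{\mathcal A_a}^-}(\lambda_{\mathcal A_a}^-)^{-1}
 =(\varepsilon_a^+)^{-1}V_a\varepsilon_a^- .
\]
By the displayed graph-read identity, the right-hand side equals $U_{\mathcal A_a}=\SEM{\cdot\intjudge a:P\lmark P}_{\mathrm{NF}}$. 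Pre- and post-composing with $\lambda^-$ and $(\lambda^+)^{-1}$ then yields \eqref{eq:exp-realization}.

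One typing point needs care. The left side of \textup{(BC-Exp)} only makes sense once one knows that $\Framed{\mathsf{Prim}_a}$ maps $C^-_{\mathcal A_a}$ into $C^+_{\mathcal A_a}$. I would derive this from \textup{(BC-code)} together with \textup{(BC-circuit)}. Since $p_a^\epsilon=c_P\varepsilon_a^\epsilon\ell_a^\epsilon$, both code sectors equal the valid codeword image $c_P(\sem P)$, and the first equation of \textup{(BC-circuit)} says that the framed operator preserves this image. Unfolding Definition~\ref{def:layout-iso} gives $\lambda^\epsilon=\ell_a^\epsilon(p_a^\epsilon)^\dagger$ restricted to that image, which equals $(\varepsilon_a^\epsilon)^{-1}c_P^\dagger$ there. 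With this, \textup{(BC-Exp)} follows directly from $c_P^\dagger\Framed{\mathsf{Prim}_a}c_P=V_a$, so one may start from either form of (BC).

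The only step I expect to require attention is this bookkeeping of the placement equation \textup{(BC-code)}. One must check that $\ell_a^\epsilon$ really has $\widetilde C^\epsilon_{\mathcal A_a}$ as its domain, so that the composites $\varepsilon^\epsilon_a\ell^\epsilon_a$ are unitary onto $\sem P$. One must also check that the two placements, although parametrized differently on the source and target sides, share the image $c_P(\sem P)$. Everything else is composition of unitaries, using Lemma~\ref{lem:comp-closure}.
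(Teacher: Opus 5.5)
Your proposal is correct and follows the paper's proof. Like the paper, you use \textup{(Exp-graph)} to identify the denotation with $\mathsf{Graph}_P(V_a)$, use \textup{(QGraph-read)} to rewrite it as $(\varepsilon_a^+)^{-1}V_a\varepsilon_a^-$, and solve \textup{(BC-Exp)} for the framed map. Your extra check, via \textup{(BC-code)} and \textup{(BC-circuit)}, that $\Framed{\mathsf{Prim}_a}$ carries $C_{\mathcal A_a}^-$ into $C_{\mathcal A_a}^+$ spells out what the paper leaves to the remark after \textup{(BC-circuit)}. One small correction: unitarity of $\varepsilon_a^\pm$ comes directly from the statement accompanying \textup{(QGraph-read)}, not from the singleton-shape clause.
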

\begin{proof}
Equation~\textup{(Exp-graph)} identifies the selected normal-form
action with $\mathsf{Graph}_P(V_a)$, while
\textup{(QGraph-read)} gives
$(\varepsilon_a^+)^{-1}V_a\varepsilon_a^-
=\mathsf{Graph}_P(V_a)$.  Solving \textup{(BC-Exp)} for the framed
map gives the claim.
\end{proof}

For later use, if a unitary selected-coordinate pullback satisfies
$\ell_{\rm old}^\epsilon\widehat\omega^\epsilon
=\omega^\epsilon\ell_{\rm new}^\epsilon$ and
$p_{\rm new}^\epsilon=p_{\rm old}^\epsilon\widehat\omega^\epsilon$,
then
\[
 \lambda_{\rm new}^\epsilon
 =
 \ell_{\rm new}^\epsilon
 (p_{\rm new}^\epsilon)^\dagger
 =
 (\omega^\epsilon)^{-1}\lambda_{\rm old}^\epsilon .
 \tag{View-layout}
\]
Thus substitution transports a selected realization equation without
changing its physical image or target program.

\begin{lemma}[Plan-neutral local realization laws]
\label{lem:plan-local-laws}
Assume \textup{(BC)}, and suppose a premise artifact satisfies its
selected realization equation.  The following native or
unchanged-image rows realize the corresponding semantic clauses
without allocating any additional, emitter-owned wire:
\[
\begin{array}{@{}lll@{}}
\textbf{row}&\textbf{semantic action}&\textbf{reason}\\
\hline
\mathsf{Wire}&\mathsf{yank}&\textup{(Wire-square)}\\
\mathsf{Struct}&\mathsf{Rig}_{\mathfrak S}(s)&\textup{(Struct-frame)}\\
\mathsf{Prim}&\mathsf{Graph}(V_a)&\textup{(BC-Exp)}\\
\mathsf{Repart}&\text{implication repartition}&\textup{(Repart)}\\
\mathsf{StrAct}&\mathsf{EndAct}_s&\textup{(StrAct-pullback)}\\
\mathsf{StrPre}&(t_{s,\mathcal C}^+)^\dagger U
 t_{s,\mathcal C}^-&\textup{(StrPre-pullback)}\\
\mathsf{TenPack}&U_{\mathcal B_N}
 &\textup{(TenPack-pullback)},\ \textup{(TenPack)}\\
\mathsf{Frame}_{\delta,K}
 &(d_{\delta,K}^+)^{-1}U d_{\delta,K}^-
 &\textup{(Delta-realize)}.
\end{array}
\tag{Plan-local-laws}
\]
\end{lemma}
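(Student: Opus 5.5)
The plan is to check the eight rows one at a time. Each row is a direct consequence of the cited defining equation together with the generic substitution rule (View-layout). First, fix what ``realizes'' means uniformly. An artifact $\mathcal A$ over a package $\mathcal X$ realizes $U_{\mathcal X}$ when $\lambda_{\mathcal X}^+\Framed{\mathcal A}\!\upharpoonright_{C_{\mathcal X}^-}(\lambda_{\mathcal X}^-)^{-1}=U_{\mathcal X}$, and it maps $C_{\mathcal X}^-$ onto $C_{\mathcal X}^+$. The no-new-wire claim is then read off syntactically in each row. The native rows $\mathsf{Wire}$, $\mathsf{Struct}$ and $\mathsf{Prim}$ use exactly $\nwires(A)$, $q_s$ and $n_P$ wires, which are type widths. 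Every other row retains $(q,G,L^-,L^+)$ of its premise and changes only placements.

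For the native rows, the argument goes as follows.
\begin{itemize}
\item $\mathsf{Wire}$: the framed operator is $I$. By (Wire-square), $p_A^+=p_A^-\widehat g^-(\widehat g^+)^{\dagger}$. Unfolding $\lambda^\pm=\ell^\pm(p_A^\pm)^\dagger$ gives $\lambda^+(\lambda^-)^{-1}=g^+(g^-)^\dagger$. Since the variable chart has coordinate action $I$, (Chart) identifies this with $\mathsf{yank}_A$.
\item $\mathsf{Struct}$: substitute (Struct-place) into (Struct-frame). This yields $\Framed{\mathsf{Struct}_s}\,p_s^-\widehat g_s^-=p_s^+\widehat g_s^+\mathsf{Rig}(s)$ on the selected range, using $L_s^-=I$. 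Translating through $\lambda_s^\pm$ and (SG-map) gives $g_s^+\mathsf{Rig}(s)(g_s^-)^\dagger=\mathsf{Rig}_{\mathfrak S}(s)$. The INV-4 extension affects only the complement, so it is irrelevant here.
\item $\mathsf{Prim}$: this is exactly Lemma~\ref{lem:exp-correctness}.
\end{itemize}

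For the unchanged-image rows, each is an instance of (View-layout). For $\mathsf{Repart}$, $\mathsf{TenPack}$ and $\mathsf{StrPre}$, a pullback $\widehat\omega$ satisfies $\ell_{\rm old}\widehat\omega=\omega\ell_{\rm new}$, where $\omega$ is $r_{A,B}$, $\vartheta_N$ or $t_{s,\mathcal C}$ respectively. Then $\lambda_{\rm new}^\pm=(\omega^\pm)^{-1}\lambda_{\rm old}^\pm$, so the realized map becomes $(\omega^+)^{-1}U_{\rm old}\,\omega^-$. This is the semantic clause in each case: the abstraction clause $r^+U_{\mathcal L}=U_{\mathcal B}r^-$, (Ten-pack) and (Str-pre). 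The inversion must be done carefully to get the displayed form. For $\mathsf{StrAct}$ the placement is $p(\widehat e^\epsilon)^\dagger$, so (View-layout) is applied with $\omega=e^{-1}$. It yields $e^+U_{\mathcal D}(e^-)^{\dagger}$, which is (EndAct); here one uses that $e^\pm$ are unitary restrictions of $\overline s^\pm$. $\mathsf{Frame}_{\delta,K}$ is (Delta-realize) read through (Delta-layout). In all these rows the image of the placement is unchanged, because each $\widehat\omega$ is a unitary between the code sectors. Hence surjectivity onto $C^+$ is inherited from the premise.

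I expect the main obstacle to be the $\mathsf{Struct}$ row, specifically whether (Struct-frame) actually determines a unitary $L_s^+$ on the selected range. That requires $G_sp_s^-\widehat g_s^-$ and $p_s^+\widehat g_s^+\mathsf{Rig}(s)$ to be isometries with equal-dimensional images. Both are composites of isometries: $\overline c_T$ and $\overline c_S$ are isometric by Lemma~\ref{lem:structural-layout}, $\widehat g^\pm$ is unitary, $\mathsf{Rig}(s)$ is a permutation, and $G_s$ is unitary. So $L_s^+$ is well defined as the partial isometry between these images and extends to a unitary by INV-4. The remaining rows are bookkeeping once the direction of each pullback is fixed consistently.
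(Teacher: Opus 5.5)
Your proposal is correct and follows the paper's own proof. The $\mathsf{Wire}$, $\mathsf{Struct}$ and $\mathsf{Prim}$ rows are read off their defining commuting squares, with Lemma~\ref{lem:exp-correctness} for $\mathsf{Prim}$. Every remaining row is an instance of \textup{(View-layout)} with the displayed pullback, and your orientation $\omega=e^{-1}$ for $\mathsf{StrAct}$ is the right one. Your extra check that \textup{(Struct-frame)} defines an isometry on the selected range, before the INV-4 extension, is a detail the paper leaves implicit in the artifact definition.
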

\begin{proof}
The Wire, Struct, and Prim rows are their displayed defining
commuting squares, using Lemma~\ref{lem:exp-correctness} for Prim.
Each remaining row follows by the substitution
\textup{(View-layout)} using the displayed pullback for that row
(and \textup{(Ten-pack)} for tensor packaging).  It changes only the
selected-coordinate presentation and therefore retains both the
physical image and the target program.
\end{proof}

\input{carrier-planned-emission}

\begin{definition}[Active reference emitter]
\label{def:planned-reference-emitter}
For every canonical call in the admitted Source recursion, define
\[
 \mathsf{Emit}(\mathcal N):=\mathsf{PlanEmit}(\mathcal N)
 =
 (Q_{\mathcal N},G_{\mathcal N},
  L_{\mathcal N}^-,L_{\mathcal N}^+,
  p_{\mathcal N}^-,p_{\mathcal N}^+).
 \tag{Emit=PlanEmit}
\]
Thus $\Framed{\mathcal N}$ abbreviates
$\Framed{\mathsf{PlanEmit}(\mathcal N)}$, and
Definition~\ref{def:layout-iso} supplies
$C_{\mathcal N}^\pm$ and $\lambda_{\mathcal N}^\pm$ from these fields.
\end{definition}

\subsection{The Normal-Form Case}
\label{app:compilation-thm}

\begin{lemma}[Circuit realization for normal forms]
\label{lem:compilation-nf}
Assume \textup{(BC)}.  Let
$\mathcal N:(\Gamma\intjudge N:A)$ be a canonical Source normal-form
call or a recursively generated lower-$\Phi$ call.  Then
$\Framed{\mathcal N}$ is unitary on its padded register, maps
$C_{\mathcal N}^-$ unitarily onto $C_{\mathcal N}^+$, and
\begin{equation}
 \Framed{\mathcal N}\!\upharpoonright_{C_{\mathcal N}^-}
 =
 (\lambda_{\mathcal N}^+)^{-1}
 \SEM{\Gamma\intjudge N:A}_{\mathrm{NF}}
 \lambda_{\mathcal N}^- .
\label{eq:nf-realization}
\end{equation}
\end{lemma}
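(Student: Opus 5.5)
The proof mirrors the construction of $\SEM{-}_{\mathrm{NF}}$ in Theorem~\ref{thm:nf-boundary-unitarity}. I would use the same nested induction: outer strong induction on $\Phi(\mathcal N)$; at a fixed stratum, structural induction on the canonical normal derivation; and, at each classified cut, the lexicographic order $(|\mathcal D|,|\mathcal C|)$ of Lemma~\ref{lem:whole-port-collapse}, refined by \textup{(Eta-order)} inside \textup{(C-var-whole)}. Since $\mathsf{Emit}=\mathsf{PlanEmit}$ recurses on exactly the same canonical derivations, and on the same lower-$\Phi$ calls generated by $\mathsf{MapApply}$ and coherent sharing (bounded by \textup{(Use-$\Phi$)} and \textup{(CS-NF)}), every recursive emitter call has an induction hypothesis available. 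Unitarity of $\Framed{\mathcal N}$ on the padded register holds throughout: the target program is unitary and $L^\pm$ are total unitary frames. The content of the lemma is the restricted realization equation \eqref{eq:nf-realization}. The statement that $C^-$ is mapped onto $C^+$ then follows, because $\lambda^\pm$ are unitary onto $B^\pm$ (Lemma~\ref{lem:lambda-unitary}) and the semantic map is unitary $B^-\to B^+$.

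\emph{Leaf and view cases.} These are read off Lemma~\ref{lem:plan-local-laws}. The variable case is $\mathsf{Wire}$ via \textup{(Wire-square)}, structural atoms are $\mathsf{Struct}$ via \textup{(Struct-frame)}, and quantum atoms are $\mathsf{Prim}$ via Lemma~\ref{lem:exp-correctness} under (BC). Abstraction, tensor packaging, the distributor frame, $\mathsf{StrAct}$ and $\mathsf{StrPre}$ each keep the program and the physical image and change only the placement. Each is therefore a substitution instance of \textup{(View-layout)}: conjugating the premise's realization equation by the displayed pullback gives the conclusion's equation, matching the repartition or transport in the semantic clause.

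\emph{Tensor introduction.} This is parallel composition. Combine \textup{(Target-par)} and \textup{(Target-lift)} with the product of the premise placements, and compare with the semantic $\tensor$ of premise maps.

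\emph{$\plus$-map, applied-map and coherent-sharing blocks.} I would use \textup{(Block-pack)} and \textup{(Block-code)}. First, each completed branch $\widehat U_i$ is realized, using the induction hypothesis with \textup{(Complete-selected)} to append the inactive through-map $Y$. The branch is then placed under exact control on its root tag via \textup{(Target-control)}, with the $X$-sandwich for tag $0$ and a $\mathsf{Phase}$ for $\alpha$ or $\beta$. Because the tag is untouched on valid codewords (INV-1 through INV-3), the controlled composite acts blockwise as $\bigoplus_i \gamma_i\widehat V_i$. Conjugating by the unitary $P_{\mathcal B}^\pm$ gives \textup{(Use-map)} or \textup{(CS-source)}.

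\emph{Cuts.} At a first-order port I would emit producer and consumer into the common midpoint fixed by CarrierPlan and compose sequentially. \textup{(Target-seq)} gives $\mathcal U(G_{\mathcal C})\mathcal U(G_{\mathcal D})$. Reading this through the shared $\mathsf D_P$ coordinate reproduces the matrix product \textup{(Graph-product)}, which is \textup{(FO-whole)}. In the family case the same argument is run componentwise, and \textup{(FO-family-close)} reassembles the components. Higher-order cuts are handled through the retained network \textup{(HSplice-cases)}: each node is either a parallel composite or an unchanged-image view, so the induction hypothesis together with \textup{(View-layout)} covers it.

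The main obstacle is the midpoint splice at a cut. The producer's output placement and the consumer's input placement must present the same physical codewords for $\mathsf D_P$; otherwise the composite is not the semantic contraction. This is exactly where the planned endpoint transport enters (Lemma~\ref{lem:planned-endpoint-transport}, with basis-word adapters from Lemma~\ref{lem:basis-permutation-lowering}). My first step would be to show that \textup{(Retarget-int)} converts the adapter into conjugation by $P_\pi$. I would then check that the composite equals the consumer realization applied after the producer realization, in the shared selected coordinate. A secondary burden is the use of complement data from INV-4. Since that data is never constrained, I would verify that every equation above is restricted to $C^-$, so that the frame extensions chosen on the complement never enter.
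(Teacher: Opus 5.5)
Your leaf, view, and tensor cases are the right local laws: they are the rows of Lemma~\ref{lem:plan-local-laws} together with parallel composition. The gap is in your induction hypothesis. Because you induct on the lemma itself, for a premise call $\mathcal N'$ you only know the realization equation of the \emph{standalone} artifact $\mathsf{PlanEmit}(\mathcal N')$, which has its own width $Q_{\mathcal N'}$, its own placements, and its own INV-4 frames $L_{\mathcal N'}^\pm$.

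The reference emitter never builds $G_{\mathcal N}$ from those programs. It plans the whole root top-down and then lowers every premise into incidences assigned by its parent, including the lower-$\Phi$ calls of $\mathsf{MapApply}$ and coherent sharing. Theorem~\ref{thm:chart-directed-plan-emit} stresses that these incidences need not be the presentations obtained by compiling the premise as a new root.

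Nor can you repair a cut by reconciling two independently chosen presentations afterwards. The frames $L^\pm$ are abstract unitary extensions, not target programs. The discrepancy between a producer's natural output presentation and a consumer's natural input presentation can be a dense selected-coordinate view. Lemma~\ref{lem:planned-endpoint-transport} and \textup{(Retarget-int)} lower only computational-word permutations that fix the through subword, so they cannot absorb such a view.

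Your block step has the same defect. Exact control on $d_{\mathcal B}$ yields $\bigoplus_i\gamma_i\widehat V_i$ only if each branch's endpoints already have the payload form $\jmath_{d_{\mathcal B},\tau_i}Z_{\mathcal B,i,L}^{\epsilon}$ of \textup{(Plan-Block-face)}. A standalone branch artifact does not supply this, and an enclosing view need not be tag-diagonal.

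The missing idea is to strengthen the hypothesis to the plan-relative form \textup{(PlanEmit-IH)}. For each action occurrence with parent-assigned incidences $E^\pm=P_{\pi_\pm}F^\pm\theta_\pm$, one proves $\mathcal U(G^{\mathfrak P})E^-=E^+\theta_+^\dagger V\theta_-$. The single midpoint $\mu_\rho$ and its single view $\theta_\rho$ are fixed before either child of a cut is lowered, so the views cancel at the join \textup{(Plan-Splice-cancel)} and no splice-level alignment arises. Block branches are likewise emitted directly into the parent's payload incidences. That strengthened induction is Theorem~\ref{thm:chart-directed-plan-emit}.

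The paper then obtains the lemma as a short corollary of its root instance \textup{(PlanEmit-chart)}:
\begin{itemize}
\item Premultiplying by $(L_{\mathcal N}^+)^\dagger$ gives $\Framed{\mathcal N}\,p_{\mathcal N}^-\widehat g_{\mathcal N}^-=p_{\mathcal N}^+\widehat g_{\mathcal N}^+V_{\mathcal N}$.
\item The identity $\lambda_{\mathcal N}^\pm p_{\mathcal N}^\pm\widehat g_{\mathcal N}^\pm=g_{\mathcal N}^\pm$ and the semantic chart square $\SEM{\Gamma\intjudge N:A}_{\mathrm{NF}}\,g_{\mathcal N}^-=g_{\mathcal N}^+V_{\mathcal N}$ from Theorem~\ref{thm:nf-boundary-unitarity} show that both sides of \eqref{eq:nf-realization} agree after precomposition with $p_{\mathcal N}^-\widehat g_{\mathcal N}^-$.
\item That map is onto $C_{\mathcal N}^-$, so the equation holds on all of $C_{\mathcal N}^-$.
\end{itemize}
Your closing remarks on total unitarity, surjectivity onto $C_{\mathcal N}^+$, and the irrelevance of INV-4 complement data are correct and match this final step.
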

\begin{proof}
Let
\[
 \widehat g_{\mathcal N}^\epsilon
 =
 \left(
 \lay_{\mathfrak J}^\epsilon
 \!\upharpoonright_{\widetilde C_{\mathcal N}^\epsilon}
 \right)^{-1}
 g_{\mathcal N}^\epsilon .
\]
Theorem~\ref{thm:chart-directed-plan-emit} gives
\[
 \mathcal U(G_{\mathcal N})L_{\mathcal N}^-p_{\mathcal N}^-
 \widehat g_{\mathcal N}^-
 =
 L_{\mathcal N}^+p_{\mathcal N}^+
 \widehat g_{\mathcal N}^+V_{\mathcal N}.
\]
Premultiplying by $(L_{\mathcal N}^+)^\dagger$ yields
\[
 \Framed{\mathcal N}\,
 p_{\mathcal N}^-\widehat g_{\mathcal N}^-
 =
 p_{\mathcal N}^+\widehat g_{\mathcal N}^+V_{\mathcal N}.
 \tag{NF-from-plan}
\]
For every chart coordinate $d$,
\[
 \lambda_{\mathcal N}^-
 p_{\mathcal N}^-\widehat g_{\mathcal N}^-d
 =g_{\mathcal N}^-d,
 \qquad
 \lambda_{\mathcal N}^+
 p_{\mathcal N}^+\widehat g_{\mathcal N}^+d
 =g_{\mathcal N}^+d.
\]
The semantic chart square is
\[
 \SEM{\Gamma\intjudge N:A}_{\mathrm{NF}}g_{\mathcal N}^-
 =
 g_{\mathcal N}^+V_{\mathcal N}.
\]
Hence the two sides of \eqref{eq:nf-realization} agree after
precomposition with
$p_{\mathcal N}^-\widehat g_{\mathcal N}^-$.  This map is onto
$C_{\mathcal N}^-$ because the complete chart is onto its selected
boundary, so \eqref{eq:nf-realization} follows.  The semantic action
is unitary between the selected boundaries, while $G_{\mathcal N}$
and the two frames are total unitaries.  The same equation therefore
gives both remaining assertions.
\end{proof}

\subsection{The Compilation Theorem}
\label{app:all-terms}

\begin{definition}[Canonical compilation]
\label{def:canonical-compilation}
For a Source judgment $\Gamma\sjudge t:A$, define
\[
 \mathsf{Compile}(t):=\mathsf{PlanEmit}(\mathcal N_t)
\]
and
\begin{equation}
 \Framed{t}:=\Framed{\mathsf{Compile}(t)}
            =\Framed{\mathcal N_t}.
\label{eq:normalize-emit}
\end{equation}
\end{definition}

\begin{theorem}[Circuit Realization]
\label{thm:compilation-soundness}
Assume \textup{(BC)}.  For every Source judgment
$\Gamma\sjudge t:A$, the circuit $\Framed{t}$ is unitary on its
padded register and maps
$C_{\mathcal N_t}^-$ unitarily onto $C_{\mathcal N_t}^+$.  Moreover,
\begin{equation}
\begin{aligned}
 \Framed{t}\!\upharpoonright_{C_{\mathcal N_t}^-}
 &=
 (\lambda_{\mathcal N_t}^+)^{-1}
 \SEM{\Gamma\intjudge N_t:A}_{\mathrm{NF}}
 \lambda_{\mathcal N_t}^-\\
 &=
 (\lambda_{\mathcal N_t}^+)^{-1}
 \SEM{\Gamma\sjudge t:A}
 \lambda_{\mathcal N_t}^- .
\end{aligned}
\label{eq:all-realization}
\end{equation}
\end{theorem}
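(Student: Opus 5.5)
The proof reduces Theorem~\ref{thm:compilation-soundness} to Lemma~\ref{lem:compilation-nf} applied at the root canonical call, followed by Definition~\ref{def:canonical-form-semantics}. The steps are as follows.

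\emph{Root call.} Given $\Gamma\sjudge t:A$, Corollary~\ref{cor:source-normalization} says $N_t=\mathsf{NF}_{\mathrm{LO}}(t^\circ)$ is defined, internally typed, irreducible, free of coherent-sum formers, and balanced. Lemma~\ref{lem:canonical-normal-retyping} then supplies the canonical derivation $\mathcal N_t=\mathsf{CanDer}_{\Gamma,A}(N_t)$. So $\mathcal N_t$ is a canonical Source normal-form call in the sense required by Lemma~\ref{lem:compilation-nf}, which is also the hypothesis under which Theorem~\ref{thm:nf-boundary-unitarity} provides the complete graph chart $(g_{\mathcal N_t}^\pm,V_{\mathcal N_t})$.

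\emph{Framed operator.} By Definition~\ref{def:canonical-compilation} and \eqref{eq:normalize-emit}, $\Framed{t}=\Framed{\mathsf{PlanEmit}(\mathcal N_t)}=\Framed{\mathcal N_t}$. Here $\mathsf{Emit}=\mathsf{PlanEmit}$ by Definition~\ref{def:planned-reference-emitter}, so the code sectors $C_{\mathcal N_t}^\pm$ and restricted layouts $\lambda_{\mathcal N_t}^\pm$ in the statement are those of Definition~\ref{def:layout-iso} for this artifact.

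\emph{Apply the normal-form lemma.} Under (BC), Lemma~\ref{lem:compilation-nf} gives all three conclusions: $\Framed{\mathcal N_t}$ is unitary on the padded register, it maps $C_{\mathcal N_t}^-$ unitarily onto $C_{\mathcal N_t}^+$, and it satisfies \eqref{eq:nf-realization}. This is the first line of \eqref{eq:all-realization}. The second line follows by rewriting $\SEM{\Gamma\intjudge N_t:A}_{\mathrm{NF}}$ as $\SEM{\Gamma\sjudge t:A}$, which is Definition~\ref{def:canonical-form-semantics} verbatim.

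\emph{Main obstacle.} All the substantive work sits in Lemma~\ref{lem:compilation-nf}, and so in the chart-directed planned-emission theorem it invokes. The only step at the top level that needs checking is that the lemma's hypotheses cover $\mathcal N_t$. In particular, the lemma's admitted recursion includes the root call together with the lower-$\Phi$ $\mathsf{MapApply}$ and coherent-sharing calls that emission generates from it. I would confirm this by comparing the recursion admitted by $\mathsf{PlanEmit}$ with the recursion covered by Theorem~\ref{thm:nf-boundary-unitarity}; both use the same outer $\Phi$ and inner $(|\mathcal D|,|\mathcal C|)$ order. Because the compiler and the semantics use the same canonical derivation, no appeal to determinacy or to canonical-form invariance is needed here.
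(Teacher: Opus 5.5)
Your proposal is correct and matches the paper's proof. The paper likewise obtains the unitarity claims and the first equality of \eqref{eq:all-realization} from Lemma~\ref{lem:compilation-nf} at the root derivation $\mathcal N_t$, using $\Framed{t}=\Framed{\mathcal N_t}$ from Definition~\ref{def:canonical-compilation}, and the second equality from Definition~\ref{def:canonical-form-semantics}; your explicit check via Corollary~\ref{cor:source-normalization} and Lemma~\ref{lem:canonical-normal-retyping} that $\mathcal N_t$ is an admitted root call is left implicit there.
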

\begin{proof}
The first equality is Lemma~\ref{lem:compilation-nf}; the second is
Definition~\ref{def:canonical-form-semantics}.
\end{proof}

This establishes Theorem~\ref{thm:compilation-soundness-main}.

\subsection{Qubit-Register Execution}
\label{app:qubit-execution}

For $n\geq1$, put
\[
 P:=\mathsf{QReg}_n:=\QBool^{\tensor n}.
\]
Its fixed first-order layout is
\[
 \lay_P:V_P\xrightarrow{\cong}\sem P
 \cong(\mathbb C^2)^{\tensor n}.
\]
Let $\mathcal D:(\cdot\sjudge t:P\lmark P)$, choose a fresh $x:P$,
and let
\[
 N=\mathsf{NF}_{\mathrm{LO}}((t\,x)^\circ),\qquad
 \mathcal E:(x:P\intjudge N:P)
\]
be the canonical internal derivation.  By
Lemma~\ref{lem:first-order-endpoint-readback}, there are unitaries
\[
 \varepsilon_{\mathcal E}^{\pm}:
 B_{\mathcal E}^{\pm}\xrightarrow{\cong}\sem P
\]
such that
\[
 U_t
 =
 \varepsilon_{\mathcal E}^{+}
 \SEM{x:P\intjudge N:P}_{\mathrm{NF}}
 (\varepsilon_{\mathcal E}^{-})^{-1}.
 \tag{Register-unitary}
\]
Write
\[
 \mathsf{PlanEmit}(\mathcal E)
 =
 (Q_{\mathcal E},G_{\mathcal E},
  L_{\mathcal E}^-,L_{\mathcal E}^+,
  p_{\mathcal E}^-,p_{\mathcal E}^+)
\]
and define
\[
 \rho_{\mathcal E}^{\pm}
 :=
 L_{\mathcal E}^{\pm}
 (\lambda_{\mathcal E}^{\pm})^{-1}
 (\varepsilon_{\mathcal E}^{\pm})^{-1}\lay_P .
 \tag{Register-placement}
\]

\begin{corollary}[Compiled register action]
\label{cor:compiled-register-action}
Assume \textup{(BC)}.  Then
\[
 (\rho_{\mathcal E}^{+})^{\dagger}
 \mathcal U_{Q_{\mathcal E}}(G_{\mathcal E})
 \rho_{\mathcal E}^{-}
 =
 \lay_P^{-1}U_t\lay_P .
\]
Moreover, for $\epsilon\in\{-,+\}$,
\[
 \rho_{\mathcal E}^{\epsilon}
 =
 L_{\mathcal E}^{\epsilon}p_{\mathcal E}^{\epsilon}
 \widehat g_{\mathcal E}^{\epsilon}\lay_P
 =
 \eta_{\mathcal E}^{\epsilon}\lay_P .
 \tag{Register-plan-placement}
\]
Thus both endpoint maps are zero-padded computational-basis
placements of the canonical $P$-words in the planned register: the
compiled program accepts the ordinary basis encoding and is read in
the ordinary basis encoding, with no arbitrary pre- or post-processing
unitary.
\end{corollary}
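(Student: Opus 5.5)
The first equation is a change-of-coordinates computation. Theorem~\ref{thm:compilation-soundness}, equivalently Lemma~\ref{lem:compilation-nf}, is applied to the canonical endpoint derivation $\mathcal E:(x:P\intjudge N:P)$, which is a recursively admitted Source normal-form call. It gives
\[
 \Framed{\mathcal E}\!\upharpoonright_{C_{\mathcal E}^-}
 =(\lambda_{\mathcal E}^+)^{-1}\SEM{x:P\intjudge N:P}_{\mathrm{NF}}\lambda_{\mathcal E}^- .
\]
Then I substitute \textup{(Register-unitary)}, which rewrites the middle factor as $(\varepsilon_{\mathcal E}^+)^{-1}U_t\varepsilon_{\mathcal E}^-$. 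By definition, $\Framed{\mathcal E}=(L_{\mathcal E}^+)^\dagger\mathcal U(G_{\mathcal E})L_{\mathcal E}^-$, and the frame extensions $L^\pm$ are total unitaries. Unfolding \textup{(Register-placement)} therefore gives
\[
(\rho_{\mathcal E}^+)^\dagger\mathcal U(G_{\mathcal E})\rho_{\mathcal E}^-
=\lay_P^{-1}\varepsilon_{\mathcal E}^+\lambda_{\mathcal E}^+\,\Framed{\mathcal E}\,(\lambda_{\mathcal E}^-)^{-1}(\varepsilon_{\mathcal E}^-)^{-1}\lay_P .
\]
This uses the fact that the image of $(\lambda_{\mathcal E}^-)^{-1}$ is exactly $C_{\mathcal E}^-$, so the restriction in Lemma~\ref{lem:compilation-nf} applies. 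After substitution the $\lambda$'s and $\varepsilon$'s cancel, leaving $\lay_P^{-1}U_t\lay_P$. Lemma~\ref{lem:lambda-unitary} and Lemma~\ref{lem:first-order-endpoint-readback} supply the unitarity of every factor being inverted.

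For \textup{(Register-plan-placement)}, I would show that $(\lambda_{\mathcal E}^\epsilon)^{-1}(\varepsilon_{\mathcal E}^\epsilon)^{-1}=p_{\mathcal E}^\epsilon\widehat g_{\mathcal E}^\epsilon$ as maps out of $\sem P\cong\mathsf D_P$. The proof of Lemma~\ref{lem:compilation-nf} already shows $\lambda_{\mathcal E}^\epsilon p_{\mathcal E}^\epsilon\widehat g_{\mathcal E}^\epsilon=g_{\mathcal E}^\epsilon$. So it suffices to check $\varepsilon_{\mathcal E}^\epsilon g_{\mathcal E}^\epsilon=\id$ under the identification $\mathsf D_P=\sem P$. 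This is the singleton clause of Theorem~\ref{thm:nf-boundary-unitarity}: the chart of $\mathcal E$ is the $(\mathsf{src}_P,\mathsf{tgt}_P)$ root chart. Hence $\epsilon_{\mathcal E}^\pm=\varepsilon_{\mathcal E}^\pm$, and $g^\pm_{\mathcal E}$ is its inverse by \textup{(Chart)}. Defining $\eta_{\mathcal E}^\epsilon:=L^\epsilon_{\mathcal E}p^\epsilon_{\mathcal E}\widehat g^\epsilon_{\mathcal E}$ then gives the second equality by notation.

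The main obstacle is the final interpretive claim: that $\rho^\pm_{\mathcal E}$ are zero-padded computational-basis placements of the canonical $P$-words. The frames $L^\pm$ are chosen by INV-4 and could a priori mix basis words. I would first try to obtain this from the chart-directed plan theorem (Theorem~\ref{thm:chart-directed-plan-emit}), which I expect to assert that the root incidences $\eta^\pm_{\mathcal E}$ are type-codeword incidences of the form $\zeta_P$ of \textup{(Wire)}, suitably padded. Failing that, I would argue by induction along the marked path from the \textsc{Var} leaf $x$ to the root, with $\mathsf{Wire}_P$ at the leaf. At each step I would check that the frame operator stays the identity on that leaf's live codewords, or at worst changes it by a basis-word permutation via Lemma~\ref{lem:planned-endpoint-transport}. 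Composing $\eta^\epsilon_{\mathcal E}$ with $\lay_P$ then sends each basis vector $\ket w$, $w\in\{0,1\}^n$, to a padded basis word.
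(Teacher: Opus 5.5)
Your proof of the first equation, and of the first equality in \textup{(Register-plan-placement)}, follows the paper's route. You use Lemma~\ref{lem:compilation-nf}, substitute \textup{(Register-placement)} and \textup{(Register-unitary)}, and then combine $(\lambda^\epsilon_{\mathcal E})^{-1}g^\epsilon_{\mathcal E}=p^\epsilon_{\mathcal E}\widehat g^\epsilon_{\mathcal E}$ with $(\varepsilon^\epsilon_{\mathcal E})^{-1}=g^\epsilon_{\mathcal E}$ from the singleton root chart.

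The gap is in the second equality. $\eta^\epsilon_{\mathcal E}$ is not free notation. It is the physical word incidence chosen by CarrierPlan, and at a singleton first-order root \textup{(Plan-root-word)} fixes $\eta^-_{\mathcal E}=\eta^+_{\mathcal E}=\mathsf{Pad}_{\kappa_P}\zeta_P$ before any descendant is planned. Defining $\eta$ as $L p\widehat g$ makes the equality vacuous and leaves the final assertion unproved.

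The missing step is \textup{(Plan-logical-incidence)}. The frame $L^\epsilon_{\mathcal E}$ is, by construction, the INV-4 extension of the partial isometry $J^\epsilon_{\mathcal E,L}(J^\epsilon_{\mathcal E,C})^\dagger$, where $p^\epsilon_{\mathcal E}=J^\epsilon_{\mathcal E,C}$. Hence $L^\epsilon_{\mathcal E}p^\epsilon_{\mathcal E}=J^\epsilon_{\mathcal E,L}=\eta^\epsilon_{\mathcal E}(\widehat g^\epsilon_{\mathcal E})^\dagger$. INV-4's freedom lives only on the complement of the placement image, so it cannot mix the live codewords; this answers your worry. Composing with the isomorphism $\widehat g^\epsilon_{\mathcal E}$ gives $L^\epsilon_{\mathcal E}p^\epsilon_{\mathcal E}\widehat g^\epsilon_{\mathcal E}=\eta^\epsilon_{\mathcal E}$. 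Then \textup{(Plan-root-word)} gives $\rho^\epsilon_{\mathcal E}=(\mathsf{Pad}_{\kappa_P}\zeta_P)\lay_P$, which is the zero-padded codeword placement.

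Your first attempt points at the right source. Without this frame equation, however, a statement about the plan's $\eta$ says nothing about your $Lp\widehat g$.

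Drop the fallback induction from the \textsc{Var} leaf. The root presentation is fixed top-down, not assembled from the leaf's $\mathsf{Wire}_P$. Its proposed invariant is also not available: the plan explicitly allows selected-coordinate views, \textup{(Plan-view)} and \textup{(Net-Repart)}, that are not computational-basis word permutations.
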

\begin{proof}
Lemma~\ref{lem:compilation-nf} gives
\[
 \lambda_{\mathcal E}^{+}
 \bigl(
  \Framed{\mathcal E}\!\upharpoonright_{C_{\mathcal E}^{-}}
 \bigr)
 (\lambda_{\mathcal E}^{-})^{-1}
 =
 \SEM{x:P\intjudge N:P}_{\mathrm{NF}} .
\]
Substitute \textup{(Register-placement)}; its right factor lands in
$C_{\mathcal E}^-$.  Equation~\textup{(Register-unitary)} then gives
the displayed equality.

For the final assertion, put
$\ell_{\mathcal E}^\epsilon
:=\lay_{\mathfrak J_{\mathcal E}}^\epsilon
\!\upharpoonright_{\widetilde C_{\mathcal E}^\epsilon}$.
Definition~\ref{def:layout-iso} gives
$(\lambda_{\mathcal E}^\epsilon)^{-1}
=p_{\mathcal E}^\epsilon(\ell_{\mathcal E}^\epsilon)^{-1}$.
The singleton-root clause of
Theorem~\ref{thm:nf-boundary-unitarity} and
\textup{(Register-restriction)} give
$(\varepsilon_{\mathcal E}^\epsilon)^{-1}
=g_{\mathcal E}^\epsilon$, while \textup{(Plan-chart)} gives
$\widehat g_{\mathcal E}^\epsilon
=(\ell_{\mathcal E}^\epsilon)^{-1}g_{\mathcal E}^\epsilon$.
Hence \textup{(Register-placement)} yields the first equality in
\textup{(Register-plan-placement)}.  Finally,
\textup{(Plan-logical-incidence)} gives
$L_{\mathcal E}^\epsilon p_{\mathcal E}^\epsilon
=\eta_{\mathcal E}^\epsilon
(\widehat g_{\mathcal E}^\epsilon)^\dagger$; composing with
$\widehat g_{\mathcal E}^\epsilon$ proves the second equality.
Now \textup{(Plan-root-word)} gives
\[
 \rho_{\mathcal E}^{\epsilon}
 =
 (\mathsf{Pad}_{\kappa_P}\zeta_P)\lay_P .
\]
By the definition of $\zeta_P$, this is precisely the zero-padded
computational-basis placement of the canonical $P$-codewords.  Thus
$(\rho_{\mathcal E}^+)^\dagger$ is inverse readback on that placement
image, not an additional total physical gate.
\end{proof}

%% file: carrier-planned-emission.tex

\subsection{Chart-Directed Carrier Planning and Emission}
\label{app:carrier-planned-emission}

The carrier plan is not extra semantic data and is not an input to the
compiler.  Retain the intermediate charts in the already fixed
semantic construction and recurse down that finite constructor
expression.  Planning and lowering are two passes.  During planning,
a parent fixes the computational-word images, logical incidences, and
inherited through coordinates of its premise faces from the retained
semantic charts, then recurses on those premises.  During lowering,
the complete plan is already available, so each premise is emitted
into its supplied incidences and the target constructor of the same
semantic clause is then applied.

\begin{definition}[Chart-directed carrier plan]
\label{def:chart-directed-carrier-plan}
Let $\mathcal R$ be a canonical call in the domain of the reference
construction, including a lower-$\Phi$ call generated by
$\mathsf{MapApply}$ or coherent sharing.  Evaluate the construction of
the complete graph chart of
Theorem~\ref{thm:nf-boundary-unitarity}, but retain its finite
constructor expression before multiplying its semantic maps.  Thus
the retained expression contains the premise and conclusion faces of
tensoring, completion, block formation, repartition, and structural
framing; the input, linked, and output faces of each classified cut;
and, for a higher-order classified cut, the vertices of the already
finite expression $\mathsf{Net}(\rho)$.  The construction of
$\mathsf{Net}(\rho)$, including every \textup{(C-var-whole)} unit
graft, has already terminated under \textup{(Eta-order)}.  Carrier
planning does not rerun that subsidiary recursion; it traverses the
resulting finite retained expression structurally.
Close this retained expression under every generated lower-$\Phi$ call
of $\mathsf{MapApply}$ and coherent sharing and every recursively prior
classified call of \textup{(C-str)}.  Their internal faces are
therefore faces of the root planning problem.

The recursion uses the order
\[
 \bigl(\Phi(N),\,|\mathcal D|,\,|\mathcal C|\bigr)
 \tag{Plan-order}
\]
lexicographically at a classified cut, with ordinary premise calls
ordered by proper-subderivation at fixed $\Phi$.  Generated
$\mathsf{MapApply}$ and coherent-sharing calls strictly decrease
$\Phi$ by \textup{(Use-$\Phi$)} and \textup{(CS-NF)}.  This is only
the well-founded traversal order of the retained semantic expression;
it carries no register-allocation data.

A retained face $s$ is read in its inherited context.  Its
\emph{contextual carrier} $\overline K_s^{\mathcal R}$ is the local
flat carrier of the semantic clause together with every factor carried
unchanged by the enclosing clauses: the other live premise of a
parallel node, inactive-completion, residual, and through factors of a
cut, and each enclosing semantic sum tag.  The local flat carrier is
the formal ordered tensor of compiler-layout factors displayed at
that retained constructor face, including its prescribed tag and
payload-padding coordinates.  It is determined before any injection
into $\mathcal H_{\mathcal R}$ is chosen and is not shorthand for the
whole judgment layout $V_{\mathfrak J}^\epsilon$ at every internal
action.  In the primitive row its local action factor is exactly the
$n_P$-wire carrier of \textup{(BC-carrier)}.  For a block, the selected
alternative occupies the common payload under that tag; the other
alternative is not a tensor factor.  Put
\[
 Q_{\mathcal R}:=\max_s\overline w_s^{\mathcal R},
 \qquad
 \mathcal H_{\mathcal R}
 :=(\mathbb C^2)^{\tensor Q_{\mathcal R}}.
 \tag{Plan-Q}
\]
Here $\overline w_s^{\mathcal R}$ denotes the number of qubit
coordinates in the displayed flat carrier
$\overline K_s^{\mathcal R}$.
This is a flat-carrier width, not
$\lceil\log_2\dim D_s\rceil$.  In particular, an internal face of one
parallel premise is counted together with the other premise's live
carrier, and a primitive $a:P\lmark P$ retains all
$n_P=\nwires(P)$ native wires of \textup{(BC-carrier)}.

The plan preserves the fixed word order within each flat factor and
assigns the factor an ordered injection into
$\mathcal H_{\mathcal R}$; unused coordinates at that face are zero
padding.  If a constructor reorders flat factors, the plan retains
that ordering as an address presentation.  If a selected-code
isomorphism
$\widehat\theta:C'\xrightarrow{\cong}C$ is only a chart
reparameterization, a placement
$p:C\lhook\joinrel\longrightarrow
(\mathbb C^2)^{\tensor q}$ in the current ambient carrier is replaced
by
\[
 p':=p\widehat\theta .
 \tag{Plan-view}
\]
Then
$\operatorname{im}(p')=\operatorname{im}(p)$, so this view allocates
no carrier and emits no target program.  No claim is made that
$\widehat\theta$ is a computational-basis permutation.

For a selected code $C_s\subseteq V_s$ visited in a
$q_s$-wire ambient carrier, write
\[
 J_{s,C}:=p_s:C_s\lhook\joinrel\longrightarrow
 (\mathbb C^2)^{\tensor q_s},
 \qquad
 J_{s,L}:C_s\lhook\joinrel\longrightarrow
 (\mathbb C^2)^{\tensor q_s}.
 \tag{Plan-stage}
\]
Write
$\widehat g_s:\mathsf D_s\xrightarrow{\cong}C_s$ for the restricted
flat pullback of the semantic chart at this face.  For a premise or
conclusion face this is the corresponding intermediate chart retained
from the recursive construction of
Theorem~\ref{thm:nf-boundary-unitarity}; a linked face or a vertex of
$\mathsf{Net}(\rho)$ inherits the chart at that row, with its displayed
tensor, direct-sum, or \textup{(Net-repart-pullback)} pullback.  Thus
no chart is postulated beyond that already finite construction.
The direct recursion chooses an ordered physical incidence
$\eta_s:\mathsf D_s\lhook\joinrel\longrightarrow
(\mathbb C^2)^{\tensor q_s}$ and defines
\[
 J_{s,L}:=\eta_s\widehat g_s^\dagger .
 \tag{Plan-logical-incidence}
\]
Concretely, $\eta_s$ sends each vector of the fixed basis of
$\mathsf D_s$ to its planned computational-basis register word and
puts $0$ on every unused coordinate.
Thus the semantic chart supplies the selected coordinate isometry and
its fixed order, while the plan supplies its physical address
presentation.  The partial isometry
$J_{s,L}J_{s,C}^\dagger$ on $\operatorname{im}(J_{s,C})$ has the fixed
INV-4 extension $L_s$, so $L_sp_s=J_{s,L}$.
Only changes between computational-basis word presentations are
candidates for the exact basis-permutation supplier.  Literal
wire-address permutations are the special case that can be absorbed
symbolically.  General selected-chart views are handled by
\textup{(Plan-view)}.

More precisely, a recursive planning call carries
$(\sigma,\iota,q)$, where $\sigma$ is the word already fixed on the
enclosing semantic tag coordinates and
$\iota:[q]\hookrightarrow[Q_{\mathcal R}]$ is the ordered injection of
the remaining active carrier.  For every descendant face $s$ it
maintains
\[
 \overline w_s^{\mathcal R}-|\sigma|\leq q .
 \tag{Plan-context-bound}
\]
At the root this triple is
$(\varnothing,\id,Q_{\mathcal R})$.  For a singleton first-order root
$\mathcal E:(x{:}P\intjudge N:P)$, the two root chart domains are the
canonical $\mathsf D_P$ domains of \textup{(QGraph-read)}.  Let
$\kappa_P:[\nwires(P)]\hookrightarrow[Q_{\mathcal E}]$ be the initial
ordered injection.  The root rule is applied before any descendant
rule and fixes both endpoint word incidences by
\[
 \eta_{\mathcal E}^-=\eta_{\mathcal E}^+
 :=
 \mathsf{Pad}_{\kappa_P}\zeta_P .
 \tag{Plan-root-word}
\]
Later first-fit assignments do not alter these root presentations.
If a block tag has local coordinate $d$, let
$c_d:[q-1]\hookrightarrow[q]$ enumerate its ordered complement;
branch $i$ is planned with
$(\sigma\cup\{\iota(d)\mapsto\tau_i\},\iota c_d,q-1)$.  All other clauses retain $q$ and
transport $\iota$ along their displayed factor incidences.  The global
presentation of a local face is recovered by inserting the fixed word
$\sigma$ outside $\iota$.

For a completed binary block $\mathcal B$ visited in a
$q_{\mathcal B}$-wire ambient carrier, retain
$P_{\mathcal B}^{\epsilon}$ and $k_i^\epsilon$ from
\textup{(Block-pack)}, together with its canonical unviewed local
incidences $F_{\mathcal B,\bullet}^{\epsilon}$,
$\bullet\in\{C,L\}$.  These are the incidences before any enclosing
selected-coordinate view is composed with the block.  Let
$\Pi_{\mathcal B}$ be the ordered family, in fixed
context-and-result order, of maximal complete external boundary
occurrences $\pi:T$ of the parent block judgment.  Completion followed
by the semantic component inclusion induces
\[
 \beta_i^\epsilon:
 \{\text{completed branch-$i$ occurrences}\}
 \rightharpoonup \Pi_{\mathcal B}.
 \tag{Plan-Block-occ}
\]
A surviving context occurrence maps to the same parent variable
occurrence, the component root maps to the parent root through the
fixed type-layout component inclusion, and a branch-local occurrence
is outside the domain.  Thus two branch restrictions of one shared
occurrence have one common image in $\Pi_{\mathcal B}$.
Let
$d_{\mathcal B}$ be the physical
coordinate occupied by the block's already present semantic
direct-sum tag under \eqref{eq:sum-layout}, and put
$\tau_1=0,\tau_2=1$.  The plan assigns the source and target semantic
tags to this same physical coordinate.  It is one tag wire carried
through the block, not two simultaneously live coordinates; exterior
endpoint presentations may still use the separate adapters $\pi_-$
and $\pi_+$.  Let $T_{\mathcal B}$ be the ordered
plan-coordinate space carried unchanged through this occurrence and
define
\[
\begin{aligned}
 \overline C_i^\epsilon
   &:=\widehat C_i^\epsilon\tensor T_{\mathcal B},\\
 \delta_{\mathcal B}^\epsilon
   &: \bigoplus_i\overline C_i^\epsilon
      \xrightarrow{\;\cong\;}
      \left(\bigoplus_i\widehat C_i^\epsilon\right)
      \tensor T_{\mathcal B},\\
 \overline P_{\mathcal B}^\epsilon
   &:=(P_{\mathcal B}^\epsilon\tensor I_{T_{\mathcal B}})
      \delta_{\mathcal B}^\epsilon .
\end{aligned}
\tag{Plan-Block-context}
\]
Let $\overline k_i^\epsilon$ be the canonical inclusion of
$\overline C_i^\epsilon$ into the displayed direct sum.  Thus all
local block data are tensored with the identity on the inherited
through coordinates, up to the fixed distributor
$\delta_{\mathcal B}^\epsilon$.  Using the insertion $\jmath_{d,b}$ of
\textup{(Target-control)}, the component incidences are required to
have the form
\[
 F_{\mathcal B,\bullet}^{\epsilon}
 \overline P_{\mathcal B}^{\epsilon}\overline k_i^\epsilon
 =
 \jmath_{d_{\mathcal B},\tau_i}
 Z_{\mathcal B,i,\bullet}^{\epsilon},
 \qquad
 \bullet\in\{C,L\}.
 \tag{Plan-Block-face}
\]
Here $Z_{\mathcal B,i,\bullet}^{\epsilon}$ is exactly the payload
presentation supplied to the recursively emitted completed branch,
padded inside the common non-tag carrier.  It is not a separately
sealed branch presentation.  For $\bullet=C$,
Equation~\eqref{eq:sum-layout} and \textup{(Block-pack)} supply the
factorization: the standalone-map inclusions and the maps
$j_i^\epsilon$ of \textup{(Use-part)} and \textup{(CS-part)} are the
canonical inclusions of the same two semantic summands.  For $\bullet=L$, the parent first assigns the common non-tag
addresses.  Using the already constructed completed-branch charts and
\textup{(Plan-logical-incidence)}, it fixes
$Z_{\mathcal B,i,L}^{\epsilon}$ in those addresses, with every
$\beta_i^\epsilon$-shared occurrence at its prescribed parent address
and with the inherited through presentation identical in both
components.  Since the
$\overline P_{\mathcal B}^{\epsilon}\overline k_i^\epsilon$
are orthogonal and jointly exhaustive, the two equations
\textup{(Plan-Block-face)} uniquely define
$F_{\mathcal B,L}^{\epsilon}$.  Their register images are orthogonal
because they have distinct values at $d_{\mathcal B}$, so the selected
incidence is isometric and INV-4 extends its frame map.  The two
$Z_{\mathcal B,i,L}^{\epsilon}$ are then supplied to the recursive
branch calls.  Thus INV-4 extends an already fixed parent incidence;
it does not determine branch endpoints bottom-up.  The occurrence family
$\Pi_{\mathcal B}$ and the maps $\beta_i^\epsilon$ specify which
component occurrences are restrictions of the same parent
occurrence.  Those restrictions use the parent flat addresses already
present in $Z_{\mathcal B,i,\bullet}^{\epsilon}$; neither
$\Pi_{\mathcal B}$ nor $\beta_i^\epsilon$ allocates a wire.
Equation~\textup{(Plan-Block-face)} is imposed on the canonical
unviewed incidence only.  An incidence requested by an enclosing
constructor has the plan-compatible form
\[
 E_{\mathcal B}^\epsilon
 =P_{\pi_\epsilon}F_{\mathcal B,L}^\epsilon\theta_\epsilon .
 \tag{Plan-Block-parent}
\]
The incidences
$F_{\mathcal B,L}^\epsilon$ and
$E_{\mathcal B}^\epsilon\theta_\epsilon^\dagger$ are
computational-word incidences indexed by the same ordered block basis.
They agree on the complete through word in $T_{\mathcal B}$, while
their branch-dependent payload correspondence is part of the partial
word bijection.  The fibrewise extension in
Lemma~\ref{lem:planned-endpoint-transport} therefore proves that the
permutation $\pi_\epsilon$ in \textup{(Plan-Block-parent)} exists.
The block is first emitted at $F_{\mathcal B,L}^\epsilon$ and the
whole resulting equation is then transported to
$E_{\mathcal B}^\epsilon$.  Thus no hypothesis that the possibly
dense view $\theta_\epsilon$ preserve individual tag sectors is
made.

For a classified first-order cut $\rho$ visited in a
$q_\rho$-wire ambient carrier, let $\mathsf M_\rho$ be the ordered
basis-label set of
\[
 Y_\rho\tensor\mathsf D_P\tensor Z_\rho
 \tensor\mathsf D_{\Xi_\rho}.
 \tag{Plan-midpoint-domain}
\]
Here $\mathsf D_{\Xi_\rho}$ is the fixed source-coordinate domain of
the inactive through-factor.  The factors and their ordered bases are
those of the Source chart, in source-reference and fixed-DNF order
following \textup{(Chart)}.  A producer family uses the tagged
disjoint union of its component label sets.  At a Source block's own
sum port, $Z_\rho$ is the common based space $Z_K$ of
Lemma~\ref{lem:shared-source-chart}, used in every branch.  This is the
single intermediate coordinate family contracted by
\textup{(Graph-link)}.  Choose its single plan presentation
\[
\mu_\rho:
\mathbb C[\mathsf M_\rho]
 \lhook\joinrel\longrightarrow
 (\mathbb C^2)^{\tensor q_\rho}.
 \tag{Plan-midpoint}
\]
The producer-positive and consumer-negative incidences are this same
map.  In particular, the canonical $\mathsf D_P$ coordinate occurs
once as a factor of the abstract domain
$\mathbb C[\mathsf M_\rho]$.  The physical incidence $\mu_\rho$ may
contain a selected-coordinate view and therefore need not expose that
factor as a computational-basis subregister.  There is no second midpoint and no
splice-level alignment presentation.
For a producer family, $\mathsf M_\rho$ is the orthogonal tagged union
of the component midpoint families and the already present semantic
family tag occurs once in $\mu_\rho$.
At a cut the direct recursion chooses one ordered incidence
$\eta_\rho$ for the canonical $\mathsf M_\rho$ domain and supplies it
to both children.  Thus, if $F_{\rm pr}^+$ and $F_{\rm co}^-$ are
their natural unviewed framed endpoint incidences, the shared
$\mathsf M_\rho$ basis indexes both endpoints and $\eta_\rho$, and
their inherited through subwords agree.  The fibrewise
through-preserving basis-word adaptation of
Lemma~\ref{lem:planned-endpoint-transport} therefore gives the common
unviewed incidence
$\mu_\rho^0:=\eta_\rho$, and any selected-coordinate view is then
shared by both sides:
\[
 \mu_\rho^0
 =P_{\pi_{\rm pr}}F_{\rm pr}^+
 =P_{\pi_{\rm co}}F_{\rm co}^- ,
 \qquad
 \mu_\rho=\mu_\rho^0\theta_\rho
 :
 \mathbb C[\mathsf M_\rho]
 \lhook\joinrel\longrightarrow
 (\mathbb C^2)^{\tensor q_\rho}.
 \tag{Plan-midpoint-compat}
\]
Here
$\theta_\rho:\mathbb C[\mathsf M_\rho]
\xrightarrow{\cong}\mathbb C[\mathsf M_\rho]$.
Thus the view at the join is one coordinate change on the same
canonical $\mathsf M_\rho$-domain, not two independently chosen
parameterizations.

These data are assigned by the following direct recursion.  Begin in
the $Q_{\mathcal R}$-wire root carrier and assign the root face before
its descendants, as in \textup{(Plan-root-word)} for a singleton
first-order root.  At a premise retain the addresses of all through
factors and place each newly live flat factor, in the clause's fixed
order, in the first unused ordered coordinates.  Once those ranges
are assigned, $\eta_s$ is not an additional choice: it writes the
fixed ordered factor labels in those ranges, the inherited word
$\sigma$ on frozen tag coordinates, and $0$ elsewhere.

Visit multi-premise nodes in displayed order: the first then second
$\mathsf{Par}$ premise, block branches $1$ then $2$, and at a cut the
producer then the consumer.  For $\mathsf{Par}$ carry the other
premise as a spectator; for $\mathsf{Complete}$ carry the inactive
factor.  For a block recurse on each tagged restriction in the ordered
$(q_{\mathcal B}-1)$-wire complement of its one through tag.  At a
cut, assign $\eta_\rho$ directly by the first-fit rule in the fixed
factor order of $\mathsf M_\rho$, before visiting either child, and
supply it to both; it is not copied from either child's natural
presentation.

Repartition and frame clauses retain the physical image and change
only its selected-coordinate view.  A coordinate may be reused only
after an outgoing materialized face at which it belongs to no live
flat factor.  At that face the planned placement puts the coordinate
in $\ket0$ padding; this is the general padding clause of
\textup{(Plan-stage)} and \textup{(Pad-Lift)}, with INV-2 for a sum
payload and \textup{(BC-carrier)} for a primitive's native carrier.
In particular, a branch-local temporary may occupy currently inactive
payload padding of its enclosing sum, but its selected realization
equation restores that padding to $\ket0$ before the branch face is
returned.  This is a property of the displayed output layout and is
therefore available during plan construction; no execution claim is
used.  No coordinate carrying a live factor is reused.
Bound~\textup{(Plan-context-bound)} says that at every call the
inherited and newly live active coordinates fit in the current
$q$-wire carrier, so the indicated extension exists.  The fixed
source-reference, DNF, constructor, left-to-right premise, branch,
basis-word, and complement orders make the recursion unique.

The collection of the stage presentations, primitive ranges,
block-tag incidences, first-order midpoints, and the unchanged-image
views is
\[
 \mathfrak P_{\mathcal R}:=
 \mathsf{CarrierPlan}(\mathcal R).
 \tag{CarrierPlan}
\]
It is the presentation component returned by this recursion before
target lowering begins.
\end{definition}

\begin{lemma}[Two-sided basis-word transport to a planned incidence]
\label{lem:planned-endpoint-transport}
Suppose
\[
 F^\epsilon:C^\epsilon
 \lhook\joinrel\longrightarrow
 (\mathbb C^2)^{\tensor q},
 \qquad
 V:C^-\xrightarrow{\;\cong\;}C^+,
 \qquad
 \mathcal U_q(G)F^-=F^+V .
 \tag{Plan-natural}
\]
Let $R_{\rm thru}\subseteq[q]$ be the coordinates assigned by the parent
to contextual tensor factors designated as carried unchanged through this
occurrence, and write
$x|_{R_{\rm thru}}$ for the corresponding subword of
$x\in\{0,1\}^q$.  For each $\epsilon$, let
\[
 E^\epsilon:D^\epsilon
 \lhook\joinrel\longrightarrow(\mathbb C^2)^{\tensor q},
 \qquad
 \theta_\epsilon:D^\epsilon\xrightarrow{\cong}C^\epsilon
\]
be the parent-assigned incidence and its selected-coordinate view.
Relative to the fixed ordered basis $\mathscr B^\epsilon$ of
$C^\epsilon$, suppose the natural and assigned unviewed incidences
are computational-word incidences
\[
 F^\epsilon\ket a=\ket{f_\epsilon(a)},
 \qquad
 E^\epsilon\theta_\epsilon^\dagger\ket a
 =\ket{e_\epsilon(a)}
 \quad(a\in\mathscr B^\epsilon),
 \tag{Plan-word-data}
\]
where $f_\epsilon,e_\epsilon$ are injective and
\[
 f_\epsilon(a)|_{R_{\rm thru}}
 =e_\epsilon(a)|_{R_{\rm thru}}
 \quad(a\in\mathscr B^\epsilon).
 \tag{Plan-through}
\]
Then there is a permutation
$\pi_\epsilon\in\operatorname{Sym}(\{0,1\}^q)$ which preserves the
complete through subword and satisfies
\[
 E^\epsilon=P_{\pi_\epsilon}F^\epsilon\theta_\epsilon,
 \tag{Plan-admissible}
\]
that is,
$\pi_\epsilon(x)|_{R_{\rm thru}}=x|_{R_{\rm thru}}$ for every word
$x$.  Such an incidence pair is called \emph{plan-compatible} with
\textup{(Plan-natural)}.  Put
\[
 G^{\mathfrak P}
 :=
 \mathcal B_q(\pi_-^{-1});G;\mathcal B_q(\pi_+).
 \tag{Plan-adapt}
\]
Then, without changing $q$,
\[
 \mathcal U_q(G^{\mathfrak P})E^-
 =
 E^+\bigl(\theta_+^\dagger V\theta_-\bigr).
 \tag{Plan-adapt-law}
\]
\end{lemma}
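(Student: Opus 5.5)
The proof splits into a combinatorial part, building the word permutation $\pi_\epsilon$ fibre by fibre over the through subword, and an algebraic part, obtained by conjugating \textup{(Plan-natural)} with the interpretations $P_{\pi_\epsilon}$ supplied by \textup{(BP-int)}.

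\emph{Combinatorial part.} Fix $\epsilon$. The relevant data are the two injections $f_\epsilon,e_\epsilon:\mathscr B^\epsilon\to\{0,1\}^q$. We want $\pi_\epsilon(f_\epsilon(a))=e_\epsilon(a)$ for every $a\in\mathscr B^\epsilon$, and we want $\pi_\epsilon$ to preserve the through subword.

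Partition $\{0,1\}^q$ into fibres $\Omega_t:=\{x: x|_{R_{\rm thru}}=t\}$, one for each word $t$ on $R_{\rm thru}$. All fibres have the common size $2^{q-|R_{\rm thru}|}$. By \textup{(Plan-through)}, for each $a$ the words $f_\epsilon(a)$ and $e_\epsilon(a)$ lie in the same fibre.

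Within a fibre $\Omega_t$, put $A_t:=\{a: f_\epsilon(a)\in\Omega_t\}$, which equals $\{a: e_\epsilon(a)\in\Omega_t\}$. The assignment $f_\epsilon(a)\mapsto e_\epsilon(a)$ is then a partial bijection $f_\epsilon(A_t)\to e_\epsilon(A_t)$ between subsets of $\Omega_t$ of equal cardinality, by injectivity. Hence the complements $\Omega_t\setminus f_\epsilon(A_t)$ and $\Omega_t\setminus e_\epsilon(A_t)$ also have equal size, and we extend by the order-preserving bijection between them in lexicographic order, as in the unused-basis-word clause of INV-4. This fixes a canonical choice.

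Let $\pi_\epsilon$ be the union of these fibrewise bijections. It is a permutation of $\{0,1\}^q$, preserves every through subword by construction, and satisfies $P_{\pi_\epsilon}F^\epsilon\ket a=\ket{e_\epsilon(a)}=E^\epsilon\theta_\epsilon^\dagger\ket a$ on the basis $\mathscr B^\epsilon$. Since $\theta_\epsilon$ is unitary, right-multiplying by $\theta_\epsilon$ gives \textup{(Plan-admissible)}.

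\emph{Algebraic part.} By \textup{(Target-seq)} and \textup{(BP-int)}, $\mathcal U_q(G^{\mathfrak P})=P_{\pi_+}\,\mathcal U_q(G)\,P_{\pi_-^{-1}}$, using $P_{\pi^{-1}}=P_\pi^\dagger$. Applying this to $E^-=P_{\pi_-}F^-\theta_-$ gives $P_{\pi_+}\mathcal U_q(G)F^-\theta_-$. By \textup{(Plan-natural)} this equals $P_{\pi_+}F^+V\theta_-$. Finally, from \textup{(Plan-admissible)} at $\epsilon=+$ we have $P_{\pi_+}F^+=E^+\theta_+^\dagger$, which yields \textup{(Plan-adapt-law)}. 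The wire count $q$ is unchanged because $\mathcal B_q$ lands in $\mathsf{TCirc}_q$ by Lemma~\ref{lem:basis-permutation-lowering}.

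The main obstacle is the fibrewise extension. One must check that the partial bijection really stays inside a single fibre and that the complements have matching sizes. Equal fibre sizes alone do not suffice; we need the domain and image of the partial map to lie in the same fibre, and that is exactly \textup{(Plan-through)} applied to both $f_\epsilon$ and $e_\epsilon$.

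A subtler point is that $e_\epsilon$ is defined through $E^\epsilon\theta_\epsilon^\dagger$ rather than through $E^\epsilon$ itself. The hypothesis \textup{(Plan-word-data)} therefore has to be read on the viewed basis, and the unitarity of $\theta_\epsilon$ must be used when passing back to \textup{(Plan-admissible)}; I would be careful to keep that bookkeeping explicit.
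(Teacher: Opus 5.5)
Your proposal is correct and follows essentially the same route as the paper's proof. You extend the partial word bijection $f_\epsilon(a)\mapsto e_\epsilon(a)$ fibrewise over the through subword, match the equal-size complements in lexicographic order, and then conjugate \textup{(Plan-natural)} using \textup{(Target-seq)} and \textup{(BP-int)} to obtain \textup{(Plan-adapt-law)}.
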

\begin{proof}
Put
\[
 X_\epsilon=f_\epsilon(\mathscr B^\epsilon),
 \qquad
 Y_\epsilon=e_\epsilon(\mathscr B^\epsilon),
\]
and define the partial word bijection
$b_\epsilon(f_\epsilon(a)):=e_\epsilon(a)$.
It preserves the $R_{\rm thru}$-subword by
\textup{(Plan-through)}.  For each
$u\in\{0,1\}^{R_{\rm thru}}$, let
\[
 \Omega_u:=\{x\in\{0,1\}^q:x|_{R_{\rm thru}}=u\}.
\]
The restriction of $b_\epsilon$ is a bijection
$X_\epsilon\cap\Omega_u\to Y_\epsilon\cap\Omega_u$.
The two complements in $\Omega_u$ therefore have the same
cardinality; match them in lexicographic order.  Together with
$b_\epsilon$, these fibrewise complement bijections form a total
permutation $\pi_\epsilon$ which preserves every through subword.
Equation~\textup{(Plan-word-data)} gives
$E^\epsilon\theta_\epsilon^\dagger
=P_{\pi_\epsilon}F^\epsilon$, hence
\textup{(Plan-admissible)}.

Finally, by \textup{(Target-seq)} and \textup{(BP-int)},
\[
\begin{aligned}
 \mathcal U_q(G^{\mathfrak P})E^-
 &=
 P_{\pi_+}\mathcal U_q(G)P_{\pi_-}^\dagger
 P_{\pi_-}F^-\theta_-\\
 &=P_{\pi_+}F^+V\theta_-
 =E^+\theta_+^\dagger V\theta_- .
\end{aligned}
\]
For a pure view take $\pi_-=\pi_+=I$ and use $G$ itself, so no adapter
is emitted.
Every adapter preserves the through word.  If it is induced by a
literal permutation of wire addresses, a backend may absorb it into
its symbolic addresses.  A genuine word permutation---including the
re-encoding required by additive associativity and by an
unequal-width right distributor---is lowered exactly through
$\mathcal B_q$.  No dense selected-coordinate view is synthesized.
\end{proof}

\begin{lemma}[Contextual lifting]
\label{lem:planned-context-lift}
Suppose
$G_0\in\mathsf{TCirc}_{q_0}$ and
$\mathcal U_{q_0}(G_0)F_0^-=F_0^+V$.  Let
$S:Z\lhook\joinrel\longrightarrow
(\mathbb C^2)^{\tensor(q-q_0)}$ be the common presentation of all
factors carried through this action, including zero padding, and let
$\varrho:[q_0]\hookrightarrow[q]$ be the planned active-wire
injection.  With
\[
 F_{\rm ctx}^\epsilon
 :=P_\varrho(F_0^\epsilon\tensor S),
\]
\textup{(Target-lift)} gives
\[
 \mathcal U_q(\mathsf{TLift}_\varrho(G_0))F_{\rm ctx}^-
 =
 F_{\rm ctx}^+(V\tensor I_Z).
 \tag{Plan-context-lift}
\]
Consequently Lemma~\ref{lem:planned-endpoint-transport} transports the
lifted natural equation to either parent-assigned incidence without
adding a wire.
\end{lemma}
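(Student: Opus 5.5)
This is a direct computation from (Target-lift) and (Pad-Lift), so the plan is to unfold both sides on the placed subspace and compare them factor by factor. Concretely, (Target-lift) gives
\[
\mathcal U_q(\mathsf{TLift}_\varrho(G_0))=\mathsf{Lift}_\varrho(\mathcal U_{q_0}(G_0))=P_\varrho(\mathcal U_{q_0}(G_0)\tensor I_{q-q_0})P_\varrho^\dagger .
\]
Composing with $F_{\rm ctx}^-=P_\varrho(F_0^-\tensor S)$, the middle $P_\varrho^\dagger P_\varrho$ cancels because $P_\varrho$ is a wire permutation and hence unitary. This leaves
\[
P_\varrho\bigl(\mathcal U_{q_0}(G_0)F_0^-\tensor S\bigr).
\]
The hypothesis $\mathcal U_{q_0}(G_0)F_0^-=F_0^+V$ turns this into $P_\varrho(F_0^+V\tensor S)$. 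By bifunctoriality of $\tensor$ in $\mathbf{FdHilb}$, that equals $P_\varrho(F_0^+\tensor S)(V\tensor I_Z)=F_{\rm ctx}^+(V\tensor I_Z)$, which is (Plan-context-lift).

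The one point to watch is that $S$ is the \emph{same} presentation on both polarities. By (Plan-stage) and (Through-code), the carried factors, their zero padding, and the inherited through-word are fixed once by the parent. So the spectator contributes $I_Z$ rather than a nontrivial map, and the spectator square (Through-code) applies unchanged. I would also check that $F_{\rm ctx}^\pm$ are isometries: each is a tensor product of isometries followed by a unitary. The lifted equation therefore has exactly the form (Plan-natural), with $C^\pm:=C_0^\pm\tensor Z$.

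For the consequence, I would verify the hypotheses (Plan-word-data) and (Plan-through) of Lemma~\ref{lem:planned-endpoint-transport} for $F_{\rm ctx}^\pm$.
\begin{itemize}
\item \textbf{Word incidence.} If $F_0^\epsilon$ is a computational-word incidence, then so is $F_{\rm ctx}^\epsilon$. This holds because $S$ is one too (it consists of canonical codeword incidences $\zeta$ and padding), and $P_\varrho$ permutes wire addresses.
\item \textbf{Through subword.} On $R_{\rm thru}$ the through subword of $F_{\rm ctx}^\epsilon$ is exactly the word written by $S$. The parent-assigned incidence $E^\epsilon$ writes the same inherited through presentation by (Plan-logical-incidence), so (Plan-through) holds.
\end{itemize}
The lemma then produces $\pi_\pm$ preserving the through word together with the adapted program (Plan-adapt), both on the same $q$ wires. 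Hence no wire is added.

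The step I expect to need care is this identification of the through subword: the two sides must coincide literally, not merely up to a view. If the parent's view $\theta_\epsilon$ touched the carried factors, I would first factor it as $\theta_\epsilon^{\rm act}\tensor I_Z$, using that the permitted views (Net-Repart, Plan-view) act only on selected active coordinates.
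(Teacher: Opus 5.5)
Your proof is correct and is essentially the paper's: the paper's entire argument is that, by \textup{(Pad-Lift)} and \textup{(Target-lift)}, both sides reduce to $P_\varrho(F_0^+V\tensor S)$, which is exactly your cancellation of $P_\varrho^\dagger P_\varrho$ followed by the hypothesis and bifunctoriality. Your extra check of \textup{(Plan-word-data)} and \textup{(Plan-through)} for the ``Consequently'' clause matches how the paper discharges those hypotheses in the proof of Theorem~\ref{thm:chart-directed-plan-emit}. Your contingency of factoring $\theta_\epsilon$ is unnecessary, since \textup{(Plan-through)} is already stated for the unviewed incidence $E^\epsilon\theta_\epsilon^\dagger$.
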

\begin{proof}
Using \textup{(Pad-Lift)} and \textup{(Target-lift)}, both sides are
\[
 P_\varrho\bigl(F_0^+V\tensor S\bigr).
\]
\end{proof}

\begin{theorem}[Direct chart-directed planning and emission]
\label{thm:chart-directed-plan-emit}
Assume \textup{(BC)}.  For every canonical call $\mathcal R$ reached
by the Source reference
construction, the recursion first constructs
$\mathfrak P_{\mathcal R}$ and then lowers the same retained
expression.  More precisely, suppose an action occurrence
$\mathcal A$ is visited in the $q_{\mathcal A}$-wire ambient carrier
assigned by its parent, with explicit plan-coordinate domains and
framed incidences
\[
 E_{\mathcal A}^\epsilon:
 C_{\mathcal A,\mathfrak P}^\epsilon
 \lhook\joinrel\longrightarrow
 (\mathbb C^2)^{\tensor q_{\mathcal A}} .
\]
The lowering recursion constructs
$G_{\mathcal A}^{\mathfrak P}\in
\mathsf{TCirc}_{q_{\mathcal A}}$ such that
\[
 \mathcal U_{q_{\mathcal A}}(G_{\mathcal A}^{\mathfrak P})
 E_{\mathcal A}^-
 =
 E_{\mathcal A}^+V_{\mathcal A}^{\mathfrak P}.
\tag{PlanEmit-IH}
\]
Here $V_{\mathcal A}^{\mathfrak P}$ is the semantic action expressed
in the plan coordinates; under \textup{(Plan-admissible)} it is
$\theta_+^\dagger V_{\mathcal A}\theta_-$.  All factors designated
through by the parent occur as identity factors in this equation.
The presentations in this statement are the restrictions of the root
plan; they need not be the presentations obtained by compiling the
same underlying derivation as a new root.  At the root the construction
returns the six-field artifact
\[
 \mathsf{PlanEmit}(\mathcal R)
 =
 (Q_{\mathcal R},G_{\mathcal R},
  L_{\mathcal R}^-,L_{\mathcal R}^+,
  p_{\mathcal R}^-,p_{\mathcal R}^+)
 \tag{PlanEmit}
\]
in the single register $\mathcal H_{\mathcal R}$.
If
\[
 \widehat g_{\mathcal R}^\epsilon
 :=
 \left(
  \lay_{\mathfrak J_{\mathcal R}}^\epsilon
  \!\upharpoonright_{\widetilde C_{\mathcal R}^\epsilon}
 \right)^{-1}
 g_{\mathcal R}^\epsilon
 \tag{Plan-chart}
\]
is the physical pullback of its complete chart, then
\[
 \mathcal U_{Q_{\mathcal R}}(G_{\mathcal R})
 L_{\mathcal R}^-p_{\mathcal R}^-
 \widehat g_{\mathcal R}^-
 =
 L_{\mathcal R}^+p_{\mathcal R}^+
 \widehat g_{\mathcal R}^+V_{\mathcal R}.
 \tag{PlanEmit-chart}
\]
Moreover:
\begin{enumerate}
\item a block designates an existing semantic sum-tag coordinate and
      adds neither a derivation tag nor a parent-carrier bank;
\item its completed branches are emitted directly into the payload
      incidences of \textup{(Plan-Block-face)};
\item a first-order splice uses the single midpoint
      \textup{(Plan-midpoint)} and introduces no independent
      splice-level $\mathsf{Align}$; any required word adapter is
      already part of an incident endpoint transport; and
\item every node of a higher-order $\mathsf{Net}(\rho)$ either
      preserves the plan image, places genuinely simultaneous factors
      on disjoint planned ranges, or invokes the same planned splice
      recursion.
\end{enumerate}
\end{theorem}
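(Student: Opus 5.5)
The proof is a two-pass induction over the finite retained constructor expression of Definition~\ref{def:chart-directed-carrier-plan}, in the order \textup{(Plan-order)}. In the first pass I check that $\mathsf{CarrierPlan}(\mathcal R)$ is well defined. The retained expression is finite because $\mathsf{Net}(\rho)$ and its \textup{(C-var-whole)} unit grafts have already terminated under \textup{(Eta-order)}, and because generated $\mathsf{MapApply}$ and coherent-sharing calls lower $\Phi$ by \textup{(Use-$\Phi$)} and \textup{(CS-NF)}. Hence $Q_{\mathcal R}$ in \textup{(Plan-Q)} is a finite maximum. The invariant \textup{(Plan-context-bound)} then holds top-down: at the root it is the definition of $Q_{\mathcal R}$, and a block consumes exactly one existing tag coordinate into $\sigma$ while reducing $q$ by one. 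First-fit placement therefore always succeeds, and live coordinates are never reused.

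In the second pass I prove \textup{(PlanEmit-IH)} by induction on the same order, first establishing each case at the canonical unviewed incidence and then moving it to the parent-assigned incidence.
\begin{itemize}
\item \emph{Leaves.} The $\mathsf{Wire}$, $\mathsf{Struct}$ and $\mathsf{Prim}$ rows give their natural squares by Lemma~\ref{lem:plan-local-laws}; \textup{(BC)} is used only in the $\mathsf{Prim}$ row.
\item \emph{View rows.} Repartition, $\mathsf{StrAct}$, $\mathsf{StrPre}$, $\mathsf{TenPack}$ and $\mathsf{Frame}_{\delta,K}$ change only the placement, via \textup{(View-layout)} and \textup{(Plan-view)}. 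The target program is unchanged.
\item \emph{Parallel and completion nodes.} These use \textup{(Target-par)} together with Lemma~\ref{lem:planned-context-lift}. The spectator factor is handled by \textup{(Through-code)}, which yields \textup{(Complete-selected)}.
\item \emph{Transport to the assigned incidence.} In every case, Lemma~\ref{lem:planned-endpoint-transport} carries the natural equation to the parent-assigned incidence, at the cost of exact word adapters built from $\mathcal B_q$. Hypothesis \textup{(Plan-through)} holds because the plan fixes inherited through subwords before recursing.
\end{itemize}

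For a binary block, I emit branch $i$ into $Z_{\mathcal B,i,L}^\epsilon$ by the induction hypothesis. I then wrap it with $\mathsf{Ctrl}_{d_{\mathcal B}=\tau_i}$ (using the $X$-sandwich for $\tau_1=0$), and take the branch phase from $\mathsf{Phase}$. The next step is to show that the composite satisfies the block square on $F_{\mathcal B,L}^\epsilon$. By \textup{(Target-control)}, $\mathsf{Ctrl}_{d=\tau_i}(H_i)$ acts as $H_i$ on the image of $\jmath_{d,\tau_i}$ and as the identity on the image of $\jmath_{d,1-\tau_i}$. Combining this with \textup{(Plan-Block-face)}, the orthogonality and joint exhaustion from \textup{(Block-pack)}, and \textup{(Use-part)}/\textup{(CS-part)} gives the required square. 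I then transport it by \textup{(Plan-Block-parent)}. Shared occurrences are consistent because the maps $\beta_i^\epsilon$ send both restrictions to one parent address.

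For a first-order cut $\rho$, I emit the producer and then the consumer, both at the common midpoint $\mu_\rho^0$ given by \textup{(Plan-midpoint-compat)}. Sequencing with \textup{(Target-seq)} and using \textup{(Graph-product)} yields exactly the \textup{(FO-whole)} action. For a producer family, the analogous argument gives \textup{(FO-family-close)}. A higher-order $\mathsf{Net}(\rho)$ is handled row by row following \textup{(HSplice-cases)}; its \textup{(C-str)} row recurses on a proper producer, which is smaller in the induction order. Once \textup{(PlanEmit-IH)} holds at the root, I choose $L_{\mathcal R}^\pm$ by INV-4 so that $L_{\mathcal R}^\epsilon p_{\mathcal R}^\epsilon=J_{\mathcal R,L}=\eta_{\mathcal R}\widehat g_{\mathcal R}^\dagger$. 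Precomposing with $\widehat g_{\mathcal R}^\epsilon$ then gives \textup{(PlanEmit-chart)}. Items (1)--(4) follow directly from reading off the constructors used in the emission.

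I expect the main obstacle to be the midpoint-compatibility step at a cut: showing that the producer's natural positive incidence and the consumer's natural negative incidence are both computational-word incidences on the same ordered basis $\mathsf M_\rho$, with the same through subwords, so that Lemma~\ref{lem:planned-endpoint-transport} applies on both sides with one shared view $\theta_\rho$. My first approach would be to derive this from the source-coordinate convention following \textup{(Chart)} together with Lemma~\ref{lem:shared-source-chart}, which fix $Z_\rho$, including the common $Z_K$ at Source block ports. Where the interaction with block tags under a possibly dense $\theta$ is subtle, I would fall back on the fibrewise construction over through subwords.
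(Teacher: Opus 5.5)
Your overall route is the paper's. It has a planning pass over the finite retained expression in \textup{(Plan-order)}, followed by lowering. The lowering uses the natural squares of Lemma~\ref{lem:plan-local-laws}, endpoint transport by Lemma~\ref{lem:planned-endpoint-transport}, the controlled block core emitted into the $(q-1)$-wire payload incidences, the single-midpoint splice with cancelling views, the rows of \textup{(HSplice-cases)}, and INV-4 root frames.

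The step that does not go through as written is the $\mathsf{Par}$ row. \textup{(Target-par)} builds $C\mathbin{\|}D\in\mathsf{TCirc}_{q+r}$ on disjoint wire sets. But $\mathsf{CarrierPlan}$ visits each $\mathsf{Par}$ premise in the \emph{same} $q$-wire contextual carrier, with the other premise carried through. Your induction hypothesis therefore returns $G_1^{\mathfrak P},G_2^{\mathfrak P}\in\mathsf{TCirc}_q$, and $G_1^{\mathfrak P}\mathbin{\|}G_2^{\mathfrak P}$ is not in $\mathsf{TCirc}_q$.

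Re-planning the two premises on disjoint ranges does not rescue this in general. That would need the sum of their maximal internal face widths. \textup{(Plan-Q)} instead counts an internal face of one premise only together with the other premise's \emph{live} carrier. Internal faces can exceed the live boundary, for example wider recursive-binary reassociation or distributor faces, or branch-local temporaries in payload padding. So both the typing $G_{\mathcal A}^{\mathfrak P}\in\mathsf{TCirc}_{q_{\mathcal A}}$ and your first-pass claim that first-fit placement suffices in $Q_{\mathcal R}$ would fail.

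The repair is \textup{(Plan-Par-IH)}:
\begin{enumerate}
\item Apply the hypothesis to the first premise, with the second carried through, from $E_\parallel^-$ to the planned between-premises incidence $E_{\mathrm m}$.
\item Apply it to the second premise, with the first carried through, from $E_{\mathrm m}$ to $E_\parallel^+$.
\item Combine them with \textup{(Target-seq)} and interchange:
\[
 \mathcal U_q(G_1^{\mathfrak P};G_2^{\mathfrak P})E_\parallel^-
 =E_\parallel^+(V_1^{\mathfrak P}\tensor V_2^{\mathfrak P}).
\]
\end{enumerate}
Reusing the first call's workspace in the second is sound because its realization equation returns every temporary to zero padding at $E_{\mathrm m}$. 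This sequential reuse is exactly what makes $Q_{\mathcal R}$ suffice.

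Completion is unaffected: its spectator carries the identity program, so a plain $\mathsf{TLift}$ with \textup{(Through-code)} suffices.

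A smaller correction: \textup{(BC)} is not confined to the $\mathsf{Prim}$ row. It also licenses \textup{(Target-seq)}, \textup{(Target-control)}, \textup{(Target-phase)}, and \textup{(Target-lift)}, which every row uses.
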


\begin{proof}
Use the well-founded recursion of the semantic construction: outer
strong induction on $\Phi$, structural recursion
on the current canonical derivation at fixed $\Phi$, and the
producer--consumer order
$(|\mathcal D|,|\mathcal C|)$ of \textup{(Plan-order)} inside a
classified cut.  Generated applied-map and coherent-sharing calls are
prior by \textup{(Use-$\Phi$)} and \textup{(CS-NF)}.  The retained
chart expression is finite by
Theorem~\ref{thm:nf-boundary-unitarity} and
Lemma~\ref{lem:classifier-totality}, so \textup{(Plan-Q)} is defined.
This first proves that the direct plan recursion is total.  At a
recursive call the enclosing clauses have already assigned the
through-factor addresses.  Bound~\textup{(Plan-context-bound)} puts
the remaining factors in the current active $q$-wire carrier, so their
fixed ordered injection into the complement exists.  At a block the
$\beta_i^\epsilon$-images retain their inherited parent addresses and
only branch-local factors use the remaining payload coordinates.
Freezing its tag changes $(\sigma,\iota,q)$ exactly as displayed above,
so the same bound types nested $(q-1)$-wire branch calls.  Keeping
through addresses and using the fixed order for the complement proves
the claim recursively.
The block and cut clauses make the two deliberate identifications
shown in \textup{(Plan-Block-face)} and \textup{(Plan-midpoint)};
the former is the tagged orthogonal union chosen for $\eta_s$, and
the latter is the single $\eta_\rho$ supplied to both children.
Consequently both displayed equalities are definitional consequences
of the direct recursion, not comparisons of independently constructed
physical frames.
At every unviewed retained face, the natural incidence and the
parent-assigned incidence are computational-word incidences indexed
by the same fixed semantic basis.  The recursion retains the complete
through word, so \textup{(Plan-word-data)} and
\textup{(Plan-through)} hold; the fibrewise extension in
Lemma~\ref{lem:planned-endpoint-transport} therefore proves
\textup{(Plan-admissible)}.  Disjoint tensor clauses are its
wire-address special case, while repartition and frame clauses add
only unchanged-image views.  Restricted flat padding is isometric, the
selected frame maps are isometric by the complete-chart construction,
and INV-4 supplies their total unitary extensions.  This completes the
planning pass without emitting a target program.
The reuse entries of the plan occur only at the zero-padded outgoing
faces specified above.  During lowering, the induction hypothesis for
the preceding action maps its selected input incidence to precisely
that outgoing incidence, so a later first-fit assignment begins with
the required zero coordinate.

Lower the retained expression in the same recursion.  The natural
realization calculations for the unchanged rows are precisely the
corresponding rows of Lemma~\ref{lem:plan-local-laws}; only their
placement in the parent context must be checked.  At
$\mathsf{Wire}$, $\mathsf{Struct}$, and $\mathsf{Prim}$, lift the
natural target program along its planned ordered wire injection and
apply Lemma~\ref{lem:planned-context-lift}, followed by
Lemma~\ref{lem:planned-endpoint-transport}.  The primitive lift
uses exactly the $n_P$ native action wires of \textup{(BC-carrier)}
and is the identity on the carried factors.  The same transport lemma
applies to the already proved equations for $\mathsf{Repart}$,
$\mathsf{StrAct}$, $\mathsf{StrPre}$, $\mathsf{TenPack}$, and
$\mathsf{Frame}$.  The direct plan keeps the address presentation
fixed across a pure view, so these rows introduce no adapter and no
wire.

For $\mathsf{Par}$, let $E_{\parallel}^-$,
$E_{\mathrm m}$, and $E_{\parallel}^+$ be its planned input,
between-premises, and output incidences in the same $q$-wire contextual
carrier.  Apply the recursive hypothesis first with the second
premise carried through and then with the first premise carried
through:
\[
\begin{aligned}
 \mathcal U_q(G_1^{\mathfrak P})E_{\parallel}^-
   &=E_{\mathrm m}(V_1^{\mathfrak P}\tensor I),\\
 \mathcal U_q(G_2^{\mathfrak P})E_{\mathrm m}
   &=E_{\parallel}^+(I\tensor V_2^{\mathfrak P}).
\end{aligned}
\tag{Plan-Par-IH}
\]
Thus \textup{(Target-seq)} and interchange give
\[
 \mathcal U_q(G_1^{\mathfrak P};G_2^{\mathfrak P})
 E_{\parallel}^-
 =
 E_{\parallel}^+
 (V_1^{\mathfrak P}\tensor V_2^{\mathfrak P}).
\tag{Plan-Par}
\]
Only the two logical factor families are simultaneous; temporary
workspace returned to the planned between-incidence may be reused by
the second call.  This is the parallel-composition semantic
calculation in a fixed contextual register.  Completion applies
Lemma~\ref{lem:planned-context-lift} to the planned zero-\allowbreak padded lift of the common spectator incidence $\zeta_{\Xi,0}$.
Equation~\textup{(Through-code)} identifies its identity target action
with the semantic through-map $Y_\Xi$; the unchanged-image pullback
\textup{(Complete-pullback)} then gives
\textup{(Complete-selected)} and adds only the inactive factor already
counted in \textup{(Plan-Q)}.

It remains to give the two calculations whose old physical proofs
reconciled independently sealed premise presentations.

\emph{Block.}
Write $q:=q_{\mathcal B}$ for the ambient width of this occurrence.
Let $\widehat V_i:\widehat C_i^-\to\widehat C_i^+$ be the completed
branch actions of \textup{(Block-code)}, and put
$(\gamma_1,\gamma_2)=(\alpha,\beta)$.  The semantic clause first
supplies \textup{(Block-pack)}.  Put
\[
 \overline V_i:=\widehat V_i\tensor I_{T_{\mathcal B}} .
 \tag{Plan-Block-through}
\]
Pulling its two component inclusions through
\eqref{eq:sum-layout} gives the $C$ instance of
\textup{(Plan-Block-face)}; the orthogonal tagged frame construction
in the plan gives its $L$ instance.  This fixes the payload endpoints
before either recursive branch call is emitted.  In each of
$\mathsf{Block}^{\rm map}$, $\mathsf{Block}^{\rm use}$, and
$\mathsf{Block}^{\rm cs}$, component $i$ at the source is mapped to
component $i$ at the target, so the same tag value $\tau_i$ occurs at
both polarities.  Additive symmetry is the separate structural row
whose target action is $X$.

Remove coordinate $d_{\mathcal B}$ and identify its ordered complement
with $[q-1]$.  The direct recursive hypotheses are then well typed:
they return
$\widehat G_i\in\mathsf{TCirc}_{q-1}$ with
\[
 \mathcal U_{q-1}(\widehat G_i)
Z_{\mathcal B,i,L}^-
=
Z_{\mathcal B,i,L}^+\overline V_i .
 \tag{Plan-branch-IH}
\]
Define the canonical controlled core
\[
 \begin{aligned}
 G_{\mathcal B}^{0}:={}&
 \mathsf{Ctrl}_{d_{\mathcal B}=0}
  \left(
   \widehat G_1;
   \mathsf{Phase}_{q-1}(\alpha)
  \right);\\
 &\mathsf{Ctrl}_{d_{\mathcal B}=1}
  \left(
   \widehat G_2;
   \mathsf{Phase}_{q-1}(\beta)
  \right).
 \end{aligned}
 \tag{Plan-Block-prog}
\]
The controls act on the common non-tag carrier and leave
$d_{\mathcal B}$ unchanged.  By \textup{(Target-control)},
\textup{(Target-seq)}, and \textup{(Plan-branch-IH)}, for $i=1,2$,
\[
 \begin{aligned}
 &\mathcal U_q(G_{\mathcal B}^{0})
F_{\mathcal B,L}^-
\overline P_{\mathcal B}^-\overline k_i^-\\
 &\qquad =
\jmath_{d_{\mathcal B},\tau_i}
Z_{\mathcal B,i,L}^+
\gamma_i\overline V_i\\
 &\qquad =
F_{\mathcal B,L}^+
\overline P_{\mathcal B}^+\overline k_i^+
\gamma_i\overline V_i .
 \end{aligned}
 \tag{Plan-Block-component}
\]
The two component families are orthogonal and jointly exhaustive by
\textup{(Block-pack)}.  Therefore
\[
\begin{aligned}
&\mathcal U_q(G_{\mathcal B}^{0})
F_{\mathcal B,L}^-\overline P_{\mathcal B}^-\\
 &\qquad =
F_{\mathcal B,L}^+\overline P_{\mathcal B}^+
\left(
 \sum_i
 \overline k_i^+\gamma_i\overline V_i
 (\overline k_i^-)^\dagger
\right).
\end{aligned}
\tag{Plan-Block-sum}
\]
Indeed, precomposing both sides with each $\overline k_i^-$ gives
\textup{(Plan-Block-component)}, and the $\overline k_i^-$ exhaust the
domain.  By \textup{(Block-pack)}, \textup{(Block-code)}, and
\textup{(Plan-Block-context)},
\[
\begin{aligned}
 &\overline P_{\mathcal B}^+
 \left(
  \sum_i\overline k_i^+\gamma_i\overline V_i
  (\overline k_i^-)^\dagger
 \right)
 (\overline P_{\mathcal B}^-)^\dagger\\
 &\quad =
 \left[
  (\ell_{\mathcal B}^+)^{-1}
  \left(\sum_i
   e_i^+\gamma_i\widehat U_i(e_i^-)^\dagger\right)
  \ell_{\mathcal B}^-
 \right]\tensor I_{T_{\mathcal B}} .
\end{aligned}
\tag{Plan-Block-semantic}
\]
The bracketed operator is exactly the semantic block clause in flat
parent coordinates; call it $V_{\mathcal B}^0$ after tensoring with
$I_{T_{\mathcal B}}$.  Right-multiplying
\textup{(Plan-Block-sum)} by
$(\overline P_{\mathcal B}^-)^\dagger$ therefore gives the canonical
equation
\[
 \mathcal U_q(G_{\mathcal B}^{0})F_{\mathcal B,L}^-
 =F_{\mathcal B,L}^+V_{\mathcal B}^0 .
 \tag{Plan-Block-natural}
\]
For the parent incidence \textup{(Plan-Block-parent)}, apply
Lemma~\ref{lem:planned-endpoint-transport}.  It emits only the exact
basis-word adapters and gives
\[
 \mathcal U_q(G_{\mathcal B}^{\mathfrak P})E_{\mathcal B}^-
 =E_{\mathcal B}^+
   (\theta_+^\dagger V_{\mathcal B}^0\theta_-).
 \tag{Plan-Block-transport}
\]
This is \textup{(PlanEmit-IH)} for all three block constructors.  In
particular, a dense inherited view is never synthesized and need not
be tag diagonal: the controlled core acts linearly on its image before
the whole equation is reparameterized.

No relocation occurs in this calculation.  The child endpoints in
\textup{(Plan-branch-IH)} are already the payload restrictions of the
parent endpoints.  Thus the block uses neither a parent-carrier bank
nor a relocation circuit.  Its control coordinate is the semantic
tag already counted by the relevant flat stage.  If a block is nested,
the outer tag remains physically present but is frozen outside the
recursive target body; the inner semantic tag is an ordinary
coordinate of that $(q-1)$-wire payload.  Thus no derivation-tag
coordinate is added outside the recursive sum layout.

\emph{First-order splice.}
For a classified first-order instance $\rho$, put $q:=q_\rho$.  The semantic clause
\textup{(Graph-link)} supplies the one ordered coordinate family
$\mathsf M_\rho$.  The plan assigns it once as $\mu_\rho$ and asks the
producer and consumer recursive calls to end and begin, respectively,
at this same presentation in the same $q$-wire carrier.  Let $D_0$
and $C_0$ denote the natural producer and consumer actions in the
canonical $\mathsf M_\rho$ factorization, including the displayed
residual and inactive identity factors and with the fixed associator
$\sigma_\rho$ absorbed into $D_0$.  Let $\theta_{\rm in}$ and
$\theta_{\rm out}$ be the exterior endpoint views and let
$\theta_\rho$ be the single midpoint view of
\textup{(Plan-midpoint-compat)}.  A selected-code view is absorbed by
\textup{(Plan-adapt-law)}, not synthesized as a target program.  The
two induction hypotheses therefore have the form
\[
\begin{aligned}
 \mathcal U_q(G_{\rm pr}^{\mathfrak P})E_{\rm in}
 &=
 \mu_\rho
 (\theta_\rho^\dagger D_0\theta_{\rm in}),\\
 \mathcal U_q(G_{\rm co}^{\mathfrak P})\mu_\rho
 &=
 E_{\rm out}
 (\theta_{\rm out}^\dagger C_0\theta_\rho).
 \end{aligned}
 \tag{Plan-Splice-midpoint}
\]
Therefore the planned splice program is simply
\[
 G_{\rho}^{\mathfrak P}
 :=
G_{\rm pr}^{\mathfrak P};
G_{\rm co}^{\mathfrak P}.
\tag{Plan-Splice-prog}
\]
Equation~\textup{(Plan-midpoint-compat)} identifies both incident
coordinate domains with the same canonical
$\mathbb C[\mathsf M_\rho]$; hence the map inserted at their join
is the identity.  Using \textup{(Target-seq)}, the composite coordinate
action is
\[
 \theta_{\rm out}^\dagger C_0\theta_\rho
 \theta_\rho^\dagger D_0\theta_{\rm in}
 =\theta_{\rm out}^\dagger C_0D_0\theta_{\rm in}.
 \tag{Plan-Splice-cancel}
\]
In the canonical abstract $\mathsf D_P$ basis the coefficient of
$C_0D_0$ on $(y,r)$ from $x,z$ is
$\sum_p C_{r,(p,z)}D_{(y,p),x}$, exactly
\textup{(Graph-link)} and \textup{(FO-whole)}.  No claim is made that
the physical map $\mu_\rho$ exposes this basis as a word subregister.
The tagged producer family is the componentwise calculation of
\textup{(FO-family-cut)}.  Its orthogonal, jointly exhaustive parent
assembly is \textup{(FO-family-close)}, using \textup{(FO-part)}.
Since both incidences are literally $\mu_\rho$, there is no independent
$\mathsf{Align}_\rho$ and no splice-owned register enlargement.
Any exact basis-word permutation between a child's natural endpoint
and $\mu_\rho$ is already the corresponding input or output adapter of
\textup{(Plan-adapt)} inside $G_{\rm pr}^{\mathfrak P}$ or
$G_{\rm co}^{\mathfrak P}$.  Only the join itself inserts no
additional $\mathsf{Align}_\rho$.

\emph{Higher-order splice.}
Induct over the retained finite $\mathsf{Net}(\rho)$ of
\textup{(HSplice-cases)}.  Its semantic equalities are unchanged.  In
the \textup{(Y)} row use the unchanged-image
$\mathsf{Repart}$.  In \textup{(C-var-app)} use
\textup{(Plan-Par)} on the disjoint live operand, consumer, and yank
factors and then its selected-code view.  The
\textup{(C-var-whole)} row uses \textup{(Plan-Par)} on the operand and
consumer factors, followed by \textup{(Eta-repart)}.  The
\textup{(EndAct)} row is the existing $\mathsf{StrAct}$ equation.  In
\textup{(C-str)}, $\mathsf{StrPre}$ preserves the image and the
recursive splice is the simultaneous induction hypothesis at the
strictly prior classified instance $\rho'$ of
\textup{(Plan-order)}.  A terminal first-order instance uses the
first-order calculation above.  These are exactly the rows of
\textup{(HSplice-cases)}.

For \textup{(C-var-whole)}, write
$u^\epsilon:=\mathsf{ug}_{\rho,S}^\epsilon$ and let
$\ell_{\rm in}^\epsilon$ and $\ell_{\rm out}^\epsilon$ be the
restricted layouts into $K_\rho^\epsilon$ and $B_\rho^\epsilon$.
The names are relative to
$u^\epsilon:K_\rho^\epsilon\to B_\rho^\epsilon$; since
$\chi_\rho^\epsilon:B_\rho^\epsilon\to K_\rho^\epsilon$, this is
exactly the direction of \textup{(Net-repart-pullback)}.  Define
\[
 \widehat u^\epsilon
 :=
 (\ell_{\rm out}^\epsilon)^{-1}
 u^\epsilon\ell_{\rm in}^\epsilon .
\]
Since
$\widehat\omega_{\rho,S}^\epsilon
 =(\ell_{\rm in}^\epsilon)^{-1}
   \chi_\rho^\epsilon\ell_{\rm out}^\epsilon$,
Equations~\textup{(PC)} and \textup{(Eta-repart)} give
\[
 \widehat\omega_{\rho,S}^\epsilon\widehat u^\epsilon=I,
 \qquad
 \widehat u^\epsilon\widehat\omega_{\rho,S}^\epsilon=I .
 \tag{Plan-Eta-view}
\]
Thus $\widehat\omega_{\rho,S}^\epsilon$ is a surjective unitary
selected-code reparameterization and
\[
 \operatorname{im}
 \bigl(p^\epsilon\widehat\omega_{\rho,S}^\epsilon\bigr)
 =\operatorname{im}(p^\epsilon).
\tag{Plan-Eta-image}
\]
Put
\[
 \widehat W_\rho
 :=
 (\ell_{\rm in}^+)^{-1}W_\rho\ell_{\rm in}^-,
 \qquad
 V_{\rm eta}
 :=
 (\widehat\omega_{\rho,S}^+)^\dagger
 \widehat W_\rho
 \widehat\omega_{\rho,S}^- .
 \tag{Plan-Eta-action}
\]
If the preceding planned parallel program has equation
$\mathcal U(G)F^-=F^+\widehat W_\rho$, then unitarity of
$\widehat\omega_{\rho,S}^+$ gives
\[
 \mathcal U(G)(F^-\widehat\omega_{\rho,S}^-)
 =
 (F^+\widehat\omega_{\rho,S}^+)V_{\rm eta}.
 \tag{Plan-Eta-program}
\]
Moreover, \textup{(Net-repart-pullback)} and the existing
\textup{(C-var-whole)} semantic equality give
\[
\begin{aligned}
 \ell_{\rm out}^+V_{\rm eta}(\ell_{\rm out}^-)^{-1}
 &=
 (\chi_\rho^+)^\dagger W_\rho\chi_\rho^-\\
 &=
 \mathsf{PortCut}_{S}^{\rho}(\mathcal C,\mathcal D).
\end{aligned}
\tag{Plan-Eta-realize}
\]
This is the precise reason that \textup{(Net-Repart)} changes neither
the planned carrier nor the target program while still realizing the
repartitioned action.  It does not assert that
$\widehat\omega_{\rho,S}^\epsilon$ is a computational-basis
permutation.  The tensor and shared-block stages of
\textup{(C-var-whole)} are already contained in
$\omega_{\rho,S}$.  Thus every higher-order row either uses
\textup{(Plan-Par)}, retains the same image, or invokes a strictly
prior planned splice; no row allocates another carrier.

Ordinary map application is the planned use-block case followed by
the planned splice case.  Coherent sharing is the planned
coherent-sharing block followed by the same planned splice and the
unchanged-image distributor frame.  Their recursive calls are prior
in the stated order.  The retained chart expression contains no
further constructors beyond \textup{(HSplice-cases)}, completing
the recursion.

\emph{Acceptance witnesses.}
For the C--C instance
``$\mathsf{case}\ e\ \mathsf{of}\ x\mapsto g\,w
\mid y\mapsto g\,w$'', the occurrence family
$\Pi_{\mathcal B}$ has one parent occurrence for $g$ and one for $w$;
the two maps $\beta_i^\epsilon$ are their branch restrictions.
Consequently the direct recursion leaves one planned range for $g$
and one for $w$ in both payload presentations
$Z_{\mathcal B,i,\bullet}^\epsilon$.  The only branch distinction is
the already present case tag.  Equation~\textup{(Plan-Block-natural)}
therefore applies with no carrier bank, and any enclosing
\textup{(C-str)} presentation is applied afterward by
\textup{(Plan-Block-transport)}.  Thus shared occurrences are neither
duplicated nor relocated.

For the T2 instance
``$\mathsf{let}\ (h,s)=f\,a\ \mathsf{in}\
\mathsf{case}\ w\ \mathsf{of}\ x\mapsto h\,s
\mid y\mapsto h\,s$'',
Lemma~\ref{lem:classifier-totality} selects the
\textup{(C-var-whole)} shared-block row.  The occurrence maps assign
the two restrictions of $h$, and likewise of $s$, their one parent
range.  The unique case block is emitted at its natural tagged
incidence by \textup{(Plan-Block-natural)}.  Its traversal inside the
higher-order cut is already a row of $\widehat\omega_{\rho,S}$ in
\textup{(Eta-graft)} and is handled by
\textup{(Plan-Eta-program)} followed by endpoint transport; it does
not emit a second physical block.  Hence this instance adds neither a
carrier nor an additional block tag.

Finally take the abstract quantum switch with $A=\QBool$.  Its two
operation arguments each have carrier width
$\nwires(A\lmark A)=2$, and its control--payload map has width
\[
 \nwires((\QBool\tensor A)\lmark(\QBool\tensor A))=4.
\]
Thus its retained root face has width $2+2+4=8$.  Inspection of the displayed Source term in
\S\ref{subsec:quantum-switch} and its PPX carrier layout in
\S\ref{sec:impl-qswitch}, together with the complete higher-order
node inventory \textup{(HSplice-cases)}, gives exactly the following
contextual-face forms: a wire/yank, tensor, or structural face whose
active and carried-through factors together occupy at most the same
eight coordinates; the case face
$1_{\rm tag}+1_{\rm payload}+6_{\rm thru}=8$; or a cut face in which
one incident producer--consumer port is replaced by its single
midpoint while every other factor is carried through.  The cut face
has the same or smaller width than its incident face.  The Source
distributor preserves width, and this derivation contains no wider
sum reassociation.  Hence every contextual face has
width at most eight, while the root supplies the reverse inequality:
$Q_{\mathsf{QSwitch}}=8$.  The two branch restrictions of each of
$f$ and $g$ have their common parent address by $\beta_i^\epsilon$,
so these eight wires include no emitter-owned workspace.  This is a
carrier calculation; it does not assert the prototype's separately
optimized gate list or pending output permutation.

At the root, put
$p_{\mathcal R}^\epsilon:=J_{\mathcal R,C}^\epsilon$ and take
$L_{\mathcal R}^\epsilon$ to be the INV-4 extension satisfying
$L_{\mathcal R}^\epsilon p_{\mathcal R}^\epsilon
=J_{\mathcal R,L}^\epsilon$.  Both placements are isometric and both
frames are total unitary.  Apply \textup{(PlanEmit-IH)} to
$E_{\mathcal R}^\epsilon
 :=J_{\mathcal R,L}^\epsilon\widehat g_{\mathcal R}^\epsilon$.
Thus the root instance of \textup{(PlanEmit-IH)} is expressed on the
chart domain $\mathsf D_{\mathcal R}$, as are the plan coordinates in
\textup{(Plan-logical-incidence)}.  This incidence already contains
the canonical root chart pullback
$\widehat g_{\mathcal R}^\epsilon$.  Because the root has no enclosing
constructor, no additional plan view is contributed at its outer
endpoint; applying the induction hypothesis in that canonical chart
domain therefore gives
$V_{\mathcal R}^{\mathfrak P}=V_{\mathcal R}$.
Indeed, \textup{(Plan-logical-incidence)} gives directly
$J_{\mathcal R,L}^\epsilon\widehat g_{\mathcal R}^\epsilon
=\eta_{\mathcal R}^\epsilon$.
This does not assert that $\widehat g_{\mathcal R}^\epsilon$ is a
computational-word map.
This is exactly \textup{(PlanEmit-chart)} and supplies the six fields
of \textup{(PlanEmit)}, completing the proof.
\end{proof}

\begin{remark}[What the plan does not assume]
\label{rem:planned-emission-no-oracle}
The semantic charts and all direct-sum maps above live in
$\mathbf{FdHilb}$.  Their placements and unchanged-image views live
in the physical register $\mathcal H_{\mathcal R}$.  Only
$G_{\mathcal R}$ lives in
$\mathsf{TCirc}_{Q_{\mathcal R}}$.  Consequently the proof neither
identifies the semantic biproduct $\oplus$ with physical tensor
juxtaposition nor assumes that an arbitrary unitary chart
reparameterization can be synthesized.  The exact permutation
supplier is used only for computational-basis word changes, and
\textup{(BC-carrier)} remains an independent lower bound on the
planned width.  In particular, register readback uses the resulting
isometric physical placements; it does not require every
selected-coordinate view to be a computational-word permutation.
Because the reference sum layout is recursively binary,
$Q_{\mathcal R}$ is the maximum width of all retained contextual
faces and need not equal the root-interface width.  Any larger width
exposed by reassociation belongs to that prescribed semantic face; it
is not emitter-owned workspace.
\end{remark}

%% file: narymonoidal.tex

\section{Derived Forms for \texorpdfstring{$n$}{n}-ary Monoidal Sums}
\label{app:nary-plus}

This appendix gives the Raw auxiliaries used by the fixed expansion of
the $n$-ary Source case in Section~\ref{sec:datatypes}.  The forms
$\mathsf{RCase}$, $\Case_n$, and $\mathsf{factor}_n$ are Raw
abbreviations.  The form $\mathsf{TagCase}_n$ is the tag-preserving
Source construct; its Raw expansion is defined below.  Throughout,
$n\geq1$.  At Source instances every generated sum has first-order
summands and every generated map has first-order targets, so the
expansion is also internally typable by the binary rules of
Appendix~\ref{app:focused-rules}.  Every generated coherent node uses
phase~$1$; normalization may subsequently accumulate those binary
phase annotations.

The formal reference compiler respects this definitional expansion:
it assigns layouts and emits controls recursively at the binary core
nodes.  The executable prototype instead uses direct flat-tag lowering
of the derived $n$-ary syntax; this path is outside the compilation
soundness theorem.  The reference proof treats structural coherence as typed logical-frame
changes.  In the normalized prototype, additive associativity and
each distributor itself are gate-free, including unequal-width
instances; subsequent composition uses the splice alignment described
in Appendix~\ref{app:sum-encoding}.  A nontrivial permutation of
summand indices is emitted as the corresponding tag-register
permutation.

\subsection{\texorpdfstring{$n$}{n}-ary Sums}

We fix the left-associated expansion:
\[
  \bigplus_{i=0}^{0} A_i := A_0,
  \qquad
  \bigplus_{i=0}^{n} A_i
  :=
  \Big(\bigplus_{i=0}^{n-1} A_i\Big) \plus A_n
  \quad (n \ge 1).
\]
For example, $\bigplus_{i=0}^{2} A_i = (A_0 \plus A_1) \plus A_2$.
Left-association is a fixed canonical choice.  Any other association
is related by the explicit structural associator $\alpha^\plus$; the
metatheoretic results transfer along that isomorphism, and the
corresponding semantic sector maps agree under its certified transport.
In particular,
\[
  \Qn{n} := \bigplus_{i=0}^{n-1} \base.
\]

\subsection{Raw \texorpdfstring{$n$}{n}-ary Case}

Let $t : \bigplus_{i=0}^{n-1} A_i$ and suppose
$\Gamma, x_i : A_i \vdash_{\mathsf r} u_i : B_i$ for each $i < n$
(the same $\Gamma$ in every branch), with all $A_i$ and $B_i$
first-order (\S\ref{par:first-order-types}).  This hypothesis makes
each generated map target internally typable when the Raw abbreviation
is used in a Source expansion.  We define the Raw $n$-ary case
producing a result of type $\bigplus_{i=0}^{n-1} B_i$ by recursion on
$n$.

\begin{lemma}[Routed binary case]
\label{lem:routed-binary-case}
Suppose
\[
 \Gamma_0\vdash_{\mathsf r}t:A\plus B,\qquad
 \Delta,x{:}A\vdash_{\mathsf r}u:C,\qquad
 \Delta,y{:}B\vdash_{\mathsf r}v:D,
\]
where $\Gamma_0$ and $\Delta$ are disjoint and $A,B,C,D$ are
first-order.  There is an admissible derived term
\[
 \Gamma_0,\Delta\vdash_{\mathsf r}
 \mathsf{RCase}(t;\,x.u;\,y.v):C\plus D
\]
which consumes the shared context $\Delta$ once.
\end{lemma}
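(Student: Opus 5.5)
We construct $\mathsf{RCase}$ exactly as the Source case clause of \S\ref{cohRouting} is expanded, with two changes: the branch results are $C$ and $D$ rather than a common $C$, and the routed tag is not re-attached on output. Fix an ordering of $\Delta$, and let $G_\Delta$ and $\langle\Delta\rangle$ be the corresponding left-associated tensor type and tensor of variables. For nonempty $\Delta$ put
\[
 \hat f:=\lam{p}{\letpair{z}{x}{p}{u[z/\Delta]}}:G_\Delta\tensor A\lmark C,
 \qquad
 \hat g:=\lam{q}{\letpair{z}{y}{q}{v[z/\Delta]}}:G_\Delta\tensor B\lmark D,
\]
where $u[z/\Delta]$ unpacks $z$ by nested tensor-lets and substitutes its components for the variables of $\Delta$. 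Then define
\[
 \mathsf{RCase}(t;\,x.u;\,y.v):=(\hat f\plus\hat g)\bigl(\distL(\langle\Delta\rangle\tensor t)\bigr).
\]
When $\Delta=\varnothing$, take $\hat f=\lam{x}{u}$ and $\hat g=\lam{y}{v}$ and omit the distributor, giving $(\hat f\plus\hat g)\,t$.

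\textbf{Typing.} Build the derivation bottom-up.
\begin{enumerate}
\item The variables of $\Delta$, combined by \textsc{$\tensor$-I}, give $\Delta\vdash_{\mathsf r}\langle\Delta\rangle:G_\Delta$.
\item A further \textsc{$\tensor$-I} with the premise for $t$ gives $\Delta,\Gamma_0\vdash_{\mathsf r}\langle\Delta\rangle\tensor t:G_\Delta\tensor(A\plus B)$. The split is legal because $\Gamma_0$ and $\Delta$ are disjoint.
\item The primitive atom $\distL$ at type $G_\Delta\tensor(A\plus B)\lmark(G_\Delta\tensor A)\plus(G_\Delta\tensor B)$, applied by \textsc{$\lmark$-E}, yields the distributed sum in the same context.
\item Each of $\hat f$ and $\hat g$ is typed in the empty context. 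From $\Delta,x{:}A\vdash_{\mathsf r}u:C$, iterated \textsc{$\tensor$-E} on a fresh $z:G_\Delta$ gives $z{:}G_\Delta,x{:}A\vdash_{\mathsf r}u[z/\Delta]:C$. One more \textsc{$\tensor$-E} on $p$ followed by \textsc{$\lmark$-I} closes $\hat f$; $\hat g$ is symmetric.
\item Raw \textsc{$\plus$-Map} on the two closed maps gives $\cdot\vdash_{\mathsf r}\hat f\plus\hat g:(G_\Delta\tensor A)\plus(G_\Delta\tensor B)\lmark C\plus D$.
\item A final \textsc{$\lmark$-E} gives $\Gamma_0,\Delta\vdash_{\mathsf r}\mathsf{RCase}(t;\,x.u;\,y.v):C\plus D$.
\end{enumerate}
The empty-$\Delta$ case is the same derivation without steps 1--3.

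\textbf{Consuming $\Delta$ once.} Linearity is checked by inspecting context splits. $\Delta$ occurs only in $\langle\Delta\rangle$, which is passed once through the distributor. Both branch maps are closed, so the $\plus$-Map split places $\Delta$ in neither branch. This same-context-once shape is what the Raw calculus needs: the naive term with $\Delta$ in both branches would violate the linear split. The map targets $C$ and $D$ are first-order, so the construction is also internally typable when it occurs inside a Source expansion, in the sense of Lemma~\ref{lem:source-internal-inclusion}.

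\textbf{Expected obstacle: the pack/unpack step.} The one nonroutine point is the admissibility of $u[z/\Delta]$. I would prove it by induction on the length of $\Delta$, using \textsc{$\tensor$-E} and Raw substitution (Raw counterpart of Lemma~\ref{lem:internal-substitution}). Since each $\Delta$-variable is bound by exactly one let, linearity is maintained at every step, and the fixed left-associated order makes the construction deterministic.
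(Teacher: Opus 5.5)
Your construction is the paper's own: closed branch maps $\widehat f,\widehat g$ that unpack one tensor package of $\Delta$, combined by Raw $\plus$-Map and applied to $\distL(\gamma\tensor t)$, with the bare form $((\lam{x}{u})\plus(\lam{y}{v}))\,t$ when $\Delta$ is empty. Your typing and linearity checks also match the paper's. One caution: the unpacking step needs only variable-for-variable renaming, or none at all if the lets bind the $\Delta$ names directly, and that is harmless because variables are formation values. You should not cite a general Raw counterpart of Lemma~\ref{lem:internal-substitution}, because the paper provides none: substituting a non-value into a branch of a Raw $[\,W_1\mid W_2\,]$ can break the formation-value restriction.
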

\begin{proof}
If $\Delta=\emptyset$, take
\[
 \mathsf{RCase}(t;\,x.u;\,y.v)
 :=\bigl((\lam{x}{u})\plus(\lam{y}{v})\bigr)\,t .
\]
Both branch functions are closed.  For nonempty
$\Delta=z_1{:}T_1,\ldots,z_k{:}T_k$, let $G$ be its fixed tensor
package (with $G=T_1$ when $k=1$), let
$\Delta\vdash_{\mathsf r}\gamma:G$ package its variables once, and let
$w[z/\Delta]$ denote the corresponding sequence of tensor lets that
unpacks $z:G$ in~$w$; for $k=1$ the package is the variable itself
and unpacking is capture-avoiding renaming.  Define the closed
branch functions
\[
\begin{split}
 \widehat f&=\lam{p}{\letpair{z}{x}{p}{u[z/\Delta]}}
      :G\tensor A\lmark C,\\
 \widehat g&=\lam{q}{\letpair{z}{y}{q}{v[z/\Delta]}}
      :G\tensor B\lmark D.
\end{split}
\]
Then
\[
 \mathsf{RCase}(t;\,x.u;\,y.v)
 :=(\widehat f\plus\widehat g)
      \bigl(\distL\,(\gamma\tensor t)\bigr).
\]
The term $\gamma\tensor t$ uses $\Delta$ and $\Gamma_0$ disjointly;
$\distL$ routes the one $G$ package into the two orthogonal summands;
and the closed Raw $\plus$-Map uses coefficient~$1$ in each branch.
Thus no free resource is copied, and the displayed conclusion follows
from the Raw tensor, application, distributivity, and $\plus$-Map
rules.  Since $C,D$ are first-order, the same expansion is internally
typable.
\end{proof}

\begin{definition}[$n$-ary case]
\label{def:case-nary}
\[
\begin{array}{rcl}
\Case_1(t;\, x_0.u_0)
&:=& (\lam{x_0}{u_0})\,t
\\[1ex]
\Case_{n+1}(t;\, x_0.u_0, \ldots, x_n.u_n)
&:=&
\mathsf{RCase}\bigl(
 t;\,x.\Case_n(x;\,x_0.u_0,\ldots,x_{n-1}.u_{n-1});
 \,y.u_n[y/x_n]\bigr)
\end{array}
\]
where in the recursive clause, $t$ is viewed at type
$(\bigplus_{i=0}^{n-1} A_i) \plus A_n$, and
$\mathsf{RCase}$ is the routed shared-context form of
Lemma~\ref{lem:routed-binary-case}.
\end{definition}

\noindent
The base case $\Case_1$ is ordinary application: when there is only
one summand, no branching is needed.  Its internal beta-contractum is
$u_0[t/x_0]$, justified by unrestricted internal substitution; the
Raw abbreviation itself does not assume that substitution preserves
the formation-value restriction on Raw sum introduction.  We write
\[
  \caseofn(t;\; x_0.u_0,\; \ldots,\; x_{n-1}.u_{n-1})
\]
as notation for $\Case_n(t;\, x_0.u_0, \ldots, x_{n-1}.u_{n-1})$.

\paragraph{Admissibility.}
Let \(n\geq1\), let every \(A_i\) and \(B_i\) be first-order, and
suppose
\[
  \Gamma_1 \vdash_{\mathsf r} t : \bigplus_{i=0}^{n-1} A_i,
  \qquad
  \Gamma_2, x_i : A_i \vdash_{\mathsf r} u_i : B_i
  \quad (0\leq i<n),
\]
where \(\Gamma_1\) and \(\Gamma_2\) are disjoint; the shared context
\(\Gamma_2\) may be empty.  Then
\[
  \Gamma_1,\Gamma_2 \vdash_{\mathsf r}
  \caseofn(t;\; x_0.u_0,\; \ldots,\; x_{n-1}.u_{n-1})
  : \bigplus_{i=0}^{n-1} B_i .
\]
Induct on \(n\).  In the base case, apply introduction and
elimination for \(\lmark\); its beta-contractum is handled only in
the internal calculus.  At the \(n+1\) step, the
induction hypothesis gives
\[
 \Gamma_2,x:\bigplus_{i<n}A_i
 \vdash_{\mathsf r}
 \Case_n(x;\,x_0.u_0,\ldots,x_{n-1}.u_{n-1})
 :\bigplus_{i<n}B_i,
\]
while renaming gives
\(\Gamma_2,y:A_n\vdash_{\mathsf r}u_n[y/x_n]:B_n\).
Lemma~\ref{lem:routed-binary-case} applies to exactly these two branch
derivations and the exposed outer sum.  Its target types
\(\bigplus_{i<n}B_i\) and \(B_n\) are first-order.  The fixed
left-associated source and target types already have the required
shape, so no implicit reassociation is used; any alternate
association would require an explicit \(\alpha^\plus\) term.

\subsection{\texorpdfstring{$n$}{n}-ary Factoring}
\label{app:nary-factoring}

The binary inverse distributivity
$\distRi : (A \tensor C) \plus (B \tensor C) \lmark (A \plus B) \tensor C$
generalizes to an $n$-ary factoring isomorphism.

\begin{definition}[$n$-ary factoring]
\label{def:factor-nary}
The structural isomorphism
\[
  \mathsf{factor}_n
  : \bigplus_{i=0}^{n-1} (A_i \tensor C)
  \;\lmark\;
  \Big(\bigplus_{i=0}^{n-1} A_i\Big) \tensor C
\]
is defined by induction:
$\mathsf{factor}_1 := \mathrm{id}_{A_0 \tensor C}$
and
$\mathsf{factor}_{n+1} := \distRi \circ (\mathsf{factor}_n \plus \mathrm{id}_{A_n \tensor C})$,
for first-order $A_i$ and $C$.  The first-order hypothesis makes every
generated map target first-order, so the Raw expansion is internally
typable.
\end{definition}

\noindent
Since each $\distRi$ is structural, so is $\mathsf{factor}_n$.
The reference proof treats it as a typed logical-frame change.  In
the normalized prototype every factoring distributor itself is
gate-free, including instances with unequal-width constituents;
subsequent composition uses splice alignment
(Appendix~\ref{app:sum-encoding}).

\subsection{\texorpdfstring{$n$}{n}-ary Tagged Case}
\label{app:tagged-case}

When all branches return the same type~$C$, the general $n$-ary case can
be composed with tag-pairing and factoring to yield a tag-preserving form.

\begin{definition}[Tagged case]
\label{def:tagged-case}
Given $\Delta \sjudge t : \bigplus_{i=0}^{n-1} A_i$ and branches
$\Gamma \sjudge u_i : C$ for each $i < n$, with $\Delta$ and
$\Gamma$ disjoint (same $\Gamma$,
same~$C$; all $A_i$ and $C$ first-order; the binder $x_i$ is a
tag/payload binder, paired with the
branch result, and is \emph{not} a resource of $u_i$ ---
cf.\ the routing discussion of the coherent case,
\S\ref{cohRouting}), the derived Source judgment is
\[
 \Delta,\Gamma\sjudge
 \mathsf{TagCase}_n(t;u_0,\ldots,u_{n-1})
 :\Bigl(\bigplus_{i=0}^{n-1}A_i\Bigr)\tensor C .
\]
Its Raw expansion is
\[
\begin{aligned}
\bigl(\mathsf{TagCase}_n(t;\,u_0,\ldots,u_{n-1})\bigr)^\circ
  &:={}
  \mathsf{factor}_n\!\Bigl(\\[-.4ex]
  &\quad \Case_n(t^\circ;\,x_0.(x_0\tensor u_0^\circ),\,\ldots,\\[-.4ex]
  &\qquad x_{n-1}.(x_{n-1}\tensor u_{n-1}^\circ))\Bigr).
\end{aligned}
\]
\end{definition}

\noindent
The inner $\Case_n$ pairs each branch result with its tag variable,
producing $\bigplus_i (A_i \tensor C)$; factoring assembles the sum
into $(\bigplus_i A_i) \tensor C$.  This is the $n$-ary generalization
of the binary tag-preserving case (\S\ref{cohRouting}).  The Raw
typing result above proves the displayed Source rule admissible.  Its
expansion has closed branch maps, no coherent-sum former, and only
first-order inverse distributors, so it satisfies the Source
discipline of Lemma~\ref{lem:source-expansion-discipline}.

\subsection{General Raw Case (Split Contexts)}
\label{app:general-case}

The Source case requires shared context and a uniform result type.
The Raw calculus also admits the split-context form
\[
  \mathsf{SplitCase}(t;\,x.u;\,y.v)
  \;\;\triangleq\;\;
  \bigl((\lam{x}{u}) \plus (\lam{y}{v})\bigr)\; t
  \;:\; C \plus D.
\]
Here $\Gamma_1 \vdash_{\mathsf r} \lam{x}{u} : A \lmark C$ and
$\Gamma_2 \vdash_{\mathsf r} \lam{y}{v} : B \lmark D$
with $\Gamma_1$ and $\Gamma_2$ disjoint.  Raw typing imposes no
restriction on $A,B,C,D$; for this abbreviation to inhabit the
internal judgment, $C$ and $D$ must be first-order.

\paragraph{Example: guarded pipeline.}
Given an error type~$X$, a resource type~$R$, and a success
type~$S$, consider a sum whose left branch already carries an error
payload and whose right branch supplies a continuation together
with its argument:
\[
  \mathsf{pipelineK}
  \;:=\;
  \lam{q}{
    \mathsf{SplitCase}\bigl(q;\,e.e;\,
      w.\letpair{k}{r}{w}{k\,r}\bigr)
  }
  \;:\;
  X \plus ((R \lmark S) \tensor R) \lmark X \plus S.
\]
The continuation $k : R \lmark S$ appears only in the right branch,
where the right summand supplies both $k$ and its argument~$r$.  The
left branch returns the error payload~$e$ directly.  The result type
is the sum $X \plus S$: the surrounding $\plus$-Map supplies the
summand structure, so no term-level injections are introduced.
Taking $X$ and $S$ first-order makes this Raw term internally typable;
$R$ may be arbitrary.  It is not a Source term because its input sum
contains
$(R\lmark S)\tensor R$.

\subsection{Derivation from Core Primitives}
\label{app:case-derivation}

The Raw forms $\mathsf{RCase}$, $\Case_n$, and
$\mathsf{factor}_n$ give the fixed expansion of the Source
tag-preserving form $\mathsf{TagCase}_n$.  Their Raw typing uses the
six rules of Table~\ref{tab:typing-rules-linear}.  At the first-order
instances used by $\mathsf{TagCase}_n$, the same expansion is
internally typable: every generated map has first-order targets.
The empty-shared-context clause of
Lemma~\ref{lem:routed-binary-case} avoids assuming a tensor unit.
The $n$-ary forms add nothing further: $\Case_n$
(Definition~\ref{def:case-nary}) iterates
Lemma~\ref{lem:routed-binary-case} along the fixed left-associated
expansion, and $\mathsf{TagCase}_n$
(Definition~\ref{def:tagged-case}) composes $\Case_n$ with the
$\distRi$-composite $\mathsf{factor}_n$
(Definition~\ref{def:factor-nary}).  The admissibility statements
and proofs are those given above; no additional primitive is used.

%% file: elaboration-proofs-appendix.tex
\section{Finite-Datatype Elaboration and Its Properties}
\label{app:elaboration-proofs}

This appendix gives the surface typing rules and the canonical
elaboration of the finite control datatypes of \S\ref{sec:datatypes},
and proves that the canonical translation of
Definition~\ref{def:canonical-elaboration} is a typed function into
the Source language.

\subsection{Surface syntax and typing}
\label{app:surface-syntax-typing}

Resolved surface staging identifiers are typed from the elaboration
environment $\Sigma$:
\begin{mathpar}
\inferrule*[right=\textsc{S-Stage}]
  {f:S\in\Sigma\quad f\text{ fully resolved}}
  {\emptyset\vdash_{\mathsf{surf}}f:S}
\end{mathpar}

Surface first-order data types and Source types are generated by
\[
 P ::= \base \mid \NAME \mid P\tensor P \mid P\plus P,
 \qquad
 S ::= P \mid S\tensor S \mid S\lmark S .
\]
Thus every surface sum belongs to a first-order data type, and the
translations of the $P$-types are exactly the first-order core types
(Lemma~\ref{lem:elab-first-order}).
A datatype $\NAME$ with $n$ labels elaborates to the canonical
type $\Qn{n} = \bigplus_{i=0}^{n-1} \base$, the canonical $n$-ary sum serving
as the qdit generalization of $\QBool = \Qn{2}$.
The translation $\ElabTy{-}$ is otherwise homomorphic on core type
constructors ($\base$, $\tensor$, $\plus$, $\lmark$).

The key construct is coherent case analysis, the $n$-ary
generalization of binary tag-preserving case (\S\ref{cohRouting}).
It preserves the selected label while coherently routing a shared
linear context.  Every branch uses the same context $\Gamma_2$ and
returns the same first-order type $C$; the compiled branch bodies
therefore act on the same live interface.  The result type is
$\NAME\tensor C$.

\begin{mathpar}
\inferrule*[right=\textsc{S-Case}]
{
  \Gamma_1 \vdash_{\mathsf{surf}} e : \NAME
  \quad \quad
  \Gamma_2 \vdash_{\mathsf{surf}} u_i : C
  \;\; (i < n)
  \quad  \quad
  \Gamma = \Gamma_1 \uplus \Gamma_2
  \quad \quad
  C \text{ first-order}
}{
  \Gamma \vdash_{\mathsf{surf}}
  \mathbf{case}\ e\ \mathbf{of}\ \{l_i \mapsto u_i\}_{i<n}
  : \NAME \tensor C
}
\end{mathpar}

\noindent
Coherent case desugars into core constructs via
distributivity and factoring, generalizing the binary $\QBool$ pattern;
the $\Qn{n}$-tagged scrutinee is a qdit generalizing the qubit case,
and each branch is guarded by the corresponding tag value
(\S\ref{sec:impl-case}).

\subsection{Well-formed environments}
\label{app:wf-environments}

\begin{definition}[Well-formed \(\Sigma\)]
\label{def:wf-sigma}
An elaboration environment \(\Sigma\) is \emph{well formed} with
respect to \(\mathcal D\) when:
\begin{enumerate}
\item if \(\mathcal D(\NAME)=(l_0,\ldots,l_{n-1})\), then
  \(\Sigma(\NAME)=n\);
\item for every fully resolved surface staging binding \(f:S\),
  \[
    \Sigma(f)=t_f,\qquad
    \emptyset\sjudge t_f:\ElabTy S,
  \]
  where \(t_f\) is a closed Source term.
\end{enumerate}
\end{definition}

Quantum exponentials in an implementation \(t_f\) have the prescribed
unitaries determined by their certified involutions.  Circuit
realization additionally assumes backend correctness \textup{(BC)}
for emitted primitives.  The typing arguments below use only the
surface and core typing rules and the declared certificates.

In the PPX interface, \texttt{[@@source.datatype]} creates the datatype
entry and its type witnesses.  The generated \texttt{permute} and
\texttt{select} combinators construct sealed closed Source terms,
which supply entries of $\Sigma$.  A permutation list must be a
bijection on the declared labels.  Given closed operations
$\emptyset\sjudge f_i:A\lmark A$ for $0\leq i<n$, with $A$
first-order, the staged call constructs
\[
 \mathsf{select}_{\NAME,A}[f_0,\ldots,f_{n-1}]:
 \NAME\tensor A\lmark\NAME\tensor A .
\]
The bracketed operations are fixed staging arguments, not linear
inputs of the resulting Source program.

\subsection{Canonical elaboration}
\label{app:canonical-elaboration}

Fix an elaboration environment $\Sigma$, well formed with respect to
$\mathcal D$, that assigns each resolved surface staging identifier a
closed Source term (Definition~\ref{def:wf-sigma}).
Elaboration operates on the typechecked surface syntax, whose case
nodes record the resolved datatype and ordered shared context.
Fix the declaration order of labels, the left-associated expansions
of Appendix~\ref{app:nary-plus}, a left-to-right order and bracketing
for every nonempty linear context, and a deterministic
capture-avoiding fresh-name convention.

\begin{definition}[Canonical elaboration]
\label{def:canonical-elaboration}
The recursive function $\mathsf{Elab}_\Sigma$ is homomorphic on
inherited Source constructors and certified-involution
syntax.  Its only additional term clauses are
\[
\begin{aligned}
 \mathsf{Elab}_\Sigma(f)
   &=\Sigma(f),\\
 \mathsf{Elab}_\Sigma
 \bigl(\mathbf{case}\ e\ \mathbf{of}\ \{l_i\mapsto u_i\}_{i<n}\bigr)
   &=
 \mathsf{TagCase}_n\bigl(
   \mathsf{Elab}_\Sigma(e);
   \mathsf{Elab}_\Sigma(u_0),\ldots,
   \mathsf{Elab}_\Sigma(u_{n-1})\bigr).
\end{aligned}
\]
The first clause is direct substitution; no staging identifier survives
in the resulting Source term.  Here $\mathsf{TagCase}_n$ is the fixed
Source construct of Appendix~\ref{app:nary-plus}, where its Raw
expansion is defined.
A datatype case has exactly one clause for each declared label and no
extra clause; $\mathsf{Elab}_\Sigma$ puts them in declaration order
before applying the case equation above.  The displayed elaboration judgment is the graph
of this function:
\[
 \Sigma;\ElabTy{\Gamma}\vdash e\Rightarrow t:\ElabTy S
 \quad\Longleftrightarrow\quad
 t=\mathsf{Elab}_\Sigma(e)
\]
for a well-typed term $\Gamma\vdash_{\mathsf{surf}}e:S$ in this
elaboration domain.
\end{definition}

For printed syntax, equality is read up to the fixed
capture-avoiding names.

Coherent case elaborates to the $n$-ary tagged case form
$\mathsf{TagCase}_n$ (Appendix~\ref{app:nary-plus}), whose Raw
expansion wraps the general $n$-ary case with tag-pairing and factoring;
the typed clause is the case clause of
Definition~\ref{def:canonical-elaboration}, at result type
$\Qn n\tensor\ElabTy C$ with $\ElabTy C$ first-order.
A resolved surface staging node elaborates by direct substitution of
the closed Source term $\Sigma(f)$.

Throughout the rest of this appendix, $\Gamma\vdash_{\mathsf{surf}}e:S$
ranges over the elaboration domain fixed in
Definition~\ref{def:canonical-elaboration}; any derived notation has
already been expanded.

\subsection{Auxiliary lemmas}
\label{app:elab-aux-lemmas}

\begin{lemma}[First-order translation]
\label{lem:elab-first-order}
For every surface type \(S\), \(S\) is first-order if and only if
\(\ElabTy S\) is a first-order core type.
\end{lemma}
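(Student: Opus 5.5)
We argue by structural induction on the surface type $S$, using the surface grammar
$P ::= \base \mid \NAME \mid P\tensor P \mid P\plus P$ and
$S ::= P \mid S\tensor S \mid S\lmark S$.
We read ``$S$ is first-order'' as ``$S$ is generated by the $P$-grammar'', equivalently ``$S$ contains no occurrence of $\lmark$''. Dually, a core type is first-order exactly when it is generated by the core $P$-grammar of \S\ref{par:first-order-types}, equivalently when it contains no $\lmark$. The translation $\ElabTy{-}$ is homomorphic on $\base,\tensor,\plus,\lmark$ and sends $\NAME$ to $\Qn n$. It therefore suffices to prove the sharper claim that $S$ contains $\lmark$ if and only if $\ElabTy S$ contains $\lmark$, and then to observe that $\lmark$-freeness coincides with membership in the respective $P$-grammars.

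The base cases are immediate. For $\base$, the translation $\ElabTy\base=\base$ is first-order on both sides. For a datatype name, $\ElabTy\NAME=\Qn n=\bigplus_{i<n}\base$, using the fixed left-associated expansion of Appendix~\ref{app:nary-plus}. This is first-order because the well-formedness condition of Definition~\ref{def:wf-sigma} gives $n\geq1$, and a finite left-associated sum of copies of $\base$ is generated by the core $P$-grammar. The induction on $n$ is trivial: $\Qn1=\base$, and $\Qn{n+1}=\Qn n\plus\base$.

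For the inductive cases, note first that $\ElabTy{S_1\tensor S_2}=\ElabTy{S_1}\tensor\ElabTy{S_2}$, and likewise for $\plus$. Each side is $\lmark$-free exactly when both components are, so the induction hypothesis closes these cases. For $S_1\lmark S_2$, both sides contain $\lmark$, so neither side is first-order and the equivalence holds vacuously.

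There is a subtlety in the $\plus$ case. The surface grammar only admits $\plus$ between $P$-types, so $S_1\plus S_2$ always has first-order components. On the core side, $\ElabTy{S_1}\plus\ElabTy{S_2}$ is first-order exactly when its components are. The induction hypothesis covers both directions here as well.

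The only step requiring care is the identification of ``first-order'' with ``generated by the $P$-grammar'' on each side. On the surface side, a type generated by the $S$-grammar with no $\lmark$ is generated by the $P$-grammar; this follows by a trivial induction, since $S\tensor S$ without $\lmark$ has both factors $\lmark$-free and the $P$-grammar is closed under $\tensor$. The core side is identical.
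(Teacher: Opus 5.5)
Your proof is correct and follows the same approach as the paper: structural induction with homomorphic $\base$, $\tensor$, $\plus$ cases, a datatype case settled by $\Qn n$ being a left-associated sum of copies of $\base$ (for $n\geq1$), and the converse from the fact that $\ElabTy{-}$ preserves every occurrence of $\lmark$. One small citation fix: $n\geq1$ comes from the datatype environment $\mathcal D$ assigning each name a nonempty label list (\S\ref{sec:datatypes}), not from Definition~\ref{def:wf-sigma}, which only records $\Sigma(\NAME)=n$.
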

\begin{proof}
Structural induction on \(S\).  The cases \(\base\), \(\tensor\), and
\(\plus\) are homomorphic.  A datatype name with \(n\geq1\) translates
to the left-associated sum \(\Qn n\) of copies of \(\base\), hence is
first-order.  Conversely, \(\ElabTy{-}\) preserves every occurrence
of \(\lmark\), so a non-first-order surface type cannot translate to a
first-order core type.
\end{proof}

\begin{lemma}[Linear support and canonical annotations]
\label{lem:elab-linear-support}
In a surface typing derivation, the domain of each premise context is
exactly the linear free-variable support of its premise term.  Hence
the context split at a multiplicative rule is determined by the
subterms, up to exchange.  With the fixed context order of
Definition~\ref{def:canonical-elaboration}, the split is unique.
\end{lemma}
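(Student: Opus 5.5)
The plan is an induction on the surface typing derivation $\Gamma\vdash_{\mathsf{surf}}e:S$. I first prove the support claim: the domain of the conclusion context equals the linear free-variable support of $e$, where $\mathrm{fv}$ of a case node is the union of the support of its scrutinee and the common support of its branches. The uniqueness of splits then follows from this invariant.

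The support invariant goes rule by rule.
\begin{itemize}
\item \textsc{Var}: the context is exactly $x:A$, so there is nothing to show.
\item \textsc{S-Stage} and \textsc{Exp}: the context is empty, and a staging identifier, a certified exponential, or a structural atom has no free linear variables.
\item \textsc{$\lmark$-I}: the induction hypothesis gives $\mathrm{dom}(\Gamma,x{:}A)=\mathrm{fv}(t)$, so removing the binder gives $\mathrm{dom}(\Gamma)=\mathrm{fv}(\lam{x}{t})$.
\item \textsc{$\tensor$-E}: the same argument applies, removing the two binders $x,y$ from the body's support and taking the disjoint union with the scrutinee's support.
\item \textsc{$\lmark$-E} and \textsc{$\tensor$-I}: the premise contexts are disjoint and each equals its subterm's support, so the conclusion context is the disjoint union of the supports.
\item \textsc{S-Case}: $\Gamma_1=\mathrm{fv}(e)$, and every branch premise carries the same $\Gamma_2$, which equals $\mathrm{fv}(u_i)$ for each $i$. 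The tag binders are not resources of the $u_i$, so they do not enter this support calculation.
\end{itemize}
In every case linearity, meaning that disjointness of the split is enforced by $\uplus$, guarantees that no variable is counted twice.

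Determination of the split follows directly. At a multiplicative rule with premises $t_1,t_2$, the induction hypothesis forces $\mathrm{dom}(\Gamma_j)=\mathrm{fv}(t_j)$ (minus binders), and the types are inherited from $\Gamma$. So the only freedom left is the order in which each restricted multiset is listed, which is exchange. For \textsc{S-Case}, the split $\Gamma_1\uplus\Gamma_2$ is likewise fixed as $(\mathrm{fv}(e),\mathrm{fv}(u_0))$. The branch contexts are not split at all: they are the same $\Gamma_2$, whose domain is checked to coincide across branches.

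For uniqueness, I impose the fixed left-to-right context order of Definition~\ref{def:canonical-elaboration}. Each premise context is then the ordered restriction of $\Gamma$ to its support, so the split is a function of the subterms. This is the same argument as in Lemma~\ref{lem:canonical-normal-retyping}, now carried out at the surface level.

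I expect the main obstacle to be the binder bookkeeping at \textsc{S-Case}. The tag binders must be excluded from the branch supports, and the annotation recorded by the typechecked case node, namely the resolved datatype together with the ordered shared context, must agree with $\mathrm{fv}(u_i)$. I would treat the recorded shared context as the canonical annotation and verify, within the same induction, that it equals the ordered restriction described above. Alpha-renaming needed for capture avoidance is absorbed by the fixed fresh-name convention.
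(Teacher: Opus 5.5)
Your proposal is correct and follows essentially the same route as the paper: induction on the surface typing derivation shows that each context domain is exactly the linear free-variable support. That support is a singleton at variables, empty at atoms and staging nodes, a disjoint union at multiplicative rules, and the scrutinee's support together with the one shared branch support at a coherent case; fixing the context order then removes the remaining exchange choice.

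Your write-up spells out the rule-by-rule bookkeeping that the paper compresses into a few lines. One small correction: the datatype \textsc{S-Case} clauses $l_i\mapsto u_i$ have labels, not binders, so your remark about tag binders applies only to the inherited binary case.
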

\begin{proof}
Induction on the surface typing derivation.  Variables contribute
their singleton support; closed atoms and resolved surface staging
nodes contribute none.  Multiplicative rules take disjoint unions.
A coherent case has the disjoint support of its scrutinee together
with the one shared branch support; linearity requires every branch
to have exactly that same support.  The remaining inherited rules
preserve support homomorphically.  Fixing context order removes the
residual exchange choice.
\end{proof}

\subsection{Typing preservation and functionality}
\label{app:elab-preservation}

\begin{theorem}[Elaboration soundness]
\label{thm:elab-sound-full}
Assume \(\Sigma\) is well formed.
\begin{enumerate}
\item If \(\Gamma\vdash_{\mathsf{surf}}e:S\), then
  \[
    \Sigma;\ElabTy\Gamma\vdash
      e\Rightarrow\mathsf{Elab}_\Sigma(e):\ElabTy S
    \qquad\text{and}\qquad
    \ElabTy\Gamma\sjudge
      \mathsf{Elab}_\Sigma(e):\ElabTy S.
  \]
\item The homomorphic translation of an inherited surface
  certified-involution derivation is a core
  certified-involution derivation at the translated type.
\end{enumerate}
\end{theorem}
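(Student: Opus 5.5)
We argue by induction on the surface typing derivation $\Gamma\vdash_{\mathsf{surf}}e:S$, proving both conclusions of part~(1) at once. The graph judgment $\Sigma;\ElabTy\Gamma\vdash e\Rightarrow\mathsf{Elab}_\Sigma(e):\ElabTy S$ holds by Definition~\ref{def:canonical-elaboration} as soon as $e$ lies in the elaboration domain and $\mathsf{Elab}_\Sigma(e)$ is defined. So the real content is the Source typing $\ElabTy\Gamma\sjudge\mathsf{Elab}_\Sigma(e):\ElabTy S$. For part~(2), a separate induction on the certified-involution derivation works, because $\ElabTy{-}$ commutes with $\tensor$ and $\plus$. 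Each rule of Table~\ref{tab:typing-rules-involutions} carries over verbatim at the translated types. Lemma~\ref{lem:elab-first-order} supplies the side condition that every type index is first-order. A datatype $\NAME$ becomes $\Qn n$, so \textsc{Inv-$\sigma^\plus$} and \textsc{Inv-$\plus$} instances still match the left-associated binary shape.

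For part~(1), the inherited constructors (variables, $\lmark$-I/E, $\tensor$-I/E, declared structural atoms, \textsc{Exp}) are handled homomorphically.
\begin{itemize}
\item Lemma~\ref{lem:elab-linear-support} fixes the context splits uniquely, so the translated premises, obtained from the induction hypothesis, recombine by the same Source rule at $\ElabTy{\Gamma_1}\uplus\ElabTy{\Gamma_2}$.
\item For structural atoms, the Source restriction to Source-typed endpoints is preserved, since $\ElabTy{-}$ maps surface Source types to Source types. This also uses Lemma~\ref{lem:elab-first-order}: the grammar $P$ translates into the core grammar $P$.
\item For \textsc{Exp}, part~(2) supplies the translated certificate.
\item The \textsc{S-Stage} case is immediate from Definition~\ref{def:wf-sigma}(2): $\Sigma(f)=t_f$ with $\emptyset\sjudge t_f:\ElabTy S$. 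The empty context agrees with the surface conclusion.
\end{itemize}

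The main case is \textsc{S-Case}. The induction hypothesis gives $\ElabTy{\Gamma_1}\sjudge\mathsf{Elab}_\Sigma(e):\Qn n$ and $\ElabTy{\Gamma_2}\sjudge\mathsf{Elab}_\Sigma(u_i):\ElabTy C$ for each $i<n$. All branches have the same translated context, since Lemma~\ref{lem:elab-linear-support} forces identical support and the fixed context order makes the translated contexts literally equal. By Lemma~\ref{lem:elab-first-order}, $\ElabTy C$ is first-order, and each summand $\base$ of $\Qn n$ is first-order. These are exactly the hypotheses of the derived Source rule for $\mathsf{TagCase}_n$ in Definition~\ref{def:tagged-case}, with $A_i=\base$. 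That rule gives
\[
\ElabTy{\Gamma_1},\ElabTy{\Gamma_2}\sjudge\mathsf{TagCase}_n(\ldots):\Qn n\tensor\ElabTy C=\ElabTy{\NAME\tensor C}.
\]

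I expect the main obstacle to be the rule itself: it must be a genuine Source judgment and not merely admissible in Raw. I would justify it from the admissibility argument in Appendix~\ref{app:nary-plus}. Iterating Lemma~\ref{lem:routed-binary-case} along the left-associated expansion types $\Case_n$, and composing with $\mathsf{factor}_n$ types the Raw expansion. Finally I would observe that, as the paper states after Definition~\ref{def:tagged-case}, this expansion satisfies the Source discipline of Lemma~\ref{lem:source-expansion-discipline}: closed branch maps, no coherent-sum former, and only first-order inverse distributors. Hence $\mathsf{TagCase}_n$ is the designated Source construct, with the same standing as the binary case.

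For $n=1$ the case degenerates to application. I would check separately that the displayed typing still holds there, since $\Qn 1=\base$ and $\mathsf{factor}_1=\mathrm{id}$.

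Uniqueness of labels and the declaration-order normalization of clauses guarantee that the branch list passed to $\mathsf{TagCase}_n$ has exactly $n$ entries in the order matching $\Qn n$.
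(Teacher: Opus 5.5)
Your proposal matches the paper's proof. It inducts on the surface derivation, handles the inherited rules homomorphically via Lemmas~\ref{lem:elab-linear-support} and~\ref{lem:elab-first-order}, discharges the datatype case by the Source rule of Definition~\ref{def:tagged-case}, and settles staging nodes by well-formedness of $\Sigma$; the Source-discipline justification you sketch for that rule appears in the paper only as the remark after the definition, and proving part~(2) separately rather than simultaneously changes nothing. The one case your list of inherited constructors omits is the binary tag-preserving case over surface sums, which the paper treats explicitly; it goes through the same way, since Lemma~\ref{lem:elab-first-order} makes $\ElabTy A,\ElabTy B,\ElabTy C$ first-order and the Source case rule then applies to the translated premises.
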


\begin{proof}
We prove both assertions simultaneously by induction on the
surface typing derivation and the
certified-involution derivation.  The graph judgment in item~1 is
immediate once the recursive clause is fixed; the content is the Source
typing assertion.

\paragraph{Inherited multiplicative rules.}
Variables translate to themselves.  Tensor introduction and
elimination, implication introduction, and application translate
homomorphically.  The induction hypotheses supply the translated
premises, Lemma~\ref{lem:elab-linear-support} supplies the same
disjoint context split, and the corresponding core rule gives the
conclusion.

\paragraph{Atoms and exponentiation.}
Primitive structural and quantum atoms translate their type indices
and retain their declared certificates.
Lemma~\ref{lem:elab-first-order} preserves the required first-order
conditions and Source type indices.  The certified-involution induction
is homomorphic for identity, sign, symmetry, tensor, sum,
and the other displayed certificate constructors.  Thus the
\textsc{Exp} case receives the translated core certificate required
by its rule.

\paragraph{Binary tag-preserving case.}
Let the translated premises be
\[
 \ElabTy{\Gamma_1}\sjudge t:\ElabTy A\plus\ElabTy B,
 \qquad
 \ElabTy{\Gamma_2}\sjudge t_1:\ElabTy C,
 \qquad
 \ElabTy{\Gamma_2}\sjudge t_2:\ElabTy C.
\]
The types \(\ElabTy A,\ElabTy B,\ElabTy C\) are first-order by
Lemma~\ref{lem:elab-first-order}.  The Source case rule therefore
derives
\[
 \ElabTy{\Gamma_1},\ElabTy{\Gamma_2}\sjudge
 \caseof{t}{x}{t_1}{y}{t_2}:
 (\ElabTy A\plus\ElabTy B)\tensor\ElabTy C .
\]

\paragraph{Datatype case.}
For
\[
 \mathbf{case}\ e\ \mathbf{of}\ \{l_i\mapsto u_i\}_{i<n},
\]
the typechecked node fixes \(n\geq1\), its resolved datatype, the
declaration order \(l_0,\ldots,l_{n-1}\), and the possibly empty shared
context.  The induction hypotheses give
\[
 \ElabTy{\Gamma_1}\sjudge
   \mathsf{Elab}_\Sigma(e):\Qn n,
 \qquad
 \ElabTy{\Gamma_2}\sjudge
   \mathsf{Elab}_\Sigma(u_i):\ElabTy C.
\]
Since \(\ElabTy C\) is first-order, the Source rule of
Definition~\ref{def:tagged-case} directly gives
\[
 \ElabTy{\Gamma_1},\ElabTy{\Gamma_2}\sjudge
 \mathsf{TagCase}_n\bigl(
   \mathsf{Elab}_\Sigma(e);
   \mathsf{Elab}_\Sigma(u_0),\ldots,
   \mathsf{Elab}_\Sigma(u_{n-1})\bigr)
 :\Qn n\tensor\ElabTy C .
\]

\paragraph{Resolved surface staging node.}
If \(f:S\) is present, canonical resolution selects its unique
staging entry.  Well-formedness gives
\[
 \mathsf{Elab}_\Sigma(f)=t_f,
 \qquad
 \emptyset\sjudge t_f:\ElabTy S.
\]

The inherited primitive cases, datatype case, and surface staging node
are precisely the constructors in the elaboration domain.  Hence the
preceding cases exhaust the surface typing and certified-involution
constructions.
\end{proof}

\begin{corollary}[Functionality of elaboration]
\label{thm:elab-unique-full}
Assume \(\Sigma\) is well formed.  If
\[
 \Sigma;\ElabTy\Gamma\vdash e\Rightarrow t:\ElabTy S
 \qquad\text{and}\qquad
 \Sigma;\ElabTy\Gamma\vdash e\Rightarrow t':\ElabTy S,
\]
then \(t=t'\) under the fixed syntax convention.  If generated binder
names are not printed canonically, then \(t={}_\alpha t'\).
\end{corollary}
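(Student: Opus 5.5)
The plan is to reduce the claim to the definition of the elaboration judgment. By Definition~\ref{def:canonical-elaboration}, the displayed judgment $\Sigma;\ElabTy\Gamma\vdash e\Rightarrow t:\ElabTy S$ is, by stipulation, the graph of the recursive function $\mathsf{Elab}_\Sigma$. It holds exactly when $t=\mathsf{Elab}_\Sigma(e)$, for a well-typed $\Gamma\vdash_{\mathsf{surf}}e:S$ in the elaboration domain. So from the two hypotheses we would read off $t=\mathsf{Elab}_\Sigma(e)$ and $t'=\mathsf{Elab}_\Sigma(e)$, and equality follows by transitivity.

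For this reading to be legitimate, $\mathsf{Elab}_\Sigma$ must actually be a function, so we would check that every defining clause returns a unique term. The homomorphic clauses are unique because the constructor of $e$ determines the clause. Lemma~\ref{lem:elab-linear-support} fixes each context split uniquely once the context order is fixed, so nothing depends on a choice of typing derivation. The staging clause returns the single entry $\Sigma(f)$. Here well-formedness of $\Sigma$ (Definition~\ref{def:wf-sigma}) and full resolution of identifiers ensure there is one entry, and Theorem~\ref{thm:elab-sound-full} ensures the result is Source-typed at $\ElabTy S$. The datatype-case clause reorders branches into the declaration order and then applies the single $\mathsf{TagCase}_n$ equation. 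Branches are indexed by distinct labels, so this reordering is a bijection and leaves no choice. A structural induction on $e$, or equivalently on its surface derivation, then shows that $\mathsf{Elab}_\Sigma(e)$ is a single well-defined term.

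The main subtlety, rather than an obstacle, is bound names. The $\mathsf{TagCase}_n$ expansion and the case packaging introduce binders. Under the fixed deterministic least-fresh naming convention these binders are determined, so we get literal equality $t=t'$. If generated binders are not printed canonically, two printings of the same function value can differ only by renaming those binders, because every other output symbol is dictated by the clauses. In that case we obtain $t={}_\alpha t'$, which is the second assertion.
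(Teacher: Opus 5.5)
Your proposal is correct and follows the paper's own proof: the elaboration judgment is defined as the graph of $\mathsf{Elab}_\Sigma$, so $t=\mathsf{Elab}_\Sigma(e)=t'$. The $\alpha$-clause is just the stated reading of printed syntax up to the fixed capture-avoiding names. Your extra check that $\mathsf{Elab}_\Sigma$ is single-valued is harmless, but the paper does not carry it out; it takes single-valuedness as built into the recursive definition on typechecked surface syntax.
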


\begin{proof}
Definition~\ref{def:canonical-elaboration} defines the elaboration
judgment as the graph of \(\mathsf{Elab}_\Sigma\), so
\(t=\mathsf{Elab}_\Sigma(e)=t'\).  The second clause is the stated
reading of printed syntax up to the fixed capture-avoiding names.
\end{proof}

The fixed left association therefore selects a single elaboration;
other associations are related by the explicit structural atom
\(\alpha^\plus\).

\begin{corollary}[Transfer to surface programs]
\label{cor:surface-transfer}
Assume \(\Sigma\) is well formed and
\(\Gamma\vdash_{\mathsf{surf}}e:S\).
Theorem~\ref{thm:elab-sound-full} and
Lemma~\ref{lem:source-internal-inclusion} give an internal derivation
\[
 \mathcal D_e:
 \ElabTy\Gamma\intjudge
 \bigl(\mathsf{Elab}_\Sigma(e)\bigr)^\circ:\ElabTy S.
\]
Put
\[
 \mathcal N_e:=
 \mathsf{CanDer}_{\ElabTy\Gamma,\ElabTy S}
 \bigl(
  \mathsf{NF}_{\mathrm{LO}}\bigl(
    (\mathsf{Elab}_\Sigma(e))^\circ
  \bigr)
 \bigr)
\]
and define the derivation-indexed surface meaning by
\[
 \llbracket e\rrbracket_{\mathsf{surf}}
   :=\SEM{\mathcal N_e}_{\mathrm{NF}}.
\]
This definition is single-valued by functionality of elaboration,
functionality of \(\mathsf{NF}_{\mathrm{LO}}\), and canonical
retyping.  Core normalization, determinacy, and boundary unitarity
apply to \(\mathcal D_e\); compiler correctness additionally assumes
\textup{(BC)}.
\end{corollary}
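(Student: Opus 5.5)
The statement is Corollary~\ref{cor:surface-transfer}; the argument is a composition of already established results. The plan has three parts: build the internal derivation, check that the surface meaning is single-valued, and transport each core metatheorem along elaboration.

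\emph{Building $\mathcal D_e$.} Starting from $\Gamma\vdash_{\mathsf{surf}}e:S$ and well-formedness of $\Sigma$, I would first apply Theorem~\ref{thm:elab-sound-full}(1). It yields the Source judgment $\ElabTy\Gamma\sjudge\mathsf{Elab}_\Sigma(e):\ElabTy S$. Lemma~\ref{lem:source-internal-inclusion} then gives the internal judgment for the Raw expansion, and that internal derivation is $\mathcal D_e$.

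\emph{Single-valuedness of the meaning.} The term $\mathsf{Elab}_\Sigma(e)$ is unique by Corollary~\ref{thm:elab-unique-full}, up to the fixed naming convention. Its Raw expansion $(-)^\circ$ is a recursive function fixed by the case-clause convention. Theorem~\ref{thm:normalization} shows that $\mathsf{NF}_{\mathrm{LO}}$ is a total function on internally typed terms. Lemma~\ref{lem:canonical-normal-retyping} makes $\mathsf{CanDer}$ a function on irreducible terms. Hence $\mathcal N_e$ is one canonical derivation, and $\SEM{\mathcal N_e}_{\mathrm{NF}}$ is a single operator.

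\emph{Transferring the core results.} Because $\mathsf{Elab}_\Sigma(e)$ is a Source term, Definition~\ref{def:canonical-form-semantics} gives
\[
\llbracket e\rrbracket_{\mathsf{surf}}=\SEM{\ElabTy\Gamma\sjudge\mathsf{Elab}_\Sigma(e):\ElabTy S}.
\]
Normalization (Theorem~\ref{thm:normalization}) and determinacy (Theorem~\ref{thm:determinacy}) apply to $\mathcal D_e$ because it is internally typed. Boundary unitarity follows from Theorem~\ref{thm:boundary-unitarity}, equivalently Theorem~\ref{thm:nf-boundary-unitarity}, applied to that Source judgment. Compiler correctness is Theorem~\ref{thm:compilation-soundness} applied to the same Source judgment under (BC).

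The only point needing care is the printed-syntax caveat in Corollary~\ref{thm:elab-unique-full}. If two elaborations agree only up to $\alpha$, they are $\approx$-related, and so are their normal forms. Then Lemma~\ref{lem:canonical-normal-retyping}(CR) together with Corollary~\ref{cor:canonical-invariance} shows that the denotations agree up to canonical boundary transport. Under the fixed fresh-name convention the terms are literally equal, so no transport is needed.
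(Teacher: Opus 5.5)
Your proposal is correct and follows the paper's own justification, which is given inline in the statement. Theorem~\ref{thm:elab-sound-full} with Lemma~\ref{lem:source-internal-inclusion} builds $\mathcal D_e$; functionality of $\mathsf{Elab}_\Sigma$, $(-)^\circ$, $\mathsf{NF}_{\mathrm{LO}}$, and $\mathsf{CanDer}$ gives single-valuedness; and the core theorems (with \textup{(BC)} for compilation) then apply directly to the elaborated Source judgment.

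Your closing $\alpha$-remark is not needed, since the fixed fresh-name convention makes the elaborations literally equal. If you keep it, cite the derived assertion $\mathsf{EqUN}$ of Theorem~\ref{thm:ranked-determinacy} to compare normal forms of two $\alpha$-equivalent (rather than identical) expansions, because Theorem~\ref{thm:determinacy} and Corollary~\ref{cor:canonical-invariance} are stated only for reducts of a single term.
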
